\RequirePackage{fixltx2e}
\documentclass[11pt]{article}  
\usepackage[utf8]{inputenc}
\usepackage[T1]{fontenc}
\DeclareFontFamily{U}{mathx}{}
\DeclareFontShape{U}{mathx}{m}{n}{<-> mathx10}{}
\DeclareSymbolFont{mathx}{U}{mathx}{m}{n}
\DeclareMathAccent{\widehat}{0}{mathx}{"70}
\DeclareMathAccent{\widecheck}{0}{mathx}{"71}  
\usepackage{tikz} 

\usepackage{mathrsfs}
\usepackage{amsbsy} 
\usepackage{amsmath, amsfonts, amsthm, amssymb, amscd}
\usepackage{accents, verbatim}
\usepackage{pdfpages}
\input{epsf}
\usepackage{graphicx}
\usepackage{xcolor}
\usepackage{tikz} 
\usepackage{microtype}
\usepackage[nottoc]{tocbibind} 
\usepackage[normalem]{ulem}
\usepackage{zref-perpage} \zmakeperpage{footnote}
\usepackage{hyperref}
\usepackage{a4wide}
\usepackage{titlesec}

\usepackage{hyperref,bookmark,footnotebackref}
\hypersetup{linktoc = all}                
\hypersetup{hidelinks}
\hypersetup{bookmarksnumbered}
\makeatletter
\let\oldaligned\aligned
\def\@alignedenvir{aligned}
\def\aligned{\ifx\@currenvir\@alignedenvir\expandafter\@firstoftwo\fi\oldaligned\relax}
\makeatother
\newcommand{\abs}[1]{\lvert#1\rvert}
\newcommand{\Ric}{\operatorname{Ric}}
\usepackage{aliascnt}
\theoremstyle{plain}
\newtheorem{theorem}{Theorem}[section]
\newcommand{\mynewtheorem}[2]{
  \newaliascnt{#1}{theorem}
  \newtheorem{#1}[#1]{#2}
  \aliascntresetthe{#1}
  \expandafter\providecommand\csname #1autorefname\endcsname{#2}
}
\mynewtheorem{definition}{Definition}
\mynewtheorem{proposition}{Proposition}
\mynewtheorem{lemma}{Lemma}
\mynewtheorem{corollary}{Corollary}
\mynewtheorem{remark}{Remark}
\mynewtheorem{claim}{Claim}

\numberwithin{equation}{section}
\DeclareMathAlphabet\mathbfcal{OMS}{cmsy}{b}{n} 

\newcommand \bse {\begin{subequations}}
\newcommand \ese {\end{subequations}}

\renewcommand \div {\operatorname{div}}
\newcommand \divN {\operatorname{div}_{\Ncal}}
\newcommand \curl {\operatorname{curl}}

\newcommand \Jperp  {\mathbb J^\perp}
\newcommand \Jpar {J^\parallel{}}

\newcommand \Jhat {\widehat{\mathbb J}}
\newcommand \Jhatpar { \widehat{J}^\parallel}
\newcommand \Kpar {K}
\newcommand \Xpar {X}

\newcommand \lot {\textnormal{l.o.t.}}
\newcommand \geo {\textnormal{g.m.t.}}

\newcommand \Jbb {\mathbb J} 
\newcommand \Lbb {\mathbb L}
 
\newcommand \Pbb {\mathbb P}
\newcommand \Qbb {\mathbb Q}
\newcommand \Vbb {\mathbb V} 
\newcommand \Xbb {\mathbb X} 
\newcommand \Ybb {\mathbb Y} 

\newcommand \Ucal {\mathcal U}

\newcommand \sign {\operatorname{sgn}}

\renewcommand \ln \log
\newcommand \Hess {\operatorname{Hess}}

\definecolor{myviolet}{RGB}{148,0,211}

\newcommand \la 	\langle
\newcommand \ra 	\rangle

\newcommand \Ncal 	{\mathcal N}
\newcommand \Vcal 	{\mathcal V}
  
\newcommand \Dcal 	{\mathcal D}
\newcommand \Gcal 	{\mathcal G}

\newcommand \Ebf 	E            

\newcommand \bei 	{\begin{itemize}}
\newcommand \eei 	{\end{itemize}}
\newcommand \del	\partial
\newcommand \auth 	{\textsc}   
\newcommand \Mcal 	{\mathcal M}   
\newcommand \Hcal 	{\mathcal H}

\newcommand \loc   	{\textnormal{loc}} 
\newcommand \RR 	{\mathbb R}   
\newcommand \CC 	{\mathbb C}   

\newcommand \Tbb 	{\mathbb T}   
\newcommand \Sbb 	{\mathbb S}   
\newcommand \eps 	\epsilon  
\newcommand \be 	{\begin{equation}}
\newcommand \ee 	{\end{equation}} 
\newcommand \bel 	{\be \label}
\let\oldmarginpar\marginpar
\renewcommand\marginpar[1]{\ifhmode\unskip\fi\- \oldmarginpar[\raggedleft\footnotesize #1]%
{\raggedright\footnotesize #1}}
\newcommand \lam 	\lambda

\newcommand{\WARNONCE}{\gdef\WARNONCE{}\GenericWarning{}{There are still BLF/PLF comments in this file!}}

\renewcommand \th 	\theta 
\newcommand \sgn 	{\operatorname{sgn}}

\newcommand \ZZ       {\mathbb{Z}} 

\renewcommand \geq \geqslant
\renewcommand \ge \geqslant
\renewcommand \leq \leqslant 
\renewcommand \le \leqslant

\newcommand \bea  {\begin{eqnarray}}
\newcommand \eea  {\end{eqnarray}}

\newcommand \Ccal {\mathcal C}

\newcommand \Jtilde {\widetilde J}

\newcommand \dive {\operatorname{div}}

\newcommand \rhofluid \rho

\newcommand \lbrac \llbracket 
\newcommand \rbrac \rrbracket

\newcommand \Lie {\mathcal{L}}

\let\Re\relax\let\Im\relax
\DeclareMathOperator{\Im}{\mathfrak{Im}}
\DeclareMathOperator{\Re}{\mathfrak{Re}}

\newcommand \coloneqq {\mathrel{\mathop :}\mathrel{\mkern-1.2mu}=} 
\newcommand \eqqcolon {=\mathrel{\mkern-1.2mu}\mathrel{\mathop :}} 

\makeatletter
\newcommand*\smallbullet{\mathpalette\smallbullet@{.5}}
\newcommand*\smallbullet@[2]{\mathbin{\vcenter{\hbox{\scalebox{#2}{$\m@th#1\bullet$}}}}}
\makeatother

  \newcommand \amdeux  \lambda  
  
  \newcommand \Mbf   M  

\newcommand \Vol {\mathrm{vol} \hskip.03cm}
\newcommand \guntrois {g^{(1+3)}}
\newcommand \dVuntrois {\mathrm{d}\Vol^{(1+3)}}
\newcommand \divuntrois {\div^{(1+3)}}
\newcommand \curluntrois {\curl^{(1+3)}}

\newcommand \gtrois {g^{(3)}}
\newcommand \dVtrois {\mathrm{d}\Vol^{(3)}}

\newcommand \dVundeux {\mathrm{d}\Vol^{(1+2)}}

\newcommand \dVdeux {\mathrm{d}\Vol^{(2)}}

\newcommand \suit \eps

\newcommand \Fcomp {F} 
\newcommand \Fvee {F^\vee} 
\newcommand \Fbb {\mathbb F}

\newcommand \Msource {\Mbf_{\textnormal{\textbf{sour}}}}
\newcommand \Mwave {\Mbf_{\textnormal{\textbf{wave}}}}

\newcommand \Ebfscri {\boldsymbol{\mathscr{E}}}
\newcommand \Tbfscri {\boldsymbol{\mathscr{T}}}

\newcommand \Nbfscri {\boldsymbol{\mathscr{N}}}
\newcommand \Lbfscri {\boldsymbol{\mathscr{L}}_{\textnormal{conf}}}
\newcommand \mathringLbfscri {\mathring{\boldsymbol{\mathscr{L}}}_{\textnormal{conf}}}
\newcommand \Lmin {\Lbfscri^{\textnormal{\!min}}}

\newcommand \JKL {\texorpdfstring{\ensuremath{\Jbb K\Lbb}}{JKL}}

\newcommand \Numb {N}
\newcommand \hNumb {\widehat{N}}
\newcommand \vNumb {{\mathbb N}}

\newcommand{\Obig}{\mathcal{O}}

\newcommand \Gammazero  {\widetilde K}      
\newcommand \Gammaun     {\widetilde J}         
\newcommand \Gammaunpar     {\widetilde J^\parallel}         
\newcommand \Gammadeux  {\widetilde I}  
\newcommand \Gammadeuxpar  {\widetilde I^\parallel}  
\newcommand \funmu {\underline{\smash{\ensuremath{\mu}}}}

\newcommand \BV {\ensuremath{BV}}
\newcommand \BVac {\ensuremath{BV_{\textnormal{ac}}}}
\newcommand \Meas {M}
\newcommand \Measac {M_{\textnormal{ac}}}

\newcommand \Var {\textnormal{Var}}
\newcommand \Err {\textnormal{Error}}

\newcommand \muref {\mu_{\textnormal{ref}}}
\newcommand \Fmeasure {\nu_\Fbb{}}
\newcommand \opL {\mathfrak{L}}

\newcommand \Deltafluid {\Delta_{\textnormal{fluid}}}
\newcommand \Deltatwist {\Delta_{\textnormal{twist}}}
\newcommand \Deltageom {\Delta_{\textnormal{geom}}}

\newcommand \Hterm {\mathfrak{H}}
\newcommand \Mterm {\mathfrak{M}}

\newcommand \tPhi {\widetilde \Phi}
\newcommand \tPsi {\widetilde \Psi}

\newcommand \Interval {I}
\newcommand \Iopen {(t_0,t_1)}
\newcommand \Ihalf {[t_0,t_1)}

\newcommand \Lip {\textnormal{Lip}}
\newcommand \Rep {\operatorname{Rep}}
\newcommand \undt {\underline{t}}



\titleformat{\subsubsection}[runin]{\normalfont\bfseries}{}{.5em}{}[\ifnum\spacefactor<1001.\fi\ \ ]
\titlespacing{\subsubsection}{0pt}{2ex plus .1ex minus .2ex}{0pt}
\titleformat{\paragraph}[runin]{\normalfont\itshape}{}{.5em}{\hspace{-1pt}}[\ifnum\spacefactor<1001.\fi\ \ ]
\titlespacing{\paragraph}{0pt}{1ex plus .1ex minus .2ex}{0pt}

\usepackage{float} 
\usepackage{calligra}
\usepackage[T1]{fontenc} 
\usepackage[mathscr]{euscript}
\usepackage{bbm}
\numberwithin{table}{section}

\newcommand \Jac {\textbf{Jac}}

\begin{document}

\title{\bf
Stability and instability of torus-symmetric Einstein spacetimes  
with square-integrable connection}

\author{Bruno Le Floch\thanks{Laboratoire de Physique Théorique et Hautes Énergies, Sorbonne Universit\'e \& Centre National de la Recherche Scientifique, 4 Place Jussieu, 75252 Paris, France. Email: {\tt bruno@le-floch.fr}.}
\, 
and Philippe G. LeFloch\thanks{Laboratoire Jacques-Louis Lions, Centre National de la Recherche Scientifique, Sorbonne Universit\'e,  4~Place Jussieu, 75252 Paris, France. Email: {\tt contact@philippelefloch.org}. 
\newline
\textit{2000 AMS Classification.} 83C05, 58K25, 35L65, 76L05. 
{\it Keywords and phrases.}
Einstein gravity; square-integrable connection; stress-energy corrector; global geometry; nonlinear stability; entropy structure; impulsive gravitational wave; shock wave. 
}}
\date{May 2026}
\maketitle

\begin{center}
\vspace*{-0.5cm}  
{\scshape \small À Nounette, à l'occasion de ses vingt années cérésiennes}
\end{center}


\begin{abstract}
We study the global evolution problem for the Einstein equations under $\mathbb{T}^2$ symmetry on $\mathbb{T}^3$, allowing vacuum, scalar-field, and compressible-fluid matter models, governed by a general equation of state including isothermal and polytropic fluids. Under this symmetry, we obtain the first non-perturbative, global existence and stability theory with connection coefficients being merely square-integrable, which allows both \emph{impulsive gravitational waves} and \emph{shock waves}. In areal gauge, we introduce new fluid and geometric variables and reformulate the Einstein--Euler system as a first-order system of nonlinear balance laws with constraints and an entropy structure. The resulting formulation exhibits hyperbolicity, null forms, entropy currents, div-curl structure, maximum principles, and spacetime estimates. This leads to a notion of \emph{tame Einstein--Euler flow} for which the essential geometric and fluid variables are square-integrable (finite energy), and the secondary variables are absolutely continuous (or, more generally, of bounded variation). In this non-perturbative and weak regularity setting, the equations remain meaningful even when the Weyl curvature concentrates into Dirac masses along timelike hypersurfaces, and the Ricci curvature remains only integrable. Our main results are a global existence theorem for areal foliations, a nonlinear stability theorem for well-prepared initial data, and a nonlinear instability theorem for geometrically oscillatory data, the latter producing measure corrections to the stress energy tensor. In the future-contracting regime, the areal foliation reaches a geometric singularity where the volume of $\mathbb{T}^3$ spatial slices degenerates to zero. The areal function reaches zero \emph{generically} in the non-vacuum Gowdy-symmetric and vacuum torus-symmetric cases. In the future-expanding regime, the areal foliation is complete, with both the spatial volume and the areal function tending to infinity at the future boundary.
\end{abstract}


\



{
 
\setcounter{secnumdepth}{2}
\setcounter{tocdepth}{2}
\tableofcontents

} 


\

\section{Introduction}
\label{section=1}

\subsection{Global evolution in the non-perturbative regime} 
\label{section=1-1}

\subsubsection{Matter spacetimes with $\Tbb^2$ symmetry} 

\paragraph{Non-perturbative regime.}

We study a class of globally hyperbolic spacetimes satisfying the (vacuum or non-vacuum) Einstein equations.
The existing mathematical literature on global solutions most often focuses on small-data results, namely results restricted to spacetimes that are \emph{close to known backgrounds} such as Minkowski or Kerr spacetimes.
Our goal is to encompass large-amplitude configurations and singular spacetimes.
To this end, we consider the class of torus symmetric spacetimes, that is, spacetimes endowed with two commuting spacelike Killing fields.

The natural questions are to parametrize such spacetimes by their initial data sets, solve the corresponding initial value problem and investigate the stability or instability of sequences of spacetimes. Our results in this paper\footnote{An overview first appeared in 2019 as~\emph{ArXiv:1912.1298} and was later published in~\cite{LeFlochLeFloch-port}.} describe the dynamics of matter fields evolving under their own gravity in a \emph{non-perturbative regime}.

More precisely, our methodology applies to a broad class of matter fields including vacuum solutions, scalar fields, stiff fluids, as well as isothermal, polytropic, and more general perfect fluids. Under the assumption of $\Tbb^2$~symmetry on a spatial three-torus~$\Tbb^3$, we establish a global theory of evolution and stability. 


\paragraph{Beyond the formation of impulsive waves and shock waves.}

Studying such a broad class of spacetimes requires overcoming several major challenges, which we first outlined in~\cite{LeFlochLeFloch-port}. These challenges arise from both the Weyl part of the Riemann curvature (which is defined only in the sense of distributions) and the Ricci part (which is determined by the matter content).

\bei 

\item We allow for the presence of \textit{impulsive gravitational waves} ---a phenomenon that is suppressed in the spherically symmetric setting. These waves propagate at the speed of light and may lead to concentration phenomena throughout the spacetime. Our framework encompasses such waves, for which the Weyl curvature may be defined only in the sense of distributions.

\item In addition, we allow for the presence of \textit{shock waves} ---a phenomenon that arises generically in compressible fluid flows, even when the initial data are smooth. This is a classical consequence of the nonlinear nature of the Euler equations, which makes shock formation unavoidable in general. Accordingly, we must seek a global Cauchy development that remains meaningful beyond the onset of shocks, while the Euler equations (derived from the contracted second Bianchi identity satisfied by the geometry) must be understood in a weak sense. 

\item Furthermore, the presence of \emph{fluid vacuum} regions cannot in general be excluded, and the stability theory must therefore encompass the \emph{transition} between vacuum and non-vacuum regions of spacetime. This is a delicate issue, since the mathematical techniques required in these two regimes are quite different, and the Euler equations may fail to be strictly hyperbolic at vacuum.

\eei
\noindent The mathematical analysis of spacetimes containing both gravitational waves and shock waves requires dealing with \emph{weak solutions}, as first recognized 
by Rendall and St\"ahl \cite{Rendall-S} (on shock formation in plane symmetry) and 
LeFloch and Rendall~\cite{LeFlochRendall-2011} (on global foliations of Gowdy-symmetric Einstein--Euler spacetimes); see also the lectures \cite{LeFloch-lectures}. In the present paper, relying on the strategy that we presented earlier by the authors in~\cite{LeFlochLeFloch-port}, we propose a framework for defining Einstein--Euler spacetimes with weak regularity, and investigating their global geometry and their stability properties under \emph{weak convergence}. 


\subsubsection{Non-vacuum spacetimes with symmetry}  

\paragraph{Scalar fields in the non-perturbative regime.} 

We briefly review the works that have motivated us to study Einstein–Euler spacetimes with symmetry in the large-data regime. (\autoref{section=1-4}, below, includes further related references). 

As far as the global evolution problem is concerned, all currently available mathematical results in the non-perturbative regime rely on some symmetry assumption. The first major breakthrough in this direction was achieved by Christodoulou in his pioneering work on self-gravitating \emph{scalar fields in spherical symmetry}~\cite{Christo86,Christo92}. In his early studies of the Einstein--scalar field system in Bondi coordinates, he introduced a notion of generalized solution that is $C^2$ regular except possibly on the axis of symmetry. He subsequently developed a much broader weak-regularity framework in which the metric coefficients have \emph{bounded variation}. This constituted one of the first systematic investigations of spacetimes with weak regularity, including the analysis of the future boundary of global Cauchy developments, and it culminated in his proof of Penrose's weak cosmic censorship conjecture for scalar fields in spherical symmetry for generic data. Christodoulou's pioneering ideas have had a lasting influence and have inspired a large body of subsequent research.


\paragraph{Kinetic matter in the non-perturbative regime.}

The study of spherically symmetric spacetimes with kinetic matter governed by the \emph{Vlasov equation} was initiated by Rendall~\cite{Rendall-1992}, Andreasson~\cite{Andreasson-1999}, Andreasson and Rein \cite{Andreasson-Rein}, and followers. While the existence of global spacetime foliations can be established by arguments similar to those developed for vacuum spacetimes, it is significantly more challenging to understand their global geometry, since these spacetimes may exhibit properties very different from those of their vacuum counterparts. In particular, the coupling between the geometry and the kinetic distribution function introduces additional analytical and structural difficulties. As a result, even in spherical symmetry, the kinetic-matter case with \emph{large data} remains considerably less developed than the corresponding vacuum and fluid settings.


\paragraph{Compressible fluids in the non-perturbative regime.} 

Despite the importance\footnote{For instance, in astrophysical or cosmological contexts.} of the evolution of self-gravitating \textit{compressible perfect fluids}, the existing literature contains mainly formal constructions of special classes of solutions, such as static fluids, together with formal asymptotic analyses. No rigorous mathematical treatment of the non-perturbative problem was available until recent years. Even under Gowdy symmetry, that is, when two-planes orthogonal to symmetry orbits are integrable, only partial results have been obtained so far; cf.~the pioneering work by LeFloch and Rendall~\cite{LeFlochRendall-2011}. 

In particular, the results established in the present paper are \emph{new even in the Gowdy-symmetric setting.} Our methodology is to work within a class of \textit{weak solutions} for which the Einstein equations remain meaningful, admit large-data solutions, and can be analyzed by weak convergence techniques such as compensated compactness (see below). This provides a first step toward a rigorous analysis of the global geometry of such spacetimes and suggests a methodology of interest well beyond the specific framework studied here.

On the other hand, on the Euler equations without symmetry restriction, major contributions due to Christodoulou \cite{Christodoulou-book} and Speck et al.\ \cite{AbbresciaSpeck-2020,DisconziLuoMazzoneSpeck-2022,LukSpeck-2018,LukSpeck-2020,Speck-2019} 
are concerned with the \emph{formation} of shocks, whereas our focus in the present paper is on the evolution \emph{beyond} shock formation (and without specifically tracking the shock structure).


\subsection{A new formulation of the Einstein–Euler system}
\label{section=1-2}

\subsubsection{Einstein--Euler system}

By definition, a spacetime is a $(3+1)$-dimensional Lorentzian manifold $(\Mcal, \guntrois)$ satisfying the Einstein equations\footnote{We normalize the standard constant to unity.}
\bel{eq:44}
G = T,
\ee
which relate the Einstein tensor $G \coloneqq \Ric - (R/2)\guntrois$, representing the geometry of spacetime, to its matter content encoded by the stress-energy tensor~$T$. By convention, the Lorentzian signature is $(-,+,+,+)$; here $\Ric$ denotes the Ricci curvature tensor associated with the spacetime metric $\guntrois$, and $R$ denotes the scalar curvature. Whenever necessary, we introduce indices, with Greek indices ranging over $\{0,\ldots,3\}$. Indices are raised and lowered using the metric, and we adopt the Einstein summation convention over repeated indices.

A perfect compressible fluid is governed by the stress-energy tensor
\bel{eq:45}
T = (\mu + p(\mu))\, \guntrois(u,\cdot)\otimes \guntrois(u,\cdot) + p(\mu)\, \guntrois,
\ee
where $\mu \geq 0$ denotes the mass-energy density of the fluid, and $u$ its future-oriented timelike velocity field, normalized so that $\guntrois(u,u)=-1$. By the second contracted Bianchi identities for $\guntrois$, the Einstein equations imply the \textit{Euler equations}:
\bel{eq:46}
\divuntrois T = 0,
\ee
where $\divuntrois$ denotes the divergence operator associated with the Levi-Civita connection of the spacetime metric $\guntrois$. We consider perfect fluids with stress-energy tensor~\eqref{eq:45}, governed by an equation of state $p=p(\mu)$ satisfying hyperbolicity conditions (stated in~\eqref{hyperbolic-eos}, below).


\subsubsection{Global Cauchy developments}

When the initial data are \emph{sufficiently regular} and a suitable gauge is chosen (for instance the areal gauge in our $\Tbb^2$-symmetry setup), the Einstein--Euler equations form a locally well-posed evolution system of partial differential equations (of hyperbolic type) with constraints. For any such initial data set, the Cauchy problem admits a maximal globally hyperbolic Cauchy development~\cite{Choquet-book, Wald-book}. 
An initial data set for the Einstein--Euler system~\eqref{eq:44}–\eqref{eq:46} consists of a Riemannian manifold \( (\Mcal_0, g_0) \) together with a symmetric two-tensor~$k_0$ defined on~\( \Mcal_0 \), as well as a mass-energy field~$\mu_0 \geq 0$ and a vector field~$v_0$ defined on~\( \Mcal_0 \), satisfying suitable constraints, referred to as the Hamiltonian and momentum constraints~\cite{Choquet-book}. Solving the initial value problem for the Einstein equations from a given initial data set consists of finding a Lorentzian manifold \( (\Mcal, \guntrois) \) together with a scalar field \( \mu \colon \Mcal \to [0, + \infty) \) and a vector field $u$ defined on \( \Mcal \), such that the Einstein equations, and hence the Euler equations, are satisfied. 
The future boundary of the Cauchy development may exhibit singular behavior in the geometry, including impulsive gravitational waves, while compressible fluids may develop singularities in finite time, namely shock waves. The purpose of the present work is to formulate a weak regularity framework in which the Einstein--Euler evolution remains meaningful beyond both types of singular behavior.


\subsubsection{Proposed formulation}

The central theme throughout the present study is the interplay between the properties of the spacetime metric and those of the fluid unknowns. We proceed in several steps, each step providing certain intermediate results of independent interest. Our first objective is to exhibit certain hidden structural properties of the field equations. Remarkably, as we show in~\autoref{section=2}, an analogue of the formulation we found in~\cite{LeFlochLeFloch-1} for vacuum Gowdy spacetimes is available for the general class of \emph{$\Tbb^2$ symmetric matter spacetimes}, even when the fluid is governed by a general equation of state. A difficulty we overcome is that the so-called twist variables need not be constants, in contrast with the `Gowdy matter' or `$\Tbb^2$ vacuum' cases. Interestingly, our choice of variables and exponents is rigidly fixed by requiring cubic terms not to appear. In particular, for the description of the fluid we introduce suitably metric-weighted variables, needed to allow for the mass-energy density to vanish in some spacetime region ---which is necessary since vacuum regions can arise from non-vacuum initial data.

In the gauge under consideration, the field equations~\eqref{eq:44}--\eqref{eq:46} take the form of a system of partial differential equations and include nonlinear hyperbolic equations whose solutions generically develop singularities from initially regular data. The field equations couple metric and matter unknowns together, which makes the whole analysis particularly challenging. More precisely, in areal gauge the spacetime metric $\guntrois$ depends upon six metric coefficients and satisfies a system of nonlinear wave equations coupled to differential equations. The solutions typically become singular at some final time in the contracting direction when the area function approaches zero. On the other hand, the matter variables, i.e.~the mass-energy density $\mu \geq 0$ and the velocity vector $u$ satisfy a nonlinear hyperbolic system of two coupled balance laws. The main challenge to be overcome is to elucidate the nature of the coupling between the geometric equations of wave map type and nonlinear hyperbolic equations. 
The wave map structure has attracted a lot of attention in recent years; see~\cite{KST,SterbenzTataru,Tataru} and the references therein, as well as Kenig~\cite{Kenig-2017} for a study of the blow-up mechanism in focusing wave equations.


\subsection{Main results}
\label{section=1-3}

\subsubsection{Areal flows with weak regularity} 

\paragraph{Geometric setup.}

The \(\Tbb^2\)-symmetric spacetimes under consideration admit a foliation by compact spacelike hypersurfaces \(\Mcal_t\), each diffeomorphic to \(\Tbb^3\), that is, 
\be
\Mcal=\bigcup_{t\in\Interval}\Mcal_t,
\ee
where the time function \(t:\Mcal\to\Interval\subset\RR\) parametrizes the leaves of the foliation and agrees, up to sign, with the area of the \(\Tbb^2\)-orbits of symmetry. This choice of time variable is called the \emph{areal gauge}. An initial data set, denoted by \((\mathring g,\mathring k,\mathring\mu,\mathring v)\), consists of \(\Tbb^2\)-invariant data on \(\Tbb^3\) for which all symmetry orbits have the same area. A corresponding spacetime solution consists of a \(\Tbb^2\)-invariant Lorentzian metric \(\guntrois\), a scalar field \(\mu\ge0\), and a unit timelike vector field \(u\). The quantity \(\mathring\mu\,\mathring v\) is interpreted as the \emph{projection} onto the initial hypersurface of the matter momentum \(\mu u\), and is tangent to that hypersurface. 

The density \(\mathring\mu\) is allowed to vanish; in that case, the velocity field \(u\) itself is no longer intrinsically defined, and similarly for \(\mathring v\). This vacuum degeneracy is a major technical difficulty in the analysis of the Euler equations. We may describe the solution to the Einstein equation as a $\Tbb^2$ symmetric \emph{flow on the torus} $\Tbb^3$ consisting of 
\bei 

\item a Riemannian $3$-metric $g(t)$ and a two-tensor $k(t)$ ---representing the second fundamental form of each spatial slice, and 

\item a scalar field $\mu(t) \geq 0$, and a vector field $v(t)$ ---representing the projection of the physical velocity field. 

\eei


\paragraph{Weak regularity setup.}

The weak regularity conditions will be introduced in the course of our analysis: roughly speaking, $k(t)$~and the first-order space derivatives of~$g(t)$ must be \emph{square-integrable on spacelike slices,} while the mass-energy density~$\mu(t)$ is integrable and velocity components parallel to the $\Tbb^2$~orbits of symmetry obey suitable uniform bounds.
Consequently, the Weyl curvature includes derivatives of square-integrable functions, which are defined only in the sense of distributions, while the Ricci curvature is integrable.
While in the present paper we will state the weak regularity conditions in areal gauge, our weak regularity conditions can be also expressed in a different gauge by working out the regularity of the fields~\cite{LeFlochLeFloch-next}.


\subsubsection{Global future evolution}

The remarkable structure of the Einstein equations (in $\Tbb^2$-symmetry) exhibited in \autoref{section=2}, below, plays a central role in order to investigate the nonlinear geometry-fluid coupling and tackle the global evolution problem for the Einstein equations. 

Our main statements, presented in this introduction in a preliminary form, apply to compressible perfect fluids governed by a general pressure law 
(satisfying hyperbolicity and convexity conditions~\eqref{hyperbolic-eos}), which we assume to be asymptotically polytropic or isothermal in the vicinity of the vacuum state, and to be asymptotic isothermal for large values of the mass-energy density. In \autoref{def-pression}, below, we define this class referred to as an \textit{admissible equation of state}. 

Moreover, in the presence of shock waves, the Euler equations are no longer
time-reversible. It is therefore physically meaningless to evolve fluid
solutions backward across shocks. This naturally leads us to distinguish
between \emph{two classes of initial data sets}, corresponding respectively
to future-expanding and future-contracting spacetimes.  
As in the standard theory of compressible fluids, this irreversibility is encoded by an entropy inequality. This feature plays a fundamental role in the present theory and is addressed in \autoref{section=3}, through the introduction of entropy currents, and in \autoref{section=4}, through the definition of \emph{weak solutions} and \emph{entropy inequalities}. For clarity of exposition, we refer to our class of weak solutions as \emph{tame solutions}; cf.~\autoref {def-weaksolu}
for the notion of \emph{Einstein--Euler flow}, together with \autoref{def-weaksolu-deux} for the definition of a \emph{tame Einstein--Euler flow}. 

By definition, the tame regularity assumptions impose Sobolev \(H^1\)-type
bounds on the principal metric coefficients
(therefore $L^2$ integrability at the level of the connection coefficients), suitable Lebesgue integrability
conditions on the fluid variables, and absolute continuity together with
finite variation bounds on the auxiliary metric coefficients and on a rescaled
version of the momentum component parallel to the symmetry orbits.

\begin{theorem}[Global foliation of $\Tbb^2$-symmetric spacetimes with weak regularity]
\label{thm:1.1}
Consider the initial value problem associated with the Einstein--Euler
system~\eqref{eq:44}--\eqref{eq:46}, describing the global evolution of a
compressible perfect fluid governed by an admissible equation of state
(in the sense of~\autoref{def-pression} below). Suppose that the initial data
set \((\mathring g,\mathring k,\mathring \mu,\mathring v)\) is \(\Tbb^2\)-symmetric on the
three-torus \(\Tbb^3\)with orbits of constant area \(\abs{t_0}\), where the sign of \(t_0\neq0\) determines whether the data are of future-contracting or future-expanding type.
Assume moreover that the data have tame regularity
(cf.~\autoref{section=4}) and satisfies
Einstein's constraint equations in the weak sense. 

Then this initial data set generates a future Cauchy development which is a
tame solution\footnote{In the same regularity class as the initial data set.}
\((\Mcal,\guntrois,\mu,u)\) of the Einstein--Euler equations, understood in
the sense of distributions\footnote{In particular, the Ricci tensor is
Lebesgue integrable, while the Weyl tensor is at most a first-order
distribution.}. This solution also satisfies suitable entropy inequalities
and enjoys the following global geometric properties. It may be viewed as a
\(\Tbb^2\)-flow on \(\Tbb^3\), endowed with a foliation by a time function
\(t\colon\Mcal\to \Interval\subset\RR\), whose leaves all have topology
\(\Tbb^3\) and are \(\Tbb^2\)-symmetric, with symmetry orbits of area
\(\abs{t}\).

\bei

\item {\bf Future-expanding regime (\(t_0>0\)):}
one has $\Mcal=[t_0,+\infty)\times\Tbb^3$,
and the areal foliation is complete toward the future, in the sense that both
the volume of the \(\Tbb^3\)-slices and the area of the \(\Tbb^2\)-orbits
tend to infinity.

\item {\bf Future-contracting regime (\(t_0<0\)):}
one has $\Mcal=[t_0,t_*)\times\Tbb^3$, where the areal foliation extends up to some time \(t_*\in(t_0,0]\) such that
the volume of the \(\Tbb^3\)-slices tends to zero as \(t\to t_*\). Moreover,
either \(t_*<0\), in which case the conformal length of the quotient manifold
\(\Tbb^3/\Tbb^2\) tends to zero, or \(t_*=0\), in which case the area of the
\(\Tbb^2\)-orbits tends to zero. For vacuum \(\Tbb^2\)-symmetric spacetimes,
or for non-vacuum Gowdy-symmetric spacetimes, the latter alternative
\(t_*=0\) holds \emph{generically}\footnote{The relevant notions of
genericity were first introduced in \cite{IsenbergWeaver-2003} and
\cite{GrubicLeFloch-2015}, respectively.}.

\eei
\end{theorem}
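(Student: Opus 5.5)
The proof will go through a compactness argument: approximate by regular solutions, derive uniform tame bounds from the structure of \autoref{section=2}--\autoref{section=3}, and pass to the limit using compensated compactness. The geometric statements about the future boundary come from continuation criteria combined with known results for the vacuum and Gowdy cases.

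\emph{Approximation.} First regularize the tame initial data set $(\mathring g,\mathring k,\mathring\mu,\mathring v)$ into a sequence of smooth $\Tbb^2$-symmetric data with constant-area orbits that satisfy the constraints. I would obtain these by mollifying the free data in areal gauge and re-solving the constraints, which are ODEs along the $\Tbb^3/\Tbb^2$ direction. To handle vacuum regions, replace the fluid equations by a viscous or kinetic approximation, or by a front-tracking/Glimm-type scheme for the $2\times2$ fluid balance laws. Each approximate problem has a local solution in areal gauge.

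\emph{Uniform bounds.} Using the first-order balance-law formulation of \autoref{section=2}, establish bounds independent of the regularization parameter. These come from three sources: the energy identities, where the absence of cubic terms gives monotone or Gronwall-controlled $L^2$ norms of the connection variables and $L^1$ control of $\mu$; the maximum principles, which bound the parallel velocity components and twist variables; and the entropy currents of \autoref{section=3}, which give a uniform dissipation bound. The auxiliary metric coefficients, such as the lapse and shift-like functions obtained by integrating ODEs in space, are then controlled in $W^{1,1}$/BV. This control follows because their sources are quadratic in the $L^2$ variables plus $L^1$ matter terms. In the expanding case these estimates hold on every $[t_0,T]$. In the contracting case they hold on every $[t_0,t]$ with $t<t_*$, where $t_*$ is determined by where either the area or the conformal length degenerates.

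\emph{Passage to the limit.} This is the main obstacle. The wave-map part is quadratic in $L^2$ quantities, and weak limits of quadratic terms can generate defect measures, which is exactly what the instability theorem exploits. For the existence statement, I would use the div-curl structure and null forms of \autoref{section=2}: applying the div-curl lemma to the relevant products, together with spacetime estimates that yield strong $L^2_{\loc}$ compactness of the geometric derivatives, should show that no measure corrections appear when the data converge strongly. Here the data are fixed and regularized, so they converge strongly. For the fluid, I would use compensated compactness with entropy pairs, following the DiPerna/Chen--LeFloch approach for isothermal and polytropic laws. This requires the admissible equation of state near vacuum and at large density, and gives strong convergence of $(\mu,v)$ together with the entropy inequalities in the limit. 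The BV/AC bounds give strong convergence of the secondary variables by Helly's theorem.

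\emph{Global geometry.} In the expanding case, a continuation criterion shows that the bounds cannot blow up in finite $t$, so $\Mcal=[t_0,+\infty)\times\Tbb^3$ with area $t\to\infty$. The volume also tends to infinity because the conformal length is monotone in the expanding direction, by the sign of its evolution equation from the Hamiltonian constraint. In the contracting case the same monotonicity forces either $t_*=0$ or a vanishing conformal length at $t_*<0$, and in both alternatives the volume tends to zero. The genericity of $t_*=0$ is imported: for vacuum $\Tbb^2$ it follows from Isenberg--Weaver \cite{IsenbergWeaver-2003}, and for non-vacuum Gowdy from Grubic--LeFloch \cite{GrubicLeFloch-2015}. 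In both cases one checks that the tame solutions fall within their frameworks, using the approximation and stability just established.
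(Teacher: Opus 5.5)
Your overall plan matches the paper's architecture: regularize, derive uniform bounds from the energy and entropy structure, pass to the limit with div-curl for the geometry and compensated compactness for the fluid, then read off the global geometry from monotonicity and continuation. The fluid step, however, has a genuine gap.

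You treat the matter as the $2\times2$ isentropic system and import the DiPerna/Chen--LeFloch entropy-pair argument. In general $\Tbb^2$ symmetry the parallel momentum $\Jpar=J_2+iJ_3$ is nonzero. The Euler system then has four equations: two acoustic fields plus a double linearly degenerate field transporting $\Jhatpar$. Its only entropy currents are combinations of the $\Mbf^{a\bullet}$ and $H(\Gammaunpar)\vNumb$, far too small a family to collapse a Young measure.

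Entropy pairs of the orthogonal $2\times2$ system are not entropies of the full system. Their balance laws carry the non-conservative shear terms $|t|^{-1}\Omega^{-1}\Fvee_m(\Jbb)\,e_1(\widehat J_m)$, which are uniformly bounded measures only if $\widehat J_m$ has uniformly bounded spatial variation. This is exactly what the \emph{tame} hypothesis on the data ($\mathring\Gammaunpar\in\BV$) provides, and your proposal never uses it. The BV bound and Lipschitz-in-time $L^1$ continuity of $\Gammaunpar$ are propagated through the $H$-divergence law, and then three things follow:

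(i) $\Jhatpar{}^\suit$ converges strongly by Helly, so the Young measure factorizes as $\mu_{t,x}\otimes\delta_{\Jhatpar{}^\sharp}$;

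(ii) quasi-currents dominated by the reference current satisfy $|\Fvee_m|\lesssim-\Jbb\cdot\Jbb$, so the shear terms are uniformly bounded measures and Murat's lemma gives $H^{-1}$ compactness of the entropy productions;

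(iii) Tartar's relation reduces to a $2\times2$ problem with $\Jhatpar$ as a frozen parameter.

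The $L^\infty$ maximum principle you cite gives boundedness of the parallel components, not compactness. Your argument goes through essentially as written only when $J_2=J_3=0$ (e.g.\ Gowdy symmetry), where these quasi-currents become genuine entropy pairs. Relatedly, a Glimm or front-tracking scheme for a $2\times2$ system targets the wrong system and does not handle large $L^2$ data. Whatever approximation you use must also be shown to preserve the $H$-divergence law, the BV bound on $\Gammaunpar$, and uniform bounds on the shear measures.

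Two smaller points.

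\emph{Geometric limit.} The div-curl lemma only handles the null forms ($\Pbb\cdot\Pbb$, $\Pbb\cdot\Qbb$, $\Pbb\wedge\Qbb$, \dots). The non-null densities $\Ebf_0(\Pbb,\Qbb)$ and $\Ebf_1(\Pbb,\Qbb)$ in the lapse equations require propagating strong convergence from the initial slice. This is a relative-energy/Gronwall argument; equivalently, the defect measure satisfies $\divuntrois(\Omega^{-1}\amdeux^{-1}\Pi^{\pm\bullet})\le0$ with $\Pi^{\pm0}\ge0$, so it stays zero if it vanishes initially. The spacetime estimates only supply the $L^1$ bounds needed for Murat's lemma, not strong compactness.

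\emph{Global geometry.} Monotonicity of the conformal length does not give $\Vcal(t)\to+\infty$, because $\Vcal=\int|t|\,\Omega\,d_x\ell\,dy\,dz$ and the lapse may decay. You need to combine the $\amdeux$ and $\log\Omega$ evolution equations to get $\sgn(t)\frac{d}{dt}\bigl(|t|^{-3/4}\Vcal\bigr)\ge0$, hence $\Vcal(t)\ge(t/t_0)^{3/4}\mathring\Vcal$ in the expanding case. In the contracting case, the dichotomy comes from the continuation criterion: all a priori bounds depend only on initial norms, $t_0$, $t_1$ and $\Lmin$. The volume vanishing when $\Lbfscri(t_*)=0$ needs the Cauchy--Schwarz bound $\Vcal^2\le\Lbfscri\,\Ucal$ with $\Ucal$ non-increasing. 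When $t_*=0$ it follows from $\Vcal\lesssim|t|^{3/4}$.
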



\subsubsection{Nonlinear stability and instability} 

Our second main result concerns sequences of spacetimes. The statement below is relevant (and new) even for vacuum spacetimes and generalizes our earlier study of (vacuum) Gowdy spacetimes~\cite{LeFlochLeFloch-1}, while supplementing \cite{LeFlochRendall-2011} (for the Euler system with Gowdy symmetry). The notions of \emph{flow with finite energy} and \emph{corrector stress tensor} will be introduced later in~\autoref{section=4}, and a detailed statement will be provided in \autoref{thm:exactsol-compactness}, below. 

\begin{theorem}[Nonlinear stability and instability of $\Tbb^2$ symmetric spacetimes]
\label{thm:stability0}
Under the conditions in Theorem~\ref{thm:1.1}, consider a sequence of $\Tbb^2$ symmetric, either future-expanding or future-contracting initial data sets $(g_0^\suit, k_0^\suit, \mu_0^\suit, v_0^\suit)$ (for $\suit \in (0,1]$) defined on the torus $\Tbb^3$, with the same initial areal time~$t_0$, and assume that their natural (energy, BV) norms are uniformly bounded with respect to~$\suit$, and that the conformal length density and parallel momentum per particle converge at initial time\footnote{Remarkably, no convergence assumptions are made on any of the remaining geometric and fluid variables. The norms and convergence are explicited in~\eqref{exactsol-initbounds-all}, below.}. By Theorem~\ref{thm:1.1} there exist flows $(g^{(4)\suit},\mu^\suit,u^\suit)$ defined on time intervals $[t_0,t_*^\suit)$ for some sequence of times $t_*^\suit>t_0$. 

\bei 

\item For future-expanding data sets, assume that the sequence of conformal lengths at time $t_0$ is uniformly bounded below, i.e.
\be
\inf_{\suit \in (0,1]} \Lbfscri^\eps(t_0) >0. 
\ee

\item For future-contracting data sets, assume that the time of existence are uniformly bounded below\footnote{For instance, it is sufficient to assume an upper bound $\sup_{\suit \in (0,1]} \Ebfscri^\eps(t_0)/\Lbfscri^\eps(t_0) < +\infty$ on the ratio energy/length at the \emph{initial time} $t_0$.}, i.e. 
\be
t_0 < \inf_{\suit \in (0,1]} t_*^\suit \leq 0
\ee
and that, at a uniform time $t_1\in(t_0,\inf t_*^\suit)$ the sequence of conformal lengths at time $t_1$ is uniformly bounded below, i.e.
\be
\inf_{\suit \in (0,1]} \Lbfscri^\eps(t_1) >0. 
\ee
\eei 
Then, up to extracting a subsequence, the flows converge weakly (in natural norms) on every compact time interval to a limit $(g^{(4)\sharp}, \mu^\sharp, u^\sharp)$, which is defined on $[t_0,+\infty)$ in the future-expanding regime and at least on $[t_0,t_1)$ in the future-contracting regime.

\bei 

\item {\bf Stability for well-prepared initial data sets.}
If the geometric initial data $(g_0^\suit, k_0^\suit)$ \emph{converges strongly} in natural norms, the limit $(g^{(4)\sharp}, \mu^\sharp, u^\sharp)$ defines an \emph{Einstein--Euler flow}, satisfying the Einstein--Euler system~\eqref{eq:44}--\eqref{eq:46} in a weak sense. 

\item {\bf Instability for general initial data sets.}
More generally, the limit $(g^{(4)\sharp}, \mu^\sharp, u^\sharp)$ defines an {\rm Einstein--Euler flow with corrector}, satisfying a suitable \emph{extension of the Einstein--Euler system}, defined by supplementing the evolution equations with a \emph{corrector stress-tensor}~$\Pi$ (as we call it), which a symmetric traceless measure field, is orthogonal to the $\Tbb^2$~orbits of symmetry, and satisfies a divergence inequality.
\eei 
\end{theorem}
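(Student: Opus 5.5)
The plan is a compactness argument based on the uniform estimates that come with the tame framework of Theorem~\ref{thm:1.1}. Three steps: uniform bounds, passage to the limit, identification of the defect as a measure.

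\textbf{Uniform bounds.} First we fix a common time interval. In the future-expanding case every flow lives on $[t_0,+\infty)$ by Theorem~\ref{thm:1.1}. In the contracting case we work on $[t_0,t_1]$ with $t_1<\inf t_*^\suit$. On such an interval we use the energy functional $\Ebfscri^\suit(t)$ and the conformal length $\Lbfscri^\suit(t)$, which satisfy monotonicity or Gronwall-type inequalities along the areal foliation. The assumed lower bound $\inf_\suit\Lbfscri^\suit>0$ (at $t_0$, resp.\ $t_1$), together with the monotonicity of $\Lbfscri$ in $t$, gives a uniform lower bound on the lapse-type coefficient. This turns the energy identity into uniform bounds on several quantities on every compact time interval: the $H^1$ norms of the principal metric coefficients, the $L^1$ norm of the weighted mass-energy density, and the $L^\infty$ and BV norms of the auxiliary coefficients and of the rescaled parallel momentum. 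The maximum principles and spacetime estimates announced in the abstract supply the remaining $L^\infty$ bounds on the twist variables and the parallel velocity.

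\textbf{Compactness and passage to the limit.} By Banach--Alaoglu and Helly's theorem we extract a subsequence along which two things happen. The BV/absolutely continuous quantities (auxiliary metric coefficients, conformal length density, momentum per particle) converge strongly in $L^1_{\loc}$. The $H^1$ quantities converge weakly, and by Rellich they also converge strongly in $L^2$ on compact sets. The fluid densities converge weakly as measures. The field equations have the first-order balance-law structure of \autoref{section=2}, with null forms and a div-curl structure. Hence every quadratic term except the energy densities $|\del_t\cdot|^2+|\del_\theta\cdot|^2$ of the wave-map variables passes to the limit. We get these terms either from the compensated compactness (div-curl) lemma, applied to products like $(\del_t-\del_\theta)\Phi\,(\del_t+\del_\theta)\Psi$, or from strong$\times$weak convergence. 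The assumed convergence of the conformal length density and parallel momentum at $t_0$ propagates, because these quantities solve transport equations with strongly convergent coefficients.

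\textbf{Identification of the corrector; main obstacle.} The hard step is the nonlinear fluid terms and the geometric energy densities. The geometric defect $\lim|\nabla\Phi^\suit|^2-|\nabla\Phi^\sharp|^2$ is a nonnegative measure. The null structure forces it to split into left- and right-moving parts. We define $\Pi$ from these parts, so it is traceless and orthogonal to the orbits, and its divergence inequality is inherited from the energy inequality. For the fluid we cannot use a full compensated-compactness reduction of Young measures. Our plan is to avoid identifying the fluid nonlinearities pointwise. Instead we rely on the weighted variables, whose equations are linear in the fluid unknowns up to strongly convergent geometric coefficients, plus the entropy inequality to keep admissibility in the limit. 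If this fails we would use DiPerna-type reduction, which requires the admissible equation of state.

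\textbf{Well-prepared case.} If $(g_0^\suit,k_0^\suit)$ converges strongly, the energy at $t_0$ converges. The energy identity, with every other term passing to the limit, then shows that the total mass of $\Pi$ vanishes, so $\Pi=0$ and the limit is an Einstein--Euler flow. Both conclusions of Theorem~\ref{thm:stability0} follow.
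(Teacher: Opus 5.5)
The geometric part of your plan matches the paper. Helly's theorem for the $\BV$-type variables, the div-curl lemma for the null forms in $\Pbb,\Qbb$, and a nonnegative geometric defect split along the two null directions are all how the paper proceeds.

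There is a genuine gap in the fluid step. No choice of weighted variables makes the Euler equations linear in the fluid unknowns. The system for $\Jperp$ is genuinely nonlinear (\autoref{lem:genuine-nonlinearity}), with fluxes $J_0J_1$ and $J_1^2+\frac{1-q_\Jbb}{2}(-\Jbb\cdot\Jbb)$, and with particle current $\vNumb=\hNumb_\Jbb\Jbb$. You could take the stress tensor itself as unknown, which makes $\divuntrois T=0$ linear. But a weak limit of perfect-fluid tensors is then a Young-measure average, not the perfect-fluid tensor built from $\Jbb^\sharp$. Likewise, the limit of $\vNumb^\suit$ need not equal $\hNumb_{\Jbb^\sharp}\Jbb^\sharp$. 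Weak-$*$ convergence of $\mu^\suit$ as measures does not even exclude concentration, and a single scalar entropy inequality cannot restore the constitutive relation.

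The same defect reaches the geometry. The matter sources in the equations for $\amdeux$, $\log\Omega$ and $\Pbb,\Qbb$, and the mixed products such as $\Pbb\cdot\Jperp$, all need strong convergence of $\Jbb^\suit$. These sources include $\Ebf_0(\Jperp)$, $\Ebf_1(\Jperp)$, $q_\Jbb\,\Jbb\cdot\Jbb$, $\Re(\Kpar^2+\Jpar^2)$ and $\Msource$. Your ``strong$\times$weak'' step silently presupposes that convergence. Without it, what you call $\Pi$ also absorbs fluid defects, which have no null structure: the defect of $-\Jbb\cdot\Jbb=2(\mu+p)$ need not vanish. So $\Pi$ need not be traceless, and the limit is at best a measure-valued object rather than an Einstein--Euler flow, with or without corrector.

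What is missing is the strong convergence $\Jbb^\suit\to\Jbb^\sharp$, which the tame conditions (4)--(5) exist to deliver. The paper obtains it in four steps.
\begin{enumerate}
\item $\Jhatpar{}^\suit$ is compact by Helly's theorem, using the uniform $\BV$ bound on $\Gammaunpar$ and the uniform time-Lipschitz bound. You cannot instead propagate the initial convergence along the transport equation: its speed $J_1/J_0$ is a fluid unknown not yet known to converge, so that argument is circular.
\item For every quasi-current dominated by $\Mbf^{0\bullet}$, the entropy production is a uniformly bounded measure, hence $H^{-1}$-compact by Murat's lemma.
\item The weighted div-curl lemma, using the strong convergence of $\amdeux^\suit$ and $\Omega^\suit$, then yields Tartar's commutation relation.
\item Strong convergence of $\Jhatpar$ factors the Young measure as $\mu_{t,x}\otimes\delta_{\Jhatpar{}^\sharp}$. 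The richness of the dominated quasi-currents then forces $\mu_{t,x}$ to be a Dirac mass off vacuum. Ordinary entropy currents are insufficient once $J_2,J_3\not\equiv0$, which is why quasi-currents are introduced.
\end{enumerate}
Your fallback ``DiPerna-type reduction'' points in this direction, but it cannot be run without these ingredients.

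A smaller issue concerns the well-prepared case. The total energy obeys only an inequality, because the $\Mbf^{0\bullet}$ law dissipates at shocks. Comparing limits of total energies therefore cannot force $\Pi=0$. Use instead the divergence inequality for $\Pi$, which comes from the purely geometric law for $\Mbf^{\pm\bullet}(0,0,\Pbb,\Qbb)$. Combined with $\Pi^\sharp(t_0)=0$ and $\Pi^{00}\geq|\Pi^{01}|$, it gives $\Pi^\sharp\equiv0$. Alternatively, use the relative-energy Gronwall argument of \autoref{lemma-74-strong}.
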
 

A full statement will be presented only later on in~\autoref{section=5}, after introducing the notion of weak solution (\autoref{section=4}). Importantly, we also state and prove a version of \autoref{thm:stability0} for \emph{sequences of approximate solutions}, which opens a way to analyze vanishing viscosity approximations as well as discrete schemes. The proof will be presented in Sections~\ref{section=8} and~\ref{section=9} after establishing a priori regularity and integrability estimates on exact solutions in Sections~\ref{section=6} and~\ref{section=7}.


\subsubsection{Classes of matter models} 

There remains to list our conditions on the equation of state $p=p(\mu)$ of the fluid under consideration, which is assumed to be smooth for $\mu>0$ and at least once differentiable at the origin.
\bse \label{hyperbolic-eos} 
\bei 

\item \textbf{Hyperbolicity condition}
\bel{hyperbolic-eos-1} 
\aligned
& 0 < p'(\mu) < 1 \quad  
 (\text{for all } \mu > 0), 
 &&
 \qquad p(0) =0, \quad 0 \leq p'(0) < 1. 
\endaligned
\ee
\item \textbf{Convexity condition}
\bel{hyperbolic-eos-3} 
p''(\mu) + 2 \, {1 - p'(\mu) \over \mu+p(\mu)} \, p'(\mu) > 0 \quad  
 (\text{for all } \mu > 0).  
\ee
\eei
\ese
The upper bound $p'(\mu)<1$ rules out the possibility that the sound speed approach the speed of light.
The condition $0 < p'(\mu)$ for $\mu>0$ implies that the Euler equations form a strictly hyperbolic system \emph{away} from vacuum, while the assumptions $p(0) =0$ and $p'(0) < 1$ ensure that \(p(\mu) < \mu\) for $\mu>0$.
Meanwhile, the lower bound in~\eqref{hyperbolic-eos} is a mild restriction ensuring that the pressure is not too small, thereby preventing the fluid from concentrating at arbitrarily large density values.
Furthermore, the condition \eqref{hyperbolic-eos-3}  implies that the characteristic fields of the relativistic Euler equations are \emph{genuinely nonlinear} (in the sense of Lax). 

For instance, our results apply to isothermal fluids with $p= k^2 \mu$, which clearly satisfy these conditions, provided the speed of sound $k \in (0,1)$ remains below the speed of light (normalized to unity).  They also apply to equations of states that asymptote to a polytropic or isothermal law $p=\mu^\gamma$ with exponent $\gamma\geq 1$ at low densities $\mu\to 0$ and to the isothermal case at large densities.
More precisely, in addition to \eqref{hyperbolic-eos}, mild asymptotic conditions at $\mu\to 0,+\infty$ are required.
\begin{itemize}
\item \textbf{Asymptotically polytropic or isothermal at the vacuum}
\bse\label{equa-asym}  
\be
\lim_{\mu\to 0} \frac{p(\mu)}{\mu^\gamma} = c_\gamma \in (0, +\infty) ,
\qquad
\text{or}
\qquad
\lim_{\mu\to 0} \frac{p(\mu)}{\mu} = k_0^2 \in (0, 1)
\ee
for some constants $\gamma>1$ and $c_\gamma \in (0,+\infty)$, or for some constant $k_0 \in (0,1)$.

\item \textbf{Asymptotically isothermal at large density:} for some constant $k_\infty \in (0,1)$,
\bel{hyperbolic-eos-2}
\lim_{\mu\to+\infty} p'(\mu) = k_\infty^2 \in (0, 1) .
\ee
\ese
\end{itemize}

\begin{definition}
\label{def-pression}
A pressure function $p=p(\mu)$ is called an \textbf{admissible equation of state}  provided the hyperbolicity and convexity conditions \eqref{hyperbolic-eos}, together with\footnote{Strictly speaking, in our companion paper~\cite{LeFlochLeFloch-next} devoted to relativistic fluids, we also require a very mild integrability condition on the deviation $p(\mu)-c_\gamma \mu^\gamma$ near vacuum, and likewise near infinity.} the vacuum and large density asymptotic condition~\eqref{equa-asym}.
\end{definition}
 
Throughout this paper we work under the assumptions  \eqref{hyperbolic-eos} and \eqref{equa-asym}, while the extension to other matter models (scalar fields, stiff fluids, and non-zero cosmological constant) is considered in \autoref{section=10}, below. 
 

\subsection{Further background and related works}
\label{section=1-4}

\subsubsection{Sufficiently regular and vacuum spacetimes with \(\Tbb^2\) symmetry on \(\Tbb^3\)}

Over the past thirty years, significant progress has been made on the initial value problem for the Einstein equations in the \emph{non-perturbative} regime, provided the gravitational and matter fields are assumed to be \emph{sufficiently regular}. In addition to the references already cited in \autoref{section=1-1}, we would like to emphasize the following results for the class of spacetimes with \(\Tbb^2\) symmetry on \(\Tbb^3\). Moncrief~\cite{Moncrief-1981,EardleyMoncrief} first provided a full treatment of the Cauchy problem for in the polarized Gowdy case. A series of works followed (for which we refer to Rendall's textbook~\cite{Rendall-book}), establishing the existence of a global foliation by spacelike hypersurfaces in the areal gauge. After work by Berger et al. \cite{Berger-1997}, the existence problem was finally closed by Isenberg and Weaver~\cite{IsenbergWeaver-2003}: they established that the corresponding areal time function covers the entire range $(0, + \infty)$, describing an evolution originating from a Big Bang and dispersing indefinitely toward the future. 

We do not attempt here to review the vast literature on this class of spacetimes.  Beyond the issue of the existence of global foliations, and even for vacuum spacetimes, many challenging questions remain open concerning the global geometric behavior of \(\Tbb^2\)-symmetric vacuum spacetimes, including geodesic completeness, curvature blow-up, and the Penrose conjecture. It is only for the restricted class of Gowdy spacetimes that the behavior near the cosmological singularity is now well understood, thanks to pioneering work by~Ringström~\cite{Ringstrom-2004a,Ringstrom-2004b,Ringstrom-2006}. In the expanding direction, we refer to \cite{LeFlochSmulevici-2015} for partial results on the geometry of the future boundary.  


\subsubsection{Metrics with weak regularity}

\paragraph{General metrics.}

A definition of the Einstein equations for general metrics (without symmetry restriction) in the sense of distributions was proposed by several authors, including~\cite{LeFlochMardare-2007,Lott-2016} (and the references therein). However, no existence result is available in such a setup, and only junction conditions can be derived; cf.~for instance~\cite{Israel,Darmois,MarsSenovilla,LeFlochLeFloch-PhilTrans}. 

\paragraph{Bounded variation.}

As already pointed out in this introduction, earlier on, Christodoulou defined the class of spherically symmetric spacetimes with bounded variation~\cite{Christo86,Christo92}, and proved that singularities in spherically symmetric spacetimes arise at the center of symmetry; see also \cite{LeFlochSormani}. In the bounded variation class, the coupling with the Euler equations was studied first in spherical symmetry (and in Bondi coordinates) in~\cite{GroahTemple-2024,Burtscher-1}.  
 
In contrast, gravitational singularities in $\Tbb^2$  symmetry propagate and require an evolution system of their own: in Gowdy symmetry and areal coordinates by~\cite{Barnes-2004}, solutions with bounded variation were investigated in \cite{GrubicLeFloch-2015}. 


\paragraph{Sobolev regularity.}

Solutions in a suitable Sobolev regularity class were considered more recently. The first work on solutions with shock waves and impulsive waves to the Einstein--Euler system was LeFloch and Rendall~\cite{LeFlochRendall-2011} who constructed a global areal foliation. 
Using Penrose's spinor formalism, LeFloch and Stewart \cite{LeFloch-Stewart-2005,LeFlochStewart-2011} analyzed the characteristic initial value problem for plane-symmetric spacetimes with weak regularity. 
Next, weakly regular vacuum spacetimes with $\Tbb^2$ symmetry were defined and studied in~\cite{LeFlochSmulevici-2015}; in addition, the class of Gowdy spacetimes was particularly investigated in \cite{LeFlochLeFloch-1}. 


\subsubsection{Dynamics of compressible fluids with shocks} 
 
Importantly, our work deals with weak solutions that may have arbitrary large amplitude and may contain vacuum regions, by extending known techniques to the Einstein--Euler spacetimes under consideration. This requires new observations concerning the \emph{nonlinear coupling} between the fluid and the geometric parts for the Cauchy problem. In particular, 
we must consider balance laws and cope with geometric corrections to the fluid dynamics, which are of lower order in the fluid variables and thus turn out to be
controlled at the level of weak regularity under consideration.
 
The mathematical analysis of compressible fluid flows was initiated by DiPerna in his breakthrough work~\cite{DiPerna-1983a,DiPerna-1983b}. While bounded solutions were treated by DiPerna, the generalization to the broader class of solutions enjoying natural energy bounds (as we do in the present paper) was discovered only more recently by LeFloch and Westdickenberg~\cite{LeFlochWestdickenberg-2007} and extended by Germain and LeFloch~\cite{PG-PLF}.

Let us recall that several challenges arise with fluid flows containing shock waves. Due to the generic phenomenon of shock formation from smooth initial data, only weak solutions understood in the distributional sense can be sought for. Moreover, vacuum regions with $\mu=0$ (or $J_0=0$) may appear in finite time from non-vacuum initial data; in such a vacuum region, the Euler equations are trivially satisfied since all 
flux terms are proportional to~$J_0$. This property is consistent with the physical intuition that the velocity is ill-defined in the absence of a fluid, but must also be taken into account in designing a notion of solution. 
 In this regime, the equations (stated in~\eqref{eq:T2-fluid23-def-a}--\eqref{eq:T2-fluid23-def-b}, below) satisfied by the parallel momentum are trivially satisfied while the equations for the orthogonal momentum reduce to energy equations for the metric coefficients only. This difficulty is unavoidable when dealing with compressible matter, since vacuum regions may arise from non-vacuum initial data. 
 

\subsubsection{Back-reaction phenomenon and oscillatory limits}

The instability case of \autoref{thm:stability0} above may be viewed as a realization of Einstein's intuition that matter terms\footnote{The authors first became aware of this phenomenon through discussions with Luc Tartar, who understood it as arising naturally from the theory of compensated compactness and as expressing, at the physical level, the emergence of effective matter from oscillatory gravitational degrees of freedom.}  could arise as ``singularities'' of the gravitational field. In the physics literature, matter terms arising from oscillatory limits of \emph{smooth vacuum spacetimes} are discussed under the terminology of \emph{back-reaction} from high-frequency waves. In particular, small-scale inhomogeneities and their effective contribution to the large-scale dynamics have been investigated in cosmology by Green and Wald~\cite{Green-Wald} and the references therein. 

Rigorous mathematical statements were established by the authors \cite{LeFlochLeFloch-1} for the class of Gowdy spacetimes and, without symmetry assumptions, in the work of Lott~\cite{Lott-2018a}, Huneau and Luk~\cite{HuneauLuk-2018b,HuneauLuk-2020, HuneauLuk-2025}, and Touati~\cite{Touati-2023,Touati-2025}, by developping argument of (trilinear) compensated compactness for \emph{sufficiently regular} spacetimes ---without symmetry restriction. See also the historical review by Kiessling and Tahvildar-Zadeh~\cite{KiesslingTahvildar}. 

Our result fits within this broad perspective and, on the present symmetry class, is the first mathematical result in a \emph{weak regularity setting}, proving that oscillations in the geometry generate effective matter terms in the limit, which we encode by a corrector stress-energy tensor. We also mention that other interesting asymptotic regimes for matter models have recently been studied, for instance by Petropoulos et al.\ in connection with the so-called Carrollian limit~\cite{Petropoulos-2024}. 
  

\subsection{Overview of \autoref{part-one}: the theory of $\Tbb^2$-symmetric Einstein spacetimes with weak regularity}
\label{section=1-5}

\subsubsection{The  \JKL\ first-order formulation (\autoref{section=2})} 

The formulation introduced in this paper is presented in \autoref{def-quadr}. Its main unknowns are the fields $\Jperp$, $\Jpar$, $\Kpar$, and $\Lbb$ on the quotient manifold $\Ncal=\Mcal/\Tbb^2$, together with the quotient metric~$g_{\Ncal}$ described by the lapse function~$\Omega$ and the conformal length density~$\amdeux$ associated with an areal foliation; cf.~\autoref{section=2-1}. More precisely:
\bei

\item $\Jperp$ is a real-valued one-form describing the component of the fluid velocity orthogonal to the $\Tbb^2$~orbits, pushed forward to~$\Ncal$ and suitably weighted by the fluid density and pressure.

\item $\Jpar$ is a complex-valued scalar field encoding the weighted fluid velocity parallel to the $\Tbb^2$~orbits.

\item $\Kpar$ is a complex-valued scalar field describing the two twist parameters.

\item $\Lbb=i\,d\tau/\Im\tau$ is a complex-valued one-form encoding the modular parameter $\tau\in\CC$, $\Im\tau>0$, of the $\Tbb^2$~orbits.

\eei

\noindent Complex-valued fields arise naturally when parametrizing the $\Tbb^2$~orbits as discrete quotients of~$\CC$. With the notation
\be
\Jbb=(\Jperp,\Jpar), \qquad \Lbb=\Pbb+i\,\Qbb,
\ee
our first-order system of evolution and constraint equations for $(\Jbb,\Kpar,\Lbb)$ consists of

\bei

\item four nonlinear hyperbolic balance laws for~$\Jbb$,

\item four evolution and constraint equations for~$\Kpar$, 

\item four nonlinear wave equations for~$\Lbb$, 

\item together with two evolution equations for the conformal length density~$\amdeux$ and lapse function~$\Omega$, and one constraint equation for~$\Omega$.

\eei
\noindent These equations are stated in \eqref{eq:T2-1234-def}--\eqref{eq:theconstraints-def}, below, and their geometric and analytic structure is described in \autoref{theo-struct}. This choice of variables is especially well suited to the weak regularity theory developed in this paper, and already reveals several key features of the Einstein--Euler system, including the entropy current and quasi-current structures discussed in \autoref{section=3}.


\subsubsection{Hidden entropy structure of the Euler equations (\autoref{section=3})}

The next section uncovers one important feature: the Euler equations in torus symmetry possess a rich hidden entropy structure. This structure is the basis of our weak solution framework and of the compactness arguments used later in the global existence and stability theory. Starting from the geometric notion of entropy current introduced in \autoref{def-entropcurrent}, we present several nontrivial families of balance laws for the Euler system. These include the particle number current~\eqref{equa--Nbb-1}, whose main properties are summarized in \eqref{equa-2311}--\eqref{equa--Nbb-17}, as well as entropy currents associated with the parallel momentum variables through \eqref{equa-entrop23} and~\eqref{equa-entrop23-cons}.

A key point is that the standard notion of entropy current is not sufficient for the compactness analysis, except in situations where the parallel fluid components are assumed to vanish. This leads us to introduce a broader class of objects, called \emph{quasi-currents}; see \autoref{def-quasi-cur}. In curved geometry these satisfy  quasi-balance laws (derived in~\eqref{equa-balancelaws}, below) and, together with a convexity notion (cf.~\autoref{lem-convexM}) they provide the flexible entropy framework needed for compensated compactness and for the continuation of the Einstein--Euler evolution beyond shock formation.


\subsubsection{The class of tame Einstein--Euler flows (\autoref{section=4})} 

The aim is to formulate the Einstein--Euler system at the weak regularity level naturally associated with the evolution and constraint equations. 
First of all, the notion of $\Phi\Psi$ flow with finite energy, given in \autoref{weakdefinitionT2}, distinguishes between the $L^2$-type variables
\be
\Phi=(\Jperp,\Jpar, \Pbb,\Qbb)
\ee
and the \BV-type variables
\be
\Psi=(\ell,\log\Omega,\Kpar),
\ee
which, by definition, satisfy integrability and weak regularity conditions (cf.~\eqref{equa-integrable-BV}) relative to the natural spacelike and timelike volume forms, denoted by $\dVtrois$ and $\dVundeux$ (cf.~\eqref{equaDVtous}). These assumptions are dictated by the nonlinear structure of the first-order system. At first, all equations except the wave equation for the lapse are required to hold in the sense of distributions through a weak formulation of the Einstein equations and the fluid equations equations; cf.~\autoref{def-weaksolu}. We show the Einstein constraint equations propagate throughout the whole time interval, as stated in \autoref{weakdefinitionT2-2-propo}.  

Next, building on the entropy structure derived in \autoref{section=3}, we introduce our central notion, namely the class of \emph{tame solutions}. Specifically, we strengthen the weak formulation by requiring certain entropy balance laws to hold in the sense of distributions; see \autoref{def-weaksolu-deux}. This framework includes quasi-balance inequalities (weak analogues of~\eqref{equa-balancelaws}) associated with quasi-currents dominated by the reference current \(\Mbf^{0\bullet}\).

Thus, the hidden entropy structure uncovered in \autoref{section=3} becomes an essential ingredient in the definition of a robust class of weak solutions, one that will be shown to be \emph{stable under weak convergence}. In this sense, \autoref{section=4} forms the backbone of the present work: it introduces a notion of solution \emph{weak enough} to ensure existence for the Cauchy problem, yet \emph{strong enough} to provide the compactness properties needed for the analysis of the Einstein--Euler system.


\subsubsection{Compactness, stability, and instability (\autoref{section=5})}

A central objective of the paper is to analyze the nonlinear stability
properties of Einstein flows in \(\Tbb^2\) symmetry. To this end, in \autoref{section=5}, we
study suitably bounded sequences of tame Einstein--Euler flows, namely by assuming uniform
bounds on the natural \(L^2\) and \(\BV\) norms, and we determine their
possible weak limits. This yields a compactness framework broad enough to
encompass perturbations of initial data, vanishing-viscosity approximations,
and finite volume schemes.

The analysis relies on the decomposition into \(L^2\)-type variables
\(\Phi=(\Jperp,\Jpar,\Pbb,\Qbb)\) and \(\BV\)-type variables
\(\Psi=(\ell,\log\Omega,\Kpar)\), together with uniform bounds in the natural
norms. It is sufficient to assume that the field equations hold (possibly) only approximately, up to
negligible \(H^{-1}\) and measure error terms, and the entropy inequalities
are imposed in generalized form as well. This leads to a
comprehensive notion of tame sequence of Einstein--Euler flows, 
as introduced in \autoref{def-810}. 

The section begins by stating a general \emph{compactness result} for sequences of flows with corrector.
More precisely, \autoref{theorem-493} shows that every tame sequence
of (approximate) Einstein--Euler flows with corrector admits, after extraction
of a subsequence, a limiting tame Einstein--Euler flow with corrector.
This yields the nonlinear instability mechanism in its most general
form, since generally speaking the sequence of corrector measures \emph{does not converge} to the corrector measure of the limiting flow. 

 The notion of \emph{initially well-prepared sequence} is then introduced
in \autoref{def-well-p}, and the general instability theorem is refined into a
\emph{stability statement} for the corrector itself; see \autoref{theorem-494}. By
specializing to sequences without corrector, we obtain the \emph{nonlinear
stability theorem} for Einstein--Euler flows stated in
\autoref{theorem-492}. Finally, \autoref{corollary-493-init} extends these
compactness results to include (approximate) initial data sets.

Hence, \autoref{section=5} identifies both the stability mechanism and the instability mechanism governing weak limits of Einstein--Euler flows in the non-perturbative regime.


\subsection{Overview of \autoref{part-two}: the finite energy method for $\Tbb^2$-symmetric Einstein spacetimes} 
\label{section=1-6}

\subsubsection{A priori estimates for Einstein--Euler flows (\autoref{section=6})}

The second part of this paper is devoted to the proof of the main results. An important task is the derivation of \emph{a priori estimates for weak solutions}, which provide the analytic backbone of the whole weak regularity and compactness theory. We begin by introducing a geometric energy of an Einstein--matter spacetime $(\Mcal,\guntrois,\mu,u)$ with $\Tbb^2$ symmetry on $\Tbb^3$, namely the function $\Ebfscri$ of the areal time~$t$ defined by
\bel{intro-energy}
\Ebfscri(t) \coloneqq \frac{2}{|t|} \int_{\Tbb^3} \Lie_{e_0}\bigl(\dVtrois\bigr). 
\ee
Here, $\gtrois$ denotes the Riemannian metric induced on the slices of constant areal time~$t$, $\dVtrois$ is its volume form, $e_0$ is the future-pointing timelike unit vector field orthogonal to these slices\footnote{We do not need to distinguish explicitly the frame vectors $e_0,e_1$ on~$\Ncal$ and the corresponding frame vectors on~$\Mcal$ orthogonal to the $\Tbb^2$ symmetry orbits, since they are identified by push-forward along the quotient map.}, and $\Lie$ denotes the Lie derivative. Thus, $\Ebfscri(t)$ measures how the volume of a slice of constant areal time varies under shifts by constant proper time. Once the Einstein equations are imposed, this functional becomes a natural square-integrability norm for the first-order geometric variables and the matter fields.

In \autoref{section=6}, starting from an arbitrary tame Einstein--Euler flow, possibly with corrector,  and prescribed initial data, we derive a collection of properties: \emph{monotonicity, integrability,} and \emph{pointwise boundedness} on both the geometric and fluid variables. A first group of estimates concerns the spacelike volume~$\Vcal(t)$, the conformal length functional~$\Lbfscri(t)$, and the energy~$\Ebfscri(t)$ itself (defined in \eqref{equa-333}, \eqref{equa-conf-vol}, and \eqref{equa--energy}, below). The corresponding estimates are stated in Lemmas~\ref{lemma-31}, \ref{lem-conf-vol}, and~\ref{prop:energy}, below. These quantities enjoy monotonicity properties along the areal foliation, yielding a quantitative control of the geometry in both the expanding and contracting regimes.

A second group of estimates provides two-sided bounds for the lapse~$\Omega$, control of the conformal length measure~$d_x\ell$, and pointwise estimates for the twist variables~$\Kpar$, which are stated in~Lemmas~\ref{lemma-sup-Omega}, \ref{lem-speed}, and~\ref{lem-apriori--twist}, below. We also derive spacetime and timelike-slice estimates for the main first-order variables.   
Altogether, these estimates show that the weak regularity assumptions introduced in \autoref{section=4} \emph{are propagated} by the Einstein--Euler evolution and provide the uniform bounds needed later in the nonlinear stability and instability theory.


\subsubsection{A priori estimates for tame Einstein--Euler flows (\autoref{section=7})}

The next section complements the \emph{a priori} estimates of \autoref{section=6} by deriving maximum principles and additional integrability properties for the fluid and geometric variables. A first objective is to refine our control of the parallel momentum and the twist variables in sup norm. For sufficiently regular solutions, this can be seen directly from the transport equations satisfied by the metric-weighted quantities $\Gammaunpar$ and $\Gammazero$, which are propagated along the fluid characteristics. This leads to the maximum principle stated in \autoref{proposition-JJJ}, showing that the metric-weighted parallel momentum and twist functions remain bounded pointwise in terms of their initial values. A second, more robust, argument is then developed at the level of weak solutions. It relies on the entropy structure introduced in \autoref{section=3}, and more precisely on the $H$-divergence law (as we call it) satisfied by currents of the form $H(\Gammaunpar)\vNumb$ and $H(\Gammazero)\vNumb$; cf.~\eqref{equa-mom-twists0}. This \emph{entropy-based method} yields the same maximum principles without appealing to characteristics, and therefore applies in the weak regularity setting relevant to the global theory.

The section then turns to an additional integrability property for the Einstein--Euler system. By combining the Euler equations with the estimates established earlier for the lapse, the twists, and the energy, we derive a nonlinear integral estimate involving suitable quadratic and quartic expressions built from $\Pbb$, $\Qbb$, $\Jperp$, and $\Jpar$ (see \autoref{theo:te-Euler-explicit}, below). In particular, we obtain spacetime control of null forms such as $\Xbb\cdot\Ybb$ and $\Xbb\wedge\Ybb$, of $|\Jbb\cdot\Jbb|$, and of the parallel momentum in $L^4$. Together with the maximum principles, these estimates provide the integrability properties needed to justify the entropy inequalities, to control nonlinear products, and ultimately to carry out the compensated compactness argument. 


\subsubsection{Compactness properties of metric components (\autoref{section=8})}

Next, we establish the compactness properties needed in the stability and instability theorems of \autoref{section=5}. Our strategy is hierarchical. Starting from a bounded sequence of tame flows, we first analyze the geometric variables: the conformal length and the lapse, then the essential geometry variables \(\Pbb,\Qbb\), and finally the twist functions. These variables satisfy either linear transport equations or div-curl systems whose source terms are controlled by the estimates derived in \autoref{section=6} and \autoref{section=7}. In the well-prepared regime, this yields strong compactness of the geometry; for general data, it yields a weaker compactness statement, still sufficient to pass to the limit, but allowing the appearance of a corrector in the lapse equations.

A central motivation for the compactness analysis is the following general mechanism, which goes back to compensated compactness in the sense of Tartar~\cite{Tartar1,Tartar2} and Murat~\cite{Murat-1974}. In our formulation, Einstein's equations couple nonlinear hyperbolic balance laws with generalized wave-map equations, and the nonlinear terms are organized so as to exhibit a div-curl structure. The relevant phenomenon is already visible in the following model, in arbitrary spatial dimension \(N\geq1\):
\bse
\be
\Box \phi_a = Q_a(\partial \phi,\partial \phi;\phi,t,x),
\ee
where \(\Box\) is the wave operator in \(\RR^{N+1}\), \(\phi=(\phi_a)_{a=1,\dots,A}\) is the unknown, and each quadratic term \(Q_a(\partial\phi,\partial\phi;\phi,t,x)\) is a linear combination, with coefficients depending on \(\phi,t,x\), of the null forms
\be
Q_{0}(\partial \phi_a,\partial \phi_b)
= - \partial_t \phi_a\,\partial_t \phi_b + \sum_j \partial_j \phi_a\,\partial_j \phi_b,
\qquad
Q_{jk}(\partial \phi_a,\partial \phi_b)
= - \partial_j \phi_a\,\partial_k \phi_b + \partial_k \phi_a\,\partial_j \phi_b.
\ee
If \(\phi^\suit=(\phi_a^\suit)\) is a sequence of solutions to
\[
\Box \phi_a^\suit = Q_a(\partial\phi^\suit,\partial\phi^\suit;\phi^\suit,t,x),
\]
uniformly bounded in \(H^1\), and if the sequence of right-hand sides
\[
Q_a(\partial\phi^\suit,\partial\phi^\suit;\phi^\suit,t,x)
\]
is relatively compact in \(H^{-1}\), then any weak limit \(\phi^\sharp=(\phi_a^\sharp)\) is again a weak solution:
\[
\Box \phi_a^\sharp = Q_a(\partial\phi^\sharp,\partial\phi^\sharp;\phi^\sharp,t,x).
\]

Indeed, the first derivatives of \(\phi^\suit\) have the structure of gradients, hence satisfy the corresponding curl-free compatibility relations, while the wave equations control their divergences. The div-curl lemma therefore applies, and the quadratic null forms are stable under weak convergence. In the framework of the present paper, we further establish spacetime integrability properties ensuring that the relevant quadratic expressions are bounded in \(L^1\); by Murat's lemma, this is sufficient to obtain the required \(H^{-1}\) compactness.
\ese
This is the fundamental compactness mechanism underlying the whole paper. It hints at how the entropy structure, the maximum principles, the additional integrability estimates, and the div-curl structure interact so as to allow passage to the limit in the full Einstein--Euler system.


\subsubsection{Compactness properties of fluid variables (\autoref{section=9})}

The fluid variables require a separate argument. The general methodology for dealing with weak solutions to the compressible Euler equations goes back to the pioneering work of DiPerna~\cite{DiPerna-1983a,DiPerna-1983b} and its subsequent developments. 
In the present paper, we are especially motivated by the extension of DiPerna's theory  to (possibly unbounded) \emph{weak solutions with finite energy}, a class of solutions first defined and investigated by LeFloch and Westdickenberg~\cite{LeFlochWestdickenberg-2007}. A major obstacle is the construction of a sufficiently rich class of nonlinear balance laws and entropy inequalities, together with an analysis of how geometric and matter terms are coupled within these laws. This is precisely the purpose of both Sections~\ref{section=3} and~\ref{section=9}, which are also supplemented with our follow-up paper \cite{LeFlochLeFloch-next}, which present a comprehensive theory of relativistic fluid flows with symmetry. 

For the orthogonal momentum, we exploit the entropy quasi-currents introduced in \autoref{section=3}. Their divergences are shown to be $H^{-1}$-compact, which allows us to apply a metric-weighted div-curl lemma and derive a Tartar commutation relation for the associated Young measure. Since the available class of quasi-currents is sufficiently large, this commutation relation forces the Young measure to collapse to a Dirac mass, and therefore yields strong convergence of the orthogonal fluid momentum. The parallel momentum is treated separately through the transport structure and maximum principles established in \autoref{section=7}, together with compactness derived from a bounded variation bound. 

In turn, in combination with the results in \autoref{section=8}, we prove both the nonlinear stability result for well-prepared data and the instability statement for general data, the latter through the appearance of a stress-energy corrector in the limiting geometry.
 

\subsubsection{Concluding observations (\autoref{section=10})}

\paragraph{Another choice of reference entropy.}

The final section is devoted to several extensions and complements of the theory developed in this paper. Its first purpose is to compare two natural entropy frameworks for the Euler equations in the present geometric setting. In the formulation adopted throughout the core of the paper, the particle number current~\(\vNumb\) is kept as an exact divergence equality, while the stress-energy current \(\Mbf^{0\bullet}\) is treated as the reference entropy current and therefore satisfies an entropy inequality. In \autoref{section=10}, we also discuss the reverse choice, in which \(\Mbf^{0\bullet}\) is kept in exact balance form and the particle number current is taken as the reference entropy current; cf.~\eqref{equa-entr-rrr}. Although both formulations agree at the level of regular solutions, they lead to different theories of weak solutions once shocks are present.

The two entropy frameworks do not have the same scope. The formulation developed in the present paper, based on the reference entropy current \(\Mbf^{0\bullet}\), allows for \emph{nontrivial parallel momentum}, namely \(J_2\not\equiv0\) and \(J_3\not\equiv0\), and is therefore adapted to the full Einstein--Euler system in torus symmetry. By contrast, the alternative formulation based on the particle number current is restricted\footnote{due to limitations of the method of compensated compactness and the lack of a sufficiently rich family of suitable entropy currents} to the reduced regime \(J_2=J_3=0\), in which the quasi-currents introduced earlier reduce to standard entropy currents. This comparison clarifies the structural role of our entropy choice and explains why the framework based on \(\Mbf^{0\bullet}\) is the one best suited to the non-perturbative regime considered in this paper.


\paragraph{Other matter models.}

Beyond compressible fluids, \autoref{section=10} indicates that the methods developed here apply more broadly. Scalar fields fit naturally into the present first-order framework, while a cosmological constant $\Lambda$ introduces only lower-order modifications and leaves the main hyperbolic, entropy, and compactness structures unchanged. More generally, the discussion encompasses vacuum, scalar fields, and several fluid models, and points toward further extensions to other symmetry classes.

More conceptually, under symmetry, the Einstein equations retain a robust nonlinear structure even at weak regularity, provided one works with suitable first-order variables and an adapted entropy framework. This structure is flexible enough to accommodate both fluid shocks and impulsive gravitational waves, while still allowing one to formulate a global evolution and compactness theory.

The paper also includes an appendix containing technical material: expressions of the Vielbein and spin connection (\autoref{appendix=A}), simplified expressions for isothermal fluids (\autoref{appendix=B}), structure of the Einstein--Euler system (\autoref{appendix=C}), 
further properties of the Euler equations (\autoref{appendix=D}), and a density property (\autoref{appendix=E}).


\subsection{Summary of notation}
\label{section=1-7}

For the reader's convenience, we collect here the main notation used throughout the paper. The quotient manifold is denoted by $\Ncal=\Mcal/\Tbb^2$, and the basic geometric unknowns are the lapse~$\Omega$, the conformal length measure $d_x\ell=\amdeux\,dx$, 
the twist variable
$\Kpar=K_2+i\,K_3$,
and the essential geometric variables
$\Pbb=(P_0,P_1)$ and $\Qbb=(Q_0,Q_1)$ collected as $\Lbb=\Pbb+i\,\Qbb$.
The fluid momentum is written $\Jbb=(\Jperp,\Jpar)$ with $\Jperp=(J_0,J_1)$ and $\Jpar=(J_2,J_3)$,
with
\be
J_0 \leq 0 ,
\qquad
-\Jbb\cdot\Jbb = J_0^2-J_1^2-J_2^2-J_3^2 \geq 0,
\qquad
\mu+p(\mu)=-\frac12\Jbb\cdot\Jbb.
\ee
The thermodynamical quantities are the pressure law \(p=p(\mu)\), the pressure ratio
\be
q(\mu)=\frac{\mu-p(\mu)}{\mu+p(\mu)},
\ee
the particle number density \(\Numb(\mu)\), defined by
\be
\frac{d\Numb}{\Numb}=\frac{d\mu}{\mu+p(\mu)},
\ee
and the metric-weighted particle density
\be
\hNumb(\mu)=\Numb(\mu)\bigl(2(\mu+p(\mu))\bigr)^{-1/2}.
\ee
The particle number current is then
$\vNumb=\hNumb_\Jbb\,\Jbb$. 
We also use the parallel momentum per particle \(\Jhatpar=(\hNumb_\Jbb)^{-1}\Jpar\), together with its metric-weighted version \(\Gammaunpar\), and similarly the metric-weighted twist variables \(\Gammazero\).

Several quadratic expressions occur repeatedly. We use
\be
\Ebf_0(X_0,X_1)=\frac12(X_0^2+X_1^2),
\qquad
\Ebf_1(X_0,X_1)=X_0X_1,
\ee
and the stress-energy coefficients
\be
\aligned
\Mbf^{00}(\Jbb,\Kpar,\Lbb)
&=
\Ebf_0(\Jperp,\Pbb,\Qbb)
+\Ebf_0(\Jpar,\Kpar)
+\frac{q_\Jbb}{2}\,(-\Jbb\cdot\Jbb),
\\
\Mbf^{01}(\Jbb,\Kpar,\Lbb)
&= - \Ebf_1(\Pbb) - \Ebf_1(\Qbb) - \Ebf_1(\Jperp) ,
\\
\Mbf^{11}(\Jbb,\Kpar,\Lbb)
&=
\Ebf_0(\Jperp,\Pbb,\Qbb)
-\Ebf_0(\Jpar,\Kpar)
-\frac{q_\Jbb}{2}\,(-\Jbb\cdot\Jbb),
\endaligned
\ee
where \(q_\Jbb=q(\mu)\). Throughout the paper we assume
\be
q\in(0,1],
\qquad
-\Jbb\cdot\Jbb\geq 0,
\ee
with $q=1$ allowed only for $-\Jbb\cdot\Jbb=0$, namely $\mu=0$.
The main geometric functionals are the spacelike volume \(\Vcal(t)\), the conformal length \(\Lbfscri(t)\), and the energy \(\Ebfscri(t)\); cf.~\eqref{equa-333}, \eqref{equa-conf-vol}, and \eqref{intro-energy}. Finally, the main notions of solution introduced in the paper includes: flows with finite energy, (tame) Einstein--Euler flows, and their extensions to flows with corrector. For the reader's convenience, a summary of the main notation and terminology is given in Tables~\ref{table-notation-1}, \ref{table-notation-2}, and~\ref{table-def}.


\begin{table}[H]
\centering
\begin{tabular}{|l|l|}
\hline
\textbf{Notation} & \textbf{Meaning} \\
\hline
\(\Ncal=\Mcal/\Tbb^2\) & quotient manifold \\
\hline
$\Phi = (\Jbb, \Lbb) = (\Jperp, \Jpar, \Pbb, \Qbb)$ & $L^2$-type fields \\
\hline
$\Psi = (\ell, \log\Omega, \Kpar)$ & $\BV$-type fields \\
\hline
\(\Omega\) & lapse function \\
\hline
\(d_x\ell=\amdeux\,dx\), \(\ell(t,x)=\int_0^x \amdeux(t,x')\,dx'\) & conformal length \\
\hline
\(\Jbb=(\Jperp,\Jpar)\), \(\Jperp=(J_0,J_1)\), \(\Jpar=(J_2,J_3)\) & fluid momentum variables \\
\hline
\(\Kpar=K_2+i\,K_3\) & twist variables \\
\hline
\(\Lbb=\Pbb+i\,\Qbb\), \(\Pbb=(P_0,P_1)\), \(\Qbb=(Q_0,Q_1)\) & essential geometric variables \\
\hline
\(\mu+p(\mu)=-\Jbb\cdot\Jbb/2\) & relation between enthalpy and momentum \\
\hline
\(p(\mu)\), \(q(\mu)=\dfrac{\mu-p}{\mu+p}\vphantom{\biggl|}\) & pressure law and pressure ratio \\
\hline
\rule{0pt}{12pt}\(\hNumb(\mu)\) & enthalpy-normalized particle number density \\
\hline
\end{tabular}
\caption{Main geometric and fluid notation.}
\label{table-notation-1}
\end{table}


\begin{table}[H]
\centering
\begin{tabular}{|l|l|}
\hline
\textbf{Notation} & \textbf{Meaning} \\
\hline
\rule{0pt}{12pt}\(\vNumb = \hNumb_\Jbb\,\Jbb\) & particle number current \\
\hline
\rule{0pt}{13pt}\(\Jhatpar=(\hNumb_\Jbb)^{-1}\Jpar\) & parallel momentum per particle \\
\hline
\rule{0pt}{13pt}\(\Gammaunpar\), \(\Gammazero\) & metric-weighted parallel momentum and twists \\
\hline
\(\Ebf_0,\Ebf_1\) & basic quadratic forms~\eqref{eq:E0E1-T2-3lignes} \\
\hline
\(\Mbf^{00},\Mbf^{01},\Mbf^{11}\) & total energy-momentum tensor~\eqref{eq:T2-Mdef-0} \\
\hline
\rule{0pt}{12pt}\(\dVtrois,\dVundeux,\dVuntrois\) & spacelike, timelike, and spacetime volume forms \\
\hline
\(\Vcal(t)\), \(\Lbfscri(t)\), \(\Ebfscri(t)\) & volume, conformal length, and energy functionals \\
\hline
\end{tabular}
\caption{Derived quantities, quadratic expressions, and main functionals.}
\label{table-notation-2}
\end{table}

\begin{table}[H]
\centering
\begin{tabular}{|l|l|}
\hline
\textbf{Terminology} & \textbf{Reference} \\
\hline
first-order \((\Jbb,\Kpar,\Lbb)\) formulation & \autoref{def-quadr}\\
\hline
\(\Phi\Psi\) flow with finite energy & \autoref{weakdefinitionT2} \\
\hline
Einstein--Euler flow & \autoref{def-weaksolu} \\
\hline 
tame Einstein--Euler flow & \autoref{def-weaksolu-deux} \\
\hline
\(\Phi\Psi\Pi\)  flow with finite energy & \autoref{weakdefinitionT2-relaxed} \\
\hline
tame Einstein--Euler flow with corrector & \autoref{weakdefinitionT2-deux2} \\
\hline
initial data set & \autoref{def-initialdata} \\
\hline
\end{tabular}
\caption{Main formulations and notions of weak solutions.}
\label{table-def}
\end{table}


\clearpage 

\part{The theory of $\Tbb^2$-symmetric Einstein spacetimes with weak regularity} 
\label{part-one}

\section{The  \JKL\ first-order formulation}
\label{section=2} 

\subsection{Spacetime metrics in areal gauge} 
\label{section=2-1}

\subsubsection{Global areal time}

Throughout, we assume a $\Tbb^2$ symmetric spacetime with $\Tbb^3$ spatial topology, and we foliate the spacetime $(\Mcal, \guntrois)$ under consideration with spacelike hypersurfaces of constant areal time denoted by~$t$. Namely, we introduce a global time coordinate $t\colon\Mcal \to \Interval \subset \RR \setminus \{0\}$ that coincides (up to a sign) with the area of the $\Tbb^2$-orbits of symmetry. Here, $\Interval$ denotes a (compact or non-compact) interval that does not contain~$0$. Most of our definitions and results will be stated on a compact interval $\Interval=[t_0, t_1]$; however, our definition extends immediately to semi-open intervals $\Interval = [t_0, t_*) \subset (0, + \infty)$ or $\Interval= [t_0, t_*) \subset (- \infty, 0)$. Einstein's constraint equations, along with the positive energy condition satisfied by the matter model, imply that the gradient $\nabla\abs{t}$ of the area function is a \emph{timelike vector field} (cf.~\cite{Rendall-book} and the references cited therein). We define the sign of~$t$ so that $\nabla t$ is \emph{future-oriented}. Hence, positive~$t$ and negative~$t$ correspond to \emph{future-expanding} and \emph{future-contracting} spacetimes, respectively. Furthermore, in this section, we assume that all objects under consideration are sufficiently regular. 


\subsubsection{Modular parameter}

Besides the area~$|t|$, the metric induced on each $\Tbb^2$~orbit is characterized by the \emph{modular parameter}, denoted $\tau\in\CC$ with $\Im\tau>0$, which describes the conformal structure of the discrete quotient $\CC/(\ZZ+\tau\ZZ)$. The modular parameter is defined up to discrete transformations 
\be
\tau\mapsto (a\tau+b)/(c\tau+d) \text{ with }
\begin{pmatrix}a&b\\c&d\end{pmatrix} \in SL(2,\ZZ), 
\ee
which can be gauge-fixed by choosing a pair of reference one-cycles inside~$\Tbb^2$ (with intersection number~$1$). We focus our attention on spacetimes whose spacelike hypersurfaces have $\Tbb^3$ topology, ensuring a global choice of such cycles; thus the modular parameter $\tau$ is periodic in space. For the remainder of this paper, we decompose $\tau=Q+ie^{-P}$ into its real and (positive) imaginary parts. Although we assume $\Tbb^3$ topology, our method is general and should also apply to the case where Cauchy surfaces are realized as certain non-trivial $\Tbb^2$ bundles over the circle~$\Sbb^1$, including the comparatively simpler case of (vacuum) \emph{twisted Gowdy spacetimes} introduced in~\cite{Rendall-twisted}. In that setting, the spatial periodicity of~$\tau$ is replaced by periodicity up to an $SL(2,\ZZ)$ transformation.


\subsubsection{Lapse function and conformal length density}

We consider the (two-dimensional) \textbf{quotient manifold} ${\Ncal \coloneqq \Mcal/ \Tbb^2}$ and the associated \emph{quotient metric}~$g_{\Ncal}$, defined from the four-dimensional metric via the assumed action by the $\Tbb^2$ group of symmetries.  By construction, the areal time~$t$ remains constant along the $\Tbb^2$ orbits, and thus serves as a time coordinate on this $(1+1)$-dimensional Lorentzian manifold. We define the \emph{conformal factor} $\Omega>0$ via the squared norm $g_{\Ncal}^{-1}(dt,dt)$, namely 
\be
\Omega^{-2} \coloneqq - g_{\Ncal}^{-1}(dt,dt). 
\ee 
In a spatial coordinate~$x$ chosen so that its level sets are orthogonal to those of~$t$, the quotient metric $g_{\Ncal}$ on~$\Ncal$ takes the form
\bel{quotient-metric-gN}
g_{\Ncal} = \Omega^2 (- dt^2 + \amdeux^2 dx^2).
\ee 
More precisely, we define the coordinate~$x$ (up to an overall constant shift) along the initial data surface $\{t=t_0\}\simeq\Tbb^3/\Tbb^2$ by prescribing\footnote{We may choose $\amdeux |_{t=t_0}$ to be a constant.  In the vacuum and under Gowdy symmetry, this choice is natural as $\amdeux$ then remains constant throughout the spacetime. For general spacetimes, however, this choice depends on the initial time slice.} the function $\amdeux|_{t=t_0}$ such that $x$ is periodic with period~$1$. The results we establish in this paper ensure the existence of a coordinate~$x$ for which the quotient metric takes the form~\eqref{quotient-metric-gN}. 

The conformal coefficient $\Omega$ is commonly referred to as the \textbf{lapse function} of the foliation, while we will dub the function $\amdeux$ the \textbf{conformal length density}.  It is an (inverse) speed of light relating the null slopes in the quotient metric $g_{\Ncal}=\Omega^2(-dt^2+\amdeux^2 dx^2)$ and therefore will appear as a speed in key differential operators relevant for the geometry variables.


\subsubsection{Adapted moving frame}

We now construct an \emph{adapted moving frame,} namely an orthonormal basis of vector fields on~$\Mcal$ 
\be
e_0,e_1,e_2,e_3 \text{ that remain invariant under the $\Tbb^2$ symmetry}.
\ee
First, $e_0$~is the future-directed unit vector normal to constant-$t$ hypersurfaces, while $e_1$~is a (unit) vector orthogonal to $e_0$ and to the $\Tbb^2$~symmetry orbits. Next, $e_2$~is a unit vector tangent to $\Tbb^2$ orbits (hence orthogonal to $e_0$ and $e_1$) and aligned with the first reference one-cycle mentioned above. Finally, $e_3$ is a unit vector orthogonal to the other three basis vectors. For any given spacetime with $\Tbb^2$~symmetry on~$\Tbb^3$, the choice of adapted moving frame depends only on discrete data: the cycle of~$\Tbb^2$ along which $e_2$~is aligned (with possible choices differing by $SL(2,\ZZ)$ transformations), and the orientations of $e_3$ and~$e_1$.
The modular parameter~$\tau$ depends additionally on the second reference cycle.


\subsubsection{Metric decomposition}

In the so-called areal gauge, with the coordinates $t,x$ introduced above, any $\Tbb^2$ symmetric spacetime metric $\guntrois$ on the torus $\Tbb^3$ can be written in the form 
\bel{metric:areal}
\guntrois = \Omega^2 (- dt^2 + \amdeux^2 dx^2) + \abs{t} e^P \bigl( dy + Q\,dz + (G+QH) \, dx\bigr)^2 + \abs{t} e^{-P} (dz + H\,dx)^2, 
\ee  
where the metric coefficients $P,Q,\Omega, \amdeux, G, H$, with $\Omega,\amdeux>0$, depend only on $t \in \Interval$ and $x \in \Sbb^1 \simeq [0,1]$, whereas the remaining `parallel' variables $y,z$ describe $\Tbb^2 \simeq [0,1]^2$. Observe that the vector fields $\del/\del y$ and $\del / \del z$ are Killing fields; for the existence of this metric decomposition, we refer to Rendall~\cite{Rendall-book}. In our analysis, we will sometimes find it convenient to specialize the general results to the following important cases. 
\bei 

\item \emph{Gowdy-symmetric spacetimes}, by definition, are characterized by the condition $G=H=0$ throughout the spacetime. Geometrically, this is equivalent, up to an affine coordinate change, to the vanishing of the  \emph{twist variables} $K_2, K_3$, which will be introduced below. This condition implies also the vanishing\footnote{This follows from \eqref{eq:221a}--\eqref{eq:221b}, below.} of the parallel fluid variables $J_2,J_3$. In addition, \emph{polarized} Gowdy-symmetric spacetimes are further characterized by $Q=0$. 

\item \emph{Vacuum $\Tbb^2$ symmetric spacetimes} are defined by setting $\mu = 0$ throughout the spacetime, in which case the Euler equations are automatically satisfied for an arbitrary (and physically irrelevant) velocity field~$u$. 

\eei


\subsubsection{Vielbein and spin connection}

In order to express the field equations for the metric~\eqref{metric:areal}, careful attention is required to obtain a tractable system suitable for analyzing the nonlinearities. First of all, we rewrite the metric and its inverse as 
\bel{equa-metric-inverse}
\guntrois_{\alpha\beta} = - e^0_\alpha e^0_\beta + e^1_\alpha e^1_\beta + e^2_\alpha e^2_\beta + e^3_\alpha e^3_\beta,
\qquad 
\guntrois{}^{\alpha\beta} 
 = - e_0^\alpha e_0^\beta + e_1^\alpha e_1^\beta + e_2^\alpha e_2^\beta + e_3^\alpha e_3^\beta, 
\ee 
namely in terms of the adapted moving frame defined above, also called \emph{Vielbein} and inverse Vielbein, respectively. We then introduce the notation 
\bel{eq:definevar-01}
\aligned 
P_0 & \coloneqq e_0(P) = \Omega^{-1} P_t, 
\,
& Q_0 & \coloneqq e^P e_0(Q) = \Omega^{-1} e^P Q_t, 
\\
P_1 & \coloneqq e_1(P) = \Omega^{-1} \amdeux^{-1} \, P_x,  
\,
& Q_1 & \coloneqq e^P e_1(Q) = \Omega^{-1}\amdeux^{-1} \, e^P Q_x,  
\\
K_2 & \coloneqq \Omega^{-2}   \amdeux^{-1}  \, |t|^{1/2} e^{P/2} \bigl ( G_t + QH_t \bigr),  
\, \qquad
& K_3 & \coloneqq \Omega^{-2}  \amdeux^{-1} \, |t|^{1/2} e^{-P/2} H_t.  
\endaligned
\ee  
The expressions of the Vielbein vectors and one-forms in~\eqref{equa-metric-inverse} can be found in~\autoref{appendix=A}, together with the spin connection.  


\subsection{A choice of first-order variables}
\label{section=2-2}

\subsubsection{Geometry variables}

The geometric unknowns $(\Kpar,\Lbb) = (\Kpar, \Pbb, \Qbb)$ are defined as follows. The Einstein equations are conveniently expressed in terms of the lapse~$\Omega$ and conformal length density~$\amdeux$ (which describe the quotient metric~\eqref{quotient-metric-gN}), the two one-forms $\Pbb$ and $\Qbb$ (whose components in the Vielbein basis are $(P_0, P_1)$ and $(Q_0, Q_1)$), and a complex scalar $\Kpar$ (associated with $K_2$ and $K_3$). Specifically, the \textbf{essential geometry variables} are defined as 
\bel{eq:definevar}
\aligned
\Pbb & \coloneqq (P_0, P_1), 
\quad
& \Qbb & \coloneqq (Q_0, Q_1),
\endaligned
\ee
as well as the twists functions
\be
\Kpar \coloneqq K_2 + i K_3.
\ee
Throughout, we tacitly assume the vanishing integrability conditions\footnote{These conditions make sense for weak solutions (defined later), at least for almost every (a.e.) time.} 
\bel{equa-P1Q1}
\int_{\Tbb^3} P_1 \, \dVtrois = \int_{\Tbb^3} e^{-P} Q_1 \, \dVtrois = 0 \quad \text{  at  } t \in \Interval,  
\ee 
where $\dVtrois = |t|\,\Omega\,\amdeux\,dx\,dy\,dz$,
which are required in view of~\eqref{eq:definevar-01}. Using the evolution equations derived below, one easily checks that these conditions hold for all times provided they hold on an initial Cauchy surface. We may also rely on complex-valued geometry variables, consisting of a \emph{complex-valued one-form}~$\Lbb$ and a \emph{complex-valued scalar field}~$\Kpar$ on~$\Ncal$:
\be
\Lbb = \Pbb + i\,\Qbb = \frac{i\,d\tau}{\Im\tau}, \qquad
\Kpar = K_2 + iK_3 = \guntrois\bigl(e_2+ie_3,[e_1,e_0]\bigr). 
\ee 
Here, $\tau=Q+ie^{-P}$ is the modular parameter which characterizes the conformal geometry of the $\Tbb^2$  orbits, while $[\, \cdot\,,\, \cdot\,]$ denotes the commutator of two vector fields on~$\Mcal$. 


\subsubsection{Fluid variables}

\bse
Throughout, the matter flow is assumed to enjoy the same $\Tbb^2$~symmetry as the spacetime, so that the fluid unknowns depend on the variables $t,x$, only. However, all four components of the velocity vector itself may be non-vanishing. Using the orthonormal frame geometrically defined above $(e_0, e_1, e_2, e_3)$ (linked to the assumed symmetry), we decompose the \emph{fluid velocity} $u$ as $u=u^m e_m$, with frame components \((u^0,u^1,u^2,u^3)\) and we define the \emph{momentum one-form} (or material current) $\Jbb$ 
by lowering the index with the metric~$\guntrois$, namely
\bel{appendixJmatter}
\Jbb \coloneqq (J_0, J_1, J_2, J_3), 
\qquad
J_m \coloneqq \sqrt{2(\mu+p)} \, u_m, 
\ee
Hence, we have 
\bel{equa-211b}
- \Jbb \cdot \Jbb = J_0^2 - J_1^2 - J_2^2 - J_3^2 = 2 (\mu+p(\mu)). 
\ee
Since $\mu+p(\mu) >0$, this field obeys the \textbf{fluid causality inequalities}
\bel{eq:J0J} 
J_0 \leq 0,
\qquad 
- \Jbb \cdot \Jbb \coloneqq J_0^2 - J_1^2 - J_2^2 - J_3^2 \geq 0, 
\ee
namely, $\Jbb$ is either timelike or null and future-oriented, or else vanishes.
Hence, our theory allows the momentum vector to be \emph{a non-zero null vector} ---this amounts to a vanishing density $\mu=0$. This is necessary if we want our notion of (weak) solution to be \emph{stable under (weak) convergence}.

In contrast to the geometric variables, which primarily reside on the quotient manifold~$\Ncal$, the one-form $\Jbb$ is naturally defined on the full spacetime~$\Mcal$. It is useful to decompose $\Jbb$ into components that are \emph{transverse} or \emph{parallel} to the $\Tbb^2$--orbits of symmetry, namely a one-form $\Jperp$ and a complex field $\Jpar$, defined as 
\be
\Jperp \coloneqq (J_0, J_1), \qquad \Jpar \coloneqq J_2+iJ_3. 
\ee
\ese
The formulation of the Einstein--Euler system proposed here is based on the set of fluid and metric variables $(\Jbb,\Kpar,\Lbb)$ and $(\Omega, \amdeux)$. 


\subsubsection{Quadratic forms}

We now introduce nonlinear quantities that play a crucial role in our presentation. For any one-form $\Xbb$ chosen among $\Pbb, \Qbb, \Jperp$ we define the quadratic forms 
\bse
\label{eq:E0E1-T2-3lignes}
\bel{eq:E0E1-T2}
\aligned
\Ebf_0(\Xbb) & \coloneqq \frac{1}{2} (X_0^2 + X_1^2),
& \qquad
\Ebf_1(\Xbb) & \coloneqq X_0 X_1. 
\endaligned
\ee
It is convenient to also set $\Ebf_0(\Lbb)=\Ebf_0(\Pbb,\Qbb)$ and use the short-hand notation $\Ebf_0(\Xbb, \Ybb) \coloneqq \Ebf_0(\Xbb) + \Ebf_0(\Ybb)$ ---and similarly for additional arguments and for the quadratic form~$\Ebf_1$. For $\Xpar=X_2+iX_3$ chosen among $\Kpar, \Jpar$ we use an analogous notation 
\bel{equa-Ezero-etc}
\Ebf_0(\Xpar) \coloneqq \frac{1}{2}(X_2^2 + X_3^2), \qquad
\Re(\Xpar^2)=X_2^2-X_3^2, \qquad \Im(\Xpar^2)=2 X_2X_3.
\ee
We also write
\be
\Ebf_0(\Jbb) \coloneqq \Ebf_0(\Jperp) + \Ebf_0(\Jpar) = \frac{1}{2}(J_0^2 + J_1^2 + J_2^2 + J_3^2).
\ee
We collectively refer to these expressions as energy and energy-flux terms.
\ese
\bse 
On the other hand, we will also use the following quadratic forms in $\Xbb, \Ybb$, chosen among $\Pbb, \Qbb, \Jperp$, 
\be 
- \Xbb \cdot \Ybb \coloneqq X_0 Y_0 - X_1 Y_1, 
\qquad 
\Xbb \wedge \Ybb \coloneqq X_0 Y_1 - X_1 Y_0.
\ee 
Finally, for the material current $\Jbb$ and its orthogonal part $\Jperp$, we denote
\bel{equa-2122}
- \Jperp \cdot \Jperp = J_0^2 - J_1^2 \geq - \Jbb \cdot \Jbb 
= J_0^2 - J_1^2 - J_2^2 - J_3^2 \geq 0.
\ee 
\ese


\subsubsection{Divergence and curl operators}

\bse
\label{equa-div-curl}
To proceed, we need a notation for several operators defined on $(\Mcal, \guntrois)$. We consider one-form fields $\Xbb= (X_0,X_1)$ on the quotient manifold~$\Ncal$,  and we introduce their \emph{divergence with respect to the quotient metric}~$g_{\Ncal}$ 
\bel{equa-divN} 
\divN(\Xbb)
\coloneqq \amdeux^{-1} \Omega^{-2}  \bigl(-( \amdeux \Omega X_0)_t + (\Omega X_1)_x\bigr).  
\ee
We emphasize that the divergence operator depends only on the metric coefficients $(\Omega,\amdeux)$. This formula differs slightly from the divergence with respect to the four-dimensional metric. For simplicity, we use the same notation $\Xbb$ (by a mild abuse) for the one-form field on~$\Mcal$ with the same components $X_0,X_1$ and $X_2=X_3=0$ in the Vielbein basis. The \emph{divergence with respect to the full spacetime metric} reads 
\bel{equa-def-div13}
\divuntrois \Xbb \coloneqq |t|^{-1} \divN(|t| \, \Xbb), 
\ee
where as usual we implicitly identify scalar fields on~$\Ncal$ with $\Tbb^2$-invariant scalar fields on~$\Mcal$. We recognize here the factor of $|t|$ due to the $\Tbb^2$ directions, with an absolute value to account for both time orientations.
\ese

We identify vector fields and one-forms on the quotient manifold~$\Ncal$ using the metric~$g_{\Ncal}$. This identification maps a one-form $\Xbb$ with components $(X_0,X_1)$ in the inverse Vielbein basis to a vector with the same components in the Vielbein basis ---up to a sign, namely ${X^0=-X_0}$ and ${X^1=X_1}$. Hence, the divergence operator can be applied equally well to vector fields or one-forms. 
\bse\label{equa-def-curl-0}
By differentiation of the one-form field $\Xbb$, we associate to it the two-form field $d\Xbb$ which has a single non-vanishing component, namely 
\be
d\Xbb= \bigl( \curl_\Ncal  \Xbb \bigr) \, e^0\wedge e^1, 
\ee
which leads us to define $\curl_\Ncal (\Xbb)= (d\Xbb)_{01}$. This is equivalent to applying the Hodge star operator ($X_0\leftrightarrow -X_1$) before taking the divergence, and yields the \emph{curl operator:}
\be
\curl_\Ncal\Xbb 
\coloneqq \amdeux^{-1} \Omega^{-2}  \bigl(( \amdeux \Omega X_1)_t - (\Omega X_0)_x\bigr),  
\ee
which, by analogy with the expression of the divergence above, leads us to set 
\bel{equa-def-curl}
\curluntrois \Xbb \coloneqq |t|^{-1} \curl_\Ncal( |t| \, \Xbb) .
\ee
In the following, it will be convenient to use the spacetime notation $\divuntrois $ and $\curluntrois$. 
\ese
%


\subsubsection{Ricci and matter components in the adapted frame}

We denote by $(\eta_{mn})$ the Minkowski metric with signature $(-1, 1, 1, 1)$. In the adapted frame introduced above, the Einstein equations take the following form for $m,n = 0, 1, 2, 3$: 
\bel{eq:Einsframe}
\aligned
  e_m^\alpha R_{\alpha\beta} e_n^\beta 
& = 
e_m^\alpha T_{\alpha\beta} e_n^\beta - \frac{1}{4-2}\bigl(T_{\alpha\beta} g^{(4)\alpha\beta}\bigr)  \eta_{mn} 
\\
& = (\mu+p) \, u_m u_n + \frac{1}{2} (\mu-p) \eta_{mn}, 
\endaligned
\ee
where $R_{\alpha\beta}$ denotes the components of the Ricci curvature $\Ric$. Rewriting the first term $(\mu+p) \, u_m u_n$ in terms of the vector~$\Jbb$, we obtain $\frac{1}{2} J_m J_n$, which \emph{does not} explicitly depend on the equation of state, whereas the second term does. For the second term, we regard $(\mu-p(\mu))/(\mu+p(\mu))$ as a bounded function $q\colon (0,+ \infty) \to (0,1)$ of $\mu$
\bel{eq:defq} 
q \coloneqq {\mu-p(\mu) \over \mu+p(\mu)} \in (0,1), \qquad \mu > 0,
\ee
referred to as the \textbf{pressure ratio}; in our first-order formulation, we find it convenient to denote it by 
\be
 q_\Jbb = q(\mu).
\ee
We point out that the dependence of $q$ with respect to $\mu$ (namely, through $-\Jbb \cdot \Jbb$) is essentially irrelevant at large densities. Namely, thanks to our mild assumptions in \eqref{hyperbolic-eos}--\eqref{equa-asym}, $p(\mu)/\mu$ is bounded away from the squared speed of light (metric-weighted to~$1$), hence the function $q_\Jbb \in (0,1)$ is bounded away from zero: for some constant~$q_{\min}$,
\bel{equa-qminqmax-1}
0 < q_{\min} \leq q(\mu)  \quad \text{ for all } \mu\in(0,+\infty) .
\ee
For large densities, $p(\mu)/\mu$ is also assumed to be bounded away from zero, which implies that there exists a constant $q_\infty^+$ such that
\bel{equa-qminqmax-2}
q(\mu) \leq q_\infty^+ < 1 \quad \text{for sufficiently large } \mu.
\ee
Finally, in the small density limit $\mu\to 0$, the function has a limit denoted by $q(0)$, with
\be
q(0) = \lim_{\mu \to 0} q(\mu) = \frac{1-p'(0)}{1+p'(0)} \in (0,1] .
\ee
In particular, for an equation of state with $p'(0)=0$ (e.g., in the polytropic case), one has $q(0)=1$.


Moreover, we define the (total) \textbf{energy-momentum tensor} 
\be
\Mbf=\Mbf(\Jbb, \Kpar, \Lbb) = \Mbf(\Jperp, \Jpar, \Kpar, \Pbb, \Qbb), 
\ee
which incorporates both geometric and matter contributions, via its components\footnote{Observe that $\Mbf^{01}(\Jbb, \Kpar,\Lbb)$ does not depend on~$\Kpar$, hence this argument will be freely omitted.}
\bel{eq:T2-Mdef-0}
\begin{alignedat}{2}
\Mbf^{00} & \coloneqq \Ebf_0(\Jbb, \Kpar, \Lbb) + {q_\Jbb \over 2}  \, (- \Jbb \cdot \Jbb)
&& = \Ebf_0(\Jperp, \Pbb, \Qbb) + \Ebf_0(\Jpar, \Kpar) + {q_\Jbb \over 2}  \, (- \Jbb \cdot \Jbb), 
\\
\Mbf^{01} & \coloneqq - \Ebf_1(\Jperp, \Lbb) 
&& = - \Ebf_1(\Jperp) - \Ebf_1(\Pbb) - \Ebf_1(\Qbb),
\\
\Mbf^{11} & \coloneqq \Ebf_0(\Jperp, \Lbb) - \Ebf_0(\Jpar, \Kpar) - {q_\Jbb \over 2}  \, (- \Jbb \cdot \Jbb)
&& = \Ebf_0(\Jperp, \Pbb, \Qbb) - \Ebf_0(\Jpar, \Kpar) - {q_\Jbb \over 2}  \, (- \Jbb \cdot \Jbb),
\end{alignedat}
\ee
together with $\Mbf^{10} = \Mbf^{01}$. Observe that $\Mbf^{00} \geq 0$, and these functions obey the inequalities
\bel{eq:T2-Mdef-1}
2 \, |\Mbf^{01}|\leq \Mbf^{00}+\Mbf^{11}, \qquad \quad - \Ebf_0(\Kpar)\leq \Mbf^{11}\leq \Mbf^{00}.
\ee 
To write the Euler equations we will also need the notation
\bel{eq:T2-Euler-perp2-def0}
\aligned
\Msource(\Jbb, \Kpar, \Lbb)
& \coloneqq \Ebf_0(\Jpar) + 5 \, \Ebf_0(\Kpar)
- \Pbb \cdot \Pbb -  \Qbb \cdot \Qbb - \Jperp \cdot \Jperp 
+ {q_\Jbb \over 2}  \, \Jbb \cdot \Jbb, 
\endaligned
\ee
which obeys the inequality (proven in \autoref{appendix=D-1})
\be
\abs{\Msource}\leq 5 \, \Mbf^{00}.
\ee
 
\begin{remark}
  The components of the stress energy tensor in the orthonormal frame~$(e_m)_{m=0,1,2,3}$ are given by $T^{mn}=\frac{1}{2}J^m J^n + (1/4) (1-q_\Jbb)(-\Jbb\cdot\Jbb) g^{mn}$, and in particular for $m,n=0,1$ they are restrictions of $(1/2)\Mbf^{mn}$ to its fluid part:
  \bel{eq:T2-Texpr}
  \begin{alignedat}{2}
    T^{00} & = \frac{1}{2} \Mbf^{00}(\Jbb, 0, 0) = \frac{1}{2} J_0^2 - \frac{1-q_\Jbb}{4} (- \Jbb \cdot \Jbb) , \qquad
    & \Mbf^{00}(\Jbb, \Kpar, \Lbb) & = 2 T^{00} + \Ebf_0(\Kpar, \Lbb) ,
    \\
    T^{01} & = \frac{1}{2} \Mbf^{01}(\Jbb, 0, 0) = - \frac{1}{2} J_0 J_1 , \qquad
    & \Mbf^{01}(\Jbb, \Kpar, \Lbb) & = 2 T^{01} - \Ebf_1(\Lbb) ,
    \\
    T^{11} & = \frac{1}{2} \Mbf^{11}(\Jbb, 0, 0) = \frac{1}{2} J_1^2 + \frac{1-q_\Jbb}{4} (- \Jbb \cdot \Jbb) , \qquad
    & \Mbf^{11}(\Jbb, \Kpar, \Lbb) & = 2 T^{11} + \Ebf_0(\Lbb) - \Ebf_0(\Kpar) .
  \end{alignedat}
  \ee
  Throughout this section, the fluid contributions in all equations could be replaced by components of the stress-tensor.  This provides an avenue to extend our work to other matter models.
\end{remark}


\subsection{A first-order formulation} 
\label{section=2-3}

\subsubsection{Definition}

Relying on the proposed variables in \autoref{section=2-2}, we now exhibit the basic algebraic and differential structure of the Einstein and Euler equations. Since the fluid mass-energy density may vanish in part of the spacetime, it is essential for our theory to include such vacuum regions. 

Specifically, we claim that the field equations take the form of the following coupled system of partial differential equations\footnote{A single equation is not included in this list (namely \eqref{eq:waveconffac}, below) since it is a consequence of the other equations \emph{and} will not be part of our first-order formulation.}:
\begin{subequations}\label{eq:T2-1234-def}
\begin{align}
\opL_1 & \coloneqq
\divuntrois(\Pbb)
-  \Qbb \cdot \Qbb + {1 \over 2} \Re(\Kpar^2 + \Jpar^2) =0
&
\opL_2 & \coloneqq \curluntrois (|t|^{-1} \Pbb)
= 0,
\label{eq:T2-1234-def-a}
\\
\opL_3 & \coloneqq \divuntrois(\Qbb)
+ \Pbb \cdot \Qbb + {1 \over 2} \Im(\Kpar^2 + \Jpar^2) = 0,
&
\opL_4 & \coloneqq \curluntrois(|t|^{-1} \Qbb)
- |t|^{-1} \Pbb \wedge \Qbb = 0 ,
\label{eq:T2-1234-def-b}
\end{align}
\end{subequations}
\vspace{-.7cm}
\begin{subequations}
\label{eq:T2-Euler-perp-def}
\begin{align}
\opL_5 & \coloneqq \divuntrois \bigl( \Omega \, \Mbf^{0 \bullet}(\Jbb, \Kpar, \Lbb) \bigr)
+ {1 \over 2\, t} \Msource(\Jbb, \Kpar, \Lbb) = 0,
\label{eq:T2-Euler-perp-def-a}
\\
\opL_6 & \coloneqq \divuntrois \bigl( \Omega \, \Mbf^{1 \bullet}(\Jbb, \Kpar, \Lbb) \bigr)
= 0,
\label{eq:T2-Euler-perp-def-b}
\\
\opL_7 & \coloneqq \divuntrois \bigl(|t|^{1/2} J_2 \Jperp\bigr)
+ \curluntrois  \bigl(|t|^{1/2} K_2 \Pbb / 2 \bigr) =0,
\label{eq:T2-fluid23-def-a}
\\
\opL_8 & \coloneqq \divuntrois  \bigl(|t|^{1/2} J_3 \Jperp\bigr)
+ \curluntrois  \bigl(|t|^{1/2} ( K_2 \Qbb - K_3 \Pbb / 2)\bigr) = 0.
\label{eq:T2-fluid23-def-b}
\end{align}
\end{subequations}
\vspace{-.7cm}
\begin{subequations}\label{eq:T2-all-suite-def}
\begin{align}
\opL_9 & \coloneqq (\amdeux)_t
 - {t\over 2}(\Mbf^{00}- \Mbf^{11})(\Jbb, \Kpar, \Lbb)\,\Omega^2 \amdeux = 0, 
\label{eq:T2-8-def}
\\
\opL_{10} & \coloneqq (\log\Omega)_t
 + \frac{1}{4t} - {t \over 2}\Mbf^{11}(\Jbb, \Kpar, \Lbb)\,\Omega^2 =0,
\label{eq:evollambda-def}   
\\
\opL_{11} & \coloneqq |t|^{-3/2} \bigl(  |t|^{3/2} K_2 \bigr)_t
 - \bigl( J_1 J_2 -  P_0 K_2/2 \bigr)\Omega = 0,
\label{eq:T2-9101112-evol-def-a}
\\
\opL_{12} & \coloneqq |t|^{-3/2} \bigl( |t|^{3/2} K_3 \bigr)_t
 - \bigl( J_1 J_3 + P_0 K_3/2 - Q_0 K_2 \bigr)\Omega, 
\label{eq:T2-9101112-evol-def-b} 
\end{align}
\end{subequations}
\vspace{-.7cm}
\begin{subequations}
\label{eq:theconstraints-def}
\begin{align}
\opL_{13} & \coloneqq (\log\Omega)_x
 + {t \over 2 } \, \Mbf^{01}(\Jperp, \Lbb)\, \Omega^2 \amdeux = 0, 
\label{eq:T2-567-def}
\\
\opL_{14} & \coloneqq ( K_2 )_x
 - \bigl( J_0 J_2 - P_1 K_2/2 \bigr) \, \Omega \amdeux = 0, 
\label{eq:221a}
\\
\opL_{15} & \coloneqq ( K_3 )_x
 - \bigl( J_0 J_3  - Q_1 K_2 + P_1 K_3/2 \bigr) \, \Omega \amdeux = 0. 
\label{eq:221b}
\end{align}
\end{subequations}
We formalize our standpoint as follows. Observe that the vacuum condition $\mu=0$ is equivalent to the condition $\Jbb \cdot\Jbb=0$, namely the fluid momentum vector is a null vector. 

\begin{definition}
\label{def-quadr}
With the notation above, the \textbf{first-order formulation} of the Einstein--Euler system (in $\Tbb^2$ symmetry on the torus $\Tbb^3$) consists of the set of evolution equations~\eqref{eq:T2-1234-def}, \eqref{eq:T2-Euler-perp-def}, and \eqref{eq:T2-all-suite-def}, together with the constraint equations \eqref{eq:theconstraints-def}, expressed in terms of the fluid variables $\Jbb=(\Jperp, \Jpar)$, the geometry variables $(\Kpar, \Lbb)$ with $\Lbb=\Pbb+i\,\Qbb$, the lapse ${\Omega}$, and the conformal length density~$\amdeux$, satisfying at each relevant time the positivity conditions    
\bel{eq:causal}
\amdeux > 0, \quad \Omega > 0, 
\qquad
J_0 \leq 0,
\qquad 
- \Jbb \cdot \Jbb = J_0^2 - J_1^2 - J_2^2 - J_3^2 \geq 0, 
\ee 
which are referred to as the (geometry-matter) \textbf{causality inequalities}, 
as well as the \textbf{periodicity constraints}
\bel{equa-pericons}
\int_{\Tbb^3} P_1 \, \dVtrois = \int_{\Tbb^3} e^{-P} Q_1 \, \dVtrois = 0 
\ee 
(which, as pointed out around \eqref{equa-P1Q1}, propagate from any Cauchy hypersurface).
\end{definition}

\begin{remark}
By integrating the constraints~\eqref{eq:theconstraints-def}, one observes the periodicity conditions
\bel{equa-peri-OmK}
\int \Mbf^{01}(\Jbb, \Lbb) \, \dVtrois
= \int e^{P/2} J_2 J_0 \dVtrois
= \int \bigl( Q e^{P/2} J_2 + e^{-P/2} J_3 \bigr) J_0 \dVtrois = 0. 
\ee 
\end{remark}

We also sometimes use the terminology \textbf{\JKL\ formulation} of the Einstein--Euler system to denote our first-order formulation, also written as 
\bel{equa-JKLoperators}
\opL_a(\Jbb, \Kpar, \Lbb, \Omega, \ell) = 0, \qquad a = 1, \ldots, 15.
\ee


\subsubsection{Structural properties}

The proof of the following properties is postponed to~\autoref{appendix=C-1}. We emphasize that our formulation is meaningful, in particular, in vacuum regions. 

\begin{proposition}[Structure of the first-order formulation]
\label{theo-struct}
Consider the Einstein--Euler system under the assumption of $\Tbb^2$ symmetry on $\Tbb^3$ for a general equation of state satisfying~\eqref{hyperbolic-eos}. After expressing the spacetime metric $\guntrois$ in areal gauge~\eqref{metric:areal}, consider the metric unknowns~$\amdeux, \Omega, P, Q, G, H$, together with the fluid unknowns $\mu, u$ arising in the stress-energy tensor \eqref{eq:45}. Assume furthermore that the solutions under consideration are sufficiently regular.
\begin{enumerate}

\item \emph{Hyperbolicity.}

\bei 

\item \emph{Under the non-vacuum condition $- \Jbb \cdot \Jbb>0$,} in terms of the unknowns $(\Jbb, \Kpar, \Lbb)$ defined in~\eqref{eq:definevar} and $(\amdeux, \Omega)$ and subjected to the causality inequalities~\eqref{eq:causal}, the evolution system consisting of~\eqref{eq:T2-1234-def}, \eqref{eq:T2-Euler-perp-def}, and~\eqref{eq:T2-all-suite-def} is a {first-order} {hyperbolic} system of $12$ nonlinear balance laws, which is supplemented with the $3$ constraints~\eqref{eq:theconstraints-def}.  (In addition, the wave equation for the lapse~\eqref{eq:waveconffac}, below, follows from this set of equations and, in vacuum, it is a consequence of the geometry equations alone.)

\item If the sound speed $p'(0)>0$ is positive, the system remains hyperbolic even in vacuum.

\item On the other hand, if the sound speed $p'(0)=0$ vanishes, the system fails to be hyperbolic in the vacuum $- \Jbb \cdot \Jbb=0$, and it is then only weakly hyperbolic.

\eei 

\item \emph{Nonlinearities.} 

\bei 

\item Away from vacuum, the double characteristic family associated with the speed
${-\amdeux^{-1}\frac{J_1}{J_0}}$ is \emph{linearly degenerate}. Moreover, the two acoustic characteristic families associated with the speeds \(\xi_\pm\) are \emph{genuinely nonlinear} in the sense of Lax if and only if the pressure law satisfies the convexity condition~\eqref{hyperbolic-eos-3}.

\item The metric evolution equations~\eqref{eq:T2-1234-def} and~\eqref{eq:T2-all-suite-def} and the fluid evolution equations~\eqref{eq:T2-Euler-perp-def} contain only (at most) quadratic terms in $(\Jbb, \Kpar, \Lbb)$, with the exception of $q$-dependent terms which are sub-quadratic.

\item With the exception of the lapse equations in \eqref{eq:T2-567-def} and \eqref{eq:evollambda-def}, all quadratic source terms involving only the variables $\Jperp, \Lbb$ are null forms ---while this is not the case for nonlinearities involving the variables $\Jpar, \Kpar$.

\eei  

\item \emph{Equivalence.}  

\bei 

\item Given a solution $(\Jbb, \Kpar, \Lbb)$ and $(\amdeux, \Omega)$ to the equations~\eqref{eq:T2-1234-def}, \eqref{eq:T2-Euler-perp-def}, and~\eqref{eq:T2-all-suite-def}, and provided the conditions~\eqref{eq:theconstraints-def} hold on a slice of constant area $t_0$, one recovers a unique solution\footnote{up to irrelevant shifts of $P,Q$ by additive constants and $G,H$ by functions of~$x$} to the Einstein--Euler equations:  the metric coefficients $P,Q$ are determined by direct integration from~\eqref{eq:definevar-01}, while $G,H$ are immediately recovered from~\eqref{eq:definevar-01}. 

\item Next, the fluid variables $\mu$ and $u$ are determined from~\eqref{appendixJmatter} in all non-vacuum regions, as well as the constraints~\eqref{eq:theconstraints-def}. In vacuum regions, one simply has $\mu=0$ and the velocity field $u$ is ill-defined. (In addition, the wave equation~\eqref{eq:waveconffac}, stated below, holds for all areal times.) 

\eei 
\item \emph{Propagation property.} Once the equations~\eqref{eq:theconstraints-def} on $\log\Omega, K_2, K_3$ are satisfied on an initial hypersurface, they also hold for all future times. 

\end{enumerate}
\end{proposition}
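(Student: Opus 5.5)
The proof splits into four largely independent parts, in the order of the statement; everything is computed for smooth solutions in the adapted frame $(e_0,e_1,e_2,e_3)$ of \eqref{equa-metric-inverse}.

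\emph{Step 1: deriving the system and checking the structure of the nonlinearities.} Starting from the metric \eqref{metric:areal}, I would compute the frame components of the spin connection (as collected in \autoref{appendix=A}) and then the frame Ricci components $e_m^\alpha R_{\alpha\beta}e_n^\beta$. These are set equal to $\frac12 J_mJ_n+\frac14(1-q_\Jbb)(-\Jbb\cdot\Jbb)\eta_{mn}$, following \eqref{eq:Einsframe}.
\begin{itemize}
\item The $22$, $33$ and $23$ components (combined with the $00+11$ trace, which fixes $\amdeux_t$) give \eqref{eq:T2-1234-def-a}--\eqref{eq:T2-1234-def-b} and \eqref{eq:T2-8-def}.
\item The identities $\opL_2=\opL_4=0$ are just $d\,dP=0$ and $d\,dQ=0$, rewritten in the variables \eqref{eq:definevar-01}.
\item The $01$ and $00-11$ components give the lapse equations \eqref{eq:T2-567-def} and \eqref{eq:evollambda-def}.
\item The $02$, $03$ (resp. $12$, $13$) components give the $x$-constraints \eqref{eq:221a}--\eqref{eq:221b} (resp. the $t$-evolutions of $K_2,K_3$).
\end{itemize}
The Euler equations \eqref{eq:T2-Euler-perp-def} follow from $\divuntrois T=0$ in the frame. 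For $m=0,1$ I would add the divergence of the geometric part of $\Mbf$, using the wave equations for $\Pbb,\Qbb$ and the twist equations, so that the combination $\Omega\Mbf^{m\bullet}$ appears in divergence form. For $m=2,3$ I would weight by $|t|^{1/2}$ and absorb the twist terms into a curl.

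Once the equations are in this form, the claims about the nonlinearities are read off by inspection. Everything is quadratic except the $q_\Jbb$ terms, which are sub-quadratic because $q$ is bounded. The $\Jperp,\Lbb$-only sources are $\Qbb\cdot\Qbb$, $\Pbb\cdot\Qbb$ and $\Pbb\wedge\Qbb$; these are the null forms $Q_0$ and $Q_{01}$.

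\emph{Step 2: hyperbolicity.} For the geometric block, I would rewrite $\opL_1,\dots,\opL_4$ as symmetric systems in $(X_0,X_1)$ with characteristic speeds $\pm\amdeux^{-1}$. The equations for $\amdeux,\Omega,\Kpar$ are ODEs in $t$, with speed $0$. For the fluid block, I would first use the geometric equations to eliminate the geometric contributions from $\Omega\Mbf^{m\bullet}$. That leaves a $4\times4$ system in $\Jbb$: two transport equations (from $\opL_7$, $\opL_8$) with speed $-\amdeux^{-1}J_1/J_0$, and the $2\times2$ relativistic Euler block $\divuntrois(\Omega T^{m\bullet})$.

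I would then compute the eigenvalues $\xi_\pm$ via the sound speed $\sqrt{p'(\mu)}$ in the rest frame, boosted by the velocity $J_1/J_0$. From this:
\begin{itemize}
\item Real, distinct eigenvalues and a complete set of eigenvectors hold when $p'(\mu)>0$, including at $\mu=0$ if $p'(0)>0$.
\item If $p'(0)=0$, the eigenvalues $\xi_\pm$ coalesce with the transport speed at $\mu=0$ and a Jordan block appears, which gives weak hyperbolicity only.
\item Linear degeneracy of the transport family holds because $J_1/J_0$ is constant along its eigenvectors, which are tangent to the $\Jpar$ directions.
\item Genuine nonlinearity is the standard computation $\nabla\xi_\pm\cdot r_\pm\neq0$, which reduces to the expression \eqref{hyperbolic-eos-3}.
\end{itemize}

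\emph{Step 3: equivalence and constraint propagation.} Given a solution, I would recover the metric coefficients as follows. The curl equations $\opL_2=0$ and $\opL_4=0$ are exactly the closedness of $\Omega(P_0\,dt+\amdeux P_1\,dx)$ and of its $Q$-analogue. Together with the periodicity constraints \eqref{equa-pericons}, this makes $P$ and $Q$ integrable on $\Tbb^3$. Then $G_t$ and $H_t$ are obtained from $K_2,K_3$, and $\mu,u$ follow from \eqref{equa-211b} when $-\Jbb\cdot\Jbb>0$. Reversing Step 1 shows that all Einstein components hold.

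For propagation, I would set $C_{13}=\opL_{13}$, $C_{14}=\opL_{14}$, $C_{15}=\opL_{15}$ and compute $\partial_t C_a$ using the evolution equations. The expectation is a linear homogeneous ODE system in $t$, pointwise in $x$: $\partial_t C_a = \sum_b A_{ab}C_b$. For example, $\partial_t\opL_{14}$ should reduce to $\partial_x\opL_{11}$ plus $\opL_7$ plus multiples of $C_{14}$. Uniqueness of the ODE then gives $C\equiv0$ when the constraints vanish initially. The wave equation for the lapse, \eqref{eq:waveconffac}, follows by cross-differentiating \eqref{eq:evollambda-def} and \eqref{eq:T2-567-def}.

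\emph{Main obstacle.} I expect the hardest part to be Step 1 for the $\Jpar,\Kpar$ sector: getting exactly the weights $|t|^{1/2}$ and $|t|^{3/2}$, and showing that no cubic terms survive. The approach I would try first is to write $\divuntrois T^{2\bullet}=0$ using the commutator $[e_1,e_0]$, which encodes $\Kpar$. I would then verify that the $\Kpar$-dependent connection terms combine with $J_0,J_1$ into the curl expression in \eqref{eq:T2-fluid23-def-a}, checking the exponents by homogeneity in $|t|$ and in $e^{P/2}$.
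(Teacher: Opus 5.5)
Your overall route is the paper's: frame Ricci components, a principal-part analysis of the fluid block, and constraint propagation by commuting $t$- and $x$-derivatives and invoking the Euler equations. Your expectation for $\partial_t\opL_{14}$ is exactly the paper's identity $(|t|^{3/2}\opL_{14})_t=(|t|^{3/2}\opL_{11})_x$ modulo a multiple of $\opL_7$.

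There is, however, a concrete error about the second-order lapse equation, and it breaks the equivalence part. Cross-differentiating \eqref{eq:evollambda-def} and \eqref{eq:T2-567-def} does \emph{not} give \eqref{eq:waveconffac}. The compatibility condition $(\log\Omega)_{tx}=(\log\Omega)_{xt}$ is precisely $\opL_6=0$; your own propagation computation shows this, since $(2\opL_{13})_t=t\amdeux\Omega^2\opL_6+2(\opL_{10})_x$.

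The root is the bookkeeping in Step~1. $R_{00}-R_{11}$ is, up to sign, the trace over the quotient and contains the Gauss curvature of $g_\Ncal$, i.e.\ second derivatives of $\log\Omega$. It therefore cannot produce the first-order equation \eqref{eq:evollambda-def}. In the paper, $\amdeux_t$ comes from $R_{22}+R_{33}$, $(\log\Omega)_t$ from $R_{00}+R_{11}-R_{22}-R_{33}$, and \eqref{eq:waveconffac} from $R_{00}-R_{11}+R_{22}+R_{33}$. With your assignment, ``reversing Step~1'' never recovers this last Einstein component.

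The missing identity is the following.
\begin{itemize}
\item After substituting $\opL_9,\opL_{10},\opL_{13}$, the one-form on the left of \eqref{eq:waveconffac} has components $\bigl(\tfrac12\Omega\Mbf^{00}-\tfrac{1}{4t^2\Omega},\,-\tfrac12\Omega\Mbf^{01}\bigr)$. So it equals $-\tfrac12\Omega\Mbf^{0\bullet}$ up to an extra $0$-component.
\item By $\opL_9$, the divergence of that extra term is $\tfrac{1}{8t}(\Mbf^{00}-\Mbf^{11})-\tfrac{1}{4t^3\Omega^2}$.
\item Since $\Msource=\Mwave-\tfrac12(\Mbf^{00}-\Mbf^{11})$, equation \eqref{eq:waveconffac} is equivalent to $\opL_5=0$.
\end{itemize}
This is what the paper means by ``the first fluid equation coincides with the wave equation''. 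It gives the parenthetical claim in part~1 and the remaining Einstein component in part~3. In vacuum ($\Jbb=0$) it reduces the claim to showing that $\opL_5$ follows from the geometric equations, which your Step~1 energy computation does provide.

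There are also some smaller points in Step~2.
\begin{itemize}
\item Boosting the rest-frame sound speed $\pm\sqrt{p'(\mu)}$ by $J_1/J_0$ gives the acoustic speeds only when $\Jpar=0$. With motion along the orbits an $x$-boost does not reach the rest frame, and the correct speeds are the paper's $\xi_\pm$, which depend on $J_2^2+J_3^2$. You can avoid this by computing $\det(\xi A_0+\amdeux^{-1}A_1)$ directly, as the paper does. Alternatively, work in the variables $(\Sigma,\Jhatpar)$: there the principal part is block diagonal and the transport eigenvectors are the $\widehat J_m$-directions at fixed $\Sigma$, so linear degeneracy is immediate.
\item At a vacuum state in the paper's sense, i.e.\ a nonzero null $\Jbb$, the formula gives $\xi_+=\xi_-=-\amdeux^{-1}J_1/J_0$ whatever $p'(0)$ is. Your ``distinct eigenvalues at $\mu=0$ when $p'(0)>0$'' can therefore only hold for the degree-zero speeds in the limit $\Jbb\to0$ along timelike directions. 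The paper's one-line justification of 1(b) is no more detailed than yours.
\item The obstacle you anticipate with the weights $|t|^{1/2}$, $|t|^{3/2}$ disappears on the paper's route. There $\opL_7,\opL_8$ are exactly the compatibility conditions $(|t|^{3/2}K_m)_{tx}=(|t|^{3/2}K_m)_{xt}$ computed from $\opL_{11},\opL_{12},\opL_{14},\opL_{15}$, so no direct manipulation of $\divuntrois T^{2\bullet}$ is needed.
\end{itemize}
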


The example of isothermal flows is considered in \autoref{appendix=B}. 


\subsection{Derivation of the \JKL\ formulation }
\label{section=2-4}

\subsubsection{Einstein's evolution equations}

We proceed and compute suitably chosen linear combinations of the $10$ frame components~\eqref{eq:Einsframe} in terms of the geometry variables~\eqref{eq:definevar}. Together with two compatibility equations already stated in~\eqref{eq:T2-1234-def}, we obtain a total of $12$ equations, which we  present in the following remarkable form. 
\bei

\item \emph{Evolution of the essential geometry.} Using our notation, after elementary calculations, the evolution equations for $\Lbb=\Pbb+i\,\Qbb$ with $\Pbb = (P_0, P_1)$ and $\Qbb = (Q_0, Q_1)$ read \eqref{eq:T2-1234-def}, as stated previously, where the operators were defined in~\eqref{equa-div-curl}. The two divergence equations are the components $R_{22} - R_{33}$ and $2\, R_{23}$ of the Einstein equations, while the curl equations are compatibility conditions satisfied by the derivatives of $P$ and~$Q$ and are needed to reach a closed evolution system for the first-order unknowns.

\item \emph{Evolution of the conformal length density.} Notably, in our formulation, only the equations \emph{related to the quotient geometry} explicitly depend on the equation of state through the definition~\eqref{eq:T2-Mdef-0} of the tensor $\Mbf$. The component $R_{22}+R_{33}$ of the Einstein equations is found to be as stated\footnote{Since $\del_t = \Omega e_0$ and $\del_x=\Omega\amdeux e_1 + G\del_y+H\del_z$, the differential equations in \eqref{eq:T2-all-suite-def}--\eqref{eq:theconstraints-def} could equivalently be expressed in terms of the frame derivatives $e_0$ and $e_1$. We prefer not to adopt this notation here.} in \eqref{eq:T2-8-def}. The function $\amdeux$ determines the conformal geometry on the quotient manifold $\Ncal = \Mcal/\Tbb^2$, and will play a central role when analyzing the global foliation. In \autoref{section=3}, we will see that the equation~\eqref{eq:T2-8-def}  stated for $\amdeux$ determines the volume of spacelike slices (up to the multiplicative factor $|t| \, \Omega$).

\item \emph{Evolution of the lapse.} The metric coefficient $\Omega$ is nothing but the lapse function in the metric decomposition~\eqref{metric:areal}, and its time derivative is given by the Ricci component $R_{00}+R_{11}-R_{22}-R_{33}$ of the Einstein equations, as stated in \eqref{eq:evollambda-def}. 

\item \emph{Evolution of the twists.} The components $R_{12}$ and $R_{13}$ provide us with the time derivatives of the twist coefficients $K_2, K_3$, namely the differential equations \eqref{eq:T2-9101112-evol-def-a}--\eqref{eq:T2-9101112-evol-def-b}. 

\item \emph{Additional equation for the geometry.} Finally, the component $R_{00}-R_{11}+R_{22}+R_{33}$ provides us with another geometric equation (for the lapse), but its statement is postponed to~\eqref{eq:waveconffac}, below. 
\eei 


\subsubsection{Einstein's constraint equations}

We derive separately the remaining equations for the geometry, which are automatically satisfied once they hold on any given hypersurface of constant areal time.
\bei 

\item \emph{Constraint on the lapse.} While time derivatives of $\amdeux$ and $\Omega$ obey the equations~\eqref{eq:T2-8-def} and~\eqref{eq:evollambda-def}, respectively, only the space derivative of $\Omega$ obeys an equation. The component $R_{01}$ of the Einstein equation relates this derivative to an energy flux term, as stated in \eqref{eq:T2-567-def}. 

\item \emph{Constraints on the twists.} The components $R_{02}$ and $R_{03}$ of the Einstein equations give the space derivatives of the twists as stated in \eqref{eq:221a}--\eqref{eq:221b}. 
\eei 


\subsubsection{The missing equation: wave equation for the lapse}

The final equation, now stated, will play a distinct role in our theory of weak solutions, as we will explain in the course of our analysis. The component $R_{00}-R_{11}+R_{22}+R_{33}$ provides us with the last Einstein equation, which is a wave equation for the lapse and will be used in the derivation of the Euler equations, below. It reads\footnote{We state~\eqref{eq:waveconffac} in a form that will make sense for weak solutions, namely when $(\log \Omega)_t$ and ${\amdeux_t \over \amdeux}$ and ${\Omega_x \over \amdeux}$ are solely defined as integrable functions, while $\amdeux$ denotes the conformal length density.}
\bel{eq:waveconffac}
\aligned
& \divuntrois \Bigl( t^{-1}  \Omega^{-1} (\log(\Omega \amdeux) )_t, \ t^{-1}  \Omega^{-1} \amdeux^{-1} (\log\Omega)_x \Bigr) = - \frac{1}{4t^3} \Omega^{-2} +  \frac{1}{4t} \, \Mwave(\Jbb, \Kpar, \Lbb),
\endaligned
\ee
with 
\bel{eq:waveconffac-2}
\aligned
\Mwave(\Jbb, \Kpar, \Lbb)
& \coloneqq 6 \, \Ebf_0(\Kpar) + 2 \, \Ebf_0(\Jpar) - \Pbb \cdot \Pbb - \Qbb \cdot \Qbb - \Jperp \cdot \Jperp. 
\endaligned
\ee 
Here the left-hand side is the divergence of the one-form
\be
t^{-1} \bigl(d\log\Omega + e_0(\log \amdeux)e^0\bigr), 
\ee
expressed explicitly as ${t^{-1}  \Omega^{-2} \amdeux^{-1} \bigl( (\Omega \amdeux )_t, \Omega_x \bigr)}$ in the Vielbein basis $e_0,e_1$. We separate~\eqref{eq:waveconffac} from the other evolution equations~\eqref{eq:T2-1234-def} and~\eqref{eq:T2-all-suite-def}, since it plays a different role in the sense that it is equivalent to  the first Euler equation stated in~\eqref{eq:T2-Euler-perp-def-a}, and thus is automatically satisfied once that equation holds. Observe also that, for (sufficiently regular and) \emph{vacuum} spacetimes,~\eqref{eq:waveconffac}~is a consequence of the evolution equations~\eqref{eq:T2-1234-def} and~\eqref{eq:T2-all-suite-def} (but this is not true for matter spacetimes). 


\subsubsection{Euler equations}

At this stage, we have (by including~\eqref{eq:waveconffac}) twelve equations corresponding to the ten components of the Ricci curvature tensor, supplemented with the two compatibility equations (for $\curl_\Ncal \Pbb = 0$ and $\curl_\Ncal \Qbb$) appearing in~\eqref{eq:T2-1234-def}. From this set of Einstein equations, we can immediately \emph{deduce} the Euler equations. However, in our first-order presentation, it is important to express these Euler equations explicitly, as follows. We distinguish between two sets of equations for the transverse and parallel components. We observe that the propagation equations for the transverse momentum \eqref{eq:T2-Euler-perp-def-a}--\eqref{eq:T2-Euler-perp-def-b} are \emph{coupled} to the remaining Euler equations \eqref{eq:T2-fluid23-def-a}--\eqref{eq:T2-fluid23-def-b}.
\bei

\item \emph{Evolution of the transverse momentum.} We find \eqref{eq:T2-Euler-perp-def-a}--\eqref{eq:T2-Euler-perp-def-b} stated earlier. Here, $\Mbf^{m\bullet} = (\Mbf^{m0},\Mbf^{m1})$ (for $m=0,1$) is the vector with components $(\Mbf^{m0},\Mbf^{m1})$ defined in~\eqref{eq:T2-Mdef-0}, $\divuntrois$ is the divergence of this vector in the sense~\eqref{equa-def-div13}, and the source term is given in \eqref{eq:T2-Euler-perp2-def0} and obeys the inequality $\abs{\Msource}\leq 5 \, \Mbf^{00}$ (proven in \autoref{appendix=D-1}, below). Interestingly, the first fluid equation coincides with the wave equation~\eqref{eq:waveconffac},
 whereas the second equation is nothing but the compatibility equation $\bigl(\log \Omega \bigr)_{xt} - \bigl(\log \Omega \bigr)_{tx} = 0$, in which we replaced the first-order derivatives of the lapse by energy and energy fluxes (namely, using~\eqref{eq:evollambda-def} and~\eqref{eq:T2-567-def}). 

\item \emph{Evolution of the parallel momentum.} The remaining two fluid equations are deduced from the compatibility equation $\bigl( |t|^{3/2} K_2 \bigr)_{xt} - \bigl( |t|^{3/2} K_2 \bigr)_{tx} = 0$ and the analogue for $K_3$. Namely, from~\eqref{eq:221a}--\eqref{eq:221b} and~\eqref{eq:T2-9101112-evol-def-a}--\eqref{eq:T2-9101112-evol-def-b} we obtain  \eqref{eq:T2-fluid23-def-a}--\eqref{eq:T2-fluid23-def-b}.  
Since $P_0, P_1, K_2, K_3$ and the lapse and conformal length density already have their own evolution equations, it is natural to view~\eqref{eq:T2-fluid23-def-a}--\eqref{eq:T2-fluid23-def-b} as evolution equations for the parallel momentum $(J_2,J_3)$ rather than for the orthogonal momentum~$\Jperp$. 
\eei   
  

\section{Hidden entropy structure of the Euler equations}
\label{section=3}

\subsection{Methodology}
\label{section=3-1}

\subsubsection{The notion of entropy current}

In order to define weak solutions to nonlinear hyperbolic problems \cite{Lax-1971,LeFloch-book} such as the Euler system derived in~\autoref{section=2}, it is essential to identify a sufficiently large class of entropy currents, and their associated balance laws, implied by the Euler equations, and to deduce from them suitable entropy inequalities for weak solutions. In this section, we first explain the relevance of the notion of entropy current in a geometric setting on a general foliated spacetime, and then specialize this notion to the Euler system in torus symmetry. The discussion of entropy inequalities for weak solutions is postponed to \autoref{section=4}. 

Temporarily only, let $(\Mcal,g)$ be a time-oriented $n$-dimensional Lorentzian manifold endowed with a foliation by spacelike hypersurfaces $(\Sigma_t)_{t\in I}$. (The notion of entropy current can also be defined independently of the foliation; however, the convexity condition introduced below is formulated relative to a chosen foliation.) Let us briefly discuss the notion of entropy current for a system of $N$~coupled balance laws\footnote{The notation $u$ here is not related to the fluid velocity vector field.}
\bel{eq-general}
\dive_g f(u) = s(u), 
\ee
where the unknown $u$ is a collection of tensor fields on~$\Mcal$, namely a section of some $N$-dimensional vector bundle $E\to\Mcal$, the source term $s(u)$ is a section of some $N$-dimensional vector bundle $E'\to\Mcal$, and the current $f(u)$ is a section of $T\Mcal\otimes E' \to \Mcal$.
Here, $\dive_g$ denotes the divergence operator on $(\Mcal,g)$, with the index contraction acting on the $T\Mcal$ factor of $T\Mcal\otimes E'$.

Both the source and the current depend on~$u$ without derivatives (and on the metric~$g$), and we assume that
\bel{equa-324}
\text{the time component map $u \mapsto f^0(u)$ is one-to-one on its image.}
\ee
Here, we decompose any vector field $X$ as $X= X^0 e_0 + X_\Sigma$, where $e_0$ denotes the future-pointing unit normal vector field to the foliation and $X_\Sigma$ is tangent to the foliation slices, and we extend this decomposition of $T\Mcal$ to $T\Mcal\otimes E'$ by a tensor product.

In local coordinates, the balance law~\eqref{eq-general} takes the form
\be
\nabla_\alpha f(u)^{\alpha J}=s(u)^J,
\ee
that is,
\be
\partial_\alpha f(u)^{\alpha J}
+\Gamma^\alpha{}_{\alpha\beta} f(u)^{\beta J}
+\widehat\Gamma^J{}_{\alpha K} f(u)^{\alpha K}
=s(u)^J.
\ee
Here, $\Gamma^\alpha{}_{\alpha\beta}$ are the Christoffel symbols of the spacetime metric, while $\widehat\Gamma^J{}_{\alpha K}$ denotes the connection induced on the target bundle $E'\to \Mcal$. We set
\be
\Jac_I{}^J(u)\coloneqq \frac{\partial f(u)^{0J}}{\partial u^I},
\ee
with $I$ an abstract index on the bundle $E\to\Mcal$,
and, in agreement with \eqref{equa-324}, we assume that this Jacobian is invertible. Solving for the time derivative of \(u\), we find
\be
\partial_0 u^I
=
(\Jac^{-1})_J{}^I
\Bigl(
s(u)^J
- \partial_i f(u)^{iJ}
- \Gamma^\alpha{}_{\alpha\beta} f(u)^{\beta J}
- \widehat\Gamma^J{}_{\alpha K} f(u)^{\alpha K}
\Bigr) ,
\ee
with $i=1,\dots,n-1$ an index on the leaves of the foliation.

\begin{definition}
\label{def-entropcurrent}
Consider the system of balance laws~\eqref{eq-general}.

\bei

\item A spacetime vector field $F=F(u)$ which depends on~$u$ (possibly non-covariantly) but not on its derivatives, is called an \textbf{entropy current} provided there exists a field $V=V(u)\in (E')^*$,
called the \textbf{entropy multiplier}, such that
\bel{eq-general-entropy-compatibility}
\frac{\partial F^\alpha}{\partial u^I}(u)
=
V_J(u)\,\frac{\partial f^{\alpha J}}{\partial u^I}(u),
\qquad \alpha=0,\dots,n-1,
\ee
and every sufficiently regular solution to~\eqref{eq-general} also satisfies the scalar balance law
\bel{eq-general-entropy}
\dive_g F(u) = \Sigma_F(u)
\quad \text{(regular solutions),} 
\ee
where the associated \textbf{entropy source} is given by
\bel{eq-general-entropy-production}
\Sigma_F(u)
=
V_J(u)\,s(u)^J
+
\Gamma^\alpha{}_{\alpha\beta}
\Bigl(
F^\beta(u)-V_J(u)\,f(u)^{\beta J}
\Bigr)
-
V_J(u)\,\widehat\Gamma^J{}_{\alpha K}\,f(u)^{\alpha K}.
\ee

\item In addition, an entropy current is called \textbf{future-convex} if the time component of the current $F^0(u)$ is strictly convex as a function of the principal variables~$f^0(u)$.
\eei
\end{definition}


\subsubsection{Basic observations}

In a flat setting, the entropy source reduces to the familiar expression
\be
\Sigma_F(u)=\frac{\del F^0}{\del f^0}(u)\cdot s(u).
\ee
In the present geometric framework, our definition above captures the lower-order contributions arising from the divergence operator, the foliation, and the tensorial nature of the unknowns and currents.
Indeed, for any sufficiently regular solution, we compute
\be
\aligned
\dive_g F(u)
&=
\partial_\alpha F^\alpha(u)+\Gamma^\alpha{}_{\alpha\beta}F^\beta(u)
=
\frac{\partial F^\alpha}{\partial u^I}(u)\,\partial_\alpha u^I
+\Gamma^\alpha{}_{\alpha\beta}F^\beta(u).
\endaligned
\ee
Using the expression for \(\partial_0u\) above and the compatibility conditions~\eqref{eq-general-entropy-compatibility}, all derivative terms cancel out, and we obtain precisely~\eqref{eq-general-entropy}--\eqref{eq-general-entropy-production}. Thus, the entropy production contains not only the source term $s(u)$, but also geometric contributions.


\subsubsection{An example of regularization}

Observe that the notion of future convexity \emph{depends} on both the selection of balance laws in~\eqref{eq-general} and the chosen foliation of the spacetime. 
The standard methodology is based on a regularization of~\eqref{eq-general} obtained by adding a second-order diffusion term acting tangentially to the leaves of the foliation and with a sign chosen to make the problem \emph{parabolic along future timelike directions}. 

Suppose that \(F\) is a future-convex entropy current. Denoting by \(\Delta_\Sigma=\dive_\Sigma\circ\nabla_\Sigma\) the Laplace operator on the hypersurfaces \(\Sigma_t\), we consider a family of smooth approximate solutions \(u^\eps\) satisfying\footnote{Strictly speaking, this is not a physically appropriate regularization, as such a parabolic regularization implies ``propagation at infinite speed''.  A fully covariant regularization involving only subluminal propagation will be studied in~\cite{LeFlochLeFloch-next}.}
\bel{eq-general-parabolic}
\dive_g f(u^\eps) = s(u^\eps) + \eps \, \Delta_\Sigma\bigl(f^0(u^\eps)\bigr).
\ee
Applying the entropy structure to the regularized system yields
\bel{eq-general-entropy-eps}
\dive_g F(u^\eps)
= \Sigma_F(u^\eps)
+ \eps \, {\del F^0 \over \del f^{0J}} (u^\eps) \Delta_\Sigma\bigl(f^{0J}(u^\eps)\bigr).
\ee
\bse
The parabolic contribution can then be decomposed as
\bel{eq-general-entropy-eps-2}
\eps \, {\del F^0 \over \del f^{0J}} (u^\eps) \Delta_\Sigma\bigl(f^{0J}(u^\eps)\bigr)
= \eps \, \dive_\Sigma \Psi_F[u^\eps]
- \eps \, \Dcal_F[u^\eps],
\ee
where 
\be
\Psi_F [u^\eps] \coloneqq {\del F^0 \over \del f^{0J}} (u^\eps) \nabla_\Sigma\bigl(f^{0J}(u^\eps)\bigr)
\ee
is referred to as the viscous tangential flux, and the so-called entropy dissipation reads
\be
\aligned
\Dcal_F[u^\eps] 
& \coloneqq  \nabla_\Sigma \Bigl( {\del F^0 \over \del f^{0J}} (u^\eps) \Bigr) \nabla_\Sigma\bigl(f^{0J}(u^\eps)\bigr)
\\
& = \nabla_\Sigma\bigl(f^{0J}(u^\eps)\bigr) \cdot {\del^2F^0 \over \del f^{0J} \del f^{0K}} (u^\eps) \cdot \nabla_\Sigma\bigl(f^{0K}(u^\eps)\bigr). 
\endaligned
\ee
\ese
Thanks to future convexity, the regularization produces in the entropy equation a dissipative term with a \emph{sign}, that is, \(\Dcal_F[u^\eps]\geq0\). In turn, we obtain
\bel{eq-general-entropy-eps-3}
\dive_g F(u^\eps)
\leq
\Sigma_F(u^\eps) + \eps \, \dive_\Sigma \Psi_F[u^\eps].
\ee
The flux contribution \(\eps \, \dive_\Sigma \Psi_F[u^\eps]\) can be shown to tend to zero in the sense of distributions, under natural assumptions on the convergence of the sequence \(\{u^\eps\}_\eps\); cf., for instance, the textbooks \cite{Dafermos-book,LeFloch-book}. 

\subsubsection{Aim of this section}

Passing \emph{formally} to the limit \(\eps\to0\) leads to an entropy inequality for weak solutions:
\bel{eq-general-entropy-ineq}
\dive_g F(u) \leq \Sigma_F(u)
\quad \text{(weak solutions),} 
\ee
as we will consider in \autoref{section=4}.
In addition, the entropy dissipation, namely the difference $-\mathopen{}\dive_g F(u) + \Sigma_F(u)$, can be shown to define a nonnegative bounded Radon measure, and this property will be built into our notion of tame solution in \autoref{section=4}.
Our aim in the rest of this section is to investigate the entropy structure associated with the Euler equations in torus symmetry. In particular, we introduce the particle number current, as well as general entropy currents. Each class of entropy currents will serve a different purpose in formulating entropy inequalities in \autoref{section=4}.


\subsection{Structure associated with the particle number density} 
\label{section=3-2}

\subsubsection{A relevant function of the mass-energy density} 

Throughout this section, we continue to work with sufficiently regular solutions. We begin with a definition. 

\begin{definition}
Consider the \textbf{particle number density} $\Numb=\Numb(\mu)$ defined on $(0,+\infty)$, up to a positive multiplicative constant, by
\be
\frac{d\Numb}{\Numb} = \frac{d\mu}{\mu+p(\mu)},
\ee
and extended continuously to $\mu=0$ by setting $\Numb(0)=0$. 
\end{definition}

Interestingly under our hyperbolicity condition, it is an increasing and concave function, since
\bel{equa-390}
\frac{d^2 \Numb}{d\mu^2} = -\frac{\Numb p'(\mu)}{(\mu + p(\mu))^2} <0 \quad \text{for } \mu>0, 
\ee
under our assumption that $p'$ is positive away from vacuum. A direct computation shows that, for all sufficiently regular solutions, the vector field 
\bel{equa-dhj3}
\vNumb \coloneqq \Numb(\mu) \, u 
\ee
automatically satisfies the divergence law
\bel{equa-entr}
\divuntrois \vNumb = 0 \quad \text{(regular solutions).} 
\ee 
We refer to $\vNumb$ as the \textbf{particle number current}. Clearly, $\vNumb$ is \emph{timelike} and \emph{future-oriented}. 


\subsubsection{Normalization based on the relativistic enthalpy} 

Let us refer to the function $\mu+p(\mu)$ as the \textbf{relativistic enthalpy.}
In view of the relation $2(\mu+p(\mu)) = - \Jbb \cdot \Jbb$, we express the mass-energy density as a function of our main variable $\Jbb$. Namely, thanks to~\eqref{hyperbolic-eos}, there exists some (increasing) bijection $\funmu$ of $[0,+\infty)$ such that $\mu \eqqcolon \funmu( - \Jbb \cdot \Jbb)$. Next, recalling that \(\Jbb=\sqrt{-\Jbb\cdot\Jbb}\,u=\sqrt{2(\mu+p)}\,u\) and defining the \textbf{enthalpy-normalized particle density} $\hNumb$ by 
\be
\hNumb(\mu) \coloneqq \Numb(\mu) \bigl(2(\mu+p(\mu))\bigr)^{-1/2} 
             = {\Numb(\mu) \over (-\Jbb \cdot \Jbb)^{1/2}}, 
\ee 
we rewrite the entropy current \eqref{equa-dhj3} as  
\bel{equa--Nbb} 
\vNumb = \hNumb(\mu) \, \Jbb, 
\ee
in which $\mu = \funmu( - \Jbb \cdot \Jbb)$ is also expressed in terms of the main unknown $\Jbb$. Observe that \eqref{equa--Nbb} \emph{makes sense in vacuum regions} where the velocity vector $u$ is \emph{ill-defined}. The function $\hNumb$ can be viewed as a function of $\mu$ or $-\Jbb\cdot\Jbb$ via the identity
\bse
\bel{Sdef-scaled}
d (\log \hNumb) = \frac{1}{2} (\mu+p)^{-1} d\bigl((\mu+p)q\bigr)
= {1-p'(\mu) \over 2(\mu + p(\mu))} \, d\mu, 
\ee
where the pressure ratio $q= (\mu-p)/(\mu+p)$ was defined in~\eqref{eq:defq}. 

We also observe that the hyperbolicity condition~\eqref{hyperbolic-eos} ensures that $\hNumb$~is an \emph{increasing function} of~$\mu$ ---or equivalently an increasing function of $(-\Jbb\cdot\Jbb)$. In addition, the condition ${p'(0)<1}$ implies that $\hNumb$ vanishes when $\mu$ vanishes. More precisely, from~\eqref{equa-qminqmax-1} we deduce that 
\be
\frac{d\log\Numb}{d\log\mu} = \frac{\mu}{\mu+p(\mu)} \in \Bigl[ \frac{1+q_{\min}}{2}, 1 \Bigr), \qquad \mu\in(0,+\infty),
\ee
which gives bounds in the near-vacuum limit, namely
\bel{near-vacuum-Numb}
\mu \lesssim \Numb(\mu) \lesssim \mu^{(1+q_{\min})/2}, \qquad
\mu^{1/2} \lesssim \hNumb(\mu) \lesssim \mu^{q_{\min}/2}, \qquad \mu \to 0 .
\ee
At large densities, the inequalities are reversed, and improved due to the upper bound on $q(\mu)$ in~\eqref{equa-qminqmax-2},
\be
\mu^{(1+q_{\min})/2} \lesssim \Numb(\mu) \lesssim \mu^{(1+q_\infty^+)/2}, \qquad
\mu^{q_{\min}/2} \lesssim \hNumb(\mu) \lesssim \mu^{q_\infty^+/2}, \qquad \mu \to +\infty .
\ee
\ese

Since these properties will play a role in our analysis, we summarize them in the following lemma. 
Furthermore, to emphasize the dependency of the metric-weighted particle density in $(-\Jbb \cdot \Jbb)$ and to suppress the variable $\mu$ which need not appear in our first order formulation, we find it convenient to write 
$\hNumb_\Jbb := \hNumb(\mu)$ and rewrite~\eqref{equa--Nbb} as
\bel{equa--Nbb-1}
\vNumb = \hNumb_\Jbb \, \Jbb.
\ee

\begin{proposition}
The particle number density satisfies
\bel{equa-2311}
{d \over d\mu} \hNumb(\mu) >0, \qquad \lim_{\mu \to 0} \hNumb(\mu) = 0, 
\qquad 
\hNumb(\mu) \lesssim \mu^{q_{\min}/2} + \mu^{q_\infty^+/2} .
\ee 
Consequently, the particle number current $\vNumb=(\Numb_0, \Numb_1)$ grows sub-quadratically:
\bel{equa--Nbb-17}  
|\Numb_0| \lesssim |\Jbb \cdot \Jbb|^{q_\infty^+/2} J_0 \ll (J_0)^2,
\quad 
|\Numb_1| \lesssim |\Jbb \cdot \Jbb|^{q_\infty^+/2} |J_1| \ll J_0 |J_1|
\quad \text{for large~$J_0$.}
\ee 
Moreover, under the hyperbolicity assumption~\eqref{hyperbolic-eos-1} on the equation of state of the fluid, the (negative) particle number current $-\vNumb$ can be checked to be future-convex (cf.~\autoref{appendix=D-3}). 
\end{proposition}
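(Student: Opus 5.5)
The plan is to derive each item from the logarithmic derivative identity~\eqref{Sdef-scaled}, the defining relation $d\Numb/\Numb = d\mu/(\mu+p)$, and the bounds \eqref{equa-qminqmax-1}--\eqref{equa-qminqmax-2} on the pressure ratio, which is essentially what the discussion before the statement sketches.

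\emph{Monotonicity of $\hNumb$.} By \eqref{Sdef-scaled},
\[
\frac{d\log\hNumb}{d\mu} = \frac{1-p'(\mu)}{2(\mu+p(\mu))}.
\]
This is positive for $\mu>0$, because $p'<1$ by \eqref{hyperbolic-eos-1} and $\mu+p>0$.

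\emph{Vanishing at the vacuum.} Write $d\log\hNumb/d\log\mu = \mu(1-p')/(2(\mu+p))$. As $\mu\to0$ this tends to $(1-p'(0))/(2(1+p'(0))) = q(0)/2>0$, using $p(\mu)\sim p'(0)\mu$. Integrating from $\mu$ to a fixed $\mu_1$ gives $\hNumb(\mu)\lesssim \mu^{c}$ for some $c>0$ near $0$, hence $\hNumb(\mu)\to0$.

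\emph{Growth bound.} Integrate $d\log\Numb/d\log\mu = \mu/(\mu+p) = (1+q)/2$ and use $q\ge q_{\min}$ on $(0,1)$ and $q\le q_\infty^+$ on $[\mu_1,\infty)$. This gives $\Numb\lesssim\mu^{(1+q_{\min})/2}$ near $0$ and $\Numb\lesssim\mu^{(1+q_\infty^+)/2}$ at infinity. Divide by $(2(\mu+p))^{1/2}$, which is comparable to $\mu^{1/2}$ because $p\le\mu$. This yields $\hNumb\lesssim\mu^{q_{\min}/2}$ near $0$ and $\hNumb\lesssim\mu^{q_\infty^+/2}$ at infinity. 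On any compact range of $\mu$ we have $\hNumb$ bounded, and combining the regimes gives \eqref{equa-2311}.

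\emph{Sub-quadratic growth of $\vNumb$.} For large $J_0$, either $\mu$ is bounded, and then $\hNumb$ is bounded; or $\mu$ is large, and then $\mu\simeq -\Jbb\cdot\Jbb$ (since $q\le q_\infty^+<1$ forces $p\gtrsim\mu$). In both cases $\hNumb_\Jbb\lesssim 1+|\Jbb\cdot\Jbb|^{q_\infty^+/2}$. Multiplying by $J_0$, respectively $|J_1|$, gives \eqref{equa--Nbb-17}. The comparison $\ll J_0^2$ holds because $|\Jbb\cdot\Jbb|\le J_0^2$ and $q_\infty^+<1$, so $|\Jbb\cdot\Jbb|^{q_\infty^+/2}\le |J_0|^{q_\infty^+}=o(|J_0|)$.

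\emph{Future convexity of $-\vNumb$.} This is the one genuinely computational step and the expected main obstacle, and the statement defers it to the appendix. The plan is to view $-\Numb_0$ as a function of the conserved variables $f^0(u)$ (the components $\Omega\Mbf^{00},\Omega\Mbf^{01},\dots$ together with $|t|^{1/2}J_2J_0$, etc.). One then computes its Hessian via the Legendre-type structure: the entropy multiplier is $V\propto u/T$-like, built from $\hNumb'$. Positivity of the Hessian should reduce to $0<p'<1$, equivalently to $\Numb$ being concave with $\hNumb$ increasing, as in \eqref{equa-390}.

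I would first check this Hessian positivity in the reduced case $\Jpar=0$, where it reduces to the classical convexity of the relativistic particle-number entropy. I would then add the parallel components, which enter only through $-\Jbb\cdot\Jbb$.
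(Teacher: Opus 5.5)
Your treatment of the monotonicity, the vacuum limit, the growth bound \eqref{equa-2311} and the sub-quadratic growth is the paper's own argument, namely the discussion preceding the statement. It runs as follows: $d\log\hNumb/d\mu=(1-p')/(2(\mu+p))>0$; then integrate $d\log\Numb/d\log\mu=(1+q)/2$ using $q\ge q_{\min}$ near vacuum and $q\le q_\infty^+$ at large density; then divide by $(2(\mu+p))^{1/2}\simeq\mu^{1/2}$.

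Two minor points on this part:
\begin{itemize}
\item The pointwise limit $d\log\hNumb/d\log\mu\to q(0)/2$ would need continuity of $p'$ at $0$, which is not assumed. The step is redundant anyway, since your next step already gives $\hNumb\lesssim\mu^{q_{\min}/2}$.
\item The equivalence $\mu\simeq-\Jbb\cdot\Jbb$ holds for every $\mu$ simply because $0\le p<\mu$; no appeal to $q_\infty^+$ is needed.
\end{itemize}
Your bound $\hNumb_{\Jbb}\lesssim 1+|\Jbb\cdot\Jbb|^{q_\infty^+/2}$, with the additive constant, is in fact the accurate form of \eqref{equa--Nbb-17}. Without the constant, the displayed inequalities can fail along nearly null states with large $J_0$, for instance when the near-vacuum exponent $q(0)$ is below $q_\infty^+$. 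The comparisons with $J_0^2$ and $J_0|J_1|$ are unaffected.

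For future convexity, the paper gives no computation; it refers to \autoref{appendix=D-3}, which cites the literature. A direct Hessian check is therefore a legitimate self-contained route, but two points of your sketch must be corrected.

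First, the sign. The relevant time component of $-\vNumb$ is the contravariant one, $-\Numb^0=\Numb_0=\hNumb_{\Jbb}J_0=-\Numb(\mu)u^0\le 0$. Your $-\Numb_0=\Numb(\mu)u^0$ is concave. With $\Jpar=0$ and rapidity $\chi$ ($u^0=\cosh\chi$, $u^1=\sinh\chi$), one has $T^{00}=(\mu+p)\cosh^2\chi-p$, $T^{01}=(\mu+p)\cosh\chi\sinh\chi$ and $\eta=-\Numb\cosh\chi$. The gradient of $\eta$ with respect to $(T^{00},T^{01})$ is $\frac{\Numb}{\mu+p}(-\cosh\chi,\sinh\chi)$, and the Hessian quadratic form is
\[
\frac{\Numb}{(\mu+p)^2}\Bigl(p'\cosh\chi\,d\mu^2+2p'(\mu+p)\sinh\chi\,d\mu\,d\chi+(\mu+p)^2\cosh\chi\,d\chi^2\Bigr).
\]
This is positive definite iff $p'>0$ and $p'\tanh^2\chi<1$. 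Your expectation is thus confirmed, and only $p'>0$ is really used once $p'<1$.

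Second, the parallel components do not enter only through $-\Jbb\cdot\Jbb$: they appear in the principal variables $T^{02},T^{03}$ and in $T^{00}$. The correct passage from $\Jpar=0$ to the general case uses rotation invariance in $(J_1,J_2,J_3)$. One has $\eta=\eta_2(T^{00},|\vec T|)$, with $\vec T=(T^{01},T^{02},T^{03})$ and $\eta_2$ the two-dimensional function above. $\eta_2$ is strictly convex and even in $T^{01}$, hence strictly increasing in $|T^{01}|$. Composing it with the convex map $\vec T\mapsto|\vec T|$ then gives strict convexity in all four principal variables.
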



\subsubsection{Application to isothermal fluids}

\bse
We now consider the isothermal case $p = k^2\mu$, for which the function $q$ is constant and, as it turns out, the particle number current can be written in closed form. First of all, from
\eqref{Sdef-scaled} we obtain
\be
\hNumb(\mu) \simeq \mu^\kappa, \quad
\Numb(\mu) \simeq \mu^{\kappa+1/2},
\qquad
\kappa = \frac{q}{2} = \frac{1}{1+k^2} - \frac{1}{2} \in (0,1/2).
\ee
More precisely, up to harmless positive multiplicative constants, we may set
\be
\Numb(\mu) = \bigl(2(1+k^2)\bigr)^{\kappa+1/2} \mu^{\kappa+1/2}
\quad \text{(isothermal case),}
\ee
and therefore
\be
\hNumb(\mu) = \bigl(2(1+k^2)\bigr)^{\kappa} \mu^{\kappa}
\quad \text{(isothermal case).}
\ee
Consequently, in view of $-\Jbb\cdot\Jbb = 2(\mu+p)= 2 (1+k^2) \mu$, the particle number current becomes, up to a positive multiplicative constant,
\bel{interesting-entropy}
\vNumb = (-\Jbb\cdot\Jbb)^{\kappa} \Jbb
\quad \text{(isothermal case).}
\ee
\ese
%


\subsection{Structure associated with the parallel momentum}
\label{section=3-3}

\subsubsection{Formulation in terms of first-order variables}

We proceed first by avoiding the use the \emph{undifferentiated} metric variables $P$ and~$Q$, which are suppressed in our first-order \JKL\ formulation. We need the following notation.

\begin{definition}
\label{defJhat}
The \textbf{parallel momentum per particle} $\Jhatpar$ is defined by 
\bel{equa-jhat}
\Jhatpar \coloneqq \widehat J_2 + i \, \widehat J_3 
\coloneqq \bigl( \hNumb_\Jbb\bigr)^{-1} \Jpar .
\ee 
\end{definition}

It is straightforward to rewrite the Euler equations for the parallel momentum as \emph{linear transport equations} 
\bel{equa-jhat-transp}
\aligned
\bigl( \widehat J_2 \bigr)_t - \amdeux^{-1} {J_1\over J_0} \, \bigl( \widehat J_2 \bigr)_x 
& = \Omega {\widehat J_2 \over 2 J_0} \, \Pbb\cdot\Jperp - \frac{1}{2t} \widehat J_2 ,
\\
\bigl( \widehat J_3 \bigr)_t -\amdeux^{-1}  {J_1\over J_0} \, \bigl( \widehat J_3 \bigr)_x 
& = \Omega {\widehat J_2 \over J_0} \, \Qbb\cdot\Jperp - \Omega {\widehat J_3 \over 2 J_0} \, \Pbb\cdot\Jperp
- \frac{1}{2t} \widehat J_3 .
\endaligned
\ee 
These equations have a drawback: they do not have a divergence form, and so apply to sufficiently regular solutions only. By combining them with the divergence law~\eqref{equa-entr} for the particle number and choosing an arbitrary function~$H$, we arrive at the balance law (for regular solutions)
\bel{equa-entrop23}
\aligned
\divuntrois \Bigl( H(\widehat J_2,\widehat J_3) \vNumb \Bigr)
& = {1 \over 2} \bigl( - \widehat J_2\del_{\widehat J_2} H + \widehat J_3\del_{\widehat J_3} H \bigr) (\Pbb\cdot\vNumb)
- \bigl(\widehat J_2 \del_{\widehat J_3} H \bigr) (\Qbb\cdot\vNumb) \\
& \quad + \frac{1}{2t\Omega} \bigl(\widehat J_2 \del_{\widehat J_2} H + \widehat J_3 \del_{\widehat J_3} H\bigr) \Numb_0 .
\endaligned
\ee
The right-hand side features no spacetime derivatives, hence the vector field $H(\widehat J_2,\widehat J_3) \vNumb$ is an entropy current in the sense of \autoref{def-entropcurrent} for any function~$H$. Observe that the two Euler equations for $\Jpar$ are easily recovered from \eqref{equa-entrop23} by choosing $H(\widehat J_2,\widehat J_3)=\widehat J_2$ and $H(\widehat J_2,\widehat J_3)=\widehat J_3$, respectively. 

It should be noted that $\Jhatpar$ only depends on fluid variables, and that all the equations exhibited until now are stated in terms of the \emph{first-order unknowns} $\Lbb=(P_0, P_1, Q_0, Q_1)$, and do not involve the metric coefficients $P,Q$ themselves.


\subsubsection{Formulation in homogeneous divergence form}

Remarkably, it turns out that the $\Jpar$ equations, presented above, can be expressed as divergence laws without source terms. However, to proceed we must break our first-order formulation\footnote{We expect the $\Jhat$ formulation to be more convenient for numerical purposes.}
and \emph{reconstruct~$P,Q$} from our first-order variables by integrating
\bel{reconstruct-P-Q}
\aligned
P_t & = P_0 \Omega, \qquad  & Q_t & = e^{-P} Q_0  \Omega, 
\\
P_x & = P_1 \Omega \amdeux,   \qquad  & Q_x & = e^{-P} Q_1 \Omega \amdeux. 
\endaligned
\ee
We then work with variables normalized using $P$ and~$Q$, as follows: 
\bel{equawidetildeJ-0}
\Gammaun_2 \coloneqq \abs{t}^{1/2} e^{P/2}  \widehat J_2, \qquad 
\Gammaun_3 \coloneqq \abs{t}^{1/2} \bigl( Qe^{P/2} \widehat J_2+e^{-P/2} \widehat J_3 \bigr) .
\ee
Observe that $\widehat J_2, \widehat J_3$ are the components of the one-form $(\hNumb_\Jbb)^{-1}\Jpar$ along the frame vectors $e_2,e_3$, while $\Gammaun_2$ and $\Gammaun_3$ are its components along the Killing vectors $\del_y,\del_z$. In other words, we introduce the following definition.

\begin{definition}
\label{defGammaun}
The \textbf{metric-weighted parallel momentum} $\Gammaunpar$ is defined by 
\bel{equawidetildeJ} 
\Gammaun_2 \coloneqq \abs{t}^{1/2} {1 \over \hNumb_\Jbb} \, e^{P/2}  J_2
\qquad 
\Gammaun_3 \coloneqq \abs{t}^{1/2} {1 \over \hNumb_\Jbb} \, \bigl( Qe^{P/2} J_2+e^{-P/2} J_3 \bigr).
\ee
\end{definition}

It is straightforward to rewrite~\eqref{equa-jhat-transp} as \emph{homogeneous transport equations:}
\bel{equa-jhat-transp-cons}
\aligned
\bigl( \Gammaun_2 \bigr)_t - \amdeux^{-1} {J_1\over J_0} \, \bigl( \Gammaun_2 \bigr)_x 
 = 0,  
\qquad
\bigl( \Gammaun_3 \bigr)_t - \amdeux^{-1} {J_1\over J_0} \, \bigl( \Gammaun_3 \bigr)_x 
 = 0. 
\endaligned
\ee 
This leads us to a divergence formulation of~\eqref{equa-entrop23}, namely the (as we call it)
\bel{equa-entrop23-cons}
\textbf{$H$-divergence law:} \quad 
\divuntrois \bigl( H\bigl( \Gammaunpar \bigr) \, \vNumb \bigr)
= 0
\, \text{ for any function } H.  
\ee 
This structure will be used below in order to derive a \emph{maximum principle} on $\Gammaun_m$.
With the choice $H(\Gammaunpar) = \Gammaun_m$ for $m= 2,3$ we recover the Euler evolution equations
\bel{equa-mom-twists-basic}
\aligned
\divuntrois \bigl( \Gammaun_m \, \vNumb \bigr) &= 0, \quad m=2,3. 
\endaligned
\ee
For sufficiently regular solutions, these divergence identities are equivalent to the homogeneous transport equations satisfied by $\Gammaun_2,\Gammaun_3$.


\subsubsection{Spatial derivative of the metric-weighted momentum current}

Interestingly, we may also \emph{differentiate} the homogeneous transport equations satisfied by the metric-weighted parallel momentum and, then by a suitable rescaling in $\Numb$ reach a new divergence law, as follows.  For clarity, we also introduce a metric-weighted version of the parallel momentum derivative as 
\bel{equawidetildeJ-bis}
\Gammadeux_m \coloneqq  {1 \over |t|\,|\Numb_0|} e_1\bigl( \Gammaun_{m} \bigr)
= {1 \over |t| \Omega \amdeux |\Numb_0|} \, \Gammaun_{m,x}, 
\qquad 
m=2,3, 
\ee
which is regarded as a temporary notation used only in this paragraph. 
Hence, up to a normalization, the coefficients $\Gammadeuxpar$ are spatial derivatives of~$\Gammaunpar$ (which motivates our notation). By differentiating the equation for the parallel momentum in~\eqref{equa-mom-twists-basic} and proceeding as before, 
we find the \textbf{first-order $H$-divergence law} (as we call it)
\bel{equa-kSK1-0}
\divuntrois \bigl( H\bigl(\Gammadeuxpar, \Gammaunpar\bigr) \, \vNumb \bigr) = 0, \quad m=2,3,
\quad \text{ for any function } H. 
\ee 
Observe that (from the standpoint of our first-order formulation, at least) this divergence law is \emph{not associated with an entropy current}, since it involves a (spatial) derivative of our main unknowns. 

We point out that, for \emph{sufficiently regular} solutions, a maximum principle holds for $\Gammadeux_m$ (by applying similar arguments as above for $\Gammaunpar$). However, in our theory, we will seek only an \emph{integrability condition} on $\Gammadeuxpar$; we \emph{will not} require \eqref{equa-kSK1} for arbitrary functions $H$. Instead, we consider~\eqref{equa-kSK1} with the choice $H=|\Gammadeux_m|$: for sufficiently regular solutions we find 
\bse
\bel{equa-kSK1}
\divuntrois \biggl(  {| e_1\bigl(\Gammaun_{m}\bigr) | \over |t|} \, {\Jbb\over J_0} \biggr) = 0, \quad m=2,3. 
\ee 
In coordinates, this reads as follows,
\bel{equa-Jx}
\del_t |\Gammaun_{m,x}| + \del_x\Bigl( \frac{J_1}{J_0} |\Gammaun_{m,x}| \Bigr) = 0 .
\ee
By integration over a hypersurface, this implies an $L^1$ estimate for $e_1\bigl(\Gammaun_{m} \bigr)$, hence an estimate for the total variation $\Var$ of $\Gammaun_m$ (with respect to the spatial variable), namely 
\bel{equa-Jx-Var}
\Var \bigl( \Gammaun_m(t_3,\cdot) \bigr)
= \int_{\Sbb^1} | \Gammaun_{m,x} |\,dx
= {1 \over |t_3|} \int_{\Tbb^3} \bigl| e_1\bigl(\Gammaun_{m}\bigr) \bigr|  \, \dVtrois  \Big|_{t=t_3} 
=
\Var \bigl( \Gammaun_m(t_2,\cdot) \bigr) 
\ee
\ese
for any two times $t_2 \leq t_3$.
On the other hand, for weak solutions, \eqref{equa-Jx} would involve an ill-defined product of a bounded function $J_1/J_0$ by a measure $|\Gammaun_{m,x}|$, thus it will \emph{not be stated} in our framework. The $L^1$ estimate \eqref{equa-Jx-Var} on the spatial derivative of the variable $\Gammaunpar$
will nevertheless make sense as an inequality on the total variation of bounded variation functions.


\subsubsection{The class of entropy currents}

Finally, it is tedious (but elementary) to check the following property. 

\begin{proposition}[Entropy currents for the Euler system]
The linear combinations (with coefficients depending upon the geometry only) of 
\be
\Mbf^{0 \bullet}, \quad \Mbf^{1 \bullet}, \quad \Mbf^{2 \bullet}, \quad \Mbf^{3 \bullet}, \quad H(\Gammaunpar) \vNumb
\ee 
constitute the \emph{complete list}\footnote{excluding the trivial entropy currents with \emph{constant} components.}
of entropy currents for the Euler system in torus symmetry. 
\end{proposition}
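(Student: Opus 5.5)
We view the Euler part of the \JKL\ formulation, namely \eqref{eq:T2-Euler-perp-def-a}--\eqref{eq:T2-fluid23-def-b}, as a system of four balance laws $\divuntrois f(u)=s(u)$ in the fluid unknown $u=\Jbb=(J_0,J_1,J_2,J_3)$. The geometry variables $(\Kpar,\Lbb,\Omega,\amdeux,t)$ are treated as given coefficients. The fluxes are $f^{\bullet 0}=\Omega\Mbf^{0\bullet}$ and $f^{\bullet 1}=\Omega\Mbf^{1\bullet}$; after the reconstruction \eqref{reconstruct-P-Q}, the currents for $m=2,3$ are the metric-weighted parallel momentum fluxes $\Mbf^{m\bullet}$ proportional to $|t|^{1/2}(e^{P/2}J_2,\dots)\Jperp$. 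The curl terms in \eqref{eq:T2-fluid23-def-a}--\eqref{eq:T2-fluid23-def-b} do not depend on $\Jbb$, so they are absorbed into the coefficients. By \autoref{def-entropcurrent}, $F$ is an entropy current exactly when its $\Jbb$-derivatives satisfy the compatibility relations \eqref{eq-general-entropy-compatibility}. Only the $\alpha=0,1$ components matter, since the currents are $\Tbb^2$-invariant and tangent to~$\Ncal$.

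\textbf{Reduction to an overdetermined linear system.} Eliminate $V$ through $\del_{\Jbb}F^0=V\,\del_{\Jbb}f^0$. This is possible because $\Jbb\mapsto f^0$ is locally invertible away from vacuum, which is hypothesis~\eqref{equa-324} and is checked from the hyperbolicity in \autoref{theo-struct}. One then works in the principal variables $w=f^0$, writing $F^1=\Phi(w)$ and $F^0=\eta(w)$. The compatibility conditions become the classical system $D\Phi=D\eta\cdot A(w)$ with $A=Df^1(Df^0)^{-1}$. For four unknowns this gives four equations on two scalar functions, and the integrability conditions form a second-order linear system $D^2\eta\,A$ symmetric. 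I would diagonalise $A$ using \autoref{theo-struct}. It has two acoustic eigenvalues $\xi_\pm$ and a double linearly degenerate eigenvalue $-\amdeux^{-1}J_1/J_0$. In the corresponding Riemann-type coordinates, the solutions split into two parts.

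\textbf{The two parts.} Along the linearly degenerate double field, the Riemann invariants are exactly the transported quantities $\Gammaun_2,\Gammaun_3$ of \eqref{equa-jhat-transp-cons}. Using \eqref{equa-entrop23-cons}, the products $H(\Gammaunpar)\vNumb$ already give a family parametrised by an arbitrary function of two variables. This family is as large as the degenerate block allows, because the density $\Numb_0$ is conserved and the transport speed is fixed. For the acoustic $2\times2$ block, the entropy equation is a linear hyperbolic second-order PDE in $(\mu,\text{rapidity})$ with general solution depending on two functions of one variable. The key claim is that coupling to the $\Jpar$ components destroys all of these except the conserved fluxes $\Mbf^{0\bullet},\Mbf^{1\bullet}$ and $\vNumb$ (the latter being $H=1$). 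To see this, differentiate the integrability conditions with respect to $J_2,J_3$. One should find that the mixed derivatives $\del_{J_\perp}\del_{J_\parallel}\eta$ must vanish unless $\eta$ is affine in the acoustic variables modulo functions of $\Gammaunpar$ times $\Numb_0$.

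\textbf{Main obstacle.} The hardest step is this last rigidity computation: showing that the acoustic entropies which are not linear in the conserved fluxes fail to extend once $\Jpar\neq0$, so that the list is complete. I would first treat the isothermal case of \autoref{appendix=B}, where $q$ is constant and $\vNumb=(-\Jbb\cdot\Jbb)^\kappa\Jbb$. I would write the ansatz $\eta=\eta(J_0,J_1,\Gammaunpar)$ and extract the coefficient of $J_2^2$ in the integrability conditions, where the non-constant coefficients coming from $\Pbb,\Qbb$ make the conditions overdetermined. I would then use the genuine nonlinearity from \eqref{hyperbolic-eos-3} to extend the argument to a general equation of state. Finally, one checks directly that $\Mbf^{2\bullet},\Mbf^{3\bullet}$ coincide with the choices $H=\Gammaun_2,\Gammaun_3$ up to geometry-dependent coefficients. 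This confirms that the listed currents span the solution space, modulo constants.
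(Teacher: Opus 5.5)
You set up the framework correctly: the compatibility conditions \eqref{eq-general-entropy-compatibility}, elimination of the multiplier, and recognition of $H(\Gammaunpar)\vNumb$ as the family attached to the transport of the parallel momentum. There is, however, a genuine gap. The one nontrivial claim of the proposition, that nothing beyond the listed currents survives, is only announced (``one should find''). Moreover, the route you sketch toward it targets the wrong mechanism. The fields $\Pbb,\Qbb,\Kpar$ enter the fluxes only through $\Jbb$-independent additive pieces, namely $\Mbf^{a\bullet}(0,\Kpar,\Lbb)$ and the curl terms. They therefore disappear from the compatibility conditions, which at each point are exactly those of the flat system \eqref{eq-EulerMin}, with $\amdeux,\Omega,|t|$ as frozen weights. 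So no overdetermination can come from ``non-constant coefficients coming from $\Pbb,\Qbb$'', and genuine nonlinearity plays no role. The rigidity comes from the fluid flux alone: the acoustic block $A(\Sigma,\Jhatpar)$ in \eqref{Sigmat} depends on $|\Jhatpar|^2$ through $J_0^2=(\Sigma_0+\hNumb_\Jbb^2|\Jhatpar|^2)/(1-\Sigma_1^2)$. This is the relativistic coupling of tangential motion to sound waves; it is absent in the non-relativistic limit, where the list would indeed be larger. Relatedly, the system is not diagonalizable in Riemann invariants, because the acoustic invariants depend on $\Jhatpar$. Note also that $\Gammaunpar$ are transported at the degenerate speed but are constant along acoustic waves, whereas it is $\Sigma$ that stays constant across contacts. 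Entropies therefore do not ``split into two parts''; only the block form \eqref{Sigmat} is available.

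The step can be done as follows.

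\emph{Splitting the conditions.} In the variables $(\Sigma,\Jhatpar)$, the $\Jhatpar$-components of the compatibility conditions say precisely $\Fvee_2=\Fvee_3=0$ in the sense of \autoref{lem:Fvee-expr}. That is, $\Fcomp^1+\Sigma_1\Fcomp^0=G(\Sigma)$ with $G$ independent of $\Jhatpar$. The $\Sigma$-components say that, for each frozen $\Jhatpar$, $\Fcomp^0$ is an entropy of the $2\times2$ acoustic system with flux $G-\Sigma_1\Fcomp^0$.

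\emph{The homogeneous case $G=0$.} This is the closed total differential equation $d\log\Fcomp^0=d\log\Numb^0$, so $\Fcomp^0=H(\Jhatpar)\Numb^0$. This recovers the $H(\Gammaunpar)\vNumb$ family, now as the full solution space rather than merely as a family of solutions.

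\emph{The general case.} Put $\Fcomp^0=\Numb^0\phi$. Use the pressure $p$ as coordinate instead of $y=\Sigma_0$, and set $W=y+\hNumb^2|\Jhatpar|^2$. The integrability condition for $\phi$ is then the linear wave equation $K\,\partial_p^2G=(1-\Sigma_1^2)^2\,\partial_{\Sigma_1}^2G$, with $K=W^2p_y/\bigl(2(W_y-2p_y)\bigr)$. At $\Jhatpar=0$ one recovers $K=p'(\mu)(\mu+p)^2$. Away from vacuum, $K$ is strictly increasing in $|\Jhatpar|^2$, using only $0<p'<1$.

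\emph{Conclusion.} Since $G$ cannot depend on $\Jhatpar$, this forces $\partial_p^2G=\partial_{\Sigma_1}^2G=0$, i.e.\ $G\in\mathrm{span}\{1,\Sigma_1,p,\Sigma_1 p\}$. These correspond to the two constant currents, $T^{1\bullet}$, and $T^{0\bullet}$, which closes the argument. It also shows that completeness genuinely requires the currents to be defined on states with $\Jpar\neq0$: for $J_2=J_3=0$ the whole infinite acoustic family survives.
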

 
Furthermore, it can also be checked that this collection of entropy currents is \emph{insufficient} to apply the method of compensated compactness and establish a nonlinear stability result. This motivates us to introduce a generalization of this notion, as follows.  


\subsection{The notion of quasi-current}
\label{section=3-4}

\subsubsection{Entropy structure in Minkowski spacetime}

It is convenient to consider first the Euler system in Minkowski spacetime, which is obtained  by suppressing all lower-order terms as well as replacing $t,\amdeux,\Omega$ by $1$. We arrive at  
\bel{eq-EulerMin}
\begin{aligned}  
\Big( J_0^2 - {1 - q_\Jbb \over 2}  \, (- \Jbb \cdot \Jbb) \Big)_t -  ( J_0 J_1 )_x
& = 0, 
\\
( J_0 J_1 )_t
- \Big( J_1^2 + {1 - q_\Jbb \over 2}  \, (- \Jbb \cdot \Jbb) \Big)_x
& = 0, 
\qquad \qquad 
\\
\big( J_0 \Jpar  \big)_t  - \big( J_1 \Jpar \big)_x & = 0.
\end{aligned}
\ee
Here, the main unknown $\Jbb= ( J_0,  J_1,  J_2,  J_3) = (\Jperp, \Jpar)$ satisfies the constraint $\Jbb \cdot \Jbb \leq 0$, with 
$\Jbb \cdot \Jbb =  -  J_0^2 +  J_1^2 +  J_2^2 +  J_3^2$.
The function \(q_\Jbb\) takes values in \((0,1)\) and is bounded away from both \(0\) and \(1\) for non-zero $\Jbb\cdot\Jbb$; for isothermal fluids, it coincides with the constant $2 \kappa \in (0,1)$.
As pointed out earlier, the particle number current~\eqref{equa--Nbb} provides us with a particular example of entropy current, that is, 
\be
\Numb^0 = - \Numb_0(\Jbb) = - \hNumb_\Jbb \, J_0, 
\qquad 
\Numb^1 = \Numb_1(\Jbb) = \hNumb_\Jbb \, J_1. 
\ee

By treating the transported variables $\Jhatpar = (\hNumb_\Jbb)^{-1} \Jpar$ (introduced in \autoref{defJhat}) as constant background parameters, and considering entropy currents for the remaining two evolution equations, we arrive at a collection of currents that remain useful for non-constant~$\Jhatpar$.

\begin{definition} 
\label{def-quasi-cur}
A regular map $\Fbb = (\Fcomp^0, \Fcomp^1)(\Jbb)$ is called a \textbf{quasi-current} for the Euler system in $\Tbb^2$ symmetry provided every regular solution to the Euler equations in Minkowski geometry~\eqref{eq-EulerMin} satisfies 
\be
\Fcomp^0(\Jbb)_t + \Fcomp^1(\Jbb)_x
= \Fvee_2 (\Jbb) \widehat J_{2,x} + \Fvee_3(\Jbb) \widehat J_{3,x}
\quad \text{ (regular solutions),}
\ee
which is referred to as a \textbf{quasi-balance law}. Here the functions $\Fvee_m$ are called the \textbf{shear-induced factors} associated with $\Fbb$, while $\widehat J_m = (\hNumb_\Jbb)^{-1} J_m$, $m=2,3$.
\end{definition}

Clearly, when the shear-induced factors vanish identically, or when the fluid velocity is orthogonal to $\Tbb^2$~symmetry orbits, a quasi-current reduces to an ordinary entropy current.
The notion of future convexity introduced in \autoref{def-entropcurrent} for currents applies likewise to quasi-currents.
Interestingly, there exist infinitely many quasi-currents, whether future-convex or not, and this collection will be sufficient to apply compensated compactness arguments ---in contrast with the (too narrow) set of standard entropies. Moreover, the quasi-currents  $\Fbb = \Fbb(\Jbb)$ can be conveniently parametrized by solving a second-order wave equation in phase space; cf.~\cite{LeFlochLeFloch-next}.

Our terminology ``quasi-current'' is justified by the fact that the associated balance law is not purely conservative: besides the divergence part, it contains the products $\Fvee_m(\Jbb) \widehat J_{m,x}$. Entropy quasi-currents behave like entropy currents modulo non-conservative corrections, which will be well-defined under our regularity and integrability conditions.


\subsubsection{Auxiliary fluid variables}

To avoid any ambiguity in the notation, we now reintroduce the variables (mass density, fluid velocity) associated with the momentum current.
\be
\Sigma\coloneqq(\Sigma_0,\Sigma_1)=(-\Jbb\cdot\Jbb, J_1/J_0), \quad \text{together with } \Jhatpar,
\ee
which are subject to the causality conditions 
\be
\Sigma_0\geq 0 \, \text{ and } \, -1\leq\Sigma_1\leq 1.
\ee
The Euler system~\eqref{eq-EulerMin} reduces to the form
\bel{Sigmat}
\Sigma_t = A\bigl(\Sigma,\Jhatpar\bigr) \, \Sigma_x, \qquad \Jhatpar_t = (J_1/J_0) \Jhatpar_x,
\ee
for a suitable $2\times 2$ matrix~$A$.  The map from the original variables $\Jbb$ to $(\Sigma,\Jhatpar)$ is one-to-one, and with this set of variables, the following explicit expressions are available (as we prove in \autoref{appendix=C-3}).  We recall the expressions~\eqref{eq:T2-Texpr} of stress tensor components $T^{ab}=\frac{1}{2}\Mbf^{ab}(\Jbb,0,0)$.

\begin{lemma}[Expressions of the shear-induced factors]
\label{lem:Fvee-expr}
With the notation in \autoref{def-quasi-cur}, the shear-induced factors can be written as derivatives of the entropy current with respect to the parameters~$\Jhatpar$, keeping $-\Jbb\cdot\Jbb$ and $J_1/J_0$ fixed:
\bel{Fvee-expr}
\Fvee_m(\Jbb) = \frac{\del(\Fcomp^1 + (J_1/J_0)\Fcomp^0)}{\del\widehat J_m} \biggr|_{-\Jbb\cdot\Jbb,J_1/J_0}  \qquad (m=2,3).
\ee
The entropy multiplier~$\Vbb$ in the sense of \autoref{def-entropcurrent}, whose components $(V_0,V_1)$ are partial derivatives of~$\Fcomp^0$ seen as a function of $(2T^{00},2T^{01},\widehat J_2,\widehat J_3)$ with respect to the first two variables, obeys the identity (where $V_0=-V^0$)
\be
V_0 \mp V_1 = \frac{1}{J_0\pm J_1} \frac{d(\Fcomp_0\pm\Fcomp_1)}{d(J_0\pm J_1)}\Bigr|_{-\Jperp\cdot\Jperp,\Jpar} .
\ee
\end{lemma}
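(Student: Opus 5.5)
The plan is to work entirely in the Minkowski reduction~\eqref{eq-EulerMin}, using $(\Sigma_0,\Sigma_1,\widehat J_2,\widehat J_3)$ as coordinates on phase space. In these variables the system takes the form~\eqref{Sigmat}, and the parallel components are transported along $J_1/J_0$. I would regard $\Fcomp^0,\Fcomp^1$ as functions of $(\Sigma,\Jhatpar)$. The chain rule then gives
\[
\Fcomp^0_t+\Fcomp^1_x=\partial_\Sigma\Fcomp^0\,\Sigma_t+\partial_\Sigma\Fcomp^1\,\Sigma_x+\partial_{\widehat J_m}\Fcomp^0\,\widehat J_{m,t}+\partial_{\widehat J_m}\Fcomp^1\,\widehat J_{m,x}.
\]
Next I substitute $\widehat J_{m,t}=-(J_1/J_0)\widehat J_{m,x}$. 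Here the sign has to be fixed consistently with the $\Jhatpar$ transport equation in~\eqref{equa-jhat-transp}, stripped of its lower-order terms, so that the coefficient of $\widehat J_{m,x}$ comes out as $\partial_{\widehat J_m}\Fcomp^1+(J_1/J_0)\partial_{\widehat J_m}\Fcomp^0$. At fixed $\Sigma$, that is at fixed $-\Jbb\cdot\Jbb$ and $J_1/J_0$, the factor $J_1/J_0$ is constant under $\partial_{\widehat J_m}$, so this coefficient equals $\partial_{\widehat J_m}(\Fcomp^1+(J_1/J_0)\Fcomp^0)$. The quasi-current condition says that the terms in $\Sigma_x$ must cancel identically. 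This is the usual entropy compatibility condition $\partial_\Sigma\Fcomp^0\,A=-\partial_\Sigma\Fcomp^1$ with $\Jhatpar$ frozen. Whatever remains is exactly $\Fvee_m\widehat J_{m,x}$, which proves~\eqref{Fvee-expr}. Uniqueness of the $\Fvee_m$ is automatic, because $\widehat J_{m,x}$ can be prescribed arbitrarily at a point for regular solutions.

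For the entropy multiplier, I would change coordinates on phase space to $(u^0,u^1,\widehat J_2,\widehat J_3)=(2T^{00},2T^{01},\widehat J_2,\widehat J_3)$. These are the conserved densities of the first two equations in~\eqref{eq-EulerMin}. I then define $V_0=-\partial\Fcomp^0/\partial(2T^{00})$ and $V_1=\partial\Fcomp^0/\partial(2T^{01})$, with signs following the convention $V_0=-V^0$. In the frozen-$\Jhatpar$ two-by-two system, the compatibility relation~\eqref{eq-general-entropy-compatibility} also determines $\Fcomp^1$ through the same multiplier. To obtain the null-coordinate identity, I would use null combinations: $2T^{00}\pm 2T^{01}$ and the fluxes combine into expressions in $J_0\pm J_1$ weighted by $J_0\pm J_1$. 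More precisely, $2T^{00}+2T^{11}\pm 4T^{01}=(J_0\mp J_1)^2$ up to sign, so that $\Fcomp^0\pm\Fcomp^1$ should depend on the combinations through $d(\Fcomp^0\pm\Fcomp^1)=(V_0\mp V_1)\,d\big((J_0\pm J_1)^2/2\big)+\dots$. The remaining differentials are taken at fixed $-\Jperp\cdot\Jperp=(J_0-J_1)(J_0+J_1)$ and fixed $\Jpar$. Dividing by $(J_0\pm J_1)\,d(J_0\pm J_1)$ then gives the stated formula.

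The step I expect to be the main obstacle is this last computation. It requires checking that, along the curve in phase space where $-\Jperp\cdot\Jperp$ and $\Jpar$ are held fixed, the pressure contributions $\tfrac{1-q_\Jbb}{2}(-\Jbb\cdot\Jbb)$ drop out of $d(T^{00}\pm T^{01})$ and $d(T^{01}\pm T^{11})$. This works because $-\Jbb\cdot\Jbb$ is then constant. On that curve, only the null factor $J_0\pm J_1$ varies, while $J_0\mp J_1$ is determined by it. Note also that $\widehat J_m=\hNumb_\Jbb^{-1}J_m$ stays fixed there, since $\hNumb_\Jbb$ depends only on $-\Jbb\cdot\Jbb$. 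Once this is verified, the identity follows from the chain rule and the compatibility relation. I would defer the bookkeeping of signs in raising and lowering indices to \autoref{appendix=C-3}.
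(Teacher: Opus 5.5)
Your proposal is correct and follows the paper's own argument in \autoref{appendix=C-3}. There, the chain rule in the coordinates $(\Sigma,\Jhatpar)$, combined with the transport law for $\Jhatpar$, identifies $\Fvee_m$ as the coefficient of $\widehat J_{m,x}$. The multiplier identity then comes from writing the compatibility relations as differentials, taking null sums and differences, and restricting to fixed $-\Jperp\cdot\Jperp$ and $\Jpar$ (hence fixed $-\Jbb\cdot\Jbb$, $q_\Jbb$ and $\Jhatpar$), where the pressure terms drop out.

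Two sign slips should be corrected:
\begin{itemize}
\item The Minkowski transport law is $\widehat J_{m,t}=+(J_1/J_0)\,\widehat J_{m,x}$ (cf.~\eqref{equa-jhat-transp} and~\eqref{Sigmat}). This sign is what produces the term $+(J_1/J_0)\,\del_{\widehat J_m}\Fcomp^0$.
\item Your own identity $2T^{00}+2T^{11}\pm 4T^{01}=(J_0\mp J_1)^2$ shows that, on that curve, the upper-index combination $\Fcomp^0\pm\Fcomp^1$ varies through $(J_0\mp J_1)^2$, not $(J_0\pm J_1)^2$. It is the lowered combination $\Fcomp_0\pm\Fcomp_1=-(\Fcomp^0\mp\Fcomp^1)$ that pairs with $J_0\pm J_1$, as in the statement.
\end{itemize}
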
 


\subsubsection{Extension to a curved spacetime}

The metric components  and the coupling with the Einstein equations are now taken into account. We introduce the notation 
\bel{equa-GrandDelta}
\aligned
\Delta_\Fbb(\Jbb, \Kpar, \Pbb,\Qbb;\Omega,t)
& = \sign(t) \Bigl( \Fcomp^0 - 2 \sum_a V_a T^{0a} \Bigr) \Bigl(\Ebf_0(\Jpar, \Kpar) + \frac{q_\Jbb}{2}  \, (- \Jbb \cdot \Jbb)\Bigr)
\\
& \quad + \sign(t) \sum_{a=0,1} V_a \Deltafluid^a(\Jbb,\Kpar,\Pbb,\Qbb) \\
& \quad + \frac{2}{|t| \Omega} \frac{\Vbb\cdot\Jperp}{(-\Jperp\cdot\Jperp)} R^\parallel
+ \frac{1}{|t|\Omega} \sum_{m=2,3} \frac{\del\Fcomp^0}{\del\widehat J_m}\biggr|_{\Sigma,\Jhatpar} \widehat R_m
\endaligned
\ee
in which $\Deltafluid^a$, $R^\parallel$ and $\widehat R_m$ are certain quartic (and lower) expressions in the variables $(\Jbb,\Kpar,\Lbb)$, given explicitly in the appendix in~\eqref{explicit-widecheck-R}, \eqref{Rparallel-def}, and~\eqref{widehatR-def}, respectively.
The proof of the following identity is postponed to \autoref{appendix=C-3}. 
Recall that the vector $e_1$ is the unit, spatial derivative vector~\eqref{eq:ourframe}, hence $e_1(\widehat J_m) = \Omega^{-1}\amdeux^{-1}\widehat J_{m,x}$.

\begin{proposition}[Entropy quasi-balance laws] 
\label{lem-gen-entro}
All sufficiently regular solutions to the Einstein--Euler system in torus symmetry satisfy the \textbf{quasi-balance laws}
\bel{equa-balancelaws}
\divuntrois \Bigl( |t|^{-1} \Omega^{-1} \, \Fbb(\Jbb) \Bigr)
= |t|^{-1} \Omega^{-1} \Bigl( \Fvee_2 (\Jbb) e_1(\widehat J_2) + \Fvee_3(\Jbb) e_1(\widehat J_3) \Bigr) + \Delta_\Fbb(\Jbb, \Kpar, \Pbb,\Qbb;\Omega,t) ,
\ee
in which $\Fbb$ is an arbitrary quasi-current and the source term~$\Delta_\Fbb$ is defined in~\eqref{equa-GrandDelta}.
\end{proposition}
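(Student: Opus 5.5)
The plan is to derive \eqref{equa-balancelaws} by a chain-rule computation in the variables $(\Sigma,\Jhatpar)$, combined with \autoref{def-quasi-cur} for the principal part. Everything else is organized into lower-order sources.

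\textbf{Step 1: expand the divergence.} Write out the left-hand side using \eqref{equa-divN}--\eqref{equa-def-div13}. Up to a sign convention on $F^0$, one gets
\[
\divuntrois\bigl(|t|^{-1}\Omega^{-1}\Fbb\bigr) = |t|^{-1}\Omega^{-2}\amdeux^{-1}\bigl(\amdeux\, F^0_t + F^1_x\bigr)\cdot\Omega^{-1}|t|^{-1}|t|+\ldots .
\]
The weights $|t|^{-1}\Omega^{-1}$ are chosen so that derivatives of $\Omega$ and $|t|$ cancel against those produced by the divergence. The only leftovers are an $\amdeux_t$ term. That term is replaced by $(t/2)(\Mbf^{00}-\Mbf^{11})\Omega^2\amdeux$ via \eqref{eq:T2-8-def}, giving $\sign(t)\,\Fcomp^0\bigl(\Ebf_0(\Jpar,\Kpar)+\tfrac{q_\Jbb}{2}(-\Jbb\cdot\Jbb)\bigr)$. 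This accounts for the first line of \eqref{equa-GrandDelta}.

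\textbf{Step 2: the chain rule.} Regard $\Fcomp^0$ as a function of $(2T^{00},2T^{01},\widehat J_2,\widehat J_3)$, as in \autoref{lem:Fvee-expr}. Then
\[
\partial_t\Fcomp^0=V_a\,\partial_t(2T^{0a})+\sum_m \partial_{\widehat J_m}\Fcomp^0\,\partial_t\widehat J_m,
\]
and $\Fcomp^1$ is treated similarly. The time derivatives of $2T^{0a}$ are then substituted using the curved transverse Euler equations \eqref{eq:T2-Euler-perp-def-a}--\eqref{eq:T2-Euler-perp-def-b}. These equations are written for $\Omega\Mbf^{a\bullet}$, so the geometric parts $\Ebf_0(\Kpar,\Lbb)$ and $\Ebf_1(\Lbb)$ must be split off using \eqref{eq:T2-Texpr}. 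Their derivatives are rewritten through the wave equations \eqref{eq:T2-1234-def}, \eqref{eq:T2-9101112-evol-def-a}--\eqref{eq:T2-9101112-evol-def-b} and the constraints \eqref{eq:221a}--\eqref{eq:221b}.

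\textbf{Step 3: sort the terms.} The time derivatives of $\widehat J_m$ are replaced using the transport equations \eqref{equa-jhat-transp}. Their sources, quadratic in $\Pbb,\Qbb,\Jperp$ times $\widehat J$, are the origin of the $\widehat R_m$ term. After these substitutions, the principal terms reproduce exactly the flat computation in Minkowski space. By \autoref{def-quasi-cur} and formula \eqref{Fvee-expr}, they collapse to $\Fvee_m e_1(\widehat J_m)$, with the prefactor $|t|^{-1}\Omega^{-1}$.

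The remaining zeroth-order terms fall into three groups:
\begin{itemize}
\item terms from the geometric and source parts of $\Mbf$, namely $\Msource$ and the $t$-dependent factors, which give $V_a\Deltafluid^a$;
\item the correction $-2V_aT^{0a}$ multiplying the $\amdeux_t$ coefficient, which arises because $T^{0a}$ is itself weighted;
\item terms involving $\Jpar,\Kpar$ coupled to $\Jperp$ through the twist equations, which yield the $R^\parallel$ term with the factor $\Vbb\cdot\Jperp/(-\Jperp\cdot\Jperp)$. This factor comes from inverting the $(J_0,J_1)$ dependence at fixed $-\Jbb\cdot\Jbb$.
\end{itemize}

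\textbf{Main obstacle.} The hard step is this bookkeeping: verifying that every term containing derivatives of the geometry ($\Pbb,\Qbb,\Kpar,\Omega,\amdeux$) either cancels or becomes algebraic. The two curl-type contributions in \eqref{eq:T2-fluid23-def-a}--\eqref{eq:T2-fluid23-def-b} are the delicate ones. My approach would be to do this first in the special case $\Jpar=\Kpar=0$, where the $R^\parallel$ and $\widehat R_m$ terms vanish, and then track the extra terms by linearity in the explicit expressions \eqref{explicit-widecheck-R}, \eqref{Rparallel-def}, \eqref{widehatR-def}. These expressions are then taken as definitions of whatever is left over, and the proof reduces to checking that they contain no derivatives.
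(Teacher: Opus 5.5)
Your overall strategy is the paper's (Appendix~\ref{appendix=C-3}). Step~1 correctly produces $\sign(t)\,\Fcomp^0\bigl(\Ebf_0(\Jpar,\Kpar)+\frac{q_\Jbb}{2}(-\Jbb\cdot\Jbb)\bigr)$. Step~2 is the derivation of the fluid-only transverse equations of \autoref{lem-Euler-defined}, with algebraic sources $\Deltafluid^a$. There you also need the lapse equations \eqref{eq:evollambda-def} and \eqref{eq:T2-567-def}, which generate the products $\Mbf^{01}\Mbf^{01}$ and $\Mbf^{11}\Mbf^{00}$ in \eqref{explicit-widecheck-R0}. The $\amdeux$-weight on $T^{0a}$ does give the $-2V_aT^{0a}$ correction.

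The step that fails as written is your third bullet. The $R^\parallel$ term does not come from twist couplings. Those couplings, $(J_2K_2+J_3K_3)$ and $\Ebf_0(\Kpar)$, already sit inside $\Deltafluid^a$ through $R_0,R_1$, and neither $\widehat R_m$ nor $R^\parallel$ contains $\Kpar$. In fact, with your chain rule at fixed $(T^{00},T^{01},\Jhatpar)$, no $R^\parallel$ term appears at all. The only contribution of the transport sources is $\frac{1}{|t|\Omega}\sum_m \partial_{\widehat J_m}\Fcomp^0|_{T^{00},T^{01}}\,\widehat R_m$, whereas \eqref{equa-GrandDelta} uses $\partial_{\widehat J_m}\Fcomp^0|_{\Sigma,\Jhatpar}$. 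Since $R^\parallel$ is fixed by \eqref{Rparallel-def}, you cannot take it as a definition of whatever is left over; you need an identity.

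The missing identity is a change of frozen variables. Take $V_a=\partial\Fcomp^0/\partial(2T^{0a})$ as in \autoref{lem:Fvee-expr}. From \eqref{J0J1-Sigma-Jhatpar} you have $\partial_{\widehat J_m}T^{00}|_\Sigma=\hNumb_\Jbb^2\widehat J_m/(1-\Sigma_1^2)$ and $\partial_{\widehat J_m}T^{01}|_\Sigma=-\Sigma_1\,\partial_{\widehat J_m}T^{00}|_\Sigma$. From \eqref{Rparallel-def} you have $R^\parallel=J_0\sum_m\hNumb_\Jbb^2\widehat J_m\widehat R_m$. With $\Vbb\cdot\Jperp=-V_0J_0+V_1J_1$, one finds
\[
\sum_{m=2,3}\Bigl(\partial_{\widehat J_m}\Fcomp^0\big|_{T^{00},T^{01}}-\partial_{\widehat J_m}\Fcomp^0\big|_{\Sigma,\Jhatpar}\Bigr)\widehat R_m
=-\frac{2(V_0-\Sigma_1V_1)}{1-\Sigma_1^2}\sum_{m=2,3}\hNumb_\Jbb^2\,\widehat J_m\widehat R_m
=\frac{2\,\Vbb\cdot\Jperp}{-\Jperp\cdot\Jperp}\,R^\parallel .
\]
After division by $|t|\Omega$, your transport term is therefore exactly the last line of \eqref{equa-GrandDelta}.

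The paper reaches the same point in the opposite order. It works in $(\Sigma,\Jhatpar)$ from the start and uses the same proportionality of the $\widehat J_m$-derivatives of $T^{ab}$ to remove $e_1(\widehat J_m)$ from the transverse equations; this is where $-J^aR^\parallel/(-\Jperp\cdot\Jperp)$ enters \eqref{eiSigmab}. It then cancels the $e_1(\Sigma_b)$ terms via \eqref{Fcompatibility}. With this correction, your fixed-$T^{0\bullet}$ bookkeeping is a valid and slightly shorter variant. The same proportionality also shows that the total coefficient of $e_1(\widehat J_m)$ does not depend on which variables are frozen, so it agrees with $\Fvee_m$ in \eqref{Fvee-expr}.
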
 

We will use the following terminology.  For regular solutions these terms are smooth, but in the context of tame Einstein--Euler flows, they will be relaxed to signed Radon measures.

\begin{definition}
The terms $|t|^{-1}\Omega^{-1}\Fvee_m (\Jbb) e_1(\widehat J_m)$ ($m=2,3$) in \eqref{equa-balancelaws} are referred to as the \textbf{shear-induced measures}, and represent the loss of conservation of the balance law under consideration. 
\end{definition}


\subsection{The notion of dominated quasi-current}
\label{section=3-5}

\subsubsection{Equation of state for isentropic fluids}

Weak solutions cannot satisfy all of the entropy balance laws we exhibited in \autoref{section=3}: a selection of only four balance laws shall serve as the main set of evolution equations for the fluid variable $\Jbb$, which we may write as equalities.  Additionally, other future-convex entropy currents can also be used, but only as entropy inequalities. 

We treat compressible fluids whose equation of state depends on a single thermodynamic variable, namely the proper energy-density; the physical entropy variable is thus assumed to be a constant. As long as the solutions remain sufficiently regular, for both relativistic and non-relativistic flows, the physical entropy (usually denoted by $S$) is constant; however for weak solutions, the constancy of the physical entropy is regarded as an approximation of a flow described by two thermodynamical variables. 
On the other hand, a drawback of the method of compensated compactness for fluids, already pointed out by DiPerna in the 80's, that it does not apply to the Euler equation when the pressure depends upon two thermodynamical variable. On the other hand, this method has the distinguished advantage to apply to \emph{fluid flow with large data} and thus to cover realistic hyperbolic models from continuum physics. 

\subsubsection{Our proposal in this paper}

Here, we extend to relativistic flows the choice that was made by DiPerna in his pioneering work on the compensated compactness method for non-relativistic fluid flows and, in order for the theory proposed in the present paper to be consistent with DiPerna's theory in the non-relativistic limit (cf.~\autoref{appendix=D-2}), it is natural to treat the stress-energy equation associated with the vector field $\Mbf^{0 \bullet}$ as a \emph{mathematical entropy} inequality. To proceed, we need a convexity property, stated now. With this standpoint, the variable
$\Numb^0 = - \hNumb_\Jbb J_0$ and stress-tensor components $T^{0m}=(-1/2)J_0J_m$, $m=1,2,3$,
play the role of  the \emph{principal variables}\footnote{A different choice of entropy inequality can be made, which will be explored in \autoref{section=10}, below.}. 
 As a result, we will need the following property,
whose proof is postponed to \autoref{appendix=D-4}. 

\begin{proposition}[Convexity of the mass-energy of the fluid]
\label{lem-convexM}
The scalar field $\Jbb \mapsto \Mbf^{00}(\Jbb, 0, 0)$ can be expressed in terms of the principal variables
\bel{equa-hdhgd4}
Z = (Z_l)_{0 \leq j \leq 3} := \Bigl(\Numb^0, \ - \frac{1}{2} J_0 J_1, \ - \frac{1}{2} J_0 J_2, \ - \frac{1}{2} J_0 J_3 \Bigr)
\ee 
and is a \emph{convex function} in these variables.
\end{proposition}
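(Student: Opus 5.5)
The plan is to reduce \autoref{lem-convexM} to a classical fact about relativistic isentropic fluids: the energy density is a convex function of the conserved density and the momentum. We prove it by Legendre duality.

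\textbf{Step 1: rewrite $\Mbf^{00}(\Jbb,0,0)$ and $Z$ in physical variables.} Use $-\Jbb\cdot\Jbb=2(\mu+p)$ and $q(\mu+p)=\mu-p$. Then
\[
\Mbf^{00}(\Jbb,0,0)=\Ebf_0(\Jbb)+\tfrac{q_\Jbb}{2}(-\Jbb\cdot\Jbb)=J_0^2-2p=2T^{00}.
\]
Write $u=\gamma(e_0+v^a e_a)$ with $\gamma=(1-|v|^2)^{-1/2}$ and $v\in\RR^3$, $|v|<1$. Then $J_m=\sqrt{2(\mu+p)}\,u_m$ gives
\[
\Numb^0=\Numb\gamma,\qquad Z_a=-\tfrac12 J_0J_a=(\mu+p)\gamma^2 v^a,\qquad \Mbf^{00}=2\bigl((\mu+p)\gamma^2-p\bigr).
\]
Set $D:=\Numb^0$, $S:=(Z_1,Z_2,Z_3)$ and $E:=\tfrac12\Mbf^{00}$. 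The map $(\mu,v)\mapsto(D,S)$ is a bijection from $(0,\infty)\times\{|v|<1\}$ onto the cone $\{D>0\}$ of admissible states. This reproduces the one-to-one property~\eqref{equa-324}. A positive factor $2$ is irrelevant for convexity.

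\textbf{Step 2: identify the gradient and the Legendre dual.} Introduce the enthalpy per particle $h=(\mu+p)/\Numb$. Since $d\Numb/\Numb=d\mu/(\mu+p)$, one has $dp=\Numb\,dh$ and $dh/h=dp/(\mu+p)$.

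A direct differentiation, whose routine algebra I would not grind through here, should give
\[
dE=\frac{h}{\gamma}\,dD+v\cdot dS .
\]
So the entropy multipliers are $(a,v)$ with $a=h/\gamma$, consistent with the multiplier formula in \autoref{lem:Fvee-expr}.

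The Legendre transform is then
\[
E^*(a,v)=aD+v\cdot S-E=(\mu+p)+(\mu+p)\gamma^2|v|^2-(\mu+p)\gamma^2+p=p .
\]
Hence $E^*(a,v)=\mathcal P\bigl(a\,(1-|v|^2)^{-1/2}\bigr)$, where $\mathcal P(h)$ is the pressure as a function of enthalpy. Strict convexity of $E$ in $(D,S)$ is equivalent to strict convexity of $E^*$ in $(a,v)$, given that the gradient map is a bijection (Step 1).

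\textbf{Step 3: convexity of $E^*$ (the main obstacle).} We have $\mathcal P'(h)=\Numb$ and
\[
\mathcal P''(h)=\frac{d\Numb}{dh}=\frac{\Numb}{p'(\mu)\,h}>0 .
\]
The inner function $H(a,v)=a(1-|v|^2)^{-1/2}$ is \emph{not} convex: its Hessian has a negative direction mixing $a$ and the radial direction of $v$. So the composition must be handled directly. By the rotational symmetry in $v$, the Hessian
\[
\mathcal P''\,\nabla H\otimes\nabla H+\mathcal P'\,\nabla^2 H
\]
block-diagonalizes into the following pieces.
\begin{enumerate}
\item Transverse directions of $v$. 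The contribution is $\mathcal P' a\gamma>0$.
\item The $2\times2$ block in $(a,|v|)$. With $r=|v|$, its determinant reduces to a positive multiple of
\[
\mathcal P''\bigl(\gamma^2\cdot a^2\gamma^6r^2\ \text{terms}\bigr)-\ldots ,
\]
which after simplification should be a positive multiple of $h\,\mathcal P''/\mathcal P'-1=1/p'(\mu)-1$.
\end{enumerate}
The last quantity is positive exactly by the hyperbolicity condition $0<p'(\mu)<1$ in~\eqref{hyperbolic-eos-1}. The diagonal entry $\mathcal P''\gamma^2>0$, so the block is positive definite. I expect this determinant computation to be the delicate point. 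I would first try to simplify it via the boost invariance: working in the rest frame $v=0$ and transporting by Lorentz boosts, which act linearly on $(D,S)$ together with $E$.

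\textbf{Step 4: vacuum boundary.} At $\mu=0$ the states degenerate to $D=0$, $S$ null. Extend $E$ by continuity using $\Numb(0)=0$ and the bounds~\eqref{near-vacuum-Numb}. Convexity on the closed cone then follows from convexity on the interior and lower semicontinuity.
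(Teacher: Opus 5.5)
Your Legendre-duality route works and is genuinely different from the paper's. The paper takes no second derivatives. It writes $\Mbf^{00}(\Jbb,0,0)=\sup_{y\ge0}Q_y(\rho,\vec Z)$ with $Q_y=\sqrt{y\hNumb(y)^{-2}\rho^2+4|\vec Z|^2}-2p(y)$. Each $Q_y$ is the Euclidean norm of a linear image of $Z$ minus a constant, hence convex. The supremum is attained exactly at the physical value $y=-\Jbb\cdot\Jbb$, thanks to the monotonicity of $\phi_\rho$, which is the same fact that gives injectivity of $\Jbb\mapsto Z$. That argument needs only first-order monotonicity information on $p$ and $\hNumb$, and it reaches the vacuum boundary $\rho=0$ essentially for free. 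Your approach sets $D=Z_0$, $S=\vec Z$, $E=\tfrac12\Mbf^{00}$, with the pressure as dual potential in $(a,v)=(h/\gamma,v)$. It costs a Hessian computation, an inverse-function argument and a separate continuity step at vacuum. In return it gives strict convexity, the explicit gradient $\nabla_Z E=(h/\gamma,v)$, and the explicit Hessian $\Hess_Z E=(\nabla^2_{(a,v)}p)^{-1}$ at corresponding points; that Hessian is exactly what enters the dominance condition \eqref{equa-condi4}. Step 2 is cleanest done in reverse: $\nabla_{(a,v)}\mathcal P(a\gamma)=(\Numb\gamma,\ \Numb a\gamma^3 v)=(D,S)$, so $E=aD+v\cdot S-p$ and $dE=a\,dD+v\cdot dS$ follow at once.

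When you write it out, fix the following points.
\begin{itemize}
\item The transverse eigenvalue is $\Numb a\gamma^3=(\mu+p)\gamma^2$, not $\mathcal P'a\gamma$.
\item In the $(a,|v|)$ block, $F_{aa}=\mathcal P''\gamma^2$. The determinant is exactly $\gamma^6\Numb^2\bigl(1/p'(\mu)-|v|^2\bigr)$, using $1+|v|^2\gamma^2=\gamma^2$; it is not a multiple of $1/p'-1$. It is still positive since $p'<1$ and $|v|<1$.
\item When $h(0)=\lim_{\mu\to0}(\mu+p)/\Numb>0$ (the polytropic case $p'(0)=0$), the dual domain $\{a\gamma>h(0),\ |v|<1\}$ is not convex. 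So ``strict convexity of $E^*$'' is the wrong notion. Argue instead that $\nabla^2E=(\nabla^2E^*)^{-1}$ is positive definite on the convex half-space $\{D>0\}$.
\item That reformulation makes your unproved surjectivity claim in Step 1 load-bearing. It is easy: at fixed $D$ one has $d\log|S|/d|v|=(1-p')|v|\gamma^2+1/|v|>0$. Moreover $|S|\to+\infty$ as $|v|\to1$ because $p'(0)<1$.
\item The boost shortcut fails. Boosts act linearly on $(N^\alpha,T^{\alpha\beta})$, but they mix $(D,S,E)=(N^0,T^{0i},T^{00})$ with $N^i$ and $T^{ij}$.
\item At vacuum you need continuity, not merely lower semicontinuity. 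It does hold, since $E-|S|=(\mu+p)/(1+|v|)-p\in[0,\mu]$ and $\Numb\le D$.
\end{itemize}
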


We will then focus on the list of quasi-currents that, in some quantitative sense, are dominated by $\Mbf^{0\bullet}$. More precisely we introduce the following definition.

\begin{definition}
\label{def:quasi-current-dominated}
1. A \textbf{weak-entropy quasi-current} is a current that vanishes on the vacuum, namely 
\bel{equa-zero-vacuum}
\lim_{\Jbb \cdot \Jbb \to 0} \Fbb(\Jbb) = 0
\qquad \text{at fixed } (J_1/J_0 , \widehat J_2, \widehat J_3) \in (-1,1) \times \RR^2 .
\ee

2. Consider $\Fbb^*= \Mbf^{0\bullet}(\Jbb, 0, 0)$, which is known to be a future-convex, weak-entropy quasi-current, referred to as the \textbf{reference entropy current}. Another quasi-current~$\Fbb$ is said  to be \textbf{dominated} by~$\Fbb^*$ provided the Hessian matrix is bounded above and below by the (non-negative) Hessian matrix of the reference entropy current, as follows: 
\bel{equa-condi4}
\pm \Hess_Z( \Fcomp^0) \lesssim \Hess_Z( F^{*0}), 
\ee
where $Z$ denotes the principal variables \eqref{equa-hdhgd4}.
\end{definition}

In~\cite{LeFlochLeFloch-next}, we will provide a parametrization and establish the following properties of weak-entropy quasi-currents. 

\begin{lemma}
\label{lemma:quasi-current-dominated}
Consider the reference weak-entropy current $\Fbb^*= \Mbf^{0\bullet}$. Any quasi-current~$\Fbb$ that is \textbf{dominated} by~$\Fbb^*$ also satisfy the following estimates. 
\bei 

\item $\Fbb$ is a weak-entropy, in the sense \eqref{equa-zero-vacuum}.

\item More precisely, $\Fbb$ is bounded by $\Fbb^*$ in the sense that
\bel{equa-condi1}
\aligned 
\bigl|\Fcomp^0(\Jbb) \pm \Fcomp^1(\Jbb)\bigr| & \lesssim \Fcomp^{*0} (\Jbb)\pm\Fcomp^{*1} (\Jbb) .
\endaligned
\ee

\item Its shear-induced factors are bounded as 
\bel{equa-condi2}
|\Fvee_2 (\Jbb)| + |\Fvee_3 (\Jbb)| \lesssim -\Jbb\cdot\Jbb .
\ee

\item It obeys bounds on its gradient, stated in terms of the entropy multiplier~$\Vbb$ (cf.\ \autoref{lem:Fvee-expr}):
\be
\aligned
\Bigl| \Fcomp^0 - 2 V_0 T^{00} - 2 V_1 T^{01} \Bigr|
+ \bigl| \Vbb\cdot\Jperp \bigr| & \lesssim  -\Jperp\cdot\Jperp ,
\\
|V^0| + |V^1| + \Biggl| \frac{\del\Fcomp^0(\Sigma,\Jhatpar)}{\del\widehat J_2} \Biggr| + \Biggl| \frac{\del\Fcomp^0(\Sigma,\Jhatpar)}{\del\widehat J_3} \Biggr| & \lesssim 1 .
\endaligned
\ee

\eei
\end{lemma}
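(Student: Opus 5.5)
The plan is to work entirely in the variables $(Z,\Jhatpar)$, or equivalently $(\Sigma,\Jhatpar)$, and to treat $\Jhatpar$ as a frozen parameter wherever possible. Since $\Fbb$ is a quasi-current, at fixed $\Jhatpar$ it is an ordinary entropy pair for the $2\times 2$ system \eqref{Sigmat}. The first step is to determine what the domination hypothesis \eqref{equa-condi4} says in these coordinates. Along each ray $\Sigma_0\to 0$ with $(J_1/J_0,\Jhatpar)$ fixed, the Hessian of $F^{*0}=\Mbf^{00}(\Jbb,0,0)$ in $Z$ scales like a computable power of $\Sigma_0$, as follows from \autoref{lem-convexM} together with the near-vacuum bounds \eqref{near-vacuum-Numb} on $\hNumb$. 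Integrating the two-sided Hessian bound twice along such segments then bounds $\Fcomp^0$ by $F^{*0}$, up to an affine function of $Z$.

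The next step is to remove this affine part. The only affine functions of $Z$ that are time components of quasi-currents are the trivial ones: constants, $\Numb^0$, and $T^{0m}$. I would normalize $\Fbb$ by subtracting these, which is how the paper already excludes trivial currents, and then argue that the weak-entropy normalization is forced. This gives the weak-entropy property \eqref{equa-zero-vacuum} and the bound $|\Fcomp^0|\lesssim F^{*0}$.

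To obtain the light-cone bound \eqref{equa-condi1} on $\Fcomp^0\pm\Fcomp^1$, I would use the entropy compatibility relation in characteristic form given in \autoref{lem:Fvee-expr}. Integrating $d(\Fcomp_0\pm\Fcomp_1)=(J_0\pm J_1)(V_0\mp V_1)\,d(J_0\pm J_1)$ at fixed $-\Jperp\cdot\Jperp$ and $\Jpar$, and doing the same for $\Fbb^*$, reduces the claim to the bound $|V_0\mp V_1|\lesssim V^*_0\mp V^*_1$ on the multipliers.

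That multiplier bound, together with $|V^0|+|V^1|\lesssim1$, comes from integrating the Hessian domination once. The gradient of $\Fcomp^0$ in $Z$ is exactly $\Vbb$ together with $\del\Fcomp^0/\del\widehat J_m$ up to the chain rule, and the gradient of $F^{*0}$ is bounded: $F^{*0}$ is homogeneous of degree one in $Z$ modulo sub-quadratic $q$-terms, because $F^{*0}\sim J_0^2$ while $Z\sim J_0^2$. With $\Vbb$ and $\del\Fcomp^0/\del\widehat J_m$ bounded, the remaining estimates follow. The bound on $|\Fcomp^0-2V_0T^{00}-2V_1T^{01}|$ is a Legendre-type remainder of order $-\Jperp\cdot\Jperp$, and $|\Vbb\cdot\Jperp|\lesssim-\Jperp\cdot\Jperp$ follows by bounding $\Vbb$ tangentially to the light cone.

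The shear bound \eqref{equa-condi2} then follows from \eqref{Fvee-expr}. The derivative $\del_{\widehat J_m}$ is taken at fixed $\Sigma$, and $\Fcomp^1+(J_1/J_0)\Fcomp^0$ vanishes to order $\Sigma_0$ at vacuum with a uniform constant by the preceding bounds. This gives $|\Fvee_m|\lesssim-\Jbb\cdot\Jbb$.

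\emph{Main obstacle.} The delicate step is the uniformity near the vacuum and near the light cone $|J_1/J_0|\to1$. There the Hessian of $F^{*0}$ degenerates, which makes the double integration of the domination inequality lossy, and the polytropic case $q(0)=1$ differs from the isothermal case. I would handle this by first proving the estimates for the isothermal law, where $\vNumb=(-\Jbb\cdot\Jbb)^{\kappa}\Jbb$ is explicit, and then using the asymptotic conditions \eqref{equa-asym} to compare general laws with it.
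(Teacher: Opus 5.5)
The paper gives no proof of this lemma; it defers it to the companion work. I therefore judge your plan on its own terms.

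\textbf{Main gap: the normalization cannot be forced.} The failing step is ``argue that the weak-entropy normalization is forced''. Domination only constrains $\Hess_Z\Fcomp^0$. Adding to $\Fbb$ any current whose time component is affine in $Z$ preserves the hypothesis, so nothing can make the affine part vanish, and this matters.
\begin{itemize}
\item Constant currents already contradict the weak-entropy claim.
\item Take $\Fbb=\vNumb$, which the paper keeps as a genuine balance law, not a trivial current. It has $\Fcomp^0=Z_0$, hence zero Hessian, and it vanishes at vacuum.
\item For a fluid at rest, $\Numb^0=\Numb(\mu)$, $\Fcomp^{*0}\pm\Fcomp^{*1}=2\mu$, and $V_0=\Numb(\mu)/(2(\mu+p))$.
\item Whenever $p'(0)>0$ (isothermal behavior at vacuum, an admissible law), $\Numb(\mu)\sim\mu^{(1+q(0))/2}$ with $q(0)<1$.
\end{itemize}
Hence \eqref{equa-condi1}, the bound $|V^0|\lesssim1$, and the first gradient bound all fail for $\vNumb$ as $\mu\to0$. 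As stated, the lemma needs an extra normalization: no constant component and no $\vNumb$ component, i.e.\ conditions at vacuum on $\Fcomp^0$ and its first derivative. You must impose this as a hypothesis. Subtracting the affine part proves the bounds for $\Fbb$ minus that part, not for $\Fbb$.

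\textbf{Second gap: the derivative estimates rest on a false scaling.} The quantity $Z_0=\Numb^0=\hNumb|J_0|$ does not behave like $J_0^2$. Along rest states, $\partial_{Z_0}\Fcomp^{*0}=2(\mu+p)/\Numb$. This is unbounded as $\mu\to+\infty$ by \eqref{hyperbolic-eos-2}, and it tends to $0$ at vacuum when $p'(0)>0$. So integrating the domination once gives no $O(1)$ bound on $\nabla_Z\Fcomp^0$. Moreover, $\nabla_Z\Fcomp^0$ is not $(\Vbb,\partial_{\widehat J_m}\Fcomp^0)$: the map to $(T^{00},T^{01},\widehat J_m)=(\tfrac12\Fcomp^{*0},Z_1,2Z_m/Z_0)$ is a nonlinear change of variables, singular at vacuum.

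The structure to exploit is the following.
\begin{itemize}
\item Since $\widehat J_m=2Z_m/Z_0$, each level set of $\Jhatpar$ is a half-plane through the origin in $Z$-space, on which $T^{00}=\tfrac12\Fcomp^{*0}$ and $T^{01}=Z_1$.
\item On it, $V_0=\partial_e\Fcomp^0/\partial_e\Fcomp^{*0}$ with $e=(1,0,\tfrac12\widehat J_2,\tfrac12\widehat J_3)$ at fixed $Z_1$.
\item Thus $|V|\lesssim1$ is a ratio bound, obtained by integrating $|\partial_e^2\Fcomp^0|\lesssim\partial_e^2\Fcomp^{*0}$ along these straight lines starting from the vacuum.
\item The boundary value of $\partial_e\Fcomp^0$ at the vacuum is exactly where the normalization enters, since $\partial_e\Numb^0\equiv1$.
\end{itemize}
Your rays at fixed $(J_1/J_0,\Jhatpar)$ are curved in $Z$, so Taylor's formula with $\Hess_Z$ does not apply along them.

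\textbf{Third gap: the shear step.} Your premise does not follow from the earlier bounds. \eqref{equa-condi1} only yields $|\Fcomp^1+\Sigma_1\Fcomp^0|\lesssim\Sigma_0+|\Jpar|^2$, and $|\Jpar|^2=\hNumb^2|\Jhatpar|^2\gg\Sigma_0$ near vacuum when $q(0)<1$. Besides, a size bound on a function does not control its $\widehat J_m$-derivative at fixed $\Sigma$. So \eqref{equa-condi2} requires its own derivative argument.
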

 

\section{The class of tame Einstein--Euler flows}
\label{section=4}

\subsection{Volume forms and integrability}
\label{section=4-1}

\subsubsection{The  conformal length}

We now introduce the notion of weak solution to the Einstein--Euler equations. In the first part of the section, the lapse function $\Omega>0$ is assumed to be absolutely continuous; cf.~\autoref{def-weaksolu}. A weaker notion of solution, requiring only $\BV$ regularity for $\Omega$, is introduced later in \autoref{section=4-5}.  Throughout, all functions and tensor fields under consideration are assumed to be $\Tbb^2$-symmetric on~$\Tbb^3$. Our formulation relies in an essential way on the hyperbolic and nonlinear structure of the field equations exhibited in~\autoref{section=2}, on the entropy structure uncovered in~\autoref{section=3}, and, later on, on the \emph{a priori} estimates established in Sections~\ref{section=6} and~\ref{section=7}.

We begin with the quotient $\Tbb^3/\Tbb^2\simeq \Sbb^1$ at fixed areal time $t\in\Interval$. Fixing a coordinate~$x$ on~$\Sbb^1$, we write the conformal length measure in the form\footnote{Here, $d_x$ denotes the $dx$ component of the exterior derivative in the basis $dt,dx$.}
\bel{equaDell}
d_x\ell=\amdeux\,dx, \qquad \amdeux>0.
\ee
Equivalently, $\amdeux$ is the density of the measure $d_x\ell$ with respect to the Lebesgue measure~$dx$. We also introduce the \emph{conformal length} function on the universal cover of $\Sbb^1=\RR/\ZZ$,
\bel{equa-ell}
\ell(t,x):=\int_0^x \amdeux(t,y)\,dy, \qquad x\in\RR .
\ee
For each fixed $t\in\Interval$, the function $x\mapsto \ell(t,x)$ is increasing, and is not periodic on~$\Sbb^1$; instead, $\ell(t,x+1)=\ell(t,x)+\ell(t,1)$.
We often consider $\ell$ as a function on $\Sbb^1$ by identifying $\Sbb^1\simeq[0,1)\subset\RR$, as otherwise norms of~$\ell$ would be infinite.
In the sequel, we will use interchangeably the density $\amdeux$, the measure $d_x\ell$, and the primitive $\ell$, all defined on the quotient circle $\Sbb^1\simeq \Tbb^3/\Tbb^2$.


\subsubsection{Notation} 

The solutions under consideration are defined on an open time interval $\Iopen$ whose closure does not contain~$0$. We introduce the spacelike, spacetime, and timelike volume forms by
\bel{equaDVtous}
\aligned
\dVtrois & \coloneqq |t|\,\Omega\, d_x\ell\,dy\,dz
= |t|\,\Omega\,\amdeux\,dx\,dy\,dz,
\\
\dVuntrois & \coloneqq \Omega\,dt\,\dVtrois,
\qquad
\dVundeux \coloneqq |t|\,\Omega\,dt\,dy\,dz.
\endaligned
\ee
These are, respectively, the induced volume forms on the spacelike hypersurfaces of constant time~$t$, on spacetime, and on the timelike hypersurfaces of constant $x\in\Sbb^1$. Unless otherwise specified, all integrability properties stated below are understood with respect to these volume forms.

We find it convenient to adopt an essentially coordinate-independent presentation. Accordingly, we assume that the spacetime is foliated by spacelike slices labelled by a time parameter $t\in\Interval$, together with timelike hypersurfaces labelled by a space parameter $x\in\Sbb^1$. The parameter~$x$ may be regarded only as a label for the one-dimensional quotient $\Tbb^3/\Tbb^2\simeq\Sbb^1$, rather than as a preferred global coordinate. Hence, in the following, we systematically use the notation~\eqref{equaDVtous}.

As a result, for each $t\in\Interval$, the torus~$\Tbb^3$ is equipped with a $\Tbb^2$-symmetric volume form $\dVtrois(t)$, assumed to be absolutely continuous with respect to a fixed smooth reference density, weakly-star measurable in~$t$, and uniformly bounded in total mass. Accordingly, we write 
\be
\dVtrois \in L^\infty(\Interval,\Measac(\Tbb^3)),
\qquad
\sup_{t\in\Interval}\int_{\Tbb^3}\dVtrois(t)<+\infty.
\ee
Equivalently, in standard coordinates \((x,y,z)\) on \(\Tbb^3\), we have 
\be
\dVtrois(t)=f(t,\cdot)\,dx\,dy\,dz
\ee
for some measurable function \(f\geq 0\) satisfying
$f\in L^\infty\bigl(\Interval,L^1(\Tbb^3,dx\,dy\,dz)\bigr)$.

 
\subsubsection{Integrability on spacelike slices}

For each \(t\in\Interval\), we work on the measured space \((\Tbb^3,\dVtrois(t))\). For \(p\in[1,+\infty)\), the weighted Lebesgue space \(L^p(\Tbb^3,\dVtrois(t))\) consists of all measurable functions \(F:\Tbb^3\to\RR\) such that
\be
\|F\|_{L^p(\Tbb^3,\dVtrois(t))}
\coloneqq
\biggl(\int_{\Tbb^3} |F|^p\,\dVtrois(t)\biggr)^{1/p}
<+\infty.
\ee
All spatial integrability properties below are understood with respect to \(\dVtrois(t)\). We also use the space \(L^\infty(\Tbb^3)\), which is independent of the underlying volume form. For \(\Tbb^2\)-symmetric functions, it is naturally identified with \(L^\infty(\Sbb^1)\).

An important class of \(\Tbb^2\)-invariant functions \(F:\Mcal\to\RR\) consists of those whose restriction to each spacelike slice \(\{t\}\times\Tbb^3\) belongs to \(L^p(\Tbb^3,\dVtrois(t))\), with norm uniformly bounded in time. We denote this space by
$L^\infty(\Interval,L^p(\Tbb^3,\dVtrois))$,
and equip it with the norm
\be
\|F\|_{L^\infty(\Interval,L^p(\Tbb^3,\dVtrois))}
\coloneqq
\sup_{t\in\Interval}\|F\|_{L^p(\Tbb^3,\dVtrois(t))}.
\ee
Such functions can also be integrated over spacetime with respect to the volume form
$\dVuntrois = \Omega\,dt\,\dVtrois$.
Assuming that \(\Omega\) is bounded from above, one has
\be
\aligned
\|F\|_{L^p(\Interval\times\Tbb^3,\dVuntrois)}
& \coloneqq
\Bigl(\int_{\Interval\times\Tbb^3}|F|^p\,\dVuntrois\Bigr)^{1/p}
\\
& \leq |\Interval|^{1/p}
\Bigl(\sup_{\Interval\times\Tbb^3}\Omega\Bigr)^{1/p}
\,
\|F\|_{L^\infty(\Interval,L^p(\Tbb^3,\dVtrois))}.
\endaligned
\ee

 
\subsubsection{Integrability on timelike slices}

We now turn to integration over timelike hypersurfaces, using the volume form \(\dVundeux\). For each \(x\in\Sbb^1\), we consider the measured space
$(\Interval\times\Tbb^2,\dVundeux(x))$.
Given \(p\in[1,+\infty)\), a measurable function \(F:\Interval\times\Tbb^2\to\RR\) is said to belong to \(L^p(\Interval\times\Tbb^2,\dVundeux(x))\) if
\be
\|F\|_{L^p(\Interval\times\Tbb^2,\dVundeux(x))}
\coloneqq
\Bigl(\int_{\Interval\times\Tbb^2}|F|^p\,\dVundeux(x)\Bigr)^{1/p}
<+\infty.
\ee
We then define
$L^\infty(\Sbb^1,L^p(\Interval\times\Tbb^2,\dVundeux))$
as the space of measurable functions whose restriction to each timelike slice belongs to \(L^p(\Interval\times\Tbb^2,\dVundeux(x))\), with norm uniformly bounded in \(x\), namely
\be
\|F\|_{L^\infty(\Sbb^1,L^p(\Interval\times\Tbb^2,\dVundeux))}
\coloneqq
\sup_{x\in\Sbb^1}
\|F\|_{L^p(\Interval\times\Tbb^2,\dVundeux(x))}.
\ee
It will also be convenient, at times, to suppress the curved volume forms and work instead with the standard Lebesgue measure on \(\Tbb^3\). Thus, we shall use spaces such as \(L^\infty(\Interval,L^1(\Tbb^3))\), and, for \(\Tbb^2\)-symmetric functions, the corresponding space \(L^\infty(\Interval,L^1(\Sbb^1))\). We also use standard Sobolev spaces such as
$L^\infty(\Interval,H^1(\Tbb^3,\dVtrois))$ and 
${L^\infty(\Interval,\BVac(\Tbb^3)}$, 
whose elements are spatially integrable together with their first-order derivatives.

In addition, $\BV(\Tbb^3)$ denotes the space of functions of bounded variation on~$\Tbb^3$, while $\BVac(\Tbb^3)\subset \BV(\Tbb^3)$ denotes the subspace of functions whose distributional derivatives are absolutely continuous with respect to the Lebesgue measure.  We will correspondingly use $L^\infty(\Interval,\BV(\Tbb^3))$ and $L^\infty(\Interval,\BVac(\Tbb^3))$ for functions with uniform bounds in time on their total spatial variation.
Observe that a $\Tbb^2$-invariant function on~$\Tbb^3$ is in $W^{1,1}(\Tbb^3,dx\,dy\,dz)$ if and only if it is in $\BVac(\Tbb^3)$, and the total variation of such a function~$F$ is given explicitly as
\be
\Var(F) = \int_{\Sbb^1} | F_x |\,dx = |t|^{-1} \int_{\Sbb^1} |e_1(F)| \, \dVtrois .
\ee


\subsection{The notion of \(\Phi\Psi\) flow with finite energy}
\label{section=4-2}

\subsubsection{Relevant variables}

It is useful to distinguish between the regularity and integrability assumptions needed to \emph{define} weak solutions to the Einstein--Euler equations in first-order form, and the further properties that are later shown to \emph{follow} from the equations.

We work on a compact interval \(\Interval\subset\RR\) not containing \(0\), and consider flows on \(\Mcal=\Interval\times\Tbb^3\). In addition to the conformal length measure \(d_x\ell\) and the lapse \(\Omega\), the geometry is described by the variables $\Pbb=(P_0,P_1)$, $\Qbb=(Q_0,Q_1)$, $\Kpar=(K_2,K_3)$, 
while the fluid variables are $\Jperp=(J_0,J_1)$ and $\Jpar=(J_2,J_3)$.  Throughout, the causality conditions~\eqref{eq:causal} and the periodicity constraints~\eqref{equa-pericons} are assumed. The weak regularity assumptions are stated below in~\autoref{weakdefinitionT2}.
\bei

\item \emph{Fluid variables.} The fluid momentum is denoted by $\Jbb=(\Jperp,\Jpar)$,
where \(\Jperp\) and \(\Jpar\) are respectively the components orthogonal and tangent to the \(\Tbb^2\)-symmetry orbits.

\item \emph{Essential geometric variables.} The spacetime geometry is described by the first-order variables $\Pbb,\Qbb:\Interval\times\Tbb^3\to\RR^2$.

\item \emph{Conformal length measure.} On the quotient \(\Sbb^1\simeq \Tbb^3/\Tbb^2\), we consider the positive measure $d_x\ell=\amdeux\,dx$,
where \(x\) is a fixed coordinate on \(\Sbb^1\). Equivalently, one may work with the density \(\amdeux>0\) or with its primitive \(\ell\) defined in~\eqref{equa-ell}. This determines the family of spacelike volume forms $\dVtrois(t)=|t|\,\Omega(t)\,d_x\ell(t)\,dy\,dz$ (for $t\in\Interval$).

\item \emph{Lapse function.} The lapse
$\Omega:\Interval\times\Tbb^3\to(0,+\infty)$
determines the family of timelike volume forms
$\dVundeux(\cdot,x)=|t|\,\Omega(\cdot,x)\,dt\,dy\,dz$ (for $x\in\Sbb^1$) and the spacetime volume form~$\dVuntrois$.

\item \emph{Twist variables.} The functions
$
K_m:\Interval\times\Tbb^3\to\RR$ (for $m=2,3$) 
measure the non-integrability of the orthogonal complement to the symmetry orbits.

\eei


\subsubsection{A notion of flow}

Our notion of weak solution will require the following integrability and weak regularity properties. 

\begin{definition}
\label{weakdefinitionT2}
A \textbf{\(\Phi\Psi\) flow with finite energy} is a collection of time-dependent fields, satisfying the causality inequalities~\eqref{eq:causal} and the periodicity constraints~\eqref{equa-pericons}, and written in the form
\bel{equa-weak-flow}
\Phi=(\Jperp,\Jpar, \Pbb,\Qbb), 
\qquad
\Psi=(\ell,\log\Omega,\Kpar). 
\ee
The variables \(\Phi\) and \(\Psi\) are referred to as the \textbf{\(L^2\)-variables} and the \textbf{\(\BV\)-variables}, respectively, and are assumed to satisfy
\bel{equa-integrable-BV}       
\aligned
& \text{spacelike integrability:} 
&&
\Phi \in L^\infty(\Interval,L^2(\Tbb^3,\dVtrois)),
\\
& \text{timelike integrability:}
&&
\Phi \in L^\infty(\Sbb^1,L^2(\Interval\times\Tbb^2,\dVundeux)),
\\
& \text{spacelike absolute continuity:}
&&
\Psi \in L^\infty(\Interval,\BVac(\Tbb^3)) ,
\\
& \text{timelike absolute continuity:}
&&
\Psi \in L^\infty(\Sbb^1,\BVac(\Interval\times\Tbb^2)) .
\endaligned
\ee 
\end{definition}

The Einstein--Euler equations discussed momentarily could be defined in the sense of distribution under much weaker regularity conditions, specifically spacetime $L^2$ integrability of~$\Phi$ and a uniform bound $\Psi\in L^\infty(\Interval\times\Tbb^3)$.
However, the conditions stated in~\eqref{equa-integrable-BV} are natural in the present context, and will be \emph{derived from the equations} and assumptions on initial data sets taken in the same weak regularity class; cf.~Sections~\ref{section=6} and~\ref{section=7}.


\subsection{Divergence, curl, and derivatives in the weak sense} 
\label{section=4-3}

\subsubsection{Divergence and curl in weak form}

Our weak formulation of the Einstein--Euler equations relies in an essential way on the divergence operator introduced in \eqref{equa-divN}--\eqref{equa-def-div13} for regular fields, namely
\be
\divuntrois(\Xbb)
=
|t|^{-1}\amdeux^{-1}\Omega^{-2}
\bigl( -(|t|\amdeux\Omega X_0)_t + (|t|\Omega X_1)_x \bigr).
\ee
This operator appears in the evolution equations~\eqref{eq:T2-1234-def} for the geometric variables $\Pbb$ and~$\Qbb$, in the Euler equations~\eqref{eq:T2-Euler-perp-def} for the fluid momentum~$\Jbb$, as well as in entropy balance laws such as the particle number conservation~\eqref{equa-entr}. Under suitable integrability assumptions on the arguments, we can define \(\divuntrois\) in the sense of distributions, as presented next. Note first that the corresponding curl operator is introduced by the identity
\be
\curluntrois(X_0,X_1)=-\divuntrois(X_1,X_0).
\ee

Specifically, we say that the equation
\be
\divuntrois\Xbb=F
\ee
holds in the sense of distributions provided
\bse
\label{eq-div-weak}
\bel{eq-div-weak0}
\aligned
& \int_{\Interval}\int_{\Tbb^3} X_0\,\theta_t\,\varphi\,\dVtrois\,dt
-
\int_{\Sbb^1}\int_{\Interval\times\Tbb^2} X_1\,\theta\,\varphi_x\,\dVundeux\,dx
\\
& 
=
\int_{\Interval\times\Tbb^3} F\,\theta\,\varphi\,\dVuntrois
\qquad \text{(weak divergence)}
\endaligned
\ee
for all test functions\footnote{By density, this formulation is equivalent to testing against arbitrary smooth \(\Tbb^2\)-symmetric functions on \(\Interval\times\Tbb^3\); see~\autoref{appendix=E} for a standard density argument.} of the form \(\theta=\theta(t)\) and \(\varphi=\varphi(x,y,z)\), where
\bel{equa-thetavarphi}
\theta\in\Ccal_c^\infty(\Iopen),
\qquad
\varphi\in\Ccal^\infty(\Tbb^3).
\ee
This definition makes sense under the assumptions\footnote{Slightly weaker hypotheses would suffice.}
\bel{equa-for-diverg} 
\aligned
X_0 &\in L^\infty(\Interval,L^1(\Tbb^3,\dVtrois)),
\quad
&
X_1 &\in L^\infty(\Sbb^1,L^1(\Interval\times\Tbb^2,\dVundeux)),
\\
F &\in L^1(\Interval\times\Tbb^3,\dVuntrois),
\endaligned
\ee
which will arise naturally in our setting. For smooth fields \(\Xbb\) and \(F\), the weak formulation~\eqref{eq-div-weak0} is equivalent to the classical one.
\ese
%


\subsubsection{Inequalities in weak form}

The weak formulation of entropy inequalities~\eqref{equa-balancelaws} involves \emph{non-negative} test functions. Namely, the inequality
\be
\divuntrois\Xbb \leq F
\ee
holds in the sense of distributions if, for all non-negative $\theta\in\Ccal_c^\infty(\Iopen)$ and $\varphi\in\Ccal^\infty(\Tbb^3)$,
\be
\aligned
& \int_{\Interval}\int_{\Tbb^3} X_0\,\theta_t\,\varphi\,\dVtrois\,dt
-
\int_{\Sbb^1}\int_{\Interval\times\Tbb^2} X_1\,\theta\,\varphi_x\,\dVundeux\,dx
\\
& 
\leq \int_{\Interval\times\Tbb^3} F\,\theta\,\varphi\,\dVuntrois
\qquad \text{(weak divergence inequality).}
\endaligned
\ee


\subsubsection{Space derivatives in weak form} 

We consider next the spatial derivative arising in the constraint equations~\eqref{eq:theconstraints-def}, which take the form
\bel{equa-ODEx}
Y_x = G\,\Omega\,\amdeux.
\ee
Since these equations are purely spatial, it is convenient to interpret them first on each slice of constant time. Thus, for (almost every) fixed $t\in\Interval$, we say that $Y_x=G\,\Omega\,\amdeux$ holds in the sense of distributions on $\{t\}\times\Tbb^3$ if, with $\dVdeux=|t|\,dy\,dz$,
\bse \label{eq-space-weak}
\be
-\int_{\Sbb^1}\int_{\Tbb^2} Y\,\varphi_x\,\dVdeux\,dx
= \int_{\Tbb^3} G\,\varphi\,\dVtrois
\qquad
\text{for all }\varphi\in\Ccal^\infty(\Tbb^3).
\ee
This is well-defined, for instance, if, for the given time $t\in\Interval$,
\be
Y(t,\cdot) \in L^\infty(\Tbb^3) ,
\quad
G(t,\cdot) \in L^1(\Tbb^3,\dVtrois(t)).
\ee
\ese
Obviously, the corresponding spacetime formulation of \eqref{equa-ODEx} is obtained by testing against functions of the form \(\theta(t)\varphi(x,y,z)\), namely
\bse \label{eq-space-weak-tx}
\be
-\int_{\Sbb^1}\int_{\Interval\times\Tbb^2}
Y\,\Omega^{-1}\theta\varphi_x\,\dVundeux\,dx
=
\int_{\Interval\times\Tbb^3}
G\,\Omega^{-1}\theta\varphi\,\dVuntrois
\ee
for all test functions \(\theta\in\Ccal_c^\infty(\Iopen)\) and \(\varphi\in\Ccal^\infty(\Tbb^3)\). This is a special case of~\eqref{eq-div-weak} with $\Xbb=(0,\Omega^{-1}Y)$ and it is well-defined, for instance, if
\be
Y \in L^\infty(\Sbb^1,L^1(\Interval\times\Tbb^2,\dVundeux)),
\quad
\Omega^{-1}\in L^\infty(\Interval\times\Tbb^3),
\quad
G\in L^1(\Interval\times\Tbb^3,\dVuntrois).
\ee
\ese
In the context of the present paper, a property of weak continuity in time will be available and \eqref{eq-space-weak} and \eqref{eq-space-weak-tx} are then equivalent formulation.


\subsubsection{Time derivatives in weak form} 

We now turn to the time derivatives arising in the Einstein--Euler system. A first equation is the evolution of the conformal length density \(\amdeux\), or equivalently of the measure \(d_x\ell=\amdeux\,dx\). In view of~\eqref{eq:T2-8-def}, this equation takes the form
\bel{Yl-Gl}
\amdeux_t = G\,\amdeux.
\ee
It is convenient to interpret this equation directly in weak form by testing against functions of the form \(\theta(t)\varphi(x)\), exactly as in the weak divergence formulation~\eqref{eq-div-weak0}. Namely, we say that~\eqref{Yl-Gl} holds in the sense of distributions on \(\Interval\times\Tbb^3\) if
\bse \label{eq-time-weak-00}
\be
-\int_\Interval\int_{\Tbb^3} |t|^{-1} \, \Omega^{-1} \, \theta_t\,\varphi\,\dVtrois\,dt
=
\int_\Interval \int_{\Tbb^3} |t|^{-1} \, \Omega^{-1} \, G\,\theta\,\varphi\,\dVtrois \, dt
\ee
for all test functions \(\theta\in\Ccal_c^\infty(\Iopen)\) and \(\varphi\in\Ccal^\infty(\Tbb^3)\).  This formulation is well-defined for instance if
\be
\dVtrois = |t|\,\Omega\,d_x\ell \in L^\infty(\Interval,\Meas(\Tbb^3)) , \qquad
\Omega^{-1} \in L^\infty(\Interval\times\Tbb^3) , \qquad
G \in L^\infty(\Interval, L^1(\Tbb^3, \dVtrois)) .
\ee
\ese
%


Finally, the evolution equations~\eqref{eq:evollambda-def}--\eqref{eq:T2-9101112-evol-def-b} for \(\log\Omega\), \(K_2\), and~\(K_3\) are of the form
\bel{equa-ODEt}
(F)_t=G\,\Omega.
\ee
Since these equations are purely in time, it is useful to interpret them first on each timelike slice.  Thus, for (almost every) fixed $x\in\Sbb^1$, we say that $F_t=G\,\Omega$ holds in the sense of distributions on $\Interval\times\{x\}\times\Tbb^2$ if
\bse\label{eq-time-weak}
\be
-\int_{\Interval\times\Tbb^2} F\,|t|^{-1}\Omega^{-1}\theta_t\,\dVundeux
= \int_{\Interval\times\Tbb^2} |t|^{-1} G \theta\,\dVundeux,
\quad
\text{for all } \theta\in\Ccal_c^\infty(\Iopen) .
\ee
This is well-defined, for instance, if for the given $x\in\Sbb^1$
\be
F(\cdot,x) , \Omega^{-1}(\cdot,x) \in L^\infty(\Interval\times\Tbb^2) ,
\quad
G(\cdot,x) \in L^1(\Interval\times\Tbb^2,\dVundeux(x)) .
\ee
We shall use this later form when integrating evolution equations of the form \eqref{equa-ODEt} in time.
\ese
The corresponding spacetime formulation of \eqref{equa-ODEt} is obtained by integrating against a spatial test function $\varphi\in\Ccal^\infty(\Tbb^3)$.
\bse\label{eq-time-weak-tx}
The equation can also be seen as a particular case of the weak divergence formulation above with $\Xbb=(|t|^{-1}\amdeux^{-1}\Omega^{-1}F,0)$.
Explicitly, we have 
\be
\aligned
&-\int_{\Interval}\int_{\Tbb^3}
F\,|t|^{-1}\Omega^{-1}\amdeux^{-1}\theta_t\varphi\,\dVtrois\,dt
=
\int_{\Interval}\int_{\Tbb^3}
G\,|t|^{-1}\Omega^{-1}\amdeux^{-1}\theta\varphi\,\dVuntrois\,dt,
\endaligned
\ee
which is well-defined if
\be
F\in L^\infty\bigl(\Interval,L^1(\Tbb^3,\dVtrois)\bigr),
\quad
\Omega^{-1}\amdeux^{-1}\in L^\infty(\Interval\times\Tbb^3) ,
\quad
G\in L^1(\Interval\times\Tbb^3,\dVuntrois).
\ee
\ese
In the context of the present paper, a property of weak continuity \emph{in space} will be available and \eqref{eq-time-weak} and \eqref{eq-time-weak-tx} are then equivalent formulation.


\subsection{The proposed notion of weak solution}
\label{section=4-4}

\subsubsection{Class of flows of interest}

In our framework, from the fluid momentum $\Jbb=(\Jperp,\Jpar)$ we introduce the parallel momentum per particle $\Jhatpar = \hNumb_\Jbb^{-1}\Jpar$ defined in~\eqref{equa-jhat}, which is not an additional unknown, but \emph{merely a shorthand}. Finally, \(\ell\) and \(\amdeux\) are related by $d_x\ell=\amdeux\,dx$.

The definition below includes all of the Einstein--Euler equations except for the wave equation for the lapse~\eqref{eq:waveconffac}, which will be recovered later as an inequality; see~\autoref{weakdefinitionT2-2-propo}.
To account for the entropy structure of the equations, we impose the particle number conservation~\eqref{indef-eqvNumb} below as an equality, while the balance law~\eqref{eq:T2-Euler-perp-def-a} for $\Mbf^{0\bullet}$ is imposed as an \emph{inequality}~\eqref{equa-lfk3}.

\begin{definition}
\label{def-weaksolu}
Consider the Einstein--Euler system under the assumption of \(\Tbb^2\) symmetry on \(\Tbb^3\)-spacelike slices, parametrized by an areal time function \(t\in\Interval\). A \(\Phi\Psi\) flow with finite energy, written as
$\Phi=(\Jperp,\Jpar,\Pbb,\Qbb)$ and $\Psi=(\ell,\log\Omega,\Kpar)$
is called an \textbf{Einstein--Euler flow} if the following conditions hold in the weak sense~\eqref{eq-div-weak}, \eqref{eq-space-weak}, \eqref{eq-time-weak-00}, and~\eqref{eq-time-weak}.
\bei

\item[(1)] Einstein's evolution equations~\eqref{eq:T2-1234-def} and~\eqref{eq:T2-all-suite-def}, together with Einstein's constraint equations~\eqref{eq:theconstraints-def}.

\item[(2)] The Euler equations, consisting of the particle-number equation~\eqref{equa-entr} and the three momentum equations~\eqref{eq:T2-Euler-perp-def-b}--\eqref{eq:T2-fluid23-def-b}, namely
\begin{subequations}
\label{eq:T2-Euler-perp-def-r}
\begin{align}
\label{indef-eqvNumb}
\divuntrois \vNumb_\Jbb &= 0,
\\
\divuntrois\bigl(\Omega\,\Mbf^{1\bullet}(\Jbb,\Kpar,\Lbb)\bigr) &= 0,
\label{eq:T2-Euler-perp-def-b-r}
\\
\divuntrois\bigl(|t|^{1/2}J_2\Jperp\bigr)
&= -\curluntrois\bigl(|t|^{1/2}K_2\Pbb/2\bigr),
\label{eq:T2-fluid23-def-a-r}
\\
\divuntrois\bigl(|t|^{1/2}J_3\Jperp\bigr)
&= -\curluntrois\bigl(|t|^{1/2}(K_2\Qbb-K_3\Pbb/2)\bigr).
\label{eq:T2-fluid23-def-b-r}
\end{align}
\end{subequations}

\item[(3)] The energy balance law
\bel{equa-lfk3}
\divuntrois\bigl(\Omega\,\Mbf^{0\bullet}(\Jbb,\Kpar,\Lbb)\bigr)
\leq - \frac{1}{2\,t}\,\Msource(\Jbb,\Kpar,\Lbb). 
\ee

\eei
\end{definition}

The inequality~\eqref{equa-lfk3} will be referred to as the \emph{reference entropy inequality} for the fluid. As is well known, weak solutions to an initial value problem are in general not unique unless an admissibility condition is imposed. In the present setting, the choice \eqref{equa-lfk3} is natural since, by~\autoref{lem-convexM}, the current $\Omega\,\Mbf^{0\bullet}$
is future-convex for the system~\eqref{eq:T2-Euler-perp-def-r}. Hence, the notion introduced in~\autoref{def-weaksolu} is consistent with the general entropy framework in~\autoref{section=3-1}.

\begin{remark}
A conceptually important feature of the present theory is that it also admits an alternative formulation, obtained by exchanging which one of the particle number current~$-\vNumb$ and mass energy $\Mbf^{0\bullet}$ obeys an equality or an inequality.  Indeed, the compactness framework developed in the present paper is not based on a rigid choice of entropy relation, but on a more intrinsic structure of the Einstein--Euler system. This alternative viewpoint is discussed in~\autoref{section=10-1}.
\end{remark}


\subsubsection{The class of tame flows}

For elementary wave patterns, such as solutions of the Riemann problem, the reference entropy inequality~\eqref{equa-lfk3} is sufficient to select the admissible weak solution. For general weak solutions, however, no uniqueness theory is currently available at the level of regularity considered here. Furthermore, our subsequent compactness arguments require a \emph{sufficiently large family} of entropy quasi-currents. These entropies are not imposed by the physics; rather, they are introduced in order to rule out \emph{possibly oscillatory sequences} of solutions and to justify the passage to the limit in the nonlinear fluid terms, as well as in the coupling between matter and geometry.

In view of the quasi-currents presented in~\autoref{section=3}, we therefore strengthen the preceding notion of weak solution. The resulting formulation is slightly delicate, since the quasi-balance laws involve \emph{shear-induced factors} containing derivatives of the parallel momentum per particle.

At this stage, it is useful to introduce the 
\emph{parallel momentum per particle} (cf.~\autoref{defJhat}) $\Jhatpar = \bigl(\hNumb_\Jbb\bigr)^{-1}\Jpar$
or, equivalently, its metric-weighted version $\Gammaunpar= (\Jtilde_2, \Jtilde_3)$ (cf.~\autoref{defGammaun})
\bel{equawidetildeJ-rep} 
\Gammaun_2 = \abs{t}^{1/2} {1 \over \hNumb_\Jbb} \, e^{P/2}  J_2
\qquad 
\Gammaun_3 = \abs{t}^{1/2} {1 \over \hNumb_\Jbb} \, \bigl( Qe^{P/2} J_2+e^{-P/2} J_3 \bigr).
\ee 
where $P,Q$ are reconstructed from first-order variables by integration, and \(\hNumb_\Jbb=\hNumb(\mu)\) is determined by the equation of state \(p=p(\mu)\). We recall that
$\mu+p(\mu)=-\Jbb\cdot\Jbb/2$.

Our bounded variation condition on $\Gammaun_m$ in \eqref{equa-bvpara}, below, strengthens the finite-energy condition on the fluid unknown, but only for the shear-like parallel components. Most importantly, it does not exclude shock waves in the parallel momentum or, more precisely, contact discontinuities.
Observe also that the divergence law in \eqref{equa-bvpara}, below, is \underline{formally} equivalent to the homogeneous transport equations \eqref{equa-Jx}; however, crucially, the latter would not make sense for weak solutions.

\begin{definition}
\label{def-weaksolu-deux}
An Einstein--Euler flow \((\Phi,\Psi)\) is called a \textbf{tame Einstein--Euler flow} provided the following additional conditions hold.

\bei

\item[(4)] The parallel momentum $\Gammaunpar$ satisfies the divergence law~\eqref{equa-kSK1-0} and has bounded variation in space for all times, namely\footnote{The notation in~\eqref{lip-reg} means that 
$\| \Gammaunpar(t_3) - \Gammaunpar(t_2) \|_{L^1(\Tbb^3,\dVtrois(t_2))} \leq C \, |t_2-t_3|$ for some constant $C>0$.}
\bse\label{equa-bvpara}
\begin{align}
&\text{$H$-divergence law }
&&\divuntrois \bigl( H\bigl(\Gammaunpar \bigr) \, \vNumb \bigr) = 0 
\, \text{ for any function } H ,
\\
& \text{bounded spacelike variation}
&& \Gammaunpar\in L^\infty(\Interval,\BV(\Tbb^3)),
\\
& \text{(weak) time continuity}
&& \Gammaunpar\in \Lip(\Interval,L^1(\Tbb^3,\dVtrois)). 
\label{lip-reg}
\end{align}
\ese

\item[(5)]
For every quasi-current \(\Fbb\) that is
dominated by the reference entropy current $\Mbf^{0\bullet}(\Jbb,0,0)$ (cf.~\autoref{def:quasi-current-dominated}), the divergence expression
\bel{equa-55f}
\divuntrois \Bigl( |t|^{-1} \Omega^{-1} \, \Fbb(\Jbb) \Bigr) =: \Fmeasure
\ee
is a $\Tbb^2$-symmetric bounded Radon measure $\Fmeasure \in \Meas(\Interval\times\Tbb^3)$ in spacetime.
\eei
\end{definition}

The additional requirements in~\autoref{def-weaksolu-deux} should be understood as compactness conditions adapted to the compensated compactness method. Condition~(4) provides the relative compactness property needed to control the parallel momentum and, in turn, to control the  the source terms in the quasi-balance laws, while Condition~(5) encodes the quasi-balance laws associated with quasi-currents, allowing for measure terms generated by the derivatives of the parallel variables.


\subsubsection{Parametrization of fluid variables}

In the context of tame solutions, we could modify the notation introduced in \autoref{weakdefinitionT2} and refer to a tame Einstein--Euler flow as a \((\tPhi,\tPsi)\) flow, where
\be
\tPhi=(\Jperp,\Pbb,\Qbb),
\qquad
\tPsi=(\ell,\log\Omega,\Kpar,\Gammaunpar).
\ee
We shall not adopt this notation, however, since within our weak regularity class the knowledge of \(\Jpar\) is equivalent to that of \(\Gammaunpar\), as we now show. 

\bei 

\item[1.] The map $\Jbb\mapsto(\Jperp,\Jhatpar)$ is bijective away from vacuum, from the region $\{-\Jbb\cdot\Jbb > 0,\; J_0 < 0\}$ to the region $\{-\Jperp\cdot\Jperp > 0,\; J_0 < 0\}$.  Indeed, $\Jbb$ is uniquely determined from $(\Jperp,\Jhatpar)$ as follows.  Observe that $2(\mu+p(\mu))+(\widehat J_2^2+\widehat J_3^2)\hNumb(\mu)^2 = -\Jperp\cdot\Jperp$ and the left-hand side is a monotonically increasing function of~$\mu$ that tends to $0$ as $\mu\to 0$ and $+\infty$ as $\mu\to +\infty$, hence $\mu>0$ is uniquely determined from $(\Jperp,\Jhatpar)$, from which we reconstruct $\Jpar=\hNumb(\mu)\Jhatpar$.
Parametrizing fluid variables by $(\mu,u)$ or $(\Jperp,\Jhatpar)$ or $\Jbb$ differs significantly at vacuum, and the choice in \autoref{weakdefinitionT2} is important for our results.

\item[2.] The parallel momentum $\Jpar$ obey $J_2^2 + J_3^2 \leq -\Jperp\cdot\Jperp$, so that $\Jpar$ features the same (spacelike and timelike) $L^2$~regularity as the orthogonal momenta~$\Jperp$.

\eei

 
\subsubsection{Additional weak regularity} 
 
\begin{proposition}[Entropy flows with finite energy: direct properties]
\label{weakdefinitionT2-2-propo}
Any tame Einstein--Euler flow $t \mapsto (\Phi, \Psi)$ (for $t \in \Interval$) as in~\autoref{def-weaksolu-deux} also enjoys the following properties. 
\bei

\item In view of the following consequence of the Einstein--Euler equations
\be
\aligned
t^{-1}  \Omega^{-1} \frac{\amdeux_t}{\amdeux} , \quad t^{-1}  \Omega^{-1} (\log\Omega)_t & \in L^\infty(\Interval,L^1(\Tbb^3,\dVtrois)) ,
\\
t^{-1}  \Omega^{-1} \amdeux^{-1} (\log\Omega)_x & \in L^\infty(\Sbb^1,L^1(\Interval\times\Tbb^2,\dVundeux)) ,
\endaligned
\ee
the divergence operator can be applied in the wave equation for the lapse~\eqref{eq:waveconffac}, which therefore can be understood in the weak sense. 
 As a consequence of the reference entropy inequality~\eqref{equa-lfk3}, an \emph{inequality version} of the wave equation holds:
\bel{eq:waveconffac-ineq}
\aligned
& \divuntrois \Bigl( t^{-1}  \Omega^{-1} (\log(\Omega \amdeux) )_t, \ t^{-1}  \Omega^{-1} \amdeux^{-1} (\log\Omega)_x \Bigr)
\\
& \quad \geq - \frac{1}{4t^3} \Omega^{-2} +  \frac{1}{4t} \, \Mwave(\Jbb, \Kpar, \Lbb).
\endaligned
\ee

\item The constraint equations~\eqref{eq:theconstraints-def} enjoy the following propagation property: for any two times $t_2,t_3\in\Iopen$, they hold on some hypersurface of constant areal time $t=t_2$ if and only if they hold on the hypersurface $t=t_3$.
\eei 
\end{proposition}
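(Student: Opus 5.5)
The plan is to establish the three claims in the order stated, working throughout with the weak formulations \eqref{eq-div-weak}, \eqref{eq-time-weak-00}, \eqref{eq-time-weak} and \eqref{eq-space-weak}.

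\emph{Integrability of the lapse derivatives.} By \eqref{eq:evollambda-def} and \eqref{eq:T2-8-def}, holding weakly,
\[
\frac{\amdeux_t}{\amdeux}=\frac t2(\Mbf^{00}-\Mbf^{11})\Omega^2,
\qquad
(\log\Omega)_t=-\frac1{4t}+\frac t2\Mbf^{11}\Omega^2,
\]
and by \eqref{eq:T2-567-def}, $(\log\Omega)_x=-\frac t2\Mbf^{01}\Omega^2\amdeux$. Since $\log\Omega\in L^\infty$ (it is $\BVac$ on a compact interval), both $\Omega$ and $\Omega^{-1}$ are bounded. The entries of $\Mbf$ are quadratic in $\Phi$ up to terms in $\Kpar$, and $\Kpar$ is bounded because it is $\BVac$. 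The spacelike $L^2$ integrability of $\Phi$ therefore puts $\Mbf^{00},\Mbf^{11}$ in $L^\infty(\Interval,L^1(\Tbb^3,\dVtrois))$, and the timelike $L^2$ integrability puts $\Mbf^{01}$ in $L^\infty(\Sbb^1,L^1(\Interval\times\Tbb^2,\dVundeux))$. These are exactly the hypotheses \eqref{equa-for-diverg}, so the left-hand side of \eqref{eq:waveconffac} is a well-defined weak divergence.

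\emph{Inequality version of the wave equation.} For regular solutions, \eqref{eq:waveconffac} is equivalent to \eqref{eq:T2-Euler-perp-def-a}. I would make this equivalence explicit by substituting the expressions above into the lapse one-form. In the weak setting the plan is:
\begin{enumerate}
\item Write $\Omega\Mbf^{0\bullet}$ as a combination of $t^{-1}\Omega^{-1}\bigl((\log(\Omega\amdeux))_t,\ \amdeux^{-1}(\log\Omega)_x\bigr)$ plus explicit terms. Concretely, $\Mbf^{00}\Omega^2$ equals a combination of $t^{-1}(\log(\Omega\amdeux))_t$ and $t^{-2}$ terms, and $\Mbf^{01}\Omega^2\amdeux\propto t^{-1}(\log\Omega)_x$.
\item Test the weak form of \eqref{equa-lfk3} with nonnegative $\theta\varphi$, and replace the fluxes by these lapse expressions. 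This identification holds almost everywhere, because the evolution and constraint equations hold as equalities in the weak sense and the lapse is absolutely continuous.
\item The remaining lower-order terms, namely $-\frac1{4t^3}\Omega^{-2}$ and the difference $\Mwave-\Msource$, come from differentiating the explicit $t$- and $\Omega$-factors.
\end{enumerate}
Here one must use chain rules for the $\BVac$ functions $\Omega,\amdeux$. These are legitimate because their derivatives are integrable, which is precisely where absolute continuity, rather than mere $\BV$, is used. The sign of $t$ and the factor $t^{-1}$ together flip the inequality into the "$\geq$" of \eqref{eq:waveconffac-ineq}.

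\emph{Propagation of the constraints.} Consider $\opL_{13},\opL_{14},\opL_{15}$. Take the evolution equation for $\log\Omega$ (respectively $|t|^{3/2}K_m$) and the corresponding Euler equation \eqref{eq:T2-Euler-perp-def-b-r} (respectively \eqref{eq:T2-fluid23-def-a-r}--\eqref{eq:T2-fluid23-def-b-r}), both holding in distributions. Together they show that the constraint defect satisfies
\[
\partial_t\bigl(\text{defect}\bigr)=0
\]
in distributions, for a suitably weighted defect, since for regular solutions these Euler equations are exactly the compatibility $\partial_x\partial_t=\partial_t\partial_x$. To carry this out weakly:
\begin{enumerate}
\item Test against $\theta(t)\varphi_x$.
\item Use the time-derivative weak form to move $\partial_t$ onto $\theta$, and the Euler equation to handle the flux terms.
\item Conclude that $t\mapsto\int Y\varphi_x\,\dVdeux\,dx+\int G\varphi\,\dVtrois$ is constant.
\end{enumerate}
Weak time continuity then lets this constancy hold for every $t$, which yields the equivalence between the constraints at $t_2$ and at $t_3$.

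\emph{Main obstacle.} I expect the hardest step to be the rigorous manipulation of products such as $(\log\Omega)_t\cdot\Omega^{-1}$ and $\amdeux^{-1}(\log\Omega)_x$ under the weak divergence. Only $L^1$ control is available for these, so I would first mollify in $x$ (using the $\Tbb^2$-symmetric density result of \autoref{appendix=E}), then pass to the limit using the $\BVac$ bounds.
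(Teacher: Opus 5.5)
Your plan follows the paper's proof. The integrability comes from \eqref{eq:T2-8-def}, \eqref{eq:evollambda-def} and \eqref{eq:T2-567-def}. The lapse one-form equals $-\tfrac12\Omega\Mbf^{0\bullet}$ up to an explicit time component proportional to $(t^2\Omega)^{-1}$, so \eqref{equa-lfk3} together with the $\amdeux$-equation gives the inequality, the leftover terms cancelling through the identity $\Mwave-\Msource=\tfrac12(\Mbf^{00}-\Mbf^{11})$. The constraints propagate because the $\log\Omega$, $K_m$ evolution equations and the three spatial Euler equations make each constraint defect time-independent in distributions.

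Two corrections. First, the switch to ``$\geq$'' comes from the negative factor $-\tfrac12$ relating the lapse one-form to $\Omega\Mbf^{0\bullet}$, not from the sign of $t$: the factor $t^{-1}$ exactly cancels the $t$ in the lapse equations, so the inequality holds in both regimes. Second, no mollification or chain rule is needed, since $\Omega$ cancels in $|t|\amdeux\Omega\cdot(4t^2\Omega)^{-1}$ and the fluxes agree a.e.\ with algebraic expressions.
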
  

\begin{proof} 
{\bf 1. \it Wave equation for the lapse.} In view of~\eqref{eq:evollambda-def} and the spacelike regularity of $(\Jbb,\Kpar,\Pbb,\Qbb)$, together with boundedness of~$\Omega$, the function $t^{-1}\Omega^{-1}(\log\Omega)_t= \Omega \Mbf^{11}/2 - 1/(4t^2\Omega)$ belongs to $L^\infty( \Interval, L^1(\Tbb^3, \dVtrois))$.
Next, the expression ${(\amdeux)_t \over \amdeux}$ is understood as the Radon–Nikodym derivative of the measure $(d_x\ell)_t$ with respect to $d_x\ell$, and the evolution equation~\eqref{eq:T2-8-def} yields $t^{-1}\Omega^{-1}(\amdeux_t/\amdeux)= \Omega (\Mbf^{00}-\Mbf^{11})/2$, which is in the same space.  Similarly, in view of the equation~\eqref{eq:T2-567-def}, the Radon-Nikodym derivative ${(\log\Omega)_x dx \over  \amdeux dx}$ belongs to ${L^\infty(\Sbb^1, L^1(\Interval \times \Tbb^2, \dVundeux))}$. Consequently, the wave operator in~\eqref{eq:waveconffac-ineq} is well-defined in the weak sense~\eqref{eq-div-weak0}.  The difference of the two sides in~\eqref{eq:waveconffac-ineq} is
\be
\aligned
& \frac{1}{2} \divuntrois \Bigl( \Omega\Mbf^{0\bullet}\Bigr)
+ \divuntrois \Bigl( \frac{1}{4t^2\Omega}, 0 \Bigr)
- \frac{1}{4t^3\Omega^2} + \frac{1}{4t} \, \Mwave
\\
& \quad \leq \frac{1}{4\,t} \bigl(\Mwave - \Msource\bigr)
- t^{-1} \amdeux^{-1} \Omega^{-2} \Bigl(\frac{\amdeux}{4t}\Bigr)_t
- \frac{1}{4t^3\Omega^2}
= 0 .
\endaligned
\ee

\medskip

\noindent{\bf 2. \it Propagation of the constraints.}
Concerning the property of propagation of the constraints~\eqref{eq:theconstraints-def}, we observe that our argument in the proof of Proposition~\ref{theo-struct} (cf.~\autoref{appendix=C-1}) remains valid for weak solutions, provided all the identities are understood in the sense of distributions. Indeed, commuting derivatives is valid for distributions and, in our argument in~\autoref{appendix=C-1}, the identity $L_t=0$ for a distribution $L$ implies that $L$ is independent~of~$t$.
Importantly, the propagation of constraints on $(\log\Omega,K_2,K_3)$ only requires the three spatial Euler equations and not the energy balance law that only holds as an inequality.
\end{proof}


\subsection{The notion of weak solution with corrector}
\label{section=4-5}

\subsubsection{Stress-energy correctors}

When considering limits of sequences of solutions (in \autoref{section=5}, below), we are going to obtain two results depending on whether the initial data sets are assumed to be well-prepared or not.
When initial data sets are \emph{not well-prepared,} the notion of solution described so far is insufficient to describe the limit.
It is necessary to introduce a generalized notion of weak solution, in which some of the Einstein--Euler equations are imposed only modulo suitable correctors, as we now explain.
These new terms will be measures \emph{arising from oscillations of the geometry.}

The key modification is that we no longer require the lapse~$\Omega$ to be absolutely continuous. Instead, the analysis of weakly convergent sequences suggests that the natural regularity is to require that $\Omega$~have bounded variation in space along each spacelike slice and in time along each timelike slice.
 
To describe the correctors, we introduce the space $L^\infty(\Interval,\Meas(\Tbb^3))$, 
consisting of all weakly-star measurable maps \(t\in\Interval\mapsto m(t)\), where \(m(t)\) is a bounded Radon measure on \(\Tbb^3\). Here \(\Meas(\Tbb^3)\) denotes the dual of \(C(\Tbb^3)\), endowed with the sup norm. Thus, the measure \(m(t)\) need not be absolutely continuous with respect to the Lebesgue measure and may therefore contain singular parts, including Dirac masses. Weak-star measurability means that, for every \(\varphi\in C(\Tbb^3)\), the map $t\mapsto \la m(t),\varphi\ra$ is Lebesgue measurable on \(\Interval\). We also use the space \(L^\infty(\Interval,\BV(\Tbb^3))\) of functions that have bounded variation along the spacelike slices.

Let us begin with a formal motivation of how correctors arise as a result of weak $L^2$ convergence of (a sequence of) geometric variables $(\Pbb^\suit,\Qbb^\suit)$ to some limit $(\Pbb^\sharp,\Qbb^\sharp)$.
Squares such as $(P_0^\suit)^2$ may converge to a limit that differs from~$(P_0^\sharp)^2$.  The difference $\lim (P_0^\suit)^2 - (P_0^\sharp)^2 = \lim (P_0^\suit - P_0^\sharp)^2$ is then non-negative and gives rise to a measure term in some equations.
It is thus natural to add such measures whenever quadratic terms in $(\Pbb,\Qbb)$ appear in the equations.
Null forms such as $\Pbb\cdot\Pbb$ are shown to converge, so that the correctors to $P_0^2$ and to $P_1^2$ must coincide.
Altogether, the evolution and constraint equations for the lapse~$\Omega$ must be replaced by the modified equations
\bse\label{eq:evollambda-new-mod-01}
\begin{align}
\label{eq:evollambda-new-mod}
\bigl(\log\Omega \bigr)_t
& = - \frac{1}{4t}
+ \frac{t}{2} \Bigl( \Mbf^{11}(\Jbb,\Kpar,\Pbb,\Qbb)\,\Omega^2 + m^{11}\,\Omega^2 \Bigr) ,
\\
\label{eq:T2-567-new-mod}
\bigl(\log \Omega \bigr)_x
& = - \frac{t}{2} \Bigl( \Mbf^{01}(\Jperp,\Pbb,\Qbb)\,\Omega^2\amdeux + m^{01}\,\Omega^2\amdeux \Bigr) ,
\end{align}
\ese
where the additional terms $m^{01}$ and $m^{11}$ are bounded measures (up to some geometric weights), in contrast to $\Mbf^{01}$ and $\Mbf^{11}$, which are integrable functions.
Likewise, the momentum equation~\eqref{eq:T2-Euler-perp-def-b-r} and energy balance law~\eqref{equa-lfk3} receive contributions from the corrector.

The identities \eqref{eq:evollambda-new-mod-01} are only formal at this stage.
Since the weight $\Omega$ has only bounded variation, it is inconvenient to define its product with a measure (see however \autoref{section=6-1}).
To reach a rigorous formulation, we prefer to regard the weighted quantities
\be
\Pi^{ab}\coloneqq \amdeux \Omega^2 m^{ab}, \qquad a,b=0,1
\ee
as the fundamental measure unknowns.
Since $\amdeux$ is absolutely continuous on each timelike slice, the product $m^{11}\Omega^2 = \Pi^{11}\amdeux^{-1}$ is meaningful provided $\Pi^{11}$ is a measure on (almost) every timelike slice.  This motivates part of our upcoming regularity assumptions.
In the formulation below we work exclusively with the tensor $\Pi=(\Pi^{ab})$, rather than the unweighted quantities~$m^{ab}$.


The modified Einstein equations~\eqref{eq:evollambda-new-mod-01} may be interpreted as adding to the energy-momentum tensor~\(M\) a \emph{corrector stress-energy tensor}, denoted by~\(\Pi\), whose components are bounded measures. The tensor
\[
\Pi=(\Pi^{ab})
\]
is assumed to be symmetric, traceless, and to satisfy the positivity conditions
\bel{eq=allconfor-infty-mod}
\Pi^{00}=\Pi^{11}\geq 0, \qquad
\Pi^{01}=\Pi^{10}, \qquad
\Pi^{00}\geq |\Pi^{01}|.
\ee
These properties arise formally from features of the non-convergent terms $\Ebf_0(\Pbb,\Qbb)$ in $\Mbf^{00}$ and~$\Mbf^{11}$, as well as $\Ebf_1(\Pbb,\Qbb)$ in~$\Mbf^{01}$.  From this point of view, the inequality $\Pi^{00}\pm\Pi^{01}\geq 0$ originates from expressing $\Ebf_0(\Pbb,\Qbb)\mp\Ebf_1(\Pbb,\Qbb) = \frac{1}{2}(P_0\mp P_1)^2 + \frac{1}{2}(Q_0\mp Q_1)^2$ as a sum of squares.

The tensor~$\Pi$ is also assumed to satisfy a divergence inequality, namely
\bse\label{eq:Twave}
\bel{eq:Twave-naive}
\aligned
\opL_{16}(\Psi,\Pi) & \coloneqq \divuntrois\bigl(\Omega^{-1}\amdeux^{-1}\Pi^{+\bullet} \bigr)\leq 0, \qquad
& \Pi^{+\bullet} & \coloneqq \Pi^{0\bullet} + \Pi^{1\bullet} ,
\\
\opL_{17}(\Psi,\Pi) & \coloneqq \divuntrois\bigl(\Omega^{-1}\amdeux^{-1}\Pi^{-\bullet}\bigr) \leq 0,
& \Pi^{-\bullet} & \coloneqq \Pi^{0\bullet} - \Pi^{1\bullet} .
\endaligned
\ee
When suitably interpreted, these equations do not involve any ill-defined product of $\Pi$ by the lapse factor~$\Omega$. Indeed, returning formally to the definition of the divergence operator in~\eqref{equa-divN}, we
 find
\be
\divuntrois\bigl(\Omega^{-1}\amdeux^{-1}\Pi^{\pm\bullet}\bigr)
=
|t|^{-1}\amdeux^{-1}\Omega^{-2}
\Bigl(
\bigl(|t|\Pi^{\pm0}\bigr)_t
+
\bigl(|t|\amdeux^{-1}\Pi^{\pm1}\bigr)_x
\Bigr)
\qquad \text{(formally)}.
\ee
Accordingly, the weak formulation of the equation~\eqref{eq:Twave-naive} is defined by requiring that, for every smooth \(\Tbb^2\)-symmetric test function \(\varphi\) on \(\Interval\times\Tbb^3\),
\be
\int_\Interval \bigl\la \Pi^{\pm0}(t),\, |t|\,\varphi_x(t,\cdot)\bigr\ra_{\Tbb^3}\,dt
+ \int_{\Sbb^1} \bigl\la \Pi^{\pm1}(x),\, |t|\,\amdeux^{-1}\,\varphi_x(\cdot,x)\bigr\ra_{\Interval\times\Tbb^2}\,dx
\leq 0.
\ee
\ese
Here, $\la\cdot,\cdot\ra$ denotes the duality pairing between measures and continuous functions, on $\Interval\times\Tbb^2$ and~$\Tbb^3$, respectively.
The evolution equation of $\log\amdeux$ involves sources that are integrable in time uniformly in space thanks to $\Phi \in L^\infty(\Sbb^1,L^2(\Interval\times\Tbb^2,\dVundeux))$.  Thus, $\log\amdeux$~is continuous along almost every constant-$x$ timelike slice, so that $t\mapsto |t|\,\amdeux^{-1}\,\varphi_x(\cdot,x)$ is continuous.  In order for the duality pairings to be applicable, the measure $\Pi$ must satisfy some mild regularity in both spacelike and timelike directions.  The other divergence equation is defined similarly.
In particular, all terms arising in the weak formulation are either products of continuous functions by measures, or else more regular expressions.

\begin{definition}
\label{weakdefinitionT2-deux}
A \textbf{stress-energy corrector} is a measure tensor \(\Pi=(\Pi^{ab})\) satisfying the regularity conditions
\be
\Pi \in L^\infty(\Interval,\Meas(\Tbb^3)),
\qquad
\Pi \in L^\infty(\Sbb^1,\Meas(\Interval\times\Tbb^2)), 
\ee
the positivity conditions~\eqref{eq=allconfor-infty-mod}, and the divergence inequalities~\eqref{eq:Twave}.
\end{definition}


\subsubsection{An extended notion of flow}

The regularity requirements imposed here are designed so as the notion is \emph{stable under weak convergence}. 

\begin{definition}
\label{weakdefinitionT2-relaxed}
A \textbf{\(\Phi\Psi\Pi\) flow with finite energy} is a collection of time-dependent fields satisfying the causality inequalities~\eqref{eq:causal}, periodicity constraints~\eqref{equa-pericons}, and positivity condition~\eqref{eq=allconfor-infty-mod},
\be
\Phi=(\Jperp,\Jpar,\Pbb,\Qbb),
\quad
\Psi=(\ell,\log\Omega,\Kpar),
\quad
\text{and} \quad \Pi
\ee
and satisfying the conditions 
\bel{equa-integrable-BV-relaxed}
\aligned
& \text{spacelike integrability:} 
&& \Phi \in L^\infty(\Interval,L^2(\Tbb^3,\dVtrois)),
\\
& \text{timelike integrability:}
&& \Phi \in L^\infty(\Sbb^1,L^2(\Interval\times\Tbb^2,\dVundeux)),
\endaligned
\ee
together with 
\bel{equa-integrable-BV-relaxed-deux}
\aligned
& \text{spacelike absolute continuity:}
&& \ell,\Kpar \in L^\infty(\Interval,\BVac(\Tbb^3)) ,
\\
& \text{timelike absolute continuity:}
&&
\ell,\Kpar \in L^\infty(\Sbb^1,\BVac(\Interval\times\Tbb^2)) ,
\\
& \text{spacelike bounded variation:}
&& 
\log\Omega \in L^\infty(\Interval,\BV(\Tbb^3)),
\\
& \text{timelike bounded variation:}
&& \log\Omega \in L^\infty(\Sbb^1,\BV(\Interval \times \Tbb^2)),
\endaligned
\ee
and 
\bel{equa-relaxed-Pi}
\aligned
& \text{spacelike measure condition:}
&& \Pi \in L^\infty(\Interval,\Meas(\Tbb^3)),
\\
& \text{timelike measure condition:}
&& \Pi \in L^\infty(\Sbb^1,\Meas(\Interval\times\Tbb^2)).
\endaligned
\ee
\end{definition}


We are now in a position to introduce the corresponding notion of weak solution.  For the restricted (and much simpler) class of Gowdy spacetimes, the notion of measure correctors was proposed earlier by the authors in~\cite{LeFlochLeFloch-1} 

\begin{definition}
\label{weakdefinitionT2-deux2}
An \textbf{Einstein--Euler flow with corrector} is a  $\Phi\Psi\Pi$ flow with finite energy in the sense of \autoref{weakdefinitionT2-relaxed}, whose stress-energy corrector \(\Pi=(\Pi^{ab})\) satisfies the divergence inequalities~\eqref{eq:Twave}
and Einstein's modified lapse equations
\bse\label{eq:evollambda-new-mod2}
\begin{align}
\opL_5 = \divuntrois\bigl(\Omega\,\Mbf^{0\bullet}(\Jbb,\Kpar,\Lbb)\bigr)
+ \frac{1}{2\,t}\,\Msource(\Jbb,\Kpar,\Lbb)
& \leq - \divuntrois\bigl(\amdeux^{-1}\Omega^{-1}\Pi^{0\bullet}\bigr) ,
\\
\opL_6 = \divuntrois\bigl(\Omega\,\Mbf^{1\bullet}(\Jbb,\Kpar,\Lbb)\bigr)
& = \divuntrois\bigl(\amdeux^{-1}\Omega^{-1}\Pi^{1\bullet}\bigr) ,
\label{eq:T2-Euler-perp-def-b-mod}
\\
\opL_{10} = \bigl(\log\Omega \bigr)_t
+ \frac{1}{4t}
- \frac{t}{2}\Mbf^{11}(\Jbb,\Kpar,\Pbb,\Qbb)\,\Omega^2
& = \frac{t}{2} \Pi^{11} \amdeux^{-1} ,
\label{eq:evollambda-new-mod2a}
\\
\opL_{13} = \bigl(\log \Omega \bigr)_x
+ \frac{t}{2}\Mbf^{01}(\Jperp,\Pbb,\Qbb)\,\Omega^2\amdeux
& = - \frac{t}{2} \Pi^{01},
\label{eq:evollambda-new-mod2b}
\end{align}
\ese
together with all the remaining equations listed in \autoref{def-weaksolu} ---namely the evolution equations of $(\Pbb,\Qbb,\amdeux,\Kpar)$ and constraints of~$\Kpar$, the particle number equation~\eqref{indef-eqvNumb} and parallel Euler equations \eqref{eq:T2-fluid23-def-a-r}--\eqref{eq:T2-fluid23-def-b-r},
and the reference entropy inequality~\eqref{equa-lfk3}.
More precisely, \eqref{eq:evollambda-new-mod2}~is understood in the sense of measures: the first and second identities involve the divergence defined in~\eqref{eq:Twave-naive} and the third identity is understood on timelike slices and the last one on spacelike slices.

Such a flow is said to be a \textbf{tame Einstein--Euler flow with corrector} if it satisfies additionally
the BV regularity and divergence conditions (4) and (5) in \autoref{def-weaksolu-deux}.
\end{definition}


\subsubsection{Calculus rules and Volpert products}

In \autoref{section=6}, we will need to multiply the corrector~\(\Pi\), which is
a bounded Radon measure, by coefficients depending on the bounded variation function
\(\log\Omega\). These products are understood in the sense of the Volpert
calculus. We only recall here the minimal facts needed later on and refer to the general framework of
Dal~Maso--LeFloch--Murat~\cite{DLM}, where the Volpert product appears as a
special case of a broader notion of nonconservative product.

We begin in one space dimension. If \(u\in \BV(\Sbb^1)\), then the one-sided
traces \(u(x-)\) and \(u(x+)\) exist at every point and, for every Lipschitz  
function \(f\), the \emph{Volpert representative} of $f(u)$ is defined by 
\be
\label{eq:Volpert-average-1D}
\Rep f(u)(x)
\coloneqq
\int_0^1 f\bigl(\theta u(x-) + (1-\theta)u(x+)\bigr)\, d\theta .
\ee
At a point $x$ of continuity of $u$, we simply have \(\Rep f(u)(x) = f(u)(x)\), while at jump
points this definition selects the canonical averaged value across the jump.
If \(\mu\in\Meas(\Sbb^1)\), we may then define the product
\be
\Rep f(u)\,\mu
\ee
as a Radon measure. A crucial property is that the Volpert representative preserves sup-norm bounds:
\be
\label{eq:Volpert-sup-bound-1D}
\bigl\|\Rep f(u)\bigr\|_{L^\infty(\Sbb^1)}
\leq
\|f(u)\|_{L^\infty(\Sbb^1)}, 
\ee
therefore, for every Lipschitz function~$f$ and every continuous function $\varphi\in C^0(\Sbb^1)$,
\be
\label{eq:Volpert-measure-bound-1D}
\bigl| \langle \Rep f(u)\,\mu,\varphi\rangle \bigr|
\leq
C \, \|\varphi\|_{C^0(\Sbb^1)}
\, \|\mu\|_{\Meas(\Sbb^1)}, \qquad C := \sup_{\text{range of $u$}} | f |. 
\ee
Importantly, under this regularity, the following chain rule is valid for every smooth function $h$ (see for instance~\cite{DLM}): 
\be
{d \over dx} h(u) = h_u(u) {d \over dx} u. 
\ee


The same idea applies in dimension $1+1$. Every function $u=u(t,x)\in \BV(\Interval\times \Sbb^1)$ admits a decomposition $\Interval\times \Sbb^1 = \mathscr C \cup J \cup I$, 
as follows. 
\bei 

\item When $(t,x)$ belongs to the approximate continuity set $\mathscr C$, then $u$ is $\mathcal L^1$--approximately continuous at $(t,x)$ with respect to the Lebesgue measure $\mathcal L^2$ in two dimensions. 

\item When $(t,x)$ belongs to the \emph{approximate jump set} $\mathscr J$, then $u$ admits two traces \(u^-,u^+\) defined with respect to the Lebesgue measure $\mathcal L^2$ in two dimensions. Moreover, the set is countably \(\mathcal H^1\)-rectifiable for the Hausdorff measure $\mathcal H^1$.  

\item 
At the remaining ``interaction points'', namely in $\mathscr I$, the BV structure is \emph{more singular}, but these points form an $\mathcal H^1$-negligible set and therefore do not affect the definition of the product with Radon measures. 
\eei
\noindent Thanks to this fine structure of bounded variation functions, we can define again the Volpert representative 
\be
\label{eq:Volpert-average-2D}
\Rep f(u)
\coloneqq
\begin{cases}
f(u) & \text{at approximate continuity points},\\[1mm]
\displaystyle \int_0^1 f\bigl(\theta u^- + (1-\theta)u^+\bigr)\, d\theta
& \text{on the approximate jump set},
\end{cases}
\ee
and this yields a bounded Borel function satisfying the same sup-norm
estimate as in one dimension. Accordingly, if \(\mu\in\Meas(\Interval\times
\Sbb^1)\), then \(\Rep f(u)\,\mu\) is a well-defined Radon measure.
 
In \autoref{section=6}, this construction will be used to define the product of $\Omega^k$ with $\Pi^{ab}$ by selecting \(u=\log\Omega\) and \(f(v)=e^{kv}\), so that
\be
\label{eq:Volpert-Omega-Pi}
\Omega^k\,\Pi^{ab}
\quad\text{is replaced by}\quad
\Rep(e^{k\log\Omega})\,\Pi^{ab}.
\ee
The bound
\be
\label{eq:Volpert-sup-bound-spacetime}
\bigl\|\Rep(e^{k\log\Omega})\bigr\|_{L^\infty}
\leq
\|\Omega^k\|_{L^\infty}
\ee
holds on spacelike slices as well as on timelike slices
\(\Interval\times\Tbb^2\). In particular, we have 
\be
\label{eq:Volpert-Pi-estimate}
\bigl| \langle \Rep(\Omega^k)\,\Pi^{ab},\varphi\rangle \bigr|
\leq
\|\Omega^k\|_{L^\infty}
\, \|\varphi\|_{C^0}
\, \|\Pi^{ab}\|_{\Meas} .
\ee
These calculus rules will be needed in
\autoref{section=6}. We shall not develop the theory further here, and in the
rest of the paper we will omit the symbol \(\Rep\) from the notation, since no confusion can arise.


\subsection{Dealing with initial data sets and weak solutions}
\label{section=4-6}

\subsubsection{Initial data sets}

So far, we have described Einstein--Euler flows (possibly with correctors)
on an open time interval \(\Iopen\). We now explain how such a flow is
required to approach prescribed data at the initial time \(t_0\). We state these conditions in terms of an \emph{initial data set}, denoted by
\bel{equa-data}
\mathring \Phi := (\mathring\Jbb, \mathring\Pbb, \mathring\Qbb), 
\qquad
\mathring \Psi := (\mathring{\ell},  \mathring\Omega, \mathring K) ,
\quad \text{and} \quad
\mathring \Pi .
\ee
We also introduce \(\mathring{\amdeux} = d\mathring{\ell}/dx\). These data must obey the constraint equation with corrector~\eqref{eq:evollambda-new-mod2}
\bse
\label{eq:221-copy00}
\bel{eq:T2-567-copy}
(\log \mathring\Omega)_x = - {t_0 \over 2 } \, \Mbf^{01}(\mathring\Pbb, \mathring\Qbb, \mathring\Jbb^\perp) \, \mathring\Omega^2 \mathring\amdeux
- \frac{t_0}{2} \mathring\Pi^{01} , 
\ee
as well as the constraint equations~\eqref{eq:221a}--\eqref{eq:221b}, namely
\bel{eq:221-copy}
\aligned
( \mathring K_2 )_x & = \bigl( \mathring J_0 \mathring J_2 - \mathring P_1 \mathring K_2/2 \bigr) \, \mathring \Omega \mathring \amdeux,  
\qquad
( \mathring K_3 )_x = \bigl( \mathring J_0 \mathring J_3  - \mathring Q_1 \mathring K_2 + \mathring P_1 \mathring K_3/2 \bigr) \, \mathring\Omega \mathring{\amdeux} .
\endaligned
\ee
\ese

Clearly, it suffices to prescribe $\mathring{\ell}, \mathring\Pbb, \mathring\Qbb, \mathring\Jbb^\perp, \mathring\Jpar, \mathring\Pi$ and the value of $\log \mathring\Omega$ and $\mathring K_m$ at one point (or an integral thereof, for instance). The functions \((\log \mathring\Omega,\mathring P,\mathring Q,
\mathring K_m)\) are then recovered by integration of
\eqref{eq:221-copy00} and~\eqref{eq:definevar-01}, subject to the appropriate
periodicity and integrability requirements.

\begin{definition}\label{def-initdataset}
An \textbf{initial data set} for an Einstein--Euler flow with corrector is a
set of data \((\mathring \Phi,\mathring \Psi,\mathring\Pi)\) as in
\eqref{equa-data} satisfying the integrability and regularity conditions
\bse
\bel{initdataset-reg}
\mathring \Phi \in L^2(\Tbb^3, \dVtrois\mathring{\;} ),  
\qquad
\mathring \ell, \mathring \Kpar \in \BVac(\Tbb^3) ,
\qquad
\log\mathring\Omega \in \BV(\Tbb^3) ,
\qquad
\mathring \Pi \in \Meas(\Tbb^3) ,
\ee
in which \(\dVtrois\mathring{\;} = |t_0|\,\mathring\Omega\, d_x\mathring\ell
\, dydz\), as well as the Einstein constraints with corrector~\eqref{eq:221-copy00},
together with the positivity and symmetric tracelessness conditions
\bel{initdataset-pos}
\aligned
& 
\mathring \amdeux > 0 , \qquad
\mathring \Omega > 0 , \qquad
- \mathring\Jbb^\perp \cdot \mathring\Jbb^\perp \geq 0, \quad \mathring J_0 \leq 0 ,
\\
& 
\mathring\Pi^{01} = \mathring\Pi^{10} , \quad
|\mathring\Pi^{01}| \leq \mathring\Pi^{00} = \mathring\Pi^{11} ,
\endaligned
\ee
and the periodicity constraints (where
\(\mathring P = \int_0^x \mathring P_1 \mathring\Omega \, d_x\mathring\ell\))
\bel{initdataset-period}
\int_{\Tbb^3} \mathring P_1 \dVtrois\mathring{\;}
= \int_{\Tbb^3} e^{-\mathring P} \mathring Q_1 \dVtrois\mathring{\;}
= 0 .
\ee
Such a data set is said to be a \textbf{tame initial data set} if its
metric-weighted parallel momentum
\be
\mathring\Gammaun_2 \coloneqq \abs{t_0}^{1/2} \bigl( \hNumb_{\mathring\Jbb}\bigr)^{-1} \, e^{\mathring P/2} \mathring J_2,
\qquad 
\mathring\Gammaun_3 \coloneqq \abs{t_0}^{1/2} \bigl(\hNumb_{\mathring\Jbb} \bigr)^{-1} \, \bigl( \mathring Qe^{\mathring P/2} \mathring J_2+e^{-\mathring P/2} \mathring J_3 \bigr)
\ee
\ese
satisfies additionally the regularity condition
\(\mathring\Gammaunpar\in \BV(\Tbb^3)\).
\end{definition}

\paragraph{Boundary terms at the initial time.}

Thanks to the notation above, initial data can be incorporated directly into
the weak formulation by allowing test functions that are compactly supported
in the half-closed interval \(\Ihalf\). In this way, Stokes' formula produces a
boundary contribution at the initial time \(t_0\), and we adopt the resulting
identities as our definition of weak solutions with prescribed initial data.
If desired, one could of course include analogous boundary contributions at a
final time \(t_1\), but this will not be needed here.

For instance, we replace~\eqref{eq-div-weak} by the statement that, for all
smooth functions \(\varphi\in\Ccal^\infty(\Tbb^3)\) and
\(\theta=\theta(t)\) compactly supported in the half-closed interval~\(\Ihalf\),
\bel{eq-div-weak0-bound}
\aligned
& \int_{\Interval} \int_{\Tbb^3}  X_0 \theta_t \varphi  \, \dVtrois dt
- \int_{\Sbb^1} \int_{\Interval \times \Tbb^2}X_1  \, \theta \varphi_x \, \dVundeux  dx
- \theta(t_0) \int_{\Tbb^3} \mathring X_0 \varphi  \dVtrois\mathring{\;}
\\
& = \int_{\Interval \times \Tbb^3} F \, \theta \varphi \, \dVuntrois 
\hskip1.cm  
 \text{(weak divergence with boundary),}
\endaligned
\ee
which involves the prescribed data $\mathring X_0, \dVtrois\mathring{\;}$. The weak formulations of inequalities and of time-derivative equations
developed in \autoref{section=4-3} extend in exactly the same way. The
spatial formulation~\eqref{eq-space-weak} of the constraints on a fixed time
slice remains unchanged, whereas the spacetime formulation
\eqref{eq-space-weak-tx} acquires the expected boundary term at $t=t_0$.
The divergence inequalities~\eqref{eq:Twave} for the corrector with initial data $\mathring\Pi$ is explicited in \autoref{section=6-2} as it is used there.


\subsubsection{Definition for flows with finite energy}

We now incorporate the initial data into our terminology. Recall that
the constraint equations are imposed in the open interval \(\Iopen\) by
\autoref{weakdefinitionT2-deux2}, and at the initial time \(t_0\) by
\autoref{def-initdataset}.

\begin{definition}
\label{def-initialdata}
An Einstein--Euler flow with corrector
(cf.~\autoref{weakdefinitionT2-deux2}), namely
\(\Phi(t)\), \(\Psi(t)\), \(\Pi(t)\) for \(t \in \Iopen\), is said to
\textbf{assume a prescribed initial data set}
\(\mathring \Phi, \mathring \Psi,\mathring\Pi\) at the time \(t_0\), provided
all evolution equations hold \emph{in the integral sense with boundary terms}
included at \(t=t_0\) and determined by the data
\(\mathring \Phi,\mathring \Psi,\mathring \Pi\). Explicitly, these equations
are the divergence inequalities of the corrector~\eqref{eq:Twave}, the modified
evolution equation of the lapse~\eqref{eq:evollambda-new-mod2a}, the evolution
equations \eqref{eq:T2-1234-def}, \eqref{eq:T2-8-def},
\eqref{eq:T2-9101112-evol-def-a}, \eqref{eq:T2-9101112-evol-def-b} of
\((\Pbb,\Qbb,\amdeux,\Kpar)\), the particle number and spatial Euler
equations~\eqref{eq:T2-Euler-perp-def-r}, and the energy balance
law~\eqref{equa-lfk3}.

If the flow is tame, namely if it satisfies the bounded variation and
divergence conditions~(4) and~(5) in
\autoref{def-weaksolu-deux}, it is said to \textbf{tamely assume} the
prescribed tame initial data set if the following conditions are satisfied. 
\bei
\item The \(H\)-divergence law holds in the integral sense with boundary terms
included at \(t=t_0\). 

\item The spatial variation of the metric-weighted parallel momentum, together with its time continuity constant, 
\(\Gammaunpar\) is bounded by its initial value, that is, for each $m=2,3$,
\bel{equa-DK399}
\aligned
\Var(\Gammaun_m(t);\Tbb^3) 
& \leq \Var(\mathring\Gammaun_m;\Tbb^3) , \qquad &&{\rm a.e.\ } t\in\Iopen ,
\\
\int_{\Tbb^3} \biggl| \frac{\Gammaun_m(t_3) - \Gammaun_m(t_2)}{t_3-t_2}\biggr| \, \dVtrois(t_2)
 & \leq \Var(\mathring\Gammaun_m;\Tbb^3) , \, &&{\rm a.e.\ }t_2, t_3 \in \Iopen . 
\endaligned
\ee

\item For every quasi-current~\(\Fbb\) that is dominated by the reference
entropy current, the balance law~\eqref{equa-55f}, namely
\be
\divuntrois\bigl( |t|^{-1} \Omega^{-1} \, \Fbb(\Jbb)\bigr) = \Fmeasure ,
\ee
holds in the integral sense with boundary terms included at \(t=t_0\), and
the shear-induced measure \(\Fmeasure\in\Meas(\Iopen\times\Tbb^3)\) satisfies
the estimate
\bel{Fmeasure-apriori-def}
\aligned
& \|\Fmeasure\|_{\Meas(\Iopen\times\Tbb^3)} 
\leq C_\Fbb 
\bigl( \aleph + \aleph^2 \bigr)
\\
& \aleph :=  \|\mathring\Phi\|_{L^2(\Tbb^3,\dVtrois)}^2 + \|\mathring\Psi,\mathring\Gammaunpar\|_{\BV(\Tbb^3)} + \|\mathring\Pi\|_{\Meas(\Tbb^3)},
\endaligned
\ee
where \(C_\Fbb>0\) depends only on the quasi-current.
\eei
\end{definition}

\paragraph{On the shear-induced measures.}

The measure \(\Fmeasure\) should be understood as collecting all terms that
cannot be represented at the present level of regularity (unless further integrability or regularity is known). In particular, it includes contributions generated by
the lower-order source term \(\Delta_\Fbb\), by the singular products
involving the spacelike derivative of the metric-weighted parallel momentum,
and, depending on the approximation procedure, by the corrector~$\Pi$. This is the reason why all norms of
\((\mathring\Phi,\mathring\Psi,\mathring\Gammaunpar,\mathring\Pi)\) are
allowed to appear on the right-hand side of
\eqref{Fmeasure-apriori-def}. The estimate is intentionally formulated in a
stable way: it is strong enough to control the family of admissible flows from
their initial data only, yet weak enough to remain meaningful and stable under passage to
the limit.

\paragraph{Selection by initial data.}

For compressible fluid flows, even in the $1+1$ setup under consideration, \emph{no uniqueness} result is available, at least at the level of weak regularity (finite energy) that we propose in this paper. 
The role of the tame bounds is therefore not to single out a unique solution, but rather to select a class of
solutions that remains quantitatively controlled by the initial data. More
precisely, the variation bounds on \(\Gammaunpar\) and the measure bounds on
the shear-induced defects provide a priori estimates depending only on the
prescribed data set. This is exactly the level of control needed later in the
compactness analysis: it allows us to deduce compactness properties of entire
families of (weak, approximate) solutions from assumptions on the initial data alone. At the same
time, these bounds are deliberately kept at the weakest level compatible with
stability, so that they continue to hold in the weak limits considered in the
subsequent sections.


\section{Compactness, stability, and instability}
\label{section=5}

\subsection{The notion of bounded sequence}
\label{section=5-1}

\subsubsection{Strategy}

In \autoref{section=4}, we introduced the tame regularity framework for
Einstein--Euler flows, together with the enlarged notion of flow with
corrector. We now investigate the behavior of sequences of such flows under weak convergence.

Our results are formulated for sequences of \emph{approximate} solutions, a
setting that simultaneously covers exact and approximate solutions, regular
and weakly regular solutions. In particular, configurations involving both
shock waves in the fluid and impulsive gravitational waves in the geometry are included here. 
We identify the natural class of bounded
sequences for which we may extract convergent subsequences and describe the
corresponding weak limits. 

The corresponding bounds are expressed along the
spacelike slices of the areal foliation and involve, in particular, the
variables \(\ell\) and \(\log\Omega\), which govern the foliation and the
spacetime volume measure. In the future-contracting regime, a lower bound on
the conformal length is also required. At this stage, these bounds are taken as assumptions. The a priori estimates established in
Sections~\ref{section=6} and~\ref{section=7} show that they are precisely the
bounds propagated by the evolution in the class of solutions considered here.
Thus, the present section provides the abstract compactness framework, while
the subsequent sections supply the estimates needed to apply it.

At the level of regularity considered here,     standard compactness arguments based on embedding theorems are insufficient to pass to the limit in the Einstein and Euler equations. For sequences uniformly bounded in the natural norms
of the problem, nonlinear interaction may generate oscillations and lack of convergence in the
limit. This is precisely the reason for introducing, already in
\autoref{section=4}, 
    the notion of a tame Einstein–Euler flow with corrector.

Two distinct regimes arise. For ``well-prepared'' initial data, we obtain a
nonlinear stability statement: after extraction of a subsequence, the weak
limit remains a tame Einstein--Euler flow satisfying the original system; see
\autoref{theorem-492}. For ``ill-prepared'' data, by contrast, we obtain only a
weaker stability property: the limit is still a tame flow, but an additional
corrector stress tensor may appear in the Einstein equations; see
\autoref{theorem-493}. This instability mechanism is intrinsic to sequences of weakly
regular Einstein flows.

    The compactness framework developed in this section is relevant
     not only for perturbations of Cauchy data,
but also for approximation procedures such as vanishing viscosity and finite
volume schemes, as further explored in our companion
paper~\cite{LeFlochLeFloch-next}.


\subsubsection{Bounded sequences of \(\Phi\Psi\Pi\) flows}

We continue to rely on the notation
$\Phi=(\Jperp,\Jpar,\Pbb,\Qbb)$, $\Psi=(\ell,\log\Omega,\Kpar)$ and~$\Pi$, 
where the variables in $\Phi$ are of $L^2$-type, those in $\Psi$ are of $\BV$-type, and $\Pi$ is a measure.
We are interested in sequences of flows $\Interval\ni t\mapsto \bigl(\Phi^\suit,\Psi^\suit,\Pi^\suit\bigr)(t)$ that are uniformly controlled in the natural norms as follows.
\bse
\label{equa-ConditionTheo-0}
\bei

\item[(1)] \textbf{Weak regularity.}
For each \(\suit\in(0,1]\), the map \(\bigl(\Phi^\suit,\Psi^\suit,\Pi^\suit\bigr)\) is a flow with finite energy on the spacetime slab $\Interval\times\Tbb^3=\Iopen\times\Tbb^3$, 
with $t_0<t_1<0$ or $0<t_0<t_1$, in the sense of~\autoref{weakdefinitionT2-relaxed}. 

\item[(2)] \textbf{Uniform $L^2$, $\BV$ and measure bounds.}
The following norms of the sequence
are uniformly bounded in the limit $\suit\to 0$:
\bel{equa-ConditionTheo-1}
\aligned
& \sup_{t\in\Interval} \|\Phi^\suit\|_{L^2(\Tbb^3,\dVtrois_\suit)} , \quad
&& \sup_{x\in\Sbb^1} \|\Phi^\suit\|_{L^2(\Interval\times\Tbb^2,\dVundeux_\suit)} ,
\\
& \sup_{t\in\Interval}\Var(\Psi^\suit(t);\Tbb^3) , \quad
&& \sup_{x\in\Sbb^1} \Var(\Psi^\suit(x);\Interval\times\Tbb^2) , \quad
&& \sup_{\Interval\times\Tbb^3} |\Psi^\suit| ,
\\
& \sup_{t\in\Interval} \|\Pi^\suit\|_{\Meas(\Tbb^3)} , \quad
&& \sup_{x\in\Sbb^1} \|\Pi^\suit\|_{\Meas(\Interval\times\Tbb^2)} .
\endaligned
\ee

\item[(3)] \textbf{Non-vanishing conformal length.}
In the future-contracting regime, that is, on an interval $\Iopen$ with $t_0<t_1<0$, the  conformal length functional is uniformly bounded from below:
\bel{equa-ConditionTheo-2}
\liminf_{\suit\to0}\inf_{t\in\Interval}\Lbfscri^\suit(t)>0,
\qquad
\Lbfscri^\suit(t)=|t|^{-1}\int_{\Tbb^3}(\Omega^\suit)^{-1}\,\dVtrois_\suit.
\ee

\item[(4)] \textbf{Compatible conformal lengths.} We assume that the sequence of conformal length densities is Cauchy in $L^1(\Tbb^3)$ at each time $t\in\Interval$, namely
\bel{equa-normL1}
\lim_{\alpha \to 0} \sup_{\suit,\suit'\leq \alpha}
\|\amdeux^\suit(t,\cdot)-\amdeux^{\suit'}(t,\cdot)\|_{L^1(\Tbb^3)} = 0.
\ee
\eei
\ese
\noindent
By analogy with \autoref{weakdefinitionT2}, we use the following terminology. 

\begin{definition}
A sequence satisfying {\rm(1)}--{\rm(4)} is called a \textbf{bounded sequence of \(\Phi\Psi\Pi\) flows with finite energy}.  If $\Pi=0$, it is called a bounded sequence of $\Phi\Psi$ flows with finite energy.
\end{definition} 

For sequences of exact solutions, these bounds \emph{are effectively derived} from bounds at initial time thanks to the fundamental balance laws and entropy inequalities, as we establish later (cf.~Sections~\ref{section=6} and~\ref{section=7}). For sequences of approximate solutions, by contrast, they must be imposed as part of the compactness framework. In concrete approximation schemes, such as the vanishing viscosity method or finite volume methods, these estimates will have to be verified separately (cf.~\cite{LeFlochLeFloch-next}).

The normalization~\eqref{equa-normL1} above can be ensured at initial time by selecting for instance the conformal length~$\ell$ along the $t=t_0$ initial data surface as a spatial coordinate, namely by normalizing $\amdeux(t_0)=1$ at initial time.  As we prove in \autoref{section=7} in the case of exact solutions, the Cauchy property is then propagated.


\subsubsection{Bounded sequences of solutions}

In addition to the uniform bounds introduced above, we assume that the field equations are satisfied up to error terms, referred to as \emph{negligible terms}.
We use the notation \(\Hterm^\suit\) and \(\Hterm^\suit_0\) for sequences of distributions, and \(\Mterm^\suit\) and \(\Mterm^\suit_-\) for sequences of bounded measures, as follows\footnote{The notation $\Hterm^\suit$ (and likewise~$\Hterm^\suit_0$) is a slight abuse, to enable stating divergence and evolution equations conveniently.  To be precise, $\Hterm^\suit\,\dVuntrois{}^\suit$ is a distribution in the standard negative Sobolev space $H^{-1}(\Interval\times\Tbb^3)$, defined as the dual of the standard Sobolev space $H^1(\Interval\times\Tbb^3;dt\,dx\,dy\,dz)$.}.
\bse
\bei

\item {\bf Terms \(\Hterm^\suit\):}
\be
\Hterm^\suit \, \dVuntrois{}^\suit \text{ is relatively compact in } H^{-1}(\Interval\times\Tbb^3).
\ee

\item {\bf Terms \(\Hterm^\suit_0\):}
\be
\aligned
& \Hterm^\suit_0 \, \dVuntrois{}^\suit \text{ is bounded in } H^{-1}(\Interval\times\Tbb^3),
\\
& \lim_{\suit\to0} \bigl\langle \Hterm^\suit_0\,\dVuntrois{}^\suit, \varphi \bigr\rangle_{\Interval\times\Tbb^3} =0
\qquad \text{for every $\varphi\in\Ccal^\infty(\Interval\times\Tbb^3)$.}
\endaligned
\ee

\item {\bf Terms \(\Mterm^\suit\):}
\be
\Mterm^\suit \text{ is a bounded sequence in } \Meas(\Interval\times\Tbb^3).
\ee

\item {\bf Terms \(\Mterm^\suit_-\):}
\be
\aligned
& \Mterm^\suit_- \text{ is a bounded sequence in } \Meas(\Interval\times\Tbb^3),
\\
& \limsup_{\suit\to0}\la \Mterm^\suit_-,\varphi\ra \leq 0
\quad \text{for every non-negative \(\varphi\in\Ccal^\infty(\Interval\times\Tbb^3)\).}
\endaligned
\ee

\eei
\ese
We emphasize that the symbols \(\Hterm^\suit\), \(\Hterm^\suit_0\), \(\Mterm^\suit\), and \(\Mterm^\suit_-\) denote classes of admissible error terms, and therefore need not refer to the same sequence at each occurrence. We also use below the convergence in the sense of distributions. 


To define the notion of bounded sequence of (approximate) Einstein--Euler flows with correctors, we supplement {\rm(1)}--{\rm(4)} above with the following conditions (which mimic the definition for exact solutions in \autoref{weakdefinitionT2-deux2}),
expressed in terms of the operators $\opL_a$ introduced in~\eqref{equa-JKLoperators} and~\eqref{eq:Twave}.
\bei

\item[(5)]
\textbf{Einstein--Euler equations.}
Einstein's evolution equations~\eqref{eq:T2-1234-def} for $\Pbb,\Qbb$ are satisfied up to negligible contributions \emph{that are bounded in} $H^{-1}(\Interval\times\Tbb^3)$,
\bse\label{approx-Einstein-Euler}
\bel{approx-Einstein-Euler-PQ}
\opL_a(\Phi^\suit,\Psi^\suit) = \Hterm^\suit_0,
\qquad 1 \leq a \leq 4 .
\ee
The particle number equation~\eqref{equa-entr} and the parallel Euler equations~\eqref{eq:T2-fluid23-def-a} and~\eqref{eq:T2-fluid23-def-b}, Einstein's other evolution and constraint equations \eqref{eq:T2-all-suite-def}--\eqref{eq:theconstraints-def} are satisfied \emph{up to negligible contributions}\footnote{The convergence to zero is defined for each type of equation (divergence, evolution, constraint) by considering the spacetime versions of their weak definitions in \autoref{section=4-3}.}
\be
\aligned
\divuntrois_\suit \vNumb^\suit & \to 0\,  \text{ distributionally,} 
\\
\opL_a(\Phi^\suit,\Psi^\suit) & \to 0\,  \text{ distributionally,} 
\qquad && 7 \leq a \leq 15,\quad a\not\in\{10,13\} .
\endaligned
\ee
The \emph{modified} orthogonal Euler equations and lapse equations~\eqref{eq:evollambda-new-mod2} are satisfied up to negligible contributions,
\bel{equa-kdkld99}
\aligned
& \opL_6(\Phi^\suit,\Psi^\suit) - \divuntrois\bigl(\amdeux^{-1}\Omega^{-1}\Pi^{1\bullet}\bigr) \to 0 ,
\\
& \opL_{10}(\Phi^\suit,\Psi^\suit) - \frac{t}{2} (\amdeux^\suit)^{-1}\Pi^{11}{}^\suit \to 0 , \qquad
\opL_{13}(\Phi^\suit,\Psi^\suit) + \frac{t}{2} \Pi^{01}{}^\suit \to 0 \quad
\text{distributionally} .
\endaligned
\ee
The divergence inequalities~\eqref{eq:Twave} for the corrector also hold
up to a measure~$\Mterm^\suit_-$ and negligible contributions
\be
\aligned
\opL_a(\Psi^\suit,\Pi^\suit) - \Mterm^\suit_- & \to 0\,  \text{ distributionally.}
\qquad && a = 16, 17 .
\endaligned
\ee
\ese

\item[(6)]
\textbf{Reference entropy.}
 The energy balance law is satisfied up to a corrector term, a measure~$\Mterm^\suit_-$ and negligible contributions,
\bel{equa-lfk3-approx}
\opL_5(\Phi^\suit,\Psi^\suit) + \divuntrois\bigl(\amdeux^{-1}\Omega^{-1}\Pi^{0\bullet}\bigr) - \Mterm^\suit_-
\to 0 \text{ distributionally.}
\ee
\eei
Furthermore, as explained in \autoref{section=4}, the compensated compactness technique requires additional entropy structures, which were stated for exact solutions in \autoref{def-weaksolu-deux}.
We require the following conditions.
\bei

\item[(7)]
\textbf{Parallel momentum.}
The metric-weighted parallel momentum $\Gammaunpar{}^\suit$ satisfies the divergence law~\eqref{equa-kSK1-0} and has uniformly bounded variation, as follows: 
\bel{equa-bvpara-bis}
\aligned
&\text{$H$-divergence law }
&&\divuntrois_\suit \bigl( H\bigl(\Gammaunpar{}^\suit \bigr) \, \vNumb^\suit \bigr) = \Hterm^\suit_0
\, \text{ for any function } H,
\\
& \text{spacelike bounded variation}
&& \sup_{\eps \in (0,1]} \sup_{ t \in \Interval} \Var(\Gammaunpar{}^\suit(t)) < + \infty,
\\
&\text{time continuity modulus}
&& \sup_{\eps \in (0,1]} \sup_{ t_2\neq t_3 \in \Interval} \Bigl\| \frac{\Gammaunpar{}^\suit(t_3)-\Gammaunpar{}^\suit(t_2)}{t_3 - t_2} \Bigr\|_{L^1(\Tbb^3,\dVtrois{}^\suit(t_2))} < + \infty.
\endaligned
\ee

\item[(8)]
\textbf{Quasi-currents.}
Furthermore, for every quasi-current \(\Fbb=\Fbb(\Jbb)\) dominated by the reference entropy current (cf.~\autoref{def:quasi-current-dominated}), the following entropy balance law holds:
\bel{equa-balancelaws-plus}
\aligned
\divuntrois_\suit \bigl( |t|^{-1} (\Omega^\suit)^{-1} \, \Fbb(\Jperp{}^\suit,\Jhatpar{}^\suit) \bigr)
= \Fmeasure^\suit + \Hterm^\suit,
\endaligned
\ee
where the sequence of measures $\Fmeasure^\suit$ is uniformly bounded
\be
\Fmeasure^\suit = \Mterm^\suit 
\ee
and the relatively compact terms $\Hterm^\suit$ stand for possibly additional approximation errors. 
\eei

We summarize the assumptions of the compactness framework as follows.

\begin{definition}\label{def-810}
A bounded sequence of $\Phi\Psi\Pi$ flows with finite energy
$t\in\Interval \mapsto \bigl(\Phi^\suit,\Psi^\suit,\Pi^\suit\bigr)(t)$
satisfying {\rm(1)}--{\rm(6)} above is called a \textbf{bounded sequence of approximate Einstein--Euler flows with corrector}; it is called a \textbf{tame sequence} if it also satisfies  {\rm(7)}--{\rm(8)}.
The qualifier ``approximate'' is omitted for exact solutions.
\end{definition}

Approximate solutions generated, for instance, by finite volume schemes or by vanishing viscosity regularizations will be constructed separately for the Einstein--Euler system in~\cite{LeFlochLeFloch-next}.


\subsection{Compactness framework for general data sets}
\label{section=5-2}

We now proceed and consider the convergence of sequences of flows.  Quadratic oscillations in the essential geometric variables $\Pbb^\suit,\Qbb^\suit$ may produce a corrector tensor, which means that the natural class of flows to consider are Einstein--Euler flows with corrector.  This class as a whole is stable, but the corrector is unstable due to contributions from the geometry.  The proof is given in Sections~\ref{section=8} and~\ref{section=9}.

\begin{theorem}[Nonlinear instability of Einstein spacetimes with corrector]
\label{theorem-493}
Consider a tame sequence of approximate Einstein--Euler flows with corrector ${\Interval \ni t \mapsto (\Phi^\suit, \Psi^\suit, \Pi^\suit)(t)}$ (cf.~\autoref{def-810}).  
Then, there exists a limiting flow ${\Interval \ni t \mapsto \big( \Phi^\sharp, \Psi^\sharp, \Pi^\sharp)(t)}$, 
so that the following properties hold, after extraction of a subsequence if necessary.

\bei 

\item ${(\Phi^\sharp, \Psi^\sharp, \Pi^\sharp)(t)}$ is a \emph{tame Einstein--Euler flow with corrector} in the sense of~\autoref{weakdefinitionT2-deux2}.

\item It is bounded in natural norms, the variables $(\Phi^\sharp,\Psi^\sharp)$ converge in the sense that\footnote{The convergence in~\eqref{eq:conv-prop-deux} holds in stronger norms that can be deduced from the uniform bounds and are explicited in Sections~\ref{section=8} and~\ref{section=9}.}
\bel{eq:conv-prop-deux}
\aligned
& \Jbb^\suit \to \Jbb^\sharp
\qquad 
&& \text{ strongly in } L^2(\Interval \times \Tbb^3),
\\
& (\Pbb^\suit,\Qbb^\suit) \rightharpoonup (\Pbb^\sharp,\Qbb^\sharp)
\qquad 
&& \text{ weakly in } L^2(\Interval \times \Tbb^3), 
\\ 
& \Psi^\suit \to \Psi^\sharp && \text{ for almost every } (t,x) \in \Interval \times \Tbb^3 ,
\endaligned
\ee
while $\Pi^\suit$ only converges up to additional terms,
\be
\aligned
& \Pi^\suit + \amdeux^\suit (\Omega^\suit)^2 \bigl( \Pbb^\suit \otimes \Pbb^\suit + \Qbb^\suit \otimes \Qbb^\suit \bigr)  
\\
& \hskip1.cm \to  \Pi^\sharp + \amdeux^\sharp (\Omega^\sharp)^2 \bigl( \Pbb^\sharp \otimes \Pbb^\sharp + \Qbb^\sharp \otimes \Qbb^\sharp \bigr)  
 && \text{weakly-$*$ in measure in } \Interval \times \Tbb^3. 
\endaligned
\ee 

\eei 
\end{theorem}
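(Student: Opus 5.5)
The plan is hierarchical, following the order announced in the overview of Sections~\ref{section=8} and~\ref{section=9}. First, I extract weakly convergent subsequences from the uniform bounds~\eqref{equa-ConditionTheo-1}. The $L^2$-variables $\Phi^\suit$ converge weakly in $L^2(\Interval\times\Tbb^3)$. By Helly's theorem applied slice by slice, combined with the timelike BV bounds, the $\BV$-variables $\Psi^\suit$ converge almost everywhere, since bounded variation in both space and time gives $\BV_{\loc}$ compactness in $1+1$ dimensions. The measures $\Pi^\suit$, together with the quadratic quantities $\amdeux^\suit(\Omega^\suit)^2\Pbb^\suit\otimes\Pbb^\suit$ and so on, converge weakly-$*$.

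For the conformal length, I use~\eqref{equa-normL1} together with~\eqref{eq:T2-8-def} to obtain strong $L^1$ convergence of $\amdeux^\suit$. The lower bound~\eqref{equa-ConditionTheo-2} and the sup bound on $\Psi$ keep the volume forms $\dVtrois_\suit$ and $\dVundeux_\suit$ uniformly comparable to Lebesgue measure. This lets me transfer weak convergence between the curved and flat $L^2$ spaces.

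Next I handle the geometry. The pair $(\Pbb^\suit,\Qbb^\suit)$ satisfies the div--curl system $\opL_1,\dots,\opL_4=\Hterm_0^\suit$, with quadratic sources bounded in $L^1$ by the spacetime integrability of Sections~\ref{section=6}--\ref{section=7}. By Murat's lemma these sources are compact in $W^{-1,p}$, hence in $H^{-1}$ after interpolation with the $H^{-1}$ bound. The metric-weighted div--curl lemma then shows that the null forms $\Pbb\cdot\Pbb$, $\Qbb\cdot\Qbb$, $\Pbb\cdot\Qbb$, $\Pbb\wedge\Qbb$ pass to the limit. Consequently the defect measures of $P_0^2$ and $P_1^2$ coincide (and likewise for $\Qbb$), and the defect of $\Ebf_0\mp\Ebf_1$ is a nonnegative sum of squares.

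I then set $\Pi^\sharp$ to be the weak-$*$ limit of $\Pi^\suit$ plus this quadratic defect, as in the displayed convergence. With this definition, symmetry, tracelessness and the positivity conditions~\eqref{eq=allconfor-infty-mod} hold automatically. The lapse equations~\eqref{eq:evollambda-new-mod2a}--\eqref{eq:evollambda-new-mod2b} and the $\opL_6$ equation pass to the limit by linearity in the quadratic terms, once the a.e.\ convergence of $\Omega^\suit$ and Volpert products are used to handle the weights. The divergence inequalities~\eqref{eq:Twave} for $\Pi^\sharp$ should follow from $\opL_{1,3}$ (energy identities for the wave parts) together with the $\Mterm_-^\suit$ terms.

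Finally, the twists $\Kpar^\suit$ converge a.e. by their BV bounds. The parallel momentum $\Gammaunpar{}^\suit$ converges strongly in $L^1$ by the uniform variation and Lipschitz-in-time bounds in~\eqref{equa-bvpara-bis}, via an Aubin--Lions-type argument.

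The main obstacle, I expect, is the \emph{strong} convergence of $\Jperp{}^\suit$. For this I would use quasi-currents: condition~(8) gives divergences equal to bounded measures plus $\Hterm^\suit$, hence compact in $H^{-1}$ by Murat's lemma. The shear-induced terms are controlled by~\eqref{equa-condi2} and the strong convergence of $\Jhatpar$.

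I would introduce the Young measure $\nu_{t,x}$ associated with $(\Jperp{}^\suit)$, which is non-concentrating thanks to the $L^4$-type integrability from Section~\ref{section=7}. Applying the div--curl lemma to pairs of dominated quasi-currents yields Tartar's commutation relation
\be
\la\nu,F_1^0G^1-F_1^1G^0\ra=\la\nu,F_1^0\ra\la\nu,G^1\ra-\la\nu,F_1^1\ra\la\nu,G^0\ra ,
\ee
with $\Jhatpar$ frozen at its strong limit. Following DiPerna and LeFloch--Westdickenberg, I would use the phase-space wave-equation parametrization of quasi-currents (available from the companion work) to show that $\nu$ is either a Dirac mass or supported on the vacuum. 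Strong convergence of $\Jbb^\suit$ follows, and with it convergence of $\Jpar$ and of the nonlinear fluid terms.

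Passing to the limit in the entropy inequality~\eqref{equa-lfk3-approx} and in conditions (4)--(5) then uses lower semicontinuity of the quadratic geometric energy, which is exactly absorbed by $\Pi^\sharp$. This confirms that the limit is a tame Einstein--Euler flow with corrector.
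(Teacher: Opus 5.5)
Your proposal follows essentially the same route as the paper's proof in Sections~\ref{section=8}--\ref{section=9}: strong $L^1$ convergence of $\amdeux^\suit$, $\BV$/Helly compactness and a.e.\ convergence of $\log\Omega^\suit$ and $\Kpar^\suit$, the weighted div--curl lemma for the null forms in $(\Pbb^\suit,\Qbb^\suit)$ with $\Pi^\sharp$ defined as the quadratic defect (which yields positivity and the limiting lapse equations), and variation plus time-continuity compactness for the parallel momentum; for $\Jperp{}^\suit$ you likewise use quasi-currents, Murat's lemma, Tartar's commutation relation and Young-measure reduction with $\Jhatpar$ frozen at its strong limit. One sub-step is mis-justified: an $L^1$ bound on the div--curl sources gives compactness only in $W^{-1,p}$ for $p<2$, and interpolating with a mere $H^{-1}$ bound does not reach $H^{-1}$ (Murat's lemma requires a $W^{-1,r}$ bound with $r>2$), so you need instead an $L^s$ bound with $s>1$ on those sources, such as the spacetime $L^2$ control of the null forms in \autoref{theo:te-Euler-explicit}, which embeds compactly into $H^{-1}$ in $1+1$ dimensions.
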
    


\subsection{Compactness framework for well-prepared data sets}
\label{section=5-3} 

\subsubsection{Initially well-prepared solutions}

Our main compactness result concerns sequences of solutions with a strong convergence assumption at initial time on the essential geometric variables.
The notion of initial data set was already introduced in \autoref{section=4-6} for exact solutions and will be extended straightforwardly in \autoref{section=5-4} for sequences of approximate solutions.  We content ourselves here with only the initial data $(\mathring\Pbb^\suit,\mathring\Qbb^\suit)$ for the essential geometric variables.

\begin{definition}
\label{def-well-p}
Consider a bounded sequence of approximate Einstein--Euler flows $(\Phi^\suit,\Psi^\suit,\Pi^\suit)$.
The sequence is said to be \textbf{initially well-prepared} if the essential geometry variables converge strongly to some limit on the initial hypersurface, namely
\be
\mathring\Pbb^\suit,\mathring\Qbb^\suit \to \mathring\Pbb^\sharp,\mathring\Qbb^\sharp
\qquad
\text{strongly in } L^2(\Tbb^3).
\ee
\end{definition}

Equipped with this notion we immediately deduce from \autoref{theorem-493} the following stability result.

\begin{corollary}[Nonlinear stability of initially well-prepared Einstein spacetimes with corrector]
\label{theorem-494}
Under the notation of \autoref{theorem-493}, if the initial data set is well-prepared in the sense of \autoref{def-well-p}, then $\Pi^\suit\to\Pi^\sharp$ weakly star in measure in $\Interval\times\Tbb^3$.
\end{corollary}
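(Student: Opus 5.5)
The plan is to deduce the statement from \autoref{theorem-493} by showing that, under well-preparedness, the quadratic geometric terms carry no oscillation defect. The convergence of $\Pi^\suit + \amdeux^\suit(\Omega^\suit)^2(\Pbb^\suit\otimes\Pbb^\suit+\Qbb^\suit\otimes\Qbb^\suit)$ in \autoref{theorem-493} then reduces the claim $\Pi^\suit\to\Pi^\sharp$ to the convergence
\[
\amdeux^\suit(\Omega^\suit)^2\bigl(\Pbb^\suit\otimes\Pbb^\suit+\Qbb^\suit\otimes\Qbb^\suit\bigr)\to \amdeux^\sharp(\Omega^\sharp)^2\bigl(\Pbb^\sharp\otimes\Pbb^\sharp+\Qbb^\sharp\otimes\Qbb^\sharp\bigr)
\]
weakly-$*$ in measure. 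We extract a subsequence along which the left-hand side converges to some measure, and call $D=(D^{ab})$ the difference between this limit and the right-hand side. We know that $\amdeux^\suit\to\amdeux^\sharp$ in $L^1$ and that $\Omega^\suit\to\Omega^\sharp$ almost everywhere with uniform bounds; we also use Volpert representatives where needed, as in \autoref{section=4-5}. Hence $D$ is the weak-$*$ limit of $\amdeux^\sharp(\Omega^\sharp)^2$ times the squares of $\Pbb^\suit-\Pbb^\sharp$ and $\Qbb^\suit-\Qbb^\sharp$. It therefore has the same structure as a stress-energy corrector. First, $D^{00}\pm D^{01}\geq 0$, since $\Ebf_0\mp\Ebf_1$ is a sum of squares. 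Second, $D^{00}=D^{11}$, because the null forms $\Pbb\cdot\Pbb$ and $\Qbb\cdot\Qbb$ converge by the div-curl lemma (this is part of the proof of \autoref{theorem-493} in \autoref{section=8}). So it suffices to prove $D^{00}\pm D^{01}=0$.

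The key step is a transport inequality for the two null components $D^{\pm\bullet}=D^{0\bullet}\pm D^{1\bullet}$. Concretely, we would show that
\[
\divuntrois\bigl(\Omega^{-1}\amdeux^{-1} D^{\pm\bullet}\bigr)\leq 0
\]
in the weak sense with boundary term, with initial trace $\mathring D=0$. We would obtain this from the same computation that gives the divergence inequalities \eqref{eq:Twave} for the limiting corrector. We write the evolution equations \eqref{eq:T2-1234-def} for $\Pbb,\Qbb$ in null form, multiply them by $P_0\mp P_1$ and $Q_0\mp Q_1$ to obtain approximate balance laws for $\Ebf_0(\Pbb,\Qbb)\mp\Ebf_1(\Pbb,\Qbb)$, pass to the limit, and subtract the corresponding identity for the limit. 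The subtraction uses the exact balance law \eqref{eq:T2-Euler-perp-def-b-mod} and the inequality version of \eqref{eq:T2-Euler-perp-def-a} for both the sequence and the limit. The source terms in these balance laws fall into three groups. The null forms converge. The terms involving $\Jbb$ and $\Kpar$ converge, thanks to strong convergence of $\Jbb$ and almost-everywhere convergence of $\Psi$. The remaining term is of the form $t^{-1}\times$ (energy), which contributes a multiple of $D$ with a sign compatible with the inequality, or at worst a Gronwall-type term $C\,D^{00}$. Well-preparedness enters only through the initial trace. Since $\mathring\Pbb^\suit,\mathring\Qbb^\suit$ converge strongly in $L^2$, the boundary terms in \eqref{eq-div-weak0-bound} converge with no defect, so the initial trace of $D$ vanishes.

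Next we conclude by an energy argument. We test the inequality against $\theta(t)\varphi$ with $\varphi\equiv1$ and $\theta$ nonnegative and nonincreasing on $\Ihalf$; the flux term then disappears. This gives that $t\mapsto\int_{\Tbb^3}|t|\,D^{\pm0}(t)$ is nonincreasing for almost every $t$, up to the Gronwall correction. It starts from $0$ and $D^{\pm0}\geq0$, so it vanishes, and $D^{00}=D^{01}=0$ on each spacelike slice. The timelike component $D^{\pm1}$ is controlled by $D^{00}\pm D^{01}$ and hence also vanishes. Therefore $D=0$, the same conclusion holds for every subsequence, and combining with \autoref{theorem-493} gives $\Pi^\suit\to\Pi^\sharp$.

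The main obstacle is the second step: justifying the passage to the limit in the quadratic balance laws for $\Ebf_0\mp\Ebf_1$, and checking that every lower-order term either converges or has a favorable sign or Gronwall form. The delicate terms are the $t^{-1}$ energy terms and the products with the merely BV lapse, which we would handle through the Volpert product bound \eqref{eq:Volpert-Pi-estimate}. If the direct route fails, the first fallback is to identify $D$ with the corrector generated by the sequence, $\Pi^{\text{limit}}-\Pi^\sharp$ where $\Pi^{\text{limit}}$ denotes the weak-$*$ limit of $\Pi^\suit$. This measure inherits \eqref{eq:Twave} from the proof of \autoref{theorem-493} and has zero initial trace.
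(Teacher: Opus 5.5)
Your reduction to the defect measure $D$ and its corrector structure ($D^{00}=D^{11}$, $D^{00}\pm D^{01}\geq 0$) is reasonable, and your route differs from the paper's. The paper obtains the corollary from \autoref{lemma-74-strong}. There, a relative energy $\Gcal^{\sharp,\suit}(t)$ between the sequence and its limit is driven to zero by a Gronwall argument, using $\Gcal^{\sharp,\suit}(t_0)\to0$ and $\Err^{\sharp,\suit}\to0$. This gives strong $L^2$ convergence of $(\Pbb^\suit,\Qbb^\suit)$ at every time, after which the convergence of the combination in \autoref{theorem-493} yields $\Pi^\suit\to\Pi^\sharp$.

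Your propagation step, however, does not work as described. The laws $\opL_5,\opL_6$ with corrector only involve the total $\Pi+\amdeux\Omega^2\Mbf$, which already passes to the limit by \autoref{theorem-493}. So they cannot separate $D$ from the weak limit of $\Pi^\suit$. Moreover, subtracting the limit's energy inequality from the sequence's (both in the direction $\leq$) produces no inequality for $D$. You need a quadratic law for $(\Pbb,\Qbb)$ alone, built from \eqref{eq:T2-1234-def} and \eqref{eq:T2-8-def}. It must hold for the sequence, and as an \emph{equality} for the merely $L^2$ limit $(\Pbb^\sharp,\Qbb^\sharp)$; the latter is a renormalization property that has to be proved, for instance in the characteristic coordinates of \autoref{section=8-6}. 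The right weight is the one matching $\Pi$. For $w_\pm\coloneqq\Omega(P_0\pm P_1)$ one has $(\amdeux w_\pm)_t\mp(w_\pm)_x=-t^{-1}\amdeux w_0-\amdeux\Omega^2 S_\Pbb$, and similarly for $\Qbb$. So no lapse derivative enters, and hence no product of $\Pi^\suit$ with $\Ebf_0(\Pbb^\suit,\Qbb^\suit)$.

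More seriously, your classification of the source terms misses the one that decides the question. In $(|t|\amdeux w_\pm^2)_t\mp(|t|w_\pm^2)_x$, the $t^{-1}$ terms combine into the null form $\sgn(t)\,\amdeux\Omega^2\,\Pbb\cdot\Pbb$, and the cubic terms cancel between $\Pbb$ and $\Qbb$. What remains is $-|t|\,\amdeux_t\,w_\pm^2=-|t|\,t\,\bigl(\Ebf_0(\Jpar,\Kpar)+\tfrac{q_\Jbb}{2}|\Jbb\cdot\Jbb|\bigr)\Omega^4\amdeux\,(P_0\pm P_1)^2$, whose defect is nonnegative. For $t>0$ it has the favorable sign. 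In the future-contracting regime, however, it is a positive source whose coefficient, the fluid energy density, is unbounded. It is therefore neither convergent nor bounded by $C\,D^{00}$ with a bounded $C$. Your Gronwall step does not close without a uniform integrability of $|\Jbb\cdot\Jbb|\,|\Pbb|^2$, which the available bounds do not provide.

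With the weight of \eqref{energy-PQ-quartic}, the same obstruction reappears through squared null forms such as $(\Pbb\cdot\Pbb)^2$ and $(\Jperp\cdot\Pbb)^2$. These converge only up to a one-sided defect, so the statement that ``the null forms converge'' does not cover them. This is also exactly the coupling that the paper's relative energy inequality must absorb into $C(t)\,\Gcal^{\sharp,\suit}+\Err^{\sharp,\suit}$.

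Finally, the fallback is not viable. One has $D=\Pi^\sharp-\lim\Pi^\suit$, not the reverse, and a difference of two measures each satisfying \eqref{eq:Twave} need not satisfy it.
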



By setting the corrector to vanish ($\Pi^\suit=\Pi^\sharp=0$), we obtain one of the main conclusions of this paper, which concerns sequences of Einstein--Euler flows \emph{without corrector.}

\begin{theorem}[Nonlinear stability of initially well-prepared Einstein spacetimes]
\label{theorem-492}
Any initially well-prepared tame sequence of approximate Einstein--Euler flows $\Interval \ni t \mapsto \bigl(\Phi^\suit(t),\Psi^\suit(t)\bigr)$ without corrector
converges, after extraction of a subsequence if necessary, to a limiting flow
$\Interval \ni t \mapsto \bigl(\Phi^\sharp(t),\Psi^\sharp(t)\bigr)$
defined on the slab \(\Interval\times\Tbb^3\), with the following properties.

\bei

\item The limit flow \(\bigl(\Phi^\sharp,\Psi^\sharp\bigr)\) is a tame Einstein--Euler flow.

\item The sequence converges in the natural norms:
\bel{eq:conv-prop}
\aligned
\Phi^\suit &\to \Phi^\sharp
 && \text{strongly in } L^2(\Tbb^3,\dVtrois_\sharp)
\text{ for almost all times } t\in\Interval,
\\
\Psi^\suit &\to \Psi^\sharp
&& \text{almost everywhere on } \Interval\times\Tbb^3.
\endaligned
\ee
Moreover, $(\Phi^\sharp, \Psi^\sharp)$ are subject to the same bounds as the approximate solutions $(\Phi^\suit,\Psi^\suit)$.
\eei
\end{theorem}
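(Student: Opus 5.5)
The plan is to obtain \autoref{theorem-492} as the special case $\Pi^\suit\equiv 0$ of \autoref{theorem-493} and \autoref{corollary-494}, and then to upgrade weak convergence of $(\Pbb^\suit,\Qbb^\suit)$ to strong convergence using the fact that no corrector appears.

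\textbf{Step 1: the limit flow has zero corrector.} A tame sequence without corrector is a tame sequence of approximate flows with corrector $\Pi^\suit=0$, so \autoref{theorem-493} applies. It yields a limit $(\Phi^\sharp,\Psi^\sharp,\Pi^\sharp)$, which is a tame Einstein--Euler flow with corrector, with $\Jbb^\suit\to\Jbb^\sharp$ strongly and $\Psi^\suit\to\Psi^\sharp$ almost everywhere. Since the data are well-prepared, \autoref{corollary-494} gives $\Pi^\sharp=\lim\Pi^\suit=0$.

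\textbf{Step 2: the limit is a tame Einstein--Euler flow.} With $\Pi^\sharp=0$, the modified equations \eqref{eq:evollambda-new-mod2} reduce to \eqref{eq:evollambda-def}, \eqref{eq:T2-567-def}, \eqref{eq:T2-Euler-perp-def-b-r} and \eqref{equa-lfk3}. Their right-hand sides are integrable on spacelike and timelike slices, so $\log\Omega^\sharp$ is absolutely continuous in both directions rather than merely BV. Hence $(\Phi^\sharp,\Psi^\sharp)$ is a tame Einstein--Euler flow in the sense of \autoref{def-weaksolu-deux}. The uniform bounds pass to the limit by lower semicontinuity of $L^2$ norms and total variation.

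\textbf{Step 3: strong convergence of $(\Pbb^\suit,\Qbb^\suit)$.} By \autoref{theorem-493} with $\Pi^\suit=0=\Pi^\sharp$,
\[
\amdeux^\suit(\Omega^\suit)^2\bigl(\Pbb^\suit\otimes\Pbb^\suit+\Qbb^\suit\otimes\Qbb^\suit\bigr)\rightharpoonup\amdeux^\sharp(\Omega^\sharp)^2\bigl(\Pbb^\sharp\otimes\Pbb^\sharp+\Qbb^\sharp\otimes\Qbb^\sharp\bigr)
\]
weakly-$*$ as measures.
\begin{itemize}
\item Take the trace and test against a positive continuous weight. Because the weights $\amdeux^\suit(\Omega^\suit)^2$ converge almost everywhere and are uniformly bounded above and below, the weighted $L^2$ norms of $\Pbb^\suit,\Qbb^\suit$ converge to those of the limit.
\item Together with weak convergence, this gives strong convergence in $L^2(\Interval\times\Tbb^3)$.
\item Fubini and a further subsequence then give strong convergence in $L^2(\Tbb^3)$ for almost every $t$. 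Dominated convergence applied to the volume forms $\dVtrois_\suit\to\dVtrois_\sharp$ transfers this to $L^2(\Tbb^3,\dVtrois_\sharp)$.
\item The parallel momentum $\Jpar$ is part of $\Jbb$, which already converges strongly, so the same a.e.-in-time argument covers all of $\Phi$.
\end{itemize}

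\textbf{Main obstacle.} The key step is the content of \autoref{corollary-494}: showing that well-preparedness at $t_0$ propagates, i.e.\ that the defect measure of $(\Pbb,\Qbb)$ vanishes for all times. I would first show that the defect tensor, namely the weak limit of $\amdeux(\Omega)^2(\Pbb\otimes\Pbb+\Qbb\otimes\Qbb)$ minus the corresponding product of weak limits, enjoys the following properties.
\begin{itemize}
\item It is traceless. This comes from the div-curl lemma applied to the null forms $\Pbb\cdot\Pbb$ and $\Qbb\cdot\Qbb$, using \eqref{eq:T2-1234-def} with $H^{-1}$-compact right-hand sides via Murat's lemma and the $L^1$ bounds of \autoref{section=7}.
\item It is nonnegative in the sense \eqref{eq=allconfor-infty-mod}.
\item It obeys the null-direction divergence inequalities \eqref{eq:Twave}, which are inherited from the energy identities along the characteristics $e_0\pm e_1$.
\end{itemize}
I would then test \eqref{eq:Twave} against a cutoff supported in a backward characteristic domain, with the initial boundary term vanishing by strong convergence of the data. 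Since the components $\Pi^{00}\pm\Pi^{01}$ are nonnegative, this forces them to vanish. The delicate point is handling the BV lapse weight in these tests, which I would do via the Volpert products introduced in \autoref{section=4-5}.
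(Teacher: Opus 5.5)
Your reduction is the same as the paper's: \autoref{theorem-492} is the case $\Pi^\suit\equiv0$ of \autoref{theorem-493}, once well-preparedness rules out a limiting corrector. Your Step~3 passage from a vanishing defect to strong convergence is fine. The difference is in how you obtain the rigidity, i.e.\ the content of \autoref{theorem-494}, which the paper proves in \autoref{lemma-74-strong}.

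The paper never analyzes $\Pi^\sharp$ there. It compares each flow with the limit through the relative geometric energy $\Gcal^{\sharp,\suit}(t)=\int_{\Tbb^3}\Mbf^{00}(0,\Kpar^\suit-\Kpar^\sharp,\Pbb^\suit-\Pbb^\sharp,\Qbb^\suit-\Qbb^\sharp)\,\dVtrois_\sharp$. It derives $\frac{d}{dt}\Gcal^{\sharp,\suit}\leq C(t)\,\Gcal^{\sharp,\suit}+\Err^{\sharp,\suit}$ with $\Err^{\sharp,\suit}\to0$ in $L^1(\Interval)$, and concludes by Gronwall from $\Gcal^{\sharp,\suit}(t_0)\to0$. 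Strong convergence at every time comes first, and $\Pi^\sharp\equiv0$ follows from it.

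You go the other way. You identify the defect measure with the limit corrector and use its positivity \eqref{eq=allconfor-infty-mod} together with the null transport inequalities \eqref{eq:Twave} and zero initial value to show it vanishes. Only then do you deduce strong convergence, for a.e.\ $t$ via Fubini. The paper mentions this mechanism only in passing in \autoref{section=8-6}: a nonnegative corrector that vanishes initially stays zero. Your route reuses the structure of $\Pi^\sharp$ already needed for \autoref{theorem-493} and works entirely at the level of the limit measure. The relative-energy route needs no information about the limiting corrector and gives convergence at every time rather than almost every time.

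Two points need care when you write out the key step.

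\begin{itemize}
\item The favourable sign in the defect inequality depends on the $\Omega$-weight and on $\sign(t)$.
With the conformally natural weight, the null energy of $(\Pbb,\Qbb)$ is coupled through $\amdeux_t$ to the unbounded fluid density $\Ebf_0(\Jpar,\Kpar)+\frac{q_\Jbb}{2}|\Jbb\cdot\Jbb|$; compare the terms $(-\Jbb\cdot\Jbb)\,\Ebf_m(\Pbb,\Qbb)$ in \eqref{equaR0R1}.
With the weight $t^{-1}\Omega^{-1}$ of \eqref{energy-PQ-quartic} this coupling cancels. However, once the current is normalized to be future-directed, the quartic squares in $\Deltageom^{\pm}$ appear with the factor $-\sign(t)$, so they are dissipative only in the future-expanding regime. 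Compare the remark in \autoref{section=7-3} that the two regimes require opposite powers of~$\Omega$.
You should therefore verify that the inequality you use really propagates zero in the future-contracting regime.

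\item Volpert products give a meaning to $\Omega^k\Pi$, but they do not identify weak-$*$ limits of $(\Omega^\suit)^k$ times quadratic expressions when the defect concentrates on the jump set of $\Omega^\sharp$.
It is simpler to put the weight into the sequence and take the defect of $(\Omega^\suit)^{k/2}(P^\suit_0\pm P^\suit_1,\,Q^\suit_0\pm Q^\suit_1)$. This defect vanishes if and only if the unweighted one does, because $\Omega^\suit\to\Omega^\sharp$ almost everywhere with two-sided bounds.
\end{itemize}

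Finally, in Step~3 use the $00$-component, i.e.\ $P_0^2+P_1^2+Q_0^2+Q_1^2$. The Lorentzian trace is the null form $\Pbb\cdot\Pbb+\Qbb\cdot\Qbb$, which converges anyway and carries no information.
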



\subsection{Compactness for flows with uniformly bounded initial data}
\label{section=5-4}

\subsubsection{Approximate initial data sets}

We extend the notions of initial data sets and of flows assuming a prescribed initial data set (\autoref{def-initialdata}) to sequences of approximate solutions.  This simply requires allowing error terms that vanish as $\suit\to 0$ in the Einstein constraint equations.

\begin{definition}\label{def-initdataset-approx}
A \textbf{tame sequence of approximate initial data sets} for an Einstein--Euler flow with corrector is a sequence of data $(\mathring\Phi^\suit,\mathring\Psi^\suit,\mathring\Pi^\suit)$ satisfying the following conditions.
For each $\suit\in(0,1]$, the integrability and regularity conditions~\eqref{initdataset-reg}, the positivity and symmetric tracelessness conditions~\eqref{initdataset-pos} and the periodicity constraints~\eqref{initdataset-period} hold.
The Einstein constraints with corrector hold, up to negligible contributions (in the $\suit\to 0$ limit),
\be
\opL_{13}(\mathring\Phi^\suit,\mathring\Psi^\suit) + \frac{t_0}{2} \mathring\Pi^\suit{}^{01} \to 0 , \quad
\opL_{14}(\mathring\Phi^\suit,\mathring\Psi^\suit) \to 0 , \quad
\opL_{15}(\mathring\Phi^\suit,\mathring\Psi^\suit) \to 0 \quad
\text{distributionally.}
\ee
The metric-weighted parallel momentum $\mathring\Gammaunpar{}^\suit$ has uniformly bounded total variation and (after extraction of a subsequence if necessary) converges to a limit $\mathring\Gammaunpar{}^\sharp$ at every point in $\Tbb^3$ and in total variation.
\end{definition}

The pointwise convergence of $\mathring\Gammaunpar{}^\suit$ to a limit is a consequence of the uniform bound on its total variation, by Helly's theorem, whereas the convergence of the total variation is a non-trivial assumption on the initial oscillations of fluid variables.  The pointwise and total variation convergence are explicited in~\eqref{mathring-Gammaunpar-convergence} below for clarity.

By extension of \autoref{def-initialdata}, a tame sequence of approximate Einstein--Euler flows is said to \textbf{tamely assume a prescribed tame sequence} of approximate initial data sets provided the Einstein--Euler evolution equations listed in~\eqref{approx-Einstein-Euler} (with correctors and negligible terms) the reference entropy inequality~\eqref{equa-lfk3-approx} (with a negative measure and negligible terms), the $H$-divergence law~\eqref{equa-bvpara-bis} (with negligible terms), and the quasi-balance law~\eqref{equa-balancelaws-plus} (with measure and $H^{-1}$-compact terms)
hold in the integral sense with boundary terms at time~$t_0$ determined by the data $(\mathring\Phi^\suit,\mathring\Psi^\suit,\mathring\Pi^\suit)$, and for all $\suit\in(0,1]$ one has the bounds~\eqref{equa-DK399} and~\eqref{Fmeasure-apriori-def} on the metric-weighted parallel momentum~$\Gammaunpar{}^\suit$ and the shear-induced measure $\Fmeasure{}^\suit$ in terms of the initial data, up to error terms that tend to zero as $\suit\to 0$.

The instability result in \autoref{theorem-493} (and thus its consequences, \autoref{theorem-494} and \autoref{theorem-492}) readily accomodates initial data sets.

\begin{corollary}[Nonlinear stability with respect to initial data sets]
\label{corollary-493-init}
Under the notation of \autoref{theorem-493}, if the sequence $(\Phi^\suit,\Psi^\suit,\Pi^\suit)$ tamely assumes a prescribed tame sequence of approximate initial data sets $(\mathring\Phi^\suit,\mathring\Psi^\suit,\mathring\Pi^\suit)$ (cf.\ \autoref{def-initdataset-approx}), then there exists an (exact) tame initial data set $(\mathring\Phi^\sharp,\mathring\Psi^\sharp,\mathring\Pi^\sharp)$ (cf.\ \autoref{def-initdataset}) such that the following properties hold, after extraction of a subsequence if necessary.
\bei
\item The initial data sets converge to $(\mathring\Phi^\sharp,\mathring\Psi^\sharp,\mathring\Pi^\sharp)$ up to additional correctors, in the sense that
\bse
\begin{gather}
\aligned
\mathring\Jbb^\suit & \to \mathring\Jbb^\sharp
\qquad 
&& \text{ strongly in } L^2(\Tbb^3),
\\
(\mathring\Pbb^\suit,\mathring\Qbb^\suit) & \rightharpoonup (\mathring\Pbb^\sharp,\mathring\Qbb^\sharp)
\qquad 
&& \text{ weakly in } L^2(\Tbb^3), 
\\ 
\mathring\Psi^\suit & \to \mathring\Psi^\sharp && \text{ at almost every point in } \Tbb^3 ,
\endaligned
\\
\aligned
& \hskip-1cm \mathring\Pi^\suit + \mathring\amdeux^\suit (\mathring\Omega^\suit)^2 \bigl( \mathring\Pbb^\suit \otimes \mathring\Pbb^\suit + \mathring\Qbb^\suit \otimes \mathring\Qbb^\suit \bigr)  
\\
& \to  \mathring\Pi^\sharp + \mathring\amdeux^\sharp (\mathring\Omega^\sharp)^2 \bigl( \mathring\Pbb^\sharp \otimes \mathring\Pbb^\sharp + \mathring\Qbb^\sharp \otimes \mathring\Qbb^\sharp \bigr)  
 && \text{weakly-$*$ in measure in } \Tbb^3. 
\endaligned
\end{gather}
\ese

\item The limiting flow $(\Phi^\sharp,\Psi^\sharp,\Pi^\sharp)$ tamely assumes the initial data set $(\mathring\Phi^\sharp,\mathring\Psi^\sharp,\mathring\Pi^\sharp)$ in the sense of \autoref{def-initialdata}.
\eei
If the initial data is well-prepared, namely $(\mathring\Pbb^\suit,\mathring\Qbb^\suit)$ converges strongly, then $\mathring\Pi^\suit\to \mathring\Pi^\sharp$ weakly-$*$ in measure in~$\Tbb^3$.  If in addition $\mathring\Pi^\suit=0$, then $\Pi^\suit=0$ and the limiting flow has no corrector: $\mathring\Pi^\sharp=0$ and $\Pi^\sharp=0$.
\end{corollary}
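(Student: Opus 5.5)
The plan is to deduce \autoref{corollary-493-init} from \autoref{theorem-493} by extending every flow to a slightly larger interval that contains the initial slice in its interior, or, equivalently, by testing the weak formulations with functions $\theta$ supported in the half-closed interval $\Ihalf$ and tracking the boundary terms separately. I would proceed in four steps.

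The first step is to extract limits of the data. The uniform bounds on $\mathring\Phi^\suit$ in $L^2$, on $\mathring\Psi^\suit$ in $\BV$ together with sup bounds, and on $\mathring\Pi^\suit$ in measure give, by Helly's theorem and weak-$*$ compactness, a subsequence along which $\mathring\Psi^\suit\to\mathring\Psi^\sharp$ almost everywhere, $(\mathring\Pbb^\suit,\mathring\Qbb^\suit)\rightharpoonup(\mathring\Pbb^\sharp,\mathring\Qbb^\sharp)$ weakly in $L^2$, and the combination $\mathring\Pi^\suit+\mathring\amdeux^\suit(\mathring\Omega^\suit)^2(\mathring\Pbb^\suit\otimes\mathring\Pbb^\suit+\mathring\Qbb^\suit\otimes\mathring\Qbb^\suit)$ converges weakly-$*$. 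I would then define $\mathring\Pi^\sharp$ by subtracting the limiting quadratic term. Its positivity and tracelessness follow because the defect $\lim(\mathring\Pbb^\suit-\mathring\Pbb^\sharp)^{\otimes2}$ is nonnegative, and the matching of the $00$ and $11$ defects is the analogue of the null-form argument. The convergence of $\mathring\Gammaunpar{}^\suit$ pointwise and in total variation is part of \autoref{def-initdataset-approx}.

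Strong convergence of $\mathring\Jbb^\suit$ is not available from the data alone. I would obtain it from the flow: \autoref{theorem-493} gives strong $L^2$ convergence of $\Jbb^\suit$ in spacetime. The Lipschitz-in-time estimates for the balance laws (orthogonal Euler equations and particle number) then give equicontinuity in time in a weak norm, so the trace at $t_0$ converges. This is combined with the Young-measure collapse argument of \autoref{section=9} applied up to the boundary. Finally, the limiting constraints $\opL_{13},\opL_{14},\opL_{15}$ with corrector follow by passing to the limit in their spatial weak forms. The quadratic term $\Mbf^{01}$ in the lapse constraint is handled precisely by the definition of $\mathring\Pi^{\sharp01}$, and the $\Kpar$ constraints are products of strongly and weakly converging factors.

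The second step is to pass to the limit in the weak formulations with boundary terms, such as \eqref{eq-div-weak0-bound}. The interior terms converge exactly as in the proof of \autoref{theorem-493}. Each boundary term $\theta(t_0)\int\mathring X_0^\suit\varphi\,\dVtrois\mathring{\;}$ is either linear in weakly convergent data, or involves $\mathring\Jbb^\suit$ (strongly convergent), or is quadratic in $(\mathring\Pbb^\suit,\mathring\Qbb^\suit)$. In the last case the defect is absorbed exactly into $\mathring\Pi^\sharp$ in the lapse, $\Mbf^{1\bullet}$ and $\Mbf^{0\bullet}$ equations, matching the form of \eqref{eq:evollambda-new-mod2}. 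For the inequalities (the energy law and the corrector inequalities \eqref{eq:Twave}), I would check that the boundary defect enters with the favorable sign.

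The third step concerns the tame bounds. The estimates \eqref{equa-DK399} pass to the limit by lower semicontinuity of the total variation together with the assumed convergence of $\Var(\mathring\Gammaun_m^\suit)$. Likewise, \eqref{Fmeasure-apriori-def} passes to the limit by weak-$*$ lower semicontinuity of measure norms, using the fact that $\aleph^\suit\to\aleph^\sharp$ up to the defect. The defect is bounded by $\|\mathring\Pi^\sharp\|$ minus $\liminf\|\mathring\Pi^\suit\|$, which I expect to be the main delicate point: one may need to state the bound with $\limsup\aleph^\suit$, which dominates $\aleph^\sharp$ since the quadratic energy defect is transferred into $\mathring\Pi^\sharp$.

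The fourth step treats the well-prepared case. If $(\mathring\Pbb^\suit,\mathring\Qbb^\suit)$ converge strongly, the quadratic terms converge, so $\mathring\Pi^\suit\to\mathring\Pi^\sharp$. If moreover $\mathring\Pi^\suit=0$, then $\mathring\Pi^\sharp=0$. Since $\Pi^\suit\geq0$ satisfies the divergence inequalities \eqref{eq:Twave} with zero data, integrating against test functions of the form $\theta(t)$ shows that the total mass of $\Pi^{\pm0}$ is nonincreasing from $0$, so $\Pi^\suit=0$. \autoref{theorem-494} then gives $\Pi^\sharp=0$. I expect the main obstacle to be the strong convergence of the initial traces $\mathring\Jbb^\suit$, since the compensated compactness argument is interior in time and must be extended to the initial slice.
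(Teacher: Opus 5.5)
Your overall route matches the paper's own brief argument: extract limits of the data, pass to the limit in the weak formulations with boundary terms, and use the assumed convergence of $\Var(\mathring\Gammaunpar{}^\suit)$. But the step you flag as the main obstacle cannot be carried out.

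\textbf{Main gap: strong convergence of $\mathring\Jbb^\suit$.} Consider the laws that hold as equalities (particle number and the three momentum laws). Their equicontinuity in time, combined with the spacetime strong convergence from \autoref{theorem-493}, only identifies the \emph{weak} limits at $t=t_0$ of their densities. These densities are the principal variables $Z(\mathring\Jbb^\suit)=(\Numb^0,T^{01},T^{02},T^{03})(\mathring\Jbb^\suit)$ of \eqref{equa-hdhgd4}. The Tartar relation and the collapse of the Young measure hold only for almost every $(t,x)$, so they say nothing on the null set $\{t=t_0\}$. Compensated compactness kills oscillations only for $t>t_0$, as for Burgers' equation with data $\sin(x/\suit)$. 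Since the data are prescribed, nothing in the later evolution can make them converge.

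The hypotheses do allow oscillating fluid data. In Gowdy symmetry ($\Kpar=\Jpar=0$), take $\mathring\Pbb=\mathring\Qbb=0$, $\mathring J_1=0$ (so $\mathring\Omega$ is constant by $\opL_{13}$), $\mathring\amdeux=1$, $\mathring\Pi=0$, and $\mathring J_0^\suit=-(2+\sin(2\pi x/\suit))$ with $1/\suit\in\mathbb N$. The constraints and all uniform bounds hold and $\mathring\Gammaunpar{}^\suit=0$, yet no subsequence of $\mathring\Jbb^\suit$ converges strongly. The paper's justification (``the same convergence proofs apply'') does not cover this point either. So this conclusion must be added as a hypothesis (a fluid analogue of \autoref{def-well-p}) or replaced.

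A natural replacement is to define $\mathring\Jbb^\sharp$ by requiring $Z(\mathring\Jbb^\sharp)$ to equal the weak limit of $Z(\mathring\Jbb^\suit)$. This is compatible with the assumed convergence of $\mathring\Gammaunpar{}^\suit$, since $\widehat J_m=2T^{0m}/\Numb^0$. It is also compatible with $\opL_{13}$--$\opL_{15}$, which involve $\mathring\Jbb$ only through $T^{01},T^{02},T^{03}$. However, only $Z(\mathring\Jbb^\suit)\rightharpoonup Z(\mathring\Jbb^\sharp)$ then holds. Proving that the limit flow assumes these data needs a strong-trace argument at $t_0$. The reason is that the boundary terms of the reference energy inequality and of the quasi-balance laws are nonlinear in $\mathring\Jbb$. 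Under weak convergence their Young-measure defect does not vanish, and for the energy inequality it has the unfavorable sign.

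\textbf{Tracelessness of $\mathring\Pi^\sharp$.} There is no null-form argument at $t_0$. On a single slice, $\mathring P_0$ and $\mathring P_1$ are independent Cauchy data with no div--curl structure. Oscillating $\mathring P_0^\suit$ alone produces a defect in $(\mathring P_0^\suit)^2$ but none in $(\mathring P_1^\suit)^2$. So defining $\mathring\Pi^\sharp$ literally from the $\mathring\Pbb\otimes\mathring\Pbb$ combination violates $\mathring\Pi^{\sharp00}=\mathring\Pi^{\sharp11}$. Define it instead from the defects of $\Ebf_0(\mathring\Pbb,\mathring\Qbb)$ and $\Ebf_1(\mathring\Pbb,\mathring\Qbb)$, the only $(\Pbb,\Qbb)$-terms in $\Mbf^{00},\Mbf^{11},\Mbf^{01}$. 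Tracelessness is then automatic. Also $\mathring\Pi^{\sharp00}\geq|\mathring\Pi^{\sharp01}|$ follows because $\Ebf_0\pm\Ebf_1$ are sums of squares. The displayed convergence then holds for these two combinations, not componentwise.

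\textbf{The bound \eqref{Fmeasure-apriori-def}.} In Step 3 you correctly use the assumed convergence of $\Var(\mathring\Gammaunpar{}^\suit)$ for \eqref{equa-DK399}. For \eqref{Fmeasure-apriori-def}, however, lower semicontinuity only gives $\|\Fmeasure^\sharp\|\leq\liminf_{\suit\to0}C_\Fbb(\aleph^\suit+(\aleph^\suit)^2)$. Meanwhile $\aleph^\sharp\leq\liminf\aleph^\suit$ goes the wrong way, and no hypothesis forces $\aleph^\suit\to\aleph^\sharp$. For instance, if $\mathring K_2\neq0$, oscillations of $\mathring P_1^\suit$ feed $(\mathring K_2^\suit)_x$ through $\opL_{14}$. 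Then $\Var(\mathring\Kpar^\suit)$ does not converge to $\Var(\mathring\Kpar^\sharp)$ even though $\mathring\Kpar^\suit$ itself converges. Replacing $\aleph^\sharp$ by $\limsup\aleph^\suit$, as you suggest, does not verify \autoref{def-initialdata}. You would need either an extra convergence hypothesis on the initial norms or an independent derivation of \eqref{Fmeasure-apriori-def} for the limit flow.
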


\subsubsection{Sequences of exact solutions}

For (exact) tame Einstein--Euler flows with correctors, we derive a priori estimates in Sections~\ref{section=6} and~\ref{section=7} that control all norms in terms of the initial data.  In this way, uniform bounds on initial data sets are enough to ensure that a sequence of (exact) tame Einstein--Euler flows with correctors obeys the uniform bounds (in time and in~$\suit$) required to apply \autoref{theorem-493}.

\begin{theorem}[Nonlinear stability under initial bounds]\label{thm:exactsol-compactness}
  Consider a sequence $(\Phi^\suit,\Psi^\suit,\Pi^\suit)$ of (exact) tame Einstein--Euler flows with correctors on $(t_0,t_1^\suit)\times\Tbb^3$ with the same initial time~$t_0$, such that, for each $\suit\in(0,1]$, the flow tamely assumes a tame initial data set $(\mathring\Phi^\suit,\mathring\Psi^\suit,\mathring\Pi^\suit)$ on~$\Tbb^3$ in the sense of \autoref{def-initialdata}.
  Assume the initial data is uniformly bounded in the limit $\suit\to 0$ in the following norms
\bse\label{exactsol-initbounds-all}
\bel{exactsol-initbounds-1}
\|\mathring\Phi^\suit\|_{L^2(\Tbb^3,\dVtrois_\suit)} , \quad
\Var(\mathring\Psi^\suit;\Tbb^3) , \quad
\|\mathring\Psi^\suit\|_{L^\infty(\Tbb^3)} , \quad
\|\mathring\Pi^\suit\|_{\Meas(\Tbb^3)} ,
\ee
the initial conformal length density $\mathring\amdeux^\suit$ is a Cauchy sequence, namely
\bel{exactsol-initbounds-2}
\lim_{\alpha \to 0} \sup_{\suit,\suit'\leq \alpha}
\|\mathring\amdeux^\suit-\mathring\amdeux^{\suit'}\|_{L^1(\Tbb^3)} = 0 ,
\ee
and the metric-weighted parallel momentum $\mathring\Gammaunpar{}^\suit$ converges pointwise and in total variation norm as in \autoref{def-initdataset-approx},
\bel{mathring-Gammaunpar-convergence}
\sup_{\Tbb^3} \Bigl| \mathring\Gammaunpar{}^\suit - \mathring\Gammaunpar{}^\sharp\Bigr| \to 0 , \qquad
\lim_{\suit\to 0} \Var(\mathring\Gammaunpar{}^\suit;\Tbb^3) = \Var(\mathring\Gammaunpar{}^\sharp;\Tbb^3) .
\ee
\ese
Assume that the time of existence are uniformly bounded below, i.e.
\bel{exactsol-inftstar}
t_0 < \inf_{\suit \in (0,1]} t_*^\suit \leq 0 .
\ee
For future-expanding data sets, assume further that the initial conformal length is uniformly bounded below, i.e.
\bse
\bel{exactsol-Linit}
\inf_{\suit \in (0,1]} \mathringLbfscri^\eps > 0 ,
\quad \text{where \ } \mathringLbfscri^\eps \coloneqq \int_{\Sbb^1}d_x\mathring\ell^\suit .
\ee
For future-contracting data sets, assume further that, at a uniform time $t_1\in(t_0,\inf t_*^\suit)$ the sequence of conformal lengths at time $t_1$ is uniformly bounded below, i.e.
\be
\inf_{\suit \in (0,1]} \Lbfscri^\eps(t_1) >0. 
\ee
\ese
Then $(\Phi^\suit,\Psi^\suit,\Pi^\suit)$, restricted to $(t_0,t_1)$ ---with arbitrary $t_1\in(t_0,\inf t_*^\suit)$ in the future-expanding case---, is a \textbf{tame sequence} of (exact) Einstein--Euler flows with correctors, in the sense of~\autoref{def-810}, and the conclusions of \autoref{theorem-493} and \autoref{corollary-493-init} apply.
If the sequence of initial data sets is well-prepared then the stability results in \autoref{theorem-494} and \autoref{theorem-492} (with or without corrector, respectively) apply.
\end{theorem}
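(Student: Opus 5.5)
The plan is to verify, one at a time, the conditions (1)--(8) of \autoref{def-810} for the restricted sequence on $(t_0,t_1)$, with every constant depending only on the uniform initial bounds \eqref{exactsol-initbounds-all}, the lower bound on the conformal length, and $t_0,t_1$. Once this is done, \autoref{theorem-493}, \autoref{corollary-493-init}, and (for well-prepared data) \autoref{theorem-494} and \autoref{theorem-492} apply verbatim. Exact solutions satisfy (5)--(6) with vanishing error terms, so those two conditions are immediate. The substance lies in the uniform bounds (2), the lower bound (3), the Cauchy property (4), and the tame conditions (7)--(8).

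The first step controls the geometric functionals. Applying the monotonicity statements of \autoref{section=6} (Lemmas~\ref{lemma-31}, \ref{lem-conf-vol}, \ref{prop:energy}) to each flow, I would bound the volume $\Vcal^\suit(t)$, the conformal length $\Lbfscri^\suit(t)$, and the energy $\Ebfscri^\suit(t)$ on $[t_0,t_1]$ in terms of their initial values. The energy is essentially $\|\mathring\Phi^\suit\|^2_{L^2}$ plus the corrector mass, so it is controlled by \eqref{exactsol-initbounds-1}.

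In the expanding case, the monotonicity of $\Lbfscri$ together with \eqref{exactsol-Linit} gives (3). In the contracting case, I would use the assumed lower bound at $t_1$ and the monotone direction of $\Lbfscri$ to propagate it back to all of $[t_0,t_1]$. Condition \eqref{exactsol-inftstar} guarantees that $[t_0,t_1]$ lies inside every existence interval.

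With the energy and the length bounded, Lemmas~\ref{lemma-sup-Omega}, \ref{lem-speed}, \ref{lem-apriori--twist} give two-sided bounds on $\Omega^\suit$, BV bounds on $\ell^\suit$ and $\log\Omega^\suit$, and sup bounds on $\Kpar^\suit$. The spacetime and timelike-slice estimates of \autoref{section=6} then supply the timelike $L^2$ norms of $\Phi^\suit$ and the timelike BV and measure norms of $\Psi^\suit$ and $\Pi^\suit$. The corrector divergence inequalities \eqref{eq:Twave} bound $\Pi^\suit$ on slices by $\mathring\Pi^\suit$. This completes (2).

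For (4), I would integrate the evolution equation \eqref{eq:T2-8-def} for $\amdeux$ in time. The bound $\|\amdeux^\suit(t)-\amdeux^{\suit'}(t)\|_{L^1}\le\|\mathring\amdeux^\suit-\mathring\amdeux^{\suit'}\|_{L^1}+\int_{t_0}^t\|(\amdeux\,t\,\Omega^2(\Mbf^{00}-\Mbf^{11})/2)^\suit-(\cdot)^{\suit'}\|_{L^1}$ is not directly small, since the integrand only converges weakly. So I would instead use the argument of \autoref{section=7} that the Cauchy property propagates. Failing that, I would observe that $\log\amdeux^\suit$ has uniformly bounded time derivative in $L^1$ and uniformly bounded spatial variation, so $\{\amdeux^\suit\}$ is strongly precompact in $L^1$ by Helly-type compactness, which suffices after extraction. \textbf{I expect this step to be the main obstacle}, because matching the literal Cauchy formulation \eqref{equa-normL1} requires the specific structure from \autoref{section=7} rather than compactness alone.

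For (7), I would invoke the tame assumption \eqref{equa-DK399}. It bounds $\Var(\Gammaunpar{}^\suit(t))$ and the time-Lipschitz modulus uniformly by $\Var(\mathring\Gammaunpar{}^\suit)$, which is bounded by \eqref{mathring-Gammaunpar-convergence}. The $H$-divergence law holds exactly. The maximum principle of \autoref{proposition-JJJ} additionally keeps $\Gammaunpar{}^\suit$ uniformly bounded.

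For (8), \eqref{Fmeasure-apriori-def} bounds $\|\Fmeasure^\suit\|_{\Meas}\le C_\Fbb(\aleph^\suit+(\aleph^\suit)^2)$, and $\aleph^\suit$ is uniformly bounded by \eqref{exactsol-initbounds-1} together with the BV bound on $\mathring\Gammaunpar{}^\suit$. This gives $\Fmeasure^\suit=\Mterm^\suit$ with $\Hterm^\suit=0$.

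The initial-data hypotheses of \autoref{def-initdataset-approx}, namely pointwise and total-variation convergence of $\mathring\Gammaunpar{}^\suit$, are exactly \eqref{mathring-Gammaunpar-convergence}. Hence \autoref{corollary-493-init} applies as well, and the well-prepared statements follow directly.
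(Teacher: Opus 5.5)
Your plan is the route the paper takes: check conditions (1)--(8) of \autoref{def-810} on $(t_0,t_1)$ using the a priori estimates of Sections~\ref{section=6}--\ref{section=7} and the bounds \eqref{equa-DK399}, \eqref{Fmeasure-apriori-def} built into ``tamely assumes''. Your treatment of (2), (7) and (8) matches it, and so does (3), including propagating the lower bound on $\Lbfscri^\suit$ backward from $t_1$ (since $\Lbfscri$ is non-increasing in the contracting regime). Two minor corrections:
\begin{itemize}
\item The timelike $L^2$ bound on $\Jpar$, hence on $J_0$, is not a \autoref{section=6} estimate. It is \autoref{lem--34par}, which rests on the maximum principle of \autoref{section=7}, so tameness already enters in (2).
\item Conditions (5)--(6) are not quite automatic for exact flows. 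The dissipation measures $\Mterm^\suit_-$ must be uniformly bounded in mass. You get this by testing the inequalities against $1$ on the slab, using the boundary energies at $t_0,t_1$, $\Ebfscri_\Pi$, and $\int|\Msource|\,\dVuntrois\le 5\int\Mbf^{00}\dVuntrois$.
\end{itemize}

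The genuine gap is (4), and your fallback for it fails: there is no uniform bound on the spatial variation of $\log\amdeux^\suit$. The $\BV$ variable is $\ell^\suit$, whose variation at time $t$ is just $\Lbfscri^\suit(t)$. As the paper notes at the end of \autoref{section=8-2}, no control of $(\amdeux^\suit)_x$ is available. Helly applied to $\ell^\suit$ therefore only gives weak-$*$ convergence of the measures $d_x\ell^\suit$, not $L^1$ compactness of their densities.

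Your primary route does not close the gap either. \autoref{section=7} contains no propagation argument, despite the announcement in \autoref{section=5-1}. What the equations actually give is the product formula \eqref{equa-apriori-speed}: $\amdeux^\suit(t)=\mathring\amdeux^\suit\beta^\suit(t)$, with $\beta^\suit$ uniformly bounded and $\log\beta^\suit(t,x)=\int_{t_0}^t|t'|\bigl(\Ebf_0(\Kpar,\Jpar)+\tfrac12 q_\Jbb|\Jbb\cdot\Jbb|\bigr)\Omega^2\,dt'$. This source does not involve $\amdeux$. Together with \eqref{exactsol-initbounds-2}, it reduces (4) at time $t$ to strong $L^1(\Sbb^1)$ convergence of $\log\beta^\suit(t,\cdot)$. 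That is hypothesis \eqref{equa-bounds2} of \autoref{lemma-72}, and it requires strong convergence of $\Jbb^\suit$ (and of $\Kpar^\suit,\Omega^\suit$).

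The bounds of Sections~\ref{section=6}--\ref{section=7} cannot supply this, because $\beta^\suit(t,\cdot)$ inherits any spatial oscillation of the fluid and twist energy densities on $(t_0,t)$. Ruling out those oscillations is exactly what \autoref{section=9} does. But that argument, through \autoref{lemdivcurl}, already presupposes convergence of the weights $\amdeux^\suit$.

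The paper's own justification has the same hole: Step~2 of \autoref{lemma-72} passes from an $L^1$ bound to $L^1$ compactness. To complete the proof you must either keep (4) as a hypothesis at every $t\in(t_0,t_1)$, or give an argument that treats the oscillations of $\beta^\suit$ and of $\Jbb^\suit$ simultaneously rather than settling the conformal length first.
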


The compactness result stated in \autoref{thm:stability0} is deduced from \autoref{thm:1.1} (existence result) and \autoref{thm:exactsol-compactness} as follows.
We are provided a sequence of tame initial data sets $(\mathring\Phi^\suit,\mathring\Psi^\suit)$ subject to the uniform bounds~\eqref{exactsol-initbounds-all} and an assumption on the conformal length.  For each $\suit\in(0,1]$, \autoref{thm:1.1} provides a tame Einstein--Euler flow $(\Phi^\suit,\Psi^\suit)$ that tamely assumes this initial data set.
In the future-expanding case, the existence result ensures that the time of existence is infinite so that~\eqref{exactsol-inftstar} holds, and the lower bound~\eqref{exactsol-Linit} on initial conformal lengths is among the hypotheses of \autoref{thm:stability0}.
In the future-contracting case, the existence result ensures that either $t_*^\suit=0$ or $\Lbfscri(t_*^\suit)=0$, and \autoref{tstar-lower-bound-sec6} ensures a lower bound on the existence time.  In either case \autoref{thm:exactsol-compactness} applies and the convergence properties and uniform bounds hold in $(t_0,t_1)$ as desired.


\subsection{Conjecture on the shear-induced measures}
\label{section=5-5}

An issue left open by the present work is whether the abstract shear-induced measures
\bel{equa-form-weak}
\Fmeasure_m = |t|^{-1} \Omega^{-1} \Fvee_m (\Jbb) e_1(\widehat J_m)
\ee
arising for regular solutions in the quasi-balance laws
\bel{equa-balancelaws-bis}
\aligned
& \divuntrois \Bigl( |t|^{-1} \Omega^{-1} \, \Fbb(\Jbb) \Bigr)
- \Delta_\Fbb(\Jbb, \Kpar, \Pbb,\Qbb;\Omega,t)
\\
& = \sum_{m=2,3} \Fmeasure_m
\qquad \text{(regular solutions)},
\endaligned
\ee
admit an intrinsic interpretation at the level of \emph{weak} solutions.

Within the compactness theory developed in the present paper, the role of the
measure~$\Fmeasure_m$ is limited to that of closure objects in the
quasi-balance laws. In our companion work
\cite{LeFlochLeFloch-next,LeFlochLeFloch-more}, for regularized Einstein--Euler flows, we derive uniform measure bounds on these shear-induced terms that ensure that they are bounded Radon measures in spacetime. While this information is enough for the compactness argument used in the present stability and instability results, it does not explain how these measures are selected, nor what part of the small-scale structure of the approximation they retain in the limit.

The difficulty is analogous (but more challenging) to the one treated by Dal~Maso, LeFloch, and
Murat~\cite{DLM} in their theory of nonconservative products between $\BV$~functions and Radon
measures. Presently, in the formal expression~\eqref{equa-form-weak} the momentum~$\Jbb$, and hence
the coefficient $\Fvee_m(\Jbb)$, is defined only almost everywhere with respect to Lebesgue measure,
whereas $e_1(\widehat J_m)$ is a measure that needs not be absolutely continuous with respect to the Lebesgue measure.
Therefore, at the level of weak regularity considered here, the product
\be
\Fvee_m(\Jbb)\, e_1(\widehat J_m)
\ee
has no canonical meaning \emph{a priori}. Replacing it by an abstract measure~$\Fmeasure_m$ resolves the compactness issue, but suppresses the information
that may be encoded in the singular part of the approximation.

To formulate the question more precisely, let us detail the measure
associated with the derivative of~$\widehat J_m$. Since $\widehat J_m\in L^\infty(\Interval;\BV(\Tbb^3))$ and is $\Tbb^2$-invariant,
its spatial derivative $d_x\widehat J_m(t)$ is, for a.e.\ $t\in\Interval$, a
signed Radon measure on~$\Sbb^1$. We then define a signed Radon measure
$\kappa_m\in\Meas(\Interval\times\Tbb^3)$ by duality:
\bse
\label{kappam-defs}
\be
\int_{\Interval\times\Tbb^3}\varphi\,d\kappa_m
\coloneqq
\int_\Interval\!\int_{\Sbb^1}\int_{\Tbb^2}\varphi(t,x,y,z)\,dy\,dz\,d_x\widehat J_m(t)(x)\,dt
\ee
for every test function $\varphi\in\Ccal_c(\Interval\times\Tbb^3)$.
Formally, one has
\be
\kappa_m
= \partial_x(\widehat J_m)\,dt\,dx\,dy\,dz
= |t|^{-1}\Omega^{-1} e_1(\widehat J_m)\,\dVuntrois.
\ee 
\ese
In general, $\kappa_m$~need not be absolutely continuous with respect to
Lebesgue measure or with respect to~$\dVuntrois$.

The natural conjecture is that the singular part of~$\Fmeasure_m$ should be
selected by the regularization mechanism used to construct the weak solution;
see \cite{LeFlochLeFloch-PhilTrans,LeFloch-IMA-1989,LeFloch-book}.
We may expect the following qualitative picture. On the part of~$\kappa_m$ that
is absolutely continuous with respect to Lebesgue measure, the
measure~$\Fmeasure_m$ should coincide with the shear-induced factor
$\Fvee_m(\Jbb)$ multiplied by the corresponding diffuse derivative. On the
singular part of~$\kappa_m$, by contrast, the measure~$\Fmeasure_m$ should be
supported on the geometric support of the singular derivative, and in a
$\BV$~regime in particular on shock curves, and should involve an effective
value of $\Fvee_m(\Jbb)$ selected by the internal structure of the corresponding
shock layer. This is closely related in spirit to the nonconservative product
theory developed in~\cite{LeFloch-IMA-1989,DLM}. In this sense, the singular
part of~$\Fmeasure_m$ should be viewed as a macroscopic trace of microscopic
transition profiles. Distinct approximation
procedures --- for instance vanishing viscosity, finite volume schemes, or
front-tracking algorithms --- may lead to different internal shock profiles and
therefore, possibly (but not necessarily, since the characteristic fields associated with the parallel momentum are linearly degenerate, which may simplify the structure of these measures), to different singular contributions to~$\Fmeasure_m$.
Thus, in our view, the shear-induced measures are not merely auxiliary objects recording the loss of strong compactness.  Instead, they should be regarded
as important weak-regularity objects, carrying information that is invisible at
the level of the limiting tame flow itself, but is nevertheless selected by the
small-scale dynamics of the approximation.


\
 
\part{The finite energy method for $\Tbb^2$-symmetric Einstein spacetimes}
\label{part-two}

\section{A priori estimates for Einstein--Euler flows}
\label{section=6}

\subsection{Volume estimates}
\label{section=6-1}

\subsubsection{A priori estimates for weak solutions} 

We begin our investigations by deriving a set of a priori estimates satisfied by Einstein--Euler flows.
Our starting point is the \JKL\ formulation of the Einstein--Euler system,
namely \eqref{eq:T2-1234-def}--\eqref{eq:theconstraints-def}. More precisely, we work with an arbitrary Einstein--Euler flow \emph{with corrector} $(\Phi,\Psi,\Pi)$ in the sense of \autoref{weakdefinitionT2-deux2} that assumes a prescribed initial data set $(\mathring\Phi,\mathring\Psi,\mathring\Pi)$ at time $t=t_0$ in the sense of \autoref{def-initialdata}, and we derive estimates that apply at this level of weak regularity.

The derivation proceeds by testing the integral identities and inequalities
introduced in the previous section against suitable functions. We explain this
procedure once here, and thereafter omit these routine details unless an
additional argument is needed. In particular, standard arguments for regular
solutions based on integrating balance laws or balance inequalities over
spacetime domains remain valid in our class of weak solutions as well.

For $\Phi\Psi$ flows without corrector, the function $\log\Omega$ belongs
to $\BVac$ along each line of constant time, and likewise along each
line of constant space. Hence, we may multiply by powers of~$\Omega$ and
justify the resulting calculations in the sense of distributions. For
instance, in the derivation of the identity~\eqref{equa-Omega-n-m} below, this is
done by choosing a suitable regularization of the lapse factor as a test
function. In such derivations, it is convenient to restrict attention first
to test functions of product form~$\theta(t)\varphi(x)$. We recall that
linear combinations of such product test functions are dense in the class of continuous test functions~$\psi(t,x)$ (and likewise for $\BVac$ test functions) as explained in~\autoref{appendix=E}.

More generally, for $\Phi\Psi\Pi$ flows, we rely on the Volpert calculus explained in \autoref{section=4-5}, which defines products of powers of~$\Omega$ with the corrector tensor $\Pi^{ab}$ as bounded measures on almost every constant-$t$ or constant-$x$ slice.

We derive estimates a succession of lemmas and finally summarize our conclusion in \autoref{theo--68}, below. 


\subsubsection{Volume of spacelike hypersurfaces}

We explore the properties of the functional  
\bel{equa-333}
\Vcal(t) \coloneqq \int_{\Tbb^3} \dVtrois,  \qquad t\in\Interval
\ee
and analogous functionals with other powers of $|t|$ and~$\Omega$, where we recall $\dVtrois=|t|\,\Omega\,d_x\ell\,dy\,dz$ and $d_x\ell=\amdeux\,dx$.
Using the evolution equations~\eqref{eq:T2-8-def} and~\eqref{eq:evollambda-def}, we can write
\bel{equa-Omega-n-m}
\aligned
\bigl( |t|^n \Omega^m d_x\ell\,dy\,dz \bigr)_t 
&= \Bigl( \frac{n-m/4}{t}
+ {t \over 2} (\Mbf^{00}- \Mbf^{11}) \, \Omega^2
+ m {t \over 2}\Mbf^{11}  \Omega^2 \Bigr) |t|^n \Omega^m\,d_x\ell\,dy\,dz
\\
&\quad
+ \frac{t}{2} m |t|^n \Omega^{m} \Pi^{00} ,
\endaligned
\ee
in which it is natural to impose $n= m/4$. Thanks to the inequalities $\Mbf^{00}\pm\Mbf^{11}\geq 0$ in~\eqref{eq:T2-Mdef-1}, this is non-negative for $m\in[0,2]$.
We choose first $m=1$, so that this simplifies to give the identity 
\be
\bigl( |t|^{1/4} \Omega\, d_x\ell\,dy\,dz \bigr)_t
= {\sgn(t)\over 2} |t|^{5/4} \Mbf^{00} \Omega^3\, d_x\ell\,dy\,dz + {\sgn(t)\over 2} |t|^{5/4} \Omega\,\Pi^{00} .
\ee 
We thus find
\bel{equat--dVdt-2}
\sgn(t) \frac{d}{dt} \bigl( |t|^{-3/4}  \, \Vcal(t) \bigr) 
= {1\over 2} |t|^{1/4} \int_{\Tbb^3} \Mbf^{00} \Omega^2 \dVtrois
+  {1\over 2} |t|^{5/4} \bigl\la \Omega \,\Pi^{00} , 1 \bigr\ra_{\Tbb^3}
\geq 0, 
\quad t \in \Interval ,
\ee
where $\la\cdot,\cdot\ra_{\Tbb^3}$ in the last term denotes the duality bracket between Radon measures and continuous functions on~$\Tbb^3$, applied here to the constant function~$1$ and to the measure $\Omega\,\Pi^{00}$ defined by the Volpert product~\eqref{eq:Volpert-Omega-Pi}.

Observe in passing that $\del_t=\Omega \, e_0$ implies $\Omega^2 \Mbf^{00}=M(\del_t,\del_t)$. Therefore, the weight $\Omega^2$ is naturally motivated by our choice of the time function $t$ and the vector field~$\del_t$.
Without yet having control over the right-hand side of~\eqref{equat--dVdt-2}, at this stage we can only establish the following (partial) property.

\begin{lemma}[Monotonicity of the volume]
\label{lemma-31}
In the areal foliation, $|t|$ being the area of $\Tbb^2$~orbits, the functional $\Vcal(t)$ in~\eqref{equa-333}
is such that $t\mapsto|t|^{-3/4}\Vcal(t)$ is monotone, and is bounded as follows in terms of the initial volume $\mathring\Vcal \coloneqq \int_{\Tbb^3} \dVtrois\mathring{\;} = |t_0| \int_{\Tbb^3} \mathring\Omega\,d_x\mathring\ell\,dy\,dz$:
\be
\aligned
\Vcal(t) & \leq |t|^{3/4} \,  |t_0|^{-3/4}  \, \mathring\Vcal, \qquad & t_0 & \leq t \leq t_1 < 0 & & \text{ (future-contracting regime)}, \\
\Vcal(t) & \geq |t|^{3/4} \,  |t_0|^{-3/4}  \, \mathring\Vcal, \qquad & 0 & < t_0 \leq t \leq t_1 & & \text{ (future-expanding regime)} .
\endaligned
\ee
That is, in the \emph{future-contracting regime}, the volume decreases at least at the rate $|t|^{3/4}$ ---necessarily approaching $0$ if $t_1$ is approaching $0$.
In the \emph{future-expanding regime}, the volume increases at least at the rate $|t|^{3/4}$ ---necessarily approaching $+ \infty$ if $t_1$ is approaching $+ \infty$.
\end{lemma}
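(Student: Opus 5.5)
The plan is to integrate the identity \eqref{equat--dVdt-2} in time, after justifying it for weak solutions. First, I would derive \eqref{equa-Omega-n-m} with $n=1/4$, $m=1$ in the weak sense. Combine the weak form \eqref{eq-time-weak-00} of the evolution equation \eqref{eq:T2-8-def} for $\amdeux$ with the modified lapse evolution \eqref{eq:evollambda-new-mod2a}, which holds on timelike slices. Test against $\theta(t)\varphi$ with $\varphi\equiv 1$ and $\theta\geq0$ compactly supported in $\Ihalf$, including the boundary term at $t_0$ given by $\mathring\Vcal$. The product rule for $\Omega\,d_x\ell$ is justified as the paper indicates. In the $\Phi\Psi$ case, $\log\Omega$ is $\BVac$ along constant-$x$ slices, so one regularizes $\Omega$ and passes to the limit. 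In the $\Phi\Psi\Pi$ case, one uses the Volpert chain rule $\frac{d}{dt}e^{\log\Omega}=\Rep(\Omega)\,(\log\Omega)_t$, so the corrector enters only through $\Rep(\Omega)\,\Pi^{11}\amdeux^{-1}\cdot\amdeux=\Omega\Pi^{00}$, using $\Pi^{11}=\Pi^{00}$. This yields, in the sense of distributions on $\Ihalf$, the identity \eqref{equat--dVdt-2} for the function $t\mapsto |t|^{-3/4}\Vcal(t)$, with the boundary value $|t_0|^{-3/4}\mathring\Vcal$.

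Next comes the sign. The right-hand side is a nonnegative measure in $t$. Indeed, $\Mbf^{00}\geq0$ and $\Omega^2>0$. Also $\Pi^{00}\geq0$ by \eqref{eq=allconfor-infty-mod}, and the Volpert representative of $\Omega$ is positive, being an average of positive traces. Moreover $|t|^{1/4}$ and $|t|^{5/4}$ are positive because $0\notin\Interval$. Hence the distributional derivative of $\sgn(t)\,|t|^{-3/4}\Vcal(t)$ is a nonnegative measure. So this function coincides a.e. with a nondecreasing function, and it is attained in the weak sense at $t_0$ with the initial value.

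Finally I would split into the two regimes. If $0<t_0\leq t$, then $|t|^{-3/4}\Vcal(t)\geq |t_0|^{-3/4}\mathring\Vcal$, which gives the lower bound. If $t_0\leq t<0$, then $\sgn(t)=-1$, so $|t|^{-3/4}\Vcal$ is nonincreasing and $\Vcal(t)\leq |t|^{3/4}|t_0|^{-3/4}\mathring\Vcal$. The qualitative consequences follow directly: $\Vcal(t)\to0$ as $t_1\to0^-$, and $\Vcal(t)\to+\infty$ as $t\to+\infty$.

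The main obstacle is the first step. One must make rigorous sense of the product of $\Omega$, which is only $\BV$, with the measure-valued time derivative of $\log\Omega$, so that $\Pi^{11}$ appears. One must also attain the initial value, that is, pass from the integral identity with boundary term to a pointwise-in-time inequality. I would handle the latter by taking $\theta$ to approximate the indicator of $[t_0,t]$ at Lebesgue points of $t\mapsto\Vcal(t)$. This uses the bound $\dVtrois\in L^\infty(\Interval,\Measac(\Tbb^3))$, and the inequality then extends to every $t$ by monotonicity.
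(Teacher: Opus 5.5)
Your proposal is correct and follows essentially the same route as the paper. You combine the $\amdeux$ and modified lapse evolution equations into \eqref{equa-Omega-n-m} with $(n,m)=(1/4,1)$, and treat the corrector term as a Volpert product using $\Pi^{11}=\Pi^{00}\geq 0$. You then test against $\theta(t)$ supported in $\Ihalf$ with boundary term $|t_0|^{-3/4}\mathring\Vcal$, let $\theta$ tend to the indicator of $[t_0,t_2]$ to reach \eqref{eqVmonotonic-616}, and split the two regimes according to $\sgn(t)$.
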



\subsubsection{Dealing with weak solutions} 
 
As explained at the beginning of this section, all of our arguments apply to weak solutions. For instance, in our combination of the evolution equations~\eqref{eq:T2-8-def} and~\eqref{eq:evollambda-def}, let us choose a test function of the form $\theta(t)\varphi(x)$, with $\varphi(x)=1$ constant and $\theta$ supported in the half-closed interval $[t_0,t_1)$.  Taking into account boundary terms as in~\eqref{eq-div-weak0-bound}, we then obtain the expression of~\eqref{equat--dVdt-2} in the weak sense:
\be
\aligned
& 
- \sgn(t_0) \Bigl( |t_0|^{-3/4}  \, \mathring\Vcal\, \theta(t_0) + \int_\Interval  |t|^{-3/4}  \, \Vcal \theta_t \, dt \Bigr)
\\
& = {1\over 2} \int_\Interval |t|^{1/4} \theta \int_{\Tbb^3} \Mbf^{00} \Omega^2 \, \dVtrois \, dt
+ {1\over 2} \int_\Interval |t|^{5/4} \theta\, \bigl\la\Omega\,\Pi^{00}, 1\bigr\ra_{\Tbb^3} \, dt
\geq 0. 
\endaligned
\ee 
By letting $\theta$ approach the characteristic function of a subinterval $[t_0, t_2]$, with $t_2 \in (t_0, t_1)$, and therefore $\theta_t$ approach a Dirac point mass $-\delta_{t=t_2}$ at time $t_2$, we deduce that 
\bel{eqVmonotonic-616}
\aligned
& \sgn(t_0) \Bigl( |t_2|^{-3/4}  \, \Vcal(t_2) - |t_0|^{-3/4}  \, \mathring\Vcal \Bigr)
\\
&
= {1\over 2} \int_{(t_0, t_2)} |t|^{1/4} \Bigl( \int_{\Tbb^3} \Mbf^{00} \Omega^2 \, \dVtrois \, dt + |t| \bigl\la\Omega\,\Pi^{00}, 1\bigr\ra_{\Tbb^3} \Bigr) \, dt
\geq 0.
\endaligned
\ee 
This provides a proof of~\autoref{lemma-31} in the context of weak solutions.  By taking the difference of this identity for two different times, \eqref{eqVmonotonic-616} holds with $t_0$ replaced by any time in $[t_0,t_2)$, which shows that the map defined by
\bel{scaled-V}
t \in [t_0,t_1) \mapsto |t|^{-3/4}\Vcal(t)
\ee
(with the understanding that $\Vcal(t_0)=\mathring\Vcal$)
is monotone, and in particular has bounded variation.
The regularity assumed in \autoref{weakdefinitionT2-relaxed} ($\Phi\Psi\Pi$ flow with finite energy) ensures that the integrand of the time integral in~\eqref{eqVmonotonic-616} is bounded uniformly for almost every time $t\in\Interval$.
Momentarily, we find a uniform bound for \emph{all} times by the initial data, so that the map $\Vcal$ has Lipschitz regularity on $[t_0,t_1)$.


\subsubsection{Length functional}

Let us next introduce 
\bel{equa-conf-vol}
\Lbfscri(t) \coloneqq \int_{\Tbb^3}  |t|^{-1}  \Omega^{-1} \, \dVtrois  
= \int_{\Tbb^3} d_x\ell dydz
\qquad t\in[t_0,t_1), 
\ee
with the understanding that $\Lbfscri(t_0)=\mathringLbfscri=\int_{\Tbb^3}d_x\mathring\ell\,dy\,dz$.
It is monotone in time since, by~\eqref{equa-Omega-n-m} with $m=0$,
\bel{Wmonotonic}
\aligned
\sgn(t) \frac{d}{dt} \Lbfscri(t)
&= \sgn(t) \int_{\Tbb^3} (d_x\ell)_t \, dydz
\\
&= \int_{\Tbb^3}  \Bigl(\Ebf_0(\Jpar,\Kpar)+\frac{1}{2}q_\Jbb |\Jbb\cdot\Jbb|\Bigr) \, \Omega \, \dVtrois
\geq 0. 
\endaligned
\ee
Namely, $\Lbfscri$ is non-decreasing for future-expanding spacetimes and non-increasing for future-contracting ones. Moreover, we will see shortly that $\Lbfscri(t)/|t|$ and $d\Lbfscri(t)/dt$ are bounded above by the energy functional~\eqref{equa--energy}, below. Our estimates in this section will involve the factor $1/\Lbfscri(t)$, which we bound in terms of the minimum of the conformal length functional, denoted by $\Lmin$. We thus state the following property. 

\begin{lemma}[Monotonicity of the  conformal length functional] 
\label{lem-conf-vol}
In the areal foliation, the functional $\Lbfscri(t)$ in~\eqref{equa-conf-vol} is monotone and, therefore, tends to its infimum at the endpoints of the time interval $\Ihalf$ with, specifically, 
\bel{minimal-conformal-volume}
\Lmin \coloneqq \inf_{t\in [t_0,t_1)} \Lbfscri(t) =
\begin{cases}
  \lim_{t\to t_1} \Lbfscri(t), & t_0< t_1<0 \quad \text{(future-contracting regime),}\\
  \mathringLbfscri, & 0 < t_0< t_1 \quad \text{(future-expanding regime).}
\end{cases}
\ee 
\end{lemma}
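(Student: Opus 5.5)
The plan is to make the formal identity \eqref{Wmonotonic} rigorous at the weak-regularity level and then read off the infimum from monotonicity. First, I will apply \eqref{equa-Omega-n-m} with $m=n=0$, that is, the evolution equation \eqref{eq:T2-8-def} for the conformal length density alone, in its weak form \eqref{eq-time-weak-00} with boundary term at $t_0$. The key point is that the corrector $\Pi$ does not enter here: the $\Pi^{00}$ contribution in \eqref{equa-Omega-n-m} carries a factor $m$, and the equation for $\amdeux$ is unmodified in \autoref{weakdefinitionT2-deux2}. So no Volpert product is needed, which keeps this lemma simpler than \autoref{lemma-31}.

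Concretely, I will choose the test function $\theta(t)\varphi$ with $\varphi\equiv 1$ on $\Tbb^3$ and $\theta$ compactly supported in $[t_0,t_1)$. After multiplying by $|t|^{-1}\Omega^{-1}$ inside $\dVtrois$, the left-hand side becomes $-\int \Lbfscri\,\theta_t\,dt-\theta(t_0)\mathringLbfscri$. On the right-hand side, $\frac t2(\Mbf^{00}-\Mbf^{11})\Omega^2 d_x\ell$ appears, and I will rewrite it using \eqref{eq:T2-Mdef-0}: $\Mbf^{00}-\Mbf^{11}=2\Ebf_0(\Jpar,\Kpar)+q_\Jbb(-\Jbb\cdot\Jbb)\ge 0$. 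I then let $\theta$ tend to the indicator function of $[t_0,t_2]$, exactly as in the derivation of \eqref{eqVmonotonic-616}. This yields
\[
\sgn(t_0)\bigl(\Lbfscri(t_2)-\mathringLbfscri\bigr)=\int_{t_0}^{t_2}\int_{\Tbb^3}\Bigl(\Ebf_0(\Jpar,\Kpar)+\tfrac12 q_\Jbb|\Jbb\cdot\Jbb|\Bigr)\Omega\,\dVtrois\,dt\;\ge 0 .
\]
The integrand is integrable, since $\Phi\in L^\infty(\Interval,L^2)$, $\Kpar$ and $\Omega$ are bounded, and $q_\Jbb\le1$. Taking differences of this identity for two times shows that $t\mapsto\sgn(t)\Lbfscri(t)$ is non-decreasing on $[t_0,t_1)$, and in fact absolutely continuous.

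The identification of $\Lmin$ then follows directly. In the expanding regime ($t>0$), $\Lbfscri$ is non-decreasing, so the infimum is attained at $t_0$ and equals $\mathringLbfscri$. In the contracting regime, $\Lbfscri$ is non-increasing and bounded below by $0$, so the limit as $t\to t_1$ exists and equals the infimum.

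The only delicate step is justifying the limit $\theta\to\mathbf 1_{[t_0,t_2]}$ for every $t_2$, not merely almost every $t_2$. I would handle this by noting that the right-hand side is an integral of an $L^1$-in-time function, so $\Lbfscri$ has a continuous representative, consistent with the convention $\Lbfscri(t_0)=\mathringLbfscri$.
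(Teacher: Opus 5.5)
Your proposal is correct and follows the paper's argument. The paper obtains \eqref{Wmonotonic} from \eqref{equa-Omega-n-m} with $m=0$, so the corrector term drops out. It justifies this in weak form by the same $\theta(t)\cdot 1$ test-function limiting procedure used for \autoref{lemma-31}. It then reads off $\Lmin$ from the monotonicity of $\Lbfscri$, exactly as you do.
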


In the future-expanding regime, the inequality $\Lbfscri(t) \geq \mathringLbfscri$ provides us with a uniform lower control of the  conformal length. On the other hand, this property does not hold in the future-contracting regime. 


\subsubsection{Future-contracting spacetimes}

Concerning the time interval $t_0 < t < t_1 <0$, the following observations are in order. The volume~$\Lbfscri(t)$ is decreasing, and our bounds in the present section are meaningful only until the first time $t_*<0$ at which it reaches zero, that is, $\Lbfscri(t_*)=0$; more precisely, they apply on any interval $[t_0,t_1)$ with $t_0<t_1<t_*$. Observe that the condition $\Lbfscri(t_*)=0$ also implies that the volume of the spatial slice vanishes at that time~$t_*$. Indeed, by the Cauchy--Schwarz inequality we have 
\be
\Vcal(t)^2 \leq \Lbfscri(t) \int_{\Tbb^3}  |t|  \, \Omega\, \dVtrois
\quad \text{ as } t \to t_*,
\ee
and the latter integral is easily seen to be decreasing. Indeed, by considering yet a third functional
\bel{equa--UUU}
\Ucal(t) \coloneqq \int_{\Tbb^3} |t| \Omega \, \dVtrois  
= \int_{\Tbb^3}  |t|^2 \Omega^2 d_x\ell dydz
\qquad t\in[t_0,t_1], 
\ee
and applying~\eqref{equa-Omega-n-m} with $m=2$, we find 
\be
\aligned
\sgn(t) \frac{d}{dt} \Ucal(t)  
= \int_{\Tbb^3} \Bigl(\frac{3}{2} + t^2 \, \Omega^2 \Ebf_0(\Jperp, \Pbb, \Qbb) \Bigr) \Omega \dVtrois |t|^3 \bigl\la \Omega^2\,\Pi^{00} , 1 \bigr\ra_{\Tbb^3}
\geq 0 .
\endaligned
\ee
Therefore, a vanishing  conformal length $\Lbfscri(t_*)=0$ implies a geometric singularity at which $\Vcal(t_*)=0$. Furthermore, conversely, we shall establish that solutions do exist until a time\footnote{In the future-expanding regime, we will establish existence for all $t$, therefore $t_*=+ \infty$.} where either $\Lbfscri(t_*)=0$ or the area $|t_*|$ of the symmetry orbits reaches zero.
In both cases the volume reaches zero.
For convenience, we state the two bounds on the volume as follows, repeating the bound in \autoref{lemma-31}.

\begin{lemma}[Volume bound for future-contracting spacetimes]
In the future-contracting regime $t_0 < t < t_1 < 0$, one has
\be
\Vcal(t) \lesssim \min\bigl( |t|^{3/4} , \Lbfscri(t) \bigr)
\ee
with an implicit constant that depends on~$t_0$ and the initial data $\mathring\Vcal$ and $\mathring\Ucal = \int \mathring\Omega\,\dVtrois\mathring{\;}$.
\end{lemma}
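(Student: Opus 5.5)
The first half of the minimum, $\Vcal(t)\lesssim|t|^{3/4}$, is already in \autoref{lemma-31}, with constant $|t_0|^{-3/4}\mathring\Vcal$. So the plan concentrates on the second half, $\Vcal(t)\lesssim\Lbfscri(t)$, which must be \emph{linear} in $\Lbfscri$. The Cauchy--Schwarz bound $\Vcal^2\leq\Lbfscri\,\Ucal$ combined with the monotonicity of~$\Ucal$ only yields $\Lbfscri^{1/2}$, so it is not enough. I would instead compare the two measures directly:
\be
\Vcal(t)=\int_{\Tbb^3}|t|\,\Omega\,d_x\ell\,dy\,dz\leq |t|\,\Bigl(\sup_{\Tbb^3}\Omega(t,\cdot)\Bigr)\,\Lbfscri(t)\leq |t_0|\,\Bigl(\sup_{[t_0,t_1)\times\Tbb^3}\Omega\Bigr)\,\Lbfscri(t),
\ee
where the second inequality uses $|t|\leq|t_0|$ in the contracting regime. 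The whole task therefore reduces to an upper bound on the lapse that is uniform on $[t_0,t_1)$ and depends only on~$t_0$ and the data.

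To bound the lapse, I would use the evolution equation~\eqref{eq:evollambda-new-mod2a} on almost every timelike slice, in its weak form~\eqref{eq-time-weak}. For $t<0$, the corrector contribution $\frac{t}{2}\Pi^{11}\amdeux^{-1}$ is a non-positive measure, because $\Pi^{11}\geq0$ by~\eqref{eq=allconfor-infty-mod}. Its product with powers of $\Omega$ is handled through the Volpert representative of \autoref{section=4-5}, which preserves the sign. The term $\frac{t}{2}\Mbf^{11}\Omega^2$ is bounded above by $\frac{|t|}{2}\Ebf_0(\Kpar)\,\Omega^2$, using the lower bound $\Mbf^{11}\geq-\Ebf_0(\Kpar)$ from~\eqref{eq:T2-Mdef-1}. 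This gives, on each timelike slice,
\be
(\log\Omega)_t\leq\frac{1}{4|t|}+\frac{|t|}{2}\,\Ebf_0(\Kpar)\,\Omega^2 .
\ee
If the twists vanish (the Gowdy case), integrating in time immediately gives $\Omega(t)\leq(|t_0|/|t|)^{1/4}\sup\mathring\Omega$. In that case the estimate can be sharpened to $\Vcal\leq|t|^{3/4}|t_0|^{1/4}\sup\mathring\Omega\;\Lbfscri(t)$, which is uniform in time.

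In general, one first needs the pointwise twist bound announced in the introduction (\autoref{lem-apriori--twist}). It is obtained from the constraints~\eqref{eq:221a}--\eqref{eq:221b} and the evolution equations~\eqref{eq:T2-9101112-evol-def-a}--\eqref{eq:T2-9101112-evol-def-b}, together with the maximum principle for~$\Gammaunpar$. Once $\Ebf_0(\Kpar)\leq C$ is available, the inequality above is a Riccati-type differential inequality for $\Omega^2$. Integrating it over the interval of length at most $|t_0|$ gives $\sup\Omega\leq C(t_0,\mathring\Omega,C)$, as long as the solution stays below the blow-up threshold. Here I would invoke the two-sided lapse bound of \autoref{lemma-sup-Omega}, which controls $\sup\Omega$ by the initial sup of~$\Omega$ and the energy.

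\textbf{Main obstacle.} The hard step is exactly this uniform $\sup\Omega$ bound in the presence of the twists, because the Riccati term could a priori drive $\Omega$ to infinity. My first attempt would be to combine it with the spatial constraint~\eqref{eq:evollambda-new-mod2b}. That constraint bounds the oscillation of $\log\Omega$ on a slice by $|t|\int|\Mbf^{01}|\Omega^2 d_x\ell+|t|\|\Pi\|$, which is controlled by the energy. It is then enough to bound the \emph{average} of $\Omega$, and this average I would control through the monotone functional $\Ucal(t)\leq\mathring\Ucal$ together with the lower bound $\Lbfscri(t)\geq\Lbfscri(t_1)$.
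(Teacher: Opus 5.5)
\paragraph{The gap: no uniform lapse bound.}
Your whole argument rests on a bound $\sup_{[t_0,t_1)\times\Tbb^3}\Omega\le C(t_0,\text{data})$, and you do not obtain it outside the Gowdy case.

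\begin{itemize}
\item By itself, the Riccati inequality $(\log\Omega)_t\le\frac{1}{4|t|}+\frac{|t|}{2}\Ebf_0(\Kpar)\Omega^2$ only yields a bound under a smallness condition of the type $t_0^2\,\sup|\Kpar|^2\,\sup\mathring\Omega^2\lesssim1$.
\item Invoking \autoref{lemma-sup-Omega} is circular. It does not control $\Omega$ by its initial supremum. It states $\Omega(t,x)\,\Lbfscri(t)\le|t|^{-1}\Vcal(t)\,e^{\Ebfscri(t)/2}$, so it bounds $\sup\Omega$ by the very ratio $\Vcal/(|t|\Lbfscri)$ you are trying to estimate. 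Combined with its lower half, it actually shows that a linear bound $\Vcal\lesssim\Lbfscri$ is equivalent to a uniform bound on the lapse, up to the factors $e^{\pm\Ebfscri/2}$ and powers of $|t|$. In other words, you would have to rule out growth of $\Omega$ as $\Lbfscri\to0$, and nothing in your proposal does that.
\item The fallback fails as well. Bounding the $d_x\ell$-average of $\Omega$ through $\Ucal\le\Ucal(t_0)$ and Cauchy--Schwarz gives only $(\Ucal(t_0)/\Lbfscri(t))^{1/2}/|t|$, so you need $\Lbfscri\ge\Lmin$. The twist bound of \autoref{lem-apriori--twist} likewise carries $\Lmin$, $\Ebfscri$ and $\sup\Omega^{-1}$.
\item Once constants may depend on $\Lmin$, the linear bound is empty: $\Vcal(t)\le\mathring\Vcal\le(\mathring\Vcal/\Lmin)\,\Lbfscri(t)$ follows directly from \autoref{lemma-31}. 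It also degenerates exactly as $t_1\to t_*$, which is the only regime in which the lemma is used.
\end{itemize}

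Your Gowdy computation is correct. There $\Mbf^{11}\ge0$ by \eqref{eq-M11}, so $\Omega\le(|t_0|/|t|)^{1/4}\sup\mathring\Omega$. Its constant, however, involves $\sup\mathring\Omega$ rather than only $\mathring\Vcal$ and $\mathring\Ucal$.

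\paragraph{What the paper actually does.}
The paper's argument is precisely the route you set aside.
\begin{itemize}
\item \autoref{lemma-31} gives the $|t|^{3/4}$ bound.
\item Cauchy--Schwarz gives $\Vcal^2\le\Lbfscri\,\Ucal$, where $\Ucal(t)=\int|t|^2\Omega^2\,d_x\ell\,dy\,dz$.
\item $\Ucal$ is non-increasing for $t<0$ by \eqref{equa-Omega-n-m} with $n=m=2$. All terms there carry the sign of $t$: the coefficient is $\frac{3}{2t}+\frac{t}{2}(\Mbf^{00}+\Mbf^{11})\Omega^2$, and the corrector term is proportional to $t\,\Omega^2\Pi^{00}$, with $\Mbf^{00}+\Mbf^{11}\ge0$ and $\Pi^{00}\ge0$.
\end{itemize}
This yields $\Vcal(t)\le\Ucal(t_0)^{1/2}\,\Lbfscri(t)^{1/2}$, with constants depending only on $t_0$, $\mathring\Vcal$ and $\mathring\Ucal$. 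That is exactly the dependence announced in the statement. It also suffices for the only use made of the lemma, namely that $\Lbfscri(t_*)=0$ forces $\Vcal(t_*)=0$.

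So the paper supports the second entry of the minimum only as $\Lbfscri(t)^{1/2}$. You were right that Cauchy--Schwarz cannot produce the linear exponent. But the linear exponent is not established by any estimate in the paper either, since it would require excluding lapse blow-up. The sound course is to prove the square-root version this way and keep your Gowdy estimate as a separate strengthening.
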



\subsection{Energy estimates}
\label{section=6-2}

\subsubsection{Spacelike slices}

Taking the time derivative of $\Vcal(t)$ (and its weighted variants), as we did above, is natural from the point of view of the areal foliation based on the~$\Tbb^2$ symmetry. Another natural functional, which does not depend as strongly on the areal foliation, is the variation of the volume with respect to shifting the hypersurface by a constant proper time normal to it. Concretely, taking a Lie derivative in the direction of the unit vector $e_0=\Omega^{-1}\del_t$, we introduce 
\bel{equa--energy}
\aligned
\Ebfscri(t) 
& \coloneqq \frac{2}{t}\int_{\Tbb^3} \Lie_{e_0}\bigl(\dVtrois\bigr)
= \int_{\Tbb^3} \biggl( \Mbf^{00} \Omega^2 + \frac{3}{2t^2}\biggr) \Omega^{-1} \, \dVtrois 
+ |t| \la \Pi^{00}, 1\ra_{\Tbb^3}
\\
& = \int_{\Tbb^3} \Mbf^{00} \Omega \, \dVtrois + {3 \over 2 |t|} \Lbfscri(t) + |t| \la \Pi^{00}, 1\ra_{\Tbb^3}
\geq 0, \quad t\in\Iopen,
\endaligned
\ee
which we refer to as the \emph{energy functional}.
Here, the normalization factor $2/|t|$ is chosen to simplify the expressions arising in the forthcoming calculations (such as~in~\eqref{eq:monotenergy}). 
This energy $\Ebfscri(t) $ is bounded for all times in terms of the initial energy as we now show.

The energy splits usefully into a corrector part $\Ebfscri_\Pi$ and a remaining part $\Ebfscri_{\Phi\Psi}$, namely $\Ebfscri = \Ebfscri_\Pi + \Ebfscri_{\Phi\Psi}$ with
\bel{equa--energy-split}
\aligned
\Ebfscri_\Pi = |t| \la \Pi^{00}, 1\ra_{\Tbb^3} ,
\qquad
\Ebfscri_{\Phi\Psi} = \int_{\Tbb^3} \biggl( \Mbf^{00} \Omega^2 + \frac{3}{2t^2}\biggr) \Omega^{-1} \, \dVtrois .
\endaligned
\ee
The corresponding \emph{initial energy} splits likewise as $\mathring\Ebfscri = \mathring\Ebfscri_\Pi + \mathring\Ebfscri_{\Phi\Psi}$ with
\be
\mathring\Ebfscri_\Pi = |t_0| \la \mathring\Pi^{00}, 1\ra_{\Tbb^3} ,
\qquad
\mathring\Ebfscri_{\Phi\Psi} = \int_{\Tbb^3} \biggl( \Mbf^{00}(\mathring\Jbb,\mathring\Kpar,\mathring\Pbb) \mathring\Omega^2 + \frac{3}{2t_0^2}\biggr) \mathring\Omega^{-1} \, \dVtrois\mathring{\;} .
\ee
Since the energy density $(\Jbb, \Kpar, \Pbb, \Qbb) \mapsto  \Mbf^{00}(\Jbb, \Kpar, \Pbb, \Qbb)$ is a non-negative and quadratic function, we obtain an $L^2$-type control of $\Omega^{1/2}(\Jbb, \Kpar, \Pbb, \Qbb)$ on spacelike slices $(\Tbb^3, \dVtrois)$. 


\subsubsection{Spacelike measure bound}

Derivatives of the volume and conformal length functionals in \autoref{section=6-1} involve both $L^2$ norms of geometry and fluid variables, and integrals of the corrector~$\Pi$.
The latter are controlled by $\Ebfscri_\Pi(t)$, which is the measure norm of $t\,\Pi^{00}(t)$.
Here we derive a priori bounds on this norm, based on the divergence inequalities~\eqref{eq:Twave}
\be
\divuntrois\Bigl(\Omega^{-1}\amdeux^{-1}\Pi^{\pm\bullet}\Bigr) \leq 0 ,
\ee
understood in the integral sense with a boundary term included at $t=t_0$.
The same inequality holds for $\Pi^{0\bullet} = \frac{1}{2} (\Pi^{+\bullet} + \Pi^{-\bullet})$.
Namely, for a non-negative test function $\varphi$ with compact support in $[t_0,t_1)\times\Tbb^3$,
\bel{weak-divePi-bdry}
|t_0| \bigl\la \mathring\Pi^{a0},\varphi(t_0,\cdot)\bigr\ra_{\Tbb^3}
+ \int_{\Sbb^1} \bigl\la \Pi^{a1} , \, \amdeux^{-1}\,|t|\,\varphi_x\bigr\ra_{\Interval\times\Tbb^2}\,dx
+ \int_{\Interval} \bigl\la \Pi^{a0} , \, |t|\,\varphi_t\bigr\ra_{\Tbb^3}\,dt \leq 0 ,
\qquad a=0,+,-.
\ee
By integrating the divergence equation of $\Pi^{0\bullet}$ with a test function~$\varphi$ that only depends on time and has compact support in $[t_0,t_1)$, we obtain
\be
\varphi(t_0) \mathring\Ebfscri_\Pi + \int_{\Interval} \varphi_t\,\Ebfscri_\Pi\,dt = 0.
\ee
This implies that $\Ebfscri_\Pi$ is constant in time, and equal to the initial data.
The same calculations apply if $\Pi^{0\bullet}$ is replaced by $\Pi^{0\bullet} \pm \Pi^{1\bullet}$, with the same reformulation in terms of measure norms thanks to the positivity condition $\Pi^{00} \pm \Pi^{10} \geq 0$:
\be
\aligned
\Ebfscri_\Pi = |t| \bigl\|\Pi^{00}\bigr\|_{\Meas(\Tbb^3)}(t)
& = |t_0| \bigl\|\mathring\Pi^{00}\bigr\|_{\Meas(\Tbb^3)} = \mathring\Ebfscri_\Pi , \qquad t\in(t_0,t_1) ,
\\
|t| \bigl\|\Pi^{00} \pm \Pi^{10}\bigr\|_{\Meas(\Tbb^3)}(t)
& = |t_0| \bigl\|\mathring\Pi^{00} \pm \mathring\Pi^{10}\bigr\|_{\Meas(\Tbb^3)} , \qquad t\in(t_0,t_1) .
\endaligned
\ee

\subsubsection{Timelike measure bound}
It will prove very useful to also control integrals of $\Pi$ along timelike slices of constant $x\in\Sbb^1$,
\be
\Tbfscri_{\Pi} \coloneqq \bigl\la \Pi^{00} , \, \amdeux^{-1}\,|t|\bigr\ra_{\Interval\times\Tbb^2} .
\ee
We use the divergence equation of~$\Pi^{1\bullet}$ in the weak form~\eqref{weak-divePi-bdry} with a test function that only depends on $x\in\Sbb^1$
to get
\be
\int_{\Sbb^1} \Tbfscri_{\Pi}\,\varphi_x\,dx
= - |t_0| \la \mathring\Pi^{10},\varphi\ra_{\Tbb^3} .
\ee
The signed measure $-|t_0|\mathring\Pi^{10}\in\Meas(\Tbb^3)$ on the right-hand side is bounded by $|t_0|\mathring\Pi^{00}\geq 0$, and this yields a bound on the total variation of $\Tbfscri_{\Pi}$ on~$\Sbb^1$,
\be
\Var\bigl( \Tbfscri_{\Pi} ; \; \Sbb^1 \bigr) < |t_0| \bigl\|\mathring\Pi^{00}\bigr\|_{\Meas(\Tbb^3)} = \mathring\Ebfscri_\Pi .
\ee
On the other hand, we have an integral bound on $\Tbfscri_{\Pi}$ by comparing $\amdeux$ to its infimum on each time slice,
\bel{Escrvee-inte}
\int_{\Sbb^1} \Tbfscri_{\Pi} \, \Bigl( \inf_{\Interval\times\Tbb^2} \amdeux(x)\Bigr) \, dx
\leq \int_{\Sbb^1} \bigl\la \Pi^{00} , \, |t|\bigr\ra_{\Interval\times\Tbb^2} dx = \int_{\Interval} \Ebfscri_\Pi dt \leq (t_1-t_0)\mathring\Ebfscri_\Pi .
\ee
The infimum of $\amdeux$ simplifies since the evolution equation~\eqref{eq:T2-8-def} of~$\amdeux$ implies monotonicity of~$\amdeux$ in time for each $x\in\Sbb^1$, and specifically
\be
\inf_{\Interval\times\Tbb^2} \amdeux(x) \leq
\begin{cases}
  \lim_{t'\to t_1} \amdeux(t',x) & t_0<t<t_1<0 \quad \text{(future-contracting regime)} , \\
  \mathring\amdeux(x) & 0<t_0<t<t_1 \quad \text{(future-expanding regime)} .
\end{cases}
\ee
As a result, the integration measure in~\eqref{Escrvee-inte} is the conformal length measure at $t=t_1$ or $t=t_0$, respectively, and for instance its integral is given by the conformal length functional, which by \autoref{lem-conf-vol} is the infimum conformal length:
\be
\int_{\Sbb^1} \Bigl( \inf_{\Interval\times\Tbb^2} \amdeux(x)\Bigr) \, dx
= \inf_{t\in(t_0,t_1)} \Lbfscri(t) = \Lmin .
\ee
Dividing~\eqref{Escrvee-inte} by $\Lmin$ provides a bound on a weighted average of~$\Tbfscri_{\Pi}$, which combines with the total variation bound to give a uniform bound in space.  We summarize the control of the corrector as follows.

\begin{lemma}[Control of the corrector on spacelike and timelike slices]
\label{prop:corrector}
The corrector $\Pi^{00}$ (and $|\Pi^{01}|\leq\Pi^{00}$) obeys uniform bounds in time and in space on measure norms,
\be
\aligned
\bigl\|\Pi^{00}\bigr\|_{\Meas(\Tbb^3)}(t)
& \leq (t_0/t) \bigl\|\mathring\Pi^{00}\bigr\|_{\Meas(\Tbb^3)} , \qquad 
&& t \in \Iopen ,
\\
\bigl\la \Pi^{00} , \, \amdeux^{-1}\,|t|\bigr\ra_{\Interval\times\Tbb^2}(x)
& \leq \Bigl( \frac{1}{2} + \frac{t_1-t_0}{\Lmin} \Bigr) |t_0| \bigl\|\mathring\Pi^{00}\bigr\|_{\Meas(\Tbb^3)} , \qquad && x \in \Sbb^1 .
\endaligned
\ee
\end{lemma}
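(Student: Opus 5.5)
The plan is to test the weak divergence inequalities~\eqref{eq:Twave} for the corrector, in the boundary form~\eqref{weak-divePi-bdry}, against the two simplest families of test functions: those depending only on~$t$ and those depending only on~$x$. This is what the discussion preceding the lemma does. For the spacelike bound, I take $\varphi=\varphi(t)\geq 0$ compactly supported in $[t_0,t_1)$ and apply~\eqref{weak-divePi-bdry} with $a=+$ and $a=-$. Since $\varphi_x=0$, the timelike term drops out. Adding the two inequalities gives
\[
\varphi(t_0)\,|t_0|\,\la\mathring\Pi^{00},1\ra+\int_\Interval \varphi_t\,|t|\,\la\Pi^{00}(t),1\ra\,dt\leq 0 .
\]
Letting $\varphi$ approximate the indicator of $[t_0,t_2]$, so that $\varphi_t\to-\delta_{t_2}$, yields $|t_2|\,\la\Pi^{00}(t_2),1\ra\leq|t_0|\,\la\mathring\Pi^{00},1\ra$ for a.e.\ $t_2$. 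The measure $\Pi^{00}$ is non-negative by~\eqref{eq=allconfor-infty-mod}, so its pairing with~$1$ is its total mass, and the first estimate follows. The bound $|\Pi^{01}|\leq\Pi^{00}$ comes directly from the positivity conditions.

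For the timelike bound, I take $\varphi=\varphi(x)$ independent of~$t$; it is not compactly supported in time, so I cut it off by $\theta(t)$ near $t_1$ and then remove the cutoff. I use the inequalities for $\Pi^{\pm\bullet}$ and take half their difference, which amounts to treating the $\Pi^{1\bullet}$ equation. Because $\varphi_t=0$, the resulting relation is
\[
\int_{\Sbb^1}\Tbfscri_\Pi'(x)\,\varphi_x\,dx\leq |t_0|\,\la\mathring\Pi^{10},\varphi\ra \quad\text{(up to sign)},
\]
where $\Tbfscri_\Pi'(x)=\la\Pi^{11},\amdeux^{-1}|t|\ra_{\Interval\times\Tbb^2}(x)=\Tbfscri_\Pi(x)$ by tracelessness. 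In fact an equality is obtained by combining the two one-sided inequalities. This shows that the distributional derivative of $\Tbfscri_\Pi$ on~$\Sbb^1$ is a measure of total mass at most $|t_0|\,\|\mathring\Pi^{10}\|\leq|t_0|\,\|\mathring\Pi^{00}\|$. On the circle, the oscillation of a BV function is at most half its total variation, which accounts for the factor $\tfrac12$ in the statement. It remains to bound the average. Using the spacelike bound, I get
\[
\int_{\Sbb^1}\Tbfscri_\Pi\,\inf_t\amdeux(x)\,dx\leq\int_\Interval |t|\,\la\Pi^{00}(t),1\ra\,dt\leq (t_1-t_0)\,\mathring\Ebfscri_\Pi .
\]
By the pointwise monotonicity of $\amdeux$ in time given by~\eqref{eq:T2-8-def} together with $\Mbf^{00}\geq\Mbf^{11}$, the weight has integral $\Lmin$ (Lemma~\ref{lem-conf-vol}). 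So some weighted average of $\Tbfscri_\Pi$ is at most $(t_1-t_0)\,\mathring\Ebfscri_\Pi/\Lmin$. Then $\sup\Tbfscri_\Pi\leq\text{average}+\tfrac12\Var$, and this is the claimed bound.

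The main obstacle is justifying these manipulations at the level of measures. The pairing $\la\Pi^{a1},\amdeux^{-1}|t|\varphi_x\ra$ requires $\amdeux^{-1}$ to be continuous along timelike slices, and I would obtain this from the time-integrability of the $\amdeux$ equation. The limiting procedures, namely the approximations of indicators and the cutoff near $t_1$, require $\Pi\in L^\infty(\Interval,\Meas)$ and $\Pi\in L^\infty(\Sbb^1,\Meas)$ to control the boundary contributions and to pass to the limit for a.e.\ $t$ and a.e.\ $x$. The $t_1$-endpoint term from the cutoff has a favorable sign because $\Pi^{00}\pm\Pi^{01}\geq 0$, which is why the inequalities close in the correct direction.
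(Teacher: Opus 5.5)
Your route is the paper's: time-only test functions for the spacelike bound, an $x$-only test function for a variation bound on $\Tbfscri_\Pi$, a weighted average with weight $\inf_t\amdeux$ of integral $\Lmin$, and then $\sup\le$ average $+\frac12\Var$. The spacelike half is sound in substance. The timelike half has a genuine gap, in two places.

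First, the inequalities for $\Pi^{+\bullet}$ and $\Pi^{-\bullet}$ point the same way, so half their difference carries no information. It yields neither an inequality nor an equality for $\Pi^{1\bullet}$, and ``combining the two one-sided inequalities'' cannot produce an equality.

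Second, even when $\Pi^{1\bullet}$ does obey an exact law (for instance when $\Mbf^{1\bullet}=0$, by compatibility of the modified lapse equations), removing your cutoff at $t_1$ leaves the term $-|t_1|\la\Pi^{10}(t_1),\varphi\ra$, which has no sign. The positivity $\Pi^{00}\pm\Pi^{01}\geq0$ only signs the endpoint terms of $\Pi^{+\bullet}$ and $\Pi^{-\bullet}$ taken separately. The derivative of $\Tbfscri_\Pi$ is then only controlled by $|t_0|\|\mathring\Pi^{10}\|+|t_1|\|\Pi^{10}(t_1)\|\leq 2|t_0|\|\mathring\Pi^{00}\|$, so the factor $\frac12$ is unjustified. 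The text preceding the lemma makes the same two omissions: it writes an exact identity for $\Pi^{1\bullet}$ with no $t_1$ term.

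What does go through is to keep $\Pi^{\pm\bullet}$ separate.
\begin{itemize}
\item By tracelessness, $\Pi^{\pm1}=\pm\Pi^{\pm0}$. Set $\Tbfscri_\pm(x)\coloneqq\la\Pi^{\pm0},\amdeux^{-1}|t|\ra_{\Interval\times\Tbb^2}(x)\geq0$ and use a non-increasing cutoff in time. Each inequality then gives the one-sided bound $\pm\,d_x\Tbfscri_\pm\leq|t_0|\,\mathring\Pi^{\pm0}$, with $\mathring\Pi^{\pm0}$ pushed forward to $\Sbb^1$.
\item Since $\Tbfscri_\pm$ is periodic, its positive and negative variations coincide. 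Hence $\operatorname{osc}\Tbfscri_\pm\leq|t_0|\|\mathring\Pi^{\pm0}\|$, and so $\operatorname{osc}\Tbfscri_\Pi\leq\frac12|t_0|\bigl(\|\mathring\Pi^{+0}\|+\|\mathring\Pi^{-0}\|\bigr)=|t_0|\|\mathring\Pi^{00}\|$.
\item Combined with your average bound, this proves the second estimate with $1$ in place of $\frac12$.
\end{itemize}

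The $\frac12$ cannot be recovered in general. Take $\Phi=0$, $\Kpar=0$, $\amdeux\equiv1$. Let $|t|\Pi^{+0}$ and $|t|\Pi^{-0}$ be Dirac masses in $x$ of weight $|t_0|m$, travelling at unit speed: one to the right from $x_0$, the other to the left from $x_0+(t_1-t_0)$. Obtain $\log\Omega$ from the modified lapse equations; they are compatible, and periodic because $\Pi^{01}$ has zero total mass. Then $\|\mathring\Pi^{00}\|=m$ and $\Lmin=1$, while $\Tbfscri_\Pi=|t_0|m$ on $(x_0,x_0+t_1-t_0)$. This exceeds $(\frac12+t_1-t_0)|t_0|m$ whenever $t_1-t_0<\frac12$.

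A smaller point in the spacelike part. With $\varphi\to$ the indicator of $[t_0,t_2]$, the inequality you display actually gives $|t_2|\la\Pi^{00}(t_2),1\ra\geq|t_0|\la\mathring\Pi^{00},1\ra$, the reverse of what you claim. The weak form of $\divuntrois(\Omega^{-1}\amdeux^{-1}\Pi^{a\bullet})\leq0$ tested against $\varphi\geq0$ has ``$\geq0$''; the sign in \eqref{weak-divePi-bdry} is reversed. With the correct sign your conclusion follows. It gives only ``$\leq$'', which is all the lemma states, not the constancy asserted in the text.
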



\subsubsection{Energy estimates}

We now turn our attention to the $\Phi\Psi$ part of the energy functional~\eqref{equa--energy}.

\begin{lemma}[Monotonicity of $\Phi\Psi$ energy of spacelike slices]
\label{prop:energy}
The functional $\Ebfscri_{\Phi\Psi}$ in~\eqref{equa--energy-split} remains bounded for all times in terms of its value on any initial slice of constant areal time:
\bel{eq:monotenergy}
\aligned
\Ebfscri_{\Phi\Psi}(t)/\mathring\Ebfscri_{\Phi\Psi} & \leq t_0 / t , 
& \qquad t_0 \leq t < t_1 < 0 \text{ (future-contracting),}
\\
\Ebfscri_{\Phi\Psi}(t)/\mathring\Ebfscri_{\Phi\Psi} & \leq t / t_0 ,
& \qquad 0 < t_0 \leq t < t_1 \text{ (future-expanding).}
\endaligned
\ee
More precisely, in the future-contracting (resp.\ future-expanding) regime, $|t|\Ebfscri_{\Phi\Psi}$ (resp.\ $\Ebfscri_{\Phi\Psi}/|t|$) is non-increasing.
In particular, the energy functional has bounded variation and, consequently, it admits left-hand and right-hand traces at each time. For definiteness, one may assume that a right-continuous representative is chosen.
The same bounds and monotonicity properties hold for the full energy $\Ebfscri = \Ebfscri_{\Phi\Psi}+\Ebfscri_\Pi$.
\end{lemma}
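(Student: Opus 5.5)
The plan is to compute the time derivative of $\Ebfscri_{\Phi\Psi}$, written as an integral over spacelike slices, and to show that it has the sign making $|t|^{\pm1}\Ebfscri_{\Phi\Psi}$ monotone. All differentiations will be done in the weak sense with boundary terms, exactly as in the proof of Lemma~\ref{lemma-31}.

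First, I write $\Ebfscri_{\Phi\Psi}(t)=\int_{\Tbb^3}\Mbf^{00}\,|t|\,\Omega^2\,d_x\ell\,dy\,dz+\frac{3}{2|t|}\Lbfscri(t)$. The key step is to test the reference entropy inequality~\eqref{equa-lfk3}, in its corrector form $\opL_5\leq-\divuntrois(\amdeux^{-1}\Omega^{-1}\Pi^{0\bullet})$, against a purely time-dependent test function $\theta(t)\varphi$ with $\varphi\equiv1$ and $\theta\geq0$. Since $\dVuntrois=\Omega\,dt\,\dVtrois$, the weak divergence formula~\eqref{eq-div-weak0} has the following consequences:
\begin{itemize}
\item the spatial flux $\Omega\Mbf^{01}$ drops out;
\item the time component produces $\frac{d}{dt}\int_{\Tbb^3}\Omega\Mbf^{00}\,\dVtrois$ with a sign;
\item the corrector term gives $\frac{d}{dt}\bigl(|t|\la\Pi^{00},1\ra\bigr)$, which is constant by the spacelike measure bound.
\end{itemize}
Combining this with the identity~\eqref{Wmonotonic} for $\frac{d}{dt}\Lbfscri$, I expect a differential inequality of the form
\[
\sgn(t)\frac{d}{dt}\Ebfscri_{\Phi\Psi}\leq-\frac{1}{|t|}\int_{\Tbb^3}\Omega\Bigl(\frac{1}{2}\Msource+\dots\Bigr)\dVtrois-\frac{3}{2t^2}\Lbfscri+\frac{3}{2|t|}\frac{d\Lbfscri}{dt}\,\sgn(t).
\]
Here the dots collect the terms coming from differentiating the volume weight. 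These are read off from~\eqref{equa-Omega-n-m} with $m=1$, namely $\bigl(-\frac{1}{4t}+\frac{t}{2}\Mbf^{00}\Omega^2\bigr)$ per unit volume. They enter because the divergence is $|t|^{-1}\divN(|t|\cdot)$, which contributes the factor $|t|$ in the time component.

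The main obstacle is the algebra showing that the resulting integrand is bounded by $\mp\frac{1}{|t|}$ times the energy density, with exactly the right constant. For the future-expanding regime I need
\[
\frac{d}{dt}\Ebfscri_{\Phi\Psi}\leq\frac{1}{t}\Ebfscri_{\Phi\Psi}.
\]
For the future-contracting regime I need $\frac{d}{dt}\bigl(|t|\Ebfscri_{\Phi\Psi}\bigr)\leq0$. I would group the terms pointwise into $\Mbf^{00}$, $\Mbf^{11}$ and $\Msource$. The $\Mbf^{00}\Omega^2$-quadratic contributions from $(\log\Omega)_t$ and $\amdeux_t/\amdeux$ should cancel against the time derivative of $\frac{3}{2t^2}\Omega^{-1}$ times the volume. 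After that, $\Msource$ should combine with the $\Ebf_0(\Jpar,\Kpar)$ and $q$ terms from $\Lbfscri'$ to leave a sign-definite remainder, using $\Mbf^{00}\pm\Mbf^{11}\geq0$ and the explicit structure of $\Msource$ (the null forms $-\Pbb\cdot\Pbb$ and so on, bounded by $\Ebf_0$). If the constants do not match in both regimes simultaneously, I would first verify the sharper claim in the expanding case and handle the contracting case using $\Lbfscri'\leq0$ there.

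Finally, the differential inequality is converted to the integrated statement~\eqref{eq:monotenergy} by letting $\theta$ tend to an indicator function, as in~\eqref{eqVmonotonic-616}. The monotonicity of $|t|^{\pm1}\Ebfscri_{\Phi\Psi}$ then gives bounded variation, hence one-sided traces at each time, and allows a right-continuous representative to be chosen. Since $\Ebfscri_\Pi$ is constant and $t_0/t$ (respectively $t/t_0$) is at least $1$ on the interval, the same bound holds for $\Ebfscri=\Ebfscri_{\Phi\Psi}+\Ebfscri_\Pi$. Multiplying the constant $\Ebfscri_\Pi$ by $|t|^{\pm1}$ keeps the right monotonicity direction in each regime.
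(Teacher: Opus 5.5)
Your overall route is the paper's: test the reference entropy inequality with a spatially constant test function, add the contribution of $\frac{3}{2|t|}\Lbfscri$ via \eqref{Wmonotonic}, and weight by $|t|^{\pm1}$. But the step you defer as ``the main obstacle'' is the entire content of the lemma, and the structure you anticipate for it is wrong.

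Your first bullet, that the time component yields $\frac{d}{dt}\int_{\Tbb^3}\Omega\Mbf^{00}\,\dVtrois$, is already the whole story. In \eqref{eq-div-weak0} the weight $\dVtrois=|t|\,\Omega\,\amdeux\,dx\,dy\,dz$ is paired with $\theta_t$, so testing \eqref{equa-lfk3} with $\theta(t)\geq0$ gives exactly
\[
\frac{d}{dt}\int_{\Tbb^3}\Omega\,\Mbf^{00}\,\dVtrois\leq-\frac{1}{2t}\int_{\Tbb^3}\Msource\,\Omega\,\dVtrois ,
\]
with no ``dots''. Adding the rates from \eqref{equa-Omega-n-m} double counts the volume weight. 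It also creates terms such as $\frac{t}{2}\Omega^3(\Mbf^{00})^2$ per unit volume, positive when $t>0$, which nothing can absorb, since the $\frac{3}{2|t|}\Lbfscri$ part only produces quadratic terms. So the cancellation you describe between $(\log\Omega)_t$, $\amdeux_t/\amdeux$ and the $\frac{3}{2t^2}\Omega^{-1}$ term does not occur. Moreover, the entropy inequality bounds $\frac{d}{dt}$ from above, not $\sgn(t)\frac{d}{dt}$, so your displayed inequality is reversed when $t<0$.

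What closes the argument is an exact expansion. For $m\in\RR$, the display above together with \eqref{Wmonotonic} gives
\[
|t|^{-m}\frac{d}{dt}\bigl(|t|^m\Ebfscri_{\Phi\Psi}\bigr)\leq\frac{\sgn(t)}{|t|}\int_{\Tbb^3}\Bigl(-\tfrac12\Msource+\tfrac32\Ebf_0(\Jpar,\Kpar)+\tfrac34\,q_\Jbb|\Jbb\cdot\Jbb|+m\,\Mbf^{00}+\tfrac{3(m-1)}{2t^2\Omega^2}\Bigr)\Omega\,\dVtrois .
\]
Inserting the definitions of $\Msource$ and $\Mbf^{00}$, the bracket equals exactly
\[
\tfrac{m-1}{2}\bigl(P_0^2+Q_0^2+|\Kpar|^2\bigr)+\tfrac{m+1}{2}\bigl(P_1^2+Q_1^2\bigr)+\tfrac{3(m-1)}{2t^2\Omega^2}+m\bigl(J_1^2+J_2^2+J_3^2\bigr)+\tfrac12\bigl(m-1+(m+2)q_\Jbb\bigr)|\Jbb\cdot\Jbb| ,
\]
which is \eqref{equa-619d}. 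Taking $m=-1$ for $t>0$ and $m=1$ for $t<0$, every coefficient has the required sign using only $0\leq q_\Jbb\leq1$.

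There is no slack here. At $m=-1$ the coefficients of $P_1^2,Q_1^2$ vanish, and at $m=1$ those of $P_0^2,Q_0^2,|\Kpar|^2,\Omega^{-2}$ vanish. So lossy bounds such as $|\Msource|\leq5\Mbf^{00}$ or $\Mbf^{00}\pm\Mbf^{11}\geq0$ cannot replace this computation. Your fallback of discarding $\Lbfscri'\leq0$ in the contracting case also fails, because the $5\,\Ebf_0(\Kpar)$ in $\Msource$ is compensated only by the $\frac32\Ebf_0(\Kpar)$ coming from $\Lbfscri'$.

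Finally, the paper applies \eqref{equa-lfk3} itself, which is part of the definition of a flow with corrector, so $\Pi$ plays no role for $\Ebfscri_{\Phi\Psi}$. Your route through the corrector form needs $\Ebfscri_\Pi$ to be exactly constant, whereas \eqref{eq:Twave} only makes it non-increasing. In that case you would control only the full energy $\Ebfscri$, although non-increasing does suffice for your final step on $\Ebfscri$.
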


\begin{proof}
We now use the energy balance law~\eqref{equa-lfk3}, namely the inequality form of the first Euler equation.  Together with the evolution equation~\eqref{eq:T2-8-def} for the conformal length density $\amdeux$, this yields (for $m\in\RR$):
\bel{equa-619d}
\aligned
& \abs{t}^{-m} \Bigl( \abs{t}^{m+1}  \, \Bigl(\Mbf^{00} \Omega^2 + \frac{3}{2t^2}\Bigr) d_x\ell dydz \Bigr)_t 
     + \abs{t} \, \bigl( \Mbf^{01} \, \Omega^2  dxdydz \bigr)_x 
\\
& \leq \sgn(t)\, \biggl(  \frac{m-1}{2}(P_0^2 + Q_0^2)
 + \frac{m+1}{2} (P_1^2 + Q_1^2)
 + \frac{3(m-1)}{2t^2} \Omega^{-2}
 + {(m-1)\over 2} (K_2^2 + K_3^2) 
\\
& \hskip2.cm + m (J_1^2 + J_2^2 + J_3^2)
 + {1 \over 2} \Big(m-1+(m+2)\,q_\Jbb \Big) \, \abs{\Jbb \cdot \Jbb} \biggr) \Omega^2 d_x\ell dydz. 
 \endaligned
\ee
Thanks to $q_\Jbb \in [0,1]$, for $m\,\sgn(t)\leq -1$ the source term is clearly non-positive.  Integrating over space, we conclude $\abs{t}^m \Ebfscri_{\Phi\Psi}(t)$ is non-increasing for this range of~$m$.  This establishes~\eqref{eq:monotenergy} and the stated monotonicity.
The remaining statements rely on $\Ebfscri_\Pi(t)=(t_0/t)\mathring\Ebfscri_\Pi$, cf.\ \autoref{prop:corrector},
\end{proof}


\subsubsection{Energy functional vs.\ conformal length functional}

The  conformal length is bounded above by the energy, namely $\Lbfscri(t) \leq (2/3) \abs{t} \Ebfscri_{\Phi\Psi}(t)$, itself controlled by the initial energy, according to~\autoref{prop:energy}.  The energy at any time~\eqref{equa--energy} is bounded on both sides by the initial energy $\Ebfscri_{\Phi\Psi}(t_0)$ multiplied by~$(t/t_0)^{\sign(t)}$. In particular, in the future-contracting regime, the derivative in~\eqref{Wmonotonic} is bounded above by the energy. Hence, the conformal length functional cannot decrease too quickly and its value at some time~$t$ with ${t_0\leq t<0}$ thus has a lower bound 
\be
\Lbfscri(t)\geq \mathringLbfscri-|t_0|\mathring\Ebfscri_{\Phi\Psi}\log|t_0/t|.
\ee
 This yields a lower bound on the time $t_*$ at which $\Lbfscri$ may vanish (in the future-contracting case $t_0<0$):
\bel{tstar-lower-bound-sec6}
t_* \geq t_0 \, \exp\biggl(- \frac{\mathringLbfscri}{|t_0|\mathring\Ebfscri_{\Phi\Psi}}\biggr).
\ee
Hence, along a sequence of solutions the maximal time of existence $t_*$ cannot tend to~$t_0$ as long as the ratio $\mathring\Ebfscri_{\Phi\Psi}/\mathringLbfscri$ of energy by conformal length is assumed bounded above.


\subsubsection{Timelike slices}

As our last energy estimate, we control the analogue of $\Ebfscri_{\Phi\Psi}$ on constant-$x$ slices.
Along the way, we consider the spacetime integral of \(\Mbf^{00}\), and therefore rely on the four-dimensional volume form
\be
\dVuntrois=\Omega\,dt\,\dVtrois,
\ee
which contains an \emph{additional} factor \(\Omega\). In view of~\eqref{eq:T2-8-def}, the function \(t\mapsto \amdeux(t,x)\) is increasing when \(t>0\) and decreasing when \(t<0\). Therefore, we can bound \(\Mbf^{00}\) from below by replacing \(d_x\ell(t,\cdot)\) by \(d_x\ell(t_i,\cdot)\), with \(t_i=t_1\) in the future-contracting regime and \(t_i=t_0\) in the future-expanding regime.  (More precisely, $d_x\ell(t_i,\cdot)$ stands for the limit as $t\to t_i$.) Thus, we have 
\bse
\label{equa---avecM00}
\be
\aligned
\int_{\Interval\times\Tbb^3} \Mbf^{00} \dVuntrois
&
\geq \int_{\Sbb^1} \biggl( d_x\ell(t_i,\cdot) \int_{t_0}^{t_1} \abs{t} \Omega^2 \Mbf^{00}(t,x)\, dt \biggr)  
\\
&
\geq \biggl(\int_{\Sbb^1} d_x\ell(t_i,\cdot)\biggr)\,
\inf_{x\in\Sbb^1} \biggl(\int_{t_0}^{t_1} \abs{t}\Omega^2 \Mbf^{00}\, dt \biggr),
\endaligned
\ee
hence
\bel{equa-onepoint}
\aligned
\inf_{x\in\Sbb^1} \biggl(\int_{t_0}^{t_1} \abs{t}\,\Omega^2 \Mbf^{00}\, dt \biggr)
\leq \frac{1}{\Lmin} \int_{\Interval\times\Tbb^3} \Mbf^{00} \, \dVuntrois
\leq \frac{1}{\Lmin} \int_{t_0}^{t_1} \Ebfscri_{\Phi\Psi}(t)\, dt
\endaligned
\ee
in both the future-expanding and future-contracting regimes. Since \(\Mbf^{00}\) contains the energies of \(\Pbb,\Qbb,\Kpar,\Jbb\), we can summarize this result as follows: \emph{at one point} \(x_0\in\Sbb^1\), the squared \(L^2\)-norm in time of \((\Pbb,\Qbb,\Kpar,\Jbb)\) is controlled by \(\max(t/t_0,t_0/t)\,\mathring\Ebfscri_{\Phi\Psi}/\Lmin\).
\ese
\bse
In fact, we can propagate the estimate~\eqref{equa-onepoint} to \emph{all points} \(x\in\Sbb^1\), as follows. Integrating the second Euler equation in~\eqref{eq:T2-Euler-perp-def} over a rectangle \([t_0,t_1]\times[x,y]\) and bounding the resulting spacelike integrals by the energy yields
\bel{equa311k}
\int_{t_0}^{t_1} \abs{t} \, (\Omega^2 \Mbf^{11})(t,x)\, dt
\leq \int_{t_0}^{t_1} \abs{t} \, (\Omega^2 \Mbf^{11})(t,y)\, dt + \mathring\Ebfscri_{\Phi\Psi} + \limsup_{t\to t_1} \Ebfscri_{\Phi\Psi}(t).
\ee
To the expression of \(\Mbf^{11}\) in~\eqref{eq:T2-Mdef-0}, we add the term \(\Ebf_0(\Kpar)\) so as to obtain the non-negative combination
\bel{eq-M11}
\Mbf^{11}+\Ebf_0(\Kpar)
=
\Ebf_0(\Pbb,\Qbb)+J_1^2+\frac12(1-q_\Jbb)\,|\Jbb\cdot\Jbb|
\geq 0,
\ee
in which, thanks to \eqref{equa-asym}, the coefficient \((1-q_\Jbb)\) is bounded away from $0$ and~$1$ at large densities (large values of $|\Jbb\cdot\Jbb|$). Since~\eqref{equa311k} holds for all~$y$, taking the infimum over~$y$ and using~\eqref{equa-onepoint}, we deduce that
\bel{equa311}
\aligned
& \sup_{x\in\Sbb^1}\int_{t_0}^{t_1}\abs{t}\,\bigl(\Omega^2(\Mbf^{11}+\Ebf_0(\Kpar))\bigr)(t,x)\,dt
\leq
\Nbfscri_{\Phi\Psi}
+\sup_{x\in\Sbb^1}\int_{t_0}^{t_1}\abs{t}\,\Omega^2\,\Ebf_0(\Kpar)(t,x)\,dt .
\endaligned
\ee
with the notation
\be
\Nbfscri_{\Phi\Psi}
= \mathring\Ebfscri_{\Phi\Psi}+\limsup_{t\to t_1}\Ebfscri_{\Phi\Psi}(t) + \frac{1}{\Lmin}\int_{t_0}^{t_1}\Ebfscri_{\Phi\Psi}(t)\,dt .
\ee
At this stage, it is preferable to keep the last term explicit using \(\Ebf_0(\Kpar)=\frac12|\Kpar|^2\); it will be estimated later on by means of the pointwise bounds on \(\Omega\) and \(\Kpar\).
\ese

\begin{lemma}[Control of the energy of timelike slices]
\label{lem--34a}
The energy on timelike slices satisfies
\bel{equa311-geom-zero}
\sup_{x\in\Sbb^1}\int_{\Interval\times\Tbb^2}\bigl(\Omega^2(\Mbf^{11}+\Ebf_0(\Kpar))\bigr)\,\Omega^{-1}\dVundeux
\leq \Nbfscri_{\Phi\Psi}
+\frac12 \sup_{x\in\Sbb^1}\int_{t_0}^{t_1}\abs{t}\,\Omega^2\,|\Kpar|^2(t,x)\,dt .
\ee
In particular, once \(\Omega\) and \(\Kpar\) are controlled pointwise, one obtains
\be
\sup_{x\in\Sbb^1}\int_{\Interval\times\Tbb^2}\bigl(\Omega^2(\Mbf^{11}+\Ebf_0(\Kpar))\bigr)\,\Omega^{-1}\dVundeux
\lesssim \Nbfscri_{\Phi\Psi}
+ |t_1^2-t_0^2|\,\sup_{\Interval\times\Tbb^3}|\Omega\Kpar|^2.
\ee
Consequently, \(\Pbb\), \(\Qbb\), \(J_1\), and \(|\Jbb\cdot\Jbb|^{1/2}\) are square-integrable in time along timelike slices.
\end{lemma}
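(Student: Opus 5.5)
The plan is to make rigorous the two-step argument sketched just before the statement: a one-point bound from the spacetime integral of $\Mbf^{00}$, followed by propagation along $\Sbb^1$ using the momentum balance law $\opL_6$. First I note that $\Omega^{-1}\dVundeux = |t|\,dt\,dy\,dz$. Hence the left-hand side of \eqref{equa311-geom-zero} is exactly $\sup_x\int_{t_0}^{t_1}|t|\,\Omega^2(\Mbf^{11}+\Ebf_0(\Kpar))\,dt$, with $\Tbb^2$ having unit volume, so the first claim is \eqref{equa311}.

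\emph{Step 1: the one-point bound \eqref{equa-onepoint}.} Write $\dVuntrois=|t|\Omega^2\,d_x\ell\,dt\,dy\,dz$ and use the pointwise monotonicity of $t\mapsto\amdeux(t,x)$, which follows from \eqref{eq:T2-8-def} together with $\Mbf^{00}-\Mbf^{11}\geq0$. This lets me replace $d_x\ell(t)$ from below by $d_x\ell(t_i)$. Fubini then gives the bound at a minimising point $x_0$, using $\int\Mbf^{00}\dVuntrois\le\int_\Interval\Ebfscri_{\Phi\Psi}\,dt$ and $\int_{\Sbb^1}d_x\ell(t_i)=\Lmin$ from \autoref{lem-conf-vol}. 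Since $\Mbf^{11}+\Ebf_0(\Kpar)\le\Mbf^{00}+\Ebf_0(\Kpar)$ by \eqref{eq:T2-Mdef-1}, the same bound holds for the combination in \eqref{eq-M11} at $x_0$, up to the extra $\Ebf_0(\Kpar)$ term.

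\emph{Step 2: propagation via $\opL_6$.} I test the weak form \eqref{eq-div-weak0-bound} of $\divuntrois(\Omega\Mbf^{1\bullet})=0$ against $\theta(t)\varphi(x)$, where $\theta$ approximates the indicator of $[t_0,t_1)$ and $\varphi$ approximates the indicator of $[y,x]$. The timelike flux terms produce $\int|t|\Omega^2\Mbf^{11}\,dt$ evaluated at $x$ and at $y$. The spacelike terms produce $\int_y^x|t|\Omega^2\Mbf^{10}\,d_x\ell$ at $t_0$ and at $t_1$, and each of these is bounded by the energy since $2|\Mbf^{01}|\le\Mbf^{00}+\Mbf^{11}\le 2\Mbf^{00}+\Ebf_0(\Kpar)$. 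Passing to the limit in $\varphi$ uses the timelike integrability $\Phi\in L^\infty(\Sbb^1,L^2(\Interval\times\Tbb^2,\dVundeux))$, which makes $x\mapsto\int|t|\Omega^2\Mbf^{11}dt$ of bounded variation and gives traces at a.e.\ $x$. Passing to the limit in $\theta$ uses the one-sided traces of the energy at $t_1$, which exist by \autoref{prop:energy}.

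For flows with corrector, the right-hand side of $\opL_6$ is instead $\divuntrois(\amdeux^{-1}\Omega^{-1}\Pi^{1\bullet})$. This contributes extra terms controlled by \autoref{prop:corrector}, and I would either absorb them into $\Nbfscri$ or note that the lemma is stated for the $\Phi\Psi$ part. Evaluating at $y=x_0$ and adding $\Ebf_0(\Kpar)$ to both sides yields \eqref{equa311}.

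\emph{Step 3: consequences.} For the second display I bound $\tfrac12\int|t|\Omega^2|\Kpar|^2dt\le\tfrac14|t_1^2-t_0^2|\sup|\Omega\Kpar|^2$. Square-integrability along timelike slices then follows from \eqref{eq-M11}: each of $\Ebf_0(\Pbb,\Qbb)$, $J_1^2$ and $(1-q_\Jbb)|\Jbb\cdot\Jbb|$ is nonnegative. By \eqref{equa-asym} and $q\le q^+_\infty<1$ at large density, $1-q_\Jbb$ is bounded below where $|\Jbb\cdot\Jbb|$ is large, while small densities contribute trivially on the bounded slab.

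I expect the main obstacle to be the weak justification in Step 2, namely taking indicator test functions with only BV control on $\Omega$ and measure-valued fluxes. I would handle it by mollifying, using the density result of \autoref{appendix=E}, and selecting Lebesgue points in $x$.
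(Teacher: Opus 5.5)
Your proposal is correct and is essentially the paper's own argument: the one-point bound \eqref{equa-onepoint} from the time-monotonicity of $\amdeux$ and $\Lmin$, then propagation to every $x$ by integrating the $\Mbf^{1\bullet}$ balance law over rectangles $[t_0,t_1]\times[y,x]$ with the spacelike boundary terms bounded by the energy, and finally adding $\Ebf_0(\Kpar)$ to reach the nonnegative combination \eqref{eq-M11}. Two small points: to get exactly the constant in $\Nbfscri_{\Phi\Psi}$, use $|\Mbf^{01}|\le\Ebf_0(\Jperp,\Pbb,\Qbb)\le\Mbf^{00}$ rather than your looser $2|\Mbf^{01}|\le 2\Mbf^{00}+\Ebf_0(\Kpar)$; and when a corrector is present, the extra terms coming from $\divuntrois(\amdeux^{-1}\Omega^{-1}\Pi^{1\bullet})$ genuinely have to be added to the right-hand side (your first option), a point the paper's proof passes over.
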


At this stage, on timelike slices, we have no control over \(J_0\) or over the parallel momentum \(\Jpar=J_2+iJ_3\). The estimate~\eqref{equa311-geom-zero} will be closed below, once we establish pointwise bounds for \(\Omega\) and \(K\); see Lemmas~\ref{lemma-sup-Omega} and~\ref{lem-apriori--twist}.


\subsection{Pointwise estimates}
\label{section=6-3}
 
\subsubsection{Lapse function}

We now analyze the geometric variables in~$\Psi$. We seek first a pointwise bound on $\log\Omega$ throughout $\Interval\times\Tbb^3$. The modified lapse    constraint~\eqref{eq:evollambda-new-mod2b} and the inequalities $|\Pi^{01}|\leq\Pi^{00}$ and $|\Mbf^{01}|\leq\Mbf^{00}$ imply a total variation bound
\be
\Var(\log\Omega;\Tbb^3)
\leq \frac{|t|}{2} \Bigl( \int_{\Tbb^3} \Mbf^{00} \Omega\,\dVtrois + \|\Pi(t,\cdot)\|_{\Meas(\Tbb^3)} \Bigr) \leq \frac{\Ebfscri(t)}{2} .
\ee
Therefore 
\be
\exp\Bigl( - \frac{\Ebfscri(t)}{2} \Bigr) \leq \Omega(t,x)\,\Omega(t,x')^{-1} \leq \exp\Bigl( \frac{\Ebfscri(t)}{2} \Bigr).
\ee
We integrate these two inequalities with the measure $|t|^{-1}\Omega^{-1}\dVtrois=d_x\ell\,dy\,dz$ in the variable $x'$ and obtain the following result. 

\begin{lemma}[Two-sided pointwise control of the lapse]
\label{lemma-sup-Omega}
The function $\Omega$ satisfies the pointwise lower and upper bounds
\be
 |t|^{-1}  \Vcal(t) \exp\bigl( - \frac{\Ebfscri(t)}{2} \bigr) \leq \Omega(t,x) \Lbfscri(t)
  \leq  |t|^{-1} \Vcal(t)\exp\bigl( \frac{\Ebfscri(t)}{2} \bigr), 
\ee
in which, by the previous lemmas, the coefficients are bounded in terms of $\mathring\Ebfscri,\Lmin,t_0,t_1$.
\end{lemma}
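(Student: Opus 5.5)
The argument starts from the spatial variation bound already obtained just before the statement. It controls $\log\Omega$ on each slice, so that $\Omega(t,x)/\Omega(t,x')$ lies between $e^{\mp\Ebfscri(t)/2}$. To justify it at the weak level, I would use the modified lapse constraint~\eqref{eq:evollambda-new-mod2b} in its slice-wise weak form~\eqref{eq-space-weak}. Since $\log\Omega\in\BV(\Tbb^3)$ on almost every slice, its distributional $x$-derivative is a measure, and its total mass is at most $\frac{|t|}{2}\bigl(\int \Mbf^{00}\Omega\,\dVtrois + \|\Pi^{01}(t)\|_{\Meas}\bigr)$. Here I use $|\Mbf^{01}|\leq\Mbf^{00}$ from~\eqref{eq:T2-Mdef-1} and $|\Pi^{01}|\leq\Pi^{00}$ from~\eqref{eq=allconfor-infty-mod}. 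Comparing with the definition~\eqref{equa--energy} of $\Ebfscri$, whose remaining term $\frac{3}{2|t|}\Lbfscri\geq0$ can simply be dropped, gives $\Var(\log\Omega(t);\Tbb^3)\leq \Ebfscri(t)/2$. Then for any two points $x,x'$ (taking a good representative of the BV function, or working with essential sup/inf), $|\log\Omega(t,x)-\log\Omega(t,x')|\leq\Ebfscri(t)/2$.

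Next I integrate in $x'$. Multiplying $\Omega(t,x')\,e^{-\Ebfscri/2}\leq\Omega(t,x)\leq\Omega(t,x')\,e^{\Ebfscri/2}$ by the positive measure $d_x\ell(x')\,dy\,dz=|t|^{-1}\Omega^{-1}\dVtrois$ and integrating over $\Tbb^3$ gives two identities. The left measure integrates to $\Lbfscri(t)$ by~\eqref{equa-conf-vol}. The integral of $\Omega(t,x')\,d_x\ell\,dy\,dz$ equals $|t|^{-1}\Vcal(t)$ by~\eqref{equa-333}. This yields exactly the two-sided inequality $|t|^{-1}\Vcal e^{-\Ebfscri/2}\leq\Omega\,\Lbfscri\leq|t|^{-1}\Vcal e^{\Ebfscri/2}$. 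The fact that $\Omega(t,x)$ is constant in $x'$ makes it pass through the integral.

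Finally, I bound the coefficients.
\begin{itemize}
\item Energy: $\Ebfscri(t)\leq\max(t/t_0,t_0/t)\,\mathring\Ebfscri$ by \autoref{prop:energy}, together with $\Ebfscri_\Pi$ constant from \autoref{prop:corrector}.
\item Conformal length: $\Lbfscri(t)\geq\Lmin$ by \autoref{lem-conf-vol}, and $\Lbfscri\leq\frac23|t|\Ebfscri$.
\item Volume, upper bound: \autoref{lemma-31} gives one-sided monotone control of $|t|^{-3/4}\Vcal$. In the expanding regime the missing upper bound follows from Cauchy--Schwarz, $\Vcal\leq\Lbfscri^{1/2}\Ucal^{1/2}$. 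Alternatively, integrate~\eqref{eqVmonotonic-616}, whose right side involves $\int\Mbf^{00}\Omega^2\dVtrois\leq\sup\Omega\cdot\Ebfscri$.
\item Volume, lower bound in the contracting regime: use $\Vcal\geq$ something via $\Lbfscri$ and the lower lapse bound itself.
\end{itemize}

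The main obstacle is this circularity: the volume upper bound needs $\sup\Omega$. I would close it by inserting the just-proved upper bound $\Omega\leq|t|^{-1}\Vcal e^{\Ebfscri/2}/\Lmin$ into~\eqref{eqVmonotonic-616}, giving a linear differential inequality $|d\Vcal/dt|\lesssim C(\mathring\Ebfscri,\Lmin,t_0,t_1)\,\Vcal$. Gronwall then bounds $\Vcal$ on $[t_0,t_1)$ in terms of $\mathring\Vcal$. Since $\mathring\Vcal$ is itself controlled by $\mathring\Lbfscri$ and $\mathring\Omega$, hence by $\mathring\Ebfscri$ through the same argument at $t_0$, all coefficients depend only on $\mathring\Ebfscri,\Lmin,t_0,t_1$.
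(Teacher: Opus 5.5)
Your proof of the two-sided inequality is the paper's own argument. You bound the variation of $\log\Omega$ on each slice using the modified constraint \eqref{eq:evollambda-new-mod2b} together with $|\Mbf^{01}|\leq\Mbf^{00}$ and $|\Pi^{01}|\leq\Pi^{00}$. This gives the ratio bound $e^{-\Ebfscri/2}\leq\Omega(t,x)/\Omega(t,x')\leq e^{\Ebfscri/2}$, which you integrate in $x'$ against $d_x\ell\,dy\,dz$. One bookkeeping fix: $\dVtrois$ already carries the factor $|t|$, so the constraint actually gives $\Var(\log\Omega(t))\leq\frac12\int_{\Tbb^3}\Mbf^{00}\Omega\,\dVtrois+\frac{|t|}{2}\|\Pi^{01}(t)\|_{\Meas(\Tbb^3)}$. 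You wrote $\frac{|t|}{2}\bigl(\int\Mbf^{00}\Omega\,\dVtrois+\|\Pi^{01}\|\bigr)$, following the display before the lemma. With that version the comparison with $\Ebfscri$ fails for $|t|>1$, whereas the corrected expression is bounded by $\Ebfscri(t)/2$ term by term.

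On the coefficients, your Gronwall closure genuinely improves on the paper's bare appeal to earlier lemmas, since \autoref{lemma-31} only bounds $\Vcal$ on one side in each regime.
\begin{itemize}
\item Set $W=|t|^{-3/4}\Vcal$. In \eqref{eqVmonotonic-616}, bound $\int\Mbf^{00}\Omega^2\,\dVtrois+|t|\la\Omega\Pi^{00},1\ra\leq\sup\Omega(t)\,\Ebfscri(t)$, using \eqref{eq:Volpert-Pi-estimate} for the corrector term. Then insert the upper lapse bound. This gives $|W'|\leq\frac12\,\Ebfscri e^{\Ebfscri/2}\Lmin^{-1}\,W$.
\item This single estimate supplies both the missing upper bound (expanding regime) and the missing lower bound (contracting regime).
\item Drop your other two suggestions. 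The lower bound ``via $\Lbfscri$ and the lower lapse bound'' is circular: $\Vcal\geq|t|\Lbfscri\inf\Omega$ combined with that bound only returns $\Vcal\geq e^{-\Ebfscri/2}\Vcal$.
\item The Cauchy--Schwarz route $\Vcal\leq\Lbfscri^{1/2}\Ucal^{1/2}$ needs an upper bound on $\Ucal$. But $\Ucal$ is increasing in the expanding regime, so bounding it requires the same control of $\sup\Omega$.
\end{itemize}

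Your final step, however, is false. Applying the lemma at $t_0$ controls $\mathring\Omega$ relative to its $d_x\mathring\ell$-average $\mathring\Vcal/(|t_0|\mathringLbfscri)$; it does not control $\mathring\Vcal$ itself, and no such control exists. Take $\Jbb=\Pbb=\Qbb=\Kpar=0$, $\Pi=0$, $\amdeux$ constant, and $\Omega=c|t|^{-1/4}$. All equations, including the lapse wave equation, hold for every $c>0$, while $\Ebfscri=\frac{3}{2|t|}\Lbfscri$ and $\Lmin$ do not depend on $c$. So the constants must also depend on $\mathring\Vcal$, or equivalently on $\sup|\log\mathring\Omega|$, which is among the data in \autoref{theo--68}. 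That is how the statement's ``in terms of $\mathring\Ebfscri,\Lmin,t_0,t_1$'' must be read, with the dependence on $\mathring\Vcal$ entering through \autoref{lemma-31}.
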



\subsubsection{Direct estimates for the twists}

The evolution and constraint equations~\eqref{eq:T2-9101112-evol-def-a}--\eqref{eq:T2-9101112-evol-def-b} and~\eqref{eq:221a}--\eqref{eq:221b} for the twists $\Kpar=K_2+iK_3$ have the same structure as those for $\log\Omega$, except for an additional factor of~$\Omega^{-1}$ in the source terms. Here, we seek a two-sided bound on $K$, which turns out to involve the supremum of $\Omega^{-1}$. As for $\log\Omega$ the equations only control differences of~$\Kpar$, and we now use, as reference value, the infimum of~$|\Kpar|$. At any time $t\in[t_0,t_1]$ we have an upper bound on the infimum $\inf_{\Sbb^1}|\Kpar(t,\cdot)|$:
\bel{equa--322}
\frac{2}{|t|} \Ebfscri(t)
\geq \int_{\Sbb^1} |\Kpar|^2 \Omega^2 \, d_x\ell 
\geq \Lmin \, \inf_{\Sbb^1} |\Kpar(t,\cdot)|^2 \, \inf_{\Sbb^1} \Omega(t, \cdot)^2.
\ee
We then use the constraints~\eqref{eq:221a}--\eqref{eq:221b} to bound variations of~$\Kpar$ within one time slice following the same steps as for $\log\Omega$ to get the uniform bound
\be
|\Kpar(t,x)|
\leq \inf_{y\in\Sbb^1} |\Kpar(t,\cdot)| + \frac{\Ebfscri(t)}{|t|} \sup_{\Tbb^3} \Omega^{-1}(t, \cdot), 
\ee
and we arrive at the following result. 

\begin{lemma}[Pointwise control of the twists]  
\label{lem-apriori--twist}
The function $K=K_2 + i \, K_3$ is controlled from the initial data, as follows: 
\bel{equa-apriori--twist}
\sup_{\Sbb^1} |\Kpar(t,\cdot)| 
\leq \biggl(\biggl(\frac{2 \, \Ebfscri(t)}{t_{\min} \Lmin}\biggr)^{1/2} + {\Ebfscri(t)  \over t_{\min}} \biggr) \sup_{\Tbb^3}  \Omega^{-1}(t, \cdot), 
\ee
in which the upper bound is easily controlled (for all times) in terms of ${t_{\min} \coloneqq \min( |t_0|, |t_1|)}$ and $\mathring\Ebfscri,\Lmin$ and the sup-norm of~$1/\Omega$ given by~\autoref{lemma-sup-Omega}. 
\end{lemma}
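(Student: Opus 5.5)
The bound \eqref{equa-apriori--twist} will be obtained in two steps: first locate one point on the slice where $|\Kpar|$ is small, then propagate from that point using the constraint equations.

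\textbf{Step 1 (a good point).} Fix $t\in[t_0,t_1]$. I start from the energy functional \eqref{equa--energy}. Since $\Mbf^{00}\geq \Ebf_0(\Kpar)=\frac12|\Kpar|^2$ and $\Omega\,\dVtrois=|t|\,\Omega^2\,d_x\ell\,dy\,dz$, discarding the non-negative remaining terms gives \eqref{equa--322}, namely
$$\frac{2}{|t|}\Ebfscri(t)\geq \int_{\Sbb^1}|\Kpar|^2\Omega^2\,d_x\ell .$$
I bound the integrand below by $\inf|\Kpar|^2\,\inf\Omega^2$. By \autoref{lem-conf-vol}, $\int_{\Sbb^1}d_x\ell=\Lbfscri(t)\geq\Lmin$. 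Using $|t|\geq t_{\min}$, this yields
$$\inf_{\Sbb^1}|\Kpar(t,\cdot)|\leq \Bigl(\frac{2\Ebfscri(t)}{t_{\min}\Lmin}\Bigr)^{1/2}\sup_{\Tbb^3}\Omega^{-1}(t,\cdot).$$

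\textbf{Step 2 (spatial variation).} Here I use the constraints \eqref{eq:221a}--\eqref{eq:221b}, understood slice-wise in the sense \eqref{eq-space-weak}. They express $(K_m)_x$ as $\Omega\amdeux$ times a quadratic expression. Since $\Kpar\in\BVac$, the total variation of $|\Kpar|$ on the slice is at most
$$\int_{\Sbb^1}\bigl(|J_0\Jpar|+|\Pbb||\Kpar|+|\Qbb||\Kpar|\bigr)\,\Omega\,d_x\ell .$$
By Cauchy--Schwarz, each product is bounded by a multiple of $\Mbf^{00}$. For the fluid term this uses $|J_0||\Jpar|\leq \Ebf_0(\Jperp)+\Ebf_0(\Jpar)$. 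I then write $\Omega\,d_x\ell=\Omega^{-1}\cdot|t|^{-1}\Omega\,\dVtrois$ and pull out $\sup\Omega^{-1}$. This gives a variation bound of $\frac{C}{|t|}\sup\Omega^{-1}\int\Mbf^{00}\Omega\,\dVtrois\leq \frac{C\,\Ebfscri(t)}{t_{\min}}\sup\Omega^{-1}$.

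\textbf{Main obstacle.} The step that needs care is the constant $C$: the stated bound has coefficient $1$. To get it, I would compute the derivative of $|\Kpar|$ directly, i.e.\ form $\Re(\bar\Kpar\,\Kpar_x)/|\Kpar|$, and check that the cross terms combine into at most $\Mbf^{00}$ (possibly $2\Mbf^{00}$, absorbed using the factor $2/|t|$ in the energy normalization). If exact matching fails, the lemma still holds with a harmless absolute constant, which is all that is used later.

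Adding the two steps, $|\Kpar(t,x)|\leq\inf|\Kpar|+\mathrm{Var}$, gives \eqref{equa-apriori--twist}. Finally, \autoref{prop:energy} and \autoref{prop:corrector} bound $\Ebfscri(t)$ by $\mathring\Ebfscri$ times $\max(t/t_0,t_0/t)$, and \autoref{lemma-sup-Omega} bounds $\sup\Omega^{-1}$, so the right-hand side is controlled for all times.
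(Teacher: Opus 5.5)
Your proposal is correct and follows the paper's proof. You bound $\inf_{\Sbb^1}|\Kpar(t,\cdot)|$ through the energy inequality \eqref{equa--322}, then bound the oscillation of $\Kpar$ on the slice by $\Ebfscri(t)\,|t|^{-1}\sup\Omega^{-1}$ using the constraints \eqref{eq:221a}--\eqref{eq:221b}, exactly as the paper does for $\log\Omega$.

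Your hedge on the constant is unnecessary. Computing $\Re(\overline{\Kpar}\,\Kpar_x)$ gives $\bigl|\partial_x|\Kpar|\bigr| \leq \bigl(|J_0|\,|\Jpar| + \tfrac12|\Kpar|\,(P_1^2+Q_1^2)^{1/2}\bigr)\Omega\amdeux \leq \Mbf^{00}\,\Omega\amdeux$, so the coefficient $1$ holds. Only the cruder bound $|\Kpar|(|\Pbb|+|\Qbb|)$ loses a factor.
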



\subsubsection{Conformal length density} 

By integration in time of the equation on $\amdeux_t$ in~\eqref{eq:T2-8-def}, for the conformal length density we find (pointwise in $x$)
\bel{equa-apriori-speed-formula}
\amdeux(t,x) = \mathring\amdeux(x)\, \exp \Bigl( \int_{t_0}^{t} {t' \over 2}\bigl(\Mbf^{00}-\Mbf^{11}\bigr)(t',x)\,\Omega(t',x)^2\,dt' \Bigr), 
\ee 
so that we can state the following result. Interestingly,~\eqref{equa-apriori-speed} below shows that the conformal length measure $d_x\ell(t)$ at each time $t$ (defined in \eqref{equaDell}) is absolutely continuous \emph{with respect to} the initial one $d_x\mathring\ell$. In particular, $d_x\ell(t)$ is absolutely continuous for all times.

\begin{lemma}[Two-sided pointwise control of the conformal length density]
\label{lem-speed}
The conformal length density is given pointwise by the formula 
\bel{equa-apriori-speed}
\aligned
d_x\ell(t) & =\beta (t) \, d_x\mathring\ell,
\\
\beta (t) & = \exp \Bigl(
\int_{t_0}^{t} |t'| \, \Bigl( \Ebf_0(\Kpar,\Jpar)+\frac{1}{2} q_\Jbb \, \abs{\Jbb\cdot\Jbb}\Bigr)(t',x)\, \Omega(t',x)^2 \, dt'
\Bigr), 
\endaligned
\ee
in which the coefficient $\beta =\beta(t,x) \in L^\infty(\Interval \times \Tbb^3)$ is bounded thanks to~\autoref{lem--34a}, above, and Lemma~\ref{lem-apriori--twist}, below. Observe that $\beta \geq 1$ in future-expanding spacetimes and $0<\beta \leq 1$ in future-contracting spacetimes. 
\end{lemma}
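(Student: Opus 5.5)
The proof is a direct integration, along each timelike line of constant $x$, of the evolution equation~\eqref{eq:T2-8-def} for $\amdeux$. First I would rewrite the source of~\eqref{eq:T2-8-def} from the definitions~\eqref{eq:T2-Mdef-0}:
\[
\Mbf^{00}-\Mbf^{11} = 2\,\Ebf_0(\Jpar,\Kpar) + q_\Jbb\,(-\Jbb\cdot\Jbb).
\]
Hence the equation becomes
\[
(\log\amdeux)_t = t\,\Bigl(\Ebf_0(\Kpar,\Jpar)+\tfrac12 q_\Jbb|\Jbb\cdot\Jbb|\Bigr)\Omega^2 .
\]
The right-hand side has the sign of $t$ and is non-negative up to that sign. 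The corrector does not enter here, since $\Pi$ is traceless and only appears in the $\Omega$ equations~\eqref{eq:evollambda-new-mod2}. Integrating from $t_0$ to $t$ then gives~\eqref{equa-apriori-speed}. Writing $\sgn(t')|t'|$ for $t'$ shows that the exponent is non-negative for $t>t_0>0$ and non-positive for $t_0<t<0$, so $\beta\ge1$ in the expanding case and $\beta\le1$ in the contracting case.

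Next I would justify this at the weak level. The equation holds in the sense~\eqref{eq-time-weak-00}, with boundary terms at $t_0$. The source $G = t(\Mbf^{00}-\Mbf^{11})\Omega^2/2$ is integrable in time on almost every timelike slice. This uses the timelike integrability of $\Phi$ from \autoref{weakdefinitionT2-relaxed}, together with the boundedness of $\Omega$ from \autoref{lemma-sup-Omega} and of $\Kpar$ from \autoref{lem-apriori--twist}, where $|\Kpar|^2\Omega^2$ is bounded pointwise. I would test with $\theta(t)\varphi(x)$ and let $\theta$ approach an indicator of $[t_0,t]$, as was done for \eqref{eqVmonotonic-616}. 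This yields the integrated identity for the measure $d_x\ell(t)$ against $\varphi$. Since $\amdeux$ is absolutely continuous in time along almost every timelike slice, the ODE $(\log\amdeux)_t=G$ holds pointwise in $x$ almost everywhere, and the exponential formula follows. In measure language, $d_x\ell(t)=\beta(t)\,d_x\mathring\ell$, so $d_x\ell(t)$ is absolutely continuous with respect to $d_x\mathring\ell$.

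Finally, I would bound $\beta$ in $L^\infty$. The terms $\Ebf_0(\Kpar)\Omega^2|t|$ are handled by the pointwise bounds on $\Omega$ and $\Kpar$. The term $\tfrac12 q_\Jbb|\Jbb\cdot\Jbb|$ is controlled by \autoref{lem--34a}, since $|\Jbb\cdot\Jbb|$ appears in $\Mbf^{11}+\Ebf_0(\Kpar)$ with coefficient $(1-q_\Jbb)/2$, bounded below at large density, and $q_\Jbb\le1$.

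The main obstacle is the $\Ebf_0(\Jpar)$ term, because \autoref{lem--34a} gives no timelike control of $\Jpar$. I would try the pointwise inequality $J_2^2+J_3^2\le J_0^2-J_1^2$, but this needs timelike control of $J_0$, which is also missing. The fallback is to use $\Jpar=\hNumb_\Jbb\Jhatpar$ together with a maximum-principle bound on $\Jhatpar$, so that $|\Jpar|^2\lesssim\hNumb_\Jbb^2\lesssim|\Jbb\cdot\Jbb|^{q}$, which is sub-linear and is then dominated by the controlled $|\Jbb\cdot\Jbb|$ term. The weight $e^{\pm P}$ relating $\Gammaunpar$ to $\Jhatpar$ must also be bounded. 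If this closes only in the later sections, the bound on $\beta$ is conditional on those estimates, which is consistent with the forward reference in the statement.
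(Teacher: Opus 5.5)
Your argument is correct and is the paper's: integrate $\amdeux_t=\frac{t}{2}(\Mbf^{00}-\Mbf^{11})\Omega^2\amdeux$ along each constant-$x$ line. This gives an exponent with $t'$ (equivalently $\sgn(t)\,|t'|$), as in \eqref{equa-apriori-speed-formula}; that sign is what produces $0<\beta\le 1$ in the contracting case, so the $|t'|$ in the displayed statement must be read that way. Your concern about the $\Ebf_0(\Jpar)$ term is well founded, since \autoref{lem--34a} gives no timelike control of $\Jpar$, as the paper itself concedes. Your fallback via $\Jpar=\hNumb_\Jbb\Jhatpar$, the maximum principle for $\Gammaunpar$, and the bounds on $P,Q$ is exactly how the paper closes this in \autoref{lem--34par} (for tame flows). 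Qualitative boundedness of $\beta$ already follows from the assumed timelike $L^2$ integrability of $\Phi$, as you observed.
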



\subsection{Controlling Einstein--Euler flows from their initial data set}
\label{section=6-4}

Our analysis in the present section together with the results in \autoref{section=7}  allows us to 
uncover a minimal set of conditions to be assumed on the initial data sets in order to control all of the metric and fluid unknowns within the spacetime. In turn, this provides an important insight that motivates our definition of weak solutions in \autoref{section=4}. Observe that the flow need not be tame  (in contrast with \autoref{propos-priori}, derived below).

\begin{proposition}[A priori estimates for Einstein--Euler flows. I]
\label{theo--68}
Consider an Einstein--Euler flow with corrector $(\Phi,\Psi,\Pi)$ (cf.\ \autoref{weakdefinitionT2-deux2}) that assumes a prescribed initial data set $(\mathring\Phi,\mathring\Psi,\mathring\Pi)$ at time $t=t_0$ (cf.\ \autoref{def-initialdata}).
Then the following norms of $(\Phi,\Psi,\Pi)$,
\be
\aligned
& \sup_{t\in\Interval} \|\Phi\|_{L^2(\Tbb^3,\dVtrois_\suit)} , \quad
&& \sup_{x\in\Sbb^1} \|\Phi\|_{L^2(\Interval\times\Tbb^2,\dVundeux_\suit)} ,
\\
& \sup_{t\in\Interval}\Var(\Psi(t);\Tbb^3) , \quad
&& \sup_{x\in\Sbb^1} \Var(\Psi(x);\Interval\times\Tbb^2) , \quad
&& \sup_{\Interval\times\Tbb^3} |\Psi| ,
\\
& \sup_{t\in\Interval} \|\Pi\|_{\Meas(\Tbb^3)} , \quad
&& \sup_{x\in\Sbb^1} \|\Pi\|_{\Meas(\Interval\times\Tbb^2)} ,
\endaligned
\ee
are controlled in terms of $t_0,t_1,\Lmin$, and the norms of the \underline{initial data set} only: 
\bel{equa-648}
\|\mathring\Phi\|_{L^2(\Tbb^3,\dVtrois_\suit)} , \quad
\Var(\mathring\Psi(t);\Tbb^3) , \quad
\sup_{\Tbb^3} |\mathring\Psi| , \quad
\|\mathring\Pi\|_{\Meas(\Tbb^3)} .
\ee
\end{proposition}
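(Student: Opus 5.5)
The plan is to assemble the lemmas of this section in a definite order, so that every constant depends only on $t_0,t_1,\Lmin$ and the initial norms~\eqref{equa-648}. First, the initial energy $\mathring\Ebfscri=\mathring\Ebfscri_{\Phi\Psi}+\mathring\Ebfscri_\Pi$ is controlled by the initial norms. The term $\int\Mbf^{00}\mathring\Omega\,\dVtrois\mathring{\;}$ is bounded by $\sup\mathring\Omega$ times $\|\mathring\Phi\|^2_{L^2}$ plus $\|\mathring\Kpar\|_{L^\infty}^2\mathring\Vcal$, and this uses $\sup|\mathring\Psi|$ together with $q_\Jbb\le1$. The term $\mathringLbfscri/|t_0|$ is bounded by $\sup|\mathring\ell|$, and $\mathring\Ebfscri_\Pi$ by $\|\mathring\Pi\|_{\Meas}$. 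The initial volumes $\mathring\Vcal,\mathring\Ucal$ are controlled in the same way. \autoref{prop:energy} and \autoref{prop:corrector} then bound $\Ebfscri(t)$ for all $t\in\Interval$ by $\max(t/t_0,t_0/t)\,\mathring\Ebfscri$. \autoref{lemma-31} and the $\Ucal$ monotonicity control $\Vcal(t)$ from above. For a lower bound on $\Vcal$, I would use Cauchy--Schwarz $\Lbfscri^2\le\int|t|^{-2}\Omega^{-2}\dVtrois\cdot\Vcal$, or more simply the lapse bound: $\Lbfscri\ge\Lmin$ and $\Lbfscri\le\tfrac23|t|\Ebfscri$.

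Second come the pointwise bounds on $\Psi$. \autoref{lemma-sup-Omega} gives two-sided bounds on $\Omega$. On each slice, $\Omega\Lbfscri$ lies between $|t|^{-1}\Vcal e^{\mp\Ebfscri/2}$. To make this two-sided bound effective I need $\Vcal$ and $\Lbfscri$ bounded above and below. The bound $\Lbfscri\ge\Lmin$ is given. For $\Vcal$ from below, I would avoid circularity by integrating $\Omega\ge\Omega(t,x')e^{-\Ebfscri/2}$ against $d_x\ell$, which gives $\Omega\ge e^{-\Ebfscri/2}\Vcal/(|t|\Lbfscri)$. I would then bound $\Vcal$ below directly from~\eqref{eqVmonotonic-616} in the expanding case; in the contracting case I would use $\Vcal\ge|t|\Lbfscri\inf\Omega$ combined with the monotonicity in time of $\log\Omega$ from~\eqref{eq:evollambda-new-mod2a}. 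The latter gives $(\log\Omega)_t\ge -1/(4t)$ up to sign, so $\Omega$ stays bounded below on the compact interval. The variation of $\log\Omega$ on spacelike slices is then bounded by $\Ebfscri/2$. \autoref{lem-apriori--twist} gives $\sup|\Kpar|$, and its spatial variation follows from the constraints~\eqref{eq:221a}--\eqref{eq:221b} via Cauchy--Schwarz against the energy. \autoref{lem-speed} bounds $\ell$ pointwise once the timelike integral in $\beta$ is controlled.

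Third, and this is the main obstacle, come the timelike estimates. These must be closed in the right order. \autoref{lem--34a} controls $\Pbb,\Qbb,J_1,|\Jbb\cdot\Jbb|^{1/2}$ in $L^2$ on timelike slices once $\sup|\Omega\Kpar|$ is known, which the previous step supplies. What remains is $J_0$ and $\Jpar$. For these I would use $J_0^2=J_1^2+|\Jpar|^2+|\Jbb\cdot\Jbb|$, and I would need $L^2$ control of $\Jpar$ on timelike slices. I would first try to obtain it from the evolution equations~\eqref{eq:T2-9101112-evol-def-a}--\eqref{eq:T2-9101112-evol-def-b}: integrated in time at fixed $x$, they bound $\int J_1J_m\Omega\,dt$ by $\sup|\Kpar|$ plus quadratic terms $P_0K_2$ already controlled. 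This gives cross terms but not squares, so I expect to need instead the Euler flux $\Mbf^{01}$ in the second Euler equation, as done for~\eqref{equa311k}, applied to $\Mbf^{00}+\Mbf^{01}$ or to the particle current. If that fails, I would fall back on the maximum principle for $\Gammaunpar$ announced for \autoref{section=7}, noting that the statement invokes those results. The time-direction variations of $\Psi$ then follow from the evolution equations~\eqref{eq:T2-8-def}, \eqref{eq:evollambda-new-mod2a}, \eqref{eq:T2-9101112-evol-def-a}, whose right-hand sides are timelike-integrable, together with the timelike corrector bound of \autoref{prop:corrector}. That lemma also gives $\sup_x\|\Pi\|_{\Meas(\Interval\times\Tbb^2)}$ and $\sup_t\|\Pi\|_{\Meas(\Tbb^3)}$.
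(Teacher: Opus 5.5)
Your plan assembles the lemmas of \autoref{section=6} in the same order as the paper. However, the lower bound on $\Omega$ in the future-contracting regime, on which your second step rests, does not follow from your argument.

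By \eqref{eq:evollambda-new-mod2a}, $(\log\Omega)_t=-\frac{1}{4t}+\frac{t}{2}\Mbf^{11}\Omega^2+\frac{t}{2}\Pi^{11}\amdeux^{-1}$. For $t<0$ the corrector term is non-positive. So is the $\Mbf^{11}$ term whenever $\Mbf^{11}\geq0$, for instance for vacuum Gowdy data with large $\Ebf_0(\Pbb,\Qbb)$. The only pointwise one-sided information is therefore $(\log\Omega)_t\leq\frac{1}{4|t|}+\frac{|t|}{2}\Ebf_0(\Kpar)\Omega^2$, which bounds $\Omega$ from above, not from below. Your other candidates, $\Vcal\geq|t|\Lbfscri\inf\Omega$ and Cauchy--Schwarz against $\int|t|^{-2}\Omega^{-2}\,\dVtrois$, presuppose a bound on $\sup\Omega^{-1}$. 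In this regime \autoref{lemma-31} bounds $\Vcal$ only from above. Since $\sup|\log\Omega|$ and $\sup|\Kpar|$ (via \autoref{lem-apriori--twist}) both depend on this bound, your second step does not close for $t_0<0$.

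The circularity is broken by the spatial ratio bound $|\log\Omega(t,x)-\log\Omega(t,x')|\leq\Ebfscri(t)/2$, which needs no lower bound on $\Omega$ and gives $\sup\Omega^{-1}(t)\leq e^{\Ebfscri/2}\,\Lbfscri(t)^{-1}\int\Omega^{-1}\,d_x\ell\,dy\,dz$. Now use the case $m=-1$, $n=-1/4$ of \eqref{equa-Omega-n-m}, together with $|\Mbf^{00}-2\Mbf^{11}|\leq3\Mbf^{00}$ and the Volpert bound for $\Omega^{-1}\Pi^{00}$. It shows that $Y(t)=\int|t|^{-1/4}\Omega^{-1}\,d_x\ell\,dy\,dz$ obeys $|Y'|\lesssim\Ebfscri\,e^{\Ebfscri/2}\,\Lmin^{-1}\,Y$. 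Gronwall then bounds $\sup\Omega^{-1}$ in terms of $\sup\mathring\Omega^{-1}$, $\mathringLbfscri$, $\mathring\Ebfscri$, $\Lmin$, $t_0$, $t_1$.

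Alternatively, the ratio bound combined with \autoref{lem-apriori--twist} gives $\sup|\Omega\Kpar|$ without any lower bound on $\Omega$. You can then run \autoref{lem--34a} first and integrate \eqref{eq:evollambda-new-mod2a} in time at fixed $x$, using the timelike corrector bound of \autoref{prop:corrector}. This gives two-sided bounds on $\log\Omega$ from $\sup|\log\mathring\Omega|$ in both regimes.

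For $J_0$ and $\Jpar$ on timelike slices, your fallback is the paper's route. \autoref{lem--34par} combines the maximum principle of \autoref{proposition-JJJ} (hence bounded $\Jhatpar$) with $\hNumb^2\lesssim1+|\Jbb\cdot\Jbb|$ and the timelike control of $|\Jbb\cdot\Jbb|$ from \autoref{lem--34a}. Then $J_0$ follows from $J_0^2=J_1^2+|\Jpar|^2+|\Jbb\cdot\Jbb|$.

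Your first attempts cannot replace this step, because the fluxes through constant-$x$ slices that you consider do not see $|\Jpar|^2$. For $J_1=0$ one has $\Mbf^{01}=-\Ebf_1(\Pbb,\Qbb)$ and $\Mbf^{11}+\Ebf_0(\Kpar)=\Ebf_0(\Pbb,\Qbb)+\frac{1-q_\Jbb}{2}|\Jbb\cdot\Jbb|$. The fluid fluxes of $\vNumb$ and of the parallel momenta are proportional to $J_1$.

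Note also that the maximum principle requires the $H$-divergence law, i.e.\ tameness, and a bound on $\sup|\mathring\Gammaunpar|$, plus the bounds on $P,Q$ from \autoref{propo-reconstruct}. None of these appears among the hypotheses of the statement or the norms \eqref{equa-648}, and the statement itself does not invoke \autoref{section=7}. You should list them explicitly as additional inputs for the timelike bounds on $J_0$ and $\Jpar$. The same applies to the time variation of $\Kpar$, which involves $\int|J_1J_m|\,\Omega\,dt$.
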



\section{A priori estimates for tame Einstein--Euler flows} 
\label{section=7}

\subsection{Maximum principles for regular solutions}
\label{section=7-1} 

\subsubsection{Aim of this section}

In this section we establish further a priori bounds on tame Einstein--Euler flows, possibly with corrector.
We derive maximum principles ---for sufficiently regular solutions in this section, and for weak solutions in \autoref{section=7-2}. In addition, we establish a spacetime integrability property in \autoref{section=7-3}.


\subsubsection{Entropy-type structure for the twists}

At this stage, we assume that the solution under consideration is \emph{sufficiently regular}. However, the same maximum principle will be derived for weak solutions in \autoref{section=7-2}.
We now observe that, quite similarly to $\Jpar$, the twist functions satisfy linear transport equation with the fluid velocity, as follows. It is convenient to introduce the following notation\footnote{In fact, in the literature on \emph{vacuum} $\Tbb^2$ symmetric spacetimes,  
$\Gammazero_2$ and $\Gammazero_3$ are the so-called twist constants; however, for matter spacetimes $K_2,K_3$ are a convenient notation to obtain quadratic sources in the proposed \JKL\ formulation.}.

\begin{definition}
The \textbf{metric-weighted twist functions} $\Gammazero$ are defined by 
\bel{equa-Ktilde}
\Gammazero_2 \coloneqq |t|^{3/2}  e^{P/2} K_2, 
\qquad \quad
\Gammazero_3 \coloneqq |t|^{3/2}  \bigl( Qe^{P/2} K_2 + e^{-P/2} K_3 \bigr). 
\ee 
\end{definition}

The evolution equations for the twists~\eqref{eq:T2-9101112-evol-def-a}--\eqref{eq:T2-9101112-evol-def-b} then become
\bse
\label{eq:T2-9101112-evol-tilde-0} 
\bel{eq:T2-9101112-evol-tilde} 
\bigl(\Gammazero_2 \bigr)_t = |t|^{3/2}  e^{P/2}  J_2  J_1 \Omega,
\qquad\quad
\bigl( \Gammazero_3 \bigr)_t = |t|^{3/2} (Q e^{P/2} J_2 + e^{-P/2} J_3) J_1  \Omega
\ee 
and, similarly for the space derivatives in~\eqref{eq:221a}--\eqref{eq:221b},   
\bel{eq:221-cons}
( \Gammazero_2 )_x = |t|^{3/2}  e^{P/2} J_2 J_0  \, \Omega \amdeux, 
\qquad \quad 
( \Gammazero_3 )_x = |t|^{3/2} (Q e^{P/2} J_2 + e^{-P/2} J_3 ) J_0  \Omega \amdeux. 
\ee
\ese
We emphasize that  \emph{there is no factor} $\hNumb(\mu)^{-1}$ in the right-hand side of~\eqref{eq:T2-9101112-evol-tilde-0}, unlike the notation~\eqref{equa-jhat}. It is well-known that the twists are constant in the vacuum, but \emph{are not constant} in the presence of matter. Specifically, our notation is motivated by the proportionality between $\Gammaunpar$ and derivatives of~$\Gammazero$:
\be
(\Gammazero_m)_x = |t| \Omega \amdeux \Numb_0 \Gammaun_m, \quad m=2,3 .
\ee

By suitably combining~\eqref{eq:T2-9101112-evol-tilde}--\eqref{eq:221-cons}, we deduce that the twists $\Gammazero$ satisfy the same \emph{homogeneous transport equations} as the parallel momentum, namely 
\bel{equa-KDL3}
\aligned
\bigl( \Gammazero_m \bigr)_t -\amdeux^{-1}  {J_1\over J_0} \, \bigl( \Gammazero_m \bigr)_x 
 = 0,  \qquad m=2,3 
\endaligned
\ee
and in turn we find the \textbf{$H$-divergence equation for the twists} 
\bel{equa-mom-twists0}
\divuntrois \bigl( H\bigl( \Gammazero \bigr) \vNumb \bigr)
= 0
\quad \text{ for any function } H, 
\ee
valid for any sufficiently regular solution.  
This structure will be used below in order to derive a \emph{maximum principle} on $\Gammazero$ and, therefore, sup-norm estimates that are more precise than those derived in \autoref{lem-apriori--twist}.


\subsubsection{Controlling the parallel momentum (regular solurions)} 

\bse\label{equa-apriori-paral}

We present first a characteristic-based method, which \emph{does not} apply to weak solutions nor to relaxed flows. An alternative method presented next will apply to weak solutions but requires the notion of entropy. 
The functions $\widehat J_2$ and $\widehat J_3$ are given by integration along the characteristics $(t,\xi(t))$ defined by $\xi'(t)=a^2J_1/J_0$. For instance, we find 
\be
\abs{t}^{1/2}\widehat J_2(t,\xi(t)) = \abs{t_0}^{1/2}\widehat J_2(t_0,\xi(t_0)) \, \exp\biggl(\int_{t_0}^t \frac{\Pbb\cdot\Jperp}{2J_0} \Omega \biggl|_{(s,\xi(s))} ds\biggr). 
\ee
In principle, any $L^p$-type regularity could be investigated, but for simplicity in the presentation we assume that $\Jhatpar$ is bounded. The question of estimating $\Jhatpar$ boils down to bounding the integral of $\Omega\Pbb\cdot\Jperp/J_0$ along a characteristic curve.  We actually derive a stronger bound for the integral of the square $(\Omega\Pbb\cdot\Jperp/J_0)^2$.  This bound is obtained by observing that
\bel{Mchar}
\Mbf^{11} + \frac{2J_1}{J_0} \Mbf^{01} + \frac{J_1^2}{J_0^2} \Mbf^{00} \geq \biggl|\frac{\Pbb\cdot\Jperp}{J_0}\biggr|^2 .
\ee
By integrating a linear combination of the first two Euler equations on a region
$
\Bigl\{(t,x)\in[t_0,t_1]\times\Sbb^1\bigm| \xi(t)<x<x_0 \Bigr\}
$
for some fixed~$x_0$ we find indeed that~\eqref{Mchar} is integrable as desired.
Hence, we obtain a bound on the sup-norms of $\widehat J_2,\widehat J_3$ which depends on their initial sup-norms, as well as that of~$\log\Omega$, altogether with $\Ebfscri(t_0),\Lmin,t_0,t_1$.
\ese

We recall the relation~\eqref{equa-jhat} between parallel momenta with different normalizations, that is, $\Jpar=\hNumb_\Jbb \,\Jhatpar$.
As we observed below~\eqref{Sdef-scaled}, $\hNumb$ is bounded by $\mu^{1/2}\sim|\Jbb\cdot\Jbb|^{1/2}$ for large densities and is bounded (and in fact vanishes) as $\mu\to 0$. Our control of the time integrals of $|\Jbb\cdot\Jbb|$ established earlier, and of $1$ (trivial), imply a control of the time integral of~$\hNumb^2$.

\begin{lemma}[Pointwise estimate for the parallel momentum]
\label{lemma-JJJ} 
For sufficiently regular solutions,
the parallel momentum per particle $\Jhatpar$ enjoys the inequalities ($m=2,3$) 
\be 
\aligned
\underline C \, \inf_{\Tbb^3} \widehat J_m (t_0, \cdot)
\leq 
\inf_{\Interval \times \Tbb^3} \widehat J_m  
\leq   
\sup_{\Interval \times \Tbb^3} \widehat J_m  
\leq  
\overline C \, \sup_{\Tbb^3} \widehat J_m (t_0, \cdot), 
\endaligned
\ee  
in which the constants $\underline C$ and $\overline C$ depend upon norms that are controlled by the previous lemmas.
As a result the $L^\infty(\Sbb^1,L^2(\Interval\times\Tbb^2,\dVundeux))$ norm of $\Jpar$ is controlled by such norms.
\end{lemma}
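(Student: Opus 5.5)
The plan is to integrate the transport equations \eqref{equa-jhat-transp} along the fluid characteristics and to control the exponential growth factor by the timelike energy estimates of \autoref{section=6}.

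\textbf{Characteristics.} For a regular solution, define the characteristic curves $t\mapsto(t,\xi(t))$ by $\xi'(t)=-\amdeux^{-1}J_1/J_0$ (the sign is the one appearing in \eqref{equa-jhat-transp}). Since $|J_1|\le|J_0|$ and $\amdeux$ is bounded below, these curves exist on all of $\Interval$ and foliate $\Interval\times\Sbb^1$. Along each curve the equation for $\widehat J_2$ is a linear ODE, which gives
\[
|t|^{1/2}\widehat J_2(t,\xi(t))=|t_0|^{1/2}\widehat J_2(t_0,\xi(t_0))\,\exp\Bigl(\int_{t_0}^t \tfrac{\Omega\,\Pbb\cdot\Jperp}{2J_0}\,ds\Bigr).
\]
Because the factor is positive, the sign of $\widehat J_2$ is preserved. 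The one-sided bounds then follow with $\underline C,\overline C$ equal to the extreme values of the exponential factor times $(t_0/t)^{1/2}$. For $\widehat J_3$ the equation is triangular, with $\widehat J_2$ as a forcing term multiplied by $\Omega\,\Qbb\cdot\Jperp/J_0$. I will use Duhamel's formula and reuse the bound just obtained for $\widehat J_2$.

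\textbf{Main step: bounding the exponent.} The key is to control
\[
\int_{t_0}^{t_1}\Omega\Bigl|\tfrac{\Pbb\cdot\Jperp}{J_0}\Bigr|+\Omega\Bigl|\tfrac{\Qbb\cdot\Jperp}{J_0}\Bigr|\,ds
\]
uniformly over characteristics. By Cauchy--Schwarz in $s$, it suffices to bound the integral of $|t|\,\Omega^2\,(\Pbb\cdot\Jperp/J_0)^2$, up to the factor $\int\Omega^{-2}|t|^{-1}ds$, which \autoref{lemma-sup-Omega} controls. For this I will verify the algebraic inequality \eqref{Mchar}, together with its analogue for $\Qbb$. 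Writing $v=J_1/J_0$, the left-hand side equals $\frac12(P_0+vP_1)^2+\frac12(P_1+vP_0)^2$ plus the corresponding $\Qbb$ terms, plus nonnegative fluid terms. I will then observe that $\Pbb\cdot\Jperp/J_0=-(P_0-vP_1)$, and adjust the signs to match the direction of $\xi'$.

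Next I will take the combination $\Omega\Mbf^{1\bullet}+v\,\Omega\Mbf^{0\bullet}$, using the divergence laws \eqref{eq:T2-Euler-perp-def-b} and \eqref{eq:T2-Euler-perp-def-a} (the latter as an equality for regular solutions). I integrate it over the curvilinear region between $\xi$ and a vertical line $x=x_0$. The flux through the characteristic boundary produces exactly the quadratic form in \eqref{Mchar}. The remaining contributions are of three kinds: the vertical flux, which \autoref{lem--34a} bounds; the spacelike boundary terms, which the energy $\Ebfscri$ bounds; and the source $\Msource/t$ with $|\Msource|\le5\Mbf^{00}$, which the spacetime energy bounds. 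Choosing the sign of $v$ consistently with the orientation of the curve is the delicate point. I expect this flux computation, with the correct sign bookkeeping, to be the main obstacle.

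\textbf{Consequence for $\Jpar$.} With $\widehat J_m$ bounded, I write $|\Jpar|^2\le\hNumb_\Jbb^2\sup|\Jhatpar|^2$. By \eqref{equa-2311}, $\hNumb^2\lesssim1+|\Jbb\cdot\Jbb|$. The integral of $|\Jbb\cdot\Jbb|$ on timelike slices is bounded by \autoref{lem--34a}, whose remaining $\Kpar$ term is closed using Lemmas~\ref{lemma-sup-Omega} and~\ref{lem-apriori--twist}. This gives the stated $L^\infty(\Sbb^1,L^2(\Interval\times\Tbb^2,\dVundeux))$ control of $\Jpar$.
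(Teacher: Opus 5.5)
Your overall route is the paper's. You integrate \eqref{equa-jhat-transp} along $\xi'=-\amdeux^{-1}J_1/J_0$, reduce to a timelike $L^2$ bound on $\Pbb\cdot\Jperp/J_0$ and $\Qbb\cdot\Jperp/J_0$ along characteristics, get it from a flux identity on the region between $\xi$ and $x=x_0$, and finally bound $\Jpar=\hNumb_\Jbb\Jhatpar$ via $\hNumb^2\lesssim 1+|\Jbb\cdot\Jbb|$.

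However, the flux step as you set it up does not work. Since $v=J_1/J_0$ is a variable field, $\Omega\Mbf^{1\bullet}+v\,\Omega\Mbf^{0\bullet}$ is not a conserved current. Multiplying the $\Mbf^{0\bullet}$ law by $v$ produces, besides $v$ times that law, the bulk term $\Omega\bigl(\Mbf^{00}e_0(v)+\Mbf^{01}e_1(v)\bigr)$, which involves first derivatives of the fluid velocity. This term is missing from your list of contributions, and nothing in \autoref{section=6} controls it. It, not the sign bookkeeping, is the real obstruction to obtaining exactly \eqref{Mchar} as a boundary flux.

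The fix is to use constant coefficients, in fact the $\Mbf^{1\bullet}$ law alone, as in the proof of \autoref{lem--34a}. Its flux through $x=\xi(t)$ is $|t|\Omega^2(\Mbf^{11}+v\Mbf^{10})$, and one has the identity
\[
\Mbf^{11}+v\,\Mbf^{10}+\Ebf_0(\Kpar)
=\tfrac12(P_0-vP_1)^2+\tfrac12(Q_0-vQ_1)^2+\tfrac{1-v^2}{2}(P_1^2+Q_1^2)+\tfrac{1-q_\Jbb}{2}(-\Jbb\cdot\Jbb),
\]
where $P_0-vP_1=-\Pbb\cdot\Jperp/J_0$ and $Q_0-vQ_1=-\Qbb\cdot\Jperp/J_0$. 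Since $|v|\leq 1$, every term on the right is nonnegative. With this choice:
\begin{itemize}
\item there is no bulk source at all;
\item the spacelike boundary terms are bounded by the energy, because $|\Mbf^{10}|\leq\Mbf^{00}$;
\item the vertical flux $\int|t|\Omega^2\Mbf^{11}(t,x_0)\,dt$ is bounded above by \autoref{lem--34a}.
\end{itemize}

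Two further corrections are needed. First, since $\Mbf^{01}=-\Ebf_1(\Jperp,\Pbb,\Qbb)$, the squares are $(P_0-vP_1)^2$, not $(P_0+vP_1)^2$, so no sign adjustment is required. Second, the twist contributes a negative term, $-\Ebf_0(\Kpar)$ here and $-(1-v^2)\Ebf_0(\Kpar)$ in \eqref{Mchar} itself. Your ``nonnegative fluid terms'' claim misses it, so the bound holds only up to $\int|t|\Omega^2\Ebf_0(\Kpar)\,dt$, which you must absorb using \autoref{lem-apriori--twist} and \autoref{lemma-sup-Omega}.

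With this replacement the rest of your plan goes through, including the Duhamel step for $\widehat J_3$. Like the paper's argument, that step bounds $\sup|\widehat J_3|$ by the initial sup-norms of both $\widehat J_2$ and $\widehat J_3$. It does not give the literal one-sided inequality with $\widehat J_3(t_0)$ alone.
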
 


\subsubsection{Characteristics-based method (regular solutions)}

It is quite remarkable that the flows under consideration (possibly with weak regularity) enjoy a \emph{maximum principle,} as we now prove. Namely, we now establish a sup-norm bound enjoyed by the suitably metric-weighted parallel momentum and twists. Let us first present a method based on characteristics, which is insightful but will not carry over to weak solutions. We use the notation $\Gammaun_m$ given by~\eqref{equa-jhat} and~\eqref{equawidetildeJ}, that is, ${\Gammaun_2 \coloneqq e^{P/2}  \abs{t}^{1/2}\widehat J_2}$ and ${\Gammaun_3 \coloneqq Qe^{P/2}\abs{t}^{1/2}\widehat J_2+e^{-P/2}\abs{t}^{1/2}\widehat J_3}$, with $\widehat J_m = \bigl( \hNumb_\Jbb\bigr)^{-1} J_m$. We integrate the transport equations~\eqref{equa-jhat-transp-cons} along the characteristics defined for any base point $x_0 \in \Sbb^1$ as  
\bse\label{equa-apriori-paral-carac}
\be
t \mapsto (t,\xi(t; x_0)), \qquad 
\xi'(t; x_0) = (a^2J_1/J_0)\big|_{x=\xi(t; x_0)}, 
\qquad \xi(t_0; x_0) = x_0. 
\ee
We thus find  
\be 
\Gammaun_m(t,\xi(t; x_0)) = \Gammaun_m(t_0, x_0), \qquad m=2,3.   
\ee 
The functions $\Gammaun_m$ are simply transported from their initial values, so any $L^p$ norm could be considered. For simplicity in the presentation, we assume that $\Jhatpar$ is initially \emph{bounded}. We conclude with a pointwise bound on $\widehat J_2$ and $\widehat J_3$, stated now.  The same argument applies to the metric-weighted twists. At this stage, we have reached the following conclusion for \emph{sufficiently regular} solutions; next, we show that it remains valid for \emph{weak solutions.} 
\ese

\begin{lemma}[Maximum principle for the parallel momentum and the twists]
\label{proposition-JJJ}
The parallel momentum $\Gammaunpar$ and the twist functions $\Gammazero_m$ enjoy the \emph{maximum principle}
\bel{equa-widetildemax}
\aligned
\inf_{\Tbb^3} \Gammaun_m (t_0, \cdot)
\leq 
\inf_{\Interval \times \Tbb^3}  \Gammaun_m
\leq   
\sup_{\Interval \times \Tbb^3} \Gammaun_m
\leq  
\sup_{\Tbb^3} \Gammaun_m (t_0, \cdot),
\\
\inf_{\Tbb^3} \Gammazero_m (t_0, \cdot)
\leq 
\inf_{\Interval \times \Tbb^3}  \Gammazero_m
\leq   
\sup_{\Interval \times \Tbb^3} \Gammazero_m
\leq  
\sup_{\Tbb^3} \Gammazero_m (t_0, \cdot)
\endaligned
\ee
\end{lemma}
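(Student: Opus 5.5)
The lemma is stated for sufficiently regular solutions, so I will prove it by the characteristic method. The entropy-based version for weak solutions is deferred to \autoref{section=7-2}.

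The first step is to verify the homogeneous transport equations. For $\Gammaunpar$ this is \eqref{equa-jhat-transp-cons}. For $\Gammazero_m$ I will derive \eqref{equa-KDL3} directly. Multiply the constraint \eqref{eq:221-cons} by $\amdeux^{-1}J_1/J_0$ and subtract the result from the evolution equation \eqref{eq:T2-9101112-evol-tilde}. The right-hand sides are $|t|^{3/2}(\cdots)J_1\Omega$ and $\amdeux^{-1}(J_1/J_0)\,|t|^{3/2}(\cdots)J_0\Omega\amdeux$, which are identical, so they cancel and we obtain $(\Gammazero_m)_t-\amdeux^{-1}(J_1/J_0)(\Gammazero_m)_x=0$. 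This uses only \eqref{eq:definevar-01} and the reconstruction \eqref{reconstruct-P-Q} of $P,Q$. Note that wherever $J_0=0$ we are in vacuum with $\Jbb=0$, and there the equations give $(\Gammazero_m)_t=0$ and $(\Gammazero_m)_x=0$.

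The second step is to define characteristics. Away from vacuum the speed $c\coloneqq-\amdeux^{-1}J_1/J_0$ satisfies $|c|\leq\amdeux^{-1}$, by causality \eqref{eq:causal}. On $\Interval\times\Sbb^1$ the density $\amdeux$ is bounded above and below by \autoref{lem-speed}, so $c$ is bounded. I will set $c=0$ in vacuum regions, where the transport equation holds trivially. For regular solutions $c$ is Lipschitz in $x$, so the ODE $\xi'=c(t,\xi)$ with $\xi(t_0)=x_0$ has a unique global solution on $[t_0,t_1]$. The map $x_0\mapsto\xi(t;x_0)$ is a homeomorphism of $\Sbb^1$, since trajectories cannot cross and the flow is invertible.

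The third step is the conclusion. Along each characteristic, $\frac{d}{dt}\Gammaun_m(t,\xi(t))=(\del_t+c\,\del_x)\Gammaun_m=0$, so $\Gammaun_m(t,\xi(t;x_0))=\Gammaun_m(t_0,x_0)$, and likewise for $\Gammazero_m$. Since every point $(t,x)$ lies on a characteristic through some $x_0$, the range of $\Gammaun_m(t,\cdot)$ equals the range of $\Gammaun_m(t_0,\cdot)$, which gives \eqref{equa-widetildemax}.

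The main obstacle is the vacuum degeneracy. The quotient $J_1/J_0$ is undefined where $\Jbb=0$, and it may fail to be Lipschitz near the boundary of the vacuum, which threatens the uniqueness of characteristics. I would first handle this by approximation: replace $J_1/J_0$ by $J_1/(J_0-\delta)$, which is smooth and of modulus at most $1$. Then I would pass $\delta\to0$, using the fact that inside vacuum both $\Gammaun_m$ and $\Gammazero_m$ are time-independent. This is consistent with the $H$-divergence law \eqref{equa-mom-twists0}, because $\vNumb=0$ there. For $\Gammaun_m$ the value in vacuum is not intrinsically defined, so I would state the bound on the non-vacuum set only. Alternatively, I would defer the vacuum case entirely to the entropy method of \autoref{section=7-2}.
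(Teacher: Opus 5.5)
Your argument is correct and is essentially the paper's own characteristic-based proof. Both integrate the homogeneous transport equations for $\Gammaun_m$ and $\Gammazero_m$ along the curves $\xi' = -\amdeux^{-1} J_1/J_0$ and observe that the values are simply transported from $t=t_0$; you additionally spell out the derivation of the twist transport equation, the surjectivity of the characteristic map, and a regularization near vacuum, all of which the paper passes over. The paper then gives a second, entropy-based proof in \autoref{section=7-2}, testing the $H$-divergence law against Heaviside functions. That is exactly the weak-solution route you defer to, and, matching your caveat, it controls $\Gammaun_m$ only where $N_0>0$.
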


From \autoref{proposition-JJJ}, it follows that the momentum 
\bse
$\Jhatpar = \bigl( \hNumb_\Jbb\bigr)^{-1} \Jpar$ and the twists $\widehat K_m$ enjoy the uniform bounds
\be 
\aligned
\underline C \, \inf_{\Tbb^3} \widehat J_m (t_0, \cdot)
\leq 
\inf_{\Interval \times \Tbb^3} \widehat J_m  
\leq   
\sup_{\Interval \times \Tbb^3} \widehat J_m  
\leq  
\overline C \, \sup_{\Tbb^3} \widehat J_m (t_0, \cdot),
\\
\underline C \, \inf_{\Tbb^3}  K_m (t_0, \cdot)
\leq 
\inf_{\Interval \times \Tbb^3}  K_m  
\leq   
\sup_{\Interval \times \Tbb^3}  K_m  
\leq  
\overline C \, \sup_{\Tbb^3}  K_m (t_0, \cdot)
\endaligned
\ee
with 
\be
\underline C \coloneqq \Bigl|\frac{t}{t_0}\Bigr|^{\pm 1/2} {\inf_{\Tbb^3} e^{P(t_0, \cdot)/2} \over \sup_{\Interval \times \Tbb^3} e^{P/2}},
\qquad \quad 
\overline C \coloneqq \Bigl|\frac{t}{t_0}\Bigr|^{\pm 1/2} {\sup_{\Tbb^3} e^{P(t_0, \cdot)/2} \over \inf_{\Interval \times \Tbb^3} e^{P/2}}. 
\ee
Here, $P,Q$ are easily controlled from~\autoref{propo-reconstruct}, stated below. 
\ese
%


\subsection{Generalization to tame Einstein--Euler flows}
\label{section=7-2} 

\subsubsection{Entropy-based method (weak solutions)}

We now prove the pointwise estimates stated in \autoref{proposition-JJJ} via a second method which is conceptually more involved but, importantly, \emph{does apply to weak solutions}. Namely, for $\Gammaunpar$ we now rely on the $H$-divergence law~\eqref{equa-bvpara} which is included in our notion of tame solution. 
Interestingly, since $K_2, K_3$ are more regular than $\Gammaun_m$, the transport equations \eqref{equa-KDL3} and their consequence \eqref{equa-mom-twists0} remain valid for weak solutions: namely,  \eqref{equa-mom-twists0}  is straightforward to derive for tame flows thanks to the $\BVac$ regularity (in space and time) of~$\Kpar$. 

We are going to choose suitable entropy functions~$H$.
Integrating these balance laws between two hypersurfaces of constant areal times $t_0, t$  and using the divergence formula over the spacetime region $[t_0, t] \times \Tbb^3$, we obtain
\bel{equahdineqJ}
0 \leq 
\int_{\Tbb^3} H( \Gammaun_2, \Gammaun_3) \, S_0 \dVtrois\big|_{t}
\leq 
\int_{\Tbb^3} H( \Gammaun_2, \Gammaun_3) \, S_0 \dVtrois\big|_{t=t_0}
\ee
for any non-negative function $H$, where $S_0$ denotes the time component of $\Sbb=(S_0,S_1)$. Using $H = \Hcal(\xi_m -  \Gammaun_m )$, where $\Hcal\colon \RR \to [0,1]$ denotes the Heaviside function, and choosing $\xi_m$ to be the supremum of the initial data $\Gammaun_m$, we see that $\xi_m -  \Gammaun_m \leq$ at the initial time $t=t_0$, and therefore the left-hand side of~\eqref{equahdineqJ} must vanish. Since $N_0 >0$, this requires that  
\be
\sup_{\Interval \times \Tbb^3} \Gammaun_m
\leq  
\sup_{\Tbb^3} \Gammaun_m (t_0, \cdot). 
\ee
The same observation holds for the infimum, as well as for the twists $\Gammazero_m$, thanks to~\eqref{equa-mom-twists0}. We thus arrive at~\eqref{equa-widetildemax}, and this provides us with a second proof of~\autoref{proposition-JJJ}. 
 

\subsubsection{Timelike estimate for the parallel momentum} 

As we have checked (cf.~below~\eqref{Sdef-scaled}), $\hNumb(\mu)$ is bounded by $\mu^{1/2}\sim|\Jbb\cdot\Jbb|^{1/2}$ for large densities and is bounded (in fact, vanishes) as $\mu\to 0$. Our control of the time integrals of $|\Jbb\cdot\Jbb|$ ---established in~\autoref{lem--34a} (cf.~also~\eqref{eq-M11})--- and, trivially, of the constant function \(1\) imply a control of the time integral of~$\hNumb^2$. Since $\Jhatpar$ is bounded, we deduce a bound on the time integral of~$|\Jpar|^2$.
This completes our list of timelike $L^2$ estimates. The following statement, combined with our previous bounds, completes the control of the square-integrability of the functions $\Pbb, \Qbb, \Jperp, \Jpar$ on \emph{timelike slices}. It relies on the control of parallel momentum per particle in Lemma~\ref{lemma-JJJ} above.

\begin{lemma}[Parallel momentum on timelike slices]
\label{lem--34par}
The parallel momentum on timelike slices is controlled by 
\bel{equa311-geom}
\aligned
& \sup_{x \in\Sbb^1}  \int_{I \times \Tbb^2} |\Jpar|^2 \, \Omega \, \dVundeux
\leq {1 \over 4} |t_1^2-t_0^2| \sup_{I \times \Tbb^3}  |\Omega\Kpar|^2 + {1 \over \Lmin} \int_{t_0}^{t_1} \Ebfscri(t) dt+ \Ebfscri(t_0) + \Ebfscri(t_1), 
\endaligned
\ee 
in which $\Omega$ and $K$ were controlled in sup norm in Lemmas~\ref{lemma-sup-Omega} and \ref{lem-apriori--twist}.
\end{lemma}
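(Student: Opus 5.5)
The plan is to write $\Jpar = \hNumb_\Jbb\,\Jhatpar$ and bound the two factors separately. The factor $\Jhatpar$ will be controlled in sup norm by the maximum principle. The factor $\hNumb_\Jbb^2$ will be controlled in time-integrated form along each timelike slice by the estimates of \autoref{lem--34a}.

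\emph{Pointwise bound on $\Jhatpar$.} By \autoref{proposition-JJJ}, established for weak solutions through the $H$-divergence law in \autoref{section=7-2}, the metric-weighted parallel momentum $\Gammaunpar$ is bounded by its initial supremum. To pass from $\Gammaunpar$ to $\Jhatpar$, I will invert \eqref{equawidetildeJ-0}: $\widehat J_2 = |t|^{-1/2}e^{-P/2}\Gammaun_2$ and $\widehat J_3 = |t|^{-1/2}e^{P/2}(\Gammaun_3 - Q\,\Gammaun_2)$. Here $P$ and $Q$ are bounded pointwise after integrating \eqref{reconstruct-P-Q}, using the spacelike $L^2$ bounds on $\Pbb,\Qbb$ and the sup bound on $\Omega$ from \autoref{lemma-sup-Omega}. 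This gives $\sup|\Jhatpar|\leq C$, with $C$ depending on the data, $t_0$, $t_1$, and $\Lmin$. This is the bound stated after \autoref{proposition-JJJ}.

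\emph{Bound on $\hNumb^2$ by $\Mbf^{11}+\Ebf_0(\Kpar)$.} From \eqref{near-vacuum-Numb} and \eqref{equa-2311}, $\hNumb(\mu)^2\lesssim \mu^{q_{\min}}+\mu^{q_\infty^+}$ with $q_\infty^+<1$. It follows that $\hNumb^2 \lesssim 1 + |\Jbb\cdot\Jbb|$. By \eqref{eq-M11}, and because $1-q_\Jbb$ is bounded away from zero at large densities, $|\Jbb\cdot\Jbb|\lesssim 1+ \Mbf^{11}+\Ebf_0(\Kpar)$. Combining these,
\[
|\Jpar|^2 \lesssim C^2\bigl(1+\Mbf^{11}+\Ebf_0(\Kpar)\bigr).
\]
I then multiply by $\Omega\,\dVundeux = |t|\,\Omega^2\,dt\,dy\,dz$ and integrate over $\Interval\times\Tbb^2$. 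Three contributions arise. The constant contributes at most $|t_1^2-t_0^2|\sup\Omega^2$. The $\Mbf^{11}+\Ebf_0(\Kpar)$ part is bounded by \eqref{equa311} through the quantity $\Nbfscri_{\Phi\Psi}$. That quantity contains exactly the terms $\Lmin^{-1}\int\Ebfscri\,dt+\Ebfscri(t_0)+\Ebfscri(t_1)$. The remaining $\Kpar$ term becomes $\tfrac14|t_1^2-t_0^2|\sup|\Omega\Kpar|^2$. Taking the supremum over $x$ then yields \eqref{equa311-geom}, up to absorbing the data-dependent constant $C^2$.

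\emph{Main obstacle.} The difficulty is justifying these steps at weak regularity, in particular using the pointwise bound on $\Jhatpar$ almost everywhere on timelike slices. $\Jpar$ is only in $L^2$, so the maximum principle has to be applied in its entropy form. That form yields bounds valid almost everywhere in spacetime, and I will then use Fubini to restrict them to almost every $x$. A second concern is that the constants in \eqref{equa311-geom} come out clean only if $\hNumb$ is normalized so that $\hNumb^2|\Jhatpar|^2$ is dominated by the enthalpy term. If the constants do not match exactly, I would state the bound up to a multiplicative constant depending on $\sup|\Jhatpar|$.
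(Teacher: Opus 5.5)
Your proposal is correct and follows essentially the paper's own argument. You write $\Jpar=\hNumb_{\Jbb}\Jhatpar$, bound $\Jhatpar$ in sup norm via the maximum principle of \autoref{proposition-JJJ} (in its weak, $H$-divergence form), and control the timelike integral of $\hNumb^2\lesssim 1+|\Jbb\cdot\Jbb|$ through \eqref{eq-M11} and \autoref{lem--34a}. As you noticed, this route, like the paper's, only gives \eqref{equa311-geom} up to a multiplicative constant depending on $\sup|\Jhatpar|$ and the equation of state, plus an extra term of order $|t_1^2-t_0^2|\sup\Omega^2$ coming from the constant $1$; the stated inequality should be read in that sense.
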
 


\subsubsection{Recovering the metric} 

There remains to supplement~\autoref{proposition-JJJ} with the reconstruction of the functions $P$ and $Q$ from the expression of the spacetime metric~\eqref{metric:areal}. From~\eqref{eq:definevar-01}, we have $P_t = P_0 \Omega$ and $P_x = P_1  \Omega \amdeux$, and similarly for $Q$ and, therefore,  
\bel{equa-formulPQ}
\aligned
P(t,x) & = \int_0^x P_1  \Omega \amdeux\big|_{(t_0, y)} dy + \int_{t_0}^t P_0 \Omega \big|_{(s, x)}  ds, 
\\
Q(t,x) & = \int_0^x e^{-P} Q_1  \Omega \amdeux\big|_{(t_0, y)}  dx + \int_{t_0}^t e^{-P} Q_0 \Omega\big|_{(s, x)}  dt, 
\endaligned
\ee
in which we chose the \emph{normalization} $P(t_0,0) = Q(t_0,0) = 0$. There is no loss of generality, since $P$ and $Q$ are defined up to irrelevant constants. The following result is then immediate. 
 
\begin{lemma}[Reconstruction of the spacetime metric]
\label{propo-reconstruct}
The metric coefficients $P$ and $Q$ are bounded pointwise over $\Interval \times \Tbb^3$, as follows: 
\be 
|P(t,x)| \leq  \Ebfscri(t)^{1/2} \Lbfscri(t)^{1/2}, 
\qquad \quad 
|Q(t,x)| \leq \bigl( \sup_{\Tbb^3} e^{-P(t, \cdot)} \bigr) \Ebfscri(t)^{1/2} \Lbfscri(t)^{1/2}, 
\ee
in which, in the second inequality, the factor $ \sup e^{-P}$ is controlled thanks to the first inequality, so that, in  particular, $
P, Q\in L^\infty(\Interval \times \Tbb^3). 
$
\end{lemma}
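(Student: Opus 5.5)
The plan is to integrate the reconstruction identities~\eqref{reconstruct-P-Q} and control each integral by Cauchy--Schwarz against the energy and conformal length functionals. A point on the initial slice is joined to a general point $(t,x)$ along a path. The cleanest route is to bound $P$ slice by slice. On a fixed slice $t$, the periodicity constraint~\eqref{equa-pericons} makes $P(t,\cdot)$ periodic, so its oscillation is controlled by its spatial variation. We then need one reference value, and for this I would use the time integral of $P_0\Omega$ at the base point.

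\textbf{Spatial variation.} On each slice the plan is to write
\[
\Var(P(t,\cdot);\Sbb^1)=\int_{\Sbb^1}|P_1|\,\Omega\,d_x\ell
\leq\Bigl(\int_{\Sbb^1}P_1^2\,\Omega^2\,d_x\ell\Bigr)^{1/2}\Bigl(\int_{\Sbb^1}d_x\ell\Bigr)^{1/2}.
\]
Here $\int_{\Sbb^1}P_1^2\Omega^2\,d_x\ell\leq 2|t|^{-1}\int_{\Tbb^3}\Mbf^{00}\Omega\,\dVtrois\lesssim \Ebfscri(t)$, after normalising the $\Tbb^2$ factors, and the second factor equals $\Lbfscri(t)^{1/2}$. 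This gives the stated bound $\Ebfscri^{1/2}\Lbfscri^{1/2}$ on the oscillation of $P$ on the slice, up to harmless constants which I would absorb into the normalisation of $\Ebfscri$.

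\textbf{Fixing the level.} Periodicity together with the vanishing mean in~\eqref{equa-pericons} (weighted by $\dVtrois$) forces $P(t,\cdot)$ to change sign, or else to have mean zero with respect to a positive measure. This makes $\sup|P(t,\cdot)|\leq\Var(P(t,\cdot))$ once the additive constant is chosen consistently. If the normalisation $P(t_0,0)=0$ is used instead, I would add the time integral $\int_{t_0}^t|P_0|\Omega\,ds$ at a well-chosen point $x$. That point would come from~\eqref{equa-onepoint}, which controls the timelike $L^2$ norm of $P_0$ with weight $|t|\Omega^2$; a second Cauchy--Schwarz in time then handles it, and Lemma~\ref{lem--34a} upgrades this to all $x$.

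\textbf{Bounding $Q$.} The same argument applies to $Q$, using $Q_x=e^{-P}Q_1\Omega\amdeux$. The only change is that $e^{-P}$ is pulled out in sup norm, which is already finite by the bound on $P$; this yields the stated inequality for $Q$.

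\textbf{Main obstacle.} The delicate step is justifying that the additive constant is controlled, that is, the reference value. The constraint~\eqref{equa-pericons} involves the weight $e^{-P}$ for $Q$ and the measure $\dVtrois$, so I must check that zero weighted mean still implies that $P$ vanishes somewhere, using continuity, since $P$ is $\BVac$ in $x$. At the weak-regularity level, I would justify all integrations using the absolute continuity of $P$ on a.e. slice, which follows from $P_1\Omega\amdeux\in L^1$.
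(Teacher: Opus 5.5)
Your first route has a genuine gap, because it misreads \eqref{equa-pericons}. Since $P_1=\Omega^{-1}\amdeux^{-1}P_x$ and $\dVtrois=|t|\,\Omega\,\amdeux\,dx\,dy\,dz$, one has $\int_{\Tbb^3}P_1\,\dVtrois=|t|\int_{\Sbb^1}P_x\,dx$. Likewise $\int_{\Tbb^3}e^{-P}Q_1\,\dVtrois=|t|\int_{\Sbb^1}Q_x\,dx$. The constraint is therefore a zero-mean condition on the \emph{derivative}. It is exactly the periodicity of $P(t,\cdot)$ and $Q(t,\cdot)$, which you correctly note, and nothing more.

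It is invariant under $P\mapsto P+c$, so it cannot force $P(t,\cdot)$ to vanish or change sign, and $\sup|P(t,\cdot)|\leq\Var(P(t,\cdot))$ does not follow. You also cannot choose the additive constant slice by slice. $P$ is a single spacetime function with $P_t=P_0\Omega$, so once one constant is fixed (e.g.\ $P(t_0,0)=0$), the level of $P$ on the slice $t$ is dictated by $\int_{t_0}^tP_0\Omega\,ds$. No quantity evaluated at the single time $t$ controls that integral. A spatially homogeneous vacuum solution with $P=a\log|t|$ shows this: there $\Ebfscri(t)^{1/2}\Lbfscri(t)^{1/2}\to 0$ as $t\to+\infty$ while $|P|$ grows.

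What your slice-wise Cauchy--Schwarz argument actually proves is an oscillation bound, $\sup_xP(t,\cdot)-\inf_xP(t,\cdot)\leq(2\Ebfscri(t)/|t|)^{1/2}\Lbfscri(t)^{1/2}$. The ``main obstacle'' you single out is not a delicate point to be checked; the property you hope to verify is false.

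Your fallback is correct and is essentially the paper's argument. With $P(t_0,0)=Q(t_0,0)=0$, the paper writes the representation \eqref{equa-formulPQ} as the sum of two pieces:
\begin{itemize}
\item a spatial integral of $P_1\Omega\amdeux$ on the initial slice, bounded by Cauchy--Schwarz in terms of $\mathring\Ebfscri$ and $\mathringLbfscri$;
\item a time integral of $P_0\Omega$ at fixed $x$, bounded by $\bigl(\int|s|\,\Omega^2P_0^2\,ds\bigr)^{1/2}\bigl(\int ds/|s|\bigr)^{1/2}$.
\end{itemize}
The second piece is controlled at every $x$ by \autoref{lem--34a} together with the sup bounds of \autoref{lemma-sup-Omega} and \autoref{lem-apriori--twist}, so the detour through a well-chosen point is optional. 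For $Q$ one pulls out $\sup e^{-P}$.

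This gives $P,Q\in L^\infty(\Interval\times\Tbb^3)$ with constants depending on the initial data, $t_0$, $t_1$ and $\Lmin$, which is the real content of the lemma. The single-time expression $\Ebfscri(t)^{1/2}\Lbfscri(t)^{1/2}$ can only account for the spatial oscillation. Discard the first route and present the fallback as the proof.
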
  


\subsection{Additional fluid-geometric integrability}
\label{section=7-3} 

\subsubsection{Monotonicity}

Here we derive additional integrability properties enjoyed by tame flows \emph{without corrector}.
In particular $\Omega$ is $\BVac$ in this section.
Recall that we consider solutions defined on an interval $\Interval = (t_0, t_1)$ whose closure does not contain~$0$. Recall also the sup-norms of $\log\Omega$ and~$\Kpar$ over this spacetime domain $\Interval\times\Tbb^3$, and the $L^2$ bounds (in space, and in time) of $(\Jbb,\Pbb,\Qbb)$ are controlled (cf.~\autoref{section=6}) in terms of initial data.
As pointed out earlier, the following arguments apply to weak solutions by selecting a suitable test-function in the weak formulation of the equations under consideration below. 

We work with the Euler equation 
\be
\Bigl( |t| \, \Mbf^{00}(\Jbb, \Kpar, \Pbb, \Qbb) \, \Omega^2 \amdeux \Bigr)_t 
\leq - \Bigl( |t| \, \Mbf^{01}(\Jbb, \Kpar, \Pbb, \Qbb) \, \Omega^2  \Bigr)_x 
 - \frac{1}{2} \sgn(t) \Msource(\Jbb, \Kpar, \Pbb, \Qbb) \, \Omega^2 \amdeux, 
\ee
To avoid clutter we work solely in the future-contracting regime $t<0$.  The future-expanding regime is treated similarly by multiplying the Euler equation by the opposite power of~$\Omega$.
We compute the following combination
\be
\aligned
& \frac{d}{dt}\biggl( |t| \, \int_{\Sbb^1} |t|^{1/4}\Omega \amdeux \, \Mbf^{00} \, \Omega^2 \, dx\biggr)
\\
& = \int_{\Sbb^1} \bigl( |t|^{1/4}\Omega \bigr)_t \,|t| \, \Omega^2 \amdeux \Mbf^{00} \, dx
+ \int_{\Sbb^1} \bigl(|t| \, \Omega^2 \amdeux \Mbf^{00}\bigr)_t |t|^{1/4}\Omega \, dx
\\
& \leq - \frac{t^2}{2} \int_{\Sbb^1} \Mbf^{11} f|t|^{1/4}\Omega \amdeux
 \Omega^4 \Mbf^{00} \, dx
+ \int_{\Sbb^1} \Bigl(\bigl(t \, \Omega^2 \Mbf^{01}\bigr)_x + \frac{1}{2}\Omega^2 \amdeux \Msource\Bigr) |t|^{1/4}\Omega \, dx.
\endaligned
\ee
The next step is to integrate by parts in~$x$ and produce $\bigl( |t|^{1/4}\Omega \bigr)_x$; 
recalling the constraint equation $(\log\Omega)_x= - (t/2)\Omega^2\amdeux \Mbf^{10}$ yields
\bel{equa-sans-beta}
\frac{d}{dt}\biggl( |t| \, \int_{\Sbb^1} |t|^{1/4}\Omega \amdeux \, \Mbf^{00} \, \Omega^2 \, dx\biggr)
\leq \frac{-|t|^{5/4}}{2} \int_{\Tbb^3} \biggl(\Mbf^{00}\, \Mbf^{11} - \Mbf^{01}\, \Mbf^{10} - \frac{\Msource}{t^2\Omega^2}\biggr) \Omega^4 \dVtrois. 
\ee
It is useful to also include a factor $|t|^{\beta}$ and rewrite \eqref{equa-sans-beta} more generally as 
\bel{d2Vgen}
\aligned
& \frac{d}{dt} \Bigg(
\frac{|t|^{\beta+1/4}}{2} \int_{\Tbb^3} \biggl( \Mbf^{00} + \frac{2\beta}{\Omega^2|t|^2} \biggr) \Omega^2 \dVtrois \Bigg)
\\
& \leq \frac{-|t|^{\beta+5/4}}{4} \int_{\Tbb^3} \biggl(
\det(M) + \frac{4\beta \Mbf^{00} - \Msource}{t^2\Omega^2}
+ \frac{4\beta(\beta-1)}{|t|^4\Omega^4} \biggr) \Omega^4 \dVtrois.
\endaligned
\ee
Since $|\Msource| \leq 5 \, \Mbf^{00}$, the remaining terms can be made positive by choosing $\beta$ large enough; specifically, we take $\beta\geq 5/4$. We study $\det(M)$ next. 


\subsubsection{Positivity}

We claim that $\det(M)=\Mbf^{00}\Mbf^{11}- \Mbf^{01}\Mbf^{10}$ is positive, up to terms that are controlled by the energy (hence that can be absorbed by choosing $\beta$ large enough). Using the expressions in~\eqref{eq:T2-Mdef-0}  we find that 
\bse
\be
\det(M) = \bigl( \Ebf_0(\Jperp, \Pbb, \Qbb) \bigr)^2
- \Bigl( \Ebf_0(\Jpar, \Kpar) - {q_\Jbb \over 2} \Jbb \cdot \Jbb \Bigr)^2
- \bigl( \Ebf_1(\Pbb) + \Ebf_1(\Qbb) + \Ebf_1(\Jperp) \bigr)^2.
\ee
We will now check that this expression is non-negative when $\Kpar$ vanishes, but for sufficiently large $\Kpar$ this expression changes sign, so we write 
\be
\aligned
& \det (M) \geq D(\Pbb, \Qbb, \Jbb) - \bigl( \Ebf_0(\Kpar) + 2 \, \Ebf_0(\Jpar)  - q_\Jbb \, \Jbb \cdot \Jbb \bigr) \, \Ebf_0(\Kpar),
\\
& D = D(\Pbb, \Qbb, \Jbb) \coloneqq \bigl( \Ebf_0(\Jperp, \Pbb, \Qbb) \bigr)^2 
- \Bigl( \Ebf_0(\Jpar)  - {q_\Jbb \over 2} \Jbb \cdot \Jbb \Bigr)^2 
- \bigl( \Ebf_1(\Pbb) + \Ebf_1(\Qbb) + \Ebf_1(\Jperp) \bigr)^2. 
\endaligned
\ee
We observe that 
\be
\bigl(\Ebf_0(\Kpar) + 2 \, \Ebf_0(\Jpar) + q_\Jbb \, (-\Jbb \cdot \Jbb) \bigr)\, \Ebf_0(\Kpar) \lesssim \Mbf^{00}\sup|\Kpar|^2,
\ee
hence this term can be absorbed by taking $\beta$ large enough in~\eqref{d2Vgen} (actually $\beta\geq 5/4+\sup|t \, \Omega\Kpar|^2/4$).

We then rewrite $D$ as a sum of manifestly positive terms as follows. We combine the first and last term in $D$ into the product of $\Ebf_0(\Jperp, \Pbb, \Qbb) \pm (\Ebf_1(\Pbb)+\Ebf_1(\Qbb)+\Ebf_1(\Jperp))$, which are sums of squares, then expand the product and recognize the quadratic forms $(P_0\pm P_1)(Q_0\mp Q_1)=- \Pbb\cdot\Qbb\mp\Pbb\wedge\Qbb$ and analogues. Terms that only depend on momentum are treated last:
\be
\aligned
4 D
& = \Bigl( (P_0 + P_1)^2 + (Q_0 + Q_1)^2 + (J_0 + J_1)^2  \Bigr)
\Bigl( (P_0 - P_1)^2 + (Q_0 - Q_1)^2 + (J_0 - J_1)^2  \Bigr)
\\
&\quad 
- \bigl( 2 \, \Ebf_0(\Jpar) - q \, \Jbb \cdot \Jbb \bigr)^2
\\
& = (\Pbb\cdot\Pbb)^2 + (\Pbb\cdot\Qbb+\Pbb\wedge\Qbb)^2 + (\Pbb\cdot\Qbb- \Pbb\wedge\Qbb)^2 + (\Qbb\cdot\Qbb)^2 + (\Pbb\cdot\Jperp+\Pbb\wedge\Jperp)^2 
\\
& \quad+ (\Pbb\cdot\Jperp- \Pbb\wedge\Jperp)^2 + (\Qbb\cdot\Jperp+\Qbb\wedge\Jperp)^2 + (\Qbb\cdot\Jperp- \Qbb\wedge\Jperp)^2 + G(\Jbb),
\endaligned
\ee
in which $G=G(\Jbb) $ satisfies 
\bel{equa--l295}
\aligned
G & \coloneqq 
(J_0^2 - J_1^2)^2 - \big( 2 \, \Ebf_0(\Jpar)  - q_\Jbb (\Jbb \cdot \Jbb) \bigr)^2 
\\
& = \bigl( J_0^2 - J_1^2 - J_2^2 - J_3^2 + q_\Jbb (\Jbb \cdot \Jbb) \bigr)
\bigl( J_0^2 - J_1^2 + J_2^2 + J_3^2  - q_\Jbb ( \Jbb \cdot \Jbb) \bigr)
\\
& = (1-q_\Jbb) |\Jbb\cdot\Jbb| \bigl(2 \, \Ebf_0(\Jpar) + (1-q_\Jbb)|\Jbb\cdot\Jbb| \bigr). 
\endaligned
\ee
Thus, all terms are non-negative under the assumption \(q_\Jbb \leq 1\). On the other hand, the control of \( |\Jbb\cdot\Jbb| \, \Ebf_0(\Jpar)\) and \( |\Jbb\cdot\Jbb|^2\) degenerates when \(1-q_\Jbb\) vanishes. Observe that the case \(q_\Jbb\equiv 1\) on an open interval of values of \(\mu\) is excluded, since it corresponds to vanishing pressure (a model known to exhibit a phenomenon of formation of Dirac masses).
\ese
%


\subsubsection{Conclusion.}

We observe that $\Jpar$ is the product of a bounded function by $\hNumb$, which is bounded by the sup of $1$ and $|\Jbb\cdot\Jbb|^{1/2}$. Thus, we have 
\be
(1-q_\Jbb)|\Jpar|^4 \lesssim (1-q_\Jbb)\hNumb^2 \Ebf_0(\Jpar) \lesssim \max\Bigl(\Ebf_0(\Jpar),(1-q_\Jbb)|\Jbb\cdot\Jbb|\Ebf_0(\Jpar)\Bigr),
\ee
so we learn that $(1-q_\Jbb)^{1/4}\Jpar\in L^4(\Interval \times \Tbb^3, \dVuntrois)$. Whenever~\eqref{hyperbolic-eos} holds true, 
we can suppress the factor $(1-q_\Jbb)$ arising in our expressions. 
 We also observe that the bound on $ |\Jbb\cdot\Jbb| \, \Ebf_0(\Jpar)$ can then be deduced from the ones on $|\Jbb\cdot\Jbb|$ and $\Jpar$. We have thus established one of the main observations made in the present paper. We summarize our conclusion as follows. 

\begin{lemma}[Spacetime integrability]
\label{theo:te-Euler-explicit}
Suppose that the pressure law satisfies the conditions~\eqref{hyperbolic-eos}, and consider any tame Einstein-Euler flow (without corrector).  Then, the following spacetime integral over $\Interval\times\Tbb^3=[t_0, t_1) \times \Tbb^3$ 
\be
\aligned
& \iint_{\Interval \times\Tbb^3} \Bigl( (\Pbb\cdot\Jperp)^2 + (\Qbb\cdot\Jperp)^2 + (\Pbb\wedge\Jperp)^2 + (\Qbb\wedge\Jperp)^2 + (\Pbb\cdot\Pbb)^2 + (\Qbb\cdot\Pbb)^2 
\\
& \qquad\qquad 
+ (\Qbb\cdot\Qbb)^2 + (\Qbb\wedge\Pbb)^2 + |\Jbb\cdot\Jbb|^2 
+ |\Jpar|^4 \Bigr) \, \dVuntrois
\\
& \leq C_0\bigl(I, \sup \log \Omega, \sup|\Kpar|, \Ebfscri(t_0) \bigr) 
\endaligned
\ee
is bounded above by a function $C_0>0$ depending \underline{only upon the initial data set}, 
namely depending upon the sup-norm (over $\Interval \times\Tbb^3$) of $\log \Omega$ and $\Kpar$ and of the initial energy $\Ebfscri(t_0)$. (Recall that the spacetime form is $\dVuntrois=\Omega dt\dVtrois = |t| \, \Omega^2 d_x\ell dydz$
and the energy $\Ebfscri(t_0) $ was defined in \eqref{equa--energy}). 
\end{lemma}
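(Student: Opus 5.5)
The plan is to integrate the differential inequality \eqref{d2Vgen} in time over $[t_0,t_1)$, working in the future-contracting regime; the expanding case uses the opposite power of $\Omega$ in the same way. Fix $\beta \geq 5/4 + \sup|t\,\Omega\Kpar|^2/4$, which is finite by Lemmas~\ref{lemma-sup-Omega} and~\ref{lem-apriori--twist}. Integrating from $t_0$ to $t_1$ bounds the spacetime integral of
\[
\det(M) + \frac{4\beta\Mbf^{00}-\Msource}{t^2\Omega^2} + \frac{4\beta(\beta-1)}{|t|^4\Omega^4},
\]
weighted by $|t|^{\beta+5/4}\Omega^4\dVtrois$. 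The bound is the initial value of the functional $\frac{|t|^{\beta+1/4}}{2}\int(\Mbf^{00}+2\beta\Omega^{-2}t^{-2})\Omega^2\dVtrois$, which is controlled by $\sup\Omega$ and $\Ebfscri(t_0)$ through \eqref{equa--energy}. The terminal value of this functional is nonnegative and can simply be dropped. For weak solutions I would carry this out with a test function $\theta(t)$ and the factor $|t|^{1/4}\Omega$ (a $\BVac$ multiplier, since there is no corrector), as in \autoref{section=6-1}. The integration by parts in $x$ is justified by the constraint \eqref{eq:T2-567-def}.

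Next I would establish lower bounds for the integrand. Since $|\Msource|\leq 5\Mbf^{00}$ and $\beta\geq 5/4$, the middle term is nonnegative, and the last term is nonnegative since $\beta\geq1$. For $\det(M)$, the decomposition above gives
\[
\det(M) \geq \tfrac14\bigl(\text{sum of the eight squared null forms}\bigr) + \tfrac14 G(\Jbb) - C\,\Mbf^{00}\sup|\Kpar|^2,
\]
with $G$ as in \eqref{equa--l295}. The negative term is absorbed by the $\beta$-term $4\beta\Mbf^{00}/(t^2\Omega^2)$, because of the choice of $\beta$. Since $|t|\in[t_{\min},t_{\max}]$ and $\Omega$ is bounded above and below, the weight $|t|^{\beta+5/4}\Omega^4\dVtrois$ is comparable to $\dVuntrois$ with constants depending on $I$ and $\sup|\log\Omega|$. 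This yields spacetime $L^1$ control of $(\Pbb\cdot\Jperp)^2$, $(\Pbb\wedge\Jperp)^2$ and the other null forms, and of $G$.

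The main obstacle is extracting $|\Jbb\cdot\Jbb|^2$ and $|\Jpar|^4$ from $G=(1-q_\Jbb)|\Jbb\cdot\Jbb|\bigl(2\Ebf_0(\Jpar)+(1-q_\Jbb)|\Jbb\cdot\Jbb|\bigr)$, since the factor $1-q_\Jbb$ degenerates at vacuum for polytropic laws ($q(0)=1$). The plan is to split into large and small densities.
\begin{itemize}
\item For large $|\Jbb\cdot\Jbb|$, \eqref{equa-qminqmax-2} gives $1-q_\Jbb\geq 1-q_\infty^+>0$, so $G\gtrsim|\Jbb\cdot\Jbb|^2+|\Jbb\cdot\Jbb|\,|\Jpar|^2$.
\item For bounded $|\Jbb\cdot\Jbb|$, I would simply bound $|\Jbb\cdot\Jbb|^2\lesssim|\Jbb\cdot\Jbb|$, which is controlled in $L^1$ by the energy.
\end{itemize}
For $|\Jpar|^4$, I would write $\Jpar=\hNumb_\Jbb\Jhatpar$ with $\Jhatpar$ bounded by the maximum principle (\autoref{proposition-JJJ} together with \autoref{propo-reconstruct}). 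Then $|\Jpar|^4\lesssim\hNumb^2|\Jpar|^2$ and $\hNumb^2\lesssim\max(1,|\Jbb\cdot\Jbb|)$ by \eqref{equa-2311}. So $|\Jpar|^4$ is bounded by $|\Jpar|^2$, which is controlled by the energy, plus $|\Jbb\cdot\Jbb||\Jpar|^2$ on the large-density region, which is controlled by $G$.

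Summing these estimates gives the constant $C_0$. It depends only on $I$, $\sup\log\Omega$, $\sup|\Kpar|$, $\Ebfscri(t_0)$, and, through the maximum principle, on the initial sup-norm of $\Gammaunpar$.
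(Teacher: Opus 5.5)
Your proposal is correct and follows the paper's own argument: integrate \eqref{d2Vgen} with $\beta\geq 5/4+\sup|t\,\Omega\Kpar|^2/4$, use the sum-of-squares decomposition of $\det(M)$ with the twist term absorbed by the $\beta$-term, and control $|\Jpar|^4$ through the boundedness of $\Jhatpar$ together with $\hNumb_\Jbb^2\lesssim\max(1,|\Jbb\cdot\Jbb|)$. Your large/small-density split, based on $q_\Jbb\leq q_\infty^+$ at large densities, makes precise the paper's brief remark that the degenerate factor $(1-q_\Jbb)$ may simply be suppressed, and you are right that the constant also depends on the initial sup-norm of $\Gammaunpar$.
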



\subsection{Controlling tame Einstein--Euler flows from their initial data set}
\label{section=7-4}

In conclusion, the following statement complements \autoref{theo--68}, but now requires the flow to be \emph{tame}.

\begin{proposition}[A priori estimates for Einstein--Euler flows. II]
\label{propos-priori}
Assume that the pressure law satisfies the conditions~\eqref{hyperbolic-eos}. Consider a tame Einstein--Euler flow with corrector, defined on a time interval $[t_0, t_1)$ and assuming a prescribed initial data set at areal time $t_0$ (cf.~\autoref{def-initialdata}). Then it enjoys the following bounds. 

\bei 

\item {\bf Maximum principle for the parallel momentum:}
\bel{equa-widetildemax-rep1}
\aligned
\inf_{\Tbb^3} \Gammaun_m (t_0, \cdot)
\leq 
\inf_{\Interval \times \Tbb^3}  \Gammaun_m
\leq   
\sup_{\Interval \times \Tbb^3} \Gammaun_m
\leq  
\sup_{\Tbb^3} \Gammaun_m (t_0, \cdot).
\endaligned
\ee

\item {\bf Maximum principle for the twists:} 
\bel{equa-widetildemax-rep2}
\aligned  
\inf_{\Tbb^3} \Gammazero_m (t_0, \cdot)
\leq 
\inf_{\Interval \times \Tbb^3}  \Gammazero_m
\leq   
\sup_{\Interval \times \Tbb^3} \Gammazero_m
\leq  
\sup_{\Tbb^3} \Gammazero_m (t_0, \cdot). 
\endaligned
\ee

\item \textbf{Total variation of the parallel momentum}: 
\be
\sup_{ t \in \Interval} \Var(\Gammaunpar(t)) 
\leq C_1(\Interval) \, \Var(\Gammaunpar(t_0)). 
\ee

\item \textbf{Time continuity of the momentum}: 
\be
\sup_{ t_2\neq t_3 \in \Interval} \Bigl\| \frac{\Gammaunpar(t_3)-\Gammaunpar(t_2)}{t_3 - t_2} \Bigr\|_{L^1(\Tbb^3,\dVtrois(t_2))}
\leq C_2(\Interval) \, \Var(\Gammaunpar(t_0)). 
\ee

\eei 
\noindent Here, the constants $C_j(\Interval)$ \underline{depend only on the initial data set} and, specifically, the same norms as listed in~\eqref{equa-648}. 
Furthermore, if the corrector $\mathring\Pi \equiv 0$ vanishes initially, then $\Pi \equiv 0$ for all relevant times and, in addition, one has the following \emph{integrability property:}
\be
\aligned
& \Xbb\cdot\Ybb, \, \Xbb\wedge\Ybb \in L^2(\Interval \times \Tbb^3, \dVuntrois)
& \text{ for all } \Xbb,\Ybb \in \bigl\{ \Pbb, \Qbb, \Jperp \bigr\}. 
\endaligned 
\ee
\end{proposition}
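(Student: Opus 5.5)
The plan is to assemble the four bullet items from the entropy structure built into the notion of tame flow, and then to obtain the integrability property from the monotonicity computation of \autoref{section=7-3}. Throughout, the constants are controlled by \autoref{theo--68}: the sup norms of $\log\Omega$, $\Kpar$, $P$, $Q$ (\autoref{lemma-sup-Omega}, \autoref{lem-apriori--twist}, \autoref{propo-reconstruct}) and the energy (\autoref{prop:energy}) are all bounded in terms of the norms~\eqref{equa-648}.

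For the two maximum principles I will follow the entropy-based argument of \autoref{section=7-2}.
\begin{itemize}
\item For $\Gammaunpar$, I use the $H$-divergence law~\eqref{equa-bvpara}, which is part of the definition of a tame flow, in its integral form with boundary terms at $t_0$ (\autoref{def-initialdata}). I test it with a function of time only and take $H=\Hcal(\Gammaun_m-\xi_m)$, with $\Hcal$ the Heaviside function (smoothly regularized, then passing to the limit) and $\xi_m=\sup\mathring\Gammaun_m$. The boundary term at $t_0$ vanishes. Since $-\Numb_0=\hNumb_\Jbb|J_0|\geq0$, the slice integral at later times vanishes as well. This gives $\Gammaun_m\leq\xi_m$ wherever $\Numb_0>0$.
\item In vacuum regions $\Gammaun_m$ is not physically meaningful. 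I will handle this either by the convention that the bound is understood $\Numb_0$-almost everywhere, or by noting that the variation bound lets us pick a representative obeying it.
\item The infimum is handled symmetrically.
\item For $\Gammazero_m$, I first justify~\eqref{equa-mom-twists0} for weak solutions. This uses the $\BVac$ regularity of $\Kpar$ in space and time: the chain rule applies to $\Gammazero_m$, and~\eqref{eq:T2-9101112-evol-tilde-0} combined with the particle number law~\eqref{indef-eqvNumb} gives the homogeneous divergence law. The same Heaviside argument then applies.
\end{itemize}

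The total variation and time-continuity bounds are built into the notion of tamely assuming the data, through~\eqref{equa-DK399}. It then only remains to convert the weight $\dVtrois(t_2)$ and the slice norms into constants $C_1,C_2$ using the two-sided bounds on $\Omega$ and on $\amdeux$ from \autoref{lem-speed}. If instead one wants to derive these bounds, I would use the first-order $H$-divergence law~\eqref{equa-kSK1-0} with $H=|\Gammadeux_m|$, which gives~\eqref{equa-Jx-Var} for regular solutions. For weak solutions this would be transferred by lower semicontinuity of the variation along a regularization, but I would rely on the definition rather than attempt this.

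For the corrector, I test the divergence inequalities~\eqref{eq:Twave} with nonnegative functions of time only. This gives $\Ebfscri_\Pi(t)=\mathring\Ebfscri_\Pi$ as in \autoref{prop:corrector}, so $\mathring\Pi=0$ gives $\|\Pi^{00}\|_{\Meas}=0$, and then $|\Pi^{01}|\leq\Pi^{00}$ forces $\Pi\equiv0$. The integrability claim then follows from \autoref{theo:te-Euler-explicit}. Concretely, I integrate~\eqref{d2Vgen} with $\beta\geq 5/4+\sup|t\Omega\Kpar|^2/4$ over $[t_0,t_1)$, which bounds $\iint D\,\Omega^4\dVtrois$ by the initial energy. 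The sum-of-squares decomposition of $4D$ then controls every null form $\Xbb\cdot\Ybb$ and $\Xbb\wedge\Ybb$ in $L^2(\dVuntrois)$; the remaining products involving $\Jperp$ are recovered from the listed combinations $\Pbb\cdot\Jperp\pm\Pbb\wedge\Jperp$ and similar ones.

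The main obstacle is justifying~\eqref{d2Vgen} at weak regularity. One has to multiply the energy inequality~\eqref{equa-lfk3} by $|t|^{\beta+1/4}\Omega$ and integrate by parts against $\Omega_x$, using the lapse constraint. I would do this by mollifying $\Omega$ in $x$ and using it as a test function, which is legitimate since $\log\Omega\in\BVac$ once $\Pi=0$. I would then pass to the limit in the product $\Mbf^{01}\Omega_x=-(t/2)\Omega^3\amdeux(\Mbf^{01})^2$, where Fatou's lemma gives the inequality in the right direction.
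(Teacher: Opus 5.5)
Your proposal follows the paper's own route essentially step by step. It uses the Heaviside-type entropy argument of \autoref{section=7-2} on the $H$-divergence laws~\eqref{equa-bvpara} and~\eqref{equa-mom-twists0}, and you handle the correct sign $\Hcal(\Gammaun_m-\xi_m)$ and the $\Numb_0>0$ caveat, both of which the paper glosses over. It takes the variation and time-continuity bounds~\eqref{equa-DK399} as built into tamely assuming the data; this gives $C_1=C_2=1$ with no change of weights needed. It then uses the corrector estimate of \autoref{prop:corrector} and the monotonicity inequality~\eqref{d2Vgen} with the sum-of-squares decomposition of $4D$ behind \autoref{theo:te-Euler-explicit}.

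Two small corrections. First, testing~\eqref{eq:Twave} with time-only test functions yields only $|t|\,\|\Pi^{00}\|_{\Meas(\Tbb^3)}(t)\leq|t_0|\,\|\mathring\Pi^{00}\|_{\Meas(\Tbb^3)}$, not an equality; this inequality is all you need. Second, your Fatou step in justifying~\eqref{d2Vgen} is not immediate, because the mollified quartic products such as $\Mbf^{01}\Omega^2\,(\Omega^3\amdeux\Mbf^{01})_\delta$ are neither sign-definite nor dominated; the paper itself only asserts this step via ``a suitable test function''.
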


In particular, from the maximum principles \eqref{equa-widetildemax-rep1} and \eqref{equa-widetildemax-rep2}, we deduce the pointwise estimate 
\be
\sup_{\Interval \times \Tbb^3} \Bigl( |K| + \hNumb(\mu)^{-1/2} |\Jpar| \Bigr)
\leq C_3(\Interval). 
\ee


\section{Compactness properties of metric components}
\label{section=8}

\subsection{Methodology}
\label{section=8-1}

This section is devoted to the proof of the compactness statements underlying the stability and instability results of \autoref{section=5}. We first consider a bounded sequence of tame Einstein--Euler flow with corrector and derive the compactness properties leading to the relaxed notion of limit flow; this yields the proof of~\autoref{theorem-493}. We then return to the case of well-prepared initial data and show that the stronger convergence properties available in that setting imply \autoref{theorem-492}.

The argument is organized hierarchically. Starting from a bounded sequence of relaxed flows, we extract subsequences and identify the corresponding weak or measure limits. We then examine the equations successively and study their behavior under the available convergence. A basic feature of the Einstein--Euler system is that the compactness of certain variables follows from the structure of certain equations. In particular, the geometric variables are treated first, and their strong compactness is later used in the compensated compactness analysis of the matter variables.

Throughout this section, we work in \(1+1\) notation on the quotient spacetime \(\Interval\times\Sbb^1\), using both the density \(\amdeux\) and the associated measure and primitive,
\be
d_x\ell=\amdeux\,dx,
\qquad
\ell(t,x)=\int_0^x \amdeux(t,y)\,dy.
\ee
We therefore consider a sequence of tame Einstein--Euler flows with corrector 
\be
\Phi^\suit=\bigl(\Jperp{}^\suit,\Pbb^\suit,\Qbb^\suit\bigr),
\qquad
\Psi^\suit=\bigl(\ell^\suit,\log\Omega^\suit,\Kpar^\suit,\Jhatpar{}^\suit\bigr),
\qquad
\Pi^\suit,
\ee
satisfying the assumptions of~\autoref{theorem-493}; the corresponding variables \(\Jpar{}^\suit\) are recovered from \(\Jhatpar{}^\suit\). After extraction of subsequences, which is understood throughout and for which we keep the same notation, the proof proceeds in several steps.

We begin with the conformal length measure \(d_x\ell^\suit\), whose evolution equation propagates the initial strong compactness assumption. We next analyze the lapse \(\Omega^\suit\): for general data one obtains only weak-\(*\) compactness in \(BV\), whereas for well-prepared data the stronger convergence of the source terms implies absolute continuity. We then turn to the remaining geometric variables, namely the twists and the essential variables \(\Pbb^\suit,\Qbb^\suit\). Their equations involve source terms that are either directly compact or of null-form type, so that the previously established convergence of \(\ell^\suit\) and \(\Omega^\suit\) allows us to pass to the limit. Once the geometric variables have been controlled, the matter equations will be treated in the next section.

In the general case, the outcome is a relaxed limit flow, possibly carrying a corrector in the lapse equations. In the well-prepared regime, the same hierarchy of arguments yields stronger convergence at each stage, and the corrector vanishes.

In the analysis below, and in particular for the geometric variables, we may assume for clarify that $(\Phi^\suit,\Psi^\suit)$ is a sequence of exact weak solutions. We therefore suppress the error terms $\Hterm$ and~$\Mterm$ arising in the formulation for approximate solutions. Reintroducing them would require only minor modifications and would not affect any of the conclusions.

The uniform bounds on \((\Phi^\suit,\Psi^\suit,\Pi^\suit)\) allow us to extract a subsequence and a limit \((\Phi^\sharp,\Psi^\sharp,\Pi^\sharp)\) such that
\be
\aligned
\Phi^\suit \rightharpoonup \Phi^\sharp
\qquad & \text{weakly in } L^2(\Sbb^1) \text{ for each fixed time,}
\\
\Psi^\suit \overset{*}{\rightharpoonup} \Psi^\sharp
\qquad & \text{weakly-* in } BV(\Sbb^1) \text{ for each fixed time,}
\\
\Pi^\suit \overset{*}{\rightharpoonup} \Pi^\sharp
\qquad & \text{weakly-* as measures on } \Sbb^1 \text{ for each fixed time.}
\endaligned
\ee
The proof combines an $L^2$ compactness theorem or Helly's theorem, together with a diagonal extraction at rational times. The passage from rational times to arbitrary times relies on the weak time regularity supplied by the bounds on the time derivatives, which provide the required equicontinuity in suitable weak topologies.
Further details are provided in the course of our analysis. 


\subsection{Analysis of the conformal length measure}
\label{section=8-2}

We now analyze the convergence of the conformal length measure $d_x\ell^\suit(t)=\amdeux^\suit(t,\cdot)\,dx$.
The sequence of weak solutions satisfies the evolution equation
\bel{esuat-71}
(\amdeux^\suit)_t
= t \, \Bigl( \Ebf_0(\Jpar^\suit,\Kpar^\suit)
      - \frac{q_\Jbb^\suit}{2}\,\Jbb^\suit\cdot\Jbb^\suit \Bigr)
  (\Omega^\suit)^2 \amdeux^\suit
=: S_{\mathrm{length}}^\suit,
\ee
together with the bound
\bel{equa-bounds1}
\sup_{\suit \in (0,1]} \sup_{t \in [t_0,t_1]}
\int_{\Sbb^1} |S_{\mathrm{length}}^\suit(t,\cdot)|\,dx < +\infty.
\ee
We also assume that the initial length densities are normalized in such a way that
\bel{equa-convlambda}
\mathring\amdeux^\suit \to \mathring\amdeux^\sharp
\qquad \text{strongly in } L^1(\Sbb^1) \text{ (at initial time)}.
\ee
Recall that the space \(\BVac(\Sbb^1)\) is defined independently of any volume form: it consists of bounded functions with finite total variation whose distributional derivative is absolutely continuous with respect to the Lebesgue measure. In particular, the sequence \(\ell^\suit\) is uniformly bounded in \(L^\infty(\Interval;\BVac(\Sbb^1))\), and the sequence \((\amdeux^\suit)_t\) is uniformly bounded in \(L^\infty(\Interval;L^1(\Sbb^1))\).

Although we assumed convergence of $\lambda^\eps$ for each time, it is convenient here to \emph{deduce} it from conditions on the initial data. 

\begin{lemma}[Strong convergence of the conformal length]
\label{lemma-72}
Assume the uniform bound~\eqref{equa-bounds1} and the convergence of the initial length densities~\eqref{equa-convlambda}. Then, after extraction of a subsequence if necessary, there exists a limit length function $\ell^\sharp \in W^{1,\infty}(\Interval;L^1(0,1)) \cap L^\infty(\Interval;\BVac(0,1))$,
such that, setting
\bel{equa-877}
d_x\ell^\sharp(t)=\amdeux^\sharp(t,\cdot)\,dx,
\qquad
\ell^\sharp(t,x)=\int_0^x \amdeux^\sharp(t,y)\,dy,
\qquad
x \in (0, 1) ,
\ee
one has
\bel{equaj30}
\amdeux^\suit(t,\cdot)\to \amdeux^\sharp(t,\cdot)
\qquad \text{strongly in } L^1(\Sbb^1)
\quad \text{for every } t\in\Interval. 
\ee 
Moreover, if
\bel{equa-bounds2}
S_{\mathrm{length}}^\suit \to S_{\mathrm{length}}^\sharp
\qquad \text{strongly in } L^1(\Interval\times \Sbb^1),
\ee
then the limit satisfies
\bel{equa-810p}
\amdeux^\sharp(t,\cdot)
=
\mathring\amdeux^\sharp
+\int_{t_0}^t S_{\mathrm{length}}^\sharp(s,\cdot)\,ds
\qquad \text{in }L^1(\Sbb^1),
\ee
for every \(t\in\Interval\). In particular, $(\amdeux^\sharp)_t = S_{\mathrm{length}}^\sharp$ holds in the sense of distributions on \(\Interval\times\Sbb^1\).
\end{lemma}

In view of~\eqref{esuat-71}, the assumption~\eqref{equa-bounds2} will be satisfied as soon as the strong convergence of $(\Jpar^\suit, \Kpar^\suit, \Jbb^\suit\cdot\Jbb^\suit, \Omega^\suit)$ has been established.

\begin{proof}
We split the argument into three steps.

\bse
\noindent{\bf 1. \it Uniform time regularity.}
For each \(\suit\), the evolution equation~\eqref{esuat-71} yields, after integration between two times \(t,t'\in\Interval\),
\be
\amdeux^\suit(t,\cdot)-\amdeux^\suit(t',\cdot)
=
\int_{t'}^t S_{\mathrm{length}}^\suit(s,\cdot)\,ds
\qquad\text{in }L^1(\Sbb^1).
\ee
Hence, we find 
\be
\aligned
\bigl\|\amdeux^\suit(t,\cdot)-\amdeux^\suit(t',\cdot)\bigr\|_{L^1(\Sbb^1)}
& \le
\int_{\min\{t,t'\}}^{\max\{t,t'\}}
\bigl\|S_{\mathrm{length}}^\suit(s,\cdot)\bigr\|_{L^1(\Sbb^1)}\,ds
  \leq C |t-t'|,
\endaligned
\ee
where \(C>0\) is independent of \(\suit\), by~\eqref{equa-bounds1}. Therefore, the family $\bigl(\amdeux^\suit\bigr)_\suit$ is equi-Lipschitz from \(\Interval\) into \(L^1(\Sbb^1)\).
\ese


\noindent{\bf 2. \it Construction of the limit and strong convergence.}
Since \(\ell^\suit\in L^\infty(\Interval;\BVac(\Sbb^1))\), the densities
\(\amdeux^\suit=\partial_x\ell^\suit\) are uniformly bounded in \(L^\infty(\Interval;L^1(\Sbb^1))\). Hence, for every fixed \(t\in\Interval\), the sequence
\(\bigl(\amdeux^\suit(t,\cdot)\bigr)_\suit\) is relatively compact in \(L^1(\Sbb^1)\).
Applying a diagonal extraction at rational times and using the equi-Lipschitz estimate above, we obtain a subsequence and a function
$
\amdeux^\sharp \in W^{1,\infty}(\Interval;L^1(\Sbb^1))
$
such that
\be
\amdeux^\suit(t,\cdot)\to \amdeux^\sharp(t,\cdot)
\qquad \text{strongly in }L^1(\Sbb^1)
\quad\text{for every }t\in\Interval.
\ee
The convergence of the initial data~\eqref{equa-convlambda} ensures that
$
\amdeux^\sharp(t_0,\cdot)=\mathring\amdeux^\sharp.
$
We now define the associated length function by \eqref{equa-877}.  Since \(\amdeux^\sharp(t,\cdot)\in L^1(\Sbb^1)\) for every \(t\), we have
$
\ell^\sharp(t,\cdot)\in \BVac(\Sbb^1)$
and
$
\ell^\sharp\in L^\infty(\Interval;\BVac(\Sbb^1))$, 
which proves~\eqref{equaj30}.

\vskip.3cm


\bse
\noindent{\bf 3. \it Identification of the limit equation under~\eqref{equa-bounds2}.}
Assume now that~\eqref{equa-bounds2} holds. For each \(\suit\), the representation formula \eqref{equa-810p}
holds in \(L^1(\Sbb^1)\). Passing to the limit in \(L^1(\Sbb^1)\), for every fixed \(t\in\Interval\), we obtain
\be
\amdeux^\sharp(t,\cdot)
=
\mathring\amdeux^\sharp
+\int_{t_0}^t S_{\mathrm{length}}^\sharp(s,\cdot)\,ds
\qquad \text{in }L^1(\Sbb^1).
\ee
Therefore we have 
$
(\amdeux^\sharp)_t = S_{\mathrm{length}}^\sharp
$
in the sense of distributions on \(\Interval\times\Sbb^1\). This completes the proof.
\ese
\end{proof}


The argument above yields absolute continuity of the limit conformal length measure with respect to the Lebesgue measure, together with absolute continuity of the associated length function. More precisely, for every \(t\in\Interval\),
\be
d_x\ell^\sharp(t)=\amdeux^\sharp(t,\cdot)\,dx
\qquad\text{with}\qquad
\amdeux^\sharp(t,\cdot)\in L^1(\Sbb^1),
\ee
so that the limit conformal length measure is absolutely continuous with respect to the Lebesgue measure on \(\Sbb^1\). Consequently, the associated length function \eqref{equa-877}
belongs to \(W^{1,1}(\Sbb^1)\subset \BVac(\Sbb^1)\) for every \(t\in\Interval\), and therefore
$
\ell^\sharp \in L^\infty(\Interval;\BVac(\Sbb^1)).
$
On the other hand, no \(BV\)-regularity is obtained here for the density \(\amdeux^\sharp\) itself, since no uniform control is available on the spatial derivatives \((\amdeux^\suit)_x\).
 

\subsection{Analysis of the lapse function}
\label{section=8-3}

We next analyze the lapse function through its evolution and constraint equations, now allowing for the presence of stress-energy correctors. For the sequence under consideration, these equations take the form
\bel{equa-820}
\aligned
(\log\Omega^\suit)_t = (\amdeux^\suit)^{-1} \Xi^\suit_{11} ,
\qquad
(\log\Omega^\suit )_x = \Xi^\suit_{01} ,
\endaligned
\ee
where
\bel{define-Xi}
\aligned
\Xi^\suit_{11}
& \coloneqq \Pi^\suit_{11}
- \frac{\amdeux^\suit}{4t}
+ \frac{t}{2} \amdeux^\suit \Bigl(
\Ebf_0(\Pbb^\suit,\Qbb^\suit,\Jperp{}^\suit)
- \Ebf_0(\Kpar^\suit,\Jpar^\suit)
+ \frac{q_\Jbb^\suit}{2}\,\Jbb^\suit \cdot \Jbb^\suit
\Bigr) (\Omega^\suit)^2 ,
\\
\Xi^\suit_{01} & \coloneqq \Pi^\suit_{01}
- \frac{t}{2}\,\Mbf^{01}(\Pbb^\suit,\Qbb^\suit,\Jperp{}^\suit)\,
(\Omega^\suit)^2 \amdeux^\suit.
\endaligned
\ee
Here, the first identity is understood on timelike slices and the second one on spacelike slices, in the sense of measures.  
In the first equation, the product \((\amdeux^\suit)^{-1} \Pi^\suit_{11}\) is understood as the product of a continuous function with a measure on each timelike slice.

The bounds assumed in~\eqref{equa-ConditionTheo-0}, together with the fact that $\Pi^\suit$ is a stress-energy corrector in the sense of~\autoref{weakdefinitionT2-deux} imply that
the sequence is uniformly bounded in the sense
\bel{equa-bounds3a}
\Xi^\suit \in L^\infty(\Interval,\Meas(\Tbb^3))
\cap L^\infty(\Sbb^1,\Meas(\Interval\times\Tbb^2)).
\ee
Thus, for the lapse variable, one expects only weak-\(*\) compactness in \(BV\), together with weak-\(*\) convergence of the corrector measures. This is stated precisely in the next lemma.

\begin{lemma}[Compactness of the lapse and convergence of the correctors]
\label{lemma-73}
Assume that~\eqref{equa-bounds3a} hold, and normalize the lapse by
\be
\Omega^\suit(t_0,0)=1.
\ee
Assume also that the correctors \(\Pi^\suit\) are uniformly bounded in
$
L^\infty(\Interval,\Meas(\Tbb^3))
\cap L^\infty(\Sbb^1,\Meas(\Interval\times\Tbb^2)).
$
Then the sequence \(\log\Omega^\suit\) is uniformly bounded in
\be
L^\infty(\Interval\times\Sbb^1)
\cap L^\infty(\Interval;BV(\Sbb^1))
\cap L^\infty(\Sbb^1;BV(\Interval)).
\ee
Hence, after extraction of a subsequence if necessary, there exist
\be
\log\Omega^\sharp
\in
L^\infty(\Interval\times\Sbb^1)
\cap L^\infty(\Interval;BV(\Sbb^1))
\cap L^\infty(\Sbb^1;BV(\Interval))
\ee
and $\Xi^\sharp$ such that
\be
\aligned
\log\Omega^\suit & \to \log\Omega^\sharp
\qquad
&& \text{pointwise a.e. on }\Interval\times\Sbb^1,
\\
\Xi^\suit  & \overset{*}{\rightharpoonup} \Xi^\sharp
\, 
&&\text{ weakly-\(*\) in $\Meas(\Tbb^3)$ for a.e. } t \in \Interval,
\\
\Xi^\suit  & \overset{*}{\rightharpoonup} \Xi^\sharp
\, 
&&\text{ weakly-\(*\) in $\Meas(\Interval\times\Tbb^2))$ for a.e. } x \in \Sbb^1.
\endaligned
\ee
In addition, the limit satisfies
\be
(\log\Omega^\sharp)_t=(\amdeux^\sharp)^{-1}\Xi^\sharp_{11},
\qquad
(\log\Omega^\sharp)_x=\Xi^\sharp_{01},
\ee
in the sense of measures on timelike and spacelike slices, respectively. 
\end{lemma}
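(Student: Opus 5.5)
The plan is to work directly with the two first-order relations \eqref{equa-820}, which express $\log\Omega^\suit$ as the primitive of a measure in $x$ (on each spacelike slice) and, after dividing by $\amdeux^\suit$, in $t$ (on each timelike slice). We first derive uniform bounds, then extract a limit using $BV$ compactness, and finally identify the limit equations.

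\textbf{Step 1: uniform bounds.} On each slice, the total variation of $\log\Omega^\suit(t,\cdot)$ is bounded by $\|\Xi^\suit_{01}(t)\|_{\Meas(\Tbb^3)}$. Along timelike slices we need a lower bound on $\amdeux^\suit$; this follows from the explicit formula \eqref{equa-apriori-speed-formula} (\autoref{lem-speed}), since the exponent is bounded uniformly by the timelike energy bounds. Hence $\Var(\log\Omega^\suit(\cdot,x);\Interval)\lesssim\|\Xi^\suit_{11}(x)\|_{\Meas}$. Combining these with the normalization $\Omega^\suit(t_0,0)=1$, we integrate first along $t=t_0$ and then along each timelike line to obtain the uniform $L^\infty$ bound. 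This relies on $BV$ functions having one-sided traces, so point values are meaningful on a.e. slice; we fix representatives using the Volpert conventions of \autoref{section=4-5}.

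\textbf{Step 2: extraction.} A uniform bound on the separate variations in $t$ and in $x$, together with the $L^\infty$ bound, gives a uniform bound in $BV(\Interval\times\Sbb^1)$. Since $|D_t u|+|D_x u|$ controls $|Du|$, it suffices to integrate the slice variations in the transverse variable. Rellich-type compactness of $BV\hookrightarrow L^1$ then yields strong $L^1$ convergence of a subsequence, and a further subsequence converges a.e. Lower semicontinuity of the variation on slices, via Fubini and Fatou, shows the limit lies in the stated spaces. For the measures $\Xi^\suit$, Banach--Alaoglu applies slice by slice. To obtain a single subsequence working for a.e. $t$ (respectively a.e. $x$), we follow the method of \autoref{section=8-1}: a diagonal extraction over a countable dense set of slices, then weak time regularity to pass to a.e. slice. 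Alternatively, we can first take the spacetime weak-$*$ limit and disintegrate it.

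\textbf{Step 3: identifying the limit equations.} This is the main obstacle. In the $x$-equation, $\Xi^\suit_{01}$ is linear in the unknowns, so the weak form $-\int\log\Omega^\suit\varphi_x=\la\Xi^\suit_{01},\varphi\ra$ passes to the limit directly, using the a.e. convergence of $\log\Omega^\suit$ together with dominated convergence. The $t$-equation contains the product $(\amdeux^\suit)^{-1}\Xi^\suit_{11}$, with a merely weakly-$*$ converging measure. I would handle this by rewriting the equation as $\amdeux^\suit(\log\Omega^\suit)_t=\Xi^\suit_{11}$, or equivalently by testing with $(\amdeux^\suit)^{-1}\varphi$. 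The key point is that $\amdeux^\suit\to\amdeux^\sharp$ \emph{uniformly} along a.e. timelike slice, not just in $L^1$. To get this, I would use \autoref{lemma-72} together with the equi-Lipschitz-in-time control of $\log\amdeux^\suit$ coming from the timelike integrability of its source. With uniform convergence of the continuous weights, the product of a uniformly convergent continuous function with a weakly-$*$ convergent measure converges. The remaining term, $\amdeux^\suit(\log\Omega^\suit)_t$, would be handled by integrating by parts so that only $\log\Omega^\suit\,(\amdeux^\suit\varphi)_t$ appears, which is then controlled using a.e. convergence and the $L^1$ bound on $\amdeux^\suit_t$. If uniform convergence of $\amdeux^\suit$ on slices turns out to fail, my fallback would be to define $\Xi^\sharp_{11}$ directly as the weak limit of $\amdeux^\suit(\log\Omega^\suit)_t$, which keeps the limiting identity consistent.
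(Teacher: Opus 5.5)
Your Steps 1--2 essentially follow the paper's proof: integrate \eqref{equa-820} from the normalization point for the $L^\infty$ bound, bound slice variations by the measure norms of $\Xi^\suit$, and extract an a.e.\ limit. The paper does this via Helly on slices plus a diagonal argument; your two-dimensional $BV$ compactness is a cleaner way to get one a.e.-convergent subsequence. The paper's proof stops after the pointwise convergence and never verifies the limit identities. Step 3 is therefore your own addition, and its key claim does not hold.

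You assert that $\amdeux^\suit\to\amdeux^\sharp$ uniformly on a.e.\ timelike slice, via equi-Lipschitz control of $\log\amdeux^\suit$ coming from the timelike integrability of its source. That source, $|t|\bigl(\Ebf_0(\Kpar^\suit,\Jpar^\suit)+\tfrac{q}{2}|\Jbb^\suit\cdot\Jbb^\suit|\bigr)(\Omega^\suit)^2$, is quadratic in the $L^2$ variables. The timelike bounds therefore only give $\sup_x\int_\Interval|(\log\amdeux^\suit)_t|\,dt\le C$, which is a uniform $BV$ bound in $t$, not equicontinuity. The Lipschitz estimate of \autoref{lemma-72} concerns $t\mapsto\amdeux^\suit(t,\cdot)$ with values in $L^1(\Sbb^1)$ and says nothing on a fixed slice.

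Since $\log\amdeux^\suit(\cdot,x)$ is monotone, uniform convergence on the slice is equivalent to continuity of the limit. That in turn amounts to non-concentration of $\Ebf_0(\Jpar^\suit)+|\Jbb^\suit\cdot\Jbb^\suit|$ along the slice, and nothing available at this stage excludes it. A concentration along a curve transverse to the constant-$x$ lines is compatible with all the spacelike and timelike $L^2$ bounds.

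Such a concentration simultaneously creates a jump of $\amdeux^\sharp(\cdot,x)$ and an atom of $(\log\Omega^\sharp)_t$, because both sources contain $|\Jbb^\suit\cdot\Jbb^\suit|$ with positive weights. The weak limit $\Xi^\sharp_{11}$ of $\amdeux^\suit(\log\Omega^\suit)_t$ then depends on the internal profile of the layer. Meanwhile $(\amdeux^\sharp)^{-1}\Xi^\sharp_{11}$ multiplies a jump by an atom at the same point, so the identity cannot be derived.

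Your integration-by-parts variant fails for the same reason. The term $\int\log\Omega^\suit\,\amdeux^\suit_t\,\varphi$ does not pass to its natural limit from an $L^1$ bound on $\amdeux^\suit_t$ plus a.e.\ convergence of $\log\Omega^\suit$; you would need equi-integrability of $\amdeux^\suit_t$. The fallback is circular: $\amdeux^\suit(\log\Omega^\suit)_t=\Xi^\suit_{11}$ by the equation, so the real question is whether its weak limit equals $\amdeux^\sharp(\log\Omega^\sharp)_t$.

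What is missing is strong convergence, or at least slice-wise equi-integrability, of the conformal-length source, i.e.\ hypothesis \eqref{equa-bounds2} of \autoref{lemma-72}. Once the fluid variables converge strongly (\autoref{section=9}), Fubini gives $L^1(\Interval)$ convergence of the source on a.e.\ timelike slice along a subsequence. That yields uniform convergence of $\amdeux^\suit(\cdot,x)$, and your product argument then goes through. So the $t$-identity belongs at the end of the hierarchy, not inside this lemma. For exact tame flows without corrector, the spacetime bounds of \autoref{theo:te-Euler-explicit} could supply the needed equi-integrability instead.

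Two smaller points:
\begin{enumerate}
\item The slice-wise weak-$*$ convergence of $\Xi^\suit_{01}(t)=(\log\Omega^\suit)_x(t)$ for a.e.\ $t$ follows directly from the $L^1(\Sbb^1)$ convergence of $\log\Omega^\suit(t,\cdot)$. There is no time regularity of $\Xi^\suit$ available for the diagonal argument you sketch.
\item \eqref{equa-apriori-speed-formula} only gives $\amdeux^\suit\gtrsim\mathring\amdeux^\suit(x)$, which has no positive lower bound unless $\mathring\amdeux^\suit$ is normalized. It is better to bound $(\amdeux^\suit)^{-1}\Xi^\suit_{11}=(\amdeux^\suit)^{-1}\Pi^\suit_{11}-\frac{1}{4t}+\cdots$ directly, via the weighted timelike corrector bound of \autoref{prop:corrector} and the timelike energy estimate.
\end{enumerate}
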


\begin{proof}
We divide the proof into two steps.

\bse
\noindent{\bf 1. \it Uniform bounds and \(BV\) control.}
Fix \(\suit\). Since \(\Omega^\suit(t_0,0)=1\), we have \(\log\Omega^\suit(t_0,0)=0\). Using the lapse equations in the sense of measures, and integrating first along the segment \([t_0,t]\times\{0\}\) and then along \(\{t\}\times[0,x]\), we obtain the bound
\be
\aligned
|\log\Omega^\suit(t,x)|
& \le
\|{(\amdeux^\suit)^{-1}}\Xi^\suit_{11}(0)\|_{\Meas(\Interval\times\Tbb^2)}
 + \|{(\amdeux^\suit)^{-1}}\Xi^\suit_{01}(t)\|_{\Meas(\Tbb^3)},
\endaligned
\ee
for every \((t,x)\in\Interval\times\Sbb^1\). By~\eqref{equa-bounds3a} and the uniform measure bounds on \(\Xi^\suit\), the right-hand side is bounded independently of \(\suit\). Hence
\be
\sup_{\suit\in(0,1]}\|\log\Omega^\suit\|_{L^\infty(\Interval\times\Sbb^1)}<+\infty.
\ee

Next, for each fixed \(t\), the map \(x\mapsto \log\Omega^\suit(t,x)\) has distributional derivative
$(\log\Omega^\suit)_x(t,\cdot) = \Xi^\suit_{01}(t)$,
hence
\be
\aligned
\Var_x\bigl(\log\Omega^\suit(t,\cdot)\bigr)
& \le \| \Xi^\suit_{01}(t)\|_{\Meas(\Tbb^3)}.
\endaligned
\ee
Since \(\amdeux^\suit\) is uniformly bounded and \(\Xi^\suit\) is uniformly bounded as a measure, it follows that
\be
\sup_{\suit\in(0,1]}\sup_{t\in\Interval}
\Var_x\bigl(\log\Omega^\suit(t,\cdot)\bigr)<+\infty.
\ee
Therefore we have 
$
\log\Omega^\suit \in L^\infty(\Interval;BV(\Sbb^1))
$
uniformly in \(\suit\).

Similarly, for each fixed \(x\), the map \(t\mapsto \log\Omega^\suit(t,x)\) has distributional derivative
given by $(\log\Omega^\suit)_t(\cdot,x) = \Xi^\suit_{11}(x)$,
so that
\be
\Var_t\bigl(\log\Omega^\suit(\cdot,x)\bigr)
\le \|{(\amdeux^\suit)^{-1}}\Xi^\suit_{11}(x)\|_{\Meas(\Interval\times\Tbb^2)}.
\ee
Hence we find 
\be
\sup_{\suit\in(0,1]}\sup_{x\in\Sbb^1}
\Var_t\bigl(\log\Omega^\suit(\cdot,x)\bigr)<+\infty,
\ee
and therefore
$
\log\Omega^\suit \in L^\infty(\Sbb^1;BV(\Interval))
$
uniformly in \(\suit\).
\ese

\vskip.3cm

\bse
\noindent{\bf 2. \it Weak-\(*\) compactness and pointwise convergence.}
By Helly's theorem on each spacelike and timelike slice, together with a diagonal extraction argument, we may extract a subsequence such that
\be
\log\Omega^\suit(t,x)\to \log\Omega^\sharp(t,x)
\qquad\text{for a.e. }(t,x)\in\Interval\times\Sbb^1.
\ee
The uniform bounds established above pass to the limit, and therefore
\be
\log\Omega^\sharp
\in
L^\infty(\Interval\times\Sbb^1)
\cap L^\infty(\Interval;BV(\Sbb^1))
\cap L^\infty(\Sbb^1;BV(\Interval)).
\ee
Moreover, after extraction of a subsequence if necessary,
$\Xi^\suit \overset{*}{\rightharpoonup} \Xi^\sharp$
weakly-\(*\) in
$
L^\infty(\Interval,\Meas(\Tbb^3))
\cap L^\infty(\Sbb^1,\Meas(\Interval\times\Tbb^2)).
$
Exponentiating, we also obtain
\be
\Omega^\suit \to \Omega^\sharp
\qquad\text{for a.e. }(t,x)\in\Interval\times\Sbb^1.
\ee
\ese
\end{proof}


\subsection{Analysis of the twist functions}
\label{section=8-4}

We now consider the evolution and constraint equations
\eqref{eq:T2-9101112-evol-def-a}--\eqref{eq:T2-9101112-evol-def-b}
and \eqref{eq:221a}--\eqref{eq:221b} for the sequence of twist functions, namely
\be
\aligned
|t|^{-3/2} \bigl(  |t|^{3/2} K_2^\suit \bigr)_t
&= S^\suit_{K_2,0} \coloneqq \bigl( J_1^\suit J_2^\suit -  P_0^\suit K^\suit_2/2 \bigr)\Omega^\suit,
\\
|t|^{-3/2} \bigl( |t|^{3/2} K^\suit_3 \bigr)_t
&= S^\suit_{K_3,0} \coloneqq \bigl( J^\suit_1 J^\suit_3 + P^\suit_0 K^\suit_3/2 - Q^\suit_0 K^\suit_2 \bigr)\Omega^\suit,
\\
( K^\suit_2 )_x
&= S^\suit_{K_2,1} \coloneqq
 \bigl( J^\suit_0 J^\suit_2 - P^\suit_1 K^\suit_2/2 \bigr)\, \Omega^\suit \amdeux^\suit,
\\
( K^\suit_3 )_x
&= S^\suit_{K_3,1} \coloneqq
\bigl( J^\suit_0 J^\suit_3  - Q^\suit_1 K^\suit_2 + P^\suit_1 K^\suit_3/2 \bigr)\, \Omega^\suit \amdeux^\suit.
\endaligned
\ee 
By the bounds assumed in~\eqref{equa-ConditionTheo-0}, together with the uniform bounds available on
\(\Omega^\suit\), \(\amdeux^\suit\), \(\Pbb^\suit\), \(\Qbb^\suit\), \(\Jbb^\suit\), and \(K_m^\suit\),
the source terms satisfy
\bel{equa-bounds7}
\aligned
& \sup_{\suit \in (0,1]}
\sup_{t \in [t_0,t_1]}
\int_{\Sbb^1}
\Bigl(
|S^\suit_{K_2,1}(t,\cdot)|
+
|S^\suit_{K_3,1}(t,\cdot)|
\Bigr)\,dx
<+\infty,
\\
& \sup_{\suit \in (0,1]}
\sup_{x \in \Sbb^1}
\int_{t_0}^{t_1}
\Bigl(
|S^\suit_{K_2,0}(\cdot,x)|
+
|S^\suit_{K_3,0}(\cdot,x)|
\Bigr)\,dt
<+\infty.
\endaligned
\ee
In particular, the twist functions are uniformly bounded in \(L^\infty(\Interval\times\Sbb^1)\), while their spatial and time derivatives are uniformly controlled in \(L^1\) along spacelike and timelike slices, respectively. This yields the following compactness statement.

\begin{lemma}[Strong convergence of the twist functions]
\label{lemma-76}
Assume that the uniform bounds~\eqref{equa-bounds7} hold, and that the initial twist functions satisfy
\be
K_m^\suit(t_0,\cdot)\to K_m^\sharp(t_0,\cdot)
\quad\text{strongly in }L^1(\Sbb^1),
\qquad m=2,3.
\ee
Then, after extraction of a subsequence if necessary, there exist limit functions
\be
K_m^\sharp
\in
L^\infty(\Interval\times\Sbb^1)
\cap
L^\infty(\Interval;\BVac(\Sbb^1))
\cap
L^\infty(\Sbb^1;\BVac(\Interval)),
\qquad m=2,3,
\ee
such that
\be
K_m^\suit \to K_m^\sharp
\qquad\text{pointwise a.e. on }\Interval\times\Sbb^1,
\qquad m=2,3.
\ee
If, in addition, the source terms converge strongly in the sense that
\be
\label{equa-bounds7-strong}
\aligned
S^\suit_{K_m,1} &\to S^\sharp_{K_m,1}
\qquad\text{strongly in } L^\infty(\Interval;L^1(\Sbb^1)),
\\
S^\suit_{K_m,0} &\to S^\sharp_{K_m,0}
\qquad\text{strongly in } L^\infty(\Sbb^1;L^1(\Interval)),
\endaligned
\qquad m=2,3,
\ee
then
\be
K_m^\suit \to K_m^\sharp
\qquad\text{uniformly on }\Interval\times\Sbb^1,
\qquad m=2,3,
\ee
and the limit satisfies, in the sense of distributions on \(\Interval\times\Sbb^1\), 
\be
|t|^{-3/2}\bigl(|t|^{3/2}K_m^\sharp\bigr)_t
= S^\sharp_{K_m,0},
\qquad
(K_m^\sharp)_x
= S^\sharp_{K_m,1},
\qquad m=2,3. 
\ee 
\end{lemma}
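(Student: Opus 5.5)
We follow the pattern of \autoref{lemma-72} and \autoref{lemma-73}, working with the weighted functions $\widetilde K_m^\suit := |t|^{3/2} K_m^\suit$. Since $|t|$ is bounded above and below on $\Interval$, these have the same regularity as $K_m^\suit$, and their time derivatives are $|t|^{3/2}S^\suit_{K_m,0}$.

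\emph{Uniform bounds.} The sup-norm bound on $K_m^\suit$ is already available from the hypotheses (cf.~\autoref{lem-apriori--twist}). The first line of \eqref{equa-bounds7} then bounds $\Var_x K_m^\suit(t,\cdot)$ uniformly in $t$ and $\suit$. The second line bounds $\Var_t \widetilde K_m^\suit(\cdot,x)$ uniformly in $x$ and $\suit$. Combining the two, we get a uniform bound on the total variation of $\widetilde K_m^\suit$ on $\Interval\times\Sbb^1$: each directional derivative is a measure of bounded mass, obtained by integrating the slice bounds over the transverse variable. In fact the derivatives are $L^1$ functions. Hence $(\widetilde K_m^\suit)$ is bounded in $BV\cap L^\infty$ of the rectangle, and by BV compactness (Helly in two variables) a subsequence converges in $L^1$ and pointwise a.e.

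\emph{Limit regularity.} Lower semicontinuity of the variation on a.e.\ slice shows that $K_m^\sharp\in L^\infty(\Interval;BV(\Sbb^1))\cap L^\infty(\Sbb^1;BV(\Interval))$. Absolute continuity is the main obstacle: BV bounds alone allow jumps in the limit. The plan is to use equi-integrability of the sources. On timelike slices, $S_{K,0}^\suit$ is a product of $L^2$-in-time functions ($J_1J_2$, $P_0K$) times bounded factors. The uniform timelike $L^2$ bounds together with the spacetime $L^4$ and null-form integrability of \autoref{theo:te-Euler-explicit} should give uniform integrability of $J_1^\suit J_2^\suit$ in $L^1$ on a.e.\ slice. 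This is because $L^2$-bounded sequences have products that are $L^1$-bounded. However, uniform integrability is not automatic from $L^2$ bounds, so we invoke the $L^4$ bound on $\Jpar$ and the bound on $|\Jbb\cdot\Jbb|$ to get an $L^{p}$, $p>1$, control of $J_1J_2$ in spacetime. We then apply Dunford--Pettis on spacetime, which rules out concentrations. Spacetime uniform integrability gives absolute continuity of the limit derivative measures, hence $\BVac$. If this spacetime argument is too weak for slicewise statements, we fall back on the approach of \autoref{lemma-72}: the initial strong $L^1$ convergence plus the time equation propagates convergence, and the constraint gives $K_{m,x}^\sharp$ as a weak $L^1$ limit.

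\emph{Strong-source case.} Under \eqref{equa-bounds7-strong}, we write $K_m^\suit(t,x)$ by integrating $S_{K_m,0}^\suit$ in time from $t_0$ along the slice at $x=0$, then $S_{K_m,1}^\suit$ in space at time $t$. Each integral converges uniformly in the endpoints by the assumed strong convergence, and $K_m^\suit(t_0,0)$ converges by the assumed $L^1$ convergence combined with uniform BV bounds on the initial slice. The spatial increment along $t_0$ converges since $S_{K_m,1}^\suit(t_0)$ does. This yields uniform convergence. The limiting equations then follow by passing to the limit in the weak forms \eqref{eq-time-weak} and \eqref{eq-space-weak}: the left sides converge because $K_m^\suit\to K_m^\sharp$ uniformly, and the right sides converge by strong $L^1$ convergence.
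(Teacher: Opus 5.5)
Your proposal is correct. Its second half is the paper's argument: you integrate $|t|^{3/2}S^\suit_{K_m,0}$ along one timelike slice, then $S^\suit_{K_m,1}$ along the spacelike slice, subtract the two representation formulas, and pass to the limit in the weak forms.

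The first half takes a different route. The paper applies Helly's theorem slice by slice with a diagonal extraction. It then reads off $K_m^\sharp\in L^\infty(\Interval;\BVac(\Sbb^1))\cap L^\infty(\Sbb^1;\BVac(\Interval))$ directly. You instead note that $|t|^{3/2}K_m^\suit$ is bounded in $W^{1,1}(\Interval\times\Sbb^1)$ and use two-dimensional compactness. This gives a.e.\ convergence without having to pass from countably many slices to all of them.

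More substantively, you point out that Helly only preserves $BV$. The bounds \eqref{equa-bounds7} by themselves allow the sources to concentrate: time-independent smoothed indicators $K^\suit(t,x)=\chi_\suit(x)$ meet every stated hypothesis and converge to a jump. So absolute continuity of the limit must come from the product structure of $S^\suit_{K_m,a}$, which the paper's proof never invokes. Your Dunford--Pettis step supplies exactly this, and its spacetime version already suffices. The limit then lies in $W^{1,1}$ of the rectangle, so it has a representative that is absolutely continuous on almost every horizontal and vertical line. Lower semicontinuity of the variation then gives the uniform slice bounds, so your hedged fallback is not needed.

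Two small corrections.

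\emph{The $L^4$ bound.} The bound of \autoref{theo:te-Euler-explicit} is proved only for flows without corrector. It is also stated with respect to $\dVuntrois$, which carries the weight $\amdeux$. It is simpler to use $|\Jpar|^2=\hNumb_\Jbb^2|\Jhatpar|^2\lesssim 1+|\Jbb\cdot\Jbb|^{q_\infty^+}$ with $\Jhatpar$ bounded. This bounds $J_1J_2$ uniformly on each timelike slice, and $J_0J_2$ uniformly on each spacelike slice, in $L^s$ with $s=2/(1+q_\infty^+)>1$. On spacelike slices, the factor $\amdeux^\suit$ in $S^\suit_{K_m,1}$ is controlled by its equi-integrability from \autoref{lemma-72}.

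\emph{The anchor value.} $L^1$ convergence together with a uniform $BV$ bound does not give convergence of the point value $K_m^\suit(t_0,0)$; tents of width $\suit$ are a counterexample. The anchor converges because
$K_m^\suit(t_0,0)=\int_{\Sbb^1}K_m^\suit(t_0,x)\,dx-\int_{\Sbb^1}\int_0^x S^\suit_{K_m,1}(t_0,y)\,dy\,dx$.
Your remark about the converging spatial increments is what actually provides this.
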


\begin{proof} 
\bse
We proceed similarly as in Lemma~\ref{lemma-73}, treating each twist component separately.
Fix \(m=2,3\). By assumption, the sequence \(K_m^\suit\) is uniformly bounded in
$
L^\infty(\Interval\times\Sbb^1).
$
Moreover, the bound~\eqref{equa-bounds7} implies that, for each fixed \(t\in\Interval\), the map
$
x\mapsto K_m^\suit(t,x)
$
is absolutely continuous on \(\Sbb^1\), with uniformly bounded total variation. Likewise, for each fixed \(x\in\Sbb^1\), the map
$
t\mapsto |t|^{3/2}K_m^\suit(t,x)
$
is absolutely continuous on \(\Interval\), again with uniformly bounded total variation. Since the factor \(|t|^{3/2}\) is smooth and uniformly bounded above and below on \(\Interval\), it follows that the same property holds for \(t\mapsto K_m^\suit(t,x)\). Hence, we find 
\be
K_m^\suit
\in
L^\infty(\Interval;\BVac(\Sbb^1))
\cap
L^\infty(\Sbb^1;\BVac(\Interval))
\ee
together with uniform bounds in \(\suit\). Applying Helly's theorem on spacelike and timelike slices, together with a diagonal extraction argument, we obtain, after passing to a subsequence if necessary, a limit function
\be
K_m^\sharp
\in
L^\infty(\Interval\times\Sbb^1)
\cap
L^\infty(\Interval;\BVac(\Sbb^1))
\cap
L^\infty(\Sbb^1;\BVac(\Interval))
\ee
such that
$
K_m^\suit \to K_m^\sharp
\qquad\text{for a.e. }(t,x)\in\Interval\times\Sbb^1.
$
\ese

Assume now that~\eqref{equa-bounds7-strong} holds. Integrating the equations for \(K_m^\suit\) in space and in time, and using the strong convergence of the initial data, we obtain the corresponding representation formulas for \(K_m^\suit\) and \(K_m^\sharp\). Subtracting these formulas and arguing exactly as in Lemma~\ref{lemma-73}, we deduce that
\bse
\be
K_m^\suit \to K_m^\sharp
\qquad\text{uniformly on }\Interval\times\Sbb^1.
\ee
Passing to the limit in the distributional identities then yields
\be
|t|^{-3/2}\bigl(|t|^{3/2}K_m^\sharp\bigr)_t
= S^\sharp_{K_m,0},
\qquad
(K_m^\sharp)_x
= S^\sharp_{K_m,1}
\ee
in the sense of distributions on \(\Interval\times\Sbb^1\). The stated \(\BVac\) regularity of \(K_m^\sharp\) follows immediately. This completes the proof.
\ese
\end{proof}


\subsection{Analysis of the essential geometry variables} 
\label{section=8-5} 
 
\subsubsection{Weak convergence for general data}
 
We now turn to the essential geometric variables \(\Pbb^\suit\) and \(\Qbb^\suit\), which satisfy the first-order system
\begin{subequations}\label{eq:T2-1234-def-r2}
\begin{align}
\divuntrois_\suit(\Pbb^\suit)
&= S^\suit_{\Pbb},
\qquad
\curluntrois_\suit \bigl(|t|^{-1} \Pbb^\suit\bigr)
= 0,
\label{eq:T2-1234-def-a-r2}
\\
\divuntrois_\suit(\Qbb^\suit)
&= S^\suit_{\Qbb},
\qquad
\curluntrois_\suit \bigl(|t|^{-1} \Qbb^\suit\bigr)
= T^\suit_{\Qbb},
\label{eq:T2-1234-def-b-r2}
\end{align}
\end{subequations}
where the subscript \(\suit\) indicates the divergence and curl operators defined with respect to the metric \(\guntrois{}^\suit\), and where
\be
\aligned
S^\suit_{\Pbb}
& := \Qbb^\suit \cdot \Qbb^\suit
- \frac12 \Re\bigl((\Kpar^\suit)^2 + (\Jpar^\suit)^2\bigr),
\\
S^\suit_{\Qbb}
& := - \Pbb^\suit \cdot \Qbb^\suit
- \frac12 \Im\bigl((\Kpar^\suit)^2 + (\Jpar^\suit)^2\bigr),
\qquad
T^\suit_{\Qbb}
:= |t|^{-1}\,\Pbb^\suit \wedge \Qbb^\suit.
\endaligned
\ee
By the bounds assumed in~\eqref{equa-ConditionTheo-0}, the sequence \((\Pbb^\suit,\Qbb^\suit)\) is uniformly bounded in the natural \(L^2\) norms, both on spacelike and timelike slices, namely
\bel{equa:jdk13}
\limsup_{\suit \to 0} \Bigl( 
\sup_{t\in\Interval}
\|(\Pbb^\suit,\Qbb^\suit)(t)\|_{L^2(\Sbb^1,\dVtrois{}^\suit)}
+ 
\sup_{x\in\Sbb^1}
\|(\Pbb^\suit,\Qbb^\suit)(\cdot,x)\|_{L^2(\Interval,\dVundeux{}^\suit)}
\Bigr)
<+\infty.
\ee
Moreover, the source terms satisfy the uniform bounds
\bel{equa-bounds5} 
\aligned 
& \sup_{\suit \in (0,1]}
\sup_{t \in [t_0,t_1]}
\int_{\Sbb^1}
\Bigl(
|S_{\Pbb}^\suit(t,\cdot)|
+
|S_{\Qbb}^\suit(t,\cdot)|
+
|T^\suit_{\Qbb}(t,\cdot)|
\Bigr)\,dx
< + \infty,
\\
& \sup_{\suit \in (0,1]}
\sup_{x \in \Sbb^1}
\int_{\Interval}
\Bigl(
|S_{\Pbb}^\suit(\cdot,x)|
+
|S_{\Qbb}^\suit(\cdot,x)|
+
|T^\suit_{\Qbb}(\cdot,x)|
\Bigr)\,dt
< + \infty.
\endaligned
\ee
We first extract weak limits for \(\Pbb^\suit\) and \(\Qbb^\suit\), and then identify the nonlinear source terms by establishing the convergence of the relevant quadratic expressions.

\begin{lemma}[Weak convergence of the essential geometry]
\label{lemma-74}
Assume that the uniform bounds \eqref{equa:jdk13} and~\eqref{equa-bounds5} hold. Then, after extraction of a subsequence if necessary, there exist limit fields
$
\Pbb^\sharp,\Qbb^\sharp
\in
L^\infty\bigl(\Interval;L^2(\Sbb^1)\bigr)
\cap
L^\infty\bigl(\Sbb^1;L^2(\Interval)\bigr)
$
such that
\be
\Pbb^\suit \rightharpoonup \Pbb^\sharp,
\qquad
\Qbb^\suit \rightharpoonup \Qbb^\sharp
\qquad \text{weakly in } L^2_{\loc}(\Interval\times\Sbb^1).
\ee
After extraction of a further subsequence if necessary, the source terms satisfy
\be
S^\suit_{\Pbb} \overset{*}{\rightharpoonup} S^\sharp_{\Pbb},
\qquad
S^\suit_{\Qbb} \overset{*}{\rightharpoonup} S^\sharp_{\Qbb},
\qquad
T^\suit_{\Qbb} \overset{*}{\rightharpoonup} T^\sharp_{\Qbb},
\ee
weakly-\(*\) in \(\Meas(\Interval\times\Sbb^1)\). Moreover, the limit fields satisfy
\be
\label{eq:limit-PQ-weak}
\aligned
\divuntrois_\sharp(\Pbb^\sharp)
&= S^\sharp_{\Pbb},
\qquad
& \curluntrois_\sharp\bigl(|t|^{-1}\Pbb^\sharp\bigr)=0,
\\
\divuntrois_\sharp(\Qbb^\sharp)
&= S^\sharp_{\Qbb},
\qquad
& \curluntrois_\sharp\bigl(|t|^{-1}\Qbb^\sharp\bigr)=T^\sharp_{\Qbb},
\endaligned
\ee
in the sense of distributions, where the operators are defined with the limiting geometry.

Furthermore, the following quadratic convergence properties hold in the sense of distributions on \(\Interval\times\Sbb^1\):
\be
\label{eq:quad-conv-PQ}
\Pbb^\suit \cdot \Pbb^\suit \to \Pbb^\sharp \cdot \Pbb^\sharp, 
\qquad 
\Qbb^\suit \cdot \Qbb^\suit \to \Qbb^\sharp \cdot \Qbb^\sharp, 
\qquad 
\Pbb^\suit \cdot \Qbb^\suit \to \Pbb^\sharp \cdot \Qbb^\sharp,
\qquad 
\Pbb^\suit \wedge \Qbb^\suit \to \Pbb^\sharp \wedge \Qbb^\sharp.
\ee
As a consequence, one has 
\be
S^\sharp_{\Pbb}
=
\Qbb^\sharp \cdot \Qbb^\sharp
-\frac12 \Re\bigl((\Kpar^\sharp)^2+(\Jpar^\sharp)^2\bigr),
\ee
\be
S^\sharp_{\Qbb}
=
-\Pbb^\sharp \cdot \Qbb^\sharp
-\frac12 \Im\bigl((\Kpar^\sharp)^2+(\Jpar^\sharp)^2\bigr),
\qquad
T^\sharp_{\Qbb}
=
|t|^{-1}\,\Pbb^\sharp\wedge\Qbb^\sharp.
\ee
In particular, the equations for the essential geometry are recovered in the limit.
\end{lemma}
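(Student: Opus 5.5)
The plan is to separate the linear, weak-limit part of the statement from the nonlinear identification. The linear part uses only the uniform bounds~\eqref{equa:jdk13}--\eqref{equa-bounds5} and the geometric convergences already obtained. The nonlinear part is a metric-weighted div-curl (compensated compactness) argument.

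\emph{Weak limits and the linear equations.} By Lemmas~\ref{lemma-sup-Omega} and~\ref{lem-speed}, the weights $|t|\,\Omega^\suit\amdeux^\suit$ and $|t|\,\Omega^\suit$ in $\dVtrois{}^\suit$ and $\dVundeux{}^\suit$ are bounded above and below uniformly in~$\suit$. Hence~\eqref{equa:jdk13} gives uniform bounds in $L^\infty(\Interval;L^2(\Sbb^1))\cap L^\infty(\Sbb^1;L^2(\Interval))$ with respect to Lebesgue measure. Banach--Alaoglu then yields weak limits $\Pbb^\sharp,\Qbb^\sharp$ in that class, and~\eqref{equa-bounds5} gives weak-$*$ limits of $S^\suit_{\Pbb},S^\suit_{\Qbb},T^\suit_{\Qbb}$ as measures.

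Next we write the weak form~\eqref{eq-div-weak0} of $\divuntrois_\suit\Pbb^\suit=S^\suit_{\Pbb}$ in coordinates. It reads $-(|t|\amdeux^\suit\Omega^\suit P_0^\suit)_t+(|t|\Omega^\suit P_1^\suit)_x=|t|\amdeux^\suit(\Omega^\suit)^2S^\suit_{\Pbb}$, and similarly for the curls. In each flux, a strongly convergent coefficient multiplies a weakly convergent $L^2$ field: $\Omega^\suit\to\Omega^\sharp$ a.e. and boundedly by Lemma~\ref{lemma-73}, and $\amdeux^\suit\to\amdeux^\sharp$ in $L^1$ by Lemma~\ref{lemma-72}. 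For the $\amdeux$ factor we use the timelike $L^2$ bound to obtain equi-integrability of the products, so these products pass to the limit. The right-hand sides are weighted measures times bounded a.e.-convergent functions. I would therefore first rewrite the equations with the weights absorbed into the measures, so that the limit equations~\eqref{eq:limit-PQ-weak} hold with $S^\sharp$ defined as the weak limit of the weighted sources.

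\emph{Quadratic convergence: the main step.} Set $A^\suit=|t|\Omega^\suit(-\amdeux^\suit P_0^\suit,P_1^\suit)$ and $B^\suit=|t|^{-1}\cdot|t|\,\Omega^\suit(\amdeux^\suit Q_1^\suit,Q_0^\suit)$ (the curl field). Their coordinate divergences are the weighted sources, which are bounded in $\Meas$. By the $L^2$ bound the fields themselves are bounded in $L^2_{\loc}$, so $W^{-1,2}$ is bounded and the divergences are bounded in $W^{-1,p}$ for $p>2$. Murat's lemma then gives compactness in $H^{-1}_{\loc}$. The div-curl lemma yields $A^\suit\cdot B^\suit\rightharpoonup A^\sharp\cdot B^\sharp$ in distributions. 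Choosing pairs (div of $\Pbb$ with curl of $\Qbb$, div of $\Qbb$ with curl of $\Pbb$, and so on), the products $A\cdot B$ are $\amdeux(\Omega)^2$ times $\Pbb\cdot\Qbb$ or $\Pbb\wedge\Qbb$, up to signs. The point is that the Lorentzian null forms are exactly these Euclidean div-curl pairings after applying Hodge duality.

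The main obstacle is stripping the weight $\amdeux^\suit(\Omega^\suit)^2$. It converges only a.e. and boundedly, not strongly in $H^1$, while the product converges only in distributions. I would handle it as follows. The quadratic forms are bounded in $L^1$, and in the tame no-corrector case they are even $L^2$ by Lemma~\ref{theo:te-Euler-explicit}, so they are equi-integrable. Then multiplication by a bounded sequence that converges a.e., combined with Egorov's theorem, allows the weight to be divided out. Without equi-integrability, I would instead test against $\varphi/(\amdeux^\sharp(\Omega^\sharp)^2)$ after mollifying the weight, controlling the error by the uniform measure bound together with the convergence in measure of the weight. This gives~\eqref{eq:quad-conv-PQ}.

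\emph{Identifying the sources.} Finally, $\Kpar^\suit\to\Kpar^\sharp$ boundedly a.e. by Lemma~\ref{lemma-76}, and $\Jpar^\suit\to\Jpar^\sharp$ strongly in $L^2$, which is deferred to the fluid compactness of Section~\ref{section=9} (alternatively, I would state this part conditionally on it). Together with~\eqref{eq:quad-conv-PQ}, this identifies $S^\sharp_{\Pbb}$, $S^\sharp_{\Qbb}$ and $T^\sharp_{\Qbb}$, and the geometric equations are recovered in the limit.
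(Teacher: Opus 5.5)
Your outline follows the paper's proof: weak compactness from \eqref{equa:jdk13}--\eqref{equa-bounds5}, passage to the limit in the linear div/curl equations using the convergence of $\lambda^\varepsilon,\Omega^\varepsilon$, the weighted div-curl lemma for the null forms, and identification of the sources. You are more careful than the paper on two points. First, Lemma~\ref{lemdivcurl} assumes \emph{uniform} convergence of $\lambda^\varepsilon,\Omega^\varepsilon$, whereas Lemmas~\ref{lemma-72}--\ref{lemma-73} only give $L^1$ and a.e.\ convergence; your effort to strip the weight addresses exactly this. Second, the paper's last step treats $\Jpar^\varepsilon$ as bounded in $L^\infty$, but only $\Jhatpar$ is bounded. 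Your use of strong $L^2$ convergence of $\Jpar^\varepsilon$ from Section~\ref{section=9} is the correct input.

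The step that fails as written is the Murat argument. Murat's lemma needs the divergences bounded in $W^{-1,p}$ for some $p>2$, i.e.\ fields bounded in $L^p$ with $p>2$. Your $L^2$ bounds on $A^\varepsilon,B^\varepsilon$ only give $W^{-1,2}$, and the measure bound only gives compactness in $W^{-1,q}$ for $q<2$. With merely $L^2$ fields this is genuinely insufficient. In two dimensions, $\varphi^\varepsilon=\log\max(|z|,\varepsilon)/\sqrt{\log(1/\varepsilon)}$ has $\nabla\varphi^\varepsilon$ bounded in $L^2$, curl-free, with $\Delta\varphi^\varepsilon\to0$ in total variation. Yet $|\nabla\varphi^\varepsilon|^2\to2\pi\delta_0$ while $\nabla\varphi^\varepsilon\rightharpoonup0$. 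The paper glosses over the same point when it asserts that the div-curl hypotheses follow from the $L^2$ bounds together with the measure bounds.

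In the tame setting without corrector, the repair is the estimate you already invoke for the weights. By Lemma~\ref{theo:te-Euler-explicit}, $\Qbb\cdot\Qbb$, $\Pbb\cdot\Qbb$ and $\Pbb\wedge\Qbb$ are bounded in $L^2$, $\Jpar$ is bounded in $L^4$, and $\Kpar$ is bounded in $L^\infty$. Hence $S_\Pbb,S_\Qbb,T_\Qbb$ are bounded in $L^2(\Interval\times\Sbb^1)$. The weighted divergences and curls are then bounded in $L^2$, hence compact in $H^{-1}_{\loc}$ by Rellich, and Murat's lemma is not needed. The same bound makes the weight removal rigorous: $1/w^\varepsilon\to1/w^\sharp$ strongly in $L^2$, while $w^\varepsilon f^\varepsilon\rightharpoonup w^\sharp f^\sharp$ weakly in $L^2$. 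It also shows that the weak limit of the weighted sources is $w^\sharp S^\sharp$, with $S^\sharp$ the limit of the unweighted sources as in the statement, so your redefinition of $S^\sharp$ becomes unnecessary.

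With a corrector (the setting of Theorem~\ref{theorem-493}), Lemma~\ref{theo:te-Euler-explicit} is unavailable, and your mollification fallback does not close the argument. For products bounded only in $\Meas$, convergence in measure of the weight does not control $\int f^\varepsilon\,(1/w^\varepsilon-1/w^\sharp_\delta)\,\varphi$, because $f^\varepsilon$ may concentrate exactly where $\Omega^\varepsilon$ develops a jump. You would need either uniform convergence of the weights, which the paper simply assumes, or an analogue of the spacetime integrability estimate for flows with corrector.
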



\begin{proof}
{\bf 1. \it Weak compactness.}
By the uniform bound~\eqref{equa:jdk13}, the sequences \(\Pbb^\suit\) and \(\Qbb^\suit\) are bounded in \(L^2_{\loc}(\Interval\times\Sbb^1)\). Therefore, after extraction of a subsequence if necessary, they converge weakly in \(L^2_{\loc}\) to some limits, denoted by \(\Pbb^\sharp\) and \(\Qbb^\sharp\), respectively.

Moreover, the bounds~\eqref{equa-bounds5} imply that the source terms \(S^\suit_{\Pbb}\), \(S^\suit_{\Qbb}\), and \(T^\suit_{\Qbb}\) are uniformly bounded in the space of Radon measures on \(\Interval\times\Sbb^1\). Hence, after extraction of a further subsequence if necessary, they converge weakly-* in \(\Meas(\Interval\times\Sbb^1)\) to some limits, denoted by \(S^\sharp_{\Pbb}\), \(S^\sharp_{\Qbb}\), and \(T^\sharp_{\Qbb}\).

\medskip

\bse
\noindent {\bf 2. \it Passage to the limit in the linear equations.}
By Lemmas~\ref{lemma-72} and~\ref{lemma-73}, the geometric coefficients entering the operators \(\divuntrois_\suit\) and \(\curluntrois_\suit\) converge strongly to their limits. Therefore, passing to the limit in~\eqref{eq:T2-1234-def-r2} yields, in the sense of distributions,
\be
\aligned
\divuntrois_\sharp \bigl( \Pbb^\sharp \bigr)
& = S^\sharp_\Pbb,
\qquad
& \curluntrois_\sharp \Bigl( |t|^{-1} \Pbb^\sharp \Bigr) = 0,
\\
\divuntrois_\sharp \bigl( \Qbb^\sharp \bigr)
& = S^\sharp_\Qbb,
\qquad
& \curluntrois_\sharp \Bigl( |t|^{-1} \Qbb^\sharp \Bigr) = T^\sharp_\Qbb. 
\endaligned
\ee

\medskip

\noindent {\bf 3. \it Convergence of the quadratic terms.}
We now apply the compensated compactness lemma stated just below to the pairs
$
(\Pbb^\suit,\Pbb^\suit)$, 
$
(\Qbb^\suit,\Qbb^\suit)$, 
$
(\Pbb^\suit,\Qbb^\suit).
$
The assumptions of that lemma are satisfied thanks to the \(L^2\)-bounds on the fields, together with the measure bounds on their divergence and curl given by~\eqref{eq:T2-1234-def-r2} and~\eqref{equa-bounds5}. Hence we obtain, in the sense of distributions,
\be
\Pbb^\suit \cdot \Pbb^\suit \to \Pbb^\sharp \cdot \Pbb^\sharp,
\qquad
\Qbb^\suit \cdot \Qbb^\suit \to \Qbb^\sharp \cdot \Qbb^\sharp,
\qquad
\Pbb^\suit \cdot \Qbb^\suit \to \Pbb^\sharp \cdot \Qbb^\sharp,
\ee
as well as
$
\Pbb^\suit \wedge \Qbb^\suit \to \Pbb^\sharp \wedge \Qbb^\sharp.
$

\medskip

\noindent {\bf 4. \it Identification of the limit source terms.}
Since \(\Kpar^\suit\to\Kpar^\sharp\) and \(\Jpar^\suit\to\Jpar^\sharp\) almost everywhere, and since these sequences are uniformly bounded in \(L^\infty\), we also have
\be
\Re\bigl((\Kpar^\suit)^2+(\Jpar^\suit)^2\bigr)
\to
\Re\bigl((\Kpar^\sharp)^2+(\Jpar^\sharp)^2\bigr),
\ee
\be
\Im\bigl((\Kpar^\suit)^2+(\Jpar^\suit)^2\bigr)
\to
\Im\bigl((\Kpar^\sharp)^2+(\Jpar^\sharp)^2\bigr)
\ee
strongly in \(L^1_{\loc}\). Combining these convergences with~\eqref{eq:quad-conv-PQ}, we conclude that
\be
S^\sharp_{\Pbb}
=
\Qbb^\sharp \cdot \Qbb^\sharp
-\frac12 \Re\bigl((\Kpar^\sharp)^2+(\Jpar^\sharp)^2\bigr),
\ee
\be
S^\sharp_{\Qbb}
=
-\Pbb^\sharp \cdot \Qbb^\sharp
-\frac12 \Im\bigl((\Kpar^\sharp)^2+(\Jpar^\sharp)^2\bigr),
\qquad
T^\sharp_{\Qbb}
=
|t|^{-1}\,\Pbb^\sharp\wedge\Qbb^\sharp.
\ee
This proves that the equations for the essential geometry are recovered in the limit.
\ese
\end{proof}


\subsubsection{Compensated compactness argument}

We recall our notation for a one-form \(\Xbb\) with components \((X_0,X_1)\), with contravariant components
$
X^0=-X_0$ and $
X^1=X_1.
$
We defined
\bel{equa-divN-rr} 
\aligned
\divN(\Xbb)
& \coloneqq \amdeux^{-1} \Omega^{-2}
\bigl(-(\amdeux \Omega X_0)_t + (\Omega X_1)_x\bigr),
\\
\divuntrois \Xbb & \coloneqq |t|^{-1} \divN(|t| \, \Xbb),
\endaligned
\ee
as well as
\bel{equa-jd45-rr}
\curl_\Ncal\Xbb 
\coloneqq \amdeux^{-1} \Omega^{-2}
\bigl((\amdeux \Omega X_1)_t - (\Omega X_0)_x\bigr),
\ee
\bel{equa-def-curl-rr}
\curluntrois \Xbb \coloneqq |t|^{-1} \curl_\Ncal( |t| \, \Xbb).
\ee

We now state the compensated compactness result that will be used repeatedly below. It is a direct consequence of the standard div-curl lemma of Murat and Tartar~\cite{Murat-1974,Tartar1,Tartar2}, combined with the uniform convergence of the geometric weights \(\amdeux^\suit\) and \(\Omega^\suit\) established in Lemmas~\ref{lemma-72} and~\ref{lemma-73}.

\begin{lemma}[Weighted div-curl lemma adapted to the geometry]
\label{lemdivcurl}
Let \(\Xbb^\suit=(X^\suit_0,X^\suit_1)\) and \(\Ybb^\suit=(Y^\suit_0,Y^\suit_1)\) be sequences of one-forms defined on \(\Interval\times\Sbb^1\) satisfying the following conditions. 

\bei
\item \(\Xbb^\suit\) and \(\Ybb^\suit\) are uniformly bounded in \(L^2(\Interval\times\Sbb^1)\), and
\be
\Xbb^\suit \rightharpoonup \Xbb^\sharp,
\qquad
\Ybb^\suit \rightharpoonup \Ybb^\sharp
\quad
\text{weakly in }L^2(\Interval\times\Sbb^1).
\ee

\item The coefficients \(\amdeux^\suit\) and \(\Omega^\suit\) converge uniformly on \(\Interval\times\Sbb^1\) to limits \(\amdeux^\sharp\) and \(\Omega^\sharp\), respectively, and remain uniformly bounded above and below by positive constants. 

\item The sequences \(\divuntrois_\suit \Xbb^\suit\) and \(\curluntrois_\suit \Ybb^\suit\) are relatively compact in \(H^{-1}(\Interval\times\Sbb^1)\).
\eei
\noindent Then the quadratic expression \(\Xbb^\suit\cdot\Ybb^\suit\) converges to \(\Xbb^\sharp\cdot\Ybb^\sharp\) in the sense of distributions, that is,
\be
\Xbb^\suit\cdot\Ybb^\suit
\to
\Xbb^\sharp\cdot\Ybb^\sharp
\qquad
\text{ in }\Dcal'(\Interval\times\Sbb^1), 
\ee
with the same conclusion for the wedge product:
\be
\Xbb^\suit\wedge\Ybb^\suit
\to
\Xbb^\sharp\wedge\Ybb^\sharp
\qquad
\text{ in }\Dcal'(\Interval\times\Sbb^1).
\ee
\end{lemma}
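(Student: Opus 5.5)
The plan is to reduce the statement to the classical div-curl lemma of Murat and Tartar on the flat cylinder $\Interval\times\Sbb^1$. We do this by absorbing the geometric weights into the fields, and then use the uniform convergence of $\amdeux^\suit,\Omega^\suit$ to transfer the conclusion back. Concretely, the weighted operators \eqref{equa-divN-rr}--\eqref{equa-def-curl-rr} read
\[
\divuntrois_\suit\Xbb = |t|^{-1}(\amdeux^\suit)^{-1}(\Omega^\suit)^{-2}\bigl(-(|t|\amdeux^\suit\Omega^\suit X_0)_t+(|t|\Omega^\suit X_1)_x\bigr),
\]
\[
\curluntrois_\suit\Ybb = |t|^{-1}(\amdeux^\suit)^{-1}(\Omega^\suit)^{-2}\bigl((|t|\amdeux^\suit\Omega^\suit Y_1)_t-(|t|\Omega^\suit Y_0)_x\bigr).
\]
We therefore introduce the flat vector fields
\[
U^\suit \coloneqq \bigl(-|t|\amdeux^\suit\Omega^\suit X_0^\suit,\ |t|\Omega^\suit X_1^\suit\bigr),
\qquad
W^\suit \coloneqq \bigl(|t|\amdeux^\suit\Omega^\suit Y_1^\suit,\ -|t|\Omega^\suit Y_0^\suit\bigr).
\]
With these, the flat divergences are $\partial_tU^\suit_0+\partial_xU^\suit_1 = |t|\amdeux^\suit(\Omega^\suit)^2\divuntrois_\suit\Xbb^\suit$, and the analogous identity holds for $W^\suit$ with the curl.

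First step: $U^\suit,W^\suit$ are bounded in $L^2$ and converge weakly to the corresponding expressions built from $\Xbb^\sharp,\Ybb^\sharp$ with weights $\amdeux^\sharp,\Omega^\sharp$. This follows because uniformly convergent bounded weights times weakly convergent $L^2$ fields converge weakly.

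Second step, which I expect to be the main technical point: show that $\partial_tU^\suit_0+\partial_xU^\suit_1$ and $\partial_tW^\suit_0+\partial_xW^\suit_1$ are relatively compact in $H^{-1}$. Each is the product of the smooth-in-$t$, uniformly convergent, bounded-below weight $m^\suit=|t|\amdeux^\suit(\Omega^\suit)^2$ with a sequence $f^\suit$ that is relatively compact in $H^{-1}$. Multiplication by a merely continuous weight is not bounded on $H^{-1}$, so I would not argue directly. Instead I would use the divergence-form identity itself. The quantity $m^\suit f^\suit$ is a flat divergence of an $L^2$-bounded field, hence bounded in $H^{-1}$, and is also bounded in $W^{-1,p}$ for $p<2$. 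Decompose $m^\suit f^\suit = m^\sharp f^\suit + (m^\suit-m^\sharp)f^\suit$. In the situations where the lemma is applied (Lemma~\ref{lemma-74}), $f^\suit$ is in fact bounded in measures, by \eqref{equa-bounds5}. Then the second piece tends to zero in measure norm by uniform convergence, hence it is compact in $W^{-1,q}$ for $q<2$. The first piece is compact in $W^{-1,q}$ by compact embedding of measures (or by the $H^{-1}$ hypothesis after a Lipschitz approximation of $m^\sharp$). Murat's lemma (compact in $W^{-1,q}$, $q<2$, and bounded in $W^{-1,r}$, $r>2$ or $r=\infty$) then gives $H^{-1}_{\loc}$ compactness. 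If the weights only converge uniformly without extra regularity, I would first mollify $m^\sharp$ and control the remainder by the uniform error times the $L^2$ bound of the fields.

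Third step: apply the classical div-curl lemma to $(U^\suit,W^\suit)$. For a divergence-compact field $U$ and curl-compact field $\tilde W=(W_1,-W_0)$ (rotated so that $\operatorname{div}W$ becomes a curl), it yields $U^\suit\cdot\tilde W^\suit\to U^\sharp\cdot\tilde W^\sharp$ in $\Dcal'$. Expanding, $U^\suit_0W^\suit_1 - U^\suit_1W^\suit_0$ (up to sign conventions) equals $t^2\amdeux^\suit(\Omega^\suit)^2(X_0^\suit Y_1^\suit\cdot\text{sign}\ldots)$. A direct computation shows the product equals $t^2\amdeux^\suit(\Omega^\suit)^2\bigl(X^\suit_0Y^\suit_0 - X^\suit_1Y^\suit_1\bigr)=t^2\amdeux^\suit(\Omega^\suit)^2(-\Xbb^\suit\cdot\Ybb^\suit)$ for one pairing, and the wedge $X_0Y_1-X_1Y_0$ for the unrotated pairing. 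The wedge case is obtained by exchanging the roles, i.e. by applying the Hodge star $Y_0\leftrightarrow -Y_1$, which swaps div and curl. Finally we divide by the uniformly convergent, bounded-below weight $t^2\amdeux^\suit(\Omega^\suit)^2$. Since the products are bounded in $L^1$, replacing $(m^\suit)^{-1}$ by $(m^\sharp)^{-1}$ costs $o(1)$ in $L^1$, and $(m^\sharp)^{-1}$ times a distributionally convergent $L^1$-bounded sequence converges after testing against smooth functions (approximating $(m^\sharp)^{-1}\varphi$ uniformly). This gives $\Xbb^\suit\cdot\Ybb^\suit\to\Xbb^\sharp\cdot\Ybb^\sharp$ and likewise for the wedge.
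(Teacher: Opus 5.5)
Your treatment of $\Xbb^\suit\cdot\Ybb^\suit$ follows the paper's argument: your $U^\suit$ and your rotated $W^\suit$ are, up to signs, the paper's weighted fields. The product identity and the final division by $t^2\amdeux^\suit(\Omega^\suit)^2$ are correct.

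The wedge step, however, fails, and it cannot be repaired under the stated hypotheses. The unrotated product $U^\suit\cdot W^\suit=-t^2(\Omega^\suit)^2\bigl((\amdeux^\suit)^2X^\suit_0Y^\suit_1+X^\suit_1Y^\suit_0\bigr)$ is not the wedge, and it pairs two divergence-compact fields. The Hodge star $\star(Y_0,Y_1)=(Y_1,Y_0)$ does not help either: it only turns curl-compactness of $\Ybb^\suit$ into divergence-compactness of $\star\Ybb^\suit$, since $\divuntrois_\suit\star\Ybb=-\curluntrois_\suit\Ybb$. So again no div--curl pair is available. Here is a counterexample satisfying all the hypotheses. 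Take $\amdeux=\Omega=1$, $\Xbb^n=|t|^{-1}(\cos(2\pi nx),0)$ and $\Ybb^n=|t|^{-1}(0,\cos(2\pi nx))$. Then $\divuntrois\Xbb^n=\curluntrois\Ybb^n=0$ and $\Xbb^n,\Ybb^n\rightharpoonup0$. Yet $\Xbb^n\wedge\Ybb^n=t^{-2}\cos^2(2\pi nx)\rightharpoonup t^{-2}/2$, while $\Xbb^n\cdot\Ybb^n=0$ as it should. The paper's one-line instruction (``replacing the Euclidean scalar product by the determinant'') has exactly the same defect. The wedge conclusion needs one more compactness hypothesis. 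If $\divuntrois_\suit\Ybb^\suit$ is also $H^{-1}$-compact, apply the dot-product case to $(\Xbb^\suit,\star\Ybb^\suit)$, using $\Xbb\cdot\star\Ybb=-\Xbb\wedge\Ybb$ and $\curluntrois_\suit\star\Ybb=-\divuntrois_\suit\Ybb$. Argue symmetrically if $\curluntrois_\suit\Xbb^\suit$ is compact. This extra hypothesis does hold for $\Pbb\wedge\Qbb$ in Lemma~\ref{lemma-74}, where the divergence and curl of both $\Pbb$ and $\Qbb$ are controlled.

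Your second step does not close either. You are right that multiplication by a merely continuous weight is not bounded on $H^{-1}$; the paper simply asserts that it preserves $H^{-1}$-compactness. But your repair has two problems. First, it uses a measure bound on $f^\suit$ that is not a hypothesis of the lemma. Second, and more seriously, Murat's lemma needs a bound in $W^{-1,r}$ for some $r>2$, whereas $U^\suit,W^\suit\in L^2$ only give $r=2$. At that endpoint the implication is false. The truncated logarithms $\phi_k(z)=\log\bigl(1/\max(|z|,1/k)\bigr)/\sqrt{\log k}$ on the unit disc, extended by zero, satisfy $\|\nabla\phi_k\|_{L^2}^2=2\pi$ and $\Delta\phi_k\to0$ in total variation, yet $|\nabla\phi_k|^2\rightharpoonup2\pi\delta_0$. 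So measure control of the divergence does not even yield the dot-product conclusion.

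The mollification fallback fails for the same reason. Writing $(m^\suit-m_\delta)f^\suit$ as the divergence of a small $L^2$ field leaves a term $U^\suit\cdot\nabla\bigl((m^\suit-m_\delta)/m^\suit\bigr)$, which requires derivatives of the weights.

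What makes this step correct is a uniform Lipschitz bound on $\amdeux^\suit,\Omega^\suit$. Some such multiplier regularity is implicit anyway for $\divuntrois_\suit\Xbb^\suit$ to be an element of $H^{-1}$. With it, multiplication by $m^\suit$ is uniformly bounded on $H^{-1}$. If $f^\suit\to f$ in $H^{-1}$, then $m^\suit f^\suit-m^\sharp f=m^\suit(f^\suit-f)+(m^\suit-m^\sharp)f\to0$ in $H^{-1}$. For the last term, approximate $f$ in $H^{-1}$ by $L^2$ functions and use $\|m^\suit-m^\sharp\|_{L^\infty}\to0$.
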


\begin{proof}
\bse
We introduce the weighted fields
\be
\widecheck \Xbb^\suit
\coloneqq
\bigl(|t|\amdeux^\suit\Omega^\suit X_0^\suit,\,
|t|\Omega^\suit X_1^\suit\bigr),
\qquad
\widecheck \Ybb^\suit
\coloneqq
\bigl(|t|\Omega^\suit Y_0^\suit,\,
|t|\amdeux^\suit\Omega^\suit Y_1^\suit\bigr).
\ee
Since \(\amdeux^\suit\) and \(\Omega^\suit\) are uniformly bounded above and below, and since \(\Xbb^\suit\) and \(\Ybb^\suit\) are uniformly bounded in \(L^2(\Interval\times\Sbb^1)\), it follows that the sequences \(\widecheck \Xbb^\suit\) and \(\widecheck \Ybb^\suit\) are uniformly bounded in \(L^2(\Interval\times\Sbb^1)\). Moreover, by the strong convergence of the weights, they converge weakly in \(L^2\) toward the corresponding weighted limits
\be
\widecheck \Xbb^\sharp
=
\bigl(|t|\amdeux^\sharp\Omega^\sharp X_0^\sharp,\,
|t|\Omega^\sharp X_1^\sharp\bigr),
\qquad
\widecheck \Ybb^\sharp
=
\bigl(|t|\Omega^\sharp Y_0^\sharp,\,
|t|\amdeux^\sharp\Omega^\sharp Y_1^\sharp\bigr).
\ee
\ese

Next, by the definition of the weighted divergence and curl operators, we have 
\be
\aligned
-(\widecheck X^\suit_0)_t + (\widecheck X^\suit_1)_x
& =
|t|\,\amdeux^\suit(\Omega^\suit)^2 \, \divuntrois_\suit \Xbb^\suit,
\\
(\widecheck Y^\suit_1)_t - (\widecheck Y^\suit_0)_x
& =
|t|\,\amdeux^\suit(\Omega^\suit)^2 \, \curluntrois_\suit \Ybb^\suit.
\endaligned
\ee
Since the coefficients
$
|t|\,\amdeux^\suit(\Omega^\suit)^2
$
converge strongly and remain uniformly bounded, multiplication by these coefficients preserves \(H^{-1}_{\loc}\)-compactness. Hence, by assumption, the Euclidean divergence of \(\widecheck \Xbb^\suit\) and the Euclidean curl of \(\widecheck \Ybb^\suit\) are relatively compact in \(H^{-1}_{\loc}(\Interval\times\Sbb^1)\).

\bse
We may therefore apply the standard div-curl lemma in two space-time dimensions and conclude that
\be
\widecheck \Xbb^\suit \cdot \widecheck \Ybb^\suit
\to
\widecheck \Xbb^\sharp \cdot \widecheck \Ybb^\sharp
\qquad
\text{in }\Dcal'(\Interval\times\Sbb^1).
\ee
Now, we find 
$
\widecheck \Xbb^\suit \cdot \widecheck \Ybb^\suit
=
|t|^2\amdeux^\suit(\Omega^\suit)^2\, \Xbb^\suit\cdot\Ybb^\suit,
$
and similarly for the limit. Since the factor
$
|t|^2\amdeux^\suit(\Omega^\suit)^2
$
converges strongly and remains uniformly bounded away from zero, division by this coefficient is legitimate in the sense of distributions. We thus obtain
\be
\Xbb^\suit\cdot\Ybb^\suit
\to
\Xbb^\sharp\cdot\Ybb^\sharp
\qquad
\text{in }\Dcal'(\Interval\times\Sbb^1).
\ee
The argument for the wedge product is identical, replacing the Euclidean scalar product by the determinant. This completes the proof.
\ese
\end{proof}


\subsection{Analysis and propagation of the corrector measures}
\label{section=8-6}

\subsubsection{Further structure}

We continue our study of the convergence of the essential geometry variables, with two distinct goals in mind.  
First, for a bounded sequence of Einstein--Euler flows with corrector, we expect a possible lack of strong convergence of the essential geometry variables \((\Pbb^\suit,\Qbb^\suit, \Pi^\suit)\) to generate a non-trivial corrector tensor \(\Pi^\sharp\), entering the lapse equations of the limit flow even if $\Pi^\suit$ are zero initially.   
Second, in the well-prepared regime, we expect that no such correction is produced and that the essential geometry converges strongly.

At this stage, we split the source term of the lapse equations treated in \autoref{section=8-3} into a contribution from the corrector and essential geometric variables on the one hand, and the remaining part
\be
\aligned
S^\suit_{\mathrm{lapse},0}
& = - \frac{1}{4t}
+ \frac{t}{2} \Bigl(
\Ebf_0(\Jperp{}^\suit)
- \Ebf_0(\Kpar^\suit,\Jpar^\suit)
+ \frac{q_\Jbb^\suit}{2}\,\Jbb^\suit \cdot \Jbb^\suit
\Bigr) (\Omega^\suit)^2 ,
\\
S^\suit_{\mathrm{lapse},1}
& = - \frac{t}{2}\,\Mbf^{01}(\Jperp{}^\suit)\,
(\Omega^\suit)^2 \amdeux^\suit ,
\endaligned
\ee
so that~\eqref{define-Xi} reads
\bel{define-Xi-again}
\aligned
\Xi^\suit_{11}
& = \Pi^\suit_{11}
+ \frac{t}{2} \Ebf_0(\Pbb^\suit,\Qbb^\suit) (\Omega^\suit)^2 \amdeux^\suit
+ \amdeux^\suit S^\suit_{\mathrm{lapse},0} ,
\\
\Xi^\suit_{01} & = \Pi^\suit_{01}
+ \frac{t}{2}\,\Ebf_1(\Pbb^\suit,\Qbb^\suit)\,
(\Omega^\suit)^2 \amdeux^\suit
+ S^\suit_{\mathrm{lapse},1} .
\endaligned
\ee
We are thus ready to \emph{define} the corrector $\Pi^\sharp$ that will ensure that the lapse equations hold in the limit:
\bel{define-Xi-again-sharp}
\aligned
\Pi^\sharp_{11}
& \coloneqq \Xi^\sharp_{11}
- \frac{t}{2} \Ebf_0(\Pbb^\sharp,\Qbb^\sharp) (\Omega^\sharp)^2 \amdeux^\sharp
- \amdeux^\sharp S^\sharp_{\mathrm{lapse},0} ,
\\
\Pi^\sharp_{01} & \coloneqq \Xi^\sharp_{01}
- \frac{t}{2}\,\Ebf_1(\Pbb^\sharp,\Qbb^\sharp)\, (\Omega^\sharp)^2 \amdeux^\sharp
- S^\sharp_{\mathrm{lapse},1} .
\endaligned
\ee
We recall that $\Xi^\suit\to\Xi^\sharp$ weakly-$*$ in measure.

\subsubsection{The corrector as a geometric effect}

Once we obtain strong convergence of the fluid variables, the corrector tensor \(\Pi^\sharp\) will arise from the failure of the quadratic expressions in \((\Pbb^\suit,\Qbb^\suit)\) to converge strongly. More precisely, the null-form structure implies that the combinations
\be
(P_0^\suit\pm P_1^\suit)^2+(Q_0^\suit\pm Q_1^\suit)^2
\ee
are non-negative and may, in the limit, generate positive Radon measures. This is the origin of the inequalities $\Pi^{00}{}^\sharp=\Pi^{11}{}^\sharp\geq 0$ and $\Pi^{00}{}^\sharp\geq |\Pi^{01}{}^\sharp|$ stated in~\eqref{eq=allconfor-infty-mod}.
The corrector thus measures the part of the geometric energy lost under weak convergence.

\begin{lemma}[Lapse equation in the limit]
  With the construction of the limit corrector $\Pi^\sharp$ above, the lapse equations hold, in the sense that
\be
\aligned
(\log\Omega^\sharp)_t
& = (\amdeux^\sharp)^{-1} \Pi^\sharp_{11}
- \frac{1}{4t}
+ \frac{t}{2} \Bigl(
\Ebf_0(\Pbb^\sharp,\Qbb^\sharp,\Jperp{}^\sharp)
- \Ebf_0(\Kpar^\sharp,\Jpar^\sharp)
+ \frac{q_\Jbb^\sharp}{2}\,\Jbb^\sharp \cdot \Jbb^\sharp
\Bigr) (\Omega^\sharp)^2 ,
\\
(\log\Omega^\sharp )_x
& = \Pi^\sharp_{01}
- \frac{t}{2}\,\Mbf^{01}(\Pbb^\sharp,\Qbb^\sharp,\Jperp{}^\sharp)\,
(\Omega^\sharp)^2 \amdeux^\sharp.
\endaligned
\ee
In addition, if
\bel{equa-bounds3c}
\aligned
& S^\suit_{\mathrm{lapse},0}\to S^\sharp_{\mathrm{lapse},0}
\quad\text{strongly in } L^\infty(\Sbb^1;L^1(\Interval)),
\\
& S^\suit_{\mathrm{lapse},1}\to S^\sharp_{\mathrm{lapse},1}
\quad\text{strongly in } L^\infty(\Interval;L^1(\Sbb^1)),
\endaligned
\ee
then the limit satisfies
\be
\Pi^{00}{}^\sharp=\Pi^{11}{}^\sharp\geq 0,
\qquad
\Pi^{00}{}^\sharp\geq |\Pi^{01}{}^\sharp|.
\ee
\end{lemma}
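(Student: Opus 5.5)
The first assertion is essentially a rewriting of the definitions. For each $\suit$ the lapse equations~\eqref{equa-820} hold, with $\Xi^\suit$ as in~\eqref{define-Xi}. By \autoref{lemma-73}, the limit satisfies $(\log\Omega^\sharp)_t=(\amdeux^\sharp)^{-1}\Xi^\sharp_{11}$ and $(\log\Omega^\sharp)_x=\Xi^\sharp_{01}$ on timelike and spacelike slices, respectively. I will substitute the definition~\eqref{define-Xi-again-sharp} of $\Pi^\sharp$, i.e. $\Xi^\sharp_{11}=\Pi^\sharp_{11}+\frac t2\Ebf_0(\Pbb^\sharp,\Qbb^\sharp)(\Omega^\sharp)^2\amdeux^\sharp+\amdeux^\sharp S^\sharp_{\mathrm{lapse},0}$ and the analogous identity for $\Xi^\sharp_{01}$. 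Regrouping $\Ebf_0(\Pbb,\Qbb)+\Ebf_0(\Jperp)=\Ebf_0(\Pbb,\Qbb,\Jperp)$ and $-\Ebf_1(\Pbb,\Qbb)-\Ebf_1(\Jperp)=\Mbf^{01}$ then gives exactly the stated equations. The only care needed is that $S^\sharp_{\mathrm{lapse},a}$ denotes the weak limit of $S^\suit_{\mathrm{lapse},a}$ (extracting a further subsequence), and that it agrees with the same expression evaluated at the limit fields. I would either take this as the meaning of the notation or postpone it until strong convergence of $\Jbb^\suit$, $\Kpar^\suit$, $\Omega^\suit$ is available. The products with $(\amdeux^\sharp)^{-1}$ are legitimate because $\amdeux^\sharp$ is continuous along a.e. timelike slice.

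For the positivity statement, assumption~\eqref{equa-bounds3c} makes the $S_{\mathrm{lapse}}$ contributions converge strongly, so they cancel in the difference. What remains is
\[
\Pi^\sharp_{ab}=\lim_{\suit\to0}\Bigl(\Pi^\suit_{ab}+\tfrac t2(\Omega^\suit)^2\amdeux^\suit\,\mathcal Q^\suit_{ab}\Bigr)-\tfrac t2(\Omega^\sharp)^2\amdeux^\sharp\,\mathcal Q^\sharp_{ab},
\]
with $\mathcal Q_{11}=\Ebf_0(\Pbb,\Qbb)$ and $\mathcal Q_{01}=\Ebf_1(\Pbb,\Qbb)$. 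Up to the sign of $t$, which I will track separately in each regime, the defect relevant for $\Pi^{00}\pm\Pi^{01}$ is the weak-$*$ limit of $\Pi^\suit_{00}\pm\Pi^\suit_{01}\ge0$. This is plus the defect of $\frac14|t|(\Omega^\suit)^2\amdeux^\suit\bigl((P_0^\suit\pm P_1^\suit)^2+(Q_0^\suit\pm Q_1^\suit)^2\bigr)$. I will show this defect equals the corresponding limit minus the same expression at $\sharp$, and that it is a nonnegative measure. The key step is weak lower semicontinuity: if $f^\suit\rightharpoonup f^\sharp$ in $L^2$ and $w^\suit\to w^\sharp$ a.e. with uniform bounds (here $w=|t|\Omega^2\amdeux$, by Lemmas~\ref{lemma-72} and~\ref{lemma-73}), then $\liminf\int\varphi w^\suit(f^\suit)^2\ge\int\varphi w^\sharp(f^\sharp)^2$ for $\varphi\ge0$. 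To prove this I expand $w^\suit(f^\suit-f^\sharp)^2\ge0$. The cross terms converge because $w^\suit f^\sharp\to w^\sharp f^\sharp$ strongly in $L^2$ by dominated convergence. Adding the $+$ and $-$ defects gives $\Pi^{00}\ge|\Pi^{01}|$. For $\Pi^{00}=\Pi^{11}$, I will use that the correctors satisfy $\Pi^\suit_{00}=\Pi^\suit_{11}$, and that the null forms $\Pbb\cdot\Pbb$ and $\Qbb\cdot\Qbb$ converge by \autoref{lemdivcurl}. Hence $P_0^2$ and $P_1^2$, and likewise for $Q$, have the same defect, which identifies the $00$ and $11$ corrections.

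I expect the main obstacle to be a mismatch of topologies. The quantities $\Xi$ converge weakly-$*$ only slice by slice, whereas the lower semicontinuity argument is spacetime. Moreover, $\Pi^\sharp_{00}$ is not directly produced by the lapse equations, which only involve the $11$ and $01$ components. I would therefore work in spacetime measures throughout, deduce the slice statements from the uniform slice bounds together with Fubini, and define $\Pi^\sharp_{00}:=\Pi^\sharp_{11}$, checking that this is consistent with the limit of the modified energy inequality.
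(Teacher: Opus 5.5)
Your approach is the paper's own. The paper gives no separate proof: the first identity is the construction \eqref{define-Xi-again-sharp} combined with \autoref{lemma-73}. Positivity rests on the remark that $\Ebf_0(\Pbb,\Qbb)\mp\Ebf_1(\Pbb,\Qbb)=\frac12(P_0\mp P_1)^2+\frac12(Q_0\mp Q_1)^2$ produces nonnegative defect measures. Your weak lower semicontinuity argument (expanding $w^\suit(f^\suit-f^\sharp)^2\geq0$, under the same uniform bounds on $\amdeux^\suit,\Omega^\suit$ the paper uses in \autoref{lemdivcurl}) is a correct way to make that remark precise. One step, however, does not work as written: your treatment of the sign of $t$.

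In your displayed identity the quadratic term carries the prefactor $\frac t2$, but $\Pi^\suit_{ab}$ does not.
\begin{itemize}
\item If $\Pi^\suit_{ab}$ there is the nonnegative corrector, then for $t<0$ you get the weak-$*$ limit of $\Pi^\suit_{11}$ minus $\frac{|t|}{2}$ times a nonnegative defect measure, so positivity cannot follow.
\item If it is the quantity appearing in \eqref{define-Xi}, then your hypothesis $\Pi^\suit_{00}\pm\Pi^\suit_{01}\geq0$ is false for $t<0$.
\end{itemize}
Writing $\frac14|t|$ in place of $\frac t4$ is exactly the unjustified step, and tracking the sign regime by regime does not fix it.

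What is missing is the normalization. Compare \eqref{equa-820}--\eqref{define-Xi} with \eqref{eq:evollambda-new-mod2a}--\eqref{eq:evollambda-new-mod2b}: the lower-index quantities of \autoref{section=8} are $\Pi_{11}=\frac t2\,\Pi^{11}$ and $\Pi_{01}=-\frac t2\,\Pi^{01}$. Thus the corrector and the $(\Pbb,\Qbb)$ terms share the factor $\frac t2$. Divide by this smooth non-vanishing function of $t$, which commutes with weak-$*$ limits, and cancel the fluid contributions using \eqref{equa-bounds3c}. With $w\coloneqq\amdeux\Omega^2$ and all limits taken weakly-$*$ in measures, you get
\[
\Pi^{11}{}^\sharp\pm\Pi^{01}{}^\sharp
=\lim_{\suit\to0}\bigl(\Pi^{11}{}^\suit\pm\Pi^{01}{}^\suit\bigr)
+\lim_{\suit\to0}\bigl(w^\suit\,\mathcal D_\mp^\suit\bigr)-w^\sharp\,\mathcal D_\mp^\sharp ,
\qquad
\mathcal D_\mp\coloneqq\Ebf_0(\Pbb,\Qbb)\mp\Ebf_1(\Pbb,\Qbb),
\]
in which $t$ no longer appears. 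For each choice of sign, the first term on the right is nonnegative because the $\Pi^\suit$ are stress-energy correctors. The defect term is nonnegative by your lower semicontinuity argument. It is these two inequalities taken separately, not their sum as you wrote, that give $\Pi^{11}{}^\sharp\geq|\Pi^{01}{}^\sharp|$.
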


\subsubsection{Divergence inequality on the corrector}

A further key point is to show that the corrector tensor \(\Pi^\sharp\) satisfies the divergence inequalities
\bel{equag3930}
\divuntrois{}^\sharp\bigl(\Omega^\sharp{}^{-1}\amdeux^\sharp{}^{-1}\Pi^\sharp{}^{\pm\bullet}\bigr)\leq 0 .
\ee
Formally, these identities express that the geometric energy lost under weak convergence should itself obey a homogeneous balance law, analogous to the propagation of an effective null fluid with possibly additional decay.
In particular, given that $\Pi^\sharp{}^{\pm 0}\geq 0$, if it vanishes at the initial time, it remains zero for all times.

Let us establish~\eqref{equag3930} for $\Pi^\sharp{}^{+\bullet} = \Pi^\sharp{}^{0\bullet} + \Pi^\sharp{}^{1\bullet}$ as the other sign is treated identically.
The key idea is that the essential geometric variables satisfy an energy identity, which we explicited in~\eqref{energy-PQ-quartic}, namely
\bel{eqdivMdeltageom}
\divuntrois_\suit\Bigl( t^{-1}\Omega^\suit{}^{-1}\Mbf^{\pm\bullet}(0,0,\Pbb^\suit,\Qbb^\suit)\Bigr) 
\leq \Deltageom^{\pm}(\Phi^\suit, \Psi^\suit),
\ee
where $\Deltageom^{\pm}(\Phi^\suit, \Psi^\suit)$ consists of terms that will be shown to converge strongly and additional terms that are squares such as $(\Pbb^\suit\cdot\Pbb^\suit)^2$.
The latter terms do not converge to their natural limit  $(\Pbb^\sharp\cdot\Pbb^\sharp)^2$, but receive additional positive contributions  $(\Pbb^\suit\cdot\Pbb^\suit-\Pbb^\sharp\cdot\Pbb^\sharp)^2$.  Likewise the left-hand side of~\eqref{eqdivMdeltageom} does not converge to its natural limit but rather has an additional contribution that leads to the divergence of~$\Pi^\sharp$.

The identity~\eqref{eqdivMdeltageom} is derived in a weak regularity setting by introducing a characteristic coordinate~$z$.
For any given $x_0\in\Sbb^1$, we consider the equation for a characteristic curve $\undt=\undt(x)$ starting at that point.  It is obtained by solving
\be
\frac{d}{dx} \undt(x) = \amdeux^\suit(\undt(x), x) , \qquad \undt(x_0) = t_0 .
\ee
The function $x\mapsto\undt(x)$ is $\BVac$, hence by inverting the map, we construct a coordinate~$z$ such that in the coordinates $(z,x)$, the constant-$z$ curves are the above characteristics.
The divergence and curl equations for $(\Pbb^\suit,\Qbb^\suit)$, once expressed in the $(z,x)$ coordinates, are ODEs of the form $\del_z \Pbb^\suit = \text{sources}$, which easily give the desired equations on $\del_z (\Pbb^\suit)^2$.

\begin{lemma}[Divergence equation of the corrector]
If the fluid contributions converge as stated in~\eqref{equa-bounds3c} and the non-squared terms in $\Deltageom^{\pm}(\Phi, \Psi)$ converge strongly, namely
\be
\aligned
& (P_0^\suit\pm P_1^\suit) \Re(\Kpar^\suit{}^2+\Jpar^\suit{}^2)+(Q^\suit_0-Q^\suit_1)\Im(\Kpar^\suit{}^2+\Jpar^\suit{}^2)\Bigr)
\\
& \quad \to (P_0^\sharp\pm P_1^\sharp) \Re(\Kpar^\sharp{}^2+\Jpar^\sharp{}^2)+(Q^\sharp_0-Q^\sharp_1)\Im(\Kpar^\sharp{}^2+\Jpar^\sharp{}^2)\Bigr) ,
\endaligned
\ee
then the divergence inequalities for $\Pi^\sharp{}^{\pm\bullet}$ hold.
\end{lemma}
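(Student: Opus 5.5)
We establish the inequality for $\Pi^\sharp{}^{+\bullet}$; the sign $-$ is identical after exchanging the null directions. The plan is to pass to the limit in the approximate energy inequality~\eqref{eqdivMdeltageom} satisfied by $(\Pbb^\suit,\Qbb^\suit)$, and then to subtract the corresponding identity satisfied by the limit $(\Pbb^\sharp,\Qbb^\sharp)$. The defect between the two is, by the definition~\eqref{define-Xi-again-sharp}, precisely the divergence of $\Omega^\sharp{}^{-1}\amdeux^\sharp{}^{-1}\Pi^\sharp{}^{+\bullet}$, and we then show that this defect carries a sign.

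\textbf{First step: the inequality for the sequence.} In the characteristic coordinates $(z,x)$ introduced above, we derive~\eqref{eqdivMdeltageom} for each $\suit$, with the quadratic flux $\Mbf^{+\bullet}(0,0,\Pbb^\suit,\Qbb^\suit)$. The latter is proportional to $(P^\suit_0+P^\suit_1)^2+(Q^\suit_0+Q^\suit_1)^2$ times a null vector. We add to it the corrector flux $\Omega^\suit{}^{-1}\amdeux^\suit{}^{-1}\Pi^\suit{}^{+\bullet}$, which obeys $\opL_{16}\le0$. The resulting right-hand side splits into three parts:
\begin{itemize}
\item cubic cross terms of the form $(P_0^\suit+P_1^\suit)\Re(\Kpar^\suit{}^2+\Jpar^\suit{}^2)$, which converge by hypothesis;
\item fluid and lapse sources, which converge by~\eqref{equa-bounds3c};
\item terms quadratic in $\Pbb\cdot\Pbb$, $\Pbb\cdot\Qbb$, $\Pbb\wedge\Qbb$, $\Qbb\cdot\Qbb$, each entering with a favorable sign, namely as a negative contribution on the right-hand side.
\end{itemize}

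\textbf{Second step: weak limits.} The total flux $\Omega^{-1}\amdeux^{-1}\bigl(\Pi^{+\bullet}+\tfrac{t}{2}\amdeux\Omega^2\Ebf_\pm(\Pbb,\Qbb)\bigr)$ converges weakly-$*$ to the same expression evaluated at $\sharp$, with $\Pi^\sharp$ in place of $\Pi^\suit$. This follows from Lemma~\ref{lemma-73} for $\Xi^\suit$ together with the strong, pointwise convergence of $\Omega^\suit$ and $\amdeux^\suit$ (Lemmas~\ref{lemma-72} and~\ref{lemma-73}); the products with BV weights are understood in the Volpert sense. By Lemma~\ref{lemdivcurl}, the null forms $\Pbb^\suit\cdot\Pbb^\suit$ and the like converge in $\Dcal'$. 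Their squares therefore satisfy, by weak lower semicontinuity, $\liminf (\Pbb^\suit\cdot\Pbb^\suit)^2\ge(\Pbb^\sharp\cdot\Pbb^\sharp)^2$ as measures. Because these squares carry a negative sign in $\Deltageom^+$, we obtain
\[
\limsup \Deltageom^+(\Phi^\suit,\Psi^\suit)\le\Deltageom^+(\Phi^\sharp,\Psi^\sharp)
\]
in the sense of measures tested against nonnegative functions. The limit inequality is then
\[
\divuntrois_\sharp\bigl(\text{total flux}^\sharp\bigr)\le\Deltageom^+(\Phi^\sharp,\Psi^\sharp).
\]

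\textbf{Third step: subtract the limit identity.} The limit fields $(\Pbb^\sharp,\Qbb^\sharp)$ solve the limit equations~\eqref{eq:limit-PQ-weak} by Lemma~\ref{lemma-74}. We want them to satisfy~\eqref{eqdivMdeltageom} as an \emph{equality}, so that subtracting leaves exactly $\divuntrois_\sharp(\Omega^\sharp{}^{-1}\amdeux^\sharp{}^{-1}\Pi^\sharp{}^{+\bullet})\le0$.

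This is the step we expect to be the main obstacle. With $\Pbb^\sharp$ only in $L^2$ on slices, the product of $\Pbb^\sharp$ with its own divergence is not obviously meaningful. We would handle it again in the characteristic coordinate $z$ built from $\amdeux^\sharp$. There the equation reads $\partial_z(P_0^\sharp+P_1^\sharp)=\text{source}$, with a source in $L^1$ along characteristics by Lemma~\ref{theo:te-Euler-explicit} and the timelike bounds. Hence $P_0^\sharp+P_1^\sharp$ is absolutely continuous along almost every characteristic, and the chain rule for its square holds. Time traces at $t_0$ are treated via the boundary formulation~\eqref{weak-divePi-bdry}, using $\Pi^\sharp\ge0$.
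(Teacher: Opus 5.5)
Your plan is the one the paper sketches before the lemma: an energy law for $(\Pbb,\Qbb)$ obtained along null characteristics, passage to the limit, defects of the squared null forms discarded by sign, and the flux defect producing the divergence of $\Pi^\sharp$. However, the step that is supposed to produce $\Pi^\sharp$ fails as written, because the weights do not match.

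The law \eqref{eqdivMdeltageom}, whose source is $\Deltageom^\pm$, has flux $t^{-1}\Omega^{-1}\Mbf^{\pm\bullet}(0,0,\Pbb,\Qbb)$. Now use the normalization of the lapse equations \eqref{eq:evollambda-new-mod2a}--\eqref{eq:evollambda-new-mod2b}. There $\Pi^{\sharp\pm\bullet}$ is, up to the limit of the sequence's own correctors, $\amdeux^\sharp(\Omega^\sharp)^2$ times the defect of $\Mbf^{\pm\bullet}(0,0,\Pbb,\Qbb)$. So the corrector flux $\Omega^{-1}\amdeux^{-1}\Pi^{\sharp\pm\bullet}$ carries that defect with weight $\Omega^\sharp$ (or $\tfrac t2\Omega$ in the normalization \eqref{define-Xi-again} that you use), never with weight $t^{-1}\Omega^{-1}$.

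In Step~2 you silently trade one weight for the other. The sum $t^{-1}\Omega^{-1}\Mbf^{+\bullet}+\Omega^{-1}\amdeux^{-1}\Pi^{+\bullet}$ that you actually pass to the limit leaves an uncancelled $(t^{-1}\Omega^{-1}-\tfrac t2\Omega)$ times the defect. Subtracting the limit identity then gives a divergence inequality for $t^{-1}(\Omega^\sharp)^{-1}$ times the defect, not for $\Omega^{-1}\amdeux^{-1}\Pi^{\sharp+\bullet}$. Converting one into the other means moving a power of $\Omega$ through the divergence. By the lapse equations, the $\pm$ null derivative of $\log\Omega$ is, up to $-\tfrac{1}{4t}$, equal to $\tfrac t2\Omega^2(\Mbf^{11}\mp\Mbf^{01})$ plus a multiple of the transverse corrector $\Pi^{\mp}$. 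That is an oscillating quadratic quantity plus a measure. The paper's sketch passes over this point ("leads to the divergence of $\Pi^\sharp$"); your Step~2 makes it explicit, and the explicit claim is false.

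Doing the bookkeeping with the matched weight shows that the squared null forms are not the mechanism. The quantities transported along the null directions are $\Omega(P_0\mp P_1)$ and $\Omega(Q_0\mp Q_1)$. Hence $\opL_1,\dots,\opL_4$ and $\opL_9$ give the following identity without using the lapse equations, so it is equally available for the limit flow with corrector:
\[
\divuntrois\bigl(\Omega\,\Mbf^{\pm\bullet}(0,0,\Pbb,\Qbb)\bigr)
=\frac{\Omega}{2}\Bigl((P_0\mp P_1)\Re(\Kpar^2+\Jpar^2)+(Q_0\mp Q_1)\Im(\Kpar^2+\Jpar^2)\Bigr)
+\frac{1}{2t}\bigl(\Pbb\cdot\Pbb+\Qbb\cdot\Qbb\bigr)
-\frac{\amdeux_t}{\amdeux}\,\Mbf^{\pm0}(0,0,\Pbb,\Qbb) .
\]
Equivalently, in $t\Omega^2\Deltageom^\pm$ the quartic terms cancel exactly against the derivative of the weight $t\Omega^2$. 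This follows from the factorization $\bigl((P_0+P_1)^2+(Q_0+Q_1)^2\bigr)\bigl((P_0-P_1)^2+(Q_0-Q_1)^2\bigr)=(\Pbb\cdot\Pbb)^2+(\Qbb\cdot\Qbb)^2+2(\Pbb\cdot\Qbb)^2+2(\Pbb\wedge\Qbb)^2$, the one behind $4D$ in \autoref{section=7-3}; the mixed terms $(J_0\pm J_1)^2(P_0\mp P_1)^2$ cancel likewise.

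Lower semicontinuity of $(\Pbb^\suit\cdot\Pbb^\suit)^2$ therefore plays no role. The cubic terms converge by hypothesis, the null forms converge by \autoref{lemdivcurl}, and the only defect that must be signed is that of the last term. Since $\amdeux_t/\amdeux=t\,\Omega^2\bigl(\Ebf_0(\Jpar,\Kpar)+\tfrac{q_\Jbb}{2}|\Jbb\cdot\Jbb|\bigr)$, this defect has the sign of $-t$.
\begin{itemize}
\item For $t>0$ it yields the inequality, once you justify the limit of an unbounded, strongly convergent coefficient times a weakly-$*$ converging density.
\item For $t<0$ it has the opposite sign, unless twist and fluid vanish where the defect lives, so the argument cannot close in the future-contracting regime in this form.
\end{itemize}

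Two further points concern your Step~3. It is $\Omega^\sharp(P_0^\sharp\mp P_1^\sharp)$, not $P_0^\sharp\mp P_1^\sharp$, that is absolutely continuous along characteristics; the latter inherits the jumps of the $\BV$ lapse. Also, \autoref{theo:te-Euler-explicit} is proved only for flows without corrector, so it does not supply the integrability you invoke for the limit flow.
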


\subsubsection{Rigidity in the well-prepared regime}
 
For well-prepared initial data, one expects that no geometric correction is generated. Equivalently, if the approximating sequence is initially free of corrector, then the corrector should remain identically zero for all times, and the essential geometry variables should converge strongly. This may be viewed as a nonlinear stability property of the geometric part of the Einstein--Euler system.

The most direct way to establish such a statement is not to analyze \(\Pi^\sharp\) itself, but rather to derive a relative energy estimate for the pair \((\Pbb,\Qbb)\), treating the twist variables, the lapse, and the matter variables as lower-order coefficients already controlled by the previous compactness analysis. In this approach, the corrector disappears from the argument: if the relative geometric energy vanishes in the limit, then no corrector can be produced.

This yields the following result.

\begin{lemma}[Strong convergence of the essential geometry in the well-prepared regime]
\label{lemma-74-strong}
Assume that the sequence of essential geometry variables is initially well-prepared, in the sense that
\be
(\Pbb^\suit,\Qbb^\suit)(t_0,\cdot)\to (\Pbb^\sharp,\Qbb^\sharp)(t_0,\cdot)
\qquad\text{strongly in }L^2(\Tbb^3),
\ee
and that the conclusions of Lemmas~\ref{lemma-72}, \ref{lemma-73}, \ref{lemma-74}, and \ref{lemma-76} hold. Then, for every \(t\in\Interval\),
\bel{equa-KSK3}
(\Pbb^\suit,\Qbb^\suit)(t,\cdot)\to (\Pbb^\sharp,\Qbb^\sharp)(t,\cdot)
\qquad\text{strongly in }L^2(\Tbb^3).
\ee
In particular, the corresponding corrector vanishes identically.
\end{lemma}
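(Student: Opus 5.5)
The plan is a relative energy (weak–strong type) estimate for the essential geometry, carried out in the characteristic structure of the wave-map part of the system. I would set $\delta\Pbb^\suit \coloneqq \Pbb^\suit-\Pbb^\sharp$ and $\delta\Qbb^\suit \coloneqq \Qbb^\suit-\Qbb^\sharp$, and then introduce the relative energy density and flux
\[
e^\suit \coloneqq \Ebf_0(\delta\Pbb^\suit,\delta\Qbb^\suit), \qquad f^\suit \coloneqq \Ebf_1(\delta\Pbb^\suit,\delta\Qbb^\suit).
\]
By \autoref{lemma-74}, the limit $(\Pbb^\sharp,\Qbb^\sharp)$ satisfies the same div--curl system with limiting coefficients. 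Subtracting the two systems, the differences satisfy a linear div--curl system in $\delta\Pbb^\suit,\delta\Qbb^\suit$ with coefficients $\Pbb^\sharp,\Qbb^\sharp$, plus error terms. These errors come from three sources: the geometric weights $(\amdeux^\suit,\Omega^\suit)$ versus $(\amdeux^\sharp,\Omega^\sharp)$, the twist and parallel momentum sources $\Re,\Im(\Kpar^2+\Jpar^2)$, and the quadratic remainders $\delta\Pbb\cdot\delta\Qbb$, $\delta\Qbb\cdot\delta\Qbb$ and $\delta\Pbb\wedge\delta\Qbb$.

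\textbf{Step 1: relative energy inequality.} Following the characteristic-coordinate construction used for \eqref{eqdivMdeltageom}, I would write the system in the null combinations $\delta P_0\pm\delta P_1$ and $\delta Q_0\pm\delta Q_1$. In these variables the equations become transport ODEs along the two families of null characteristics $dt/dx=\pm\amdeux$, and they are meaningful at $\BVac$ coefficient regularity. Multiplying by the null combinations themselves gives a weighted identity of the form
\[
\divuntrois_\suit\bigl(t^{-1}(\Omega^\suit)^{-1}(e^\suit,f^\suit)\bigr) \leq C\,\bigl(|\Pbb^\sharp|+|\Qbb^\sharp|+1\bigr)\,e^\suit + \mathcal{R}^\suit .
\]
The cubic terms in the $\delta$ variables cancel for the same structural reason that the \JKL\ formulation has no cubic terms: the quadratic sources are null forms, and contracting them with the null combinations produces only squares of null forms with a favourable sign. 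This is the same mechanism as in \eqref{d2Vgen}.

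\textbf{Step 2: remainder estimate.} The remainder $\mathcal{R}^\suit$ collects the products of $\delta\Pbb,\delta\Qbb$ with three groups of quantities.
\bei
\item The weight differences $\amdeux^\suit-\amdeux^\sharp$ and $\Omega^\suit-\Omega^\sharp$. These tend to $0$ in $L^\infty$, by \autoref{lemma-72} and \autoref{lemma-73} upgraded through the well-prepared convergence of the lapse sources.
\item The differences in $\Kpar^2+\Jpar^2$. These tend to $0$ strongly in $L^2$ on both spacelike and timelike slices, using \autoref{lemma-76} and the strong convergence of $\Jbb$ from the next section.
\item The terms $\Pbb^\sharp\,e^\suit$, which I keep in the Gronwall term.
\eei
The uniform timelike $L^2$ bounds \eqref{equa:jdk13} then give $\int_\Interval\|\mathcal{R}^\suit\|_{L^1}\,dt\to0$.

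\textbf{Step 3: Gronwall along characteristics.} The coefficient $|\Pbb^\sharp|$ lies only in $L^2$, not $L^\infty$, so an ordinary Gronwall argument in time does not apply directly. Instead I would integrate the null transport equations along the characteristic coordinates and use the timelike $L^2$ bound on $\Pbb^\sharp,\Qbb^\sharp$. This turns the Gronwall factor into $\exp\bigl(\int|\Pbb^\sharp|\,dt\bigr)$ along each characteristic, which is bounded uniformly by Cauchy--Schwarz. The result is
\[
\sup_{t}\int e^\suit\,\dVtrois \lesssim \int e^\suit(t_0)\,\dVtrois + o(1).
\]
Well-preparedness makes the first term vanish, which yields \eqref{equa-KSK3}.

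\textbf{Step 4: vanishing corrector.} Strong $L^2$ convergence means $\Ebf_0(\Pbb^\suit,\Qbb^\suit)$ and $\Ebf_1(\Pbb^\suit,\Qbb^\suit)$ converge in $L^1$. From \eqref{define-Xi-again-sharp} it then follows that $\Pi^\sharp=\lim\Pi^\suit$. If the sequence starts corrector-free, the divergence inequality \eqref{equag3930} together with positivity gives $\Pi^\sharp\equiv0$.

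The main obstacle is Step 1 at weak regularity. One has to justify the multiplication of the null transport equations by $\delta\Pbb,\delta\Qbb$ when the coefficients are only $\BVac$ and the fields are only $L^2$, and one has to confirm that no uncontrolled cubic term survives. I would first verify the cubic cancellation formally for smooth solutions, and then justify it by mollifying along the characteristic coordinate $z$.
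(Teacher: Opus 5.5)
Your strategy is the paper's own: a relative energy for $(\Pbb,\Qbb)$, closed by Gronwall, with well-preparedness killing the initial term. You are right that the crude coefficient $|\Pbb^\sharp|+|\Qbb^\sharp|$ is not in $L^1_tL^\infty_x$, so a Gronwall inequality with $C\in L^1(\Interval)$ is not automatic; the paper's sketch simply asserts it.

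Step~3 does not repair this, for three reasons.
\begin{itemize}
\item The bounds \eqref{equa:jdk13} control $\int|\Pbb^\sharp|^2\,dt$ along the curves $\{x=\text{const}\}$, not along the null characteristics $dt/dx=\pm\amdeux$. Along a null curve, a flux argument controls only the \emph{transverse} null component.
\item The component transported along a given family can be a genuine impulsive wave. In the model case $\Omega=\amdeux=1$, $\Qbb=0$, take $P_0-P_1=\phi(t-x)$ with $\phi\in L^2\setminus L^\infty$. Its spacelike and timelike $L^2$ norms are uniformly bounded, yet along the characteristic $t-x=c$ one has $\int|\phi(t-x)|\,dt=|\phi(c)|\,|\Interval|$, which is unbounded in $c$. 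So $\exp\bigl(\int|\Pbb^\sharp|\,dt\bigr)$ is not uniformly bounded along characteristics.
\item The equation for a null component of one family is forced by the difference in the other family. Integrating along a single characteristic therefore cannot close.
\end{itemize}
By bounding all quadratic terms by $C(|\Pbb^\sharp|+|\Qbb^\sharp|+1)e^\suit$ in Step~1, you have already discarded the structure that is needed.

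The missing idea is the exact null structure at the level of differences. Write $X^\pm\coloneqq X_0\pm X_1$ and work in the model case, dropping the $|t|$-terms and the $\Kpar,\Jpar$ sources. The system reads $(\del_t+\del_x)P^-=Q^+Q^-$ and $(\del_t+\del_x)Q^-=-P^-Q^+$, and symmetrically for the other family. Hence
\[
\tfrac12(\del_t+\del_x)\bigl(|\delta P^-|^2+|\delta Q^-|^2\bigr)=\delta Q^+\bigl(Q^{\sharp-}\,\delta P^- - P^{\sharp-}\,\delta Q^-\bigr).
\]
The cubic terms and all self-coupling terms cancel exactly. This is the wave-map structure, not the favourable-sign mechanism of \eqref{d2Vgen}. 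What remains are transversal trilinear terms: two factors transported along one family, one along the other. To control them:
\begin{itemize}
\item use $\iint|a(u)|\,|b(v)|\,du\,dv\le\|a\|_{L^1}\|b\|_{L^1}$ together with a null-flux version of the relative energy;
\item apply Cauchy--Schwarz over a characteristic segment of length $O(\tau)$, which produces a factor $\tau^{1/2}\|(\Pbb^\sharp,\Qbb^\sharp)\|_{L^2}$;
\item close the estimate on short time slabs and iterate, or alternatively use a Glimm-type interaction functional.
\end{itemize}

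Step~2 also needs revision. \autoref{lemma-72} and \autoref{lemma-73} give only $L^1$ and a.e.\ convergence of $\amdeux^\suit$ and $\Omega^\suit$. Upgrading them through the lapse sources is circular: $\Xi^\suit$ contains $\Ebf_0(\Pbb^\suit,\Qbb^\suit)$ and $\Ebf_1(\Pbb^\suit,\Qbb^\suit)$, while well-preparedness only concerns $t=t_0$. Moreover, because $\amdeux^\suit\neq\amdeux^\sharp$, rewriting the $\sharp$-equations in the $\suit$-geometry produces weight differences multiplied by derivatives of $\Pbb^\sharp$, not by $\delta\Pbb$. Instead, compare the two solutions after pulling both back to a common null parametrization. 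The remaining $\Omega$-factors can then be handled by a.e.\ convergence and dominated convergence on the integrable squared transversal products.
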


\begin{proof}
\bse
We compare an approximate solution, indexed by \(\suit\), with the limiting flow, denoted by \(\sharp\). The relevant quantity is the relative geometric energy
\be
\label{eq:relative-geom-energy}
\Gcal^{\sharp,\suit}(t)
\coloneqq
\int_{\Tbb^3}
\Mbf^{00}\bigl(0,\Kpar^\suit-\Kpar^\sharp,\Pbb^\suit-\Pbb^\sharp,\Qbb^\suit-\Qbb^\sharp\bigr)\,
\dVtrois_\sharp.
\ee
Since \(\Mbf^{00}(0,\cdot,\cdot,\cdot)\) is a positive quadratic form, \(\Gcal^{\sharp,\suit}(t)\) is uniformly equivalent to the square of the natural \(L^2\)-distance between the two geometric states.

Using the balance laws satisfied by the geometric variables, and writing all coefficients in the background geometry \(\sharp\), wee are led, after expansion, to a differential inequality of the form
\be
\label{eq:relative-geom-energy-ineq}
\frac{d}{dt}\Gcal^{\sharp,\suit}(t)
\le
C(t)\,\Gcal^{\sharp,\suit}(t)
+\Err^{\sharp,\suit}(t),
\ee
where \(C\in L^1(\Interval)\) depends only on the uniform bounds already established, and where \(\Err^{\sharp,\suit}\) collects lower-order terms involving the lapse, the conformal length measure, the twists, and the matter variables.
\ese

The relative energy inequality~\eqref{eq:relative-geom-energy-ineq} is obtained, after a lengthy but straightforward computation, from the geometric balance laws and from expressing all coefficients in the limiting geometry.
The previous compactness results imply that all lower-order error terms tend to zero in \(L^1(\Interval)\), that is,
\be
\label{eq:error-rel-vanish}
\Err^{\sharp,\suit}\to 0
\qquad\text{in }L^1(\Interval).
\ee
Indeed, these error terms are controlled by the convergences already established for \(\Omega^\suit\), \(\amdeux^\suit\), \(\Kpar^\suit\), \(\Jperp{}^\suit\), and the other lower-order quantities.

Since the sequence is well-prepared at the initial time, we also have 
$
\Gcal^{\sharp,\suit}(t_0)\to 0.
$
Gronwall's lemma therefore implies that
\be
\Gcal^{\sharp,\suit}(t)\to 0
\qquad\text{for every }t\in\Interval, 
\ee
hence \eqref{equa-KSK3} is satisfied for every \(t\in\Interval\), as claimed.
Finally, since the corrector is precisely the corrector generated by the lack of strong convergence in the quadratic geometric terms, the strong convergence established above implies that the corresponding corrector vanishes identically, namely $\Pi^\sharp\equiv 0$.
This completes the proof.
\end{proof}


\section{Compactness properties of fluid variables}
\label{section=9}

\subsection{Compactness statement for the fluid variables}
\label{section=9-1}

In view of the convergence results established in \autoref{section=8}, the
compactness analysis of the geometric variables is complete. After extraction
of a subsequence if necessary, we have already obtained the convergence of the conformal length measure, lapse, twist coefficients, and all geometric quantities
entering the Einstein--Euler system.
In particular, compactness is fully established at this point within the class of vacuum solutions.
The only remaining issue is the compactness of the fluid variables.

This is the final step needed in order to identify the nonlinear matter terms
appearing in the Einstein equations and to pass to the limit in the full
Einstein--Euler system. As expected, the parallel and orthogonal components
of the momentum obey rather different compactness mechanisms.

On the one hand, the normalized parallel momentum is controlled by the
bounded variation bound built into the definition of tame flows. Its
compactness follows from a combination of Helly's theorem, weak continuity in
time, and the structure of the corresponding evolution equations. On the
other hand, the orthogonal momentum is not controlled by any pointwise
variation estimate and must instead be treated by the compensated compactness
mechanism explained in the previous section, based on the quasi-balance laws,
Murat's lemma, and the weighted div-curl argument.

The purpose of the present section is to state the resulting compactness
properties in the precise form required for the passage to the limit, and to
complete the proof of \autoref{theorem-493}, \autoref{theorem-494}, and \autoref{theorem-492}. The
key reduction step for the orthogonal momentum, namely the construction of a
sufficiently rich family of quasi-currents and the corresponding collapse of
Young measures, belongs to the compensated compactness theory underlying the
present work and will be treated separately.

We now formulate the compactness statement for the fluid variables.

\begin{proposition}[Compactness of the fluid variables]
\label{theorem-fluid-compactness}
Assume the uniform metric and fluid bounds, together with the entropy
inequalities, stated in \autoref{section=5}. Assume moreover that the
geometric variables satisfy the convergence properties established in
\autoref{section=8}. After extraction of a subsequence if necessary, the
following conclusions hold.

\bei

\item {\bf Orthogonal momentum.} 
\be
\Jperp{}^\suit(t,\cdot)\to \Jperp{}^\sharp(t,\cdot)
\qquad
\text{strongly in }L^2(\Tbb^3)
\ \textnormal{a.e.}\ t\in I .
\ee

\item {\bf Parallel momentum.}
\be\label{eq-fluid-compactness-parallel}
\aligned
\Jpar{}^\suit(t,\cdot) &\to \Jpar{}^\sharp(t,\cdot)
\qquad && \text{strongly in }L^2(\Tbb^3)
\ \textnormal{a.e.}\ t\in I ,
\\
\Jhatpar{}^\suit(t,\cdot) &\to \Jhatpar{}^\sharp(t,\cdot)
\qquad && \text{strongly in }L^q(\Tbb^3)
\ \text{for every } q\in[1,+\infty)
\ \textnormal{a.e.}\ t\in I ,
\\
d_x\bigl(\Jhatpar{}^\suit\bigr)(t,\cdot) & \overset{*}{\rightharpoonup}
d_x\bigl(\Jhatpar{}^\sharp\bigr)(t,\cdot)
\qquad && \text{weakly-$*$ in measures on } \Tbb^3
\ \textnormal{a.e.}\ t\in I .
\endaligned
\ee

\item {\bf Nonlinear matter terms.}
Consequently, all nonlinear matter quantities entering the Einstein--Euler
system are recovered strongly at the limit. In particular, for every
\(t\in\Interval\),
\be\label{eq-fluid-compactness-nonlinear}
\left.
\begin{aligned}
& \Jbb^\suit\cdot\Jbb^\suit \to \Jbb^\sharp\cdot\Jbb^\sharp
\\
& \Ebf_0(\Jperp{}^\suit)\to \Ebf_0(\Jperp{}^\sharp)
\\
& \Ebf_1(\Jperp{}^\suit)\to \Ebf_1(\Jperp{}^\sharp) \quad
\end{aligned}
\right\}
\quad
\text{strongly in \(L^1(\Tbb^3)\) }
\textnormal{a.e.}\ t\in I .
\ee

\eei
\end{proposition}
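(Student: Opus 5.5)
The proof has three parts. The parallel momentum per particle is treated first, by $\BV$ compactness. The orthogonal momentum is treated next, by compensated compactness on the quasi-balance laws, using the strong compactness of $\Jhatpar$ obtained in the first part. The nonlinear matter terms then follow by equi-integrability.

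\emph{Parallel momentum.} By condition~(7) of \autoref{def-810}, the sequence $\Gammaunpar{}^\suit$ has uniformly bounded spatial variation. It also has a uniform Lipschitz modulus in $L^1(\Tbb^3,\dVtrois{}^\suit)$, and by \autoref{propos-priori} it is uniformly bounded in $L^\infty$. Helly's theorem at rational times, together with the equi-Lipschitz bound, gives pointwise a.e.\ convergence and strong $L^q$ convergence for every $t$ and every $q<\infty$. Next, $P^\suit,Q^\suit$ are reconstructed by the formulas~\eqref{equa-formulPQ}. Their $x$-derivatives at $t_0$ and their $t$-derivatives are bounded in $L^2$ along slices and converge weakly. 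Hence $P^\suit,Q^\suit$ converge uniformly, by an Arzel\`a--Ascoli argument using the $H^1$-type bounds on slices. Inverting~\eqref{equawidetildeJ} then yields the convergence of $\Jhatpar{}^\suit$ stated in~\eqref{eq-fluid-compactness-parallel}, and lower semicontinuity of the variation yields the weak-$*$ convergence of $d_x\Jhatpar{}^\suit$. Finally, $\Jpar=\hNumb_\Jbb\Jhatpar$, so strong convergence of $\Jpar$ will follow once $\Jbb\cdot\Jbb$ converges a.e. I postpone that step until after the orthogonal part. Equi-integrability of $|\Jpar|^2$ comes from $\hNumb^2\lesssim 1+|\Jbb\cdot\Jbb|^{q_\infty^+}$ with $q_\infty^+<1$.

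\emph{Orthogonal momentum.} I would associate with $\Jperp{}^\suit$ a Young measure $\nu_{t,x}$ on the state space $\{-\Jperp\cdot\Jperp\ge0,\ J_0\le0\}$. Because $\Jhatpar{}^\suit$ converges strongly, the values of $\Jhatpar$ may be frozen at their limits. For every pair of dominated quasi-currents $\Fbb,\Gbb$, condition~(8) states that $\divuntrois_\suit(|t|^{-1}\Omega^{-1}\Fbb)$ is a bounded measure plus an $H^{-1}$-compact term. The currents are sub-quadratic relative to $\Mbf^{0\bullet}$ by \autoref{lemma:quasi-current-dominated}, and the extra spacetime integrability of \autoref{theo:te-Euler-explicit} bounds them in $L^p$ for some $p>2$. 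Murat's lemma then makes these divergences $H^{-1}_{\loc}$-compact. The weighted div-curl lemma \autoref{lemdivcurl} applies, since the weights $\amdeux^\suit,\Omega^\suit$ converge by Lemmas~\ref{lemma-72} and~\ref{lemma-73}; there $\Omega$ is only $\BV$, but pointwise a.e.\ convergence with uniform bounds suffices after multiplying the weights into $L^p$ fields. It yields Tartar's commutation relation
\[
\la\nu,F^0G^1-F^1G^0\ra=\la\nu,F^0\ra\la\nu,G^1\ra-\la\nu,F^1\ra\la\nu,G^0\ra .
\]

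\emph{Main obstacle: reduction of the Young measure.} The key difficulty is to show that this relation, holding for the full family of dominated quasi-currents, forces $\nu_{t,x}$ to be a Dirac mass or to be concentrated on the vacuum. Here I would follow the finite-energy strategy of LeFloch--Westdickenberg. First, weak entropy kernels are obtained by solving the phase-space wave equation for quasi-currents at fixed $\Jhatpar$, parametrized as in \cite{LeFlochLeFloch-next}. Next, the commutation relation is tested with these kernels, and the support of $\nu$ is localized to a characteristic-invariant region in Riemann-invariant coordinates. Finally, genuine nonlinearity (the convexity condition~\eqref{hyperbolic-eos-3}) gives the reduction to a point, while the near-vacuum asymptotics~\eqref{equa-asym} control the behaviour at vacuum, where the state vanishes and the Young measure may concentrate harmlessly. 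Unbounded support is handled by the uniform $L^2$ bounds together with domination, which keeps the quasi-currents uniformly integrable against $\nu$.

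\emph{Nonlinear matter terms.} Once $\nu$ is a Dirac mass, $\Jperp{}^\suit\to\Jperp{}^\sharp$ a.e. Combined with $\Jhatpar$, this gives a.e.\ convergence of $\Jbb^\suit$ through the bijection described in \autoref{section=4-4}. The $L^4$-type bounds of \autoref{theo:te-Euler-explicit} make the quadratic expressions equi-integrable, so Vitali's theorem upgrades a.e.\ convergence to strong $L^2$ convergence of $\Jbb$ and strong $L^1$ convergence of the quadratic quantities in~\eqref{eq-fluid-compactness-nonlinear}. This holds in spacetime, hence for a.e.\ $t$ along a subsequence. The Young-measure reduction is where I expect essentially all the work to lie.
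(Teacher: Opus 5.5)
Your architecture is the paper's. For the parallel momentum you use Helly's theorem plus time continuity; routing this through $\Gammaunpar$ and the uniform convergence of $P^\suit,Q^\suit$ is, if anything, more careful than the paper, which applies Helly directly to $\widehat J_m^\suit$. For $\Jperp$ you use Murat's lemma, the weighted div--curl lemma, Tartar's relation for $\nu=\mu\otimes\delta_{\Jhatpar{}^\sharp}$, and a Young-measure reduction, which the paper packages in \autoref{prop-young-collapse} and defers to the companion paper. Your equi-integrability argument for $|\Jpar|^2$ via $\hNumb^2\lesssim 1+|\Jbb\cdot\Jbb|^{q_\infty^+}$ is also fine.

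The gap is in the integrability bookkeeping, which you route through \autoref{theo:te-Euler-explicit}. That lemma is proved only for exact tame flows without corrector, whereas the proposition concerns approximate sequences with corrector under the bounds of \autoref{section=5}. More seriously, its purely fluid quartic terms are $|\Jbb\cdot\Jbb|^2$ and $|\Jpar|^4$. Hence it controls at best $(\Jperp\cdot\Jperp)^2=(J_0+J_1)^2(J_0-J_1)^2$, and never $(J_0\pm J_1)^4$. A dominated quasi-current only satisfies $|F^0\pm F^1|\lesssim \frac12(J_0\mp J_1)^2+\Ebf_0(\Jpar)+\frac{q_\Jbb}{2}(-\Jbb\cdot\Jbb)$, so $\Fbb(\Jbb^\suit)$ is bounded in $L^1$, not in $L^p$ for any $p>2$. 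Consequently neither Murat's lemma nor the $L^2\times L^2$ div--curl lemma applies to the whole dominated class as you claim. The Tartar products $F^{(1)}_0F^{(2)}_1-F^{(1)}_1F^{(2)}_0$ are quartic in $\Jbb$, so they are not represented by $\nu$ under finite-energy bounds. Your remark that domination ``keeps the quasi-currents uniformly integrable against $\nu$'' does not address this. (The paper's own one-line justification, ``currents uniformly bounded in $L^2$'', is no better.)

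The same defect breaks your final Vitali step: a.e.\ convergence does not make $\Ebf_0(\Jperp{}^\suit)$ and $\Ebf_1(\Jperp{}^\suit)$ equi-integrable. The bounds actually available are spacelike and timelike $L^1$ control of $J_0^2$. These do not exclude concentration of $\frac12(J_0\pm J_1)^2$ along nearly null curves, i.e.\ ultra-relativistic fluid with $-\Jbb\cdot\Jbb$ bounded. For instance, $\suit^{-1}$ times the indicator of an $\suit$-neighbourhood of a null line has bounded integrals on every constant-$t$ and constant-$x$ slice. In the paper, non-concentration is asserted as part of \autoref{prop-young-collapse}, not derived from \autoref{section=7}. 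In the finite-energy theory of LeFloch and Westdickenberg it comes from restricting the commutation relation to entropies of sub-energy growth and proving a genuine higher-integrability estimate. You need an argument of that kind.

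Second, your treatment of the vacuum imports a non-relativistic picture. In the variables $\Jbb$, the vacuum set $\Jbb\cdot\Jbb=0$ contains the non-zero null states $J_0=-|J_1|$, $\Jpar=0$. These carry energy $T^{00}=\frac12 J_0^2$, so the state does not vanish there. A Young measure supported in this set need not be a Dirac mass in $\Jperp$. It yields $\Jbb^\suit\cdot\Jbb^\suit\to0$ and, since $\hNumb(0)=0$, $\Jpar{}^\suit\to0$, but no a.e.\ convergence of $\Jperp{}^\suit$. So ``concentrate harmlessly'' is exactly what must be proved before the conclusions for $\Jperp$, $\Ebf_0(\Jperp)$ and $\Ebf_1(\Jperp)$ follow.
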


The conclusion of \autoref{theorem-fluid-compactness} is exactly the form of
compactness needed in order to pass to the limit in the Einstein--Euler
equations. Once the convergence of the geometric variables from
\autoref{section=8} is combined with the strong convergence of the matter
terms in \eqref{eq-fluid-compactness-nonlinear}, every nonlinear source term
appearing in the field equations is stable under passage to the limit (except for $\Mbf^{ab}$ which may generate a corrector).
Therefore the full Einstein--Euler system with corrector is recovered for the limiting tame
flow, and the (in)stability assertions in Theorem \ref{theorem-493}, \ref{theorem-494}, and~\ref{theorem-492} follow.

There remains to explain the two compactness mechanisms involved in the proof of
\autoref{theorem-fluid-compactness}.


\subsection{Analysis of the parallel fluid momentum} 
\label{section=9-2}

\subsubsection{Compactness mechanism}

The analysis of the parallel momentum is separated from that of the
orthogonal momentum, treated in \autoref{section=9-3}.
The compactness of the parallel momentum follows from a straightforward
mechanism, based on the \(\BV\) bounds already available and on the structure
of the corresponding evolution equations. Namely, by definition of tame flows, the normalized
parallel variables \(\widehat J_m^\suit\), \(m=2,3\), are uniformly bounded in
\(L^\infty(\Interval;\BV(\Sbb^1))\). Therefore, at each fixed time, Helly's
selection theorem yields compactness in the spatial variable.
The (weak) continuity in time of $\Jhatpar$ then allows to extend convergence from countably many times to all times.


\subsubsection{Compactness property}

By assumption, the normalized parallel momentum has uniformly bounded total
variation, namely
\be
\limsup_{\suit \to 0}\sup_{t\in\Interval}
\Var\bigl(\widehat J_m^\suit(t,\cdot);\Tbb^3\bigr)<+\infty,
\qquad m=2,3.
\ee
Hence, by Helly's theorem and a diagonal argument, after extraction of a subsequence if necessary,
there exists a limit function \(\widehat J_m^\sharp\) such that, for $t$ in a countable dense set of times \(\mathcal{T}\subset\Interval\),
\be
\aligned
d_x(\widehat J^\suit_m)(t,\cdot) & \overset{*}{\rightharpoonup}
d_x(\widehat J^\sharp_m)(t,\cdot)
\qquad &&\text{weakly-\(*\) as measures on }\Tbb^3,
\\
\widehat J^\suit_m(t,\cdot) & \to \widehat J^\sharp_m(t,\cdot)
\qquad &&\text{pointwise on }\Tbb^3.
\endaligned
\ee
Next, we upgrade this spatial compactness to compactness for
all times by using the condition that $\widehat J^\suit$ is uniformly in $\Lip(\Interval, L^1(\Tbb^3,\dVtrois{}^\suit))$, hence in $\Lip(\Interval, L^1(\Tbb^3,\dVtrois{}^\sharp))$ by the strong convergence of the lapse and conformal length density. 
This controls the dependency in time and allows us to extend the convergence from a countable dense set of times to every time $t\in\Interval$:
\be
\aligned 
\widehat J^\suit_m(t,\cdot) & \to \widehat J^\sharp_m(t,\cdot)
\qquad && L^p(\Tbb^3) \text{ for all } p \in [1,+\infty).
\endaligned
\ee

The total variation of $\widehat J^\suit_m(t,\cdot)$ is bounded uniformly in~$\suit$ for almost all $t\in\Interval$, thus the lower semi-continuity of BV norms yields
\be
\Var\bigl( \widehat J^\sharp_m(t,\cdot) ; \Tbb^3 \bigr)
\leq \liminf_{\suit\to 0} \Var\bigl( \widehat J^\suit_m(t,\cdot) ; \Tbb^3 \bigr) ,
\ee
which admits a uniform upper bound, hence the limit $\Jhatpar{}^\sharp$ belongs to $L^\infty(\Interval,\BV(\Tbb^3))$.
In particular, $d_x\Jhatpar{}^\sharp(t,\cdot)$ is a bounded signed measure for almost every time, so that the convergence of $d_x\Jhatpar{}^\suit(t,\cdot)$ in the sense of distributions holds weakly-$*$ in measures.
We therefore arrive at the following statement.

\begin{lemma}[Strong convergence of the parallel momentum]
\label{lemma-parallel-momentum}
Assume the uniform bounds given in \autoref{section=5}. Suppose, in addition,
the strong convergence $(\amdeux^\suit, \Omega^\suit) \to (\amdeux^\sharp, \Omega^\sharp)$ (established in \autoref{section=8}). Then, after extraction of a subsequence if necessary, there exists
\be
\Jhatpar{}^\sharp \in L^\infty(\Interval,\BV(\Tbb^3)) \,\cap\, \Lip(\Interval, L^1(\Tbb^3,\dVtrois{}^\sharp))
\ee
such that the normalized parallel momentum satisfies, for every $t\in\Interval$,
\be
\widehat J_m^\suit(t,\cdot)\to \widehat J_m^\sharp(t,\cdot)
\qquad\text{ strongly in }L^q(\Tbb^3),
\qquad m=2,3, \quad \text{ for all } q \in [1,+\infty) ,
\ee
and for almost every $t\in\Interval$,
\be
d_x(\widehat J^\suit_m)(t,\cdot) \overset{*}{\rightharpoonup}
d_x(\widehat J^\sharp_m)(t,\cdot)
\qquad \text{weakly-\(*\) as measures on }\Tbb^3.
\ee
\end{lemma}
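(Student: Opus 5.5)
The plan follows the argument sketched just before the statement, written out in three steps: spatial compactness at countably many times, extension in time, and transfer to the limit of the $\BV$ and measure statements. Throughout I work with $\widehat J_m^\suit$ as $\Tbb^2$-invariant functions on $\Sbb^1$. First, I would derive a uniform $L^\infty$ bound from the maximum principle of \autoref{propos-priori}, combined with the bounds on $P,Q$ in \autoref{propo-reconstruct}: since $\widehat J_m$ is a combination of $\Gammaun_m$ with weights $|t|^{-1/2}e^{\pm P/2}$ and $Q$, one gets $\sup_\suit\|\Jhatpar{}^\suit\|_{L^\infty(\Interval\times\Tbb^3)}<+\infty$. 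I would also need a uniform bound on $\Var(\Jhatpar{}^\suit(t))$. This follows from the tame bound on $\Var(\Gammaunpar{}^\suit(t))$ together with the $\BVac$ control of $P^\suit,Q^\suit$. These last bounds come from \eqref{equa-formulPQ} and the spacelike $L^2$ bounds on $\Pbb^\suit,\Qbb^\suit$, using the product rule for $\BV\cap L^\infty$ functions.

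Second, I would fix a countable dense set $\mathcal T\subset\Interval$. For each $t\in\mathcal T$, Helly's theorem gives a subsequence converging pointwise everywhere on $\Sbb^1$, and a diagonal extraction makes it work for all $t\in\mathcal T$. Dominated convergence, using the $L^\infty$ bound, then upgrades this to strong convergence in $L^q$ for every $q<\infty$.

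To extend to every $t\in\Interval$, I would use the time-Lipschitz property \eqref{lip-reg}. It is stated for $\Gammaunpar$ in $L^1(\dVtrois{}^\suit(t_2))$, so I first transfer it to $\widehat J_m$. The factors $e^{P/2}$ and $Q$ are Lipschitz in time in $L^1$ by \eqref{reconstruct-P-Q}, with timelike $L^2$ bounds on $P_0,Q_0$ and $\Omega$ bounded. Next I replace $\dVtrois{}^\suit$ by Lebesgue measure, which is legitimate because the density $|t|\Omega^\suit\amdeux^\suit$ is bounded below: $\Omega^\suit$ by \autoref{lemma-sup-Omega}, and $\amdeux^\suit$ through \autoref{lem-speed} and its $L^1$ convergence from \autoref{lemma-72}. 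On a lower bound for $\amdeux$ I am slightly unsure; if it fails, I would work on the set where $\amdeux^\sharp>\delta$ and use uniform integrability. This gives a uniform modulus $\|\widehat J_m^\suit(t)-\widehat J_m^\suit(s)\|_{L^1}\le C|t-s|$. A standard $3\varepsilon$ argument then shows that $\widehat J_m^\suit(t)$ is Cauchy in $L^1$ for every $t$, and interpolation with $L^\infty$ gives convergence in $L^q$.

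Third, at almost every $t$, lower semicontinuity of total variation under $L^1$ convergence gives $\Jhatpar{}^\sharp\in L^\infty(\Interval,\BV)$. The Lipschitz modulus passes to the limit, with $\dVtrois{}^\suit\to\dVtrois{}^\sharp$ by strong convergence of the weights. Since $d_x\widehat J_m^\suit(t)$ is bounded in $\Meas$ and converges distributionally to $d_x\widehat J_m^\sharp(t)$, by integration by parts against smooth test functions, density of smooth functions in $C^0$ upgrades this to weak-$*$ convergence in measures.

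I expect the main obstacle to be the transfer of the time-Lipschitz and $\BV$ bounds from $\Gammaunpar$ to $\Jhatpar$ uniformly in $\suit$, through the weights $e^{P}$, $Q$ and the volume forms. Everything else is standard Helly-type compactness.
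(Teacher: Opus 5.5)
Your proposal is correct and follows the paper's own argument: Helly's theorem with a diagonal extraction on a countable dense set of times, extension to every time via the uniform time-Lipschitz bound, lower semicontinuity of the variation, and the upgrade from distributional to weak-$*$ convergence using the uniform measure bound. It is in fact more careful than the paper, which takes the $\BV$ and Lipschitz bounds for $\Jhatpar$ as given rather than transferring them from $\Gammaunpar$ through the weights $e^{\pm P/2}$ and $Q$. One small correction is needed: the timelike $L^2$ bounds on $P_0,Q_0$ give only a H\"older-$1/2$ modulus for these weights, which suffices for the convergence but not for the Lipschitz claim on the limit; for that, estimate $\int |P(t)-P(s)|\,\dVtrois(s)$ by Fubini using the spacelike $L^2$ bounds and the two-sided control of $\amdeux(t)/\amdeux(s)$ from \autoref{lem-speed}.
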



\subsection{Analysis of the orthogonal fluid momentum}
\label{section=9-3}

\subsubsection{Compactness mechanism}

The orthogonal momentum is substantially more delicate than the parallel
momentum and requires the full compensated compactness machinery. The available energy estimate yields only weak compactness in
\(L^2\), which is not sufficient to identify the nonlinear matter terms in
the limit. The missing strong compactness is recovered from the entropy
structure introduced earlier in the paper, as we now explain. 

The starting point is the family of quasi-balance laws associated with
future-convex quasi-currents dominated by the reference current.
Under the conditions in \autoref{section=5}, their entropy productions, given by the shear-induced measures~$\Fmeasure$, are uniformly bounded.
By Murat's lemma, the entropy productions are therefore relatively compact in~$H^{-1}$.

This compactness property is precisely the input needed for a (suitably weighted) 
div-curl argument. Applying the latter to pairs of quasi-currents
yields Tartar-type commutation relations. At this stage, we introduce the
Young measure generated by the sequence~$\Jbb^\suit$.
Since the parallel component has already been shown to converge strongly, the
only possible oscillations are carried by~$\Jperp{}^\suit$. The commutation
relations reduce to a family of algebraic constraints on the
 Young measure associated with the orthogonal components only.

The role of the quasi-currents is to make these constraints sufficiently rich
to force the collapse of the Young measure to a Dirac mass. This is the
standard rigidity mechanism in compensated compactness for (non-relativistic) compressible Euler
flows, adapted here. Once the Young measure is reduced to a Dirac mass, the strong
convergence of~$\Jperp{}^\suit$ follows. Consequently, we obtain also \eqref{eq-fluid-compactness-nonlinear}. 
Let us now provide some details on this method, developped fully in our companion paper \cite{LeFlochLeFloch-next}. 


\subsubsection{A div-curl commutation relation}

We now make precise the first step of the argument.  
For every quasi-current \(\Fbb=\Fbb(\Jbb)\) that is
future-convex and dominated by the reference entropy current in the sense of
\autoref{def:quasi-current-dominated}, 
the following entropy balance law holds:
\bel{equa-balancelaws-plus-rep}
\aligned
\divuntrois_\suit \bigl( |t|^{-1} (\Omega^\suit)^{-1} \, \Fbb(\Jbb^\suit) \bigr)
= \Fmeasure^\suit + \Hterm^\suit,
\endaligned
\ee
where the sequence of measures $\Fmeasure^\suit$ is uniformly bounded, namely 
$\Fmeasure^\suit = \Mterm^\suit$ in the notation of \autoref{section=5}, 
and the relatively compact terms $\Hterm^\suit$ stand for possibly additional approximation errors. 
We now use this fact to prove a compactness property that will be the
key input for the subsequent div-curl argument.

\begin{lemma}[Compactness of entropy productions]
\label{lem-gen-entro-part2}
For every quasi-current \(\Fbb\) that is future-convex and dominated by the
reference current~\(\Fbb^*\), the sequence
\be
\divuntrois_\suit \Bigl( |t|^{-1}(\Omega^\suit)^{-1}\,\Fbb(\Jbb^\suit) \Bigr)
\ee
is relatively compact in \(H^{-1}(\Interval\times\Sbb^1)\).
\end{lemma}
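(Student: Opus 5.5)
The plan is Murat's lemma: a sequence of distributions that is bounded in $W^{-1,p}$ for some $p>2$ and can be written as the sum of a bounded sequence of measures and a relatively compact sequence in $H^{-1}$ is itself relatively compact in $H^{-1}_{\loc}$. Accordingly I will split the divergence into a measure part plus a compact part, and establish a $W^{-1,p}$ bound with $p>2$ separately. The splitting is supplied directly by the quasi-balance law~\eqref{equa-balancelaws-plus-rep}. Its right-hand side $\Fmeasure^\suit+\Hterm^\suit$ consists of $\Fmeasure^\suit=\Mterm^\suit$, which is uniformly bounded in $\Meas(\Interval\times\Sbb^1)$ by condition~(8) of \autoref{def-810}, and of $\Hterm^\suit$, which is relatively compact in $H^{-1}$ by definition. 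The spacetime weight $\dVuntrois{}^\suit=|t|\Omega^2 d_x\ell\,dy\,dz$ is uniformly bounded above and below by Lemmas~\ref{lemma-sup-Omega} and~\ref{lem-speed}, so these properties pass unchanged to the Euclidean formulation on $\Interval\times\Sbb^1$.

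For the $W^{-1,p}$ bound I will rewrite the divergence in Euclidean form. Following \eqref{equa-divN-rr}, it equals $|t|^{-1}\amdeux^{-1}\Omega^{-2}$ times $-\bigl(\amdeux\,F^0(\Jbb^\suit)\bigr)_t+\bigl(F^1(\Jbb^\suit)\bigr)_x$, up to harmless factors of $|t|$ and $\Omega$ produced by the weight $|t|^{-1}\Omega^{-1}$. Since $\amdeux^\suit$ and $\Omega^\suit$ converge strongly with uniform two-sided bounds (Lemmas~\ref{lemma-72} and~\ref{lemma-73}), multiplying by these coefficients preserves $H^{-1}_{\loc}$-compactness, exactly as in the proof of \autoref{lemdivcurl}. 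It is therefore enough to treat the Euclidean divergence of $\widecheck\Fbb^\suit:=(\amdeux^\suit F^0,F^1)(\Jbb^\suit)$. The key step is to show that $\widecheck\Fbb^\suit$ is bounded in $L^p_{\loc}$ for some $p>2$, because this gives boundedness of its divergence in $W^{-1,p}$. By \eqref{equa-condi1} in \autoref{lemma:quasi-current-dominated}, $|\Fbb|\lesssim\Fbb^*=\Mbf^{0\bullet}(\Jbb,0,0)$, which is quadratic in $\Jbb$. Bounded in $L^1$ alone would not be enough, so I will use the spacetime integrability of \autoref{theo:te-Euler-explicit}, which puts $|\Jbb\cdot\Jbb|$, $|\Jpar|^2$ and the null forms in $L^2$. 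The idea is to choose the admissible quasi-currents (or a truncated family, as in LeFloch--Westdickenberg) whose growth at infinity is strictly sub-quadratic in the directions not covered by these $L^2$ bounds.

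I expect this higher integrability to be the main obstacle. For a general dominated quasi-current, $\Fbb$ may grow like $J_0^2$ along near-null directions, where $J_0\pm J_1$ is small relative to $J_0$, and \autoref{theo:te-Euler-explicit} does not control $J_0^4$ there. My first approach is to restrict to the subclass of weak-entropy quasi-currents with $|\Fbb|\lesssim 1+|\Jbb\cdot\Jbb|^{\theta}J_0^{2(1-\theta)}$ for some $\theta>0$; for isothermal-type laws this includes particle-number-type currents such as $\vNumb$, by \eqref{equa--Nbb-17}. I will then interpolate between the $L^1$ energy bound and the $L^2$ bound on $|\Jbb\cdot\Jbb|$ to obtain a uniform $L^{p}$ bound with $p>2$ close to $2$. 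This subclass is sufficient for the later Young-measure reduction. If the interpolation turns out to be too weak, the fallback is a truncation $\Fbb=\Fbb_R+(\Fbb-\Fbb_R)$. Here $\Fbb_R$ is bounded, so its divergence is bounded in $W^{-1,\infty}$ and Murat's lemma applies to it. The tail has an $L^2$ norm that I would show tends to zero as $R\to\infty$, uniformly in $\suit$, using the domination \eqref{equa-condi1} together with the equi-integrability that \autoref{theo:te-Euler-explicit} gives for $\Mbf^{00}(\Jbb,0,0)$. Its divergence is then small in $H^{-1}$, and a diagonal argument yields relative compactness. With either route, Murat's lemma applied to $\operatorname{div}\widecheck\Fbb^\suit=\Mterm^\suit+\Hterm^\suit$ completes the proof.
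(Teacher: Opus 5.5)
Your reduction coincides with the paper's. By the quasi-balance law \eqref{equa-balancelaws-plus-rep}, the distribution $\divuntrois_\suit\bigl(|t|^{-1}(\Omega^\suit)^{-1}\Fbb(\Jbb^\suit)\bigr)\,\dVuntrois{}^\suit$ equals $\Mterm^\suit+\Hterm^\suit$. It is exactly the flat divergence of your $\widecheck\Fbb^\suit=(\amdeux^\suit F^0,F^1)(\Jbb^\suit)$, so no multiplication by the merely $BV$ coefficients is needed. The paper then invokes Murat's lemma on the sole ground that the currents are bounded in $L^2$.

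You are right that this is not enough. In two dimensions, $\operatorname{div}\nabla[\psi(\cdot/\eps)]$ with $\psi\in C^\infty_c$ is the divergence of $L^2$-bounded fields and is bounded in measures, yet it concentrates and is not $H^{-1}$-compact. One needs boundedness in $W^{-1,p}$ for some $p>2$, or at least uniform integrability of $|\Fbb(\Jbb^\suit)|^2$. Under the latter, your truncation idea can be made rigorous. Note that $\Fbb_R$ is not a quasi-current, so Murat's lemma does not apply to it as such; you must write $\operatorname{div}\Fbb_R=\Mterm^\suit+\Hterm^\suit-\operatorname{div}(\Fbb-\Fbb_R)$ and interpolate between $W^{-1,q}$ ($q<2$) and $W^{-1,r}$ ($r>2$). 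With $\|\Fbb_R\|_{L^r}\lesssim R^{1-2/r}$ and $\|\Fbb-\Fbb_R\|_{L^q}\lesssim\delta(R)R^{1-2/q}$, the powers of $R$ cancel. The gap is that your proposal establishes neither integrability property, and this is the decisive step.

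Neither of your two routes supplies it.
\begin{itemize}
\item \emph{Interpolation cannot work.} From $|\Jbb\cdot\Jbb|\in L^2$ and $J_0^2\in L^1$, H\"older places $|\Jbb\cdot\Jbb|^\theta J_0^{2(1-\theta)}$ only in $L^s$ with $1/s=1-\theta/2\ge 1/2$. The case $\theta>1$ is no better than $\theta=1$, since $|\Jbb\cdot\Jbb|\le J_0^2$. So no exponent $p>2$ ever comes out.
\item \emph{The subclass restriction changes the statement.} The lemma covers every dominated future-convex quasi-current, including $\Fbb^*$ itself, and your claim that the subclass suffices for \autoref{prop-young-collapse} is unproven.
\item \emph{The truncation route lacks its key input.} You need $\sup_\suit\iint_{\{|\Fbb(\Jbb^\suit)|>R\}}|\Fbb(\Jbb^\suit)|^2\to0$ as $R\to\infty$. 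Since $F^{*0}\pm F^{*1}=\frac12(J_0\mp J_1)^2+\Ebf_0(\Jpar)+\frac{q_\Jbb}{2}(-\Jbb\cdot\Jbb)$, the bound \eqref{equa-condi1} only gives $|\Fbb|^2\lesssim(J_0-J_1)^4+(J_0+J_1)^4+|\Jpar|^4+(\Jbb\cdot\Jbb)^2$. \autoref{theo:te-Euler-explicit} controls only the last two terms, and only as bounds, not equi-integrability. It is proved for exact flows without corrector and is not among hypotheses (1)--(8) of \autoref{def-810}. Equi-integrability of $\Mbf^{00}(\Jbb,0,0)$ in $L^1$ is the wrong exponent.
\end{itemize}

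Indeed, the spacelike and timelike $L^2$ bounds place $J_0^2$ only in $L^\infty_tL^1_x\cap L^\infty_xL^1_t$. A near-null profile with $J_0^2\sim|t-x|^{-1/2}$ (say with $\amdeux=1$), $-\Jbb\cdot\Jbb$ bounded, and $\Jpar=\Pbb=\Qbb=0$ is compatible with all these bounds, while $\Fbb^*(\Jbb)\notin L^2$. What is missing is a genuine higher-integrability (non-concentration) estimate for $\Jbb^\suit$ in near-null directions, of the kind used in the finite-energy theory of LeFloch--Westdickenberg. Without it, your argument, like the paper's one-line proof, does not give $H^{-1}$ compactness for quadratically growing dominated quasi-currents.
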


\begin{proof} 
The right-hand side of the quasi-balance law is bounded in the
sum of a uniformly bounded measure term and a term that is already assumed to be $H^{-1}$ compact. Since the left-hand side is the
divergence of a family of currents uniformly bounded in~$L^2$, Murat's
lemma~\cite{Murat-1974}  applies and yields the claimed relative compactness in
$H^{-1}(\Interval\times\Sbb^1)$.
\end{proof}


We may now apply the weighted div-curl lemma, stated in
\autoref{lemdivcurl}, to two quasi-currents \(\Fbb^{(1)}\) and
\(\Fbb^{(2)}\) that are future-convex and dominated by the reference entropy
current. By \autoref{lem-gen-entro-part2}, their entropy productions are
relatively compact in \(H^{-1}\), while the relevant geometric weights
converge strongly by the results established in the previous section. We
therefore obtain
\be
\label{eq:tartar-commutation-1}
\aligned
& \Omega^\suit \Bigl(
\Fbb^{(1)}_0(\Jbb^\suit)\,
\Fbb^{(2)}_1(\Jbb^\suit)
-
\Fbb^{(1)}_1(\Jbb^\suit)\,
\Fbb^{(2)}_0(\Jbb^\suit)
\Bigr)
\to
\Omega^\sharp \Bigl(
\Fbb^{(1)}_0(\Jbb^\sharp)\,
\Fbb^{(2)}_1(\Jbb^\sharp)
-
\Fbb^{(1)}_1(\Jbb^\sharp)\,
\Fbb^{(2)}_0(\Jbb^\sharp)
\Bigr)
\endaligned
\ee
in the sense of distributions on \(\Interval\times\Sbb^1\). This is the
key compensated compactness relation for the fluid momentum. 


\subsubsection{Young measure representation}

The sequence $\Jbb^\suit$ is known to converge weakly,
and the commutation relation above must
therefore be interpreted at the level of all possible weak limits. Taking into account our uniform $L^2$ estimates and extracting a subsequence if necessary, there exists a family of
probability measures
$\nu=\nu_{t,x}$ defined on the phase space of the variables~$\Jbb$ 
for almost every \((t,x)\in\Interval\times\Sbb^1\), such that for
every continuous function \(G\) with suitable growth (determined by the available bounds, especially the reference current) 
\be
G(\Jbb^\suit) \to \langle \nu_{t,x}, G\rangle
\ee
in the sense of distributions.
The relation~\eqref{eq:tartar-commutation-1} then yields the celebrated \textbf{Tartar commutation relation} for the Young measure.

\begin{lemma}[Tartar commutation relation for the momentum]
\label{prop-tartar-orthogonal}
Let $\nu=\nu_{t,x}$ be the Young measure generated by the sequence~$\Jbb^\suit$.
Then, for every pair of quasi-currents $\Fbb^{(1)}$, $\Fbb^{(2)}$ that
are future-convex and dominated by the reference entropy current~\(\Fbb^*\),
and for almost every $(t,x)\in\Interval\times\Sbb^1$,
the following identity holds:
\be
\label{eq:tartar-commutation-2}
\aligned
\big\langle \nu_{t,x}, \;
\Fbb^{(1)}_0 \Fbb^{(2)}_1 - \Fbb^{(1)}_1 \Fbb^{(2)}_0 \big\rangle
 =
\big\langle \nu_{t,x},\Fbb^{(1)}_0\big\rangle
\big\langle \nu_{t,x},\Fbb^{(2)}_1\big\rangle
-
\big\langle \nu_{t,x},\Fbb^{(1)}_1\big\rangle
\big\langle \nu_{t,x},\Fbb^{(2)}_0\big\rangle .
\endaligned
\ee
\end{lemma}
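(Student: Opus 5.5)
The plan is to pass to the limit in the commutation relation \eqref{eq:tartar-commutation-1} using the Young measure representation, and then to divide out the lapse factor. The relation is already available as a distributional limit. The work consists in identifying each side in terms of $\nu_{t,x}$ and justifying the identification with the growth bounds at hand.

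First, the left-hand side. The quadratic expression $\Fbb^{(1)}_0\Fbb^{(2)}_1-\Fbb^{(1)}_1\Fbb^{(2)}_0$, evaluated at $\Jbb^\suit$, converges in the sense of distributions to its Young measure average. Since both quasi-currents are dominated by the reference current, \autoref{lemma:quasi-current-dominated} gives $|\Fcomp^0\pm\Fcomp^1|\lesssim \Fcomp^{*0}\pm\Fcomp^{*1}$. The product is therefore bounded by a quadratic expression in $\Mbf^{00}(\Jbb,0,0)$. The spacetime integrability of \autoref{theo:te-Euler-explicit}, namely the $L^2$ bounds on $|\Jbb\cdot\Jbb|$, on $\Jperp\cdot\Jperp$ and on $|\Jpar|^2$, gives equi-integrability of this product. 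One caveat is that $J_0^2$ itself is not squared-integrable. To handle this, I would use the null-coordinate structure $V_0\mp V_1$ from \autoref{lem:Fvee-expr} to write the wedge product as a combination of $(F^0+F^1)(F^0-F^1)$-type terms. These terms are controlled by $(J_0+J_1)^2(J_0-J_1)^2=(\Jperp\cdot\Jperp)^2$ plus parallel contributions, and that product is integrable.

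Second, the multiplier. $\Omega^\suit\to\Omega^\sharp$ pointwise almost everywhere and is uniformly bounded above and below by \autoref{lemma-73}. Multiplying an equi-integrable sequence by it therefore preserves the limit, and the left-hand side of \eqref{eq:tartar-commutation-1} converges to $\Omega^\sharp\langle\nu,\Fbb^{(1)}_0\Fbb^{(2)}_1-\Fbb^{(1)}_1\Fbb^{(2)}_0\rangle$.

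Third, the right-hand side. There I must interpret $\Fbb^{(i)}_a(\Jbb^\sharp)$ as the weak limits $\langle\nu,\Fbb^{(i)}_a\rangle$. This is what the div-curl lemma \autoref{lemdivcurl} actually gives: the product of the weak limits of the two fields. The fields are $|t|^{-1}\Omega^{-1}\Fbb^{(i)}$. Their weak limits are $|t|^{-1}(\Omega^\sharp)^{-1}\langle\nu,\Fbb^{(i)}\rangle$, again because the weight converges strongly. The $L^2$ bound on each current needed by the div-curl lemma comes from the same domination bounds.

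Finally, I would collect the weights. Both sides carry the common factor $|t|^{-2}(\Omega^\sharp)^{-1}$ after the div-curl normalization, or $\Omega^\sharp$ in the form written in \eqref{eq:tartar-commutation-1}. This factor is positive and bounded away from zero, so dividing by it gives \eqref{eq:tartar-commutation-2} as an equality of $L^1_{\loc}$ functions. The identity then holds for almost every $(t,x)$.

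The parallel components cause no additional oscillation. By \autoref{lemma-parallel-momentum}, $\Jhatpar{}^\suit$ converges strongly, so $\nu$ disintegrates with $\Jhatpar$ fixed. This is not needed for the identity itself, but it reassures that the test functions $\Fbb(\Jperp,\Jhatpar)$ are admissible.

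The main obstacle is the equi-integrability in the first step, namely justifying that the quadratic wedge of quasi-currents is uniformly integrable and not just bounded in $L^1$. Without it the Young measure would only represent the limit up to a concentration defect measure. I would try first the null-decomposition bound combined with the $L^2$ spacetime estimates of \autoref{theo:te-Euler-explicit}. If that fails near vacuum, I would truncate the quasi-currents at large $|\Jbb\cdot\Jbb|$ and use the sub-quadratic growth of $q$-dependent terms to show the truncation error vanishes.
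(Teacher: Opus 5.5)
Your skeleton is the paper's: Murat's lemma makes the entropy productions $H^{-1}$-compact (\autoref{lem-gen-entro-part2}), the weighted div-curl lemma (\autoref{lemdivcurl}) gives \eqref{eq:tartar-commutation-1}, and that relation is then read through the Young measure. The paper does not justify this last step; it only asserts it. You are right on two points. The right-hand side of \eqref{eq:tartar-commutation-1} must be read as products of the weak limits $\langle\nu,\Fbb^{(i)}_a\rangle$, not as $\Fbb^{(i)}_a(\Jbb^\sharp)$. And a positive weight that converges a.e.\ and is bounded above and below can be divided out. The problem is the analytic step you add to fill in the paper's assertion: the integrability you claim does not follow from the bounds you cite.

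\textbf{The null-component bound.} In null components the wedge equals $\pm\tfrac12(a_-b_+-a_+b_-)$, with $a_\pm=\Fcomp^{(1)0}\pm\Fcomp^{(1)1}$ and $b_\pm=\Fcomp^{(2)0}\pm\Fcomp^{(2)1}$. Domination (\autoref{lemma:quasi-current-dominated}) only gives $|a_\mp b_\pm|\lesssim(\Fcomp^{*0}+\Fcomp^{*1})(\Fcomp^{*0}-\Fcomp^{*1})$. Here $\Fcomp^{*0}\pm\Fcomp^{*1}=\tfrac12(J_0\mp J_1)^2+R$, with $R\coloneqq\Ebf_0(\Jpar)+\tfrac{q_\Jbb}{2}(-\Jbb\cdot\Jbb)$. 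Thus
\[
(\Fcomp^{*0}+\Fcomp^{*1})(\Fcomp^{*0}-\Fcomp^{*1})
=\tfrac14(\Jperp\cdot\Jperp)^2+R\,(J_0^2+J_1^2)+R^2 .
\]
The cross term contains $J_0^2\,|\Jbb\cdot\Jbb|$ and $J_0^2\,|\Jpar|^2$, and it is not controlled by \autoref{theo:te-Euler-explicit}. Already for $\Jpar=0$, its ratio to $(\Jperp\cdot\Jperp)^2$ is of order $J_0^2/|\Jbb\cdot\Jbb|$, which is unbounded near the light cone. The cross term does cancel for the pair $\Mbf^{0\bullet},\Mbf^{1\bullet}$, whose wedge is $\pm(\tfrac14(\Jperp\cdot\Jperp)^2-R^2)$. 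Domination does not force this cancellation for general pairs.

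\textbf{The other cited bounds.} Domination only gives $|\Fbb|\lesssim\Mbf^{00}(\Jbb,0,0)$, which the energy controls in $L^1$, not $L^2$; note that $\Fbb^*$ is itself dominated. So it supplies neither the $L^2$ hypothesis of \autoref{lemdivcurl} nor equi-integrability of $\Fbb(\Jbb^\suit)$. Moreover, \autoref{theo:te-Euler-explicit} is proved only for exact flows without corrector, not under the hypotheses of \autoref{def-810}.

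\textbf{The truncation fallback.} This does not repair the step. Truncating the currents destroys the quasi-balance laws that give $H^{-1}$-compactness. Truncating only the test function again requires the missing tail control.

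\textbf{What is needed.} You must restrict the currents entering \eqref{eq:tartar-commutation-2} to a subclass with better growth. One option is quasi-currents supported away from the light cone $|J_1|=|J_0|$, so that $|\Fbb|\lesssim-\Jperp\cdot\Jperp$, which is square integrable where \autoref{theo:te-Euler-explicit} applies. They should also grow slightly slower at large density, so that their products are equi-integrable. You then need to check that this subclass is still rich enough for \autoref{prop-young-collapse}. This is the finite-energy device of \cite{LeFlochWestdickenberg-2007}; the paper does not carry it out here and defers it to \cite{LeFlochLeFloch-next}.
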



\subsubsection{Reduction of the Young measure}

We now explore the implications of the commutation relation
\eqref{eq:tartar-commutation-2} by working with a sufficiently rich family of
quasi-currents that are future-convex and dominated by the reference entropy
current~$\Fbb^*$. An important simplification is found by changing coordinates on the phase space from $\Jbb$ to $(\Jperp,\Jhatpar)$.  Since the
normalized parallel momentum~$\Jhatpar{}^\suit$ has already been shown to
converge strongly almost everywhere, the only remaining oscillations in the
pair $(\Jperp{}^\suit,\Jhatpar{}^\suit)$ are carried by the orthogonal momentum~$\Jperp{}^\suit$. Accordingly, after
extraction of a subsequence if necessary, the associated Young measure may be
written in the \emph{tensorial form}
\be
\nu_{t,x}= \mu_{t,x}\otimes \delta_{\Jhatpar{}^\sharp(t,x)},
\ee
where $\mu_{t,x}$~is a probability measure on the phase space of~$\Jperp$, defined for almost every $(t,x)\in\Interval\times\Sbb^1$.

Therefore, the Tartar commutation relation~\eqref{eq:tartar-commutation-2}
reduces, for almost every \((t,x)\), to
\be
\label{eq:tartar-reduced}
\aligned
&
\Big\langle \mu_{t,x}, \
\Fbb^{(1)}_0(\,\cdot\,,\Jhatpar{}^\sharp)\,
\Fbb^{(2)}_1(\,\cdot\,,\Jhatpar{}^\sharp)
-
\Fbb^{(1)}_1(\,\cdot\,,\Jhatpar{}^\sharp)\,
\Fbb^{(2)}_0(\,\cdot\,,\Jhatpar{}^\sharp)
\Big\rangle
\\
&=
\Big\langle \mu_{t,x}, \; \Fbb^{(1)}_0(\,\cdot\,,\Jhatpar{}^\sharp)\Big\rangle
\Big\langle \mu_{t,x}, \; \Fbb^{(2)}_1(\,\cdot\,,\Jhatpar{}^\sharp)\Big\rangle
-
\Big\langle \mu_{t,x}, \; \Fbb^{(1)}_1(\,\cdot\,,\Jhatpar{}^\sharp)\Big\rangle
\Big\langle \mu_{t,x}, \; \Fbb^{(2)}_0(\,\cdot\,,\Jhatpar{}^\sharp)\Big\rangle .
\endaligned
\ee

The key point is that the family of quasi-currents is sufficiently rich to
separate points in the phase space of~$\Jperp$. The relation
\eqref{eq:tartar-reduced} then forces the measure \(\mu_{t,x}\) to reduce to
a Dirac mass. This reduction step lies at the heart of compensated compactness
arguments for the Euler equations. While further
details can be found in the classical papers
\cite{DiPerna-1983a,DiPerna-1983b},
\cite{LeFlochWestdickenberg-2007,PG-PLF}, and the textbook \cite{Dafermos-book}., we refer the reader 
to our companion paper
\cite{LeFlochLeFloch-next}.

\begin{theorem}[Reduction of the Young measure]
\label{prop-young-collapse}
Consider the relativistic Euler equations in $1+1$ symmetry under the hyperbolicity and convexity conditions~\eqref{hyperbolic-eos} together with the asymptotic conditions \eqref{equa-asym} at vacuum and in the large. Then the class of quasi-currents dominated by the reference entropy
current~\(\Fbb^*\) is sufficiently large so that Tartar commutation
relations~\eqref{eq:tartar-commutation-2} imply ``separation of states'' in
the \(\Jperp\)-variable, for every fixed value of \(\Jhatpar\), in the following sense.
For almost every $(t,x)$ a Young measure satisfying Tartar's commutation relations 
generated by a uniformly square-integrable sequence $\Jbb^\suit$ has a support in phase state space, which is either a point or a subset of the vacuum set. Hence, for almost every $(t,x)\in\Interval\times\Sbb^1$, one has 
\be
\nu_{t,x} = 
\begin{cases}
\delta_{(\Jperp{}^\sharp(t,x),\Jhatpar{}^\sharp(t,x))} = \delta_{\Jbb^\sharp(t,x)}, & \text{away from vacuum,}
\\
\text{arbitrary measure supported on $\Jbb \cdot \Jbb=0$,}
& \text{on the vacuum,} 
\end{cases}
\ee
and, moreover, no quadratic concentration phenomenon occurs, so that
\be
\Jbb^\suit(t,\cdot)\to \Jbb^\sharp(t,\cdot)
\qquad
\text{strongly in }L^2(\Tbb^3), \, t \in \Interval. 
\ee
\end{theorem}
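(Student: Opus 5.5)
We will prove the reduction by the DiPerna strategy adapted to relativistic flows with parameters, following \cite{LeFlochWestdickenberg-2007,PG-PLF}. Since $\Jhatpar$ converges strongly (\autoref{lemma-parallel-momentum}), we fix $(t,x)$ with $\nu_{t,x}=\mu_{t,x}\otimes\delta_{\Jhatpar{}^\sharp(t,x)}$. The frozen value $\Jhatpar{}^\sharp$ then enters every quasi-current only as a parameter, and the problem becomes a $2\times2$ system in the variables $\Sigma=(-\Jbb\cdot\Jbb,J_1/J_0)$, or equivalently in Riemann-invariant coordinates $(w,z)$ adapted to the two acoustic families. The quasi-currents at fixed $\Jhatpar$ are generated by a second-order linear hyperbolic equation in phase space (the entropy equation of \cite{LeFlochLeFloch-next}). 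We will construct two families of weak-entropy quasi-currents, $\Fbb_\pm(\cdot;s)$, concentrating near the level sets $w=s$ and $z=s$. They will be built from the fundamental solution (entropy kernel) of that equation, truncated and mollified. We will check that they are dominated by $\Fbb^*=\Mbf^{0\bullet}$ in the sense of \autoref{def:quasi-current-dominated}, using \autoref{lemma:quasi-current-dominated} and the asymptotics \eqref{equa-asym} near vacuum and at large density.

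\textbf{Step 1: reduction of the support.} We insert pairs $(\Fbb_+(\cdot;s),\Fbb_+(\cdot;s'))$ and then $(\Fbb_+(\cdot;s),\Fbb_-(\cdot;s'))$ into \eqref{eq:tartar-reduced}. Following Tartar's argument, we use the genuine nonlinearity supplied by \eqref{hyperbolic-eos-3} and \autoref{theo-struct}, which makes the wave speeds strictly monotone along the Riemann invariants. The goal is to show that the support of $\mu_{t,x}$ outside vacuum is contained in a single point $(w^*,z^*)$. Concretely, we introduce the smallest rectangle $[w_-,w_+]\times[z_-,z_+]$ containing $\operatorname{supp}\mu\setminus\{\text{vacuum}\}$. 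We then localize the currents at the corners of this rectangle and derive a contradiction from the commutation identity unless $w_-=w_+$ and $z_-=z_+$. The expected obstacle is the vacuum boundary, where the Euler system degenerates (see \autoref{theo-struct}), the kernels lose regularity, and the polytropic case $p'(0)=0$ gives only weak hyperbolicity. For this reason we only claim that $\mu$ is a Dirac mass, or else is carried by $\{\Jbb\cdot\Jbb=0\}$. Our first attempt at handling the vacuum is to weight the kernels by powers of $\hNumb$ and use \eqref{near-vacuum-Numb}.

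\textbf{Step 2: ruling out concentration at large density.} The energy bound gives only $L^2$ control, so $|\Jbb^\suit|^2$ could concentrate. We will use the spacetime bound $|\Jbb\cdot\Jbb|^2\in L^1$ of \autoref{theo:te-Euler-explicit} together with \eqref{equa--Nbb-17}, which give equi-integrability of the quadratic terms. With this, the Young measure represents all quadratic functions, so \eqref{eq:tartar-reduced} is justified for currents with quadratic growth. Combined with Step 1, this gives $\langle\nu,|\Jbb|^2\rangle=|\Jbb^\sharp|^2$, hence strong $L^2_{\loc}$ convergence in spacetime.

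\textbf{Step 3: upgrade to convergence at each time.} We will pass from spacetime convergence to strong convergence in $L^2(\Tbb^3)$ for each $t$. To do so, we will use the weak time continuity from the balance laws and the monotone energy functional (\autoref{prop:energy}), extracting a subsequence if necessary.

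The main obstacle, we expect, is Step 1: constructing a family of dominated quasi-currents rich enough, and controlled uniformly near vacuum and at infinity, while the parameter $\Jhatpar$ is present.
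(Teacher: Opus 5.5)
Your skeleton matches the paper's: freeze $\Jhatpar$ via \autoref{lemma-parallel-momentum}, write $\nu_{t,x}=\mu_{t,x}\otimes\delta_{\Jhatpar{}^\sharp(t,x)}$, and feed dominated quasi-currents into \eqref{eq:tartar-reduced}. The paper does not prove the collapse itself; it defers it to \cite{LeFlochLeFloch-next} and to the finite-energy literature \cite{LeFlochWestdickenberg-2007,PG-PLF}. The mechanism you supply for that deferred step does not work here.

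Your Step~1 is DiPerna's extreme-point argument, which requires $\operatorname{supp}\mu_{t,x}$ to be compact in the Riemann-invariant plane (an invariant region). You only have $L^2$ bounds, so the support is in general unbounded. Worse, the framework admits non-zero null momenta. For $\Jpar=0$ the acoustic Riemann invariants are $\frac12\log\frac{1+J_1/J_0}{1-J_1/J_0}\pm\int^\mu\frac{\sqrt{p'(s)}}{s+p(s)}\,ds$, and at least one of them blows up as $|J_1/J_0|\to 1$. So even an $L^\infty$-bounded sequence $\Jbb^\suit$ can generate a support with no bounding rectangle and no corners at which to localize. What is missing is a genuinely finite-energy reduction: one works with the kernel commutation identity as a function of the kernel parameters and uses only the moment bounds furnished by domination by $\Fbb^*$, as in \cite{LeFlochWestdickenberg-2007,PG-PLF}. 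Two smaller points. Multiplying kernels by powers of $\hNumb$ destroys the entropy equation, so the result is no longer a quasi-current. And \autoref{lemma:quasi-current-dominated} lists consequences of domination, so \eqref{equa-condi4} must be checked directly for your kernels.

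Step~2 does not close either. The bound $|\Jbb\cdot\Jbb|^2\in L^1$ controls $(\mu+p)^2$, not $J_0^2$ near the light cone. Take the pulse $J_1=-J_0=\eps^{-1/2}f((x-t)/\eps)$ with $\Jpar=0$ and $\Pbb=\Qbb=0$. It solves the flat vacuum-limit system \eqref{equa-Euler-Minko-trivial-2} (with $\amdeux=1$) and satisfies every bound of \autoref{theo:te-Euler-explicit} trivially, yet $|\Jperp{}^\suit|^2$ concentrates on $\{x=t\}$. That lemma is also an a priori estimate for exact tame flows without corrector, not a hypothesis in \autoref{def-810}. Moreover, \eqref{eq:tartar-reduced} for currents of quadratic growth involves quartic products, which quadratic equi-integrability would not justify anyway.

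More fundamentally, your inference ``Step~1 $\Rightarrow\langle\nu,|\Jbb|^2\rangle=|\Jbb^\sharp|^2$'' ignores the vacuum alternative. In DiPerna's $(\rho,m)$-plane the vacuum is the single point $(0,0)$. Here, with $\Jpar=0$, it is the pair of rays $J_1=\pm J_0$, on which \eqref{equa-condi1} forces $\Fcomp^0=\mp\Fcomp^1$. Hence every probability measure carried by one ray satisfies \eqref{eq:tartar-reduced} identically. Oscillating null dust ($J_1=-J_0$, $J_0^2=g((x-t)/\eps)$ with $g>0$ periodic and non-constant) generates such a non-Dirac $\nu$. So strong $L^2$ convergence cannot be extracted from the commutation relations alone; it needs extra input on the null cone, or a weaker conclusion.

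Finally, Step~3 gives at best a.e.\ $t$ along a subsequence, not every $t$. The monotone energy cannot help isolate the fluid part, because it also contains $(\Pbb,\Qbb)$, which converge only weakly.
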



\subsection{Conclusion for the Einstein--Euler system (\autoref{theorem-493})}
\label{section=9-4}

We are now in a position to complete the passage to the limit in the
Einstein--Euler system with corrector.
Indeed, the geometric compactness theory developed in
\autoref{section=8}, together with the fluid compactness statement of
\autoref{theorem-fluid-compactness}, provides exactly the convergence
properties required to establish \autoref{theorem-493} (and hence its special case \autoref{theorem-492}).

The construction of the limit geometric variables $\ell^\sharp,\log\Omega^\sharp,\Pbb^\sharp,\Qbb^\sharp,\Kpar^\sharp$ (in \autoref{section=8}) and the fluid variables $\Jbb^\sharp$ (in the present section) ensures that they belong to the appropriate functional spaces for $(\Phi^\sharp,\Psi^\sharp,\Pi^\sharp)$ to be a $\Phi\Psi\Pi$ flow with finite energy in the sense of \autoref{weakdefinitionT2-relaxed}.
It also ensures the strong convergence of all variables except for two features: the weak convergence $(\Pbb^\suit,\Qbb^\suit)\rightharpoonup(\Pbb^\sharp,\Qbb^\sharp)$ in $L^2$, and the fact that their quadratic fluctuations contribute to the limit corrector~$\Pi^\sharp$, namely the combination $\Pi + \amdeux (\Omega)^2 \bigl( \Pbb \otimes \Pbb + \Qbb \otimes \Qbb \bigr)$ converges weakly-$*$ in measures.

We have verified throughout \autoref{section=8} that all evolution and constraint equations for the geometric variables pass to the limit once the fluid variables are known to converge strongly.  There remains to note that source terms in the particle number equation~\eqref{indef-eqvNumb}, spatial Euler equations \eqref{eq:T2-Euler-perp-def-b-r}--\eqref{eq:T2-fluid23-def-b-r}, reference entropy inequality~\eqref{equa-lfk3}, and $H$-divergence law~\eqref{equa-bvpara} also converge since the terms quadratic in the essential geometric variables (such as in~$\Msource$) are null forms.
This concludes the proofs of \autoref{theorem-493}, \autoref{theorem-494} and \autoref{theorem-492}.

Next, we turn to \autoref{corollary-493-init}.  When the flows \emph{tamely} assume a prescribed tame sequence of approximate initial data sets $(\mathring\Phi^\suit,\mathring\Psi^\suit,\mathring\Pi^\suit)$, the same convergence proofs apply.  In particular, the bounds of \eqref{equa-DK399} and \eqref{Fmeasure-apriori-def} on the metric-weighted parallel momentum and shear-induced measure in terms of the initial data pass to the limit, which ensures that the limiting flow \emph{tamely} assumes the limiting initial data set.  This property is essential in order for the class of exact solutions to feature compactness properties.

This completes the compactness part of the proof of the global existence theorem stated in \autoref{section=1}.


\section{Concluding observations}
\label{section=10}

\subsection{Theory based on the particle number current}
\label{section=10-1}

\subsubsection{Two natural choices for the reference entropy}

The entropy structure of the Euler system admits two formulations that are
natural in the present context. In the formulation adopted throughout the core of this paper, the particle
number current \(\vNumb\) is kept in exact conservative form, whereas the
current \(\Mbf^{0\bullet}\) is taken as the reference entropy current and is
therefore required to satisfy an entropy inequality. In the alternative
formulation discussed here, these two roles are reversed: the current
\(\Mbf^{0\bullet}\) is kept in exact balance form, while the particle number
current, up to a sign, is used as the reference entropy current and is
required to satisfy
\bel{equa-entr-rrr}
\divuntrois \vNumb \geq 0
\ee
in the sense of distributions for weak solutions. In this setting, the
relevant convexity property is the future convexity of the negative particle
number current \((-\vNumb)\), established in
\autoref{appendix=D-3}.

The compensated compactness method applies to both choices, but the two
formulations do not lead to the same range of admissible fluid variables. The
formulation based on \(\Mbf^{0\bullet}\), which is the one developed in the
present paper, allows for non-trivial parallel momentum, namely
$
(J_2,J_3)\not\equiv (0,0).
$
By contrast, the alternative formulation based on $-\vNumb$ as the
reference entropy current only allows the reduced regime
\be
J_2=J_3=0.
\ee
The reason is that, in this alternative setting, the $H$-divergence law
\eqref{equa-entrop23-cons} is replaced by an inequality valid only for a
restricted subclass of functions~$H$. This is not sufficient to ensure
the maximum principle on~$\Gammaunpar$ needed to control the parallel
momentum components.

Under the restriction $J_2=J_3=0$, however, this obstruction disappears.
The quasi-currents introduced in the present paper reduce to standard entropy
currents, and the proof proceeds along essentially the same lines as in the
framework developed above. A further simplification is that the
metric-weighted twist functions \(\Gammazero_2\) and \(\Gammazero_3\) are
constant, which trivializes their treatment.

For these reasons, we chose to base the main theory on the current
\(\Mbf^{0\bullet}\): this formulation applies to a \emph{substantially wider class
of Einstein--Euler flows} and, as explained in \autoref{appendix=D-2}, also recovers
DiPerna's standpoint in the non-relativistic limit.

We summarize the alternative formulation as follows. The corresponding
existence, stability, and instability results are virtually
\emph{identical}, and the proof follows exactly the same lines; the only 
difference lies in the \emph{choice of regularization} used in the construction of a sequence of approximate solutions,
for instance via a vanishing viscosity approach.

\begin{theorem}
The notions of (tame) Einstein--Euler flows and the statements in Theorems~\ref{theorem-493} and~\ref{theorem-492} remain valid (hence Theorems~\ref{thm:1.1} and \ref{thm:stability0}), 
under the restriction that the parallel momenta
$J_2=J_3=0$ vanish, when one requires the balance law for \(\Mbf^{0\bullet}\) to hold as an
equality in the sense of distributions, that is, 
\be
\divuntrois \bigl( \Omega \, \Mbf^{0 \bullet}(\Jbb, \Kpar, \Lbb) \bigr)
+ {1 \over 2\, t} \Msource(\Jbb, \Kpar, \Lbb) = 0,
\ee
while the reference entropy is taken to be the negative particle number
current \((-\vNumb)\) and one requires, in the sense of distributions,
\be
\divuntrois \vNumb \geq 0.
\ee
\end{theorem}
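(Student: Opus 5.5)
The plan is to rerun the proof of \autoref{theorem-493} and \autoref{theorem-492} (Sections~\ref{section=6}--\ref{section=9}) with the roles of the two currents exchanged, checking at each step whether an estimate used the inequality form of \eqref{equa-lfk3} or the equality form of \eqref{indef-eqvNumb}. Throughout I set $J_2=J_3=0$, which is propagated by the parallel Euler equations \eqref{eq:T2-fluid23-def-a-r}--\eqref{eq:T2-fluid23-def-b-r} as long as $K_2,K_3$ are compatible with it. By \eqref{eq:221-cons} and \eqref{eq:T2-9101112-evol-tilde-0}, the metric-weighted twists $\Gammazero_m$ are then constant. Hence $\Kpar$ is an explicit function of $P,Q,t$, the $H$-divergence law and conditions~(4) in \autoref{def-weaksolu-deux} become vacuous, and every quasi-current is an entropy current: the shear-induced factors vanish, so $\Fmeasure$ reduces to the lower-order source $\Delta_\Fbb$ plus the entropy dissipation.

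For the a priori estimates I proceed as follows.
\begin{itemize}
\item \autoref{lemma-31}, \autoref{lem-conf-vol}, the corrector bounds \autoref{prop:corrector}, and the pointwise bounds on $\Omega$ and $\ell$ only use the evolution equations \eqref{eq:T2-8-def}, \eqref{eq:evollambda-def}, \eqref{eq:T2-567-def} and the corrector inequalities. They are unchanged.
\item The energy estimate \autoref{prop:energy} used \eqref{equa-lfk3} as an inequality. With the equality it holds a fortiori: the computation \eqref{equa-619d} becomes an identity with the same sign-definite source.
\item The spacetime integrability \autoref{theo:te-Euler-explicit} relied on the energy law with $\Omega$-weights. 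Its derivation uses that law with a positive weight, so the equality is stronger and the argument goes through, with $|\Jpar|^4$ absent.
\item The timelike estimates \autoref{lem--34a} use only the second Euler equation. They are unchanged.
\item The twist bounds follow trivially from the constancy of $\Gammazero$.
\item The inequality $\divuntrois\vNumb\geq 0$ gives monotonicity of $\int \Numb^0\,\dVtrois$, which replaces conservation of particle number and is enough for the $L^1$ control of $\Numb$, by \eqref{equa--Nbb-17}.
\end{itemize}

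For the compactness analysis of Section~\ref{section=8}, the geometric part is literally the same, since it only uses the Einstein equations and the null structure. The fluid part, Section~\ref{section=9}, needs the entropy productions for the relevant entropy currents $\Fbb$ to be bounded measures, so that Murat's lemma applies. My approach is to take entropy currents dominated by the new reference current $-\vNumb$, whose future-convexity is given by \autoref{appendix=D-3}, in the sense of \eqref{equa-condi4} with $Z$ replaced by the principal variables $(T^{00},T^{01})$ associated with the balance laws for $\Mbf^{0\bullet},\Mbf^{1\bullet}$. The bound on the productions then comes from convexity: $\Fbb$ minus a large multiple of the reference current is concave, so its production has a sign up to $L^1$ sources. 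The Tartar relation \eqref{eq:tartar-commutation-2} and the reduction \autoref{prop-young-collapse} are then exactly the classical DiPerna setting for the $2\times2$ relativistic isentropic system, which is applied in two variables $\Jperp$.

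I expect the main obstacle to be this last point: showing that the family of entropies dominated by $-\vNumb$, rather than by $\Mbf^{0\bullet}$, is still rich enough and has controlled growth at vacuum and at infinity. The growth of $\vNumb$ is only sub-quadratic by \eqref{equa--Nbb-17}, so domination by it might exclude the entropies needed for Young-measure reduction. I would first check that the weak entropies generated by the entropy wave equation in phase space, as in \cite{LeFlochLeFloch-next}, have Hessians bounded by $\Hess(-\Numb^0)$ plus the energy Hessian. The energy is now conserved, so it can be added to any entropy with both signs at no cost. Dominating by the sum therefore suffices, and the argument of \cite{LeFlochWestdickenberg-2007} carries over.
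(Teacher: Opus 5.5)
Your outline is the paper's own argument in \autoref{section=10-1}. With $J_2=J_3=0$ the twists $\Gammazero$ are frozen, condition~(4) becomes vacuous, and quasi-currents reduce to entropy currents. The estimates of Sections~\ref{section=6}--\ref{section=7} survive because an equality for $\Mbf^{0\bullet}$ is stronger than \eqref{equa-lfk3}, and Section~\ref{section=8} is untouched. What you add beyond the paper in the fluid step, however, does not work as written.

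\emph{The remedy for your main obstacle is empty.} You correctly use the principal variables $(T^{00},T^{01})$ in \eqref{equa-condi4}. In these variables the fluid energy $\Mbf^{00}(\Jbb,0,0)=2T^{00}$ is linear, so its Hessian vanishes. Hence ``dominated by $\Hess(-\Numb^0)$ plus the energy Hessian'' is literally the same class as ``dominated by $-\vNumb$''. The fact you invoke, that a conserved current can be added with either sign at no cost, is exactly the statement that conserved currents are affine in the principal variables. Such currents can never enlarge a dominated class. So if the entropies dominated by $-\vNumb$ were too few, your argument would not rescue them. The paper takes that class as it stands and asserts, deferring to \cite{LeFlochLeFloch-next}, that the Young-measure reduction works for either reference entropy.

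\emph{The measure bound on entropy productions does not follow from convexity.} The only entropy inequality imposed is $\divuntrois\vNumb\geq 0$, so concavity of $\Fbb-C\,(-\vNumb)$ gives its divergence no sign. In the paper this bound is a hypothesis built into the tame notion: condition~(5) of \autoref{def-weaksolu-deux}, condition~(8) for sequences, and \eqref{Fmeasure-apriori-def} for flows tamely assuming data. Domination serves only to control $\Fbb$ by the reference current and to make $\Delta_\Fbb$ integrable (\autoref{lemma:quasi-current-dominated}, \autoref{lem:conditions-Fbb-ap}). Together with the assumed measure bound, that is what Murat's lemma needs. Your convexity argument is the right way to \emph{verify} condition~(8) for a viscous approximation that dissipates every entropy convex in $(T^{00},T^{01})$. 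This is why the paper says the two theories differ only in the choice of regularization, but the argument is not available inside the compactness proof.

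Two smaller points. First, $\divuntrois\vNumb\geq 0$ makes $\int\Numb^0\,\dVtrois$ non-decreasing toward the future, so it gives no upper bound; that bound comes from the energy through \eqref{equa--Nbb-17}. Second, for weak solutions $J_2=J_3=0$ is imposed as a restriction, as in the statement. It is not propagated by the parallel Euler equations, since no uniqueness is available.
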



\subsubsection{Rankine--Hugoniot relations and entropy inequalities}

We now compare the two formulations at the level of discontinuous solutions.
Let
$
\Sigma=\{x=\chi(t)\}\subset \Interval\times\Sbb^1
$
be a shock curve, with speed denoted by 
$
s(t):=\chi'(t).
$
For any balance law of the form
\be
\divuntrois \Xbb = S,
\qquad S\in L^1,
\ee
where \(\Xbb=(X^0,X^1)\), the source term has no singular part concentrated on
\(\Sigma\). Therefore, the corresponding Rankine--Hugoniot relation reads
\be
\label{eq:RH-general-short}
\bigl[X^1-s\,\amdeux\,X^0\bigr]_\Sigma=0.
\ee
Likewise, if instead
\be
\divuntrois \Xbb \leq S,
\qquad S\in L^1,
\ee
then the singular part concentrated on \(\Sigma\) must be non-positive, and we
obtain the entropy jump inequality
\be
\label{eq:RH-general-ineq-short}
\bigl[X^1-s\,\amdeux\,X^0\bigr]_\Sigma \leq 0.
\ee
Thus, at the level of weak solutions, the distinction between a current kept in
exact balance form and a current used as a reference entropy current is reflected
directly in the corresponding jump conditions.

In the alternative formulation considered here, the current \(\Mbf^{0\bullet}\)
is kept in exact balance form, whereas $-\vNumb$ is treated as the reference
entropy current. Accordingly,  we obtain exact Rankine--Hugoniot relations for
the balance laws associated with \(\Mbf^{0\bullet}\) and with the parallel
momentum components, together with an entropy inequality for the particle number
current. With our sign convention, these relations take the form
\be
\aligned
\bigl[\Mbf^{01}-s\,\amdeux\,\Mbf^{00}\bigr]_\Sigma& =0,
& \qquad
\bigl[J_2(J_1+s\,\amdeux J_0)\bigr]_\Sigma & =0,
\\
\bigl[\Mbf^{11}-s\,\amdeux\,\Mbf^{10}\bigr]_\Sigma & =0,
& \qquad
\bigl[J_3(J_1+s\,\amdeux J_0)\bigr]_\Sigma & =0,
\\
\bigl[\Numb^1-s\,\amdeux\,\Numb^0\bigr]_\Sigma & \leq 0.
\endaligned
\ee
The relations involving \(J_2\) and \(J_3\) are simply the Rankine--Hugoniot
relations associated with the exact balance laws satisfied by the corresponding
parallel momentum currents.

For definiteness, let us express these relations in the isothermal case, and
when the geometric weights are formally suppressed in \(\Mbf^{ab}\). We then
find
\bel{eq:RH-flat-short}
\aligned
\Bigl[-J_0 J_1-s\,\amdeux \Bigl(J_0^2 - \frac{1-q}{2} (-\Jbb\cdot\Jbb)\Bigr)\Bigr]_\Sigma & =0,
& \qquad \bigl[J_1J_2+s\,\amdeux J_0J_2\bigr]_\Sigma & =0,
\\
\Bigl[\Bigl(J_1^2 + \frac{1-q}{2} (-\Jbb\cdot\Jbb)\Bigr)+s\,\amdeux J_0 J_1\Bigr]_\Sigma & =0,
& \qquad
\bigl[J_1J_3+s\,\amdeux J_0J_3\bigr]_\Sigma & =0,
\\
\Bigl[(-\Jbb\cdot\Jbb)^{q/2} (J_1+s\,\amdeux\,J_0) \Bigr]_\Sigma &\leq 0.
\endaligned
\ee
In other words, the two frameworks can be compared as follows:
\bei
\item in the framework with \(\Mbf^{0\bullet}\) as the reference entropy, the
particle number current satisfies an exact Rankine--Hugoniot relation, while
\(\Mbf^{0\bullet}\) satisfies an entropy jump inequality;

\item in the framework with \(-\vNumb\) as the reference entropy, the current
\(\Mbf^{0\bullet}\) satisfies an exact Rankine--Hugoniot relation, while
\(\vNumb\) satisfies an entropy jump inequality.
\eei

Although both formulations originate from the same Euler equations at the level
of regular solutions, they lead in general to different theories of weak
solutions. Indeed, once shocks are present, exchanging the conservative and
entropic roles of these currents modifies the corresponding jump conditions, and
therefore may modify the class of admissible shock curves.

At the same time, this discrepancy should be viewed with some care. The two
formulations share the same underlying smooth equations and the same leading
Rankine--Hugoniot structure. In the reduced regime \(J_2=J_3=0\), they are
therefore expected to be perturbatively close for weak shocks. More precisely,
we expect that the corresponding Hugoniot curves through a given left state
agree at leading order, while the difference between the two admissibility
criteria enters only at higher order in the shock strength. Establishing this
comparison rigorously would require a separate asymptotic analysis of the shock
curves, which lies beyond the scope of the present paper.         


\subsection{Other models covered by the present method}
\label{section=10-2}

\subsubsection{Scalar-field matter models}

Scalar fields also fit naturally within the present framework. Indeed, an irrotational stiff fluid may be interpreted as a scalar field, in which case the matter equations reduce to linear wave equations on the quotient geometry. The nonlinear difficulties specific to compressible fluids ---such as shock formation and entropy dissipation--- are then absent, and the compactness analysis becomes much simpler, since weak convergence is stable for linear equations. More generally, the present approach applies to matter models whose evolution equations are linear on a given background geometry.


\subsubsection{Stiff fluids}

The case of  the equation of state \(p=\rho\), describing the class of \emph{stiff fluids}, also lies within the scope of the present method. In this regime, the Euler equations are essentially linear hyperbolic, so that several of the difficulties specific to genuinely nonlinear fluids disappear. The entropy structure simplifies, and the compactness analysis becomes closer to that of scalar-field models than to that of general compressible flows.
Accordingly, the arguments developed here for the geometric variables, especially for \(\Lbb=(\Pbb,\Qbb)\), extend with only minor modifications, while the matter variables can be treated by more direct linear methods. In this sense, stiff fluids provide a natural intermediate model between scalar fields and genuinely nonlinear Einstein--Euler flows. See also the earlier work of Christodoulou~\cite{Christo95,Christo96} on a two-phase model and a free-boundary problem.


\subsubsection{Non-zero cosmological constant}

The first-order formulation developed here also extends to the Einstein--Euler equations with cosmological constant \(\Lambda\). At the level of the local analysis and of the weak formulation, the required modifications are minor: the cosmological term contributes only lower-order terms and therefore does not alter the hyperbolic structure, the entropy framework, or the compactness arguments established in this paper.

Indeed, our existence and stability theory for flows with finite energy extends with little change to the case \(\Lambda\geq0\). The \emph{asymptotic behavior}, however, is expected to depend strongly on the sign and size of \(\Lambda\). For \(\Lambda>0\), we anticipate a more regular expanding dynamics and, in suitable regimes, possibly even a late-time smoothing effect. By contrast, in the contracting direction, the competition between matter concentration, gravitational focusing, and the cosmological term leads to a markedly different picture. We leave a systematic study of these questions to future work, and refer to Smulevici~\cite{Smulevici-2009,Smulevici-2011} and the references cited therein for results on vacuum spacetimes in the regular setting.


\subsection{Perspectives} 
\label{section=10-3}

The present work suggests several directions for further investigation of the Cauchy problem in general relativity under symmetry assumptions. On the matter side, a natural problem is to extend our analysis to kinetic models, such as Vlasov or Boltzmann-type matter, for which we expect an interesting and involved interplay between geometric-matter interactions, weak regularity, and compactness under weak convergence.

Another important challenge is to go beyond the present class of barotropic fluids and treat matter models involving two thermodynamical variables. Removing the restriction to equations of state depending on a single variable, and thus extending the theory to non-isentropic fluid flows, requires new ideas.

A further perspective concerns the long-time behavior and asymptotic geometry of flows with finite energy. Even in the presence of symmetry, the interaction between matter concentration, geometric oscillations, and the possible emergence of corrector stress tensors raises new questions about asymptotic stability, effective dynamics, and back-reaction phenomena.
 

\phantomsection
\addcontentsline{toc}{section}{Acknowledgements}
\subsubsection*{Acknowledgments}

The authors are grateful to Mihalis Dafermos, Alan D. Rendall, Hans Ringstr\"om, and Luc Tartar for valuable discussions and encouragement throughout this long-term project. 
They acknowledge financial support from the Simons Center for Geometry and Physics, Stony Brook University, where part of this research was carried out during a visit in 2019. Part of this work was also completed while the second author (PLF) was visiting the Mittag-Leffler Institute during the Semester Program ``General Relativity, Geometry, and Analysis: beyond the first 100 years after Einstein''. During the completion of this project, the authors were  partially supported by the research project ANR-23-CE40-0010-02: Einstein-PPF, funded by the Agence Nationale de la Recherche, and by the ERC-MSCA Staff Exchange Project 101131233: Einstein-Waves, funded by the European Research Council.
 


\appendix 

\section{Vielbein and spin connection}
\label{appendix=A}

In areal gauge, the Vielbein vectors leading to the decomposition~\eqref{equa-metric-inverse} of the spacetime metric $\guntrois$ take the form
\bel{eq:ourframe}
\aligned
e^0 & \coloneqq \Omega\,dt, \
& e^1 & \coloneqq \Omega\, \amdeux dx, \
\\
e^2 & \coloneqq |t|^{1/2} e^{P/2} \bigl( dy + Q\,dz + (G+QH) \, dx\bigr), \
& e^3 & \coloneqq |t|^{1/2} e^{-P/2} (dz + H\,dx),
\\
e_0 & \coloneqq \Omega^{-1} \del_t, \
& e_1 & \coloneqq \Omega^{-1} \amdeux^{-1} (\del_x - G\del_y - H\del_z), \ 
\\
e_2 & \coloneqq |t|^{-1/2} e^{-P/2} \del_y, \
& e_3 & \coloneqq |t|^{-1/2} e^{P/2} (\del_z - Q \del_y).
\endaligned
\ee 
Denoting by $m,n, \ldots= 0, 1, 2, 3$ the indices corresponding to the frame $(e_m)$, we observe that the components $\omega^{mn}_p \coloneqq e_p^\alpha e_\beta^m\nabla_\alpha e^{\beta n}=- \omega^{nm}_p$ of the \emph{spin connection} take a particularly simple form in terms of the variables~\eqref{eq:definevar-01}: 
\bel{equa-touslesomega}
\aligned
   & \omega^{20}_2 = \frac{1}{2} \Bigl(P_0+ \frac{1}{t} \Omega^{-1} \Bigr), \,
    &&\omega^{03}_3 = \frac{1}{2} \Bigl(P_0- \frac{1}{t} \Omega^{-1} \Bigr), \, 
    &&\omega^{12}_2 = \omega^{31}_3 = \frac{1}{2} P_1, 
\\
& \omega^{20}_3 = \omega^{30}_2 = \omega^{23}_0 = \frac{1}{2} Q_0, \,
&& \omega^{12}_3 = \omega^{13}_2 = \omega^{23}_1 = \frac{1}{2} Q_1, \, 
   && 
\\
   & \omega^{10}_2 = \omega^{20}_1 = \omega^{21}_0 = \frac{1}{2} K_2, \quad 
   && \omega^{10}_3 = \omega^{30}_1 = \omega^{31}_0 = \frac{1}{2} K_3, \quad 
\\
   &\omega^{01}_0 =\amdeux^{-1}  (\Omega^{-1})_x, \quad
   &&
 \omega^{01}_1 = \amdeux (\Omega^{-1} \amdeux^{-1} )_t,  
\endaligned
\ee
while all other coefficients vanish: 
 \be
 \aligned   
   && \omega^{02}_0 = \omega^{03}_0 = \omega^{12}_1 = \omega^{13}_1 = \omega^{23}_2 = \omega^{23}_3 = 0.
   \endaligned
   \ee


\section{The case of isothermal flows}
\label{appendix=B}

\subsection{Example: isothermal flows with \texorpdfstring{$\Tbb^2$}{T2} symmetry}
\label{appendix=B-1}

By definition, isothermal matter flows are characterized by the relation $p=k^2\mu$ for some constant $k\in(0,1)$, the sound speed, and~\eqref{equa-211b} then simplifies to the relation
\be
- \Jbb \cdot \Jbb = 2(1+ k^2) \mu \quad \text{ (isothermal fluids),}
\ee
namely the Lorentzian norm of the fluid momentum is a multiple of the mass-energy density of the fluid. Moreover, the pressure ratio $q$ defined by~\eqref{eq:defq}, is also \emph{constant} and, specifically, 
\bel{equa-case-iso}
q = \frac{1-k^2}{1+k^2} \eqqcolon 2 \, \kappa \quad \text{ (isothermal fluids),}
\ee
in which the constant $\kappa\in(0,1/2)$ will be a convenient notation to distinguish between a general pressure law and an isothermal law. Notably, when the function $q$ is a constant, the Euler equations become purely quadratic and, for instance, the energy $\Mbf^{00}$ becomes 
\bel{eq:energyGowdy} 
\Mbf^{00}(\Jbb, \Kpar, \Lbb) 
= \Ebf_0(\Jbb, \Kpar, \Lbb) + \kappa \, \bigl( - \Jbb \cdot \Jbb\bigr) \quad \text{ (isothermal fluids).}
\ee
The terms involving $\Jbb$ take the form $\Mbf^{00}(\Jbb, 0, 0) = (\frac{1}{2}+\kappa)J_0^2+(\frac{1}{2}- \kappa)(J_1^2+J_2^2+J_3^2)$; since $\kappa < 1/2$, this defines a strictly convex function of all~$J_m$. 

\begin{proposition}[Structure of the first-order formulation: isothermal fluids] 
\label{theo-struct-5}
For isothermal fluids $p=k^2\mu$, \autoref{theo-struct} can be extended with the following property. The \JKL\ formulation of the Einstein--Euler system consists exclusively of quadratic terms in the variables $(\Jbb, \Kpar, \Lbb)$. Moreover, in this case, the energy density $(\Jbb, \Kpar, \Lbb) \mapsto \Mbf^{00}(\Jbb, \Kpar, \Lbb)$ is a \emph{strictly convex},  \emph{non-negative} quadratic form in all of its arguments.
\end{proposition}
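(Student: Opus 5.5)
The proof is a direct algebraic specialization of \autoref{theo-struct} to constant pressure ratio. Every claim reduces to inspecting the explicit expressions \eqref{eq:T2-1234-def}--\eqref{eq:theconstraints-def} together with the definitions \eqref{eq:T2-Mdef-0} and \eqref{eq:T2-Euler-perp2-def0}.

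\textbf{Step 1: the formulation is purely quadratic.} For $p=k^2\mu$, the ratio $q=(1-k^2)/(1+k^2)=2\kappa$ is a constant by \eqref{equa-case-iso}, so $q_\Jbb$ no longer depends on $-\Jbb\cdot\Jbb$. The only non-polynomial terms flagged in \autoref{theo-struct} were the $q$-dependent, sub-quadratic ones. I would check each equation in turn.
\begin{itemize}
\item The $\Lbb$ equations \eqref{eq:T2-1234-def} and the twist and parallel Euler equations \eqref{eq:T2-fluid23-def-a}--\eqref{eq:T2-fluid23-def-b}, \eqref{eq:T2-9101112-evol-def-a}--\eqref{eq:T2-9101112-evol-def-b}, \eqref{eq:221a}--\eqref{eq:221b} already contain only products of two first-order variables, with coefficients depending on $t,\Omega,\amdeux$.
\item The quantities $\Mbf^{ab}$ and $\Msource$ become exact quadratic forms, since $\frac{q}{2}(-\Jbb\cdot\Jbb)=\kappa(-\Jbb\cdot\Jbb)$.
\end{itemize}
Hence the remaining equations \eqref{eq:T2-Euler-perp-def-a}--\eqref{eq:T2-Euler-perp-def-b}, \eqref{eq:T2-8-def}, \eqref{eq:evollambda-def} and \eqref{eq:T2-567-def} are quadratic in $(\Jbb,\Kpar,\Lbb)$.

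\textbf{Step 2: convexity and positivity of $\Mbf^{00}$.} Write
\be
\Mbf^{00}=\Ebf_0(\Kpar,\Lbb)+\Bigl(\tfrac12+\kappa\Bigr)J_0^2+\Bigl(\tfrac12-\kappa\Bigr)\bigl(J_1^2+J_2^2+J_3^2\bigr),
\ee
using \eqref{eq:energyGowdy} and expanding $-\Jbb\cdot\Jbb$. This is a diagonal quadratic form in all components of $(\Jbb,\Kpar,\Pbb,\Qbb)$. The coefficient of $\Ebf_0(\Kpar,\Lbb)$ contributes $\tfrac12$ on each squared component, and the $J$-coefficients are $\tfrac12\pm\kappa$.

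Since $\kappa\in(0,1/2)$ (equivalently $k\in(0,1)$), all coefficients are strictly positive. A diagonal quadratic form with positive coefficients is positive definite, hence strictly convex and non-negative. This holds on all of the variable space, not only on the causal cone \eqref{eq:causal}.

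The only point requiring care, and the main one I would flag, is that strict convexity genuinely needs $\kappa<1/2$, i.e.\ a strictly positive sound speed $k>0$. Without it the coefficient of $J_1^2+J_2^2+J_3^2$ degenerates, which is consistent with the weak-hyperbolicity remark in \autoref{theo-struct} when $p'(0)=0$. Beyond this the argument is entirely routine bookkeeping.
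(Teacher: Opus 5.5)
Your argument is correct and is essentially the paper's own proof. Once $q\equiv 2\kappa$ is constant, the only sub-quadratic ($q$-dependent) terms become exact quadratic forms. The expansion $\Mbf^{00}=\Ebf_0(\Kpar,\Lbb)+(\tfrac12+\kappa)J_0^2+(\tfrac12-\kappa)(J_1^2+J_2^2+J_3^2)$ with $0<\kappa<1/2$ then gives a positive definite diagonal form, and your caveat that strict convexity needs $\kappa<1/2$ (i.e.\ $k>0$) is exactly the paper's observation.
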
 


\subsection{Example: isothermal flows with Gowdy symmetry}
\label{appendix=B-2}

When, moreover, $G=H=0$ so that $\Kpar=0$, and consequently $\Jpar=0$, the isothermal case is particularly remarkable and we restate here the Einstein--Euler system in this case. From~\eqref{eq:T2-1234-def}--\eqref{eq:theconstraints-def} we then obtain the Einstein equations 
\be  
\left.
\hskip-.18cm
\aligned
\divuntrois \Pbb & = \Qbb \cdot \Qbb,
\, 
& \curluntrois (|t|^{-1} \Pbb) & = 0, 
\\
\divuntrois \Qbb & = - \Pbb \cdot \Qbb,
\, 
& \curluntrois (|t|^{-1} \Qbb) & = |t|^{-1} \Pbb \wedge \Qbb, 
\\
(\amdeux)_t  & = t \,  \kappa \, \bigl( - \Jperp \cdot \Jperp\bigr) \, \Omega^2 \amdeux,
\\
(\log\Omega)_t
&= - \frac{1}{4t} + \frac{t}{2}\Mbf^{11} \, \Omega^2,
\\
(\log\Omega)_x
&= - {t \over 2 } \, \Mbf^{01} \, \Omega^2 \amdeux,
\endaligned
\right\} 
\text{(Gowdy, isothermal)}
\ee  
with $\kappa = q/2 = (1-k^2)/(2(1+k^2))$, as well as the Euler equations
\bel{eq:T2-567-two-G}
\left.
\aligned
\divuntrois \bigl( \Omega \, \Mbf^{0 \bullet} \bigr)
& = - \frac{1}{2\,t} \Msource,
\\
\qquad \divuntrois \bigl( \Omega \, \Mbf^{1 \bullet} \bigr) 
& = 0,
\endaligned
\right\}
\quad \text{(Gowdy, isothermal),}
\ee 
in which the total energy-momentum tensor~\eqref{eq:T2-Mdef-0} and the source term~\eqref{eq:T2-Euler-perp2-def0} are given by 
\bel{eq:T2-567-two2}
\left.
\aligned
\Mbf^{00}
& = \Ebf_0(\Jperp, \Pbb, \Qbb) + \kappa (- \Jperp \cdot \Jperp),
\qquad
\\
\Mbf^{01} & = - \Ebf_1(\Jperp, \Pbb, \Qbb),
\\
\Mbf^{11}
& = \Ebf_0(\Jperp, \Pbb, \Qbb) - \kappa (- \Jperp \cdot \Jperp),
\qquad
\\
\Msource
& = - \Pbb \cdot \Pbb -  \Qbb \cdot \Qbb + (1 - \kappa) \, (- \Jperp \cdot \Jperp), 
\endaligned
\right\}
\quad \text{(Gowdy, isothermal).}
\ee
Moreover, the source obeys the stronger inequality
\be
\abs{\Msource} \leq 2 \, \Mbf^{00} \quad \text{(Gowdy, isothermal).}
\ee
(For a proof, we refer to~\autoref{appendix=D-1}.) In contrast, the 
wave equation~\eqref{eq:waveconffac} does not admit any simplification.


\section{Structure of the Einstein--Euler system}
\label{appendix=C}

\subsection{Proof of~\autoref{theo-struct}}
\label{appendix=C-1}

\subsubsection{Wave speed} 

The expressions of the wave speeds are obvious for the geometry variables, while concerning the fluid variables for \emph{sufficiently regular} solutions we can write (modulo geometric and lower-order terms denoted by $\geo$ and $\lot$, respectively) 
\bse
\label{equa-euleronly}
\be
\aligned
\Bigl( \Ebf_0(\Jperp) + \Ebf_0(\Jpar) - {q_\Jbb \over 2}  \, \Jbb \cdot \Jbb \Bigr)_t
- \amdeux^{-1} \, \Bigl( \Ebf_1(\Jperp) \Bigr)_x
& = \geo + \lot,
\\
\Bigl( \Ebf_1(\Jperp) \Bigr)_t
- \amdeux^{-1} \, \Bigl( \Ebf_0(\Jperp) - \Ebf_0(\Jpar) + {q_\Jbb \over 2}  \, \Jbb \cdot \Jbb \Bigr)_x
& = \geo + \lot,
\endaligned
\ee
coupled with~\eqref{eq:T2-fluid23-def-a}--\eqref{eq:T2-fluid23-def-b} which can be expanded similarly. Here, the constitutive function $q$ only depends on the squared norm 
\be
y\coloneqq - \Jbb \cdot \Jbb = J_0^2 - J_1^2 - J_2^2 - J_3^2 = 2 (\mu + p(\mu)) .
\ee
Recall that $y \geq 0$ and $J_0 \leq 0$, while $\Ebf_0(\Jperp) = (J_0^2 + J_1^2)/2$ and $\Ebf_0(\Jpar) = (J_2^2 + J_3^2)/2$.

Focusing on the principal part only, we obtain a system of four coupled partial differential equations:  
\bel{equa-Euler-Minko}
\begin{aligned} 
{1 \over 2} \bigl( J_0^2 + J_1^2 + J_2^2 + J_3^2 + y q \bigr)_t
- \amdeux^{-1} ( J_0 J_1 )_x
& = 0, \quad
& ( J_0 J_2 )_t - \amdeux^{-1} ( J_1 J_2 )_x
& = 0,
\\
( J_0 J_1 )_t - \amdeux^{-1} \, {1 \over 2} \bigl( J_0^2 + J_1^2 - J_2^2 - J_3^2 - y q \bigr)_x
& = 0,
& ( J_0 J_3 )_t - \amdeux^{-1} ( J_1 J_3 )_x
& = 0.
\end{aligned}
\ee 
We can write this system in the quasilinear form $A_0 \Jbb_t - \amdeux^{-1} A_1 \Jbb_x =0$ (for sufficiently regular solutions) with $A_0$ and $A_1$ given below in terms of
\be
(y q)' \coloneqq \frac{d(y q)}{dy}
= \frac{1-p'(\mu)}{1+p'(\mu)}
= \frac{1-k(\mu)^2}{1+k(\mu)^2}
\ee
in terms of the sound speed $k(\mu) = \sqrt{p'(\mu)}$.
The hyperbolicity conditions~\eqref{hyperbolic-eos} imply that $0<k(\mu)<1$ away from vacuum, namely $0<(y q)'<1$. Explicitly, the operators are
\be
\aligned
A_0 & \coloneqq
\begin{pmatrix}
(1+ (yq)') J_0 
& (1- (yq)') J_1
& (1- (yq)') J_2
& (1- (yq)') J_3
\\
J_1& J_0 & 0 & 0
\\
J_2 & 0 & J_0 & 0 
\\
J_3 & 0 & 0 & J_0 
\end{pmatrix}, 
\\
A_1 & \coloneqq 
\begin{pmatrix}
J_1& J_0 & 0 & 0
\\
(1- (yq)') J_0 & (1+(yq)') J_1 & (-1+(yq)') J_2 & (-1+(yq)') J_3
\\
0 & J_2 & J_1 & 0 
\\
0 & J_3 & 0 & J_1 
\end{pmatrix}.
\endaligned
\ee
The wave speeds are easily computed as the (real) roots $\xi$ of the polynomial $\det(\xi A_0 + \amdeux^{-1} A_1)$, namely the determinant of the matrix
\be
\begin{pmatrix}
\xi (1+(yq)') J_0 + \amdeux^{-1} J_1 
& \xi (1-(yq)') J_1 + \amdeux^{-1} J_0 
& \xi (1-(yq)') J_2
& \xi (1-(yq)') J_3
\\
\xi J_1 + \amdeux^{-1} (1-(yq)') J_0 & \xi J_0 + \amdeux^{-1} (1+(yq)') J_1 & \amdeux^{-1} (-1+(yq)') J_2 & \amdeux^{-1} (-1+(yq)') J_3
\\
\xi J_2 & \amdeux^{-1} J_2 & \xi J_0 + \amdeux^{-1} J_1 & 0 
\\
\xi J_3 & \amdeux^{-1} J_3 & 0 & \xi J_0 + \amdeux^{-1} J_1
\end{pmatrix},
\ee
which are $- \amdeux^{-1} J_1/J_0$ (double) and two other roots
\be
\aligned
& \xi_\pm \coloneqq
 - \amdeux^{-1} \frac{(1 - k^2) J_0 J_1  \pm k \sqrt{- \Jbb \cdot \Jbb} \sqrt{J_0^2 - J_1^2 - k^2 (J_2^2 + J_3^2)}}
{J_0^2 - k^2 (J_1^2 + J_2^2 + J_3^2)}.
\endaligned
\ee
Altogether the twelve wave speeds for $(\Pbb,\Qbb,\Kpar,\Omega,\amdeux,\Jbb)$ are (assuming that $J_0>0$) $\pm \amdeux^{-1}$ (double each), $0$ (quadruple), 
$- \amdeux^{-1} J_1/J_0$ (double), and~$\xi_{\pm}$. The latter speeds $\xi_{\pm}$ differ from each other and from all others (and are real) thanks to $0<k<1$ and $- \Jbb\cdot\Jbb>0$, so that the system is hyperbolic; indeed, the system also admits a full basis of eigenvectors. 

The same properties hold at the vacuum \emph{provided} the sound speed $p'(0)$ is non-vanishing. Otherwise, for vacuum states $\Jbb \cdot \Jbb=0$  \emph{and} when the sound speed $p'(0)$ vanishes, the Euler equations \eqref{equa-Euler-Minko} become 
\bel{equa-Euler-Minko-trivial}
\begin{aligned} 
 \bigl(  J_0 J_0\bigr)_t
- \amdeux^{-1} ( J_0 J_1 )_x
& = 0, \quad
& ( J_0 J_2 )_t - \amdeux^{-1} ( J_1 J_2 )_x
& = 0,
\\
( J_0 J_1 )_t - \amdeux^{-1} \, \bigl(  J_1^2 \bigr)_x
& = 0,
& ( J_0 J_3 )_t - \amdeux^{-1} ( J_1 J_3 )_x
& = 0, 
\end{aligned}
\ee 
all four wave speeds are equal to $- \amdeux^{-1} J_1/J_0$, but the system fails to be diagonalizable; the flux Jacobian has a nontrivial Jordan structure, hence the system is not symmetrizable and not a well-posed first-order hyperbolic system in the standard sense. However, this ill-posed system further simplifies \emph{in our setup}, as we also ensure
\be
\widehat J_m = \hNumb(\mu)^{-1} J_m \text{ is bounded}
\qquad\text{and}\qquad \hNumb(\mu)\to0 \text{ as } \mu\to0,
\ee
so that in the vacuum limit, we have $J_m=0$ and \eqref{equa-Euler-Minko-trivial} reduces to the two equations 
\bel{equa-Euler-Minko-trivial-2}
\begin{aligned} 
 \bigl(  J_0 J_0\bigr)_t
- \amdeux^{-1} ( J_0 J_1 )_x
& = 0, \quad
& ( J_0 J_1 )_t - \amdeux^{-1} ( J_1^2 )_x
& = 0. 
\end{aligned}
\ee 
which  ---since $J_0^2=J_1^2$--- coincide on each branch $J_1 = \pm J_0$ and \emph{well-posedness} is recovered. This feature is entirely analogous to the non-relativistic setup proposed by DiPerna \cite{DiPerna-1983a}. 
\ese
%
 

\subsubsection{Nonlinearities} 

We now characterize the genuine nonlinearity of the fluid characteristic fields. The other statements on the nonlinearities are clear from the expressions in~\autoref{section=2} before the statement of~\autoref{theo-struct}. For clarity, we restate the property as follows. 

\begin{lemma}
\label{lem:genuine-nonlinearity}
Assume the hyperbolicity condition~\eqref{hyperbolic-eos}, and consider a non-vacuum state satisfying
${
-\Jbb\cdot\Jbb>0.
}$
Then the double characteristic family associated with the speed
${
-\amdeux^{-1}\frac{J_1}{J_0}
}$
is linearly degenerate. Moreover, the two acoustic characteristic families associated with \(\xi_\pm\) are genuinely nonlinear in the sense of Lax if and only if
\bel{hyperbolic-eos-3-rep}
p''(\mu)+2\,\frac{1-p'(\mu)}{\mu+p(\mu)}\,p'(\mu)>0
\qquad\text{for all }\mu>0
\ee
or, equivalently, writing \(k(\mu)=\sqrt{p'(\mu)}\), 
\bel{equa-C3}
k'(\mu)+\frac{1-k(\mu)^2}{\mu+p(\mu)}\,k(\mu)>0.
\ee
\end{lemma}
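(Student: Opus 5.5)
The proof reduces to computing, for each characteristic family of the principal part~\eqref{equa-Euler-Minko}, the quantity $\nabla_\Jbb\xi\cdot r$, where $r$ is a right eigenvector. Rather than work in the $\Jbb$ coordinates, I will use the variables $(\Sigma,\Jhatpar)$ of~\eqref{Sigmat}, or equivalently $(\mu,v,\Jhatpar)$ with $v=J_1/J_0$ the velocity component orthogonal to the orbits. The map from $\Jbb$ to these variables is a diffeomorphism away from vacuum. Genuine nonlinearity and linear degeneracy are invariant under such changes of unknowns, so the computation may be done in whichever coordinates are most convenient.

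\textbf{Linearly degenerate family.} In the new variables the parallel equations read $\Jhatpar_t=(J_1/J_0)\Jhatpar_x$, up to the factor $\amdeux^{-1}$ and lower-order terms. I will check that the two-dimensional eigenspace for the speed $-\amdeux^{-1}J_1/J_0$ is spanned by directions in which $\mu$ and $v$ are held fixed and only $\Jhatpar$ varies. The check goes as follows. Perturb $\Jhatpar$ and choose the accompanying $\Jperp$ variation so that $-\Jbb\cdot\Jbb$ and $J_1/J_0$ are unchanged; then verify that this vector solves $(\xi A_0+\amdeux^{-1}A_1)r=0$ with $\xi=-\amdeux^{-1}J_1/J_0$. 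The speed $-\amdeux^{-1}J_1/J_0$ depends only on $v$, which is constant along these directions, so its derivative along the eigenvectors vanishes. This gives linear degeneracy. The same argument is the standard contact-discontinuity computation.

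\textbf{Acoustic families.} On the acoustic eigenvectors $\Jhatpar$ is constant, because the transport equations for $\Jhatpar$ decouple from the $(\mu,v)$ system in the characteristic sense. The question therefore reduces to the $2\times2$ relativistic Euler system in $(\mu,v)$ with $\Jhatpar$ frozen as a parameter. The parameter enters through $\Jpar=\hNumb\Jhatpar$, and this is the source of the $J_2^2+J_3^2$ terms in $\xi_\pm$. The cleanest route is to use Riemann-invariant or rapidity coordinates.

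First, I will write $\xi_\pm$ as a relativistic velocity addition of the fluid velocity and $\pm$ an effective sound speed. Here I expect the effective sound speed to be $k$ corrected by the parallel velocity, consistent with the square root $\sqrt{J_0^2-J_1^2-k^2(J_2^2+J_3^2)}$ in the formula.

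Second, I will compute the eigenvector $r_\pm$ in $(\mu,v)$ coordinates, using the relation $d v\propto \pm k\,d\mu/(\mu+p)$ along the rarefaction curves.

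Third, I will differentiate $\xi_\pm$ along $r_\pm$. Up to a positive factor, and after simplification, the result should be $\pm\bigl(k'(\mu)+k(1-k^2)/(\mu+p)\bigr)$. The term $k(1-k^2)/(\mu+p)$ comes from the velocity-addition part, since $\partial_v(\text{rapidity})\propto 1/(1-v^2)$, combined with $dv/d\mu$. The equivalence of~\eqref{hyperbolic-eos-3-rep} with~\eqref{equa-C3} then follows from $p''=2kk'$ after dividing by $2k>0$.

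\textbf{Main obstacle.} The hardest step is the acoustic computation with nonzero $\Jpar$. I must verify that the positive prefactor does not vanish and that the dependence on $\Jhatpar$ only rescales the result. My first approach is to boost locally so that $J_1=0$ at the point under consideration. Alternatively I can fix $\Jhatpar$ and observe that it enters only through an effective enthalpy. In either case the sign computation reduces to the classical one-dimensional relativistic $p$-system.
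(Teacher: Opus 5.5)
Your treatment of the double family is correct and matches what the paper asserts by ``direct computation''. Directions keeping $-\Jbb\cdot\Jbb$ and $J_1/J_0$ fixed solve $(\xi A_0+\amdeux^{-1}A_1)r=0$ for $\xi=-\amdeux^{-1}J_1/J_0$: rows~3--4 are immediate, and rows~1--2 follow from $J_0\delta J_0=J_1\delta J_1+J_2\delta J_2+J_3\delta J_3$. The speed depends only on $J_1/J_0$, which gives linear degeneracy. Your reduction of the acoustic families to the $2\times2$ system with $\Jhatpar$ frozen is also sound, because the $\Jhatpar$-equations contain no derivatives of $\Jperp$.

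The gap is the acoustic computation for $\Jpar\neq0$, exactly the step you flag. Your steps~2--3 (the relation $dv\propto\pm k\,d\mu/(\mu+p)$ and velocity addition with sound speed $k$) are the $\Jpar=0$ computation. When $\Jpar\neq0$, both the effective sound speed and the rarefaction relation change, so these steps do not produce $\pm\bigl(k'+k(1-k^2)/(\mu+p)\bigr)$.

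Of your two remedies, the boost to $J_1=0$ does not help. A boost along $x$ leaves the tangential velocity in place: at $J_1=0$ the acoustic speeds are $\mp\amdeux^{-1}k\,\bigl(-\Jbb\cdot\Jbb/(J_0^2-k^2|\Jpar|^2)\bigr)^{1/2}$, not $\mp\amdeux^{-1}k$. A boost to the full rest frame, on the other hand, destroys the plane symmetry. The effective-enthalpy route is the right one, but it reduces you to the classical $p$-system for a \emph{different} equation of state. The classical computation then yields the genuine-nonlinearity condition for that effective law, not \eqref{hyperbolic-eos-3-rep}. Showing that the two conditions are equivalent is the missing step.

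It can be supplied as follows. Since $T^{00}=\tfrac12J_0^2-p$, $T^{01}=-\tfrac12J_0J_1$ and $T^{11}=\tfrac12J_1^2+p$, the frozen system is the $1+1$ relativistic Euler system with enthalpy $\tilde w=\tfrac12(J_0^2-J_1^2)=w+s$, pressure $\tilde p=p(\mu)$ and energy density $\tilde\mu=\mu+s$. Here $w=\mu+p$ and $s=\tfrac12\hNumb^2|\Jhatpar|^2$.

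By \eqref{Sdef-scaled}, $s'=s(1-p')/w$. Hence $D:=d\tilde\mu/d\mu=1+s(1-p')/w\geq1$ and $\tilde k^2=p'/D$. Differentiating once more gives $D'=-\tfrac{s}{w}\,\mathcal{G}$ with $\mathcal{G}:=p''+2p'(1-p')/w$, and one checks that $D-p'=(1-p')\tilde w/w$. Together these give
\[
\frac{d^2\tilde p}{d\tilde\mu^2}+\frac{2\tilde k^2(1-\tilde k^2)}{\tilde w}
=\frac{p''D-p'D'}{D^3}+\frac{2p'(1-p')}{w\,D^2}
=\frac{\tilde w}{w\,D^3}\,\mathcal{G} .
\]
So $\Jhatpar$ indeed only rescales the result by a positive factor. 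Your classical rapidity computation, applied to $(\tilde\mu,\tilde p,\tilde w,\tilde k)$, then closes the argument.

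The paper's own proof does not supply this step either. Its acoustic speeds $-\amdeux^{-1}(u_1\pm k)/(1\pm u_1k)$ omit the parallel correction that is present in $\xi_\pm$.

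Finally, Lax's condition gives only $\mathcal{G}\neq0$. For the ``only if'' direction with the strict sign, you must also exclude $\mathcal{G}<0$ on all of $(0,+\infty)$. This follows from \eqref{hyperbolic-eos-1}: $\mathcal{G}<0$ would make $p'$ decreasing, hence $p'\leq p'(0)<1$ and $(\log p')'\leq-(1-p'(0))/\mu$, forcing $p'\to+\infty$ as $\mu\to0$.
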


\bse
The claim is local on the non-vacuum region, and genuine nonlinearity is invariant under smooth changes of variables. We may therefore replace the variables \(\Jbb\) by the fluid variables
${
(\mu,u_1,u_2,u_3),
}$
where \(u=(u_1,u_2,u_3)\) denotes the spatial velocity, and work with the principal part of the relativistic Euler equations in these coordinates.

In these variables, the double characteristic speed is
${
\lambda_0=-\amdeux^{-1}u_1,
}$
while the two acoustic speeds are
${
\lambda_\pm=-\amdeux^{-1}\frac{u_1\pm k(\mu)}{1\pm u_1k(\mu)}.
}$
The corresponding right eigenvectors may be chosen smoothly on the non-vacuum region. A direct computation then shows that
\be
\nabla \lambda_0 \cdot r_0 =0,
\ee
so that the double family is linearly degenerate.

For the acoustic families, the same computation yields
\be
\nabla \lambda_\pm \cdot r_\pm
=
\pm C_\pm(\mu,v)\,
\Bigl(
p''(\mu)+2\,\frac{1-p'(\mu)}{\mu+p(\mu)}\,p'(\mu)
\Bigr),
\ee
where, for a suitable normalization of the eigenvectors,
\be
C_\pm(\mu,v)
=
\frac{\amdeux^{-1}\,(\mu+p(\mu))}{2\,p'(\mu)\,\bigl(1\pm u_1\sqrt{p'(\mu)}\bigr)^2},
\ee
which is a smooth positive factor on the non-vacuum region. Hence
\be
\nabla \lambda_\pm \cdot r_\pm \neq 0
\quad\Longleftrightarrow\quad
p''(\mu)+2\,\frac{1-p'(\mu)}{\mu+p(\mu)}\,p'(\mu)\neq 0.
\ee
Under the strict convexity condition~\eqref{hyperbolic-eos-3-rep}, both acoustic families are therefore genuinely nonlinear. Finally, since \(p'(\mu)=k(\mu)^2\), the condition~\eqref{hyperbolic-eos-3-rep} is equivalently written as
\eqref{equa-C3}. 
This completes the proof.
\ese
%


\subsubsection{Constraints}
\bse
The propagation of constraints~\eqref{eq:theconstraints-def} on $\log\Omega, K_2, K_3$ is now checked. Consider first the functions
\bel{opL11-and-14}
\aligned
|t|^{3/2} \opL_{11} & = \bigl(  |t|^{3/2} K_2 \bigr)_t - |t|^{3/2} \Omega \bigl( J_1 J_2 -  P_0 K_2/2 \bigr) , \\
|t|^{3/2} \opL_{14} & = \bigl( |t|^{3/2} K_2 \bigr)_x - |t|^{3/2} \Omega \amdeux  ( J_0 J_2 - P_1 K_2 / 2) ,
\endaligned
\ee
whose vanishing constitute the evolution equation~\eqref{eq:T2-9101112-evol-def-a} and the constraint equation for~$K_2$, respectively.  The Euler equation~\eqref{eq:T2-fluid23-def-a} for~$J_2$ identifies derivatives of the second terms in~\eqref{opL11-and-14}, from which one finds
\be
\aligned
\bigl( |t|^{3/2} \opL_{14} \bigr)_t
& = \bigl(  |t|^{3/2} K_2 \bigr)_{xt} 
- \Bigl( |t|^{3/2} \Omega \amdeux  \bigl( J_0 J_2 - P_1 K_2 / 2 \bigr) \Bigr)_t 
\\
& = \bigl(  |t|^{3/2} K_2 \bigr)_{tx} - \Bigl( |t|^{3/2} \Omega \bigl( J_1 J_2 - P_0 K_2 / 2 \bigr) \Bigr)_x
= \bigl( |t|^{3/2} \opL_{11} \bigr)_x = 0 .
\endaligned
\ee
Since $\opL_{14}$ vanishes identically on one slice, it must vanish for all times. The same argument applies to the constraint for $K_3$, using the Euler equation~\eqref{eq:T2-fluid23-def-b} for $J_3$ and evolution equation~\eqref{eq:T2-9101112-evol-def-b} for~$K_3$.

On the other hand,  by considering
\be
2 \opL_{13} = \bigl( \log \Omega^2\bigr)_x
- t \, \Omega^2 \amdeux  \bigl( \Ebf_1(\Pbb) + \Ebf_1(\Qbb) + \Ebf_1(\Jperp) \bigr), 
\ee
we can check the constraint~\eqref{eq:T2-567-def} on the lapse. Using the evolution equation~\eqref{eq:evollambda-def}
we obtain
\be
\aligned
(2 \opL_{13})_t 
& = \Bigl( t \, \Omega^2 \amdeux \Mbf^{01}(\Jbb,\Kpar,\Pbb,\Qbb) \Bigr)_t
+ \bigl( \log \Omega^2\bigr)_{xt}
\\
& = \Bigl( t \, \Omega^2 \amdeux  \Mbf^{01}(\Jbb,\Kpar,\Pbb,\Qbb) \Bigr)_t
+ \Bigl( t \, \Omega^2 \Mbf^{11}(\Jbb,\Kpar,\Pbb,\Qbb) - {1\over 2t}\Bigr)_x
= 0, 
\endaligned
\ee
as it coincides with the second Euler equation in~\eqref{eq:T2-Euler-perp-def}. Therefore $\opL_{13}$ is constant in time and thus vanishes for all times if it vanishes at some initial time. This completes the proof of~\autoref{theo-struct}. 
\ese
%


\subsection{A fluid-geometry splitting}
\label{appendix=C-2}

\subsubsection{Retrieving the parallel Euler equations}

In the Einstein--Euler system \eqref{eq:T2-1234-def}--\eqref{eq:theconstraints-def}, the evolution equations~\eqref{eq:T2-Euler-perp-def} for~$\Jbb$ arise as one component of the Einstein equations, together with three compatibility equations $f_{tx}=f_{xt}$ for $f$ among $(\log\Omega,K_2,K_3)$. From this unusual form of the evolution equations, we now derive the standard Euler equations $\nabla_\alpha T^{\alpha\beta} = 0$ expressed in \JKL\ variables. The idea is to isolate on the left-hand side of~\eqref{eq:T2-Euler-perp-def} only the fluid contributions, and to simplify the contributions of geometry using the Einstein equations. These manipulations, and the resulting source terms, are only defined for sufficiently regular solutions.
For the components $\Jperp=(J_0,J_1)$ orthogonal to symmetry orbits, two distinct formulations will be of interest to us, so we begin with the comparatively simpler parallel directions $\Jpar=(J_2,J_3)$.

We keep on the left-hand side of~\eqref{eq:T2-fluid23-def-a}--\eqref{eq:T2-fluid23-def-b} only the fluid contributions, and evaluate the geometric terms using the identity
\be
\aligned
\curluntrois(f \Xbb) & = f \curluntrois\Xbb + X_1 e_0(f) - X_0 e_1(f) \\
& = f \curluntrois\Xbb + \Omega^{-1} X_1 f_t - \amdeux^{-1} \Omega^{-1} X_0 f_x
\endaligned
\ee
for any $\Tbb^2$-invariant scalar and vector fields~$f,\Xbb$.
This yields
\bel{equa-R2R3}
\aligned
\divuntrois \bigl(  \abs{t}^{1/2} J_m \Jperp\bigr) & = -  \abs{t}^{1/2} R_m, \quad m = 2,3 , \\
R_2 = {1 \over 2} (\Jperp\cdot\Pbb) J_2,
\qquad
R_3 & = (\Jperp\cdot\Qbb) J_2 - {1\over 2} (\Jperp\cdot\Pbb) J_3 .
\endaligned
\ee
For tame Einstein--Euler flows, the right-hand side belongs to
\(L^1(\Interval\times\Tbb^3,\dVuntrois)\) thanks to the spacetime integrability. 
Hence, the identity~\eqref{equa-R2R3} is well-defined in the sense of distributions.


\subsubsection{Retrieving the orthogonal Euler equations.}

We turn to the orthogonal directions.
Keeping on the left-hand side of~\eqref{eq:T2-Euler-perp-def} only the fluid contributions, we obtain
\bel{equa-OmM00-fluid}
\begin{aligned}
\divuntrois \bigl( \Omega \, \Mbf^{0 \bullet}(\Jbb, 0, 0, 0) \bigr) + {1 \over 2t} \Msource(\Jbb, 0, 0, 0)
& = \Omega \, R_0 , 
\\
\divuntrois \bigl( \Omega \, \Mbf^{1 \bullet}(\Jbb, 0, 0, 0) \bigr)
& = - \Omega \, R_1, 
\end{aligned}
\ee
in which we denote 
\be
\aligned
\Omega \, R_0 & \coloneqq - \divuntrois \bigl( \Omega \, \Mbf^{0 \bullet}(0, \Kpar, \Pbb, \Qbb) \bigr) 
- {1 \over 2t} \Msource(0, \Kpar, \Pbb, \Qbb), 
\\
\Omega \, R_1 & \coloneqq  \divuntrois \bigl( \Omega \, \Mbf^{1 \bullet}(0, \Kpar, \Pbb, \Qbb) \bigr) .
\endaligned
\ee
To retrieve the Euler equations, the terms $\Omega \, R_0$, $\Omega \, R_1$ must be recast in terms of fluid variables.
This relies on eliminating all derivatives using the Einstein evolution and constraint equations.
A direct calculation results in $R_0=R_0(\Jbb, \Kpar, \Pbb, \Qbb)$ and $R_1 =R_1(\Jbb, \Kpar, \Pbb, \Qbb)$ given by
\bel{equaR0R1}
\aligned
R_m
& = t \, \Omega \Bigl(
\Bigl( \Ebf_0(\Jpar) + {q_\Jbb \over 2} \, (-\Jbb \cdot \Jbb) \Bigr) \Ebf_m(\Pbb,\Qbb)
- \Ebf_0(\Kpar) \Ebf_m(\Jperp) \Bigr)
\\
& \quad - \frac{1}{2} (J_2^2 - J_3^2) P_m - J_2 J_3 Q_m - (J_2 K_2 + J_3 K_3) (\star J)_m, \qquad m=0,1,
\endaligned
\ee
where $(\star J)_0=J_1$ and $(\star J)_1=J_0$. 
Observe that some terms in the first line such as $(-\Jbb\cdot\Jbb)\Ebf_m(\Pbb,\Qbb)$ require more regularity than the one we assumed for tame Einstein--Euler flows (\autoref{def-weaksolu-deux}).
In fact, the derivation of~\eqref{equaR0R1} involves products of the evolution equations of $(\Pbb,\Qbb)$ by components of $(\Pbb,\Qbb)$, which are only meaningful for sufficiently regular solutions.

For clarity in the presentation, we state the following conclusion. 
Observe that all terms in $R_m$, $m=0,\dots,3$, are quadratic in the fluid variables (apart from the factor~$q_\Jbb$), and in particular vanish when $\Jbb=0$.  In contrast to $\Msource$ which is independent of the geometric variables, they depend on the geometry, and as a result are at least of \emph{cubic order}.

\begin{lemma}[Fluid evolution revisited]
\label{lem-fluid-evolution-revisited}
All sufficiently regular solutions to the Einstein--Euler equations in $\Tbb^2$ symmetry also satisfy the Euler equations
\bel{eq:T2-Euler-perp7-r}
\begin{aligned} 
2 \divuntrois \bigl( \Omega \, T^{0 \bullet} \bigr) + {1 \over 2t} \Msource(\Jbb, 0, 0, 0) 
& = \Omega \, R_0, 
\\
2 \divuntrois \bigl( \Omega \, T^{1 \bullet} \bigr)
& = - \Omega \, R_1, 
\\  
\divuntrois \bigl(  \abs{t}^{1/2} J_2 \Jperp\bigr) & = -  \abs{t}^{1/2} R_2,
\\
\divuntrois  \bigl(  \abs{t}^{1/2} J_3 \Jperp\bigr)& = -  \abs{t}^{1/2} R_3,
\end{aligned}
\ee
in which the right-hand sides $R_0,R_1$ and $R_2,R_3$ are given by the algebraic expressions \eqref{equaR0R1} and \eqref{equa-R2R3}, respectively. 
\end{lemma}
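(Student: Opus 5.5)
The plan is to obtain \eqref{eq:T2-Euler-perp7-r} from the \JKL\ system by subtracting out the geometric contributions. Everything is done for sufficiently regular solutions, so all products and commutations of derivatives are classical.

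The two parallel equations need no new work. They are exactly \eqref{equa-R2R3}, which follows from \eqref{eq:T2-fluid23-def-a}--\eqref{eq:T2-fluid23-def-b} by three steps:
\begin{itemize}
\item expand the curl terms with the Leibniz identity $\curluntrois(f\Xbb)=f\curluntrois\Xbb+\Omega^{-1}X_1f_t-\amdeux^{-1}\Omega^{-1}X_0f_x$;
\item insert $\curluntrois(|t|^{-1}\Pbb)=0$ and $\curluntrois(|t|^{-1}\Qbb)=|t|^{-1}\Pbb\wedge\Qbb$ (with the weight $|t|^{1/2}$ absorbed into $f$);
\item replace the derivatives $e_0(K_m)$ and $e_1(K_m)$ by the twist evolution equations \eqref{eq:T2-9101112-evol-def-a}--\eqref{eq:T2-9101112-evol-def-b} and the constraints \eqref{eq:221a}--\eqref{eq:221b}.
\end{itemize}
The $K$-terms should then cancel. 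This leaves $R_2=\tfrac12(\Jperp\cdot\Pbb)J_2$ and $R_3=(\Jperp\cdot\Qbb)J_2-\tfrac12(\Jperp\cdot\Pbb)J_3$.

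For the orthogonal equations, use the remark after \eqref{eq:T2-Mdef-0}. The fluid part of $\Mbf^{a\bullet}$ is $2T^{a\bullet}$, so by linearity of the divergence,
\[
\divuntrois\bigl(\Omega\Mbf^{a\bullet}(\Jbb,\Kpar,\Lbb)\bigr)=2\divuntrois(\Omega T^{a\bullet})+\divuntrois\bigl(\Omega\Mbf^{a\bullet}(0,\Kpar,\Pbb,\Qbb)\bigr).
\]
The source $\Msource$ splits in the same way. Substituting into \eqref{eq:T2-Euler-perp-def-a}--\eqref{eq:T2-Euler-perp-def-b} gives \eqref{equa-OmM00-fluid} with the defining formulas for $\Omega R_0$ and $\Omega R_1$. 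What remains is to evaluate these two divergences algebraically.

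The geometric part $\Mbf^{a\bullet}(0,\Kpar,\Pbb,\Qbb)$ consists of $\Ebf_0(\Pbb,\Qbb)$, $\Ebf_1(\Pbb,\Qbb)$ and $\Ebf_0(\Kpar)$. Expand $\divuntrois\bigl(\Omega\,\Ebf_\cdot(\Pbb,\Qbb)\bigr)$ by the product rule into three groups:
\begin{itemize}
\item \emph{Weight derivatives.} These are $(\log\Omega)_t$, $(\log\Omega)_x$ and $\amdeux_t/\amdeux$. Eliminate them with \eqref{eq:evollambda-def}, \eqref{eq:T2-567-def} and \eqref{eq:T2-8-def}; this produces $t\Omega^2$ times components of $\Mbf$.
\item \emph{Derivatives of $P_0,P_1,Q_0,Q_1$.} Writing $\Ebf_0=\tfrac14\bigl((X_0+X_1)^2+(X_0-X_1)^2\bigr)$ and $\Ebf_1=\tfrac14\bigl((X_0+X_1)^2-(X_0-X_1)^2\bigr)$ reduces these to null derivatives $(e_0\pm e_1)(X_0\pm X_1)$. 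Those are expressed through $\divuntrois\Xbb$ and $\curluntrois\Xbb$, which the evolution equations \eqref{eq:T2-1234-def} replace by $\Qbb\cdot\Qbb$, $\Pbb\cdot\Qbb$, $\Pbb\wedge\Qbb$ and $\Re,\Im(\Kpar^2+\Jpar^2)$.
\item \emph{The $\Ebf_0(\Kpar)$ part.} Treat it with the twist equations, as in the parallel case.
\end{itemize}

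The main obstacle is the bookkeeping in this last expansion. One must check that all purely geometric terms cancel: the cubic $\Pbb,\Qbb$ terms, the $\tfrac1{4t}$ lapse terms against the $\tfrac1{2t}\Msource(0,\Kpar,\Pbb,\Qbb)$ piece, and the $\Ebf_0(\Kpar)\Ebf_m(\Pbb,\Qbb)$ terms against $K$-derivative contributions. What should survive are only products containing at least one fluid factor, which gives \eqref{equaR0R1}:
\begin{itemize}
\item the $t\Omega$-weighted products with $\Ebf_0(\Jpar)+\tfrac{q_\Jbb}2(-\Jbb\cdot\Jbb)$ and $\Ebf_0(\Kpar)\Ebf_m(\Jperp)$, coming from the lapse and $\amdeux$ equations;
\item the terms $(J_2^2-J_3^2)P_m$ and $J_2J_3Q_m$, coming from the $\Jpar^2$ sources in \eqref{eq:T2-1234-def};
\item the terms $(J_2K_2+J_3K_3)(\star J)_m$, coming from the twist equations.
\end{itemize}

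I would organize the computation in the null frame $e_0\pm e_1$, keeping the $\pm$ components separate, and as a consistency check verify the result against the isothermal Gowdy case of \autoref{appendix=B-2}. Taken together, the four equations are the contracted Bianchi identity $\nabla_\alpha T^{\alpha\beta}=0$ written in frame components, which is a useful global check on the signs.
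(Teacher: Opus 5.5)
Your proposal is correct and is essentially the paper's own argument. The parallel equations come from \eqref{eq:T2-fluid23-def-a}--\eqref{eq:T2-fluid23-def-b} via the Leibniz rule for $\curluntrois$ together with the twist equations and the $(\Pbb,\Qbb)$ curl equations. The orthogonal ones come from splitting off $\Mbf^{a\bullet}(0,\Kpar,\Pbb,\Qbb)$ and $\Msource(0,\Kpar,\Pbb,\Qbb)$ and then eliminating every derivative in the geometric divergence with the Einstein evolution and constraint equations. Carrying out the bookkeeping reproduces \eqref{equaR0R1} exactly: all $1/t$ terms cancel against $\frac{1}{2t}\Msource(0,\Kpar,\Pbb,\Qbb)$, and the cubic $\Pbb,\Qbb$ terms cancel. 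One small correction to your bookkeeping: the quartic products $\Ebf_0(\Kpar)\Ebf_m(\Pbb,\Qbb)$ cancel among the terms generated by the lapse and $\amdeux$ equations, whereas the $\Kpar$-derivative terms are what cancel the $\Re,\Im(\Kpar^2)$ sources coming from \eqref{eq:T2-1234-def}.
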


Concerning the left-hand side of \eqref{eq:T2-Euler-perp7-r}, we recall (cf.~\eqref{eq:T2-Mdef-0} and~\eqref{eq:T2-Euler-perp2-def0}) that 
\be
\aligned
2 T^{00} = \Mbf^{00}(\Jbb, 0, 0, 0) & = \Ebf_0(\Jperp) + \Ebf_0(\Jpar) + {q_\Jbb \over 2}  \, (- \Jbb \cdot \Jbb), 
\\
2 T^{01} = \Mbf^{01}(\Jbb, 0, 0, 0) & = - \Ebf_1(\Jperp), 
\\
2 T^{11} = \Mbf^{11}(\Jbb, 0, 0, 0)  & = \Ebf_0(\Jperp) - \Ebf_0(\Jpar) - {q_\Jbb \over 2}  \, (- \Jbb \cdot \Jbb),
\endaligned
\ee
while the corresponding source term is 
\be
\Msource(\Jbb, 0, 0, 0) = - \Jperp \cdot \Jperp + \Ebf_0(\Jpar) - {q_\Jbb \over 2}  \, (- \Jbb \cdot \Jbb).
\ee 


\subsubsection{A rescaling with integrable source terms}

By including a factor of $t^{-1}\Omega^{-2}$ into the divergences in~\eqref{eq:T2-Euler-perp7-r}, we can change the source terms $R_0,R_1$ into sources that are integrable in spacetime for tame Einstein--Euler flows.
This formulation will prove to be crucial in determining the source terms in entropy inequalities.
We obtain the following result, with source terms specified below the lemma.

\begin{lemma}\label{lem-Euler-defined}
All sufficiently regular solutions to the Einstein--Euler equations in $\Tbb^2$ symmetry also satisfy
\bel{Euler-defined-01}
\aligned
2 \divuntrois \bigl( t^{-1} \Omega^{-1} \, T^{0 \bullet} \bigr)
& = \Deltafluid^0(\Jbb,\Kpar,\Pbb,\Qbb) ,
\\
2 \divuntrois \bigl( t^{-1} \Omega^{-1} \, T^{1 \bullet} \bigr)
& = \Deltafluid^1(\Jbb,\Kpar,\Pbb,\Qbb) ,
\endaligned
\ee
with source terms explicited in \eqref{explicit-widecheck-R}, below.
\end{lemma}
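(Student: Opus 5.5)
The plan is to derive \eqref{Euler-defined-01} directly from Lemma~\ref{lem-fluid-evolution-revisited}, using only the Leibniz rule for $\divuntrois$ and the Einstein evolution and constraint equations for the lapse. For any $\Tbb^2$-invariant scalar $f$ and one-form $\Xbb$ we have
\[
\divuntrois(f\Xbb)=f\,\divuntrois\Xbb - X_0\,e_0(f) + X_1\,e_1(f).
\]
Here $e_0(f)=\Omega^{-1}f_t$ and $e_1(f)=\Omega^{-1}\amdeux^{-1}f_x$, with the sign convention $X^0=-X_0$ as in \eqref{equa-divN}. We apply this with $f=t^{-1}\Omega^{-2}$ and $\Xbb=\Omega\,T^{m\bullet}$ for $m=0,1$. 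This gives
\[
2\divuntrois\bigl(t^{-1}\Omega^{-1}T^{m\bullet}\bigr)=t^{-1}\Omega^{-2}\,2\divuntrois\bigl(\Omega T^{m\bullet}\bigr)-2\Omega T^{m0}e_0(t^{-1}\Omega^{-2})+2\Omega T^{m1}e_1(t^{-1}\Omega^{-2}).
\]

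Step one is to substitute the first term using Lemma~\ref{lem-fluid-evolution-revisited}. It produces $t^{-1}\Omega^{-1}R_0 - \frac{1}{2t^2\Omega^2}\Msource(\Jbb,0,0,0)$ for $m=0$, and $-t^{-1}\Omega^{-1}R_1$ for $m=1$.

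Step two is to compute the derivatives of the weight:
\[
e_0(t^{-1}\Omega^{-2})=-t^{-1}\Omega^{-3}\bigl(t^{-1}+2(\log\Omega)_t\bigr),\qquad e_1(t^{-1}\Omega^{-2})=-2t^{-1}\Omega^{-3}\amdeux^{-1}(\log\Omega)_x.
\]
We replace $(\log\Omega)_t$ by \eqref{eq:evollambda-def} and $(\log\Omega)_x$ by \eqref{eq:T2-567-def}. The $t^{-1}$ terms combine into $-\tfrac{1}{2t^2}$, and what remains is $-\,t\,\Mbf^{11}\Omega^2$ and $-\,t\,\Mbf^{01}\Omega^2\amdeux$ respectively, multiplied by bounded factors of $\Omega$. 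The point of this step is that the weight $t^{-1}\Omega^{-2}$ is chosen precisely so that the lapse derivatives are traded for energy quantities.

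Collecting terms, $\Deltafluid^m$ becomes an explicit algebraic expression. It contains $t^{-1}\Omega^{-1}R_m$, the terms $T^{m0}\Mbf^{11}$ and $T^{m1}\Mbf^{01}$ multiplied by $\Omega^{0}$ powers, and lower-order $t^{-2}\Omega^{-2}(T^{m0},\Msource)$ pieces. This is what \eqref{explicit-widecheck-R} records.

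The main obstacle is the expected cancellation that makes the sources integrable. The first line of $R_m$ in \eqref{equaR0R1} contains $t\Omega\bigl(\Ebf_0(\Jpar)+\tfrac{q_\Jbb}{2}(-\Jbb\cdot\Jbb)\bigr)\Ebf_m(\Pbb,\Qbb)$ and $-t\Omega\,\Ebf_0(\Kpar)\Ebf_m(\Jperp)$. We plan to check that these cancel against the $\Pbb,\Qbb,\Kpar$ parts of $T^{m0}\Mbf^{11}+T^{m1}\Mbf^{01}$ coming from step two, using the expressions \eqref{eq:T2-Texpr}. After the cancellation, the surviving products should be null-form type ($\Jperp\cdot\Pbb$-like combinations times fluid terms), quartic fluid terms, or bounded twist terms. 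These are exactly the quantities controlled in Lemma~\ref{theo:te-Euler-explicit}. If the cancellation is incomplete, we would instead keep the residual terms explicitly in $\Deltafluid^m$ and only assert the identity for regular solutions, which is all that the lemma claims.
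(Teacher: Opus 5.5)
Your route is the paper's: weight the Euler equations of Lemma~\ref{lem-fluid-evolution-revisited} by $t^{-1}\Omega^{-2}$, expand by the Leibniz rule, and trade $(\log\Omega)_t,(\log\Omega)_x$ for $\Mbf^{11},\Mbf^{01}$ via \eqref{eq:evollambda-def} and \eqref{eq:T2-567-def}. The first lines of \eqref{explicit-widecheck-R0} and \eqref{explicit-widecheck-R1} are exactly the output of this computation. However, your displayed expansion has a sign error, and it breaks precisely the step you call the main obstacle.

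The fields $T^{m\bullet}$, like $\Mbf^{m\bullet}$, carry \emph{contravariant} components. Thus $\divuntrois(\Omega T^{m\bullet})$ in Lemma~\ref{lem-fluid-evolution-revisited} means $|t|^{-1}\amdeux^{-1}\Omega^{-2}\bigl((|t|\amdeux\Omega^2T^{m0})_t+(|t|\Omega^2T^{m1})_x\bigr)$. The flat form \eqref{eq-EulerMin} confirms this: it reads $(2T^{00})_t+(2T^{01})_x=0$ with $2T^{01}=-J_0J_1$. Your one-form Leibniz rule is correct, but you must insert $X_0=-\Omega T^{m0}$ and $X_1=\Omega T^{m1}$. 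This gives
\[
2\divuntrois\bigl(t^{-1}\Omega^{-1}T^{m\bullet}\bigr)=t^{-1}\Omega^{-2}\,2\divuntrois\bigl(\Omega T^{m\bullet}\bigr)+2\Omega T^{m0}e_0\bigl(t^{-1}\Omega^{-2}\bigr)+2\Omega T^{m1}e_1\bigl(t^{-1}\Omega^{-2}\bigr),
\]
with a plus, not a minus, in front of the $e_0$ term.

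Your weight derivatives are right and simplify to $e_0(t^{-1}\Omega^{-2})=-\Omega^{-1}\bigl(\Mbf^{11}+\frac{1}{2t^2\Omega^2}\bigr)$ and $e_1(t^{-1}\Omega^{-2})=\Omega^{-1}\Mbf^{01}$, with the full $\Mbf^{ab}(\Jbb,\Kpar,\Pbb,\Qbb)$. With the corrected sign, the Leibniz terms become $-\Mbf^{m0}(\Jbb,0,0,0)\bigl(\Mbf^{11}+\frac{1}{2t^2\Omega^2}\bigr)+\Mbf^{m1}(\Jbb,0,0,0)\,\Mbf^{01}$. This is the paper's first line, and the cancellation you anticipate then does occur.

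For $m=0$, set $A=\Ebf_0(\Jperp)$, $B=\Ebf_0(\Jpar)+\frac{q_\Jbb}{2}(-\Jbb\cdot\Jbb)$ and $G=\Ebf_0(\Pbb,\Qbb)$. The first line of $\frac{1}{t\Omega}R_0$ is $BG-\Ebf_0(\Kpar)A$. It cancels against the $(\Pbb,\Qbb,\Kpar)$-part $-(A+B)\bigl(G-\Ebf_0(\Kpar)\bigr)$ of $-\Mbf^{11}\,\Mbf^{00}(\Jbb,0,0,0)$. Adding the products coming from $\Mbf^{01}\,\Mbf^{01}(\Jbb,0,0,0)$, what survives is
\[
-\tfrac14\bigl((\Pbb\cdot\Jperp)^2+(\Pbb\wedge\Jperp)^2+(\Qbb\cdot\Jperp)^2+(\Qbb\wedge\Jperp)^2\bigr)+\Ebf_0(\Kpar)B+B^2-\tfrac14(J_0^2-J_1^2)^2 .
\]
The second line of $R_0$ and the $t^{-2}\Omega^{-2}$ terms carry over directly. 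For $m=1$, only $\frac12\bigl((\Jperp\cdot\Pbb)(\Jperp\wedge\Pbb)+(\Jperp\cdot\Qbb)(\Jperp\wedge\Qbb)\bigr)$ survives among the quartic terms.

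With your sign, by contrast, $BG$ and $\Ebf_0(\Kpar)A$ are doubled rather than cancelled. Non-null products such as $(-\Jbb\cdot\Jbb)\,\Ebf_0(\Pbb,\Qbb)$ then remain. Your fallback of keeping the residual terms would therefore record a false identity, off by $2\Mbf^{m0}(\Jbb,0,0,0)\bigl(\Mbf^{11}+\frac{1}{2t^2\Omega^2}\bigr)$, not just a less tidy correct one.
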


The first source term is
\bse\label{explicit-widecheck-R}
\bel{explicit-widecheck-R0}
\aligned
\Deltafluid^0
& = \frac{1}{t \Omega} R_0 + \Mbf^{01}(\Jbb, \Kpar, \Pbb, \Qbb) \Mbf^{01}(\Jbb, 0,0,0) - \Bigl( \Mbf^{11}(\Jbb, \Kpar, \Pbb, \Qbb) + \frac{1}{2 t^2 \Omega^2} \Bigr) \Mbf^{00}(\Jbb, 0, 0, 0) \\
& \quad - \frac{1}{2t^2\Omega^2} \Msource(\Jbb, 0, 0, 0)
\\
& = - \frac{1}{4} \Bigl( (\Pbb\cdot\Jperp)^2 + (\Pbb\wedge\Jperp)^2 + (\Qbb\cdot\Jperp)^2 + (\Qbb\wedge\Jperp)^2\Bigr)
\\
& \quad - \frac{1-q_\Jbb}{4}(-\Jbb\cdot\Jbb) \Bigl((1+q_\Jbb)(-\Jbb\cdot\Jbb) + 4 \Ebf_0(\Jpar)\Bigr)
+ \Ebf_0(\Kpar) \Bigl(\frac{q_\Jbb}{2} (-\Jbb\cdot\Jbb) + \Ebf_0(\Jpar)\Bigr)
\\
& \quad - \frac{1}{t \Omega} \Bigl( \frac{1}{2} (J_2^2 - J_3^2) P_0 + J_2 J_3 Q_0 + (J_2 K_2 + J_3 K_3) J_1\Bigr)
- \frac{1}{4t^2\Omega^2} \bigl(3 J_0^2 - J_1^2 + 2 J_2^2 + 2 J_3^2 \bigr) .
\endaligned
\ee
The second source term is
\bel{explicit-widecheck-R1}
\aligned
\Deltafluid^1 & = - \frac{1}{t \Omega} R_1 + \Mbf^{01}(\Jbb, \Kpar, \Pbb, \Qbb) \Mbf^{11}(\Jbb, 0,0,0) - \Bigl( \Mbf^{11}(\Jbb, \Kpar, \Pbb, \Qbb) + \frac{1}{2 t^2 \Omega^2} \Bigr) \Mbf^{01}(\Jbb, 0, 0, 0)
\\
& = \frac{1}{2} (\Jperp\cdot\Pbb)(\Jperp\wedge\Pbb)
+ \frac{1}{2} (\Jperp\cdot\Qbb)(\Jperp\wedge\Qbb)
\\
& \quad + \frac{1}{t \Omega} \Bigl( \frac{1}{2} (J_2^2 - J_3^2) P_1 + J_2 J_3 Q_1 + (J_2 K_2 + J_3 K_3) J_0\Bigr)
+ \frac{1}{2t^2\Omega^2} J_0 J_1 .
\endaligned
\ee
\ese
In contrast to~\eqref{eq:T2-Euler-perp7-r}, the equations in \autoref{lem-Euler-defined} constitute a \emph{non-standard form} of the Euler equations since they include higher-order terms in the fluid variables, such as $(-\Jbb\cdot\Jbb)^2$.

\subsubsection{Quartic forms for the geometric energy}

It also proves useful to consider the divergence of other terms in the decomposition
\be
\Mbf^{a\bullet}(\Jbb,\Kpar,\Pbb,\Qbb) = 2 T^{a\bullet} + \Mbf^{a\bullet}(0,\Kpar,0,0) + \Mbf^{a\bullet}(0,0,\Pbb,\Qbb) , \qquad a = 0, 1.
\ee
Besides the evolution equations~\eqref{Euler-defined-01} for the matter, one has evolution equations for the twist part of~$\Mbf$,
\bse\label{energy-K-quartic}
\be
\aligned
\divuntrois \bigl( t^{-1} \Omega^{-1} \, \Mbf^{0 \bullet}(0, \Kpar, 0, 0) \bigr)
& = \Deltatwist^0(\Jbb,\Kpar,\Pbb,\Qbb) ,
\\
\divuntrois \bigl( t^{-1} \Omega^{-1} \, \Mbf^{1 \bullet}(0, \Kpar, 0, 0) \bigr)
& = \Deltatwist^1(\Jbb,\Kpar,\Pbb,\Qbb) ,
\endaligned
\ee
with the sources
\be
\aligned
\Deltatwist^0(\Jbb,\Kpar,\Pbb,\Qbb)
&  
= \Ebf_0(\Kpar) \Bigl( \Ebf_0(\Jpar,\Kpar) + \frac{q_\Jbb}{2} (-\Jbb\cdot\Jbb) - \frac{3}{t^2 \Omega^2} \Bigr) \\
& \quad - t^{-1} \Omega^{-1} \Bigl( P_0 (K_2^2 - K_3^2) / 2 + Q_0 K_2 K_3 - J_1 (J_2 K_2 + J_3 K_3) \Bigr) ,
\\
\Deltatwist^1(\Jbb,\Kpar,\Pbb,\Qbb)
& = t^{-1} \Omega^{-1} \Bigl( P_1 (K_2^2 - K_3^2) / 2 + Q_1 K_2 K_3 - J_0 (J_2 K_2 + J_3 K_3) \Bigr) .
\endaligned
\ee
\ese

Finally, the contribution of the first-order variables $(\Pbb,\Qbb)$ satisfies
\bse\label{energy-PQ-quartic}
\be 
\aligned
\divuntrois\Bigl( t^{-1}\Omega^{-1}\Mbf^{0\bullet}(0,0,\Pbb,\Qbb)\Bigr) 
& = \Deltageom^0(\Phi, \Psi),
\\
\divuntrois\Bigl( t^{-1}\Omega^{-1}\Mbf^{1\bullet}(0,0,\Pbb,\Qbb)\Bigr)
& = \Deltageom^1(\Phi, \Psi),
\endaligned
\ee
\bel{equa-jdjhd40}
\aligned
\Deltageom^0(\Phi, \Psi)
& = - \frac{1}{4}\Bigl((\Pbb\cdot\Pbb)^2+(\Qbb\cdot\Qbb)^2+2(\Pbb\cdot\Qbb)^2+2(\Pbb\wedge\Qbb)^2+(\Jperp\cdot\Pbb)^2 \\
& \qquad\qquad +(\Jperp\wedge\Pbb)^2+(\Jperp\cdot\Qbb)^2+(\Jperp\wedge\Qbb)^2\Bigr) \\
& \quad + \frac{1}{2t\Omega}\Bigl(P_0\Re(\Kpar^2+\Jpar^2)+Q_0\Im(\Kpar^2+\Jpar^2)\Bigr) \\
& \quad - \frac{1}{2t^2\Omega^2} \Bigl( \Ebf_0(\Pbb,\Qbb) - \Pbb\cdot\Pbb - \Qbb\cdot\Qbb \Bigr) ,
\\
\Deltageom^1(\Phi, \Psi)
& = - \frac{1}{2} \Bigl( (\Jperp\cdot\Pbb) (\Jperp\wedge\Pbb) + (\Jperp\cdot\Qbb) (\Jperp\wedge\Qbb)\Bigr) \\
& \quad - \frac{1}{2t\Omega}\Bigl(P_1\Re(\Kpar^2+\Jpar^2)+Q_1\Im(\Kpar^2+\Jpar^2)\Bigr) + \frac{1}{2t^2\Omega^2} \Ebf_1(\Pbb,\Qbb) .
\endaligned
\ee
\ese

\begin{lemma}\label{lem:Delta-integrable}
The spacetime integrability properties established in \autoref{section=7} (cf.~\autoref{theo:te-Euler-explicit}) ensure that
\be
\Deltafluid^0 , \ \Deltafluid^1 , \
\Deltatwist^0 , \ \Deltatwist^1 , \
\Deltageom^0 , \ \Deltageom^1 , \
\in L^1(\Interval\times\Tbb^3, \dVuntrois) ,
\ee
so that \eqref{Euler-defined-01}, \eqref{energy-K-quartic}, and \eqref{energy-PQ-quartic} are defined in the sense of distributions.
\end{lemma}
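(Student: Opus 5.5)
The statement to prove is \autoref{lem:Delta-integrable}: each source $\Deltafluid^a$, $\Deltatwist^a$, $\Deltageom^a$ ($a=0,1$) lies in $L^1(\Interval\times\Tbb^3,\dVuntrois)$. The plan is a term-by-term inspection of the explicit expressions \eqref{explicit-widecheck-R}, \eqref{energy-K-quartic} and \eqref{equa-jdjhd40}. Each term is bounded by one of four types of quantity:
\begin{enumerate}
\item the quartic null-form expressions controlled by \autoref{theo:te-Euler-explicit};
\item products of bounded coefficients with quadratic energy densities;
\item cubic terms with one factor in $L^\infty$;
\item cubic terms with no bounded factor, handled by Cauchy--Schwarz.
\end{enumerate}
Throughout, $\Omega^{\pm1}$, $\Kpar$ and $|t|^{-1}$ are uniformly bounded on $\Interval\times\Tbb^3$ by \autoref{lemma-sup-Omega}, \autoref{lem-apriori--twist} and \autoref{proposition-JJJ}, since the closure of $\Interval$ avoids $0$. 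The quadratic densities $\Ebf_0(\Jbb,\Kpar,\Lbb)$ and $-\Jbb\cdot\Jbb$ are integrable in spacetime by the energy bound of \autoref{prop:energy} integrated in time, using $\dVuntrois=\Omega\,dt\,\dVtrois$. We work without corrector, which is the setting where \autoref{theo:te-Euler-explicit} applies.

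\emph{Quartic and quadratic terms.} The terms $(\Pbb\cdot\Jperp)^2$, $(\Pbb\wedge\Jperp)^2$, $(\Qbb\cdot\Jperp)^2$, $(\Qbb\wedge\Jperp)^2$, $(\Pbb\cdot\Pbb)^2$, $(\Qbb\cdot\Qbb)^2$, $(\Pbb\cdot\Qbb)^2$ and $(\Pbb\wedge\Qbb)^2$ are listed verbatim in \autoref{theo:te-Euler-explicit}, so they are integrable. The products $(\Jperp\cdot\Pbb)(\Jperp\wedge\Pbb)$ in $\Deltafluid^1$ and $\Deltageom^1$ are handled by $|ab|\le\frac12(a^2+b^2)$. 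The fluid term $(1-q_\Jbb)(-\Jbb\cdot\Jbb)\bigl((1+q_\Jbb)(-\Jbb\cdot\Jbb)+4\Ebf_0(\Jpar)\bigr)$ reduces to $|\Jbb\cdot\Jbb|^2+|\Jbb\cdot\Jbb|^2+|\Jpar|^4$ after applying Young's inequality, and all of these are covered by the same lemma. Terms of the form $\Ebf_0(\Kpar)\cdot(\text{quadratic})$ and $t^{-2}\Omega^{-2}\cdot(\text{quadratic in }\Jbb,\Pbb,\Qbb)$ are bounded by $C\,\Mbf^{00}$, which is integrable. In $\Deltatwist^0$, the term $\Ebf_0(\Kpar)^2$ is bounded outright.

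\emph{Cubic terms.} These carry the factor $(t\Omega)^{-1}$. The terms $P_0\Re(\Kpar^2)$, $Q_0K_2K_3$ and $P_m(K_2^2-K_3^2)$ are products of $L^2$ functions with $L^\infty$ functions, so they lie in $L^2\subset L^1$ on a finite-measure slab. The terms $J_1(J_2K_2+J_3K_3)$ are quadratic in $\Jbb$ times a bounded factor. The genuinely cubic terms are $(J_2^2-J_3^2)P_m$ and $J_2J_3Q_m$, which also appear inside $\Re(\Jpar^2)P_0$ in $\Deltageom$. For these I use Cauchy--Schwarz: $|\Jpar|^2|\Pbb|\le\frac12(|\Jpar|^4+|\Pbb|^2)$. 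Here $|\Jpar|^4\in L^1$ by \autoref{theo:te-Euler-explicit}, and $\Pbb\in L^2$ by the energy bound. This is the step to check most carefully: it is the only place where the $L^4$ bound on $\Jpar$ is essential, rather than a mere energy bound.

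\emph{Remaining step.} I would confirm that the closed-form expressions obtained after the ``$=$'' in \eqref{explicit-widecheck-R} contain no hidden terms such as $(-\Jbb\cdot\Jbb)\Ebf_m(\Pbb,\Qbb)$, which would not be controlled. That is precisely why the rescaling by $t^{-1}\Omega^{-2}$ was introduced: it cancels such terms against the $\Mbf^{01}\Mbf^{01}-\Mbf^{11}\Mbf^{00}$ combination. Since the stated final formulas already display the cancellation, once they are taken as given the argument is the bookkeeping above. The distributional meaning of \eqref{Euler-defined-01}, \eqref{energy-K-quartic} and \eqref{energy-PQ-quartic} then follows from the weak divergence \eqref{eq-div-weak0}. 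Its currents are quadratic, hence lie in $L^\infty_tL^1_x$ and $L^\infty_xL^1_t$ by \autoref{prop:energy}, \autoref{lem--34a} and \autoref{lem--34par}, and \eqref{equa-for-diverg} is satisfied.
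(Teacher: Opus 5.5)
Your proposal is correct and follows essentially the argument the paper intends, since the lemma is stated without a separate proof: a term-by-term inspection of the closed-form sources. The quartic null forms, $|\Jbb\cdot\Jbb|^2$ and $|\Jpar|^4$ come from \autoref{theo:te-Euler-explicit}; the cubic terms are handled either by the sup bounds on $\Omega^{\pm1}$, $\Kpar$, $|t|^{-1}$ or by Young's inequality; and the quadratic terms are controlled by the spacetime-integrated energy. One small correction to your commentary: the $L^4$ bound on $\Jpar$ (equivalently $|\Jpar|^2\lesssim 1+|\Jbb\cdot\Jbb|$ from the maximum principle) is also needed for the fluid term $(1-q_\Jbb)|\Jbb\cdot\Jbb|\,\Ebf_0(\Jpar)$, not only for the cubic terms $|\Jpar|^2|\Pbb|$ and $|\Jpar|^2|\Qbb|$.
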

 

\subsection{Quasi-balance laws (proof for \autoref{section=3-4})}
\label{appendix=C-3}

\subsubsection{Block diagonalization of the Euler equations}

In this section we establish various identities related to entropy quasi-currents.  We work throughout with sufficiently smooth solutions of the Einstein--Euler equations in $\Tbb^2$~symmetry.  The expressions in Minkowski space are easily retrieved by suppressing all source terms.
We recall that the fluid stress tensor is~\eqref{eq:T2-Texpr}
\be
T^{mn} = \frac{1}{2} \Mbf^{mn}(\Jbb,0,0,0) , \qquad m,n = 0,1 .
\ee

Recall from \autoref{section=3-4} the notation $\Sigma=(\Sigma_0,\Sigma_1)$ with $\Sigma_0 = -\Jbb\cdot\Jbb$ and $\Sigma_1 = J_1/J_0$.
Our first aim is to show that this parametrization leads to a convenient form~\eqref{Sigmat} for the Euler system.

We observe first that the map $\Jbb \mapsto (\Sigma,\Jhatpar)$ is bijective away from vacuum.
To begin, the map $\mu\mapsto 2(\mu+p(\mu))$ is a bijection of $[0,+\infty)$ whose inverse bijection $\funmu$ is thus such that $\mu = \funmu(\Sigma_0)$.
Thus $\Sigma_0$ determines uniquely the density~$\mu$ hence the particle density $\hNumb_\Jbb=\hNumb(\funmu(\Sigma_0))$, from which the parallel momentum~$\Jpar$ is easily determined.  Then all components of $\Jbb$ are given by
\bel{J0J1-Sigma-Jhatpar}
\Jpar = \hNumb(\funmu(\Sigma_0)) \, \Jhatpar ,
\qquad
J_0 = - \Bigl( \frac{\Sigma_0 + |\Jpar|^2}{1 - \Sigma_1^2} \Bigr)^{1/2} ,
\qquad
J_1 = J_0 \Sigma_1 .
\ee

For sufficiently regular solutions of the Einstein--Euler equations in $\Tbb^2$~symmetry, the conservation of $\vNumb$ and of $\widehat J_m \vNumb$ yield transport equations for $\widehat J_m$, and \autoref{lem-Euler-defined} provides two other evolution equations, which are conveniently written in terms of frame derivatives $e_0(f)=\Omega^{-1}\del_t f$, $e_1(f)=\Omega^{-1}\amdeux^{-1}\del_x f$ (for $\Tbb^2$-invariant~$f$)
Explicitly,
\bel{explicit-Euler-Deltafluid}
\aligned
2 \amdeux^{-1} e_0(\amdeux T^{00}) - 2 e_1(T^{01})
& = t \Omega \Deltafluid^0(\Jbb,\Kpar,\Pbb,\Qbb) ,
\\
2 \amdeux^{-1} e_0(\amdeux T^{01}) + e_1(T^{11})
& = t \Omega \Deltafluid^1(\Jbb,\Kpar,\Pbb,\Qbb) ,
\\
e_0(\widehat J_m) - \frac{J_1}{J_0} e_1(\widehat J_m) & = \widehat R_m , \qquad m = 2, 3 ,
\endaligned
\ee
with
\bel{widehatR-def}
\aligned
\widehat R_2 & = \frac{\Jperp\cdot\Pbb}{2J_0} \widehat J_2 - \frac{1}{2t\Omega} \widehat J_2 ,
\\
\widehat R_3 & = \frac{\Jperp\cdot\Qbb}{J_0} \widehat J_2 - \frac{\Jperp\cdot\Pbb}{2J_0} \widehat J_3 - \frac{1}{2t\Omega} \widehat J_3 .
\endaligned
\ee

Next, note that
\be
\begin{alignedat}{3}
2 T^{00}
& = J_0^2 - \frac{1-q_\Jbb}{2} (-\Jbb\cdot\Jbb)
&& = J_0^2 - \frac{1-q(\funmu(\Sigma_0))}{2} \Sigma_0 ,
\\
2 T^{10}
& = - J_0 J_1 && = - J_0^2 \Sigma_1 ,
\\
2 T^{11}
& = J_1^2 + \frac{1-q_\Jbb}{2} (-\Jbb\cdot\Jbb)
&& = J_0^2 \Sigma_1^2 + \frac{1-q(\funmu(\Sigma_0))}{2} \Sigma_0 ,
\end{alignedat}
\ee
Then in these expressions, $J_0^2$ can be expressed in terms of $(\Sigma,\Jhatpar)$ using~\eqref{J0J1-Sigma-Jhatpar}.
In particular, derivatives in $\widehat J_m$, keeping the other coordinates $(\Sigma,\Jhatpar)$ fixed, are
\be
\aligned
\frac{\del T^{00}}{\del\widehat J_m}
= \frac{\hNumb(\funmu(\Sigma_0))^2}{1 - \Sigma_1^2} \widehat J_m ,
\qquad
\frac{\del T^{11}}{\del\widehat J_m}
= - \frac{\del T^{01}}{\del\widehat J_m} \Sigma_1
= \frac{\del T^{00}}{\del\widehat J_m} \Sigma_1^2 .
\endaligned
\ee
This proportionality of the derivatives leads to valuable cancellations when expanding the first two equations in~\eqref{explicit-Euler-Deltafluid}:
for $a=0,1$,
\bel{eiSigmab}
\aligned
0 & = \amdeux^{-1} e_0(\amdeux T^{a0}) + e_1(T^{a1})
- t \Omega \Deltafluid^a(\Jbb,\Kpar,\Pbb,\Qbb) / 2
\\
& = \sum_{b=0,1} \Bigl( \frac{\del T^{a0}}{\del\Sigma_b} e_0(\Sigma_b)
+ \frac{\del T^{a1}}{\del\Sigma_b} e_1(\Sigma_b) \Bigr)
\\
& \quad + t \Omega \Bigl(\Ebf_0(\Jpar,\Kpar) +\frac{q_\Jbb}{2}\,(-\Jbb\cdot\Jbb) \Bigr) \Mbf^{a0}
- \frac{J^a R^\parallel}{-\Jperp\cdot\Jperp}
- t \Omega \Deltafluid^a(\Jbb,\Kpar,\Pbb,\Qbb) / 2
\endaligned
\ee
with
\bel{Rparallel-def}
R^\parallel
= \sum_{m=2,3} J_0 J_m \hNumb \widehat R_m
= \frac{1}{2} (\Jperp\cdot\Pbb) (J_2^2 - J_3^2) + (\Jperp\cdot\Qbb) J_2 J_3 - \frac{1}{2t\Omega} J_0 (J_2^2 + J_3^2) .
\ee
In Minkowski space all the source terms disappear and the pair of equations (for $a=0,1$) take the homogeneous form $0 = A^0(\Sigma,\Jhatpar)e_0(\Sigma) + A^1(\Sigma,\Jhatpar)e_1(\Sigma)$ for a pair of matrices $A^0,A^1$, which can equivalently be stated in the form $\del_t\Sigma = A \del_x\Sigma$ as announced in~\eqref{Sigmat}.

\subsubsection{Shear-induced measures}

We now consider a current $\Fbb=(\Fcomp^0,\Fcomp^1)$ expressed as a function of the variables $(\Sigma,\Jhatpar)$, and evaluate its (weighted) divergence
\be
\aligned
t \Omega \divuntrois\Bigl(t^{-1} \Omega^{-1} \Fbb\Bigr)
& = \amdeux^{-1} \Omega^{-1} \Bigl( (\amdeux \Fcomp^0)_t + (\Fcomp^1)_x \Bigr)
\\
& = t \Omega \Bigl(\Ebf_0(\Jpar, \Kpar) + \frac{q_\Jbb}{2}  \, (- \Jbb \cdot \Jbb)\Bigr) \Fcomp^0
+ \Omega^{-1} (\Fcomp^0)_t + \amdeux^{-1} \Omega^{-1} (\Fcomp^1)_x .
\endaligned
\ee
Using the chain rule we reduce to linear combinations of derivatives of $(\Sigma,\Jhatpar)$.
(Throughout, the partial derivatives of components $\Fcomp^a$ are taken while keeping other variables in the set $(\Sigma,\Jhatpar)$ fixed.)
The contributions of derivatives of $\widehat J_m$, $m=2,3$, are
\be
\Omega^{-1} \frac{\del\Fcomp^0}{\del\widehat J_m} (\widehat J_m)_t
+ \amdeux^{-1} \Omega^{-1} \frac{\del\Fcomp^1}{\del\widehat J_m} (\widehat J_m)_x
= \frac{\del\Fcomp^0}{\del\widehat J_m} \Omega \widehat R_m
+ \Bigl( \frac{\del\Fcomp^1}{\del\widehat J_m} + \frac{J_1}{J_0} \frac{\del\Fcomp^0)}{\del\widehat J_m} \Bigr) e_1(\widehat J_m) .
\ee
Since $J_1/J_0 = \Sigma_1$ is kept fixed when taking the $\widehat J_m$ derivative, the coefficient of $e_1(\widehat J_m)$ is, as announced in~\eqref{Fvee-expr} (in \autoref{lem:Fvee-expr})
\be
\Fvee_m = \frac{\del(\Fcomp^1 + (J_1/J_0)\Fcomp^0)}{\del\widehat J_m}, \qquad m=2,3 .
\ee
Using this notation,
we obtain
\bel{diveFbb-C28}
\aligned
t \Omega \divuntrois\Bigl(t^{-1} \Omega^{-1} \Fbb\Bigr)
& = \sum_{b=0,1} \frac{\del\Fcomp^0}{\del\Sigma_b} e_0(\Sigma_b)
+ \sum_{b=0,1} \frac{\del\Fcomp^1}{\del\Sigma_b} e_1(\Sigma_b)
+ \sum_{m=2,3} \Fvee_m e_1(\widehat J_m)
\\
& \quad
+ \sum_{m=2,3} \frac{\del\Fcomp^0}{\del\widehat J_m} \widehat R_m
+ t \Omega \Bigl(\Ebf_0(\Jpar, \Kpar) + \frac{q_\Jbb}{2}  \, (- \Jbb \cdot \Jbb)\Bigr) \Fcomp^0 .
\endaligned
\ee

The next step is to use~\eqref{eiSigmab} to express $e_0(\Sigma_b)$ in terms of $e_1(\Sigma_b)$ and a source term.
Entropy quasi-currents are characterized by the fact that $e_1(\Sigma_b)$ terms then cancel in~\eqref{diveFbb-C28}, which occurs provided there exists $(V^0,V^1)$ such that
\bel{Fcompatibility}
\frac{\del\Fcomp^0}{\del T^{00}}\Biggl|_{T^{01},\Jhatpar}
= \frac{\del\Fcomp^1}{\del T^{01}}\Biggl|_{T^{11},\Jhatpar} = 2 V^0
,
\qquad
\frac{\del\Fcomp^0}{\del T^{01}}\Biggl|_{T^{00},\Jhatpar}
= \frac{\del\Fcomp^1}{\del T^{11}}\Biggl|_{T^{01},\Jhatpar} = 2 V^1 .
\ee
Here, the left-hand sides are partial derivatives of $\Fcomp^0$ in the variables $(T^{00}, T^{01}, \Jhatpar)$, and the right-hand sides are partial derivatives of $\Fcomp^1$ in the variables $(T^{01}, T^{11}, \Jhatpar)$.

Under the condition~\eqref{Fcompatibility}, the divergence~\eqref{diveFbb-C28} becomes
\be
\divuntrois\Bigl(t^{-1} \Omega^{-1} \Fbb\Bigr)
- \sum_{m=2,3} t^{-1} \Omega^{-1} \Fvee_m e_1(\widehat J_m)
= \sign(t) \Delta_\Fbb
\ee
with source term\footnote{We include an unnatural sign $\sign(t)$ here to simplify the presentation in the main text.}
\bel{source-DeltaFbb}
\aligned
\sign(t) \Delta_\Fbb
& = \Bigl(\Ebf_0(\Jpar, \Kpar) + \frac{q_\Jbb}{2}  \, (- \Jbb \cdot \Jbb)\Bigr) \Bigl( \Fcomp^0 - 2 \sum_a V_a T^{0a} \Bigr)
\\
& \quad + \sum_{a=0,1} V_a \Deltafluid^a(\Jbb,\Kpar,\Pbb,\Qbb)
+ \frac{2}{t \Omega} \frac{\Vbb\cdot\Jperp}{-\Jperp\cdot\Jperp} R^\parallel
+ \sum_{m=2,3} \frac{1}{t\Omega} \frac{\del\Fcomp^0}{\del\widehat J_m}\biggr|_{\Sigma,\Jhatpar} \widehat R_m
\endaligned
\ee
where $\Vbb\cdot\Jperp = V^0 J_0 + V^1 J_1$.

\subsubsection{Integrability properties for weak solutions}

We now discuss features of each term, relying on the integrability properties of Einstein--Euler flows.
Observe that
\bel{Fcomp0sumVa}
\Fcomp^0 - 2 \sum_a V_a T^{0a}
= \Fcomp^0 - T^{00} \frac{\del\Fcomp^0}{\del T^{00}}\biggr|_{T^{01},\Jhatpar}
- T^{01} \frac{\del\Fcomp^0}{\del T^{01}}\biggr|_{T^{00},\Jhatpar} ,
\ee
so the first term disappears for the two entropies $\Fbb=\Mbf^{0\bullet}$ and $\Fbb=\Mbf^{1\bullet}$.
For a general entropy, \eqref{Fcomp0sumVa} is multiplied by an $L^2(\Interval\times\Tbb^3,\dVuntrois)$ function, so the entropies must be chosen such that~\eqref{Fcomp0sumVa} is controlled by $-\Jperp\cdot\Jperp$, which is in $L^2(\Interval\times\Tbb^3,\dVuntrois)$.

For the next two terms we first provide alternative expressions of~$V^a$.
The compatibility condition~\eqref{Fcompatibility} can also be stated as the fact that the differentials of $\Fcomp^0$ and~$\Fcomp^1$ (one-forms on the space of fields~$\Jbb$) obey
\be
d\Fcomp^0 = 2 V_0 dT^{00} + 2 V_1 dT^{01} + \# d\widehat J_2 + \# d\widehat J_3 ,
\qquad
d\Fcomp^1 = 2 V_0 dT^{01} + 2 V_1 dT^{11} + \# d\widehat J_2 + \# d\widehat J_3
\ee
where $\#$ are some coefficients.
By taking the sum and difference we find
\be
\aligned
d(\Fcomp^0+\Fcomp^1)
& = - \frac{1}{2} (V^0+V^1) d\bigl(q\Sigma_0+J_2^2+J_3^2\bigr)
- \frac{1}{2} (V^0-V^1) d\bigl((J_0-J_1)^2\bigr)
+ \# d\widehat J_2 + \# d\widehat J_3 ,
\\
d(\Fcomp^0-\Fcomp^1)
& = - \frac{1}{2} (V^0+V^1) d\bigl((J_0+J_1)^2\bigr)
- \frac{1}{2} (V^0-V^1) d\bigl(q\Sigma_0+J_2^2+J_3^2\bigr)
+ \# d\widehat J_2 + \# d\widehat J_3 .
\endaligned
\ee
Since $J_2^2+J_3^2 = (\widehat J_2^2+\widehat J_3^2)\hNumb(\funmu(\Sigma_0))$ only depends on $\Jhatpar$ and $\Sigma_0$,
this enables us to give convenient expressions for $V^0\pm V^1$ as derivatives with respect to $J_0\pm J_1$, with $\Jhatpar$ and $\Sigma_0$ fixed.
Since $\Jpar=\Jhatpar\hNumb(\funmu(\Sigma_0))$ and $J_0^2-J_1^2=\Sigma_0+|\Jpar|^2$, fixing these variables is equivalent to fixing $J_0^2-J_1^2,\Jpar$, which yields (after lowering some indices for convenience)
\be
V_0 \mp V_1 = \frac{1}{J_0\pm J_1} \frac{d(\Fcomp_0\pm\Fcomp_1)}{d(J_0\pm J_1)}\Bigr|_{J_0^2-J_1^2,\Jpar} .
\ee
As a consequence, we find
\be
\Vbb\cdot\Jperp = - V_0 J_0 + V_1 J_1
= \frac{1}{2} \sum_{\pm} (V_0 \mp V_1) (J_0 \pm J_1)
= \frac{1}{2} \sum_{\pm} \frac{d(\Fcomp_0\pm\Fcomp_1)}{d(J_0\pm J_1)}\Bigr|_{J_0^2-J_1^2,\Jpar} .
\ee

For weak solutions, the source term~$\Deltafluid^a$ is in $L^1(\Interval\times\Tbb^3,\dVuntrois)$, as stated in \autoref{lem:Delta-integrable}.
Likewise, the combinations $R^\parallel$ and $\widehat R_m$ given in~\eqref{Rparallel-def} and~\eqref{widehatR-def}, are in the same space. Thus, to ensure that all sources in~\eqref{source-DeltaFbb} are integrable, we must impose a uniform bound on some derivatives of the entropies.
Altogether, we arrive at the following result.

\begin{lemma}\label{lem:conditions-Fbb-ap}
Assume that the entropy current $\Fbb$ and its entropy multiplier $\Vbb$ satisfy the uniform bounds
\be
\aligned
\Biggl| \Fcomp^0 - T^{00} \frac{\del\Fcomp^0}{\del T^{00}}\biggr|_{T^{01},\Jhatpar}
- T^{01} \frac{\del\Fcomp^0}{\del T^{01}}\biggr|_{T^{00},\Jhatpar} \Biggr| & \lesssim -\Jperp\cdot\Jperp ,
\\
\bigl| \Vbb\cdot\Jperp \bigr| & \lesssim  -\Jperp\cdot\Jperp ,
\\
|V^0| + |V^1| + \Biggl| \frac{\del\Fcomp^0}{\del\widehat J_2}\biggr|_{\Sigma,\widehat J_3} \Biggr|  + \Biggl| \frac{\del\Fcomp^0}{\del\widehat J_3}\biggr|_{\Sigma,\widehat J_2} \Biggr| & \lesssim 1 .
\endaligned
\ee
Then the integrability properties assumed for Einstein--Euler flows ensure that the source term $\Delta_\Fbb = \Delta_\Fbb(\Jbb,\Kpar,\Lbb)$ defined in~\eqref{source-DeltaFbb} belongs to the space $L^1(\Interval\times\Tbb^3,\dVuntrois)$ of integrable functions.
\end{lemma}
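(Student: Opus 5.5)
The plan is to verify, term by term, that every summand in the source~\eqref{source-DeltaFbb} is a product of a uniformly bounded factor, supplied by the hypotheses on $\Fbb$ and $\Vbb$, with a factor that already belongs to $L^1(\Interval\times\Tbb^3,\dVuntrois)$. The integrability of the latter comes from the a priori estimates of \autoref{section=6} and \autoref{section=7}. Throughout, $t$ stays in a compact interval not containing $0$. By \autoref{lemma-sup-Omega}, $\Omega^{\pm1}$ are bounded; by \autoref{lem-apriori--twist} and \autoref{propos-priori}, $\Kpar$ is bounded; by \autoref{proposition-JJJ}, $\Jhatpar$ is bounded. So all factors $1/(t\Omega)$, $1/(t\Omega)^2$ are harmless, and the spacetime volume $\dVuntrois$ has finite total mass.

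First, consider the term $\bigl(\Ebf_0(\Jpar,\Kpar)+\frac{q_\Jbb}{2}(-\Jbb\cdot\Jbb)\bigr)\bigl(\Fcomp^0-2\sum_a V_aT^{0a}\bigr)$. By the first hypothesis and the identity~\eqref{Fcomp0sumVa}, the second factor is bounded by $-\Jperp\cdot\Jperp$. Each factor is then controlled in $L^2(\dVuntrois)$:
\begin{itemize}
\item $-\Jperp\cdot\Jperp=-\Jbb\cdot\Jbb+|\Jpar|^2$ is in $L^2$, using the spacetime bounds on $|\Jbb\cdot\Jbb|^2$ and $|\Jpar|^4$ from \autoref{theo:te-Euler-explicit};
\item $\Ebf_0(\Jpar)$ and $-\Jbb\cdot\Jbb$ are in $L^2$ by the same lemma;
\item $\Ebf_0(\Kpar)$ is bounded.
\end{itemize}
Cauchy--Schwarz then gives an $L^1$ product.

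Second, $\sum_a V_a\Deltafluid^a$ is integrable because $|V^0|+|V^1|\lesssim1$ and $\Deltafluid^a\in L^1$ by \autoref{lem:Delta-integrable}. Third, $\frac{\partial\Fcomp^0}{\partial\widehat J_m}\widehat R_m/(t\Omega)$ is integrable because $\widehat R_m$ in~\eqref{widehatR-def} is a combination of $\widehat J_m$ (bounded) times $1/(t\Omega)$ or $\Jperp\cdot\Pbb/J_0$, $\Jperp\cdot\Qbb/J_0$. Since $|J_0|\ge|J_1|$, these quotients are bounded by $|\Pbb|+|\Qbb|$, which lies in $L^2\subset L^1$ on a finite-measure domain.

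The main obstacle is the remaining term $\frac{2}{t\Omega}\frac{\Vbb\cdot\Jperp}{-\Jperp\cdot\Jperp}R^\parallel$, because $-\Jperp\cdot\Jperp$ may vanish, namely at null states and near vacuum. The hypothesis $|\Vbb\cdot\Jperp|\lesssim-\Jperp\cdot\Jperp$ makes the quotient bounded, so the question reduces to $R^\parallel\in L^1$. By~\eqref{Rparallel-def}, $R^\parallel$ consists of three kinds of terms:
\begin{itemize}
\item $(\Jperp\cdot\Pbb)(J_2^2-J_3^2)$ and $(\Jperp\cdot\Qbb)J_2J_3$, which are products of the null forms $\Jperp\cdot\Pbb$, $\Jperp\cdot\Qbb\in L^2$ (again \autoref{theo:te-Euler-explicit}) with $|\Jpar|^2\in L^2$, hence in $L^1$;
\item $J_0|\Jpar|^2/(t\Omega)$, for which I would use $J_0^2=-\Jperp\cdot\Jperp+J_1^2$ together with the spacelike energy bound $J_0\in L^\infty_tL^2_x$ from \autoref{prop:energy}, and $|\Jpar|^2\in L^2$.
\end{itemize}

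One point in the second bullet needs care. A uniform-in-time $L^2$ bound on $J_0$ combined with a spacetime $L^2$ bound on $|\Jpar|^2$ gives integrability only after Cauchy--Schwarz in spacetime, which requires $J_0\in L^2(\dVuntrois)$. That holds because $\Omega$ is bounded and the time interval is finite. If this step fails for a particular flow, I would fall back on $|\Jpar|^2=\hNumb^2|\Jhatpar|^2\lesssim\hNumb^2$ and the sub-quadratic growth~\eqref{equa-2311}.
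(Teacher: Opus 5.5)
Your proposal is correct and is essentially the paper's argument. The paper also treats \eqref{source-DeltaFbb} term by term: it bounds the first term as a product of two $L^2(\dVuntrois)$ factors through $-\Jperp\cdot\Jperp$, and the other three as bounded coefficients times $\Deltafluid^a$, $R^\parallel$, $\widehat R_m\in L^1(\Interval\times\Tbb^3,\dVuntrois)$. Your explicit checks that $R^\parallel$ and $\widehat R_m$ are integrable (the null forms and $|\Jpar|^4$ from \autoref{theo:te-Euler-explicit}, $|J_1/J_0|\le 1$, bounded $\Jhatpar$) fill in what the paper asserts in one line, and the fallback in your last paragraph is not needed.
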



\section{Further properties of the Euler equations}
\label{appendix=D}

\subsection{A technical inequality on the source term}
\label{appendix=D-1}

We claim that the source term of the Euler equations satisfies 
\be
\abs{\Msource}\leq 5 \, \Mbf^{00}. 
\ee
\bse
Indeed, by the elementary inequality $\bigl|X_0^2-X_1^2\bigr|\leq X_0^2+X_1^2=2\,\Ebf_0(\Xbb)$, valid for any $\Xbb=(X_0,X_1)$, we have
\be
|\Pbb\cdot\Pbb|\leq 2\,\Ebf_0(\Pbb),\qquad
|\Qbb\cdot\Qbb|\leq 2\,\Ebf_0(\Qbb),\qquad
|\Jperp\cdot\Jperp|=-\Jperp\cdot\Jperp\leq 2\,\Ebf_0(\Jperp),
\ee
since $-\Jperp\cdot\Jperp\ge0$. Using $\Jbb\cdot\Jbb\le0$ and $q_\Jbb\in(0,1)$, we obtain 
\be
\aligned
|\Msource|
& \leq \Ebf_0(\Jpar) + 5\,\Ebf_0(\Kpar)+2\,\Ebf_0(\Pbb)+2\,\Ebf_0(\Qbb)+2\,\Ebf_0(\Jperp)
   +\frac12 q_\Jbb\,(-\Jbb\cdot\Jbb)
\\
& \leq 5\Bigl(\Ebf_0(\Jpar,\Kpar)+\Ebf_0(\Pbb,\Qbb,\Jperp)
      +\frac12 q_\Jbb\,(-\Jbb\cdot\Jbb)\Bigr)
=5\,\Mbf^{00}.
\endaligned
\ee
\ese
In the Gowdy case, since \(\Kpar=0\) and \(\Jpar=0\), the source term simplifies and the above estimate can be improved. More precisely, in the isothermal Gowdy-symmetry case we obtain
\bel{equa-Gowdy facteur2}
|\Msource|\leq 2\,\Mbf^{00} \quad \text{(isothermal, Gowdy-symmetry).} 
\ee 


\subsection{The non-relativistic limit}
\label{appendix=D-2}

To justify our formalism for weak solutions of the relativistic isentropic compressible Euler equations, especially the choice of which of the fluid equations is weakened to an inequality, we make contact with the literature on nonrelativistic fluids.  We consider the plane-symmetric Euler equations in Minkowski space, which can be read off from the Euler--Einstein equations by suppressing the geometry (setting $\amdeux=\Omega=1$, $P_0=P_1=Q_0=Q_1=0$ and eliminating factors of $|t|$ in the operators). It is convenient to return to the notation $(\mu,u)$ instead of $J_m = \sqrt{2(\mu+p)} u_m$ for $m=0,1,2,3$. The Euler equations read
\bel{relativistic-Euler-isentropic}
\aligned
\del_t \Bigl( (\mu+p) u_0^2 - p \Bigr) - \del_x ( (\mu+p) u_0 u_1 ) & = 0, \\
\del_t ( (\mu+p) u_0 u_1 ) - \del_x \Bigl( (\mu+p) u_1^2 + p \Bigr) & = 0, \\
\del_t ( (\mu+p) u_0 u_m ) - \del_x ( (\mu+p) u_1 u_m ) & = 0, \quad m=2,3,
\endaligned
\ee
supplemented by the conservation of particle number~$\Numb(\mu)$,
\bel{relativistic-Euler-isentropic-N}
\del_t ( \Numb(\mu) u_0 ) - \del_x ( \Numb(\mu) u_1 ) = 0, \qquad \frac{\Numb'(\mu)}{\Numb(\mu)} = \frac{1}{\mu+p(\mu)} .
\ee

The non-relativistic limit corresponds to taking, for some small constant $\eps\in(0,1)$, the regime $|u_m|<\eps$ for $m=1,2,3$, hence $u_0=\sqrt{1+u_1^2+u_2^2+u_3^2}=1+\Obig(\eps^2)$, and an equation of state with $0<p'(\mu)<\eps^2$ hence $0\leq p(\mu)\leq \eps^2\mu$.  The \emph{formal} $\eps\to 0$ limit of the system~\eqref{relativistic-Euler-isentropic} is easily found to be the non-relativistic Euler equations
\bel{nonrel-Euler}
\aligned
\del_t \mu - \underline{\del}_x (\mu \underline{u}_1) & = 0, \\
\del_t (\mu \underline{u}_1) - \underline{\del}_x(\mu \underline{u}_1^2 + \underline{p}(\mu)) & = 0, \\
\del_t ( \mu \underline{u}_m ) - \underline{\del}_x ( \mu \underline{u}_1 \underline{u}_m ) & = 0, \quad m=2,3,
\endaligned
\ee
with $\underline{\del}_x=\eps\del_x$ and $\underline{p}(\mu)=p(\mu)/\eps^2$ and $\underline{u}=(u_1,u_2,u_3)/\eps$.

The equation for the particle number can be integrated from some reference density~$\muref$ as
\be
\frac{\Numb(\mu)}{\Numb(\muref)}
= \frac{\mu}{\muref} - \eps^2 \frac{\mu}{\muref} \int_{\muref}^\mu \frac{\underline{p}(\mu')d\mu'}{\mu'^2} + \Obig(\eps^4)
= \frac{\mu}{\muref} + \Obig(\eps^2),
\ee
where the constants implicit in $\Obig$ depend on the sup norm of $\log(\mu/\muref)$.
Up to rescaling by $\Numb(\muref)/\muref$, the formal limit of~\eqref{relativistic-Euler-isentropic-N} is $\del_t\mu - \underline{\del}_x(\mu\underline{u}_1) = 0$, just as the $0$-th Euler equation.
To account for all equations, we consider the difference between energy-momentum and particle number currents, whose components are
\be
\aligned
\Bigl( (\mu+p(\mu)) u_0^2 - p(\mu) \Bigr) - \frac{\muref}{\Numb(\muref)} \Numb(\mu) u_0
& = \eps^2 \Bigl( \mu |\underline{u}|^2 / 2 + \mu \int_{\muref}^\mu \frac{\underline{p}(\mu')d\mu'}{\mu'^2} \Bigr) + \Obig(\eps^4),
\\
\frac{1}{\eps} \Bigl( (\mu+p) u_0 u_1 - \frac{\muref}{\Numb(\muref)} \Numb(\mu) u_1 \Bigr)
& = \eps^2 \Bigl( \mu |\underline{u}|^2 / 2
+ \underline{p}(\mu)
+ \mu \int_{\muref}^\mu \frac{\underline{p}(\mu')d\mu'}{\mu'^2} \Bigr) \underline{u}_1
+ \Obig(\eps^4) .
\endaligned
\ee
The leading-order term in the $\eps\to 0$ limit reproduces the nonrelativistic energy and energy flux. It is then standard to consider weak solutions to the nonrelativistic Euler equations~\eqref{nonrel-Euler} that are solutions to the energy inequality
\be
\del_t \Bigl( \mu |\underline{u}|^2 / 2 + \mu \int_{\muref}^\mu \frac{\underline{p}(\mu')d\mu'}{\mu'^2} \Bigr)
- \underline{\del}_x \Bigl( \Bigl( \mu |\underline{u}|^2 / 2
+ \underline{p}(\mu)
+ \mu \int_{\muref}^\mu \frac{\underline{p}(\mu')d\mu'}{\mu'^2} \Bigr) \underline{u}_1 \Bigr)
\leq 0 .
\ee
This is consistent with our choice to impose the particle number equation as an equality and the energy equation ---namely the $0$-th component of the stress-tensor conservation equation--- as an inequality. It would also be consistent with assuming energy conservation and an \emph{increase} of the particle number, but this might lead to an unstable system.


\subsection{Convexity of the fluid particle number}
\label{appendix=D-3}

Let us point out the following well-known convexity property, although it is not used in our theory. Consider the \emph{choice} of principal variables in~\eqref{eq:T2-Mdef-0-2}, below, associated with the Euler equations when the geometry is \emph{formally suppressed}, namely when taking $\Pbb, \Qbb, \Kpar$ to vanish (cf.~\eqref{equa-euleronly} in~\autoref{appendix=C-1}).
The following convexity property is established for instance in~\cite{LeFloch-curved}.

\begin{proposition}[Future convexity of the particle number density]
\label{propo-entropy}
Consider the Euler equations for a perfect fluid satisfying the conditions \eqref{hyperbolic-eos}, and consider the four stress-tensor components $T^{0\bullet}$ in the frame $(e_m)$, which are explicitly given by
\bel{eq:T2-Mdef-0-2}
\aligned
T^{00} & \coloneqq \frac{1}{2} \Mbf^{00}(\Jbb,0,0,0) = \frac{1 + q_\Jbb}{4} (J_0)^2 
+  \frac{1 - q_\Jbb}{4} \bigl( (J_1)^2 + (J_2)^2 + (J_3)^2 \bigr),
\\
T^{01} & \coloneqq \frac{1}{2} \Mbf^{01}(\Jbb,0,0,0) = - \frac{1}{2} J_0 J_1, \qquad 
T^{02} \coloneqq - \frac{1}{2} J_0 J_2, \qquad 
T^{03} \coloneqq - \frac{1}{2} J_0 J_3.
\endaligned
\ee
Then, the time-component of the particle number current (with a suitable sign), seen as a function of the variables~$T^{0\bullet}$,
\be
\aligned
& T^{0\bullet} \mapsto - \Numb^0 = \hNumb(\mu) J_0  
 \text{ is  \emph{well-defined},  \emph{non-positive}, and \emph{convex}.}
\endaligned
\ee 
\end{proposition}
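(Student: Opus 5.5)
Writing $T^{0\bullet}=(T^{00},T^{0j})_{j=1,2,3}$ for the four principal variables, the plan is to reparametrize them by the rest-frame quantities $(\mu,u)$ and to recognize $-\Numb^0$ as a perspective-type function of a concave function of $\mu$. Using $J_m=\sqrt{2(\mu+p)}\,u_m$ with $u_0=-\sqrt{1+|\vec u|^2}$, one has
\[
T^{00}=(\mu+p)u_0^2-p,\qquad T^{0j}=(\mu+p)|u_0|u_j,\qquad -\Numb^0=-\Numb(\mu)|u_0|.
\]

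\textbf{Well-definedness.} The map $(\mu,\vec u)\mapsto T^{0\bullet}$ is a bijection from $[0,\infty)\times\RR^3$ (with vacuum collapsed to a single point) onto the cone $\{T^{00}\geq |\vec T|\}$, where $\vec T=(T^{0j})_{j=1,2,3}$. Indeed, eliminating $\vec u$ gives
\[
|\vec T|^2=(T^{00}+p)(T^{00}-\mu),
\]
and the function $\mu\mapsto (T^{00}+p(\mu))(T^{00}-\mu)$ is strictly decreasing in $\mu$ on $[0,T^{00}]$, because $0<p'<1$ and $p\leq\mu$. Hence $\mu$ is uniquely determined by $T^{00}$ and $|\vec T|$, and then $\vec u$ is determined as well. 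Non-positivity of $-\Numb^0$ is immediate from $\Numb\geq0$ and $J_0\leq0$. For continuity at vacuum, use $\Numb(\mu)|u_0|\lesssim \Numb(\mu)\,T^{00}/\mu$ together with~\eqref{near-vacuum-Numb}.

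\textbf{Convexity.} Since the claim is Lorentz-covariant in structure, I would verify convexity of $G(T^{0\bullet})=-\Numb|u_0|$ via the classical Legendre argument rather than by direct Hessian computation. Introduce the entropy variables
\[
v_0=\frac{\Numb'(\mu)}{1}\,\frac{1}{|u_0|}\cdot(\dots),
\]
or, more cleanly, use the known identity $d(\Numb u^\alpha)=\frac{\Numb}{\mu+p}\,u_\beta\,dT^{\alpha\beta}$ on the constraint manifold. This identity follows from $d\Numb/\Numb=d\mu/(\mu+p)$ and $u_\beta du^\beta=0$. Restricting to $\alpha=0$ gives
\[
\frac{\partial G}{\partial T^{0\beta}}=\frac{\Numb}{\mu+p}\,u_\beta
\]
up to sign, with the factor identifying the temperature-like quantity $\theta=\Numb/(\mu+p)$. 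The Hessian is then the Jacobian of the map $T^{0\bullet}\mapsto w:=\theta u_\beta$. Convexity is equivalent to this map being monotone, i.e.\ its Jacobian being symmetric positive semi-definite. Symmetry is automatic from the gradient structure. For positivity, I would compute in the fluid rest frame ($\vec u=0$), since the statement reduces there after a boost in the $(e_0,e_j)$ plane. The Hessian quadratic form is positive because the Hessian transforms congruently under the linear action of boosts on $T^{0\bullet}$ at fixed other components; this is the point to check carefully, because the boost acts on $T^{0\beta}$ mixing in $T^{jk}$. If that fails, I would fall back on direct computation in the variables $(\mu,|\vec u|)$.

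In the rest frame, $dT^{00}=d\mu$ and $dT^{0j}=(\mu+p)\,du_j$, while $dw_0=-d\theta$ and $dw_j=\theta\,du_j$. Therefore
\[
dT\cdot dw=-\theta'(\mu)\,d\mu^2+\frac{\theta}{\mu+p}\,|dT^{0j}|^2 .
\]
Since $\theta'(\mu)=\Numb\,(1-p')/(\mu+p)^2>0$, the first term has the wrong sign with this sign convention, so the convention must be that the relevant function is $+\Numb|u_0|$ with respect to the natural orientation. The sign bookkeeping (with $J_0\leq0$ and $-\Numb^0=\hNumb J_0$) must be tracked here, and I expect this to be the main obstacle: matching the overall sign so that both the $d\mu^2$ coefficient, which is governed by concavity~\eqref{equa-390} of $\Numb$ and by $p'>0$, and the transverse coefficient $\theta/(\mu+p)>0$ come out nonnegative. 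The hyperbolicity conditions~\eqref{hyperbolic-eos-1} enter precisely at that step.
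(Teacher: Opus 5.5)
Your setup is correct. From $d\Numb/\Numb=d\mu/(\mu+p)$ and $u_\beta du^\beta=0$ one gets exactly $d(\Numb u^\alpha)=-\theta\,u_\beta\,dT^{\alpha\beta}$, with $\theta=\Numb/(\mu+p)=\Numb'(\mu)>0$. Hence $G=-\Numb u^0$ satisfies $\partial G/\partial T^{0\beta}=\theta u_\beta=:w_\beta$, and convexity is equivalent to $dT^{0\beta}dw_\beta\geq 0$. The convexity step nevertheless fails in two places.

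\textbf{First, a sign error.} By \eqref{equa-390}, $\theta'=\Numb''=-\Numb p'/(\mu+p)^2<0$, not $\Numb(1-p')/(\mu+p)^2$. So your rest-frame form equals $\theta\bigl(\frac{p'}{\mu+p}d\mu^2+(\mu+p)|d\vec u|^2\bigr)\geq0$, and there is no sign to repair. Switching to $+\Numb|u_0|$ would be wrong, since that function is concave.

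\textbf{Second, and more seriously, the reduction to the rest frame is invalid.} A boost mixes $T^{0\beta}$ with $T^{jk}$, so there is no linear action on $T^{0\bullet}$ alone, and the Hessian does not transform congruently. What is invariant is the scalar $\xi_\alpha dT^{\alpha\beta}dw_\beta$, under a simultaneous boost of the state and of the covector $\xi$. Convexity at a state with three-velocity $\vec v=\vec u/u^0$ is therefore equivalent, at rest, to positivity of this form for the tilted covector $\xi_\alpha\propto(1,\vec v)$:
\[
\theta\Bigl(\frac{p'}{\mu+p}\,d\mu^2+2p'\,d\mu\,(\vec v\cdot d\vec u)+(\mu+p)\,|d\vec u|^2\Bigr).
\]
This is nonnegative if and only if $p'|\vec v|^2\leq 1$. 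The cross term does not appear in your computation, and it is exactly where the subluminal bound $p'<1$ of \eqref{hyperbolic-eos-1} enters. Your check uses only $p'>0$, so it proves convexity only at states with $\vec u=0$.

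The fix is a direct computation at an arbitrary state:
\[
dT^{0\beta}dw_\beta=\theta\Bigl(\frac{p'u^0}{\mu+p}\,d\mu^2+2p'\,d\mu\,du^0+(\mu+p)\,u^0\,du^\beta du_\beta\Bigr).
\]
Since $u^0du^0=\vec u\cdot d\vec u$, one has $du^\beta du_\beta\geq(du^0)^2/|\vec u|^2$ for $\vec u\neq0$; at $\vec u=0$ the cross term vanishes. Positivity then reduces to $p'|\vec u|^2\leq (u^0)^2=1+|\vec u|^2$, which holds.

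\textbf{Two smaller points in the well-definedness part.}
\begin{itemize}
\item Your bound $\Numb|u_0|\lesssim \Numb\,T^{00}/\mu$ does not tend to zero when $p'(0)>0$, because then $\Numb(\mu)/\mu\to+\infty$. Use instead $\Numb|u_0|=\hNumb|J_0|=\hNumb\sqrt{2(T^{00}+p)}$ together with $\hNumb\to0$ from \eqref{equa-2311}.
\item The image of $(\mu,\vec u)$ is the open cone $\{T^{00}>|\vec T|\}$ together with the origin, not the closed cone. The rest of the boundary, $\{T^{00}=|\vec T|>0\}$, is the image of the nonzero null $\Jbb$, i.e.\ the vacuum states retained in the paper's formulation, where the function vanishes. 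Convexity on the closed cone then follows from convexity on the open cone plus this continuity.
\end{itemize}

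For reference, the paper gives no proof of this proposition; it cites \cite{LeFloch-curved}. For the companion statement \autoref{lem-convexM} (\autoref{appendix=D-4}) it avoids Hessians by writing the function as a supremum of convex functions.
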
 


\subsection{Convexity of the timelike mass-energy component (proof of \autoref{lem-convexM})}
\label{appendix=D-4}

The convexity property stated in \autoref{lem-convexM} is geometrically equivalent to \autoref{propo-entropy}, in the sense that both arise from the same convex family of timelike hypersurfaces in the space of principal variables. Since this equivalence is not immediate at the analytic level, we prefer to give below a direct proof of the precise convexity property used in the present paper.  
\bse
Let us set
\be
\rho \coloneqq Z_0 = \Numb^0 = - \hNumb_\Jbb J_0 \geq 0 ,
\qquad
Z_m \coloneqq T^{0m} = - \frac{1}{2} J_0 J_m , \quad m=1,2,3,
\ee
and introduce the notation
\be
y \coloneqq -\Jbb\cdot\Jbb = 2(\mu+p(\mu)) , \qquad
|\vec{Z}|^2 \coloneqq Z_1^2+Z_2^2+Z_3^2 .
\ee
We recall that $\hNumb_\Jbb$ can be seen as a function of the density~$\mu$ or of~$y$ through~\eqref{Sdef-scaled},
\bel{Sdef-scaled-revised}
\frac{d\log \hNumb}{d\mu} = \frac{1-p'(\mu)}{2(\mu + p(\mu))} ,
\qquad
\frac{d \log \hNumb}{dy} = \frac{1}{2y} \frac{d(yq)}{dy} > 0 .
\ee
\ese

\bse\label{J0Jm-rhoZ-full}
We prove first that $y$ (or equivalently the density~$\mu$) can be retrieved from $\rho,Z_1,Z_2,Z_3$.  Observe that
\bel{J0Jm-rhoZ}
J_0 = - \hNumb^{-1}\rho ,
\qquad
J_m = 2 \hNumb \rho^{-1} Z_m ,
\ee
which leads to the identity $y = \hNumb(y)^{-2}\rho^2 - 4 |\vec{Z}|^2 \hNumb(y)^2 \rho^{-2}$ or equivalently $\phi_\rho(y) = 4 |\vec{Z}|^2$ where
\be
\phi_\rho(y) \coloneqq \hNumb(y)^{-2}\rho^2 \bigl(\hNumb(y)^{-2}\rho^2 - y\bigr) .
\ee
Since $\hNumb$ is monotonically increasing (see~\eqref{Sdef-scaled-revised}), we deduce that $y\hNumb(y)^2$ is a monotonically increasing bijection of $[0,+\infty)$, hence there exists a unique $y_\rho>0$ such that $y_\rho \hNumb(y_\rho)^2 = \rho^2$.  On the interval $(0,y_\rho)$ the function $\phi_\rho$ is strictly positive.
We then compute the derivative of~$\phi_\rho$ using~\eqref{Sdef-scaled-revised},
\be
\del_y \phi_\rho(y) = - 4 \frac{d\log\hNumb}{dy} \phi_\rho(y) - \hNumb(y)^{-2}\rho^2 \Bigl( 1 + \frac{d(yq)}{dy} \Bigr) < 0 ,
\qquad y \in (0, y_\rho) .
\ee
From the near-vacuum expansion~\eqref{near-vacuum-Numb} of~$\hNumb$, we also learn that $\phi_\rho(y)\to+\infty$ as $y\to 0$.
All in all, $\phi_\rho:(0,y_0)\to(0,+\infty)$ is a monotonically decreasing bijection, which allows $y\in(0,y_0)$ to be uniquely retrieved as $y=\phi_\rho^{-1}(4|\vec{Z}|^2)$.
The fluid variables $\Jbb$ are then given in~\eqref{J0Jm-rhoZ}.
\ese

\bse
We prove convexity next.
For each fixed $y\geq 0$, define
\be
Q_y(\rho,\vec{Z}) \coloneqq \sqrt{y \hNumb(y)^{-2} \rho^2 + 4|\vec{Z}|^2} - 2p(y) ,
\ee
where we see pressure $p=p(\mu)$ as a function of~$y$, explicitly $2p=\frac{1-q}{2}y$.
Since $Q_y$ is the Euclidean norm of the vector $(y^{1/2}\hNumb^{-1}\rho,\vec{Z})$, shifted by the constant $-y(1-q)/2$, it is convex in $(\rho,\vec{Z})$.  We claim that
\bel{eq:convexM-3}
\Mbf^{00}(\Jbb,0,0) = \sup_{y\geq 0} Q_y(\rho,\vec{Z}) .
\ee
First, if $(\rho,m)$ comes from a state~$\Jbb$, then by taking $y=-\Jbb\cdot\Jbb$ we have
\be
\aligned
\sup_{y\geq 0} Q_y(\rho,\vec{Z})
& \geq \sqrt{(-\Jbb\cdot\Jbb) \hNumb_\Jbb^{-2} \rho^2 + 4|\vec{Z}|^2} - \frac{1-q}{2} (-\Jbb\cdot\Jbb) \\
& = \sqrt{J_0^2 (-\Jbb\cdot\Jbb + J_1^2 + J_2^2 + J_3^2)} - \frac{1-q}{2} (-\Jbb\cdot\Jbb)
= \Mbf^{00}(\Jbb,0,0,0) .
\endaligned
\ee
Moreover, thanks to $\del_y(y\hNumb^{-2}) = \hNumb^{-2} p'(y)$, one has
\be
\del_y Q_y(\rho,\vec{Z}) = 2\frac{dp}{dy} \Biggl( \frac{\hNumb^{-2}\rho^2}{\sqrt{y \hNumb(y)^{-2} \rho^2 + 4|\vec{Z}|^2}} - 1 \Biggr) .
\ee
Since $dp/dy>0$, the sign of $\del_yQ_y(\rho,\vec{Z})$ is the same as the sign of
\be
\bigl(\hNumb^{-2}\rho^2\bigr)^2 - \bigl(y \hNumb(y)^{-2} \rho^2 + 4|\vec{Z}|^2\bigr) = \phi_\rho(y) - 4|\vec{Z}|^2 .
\ee
Hence the critical point condition $\del_y Q_y(\rho,\vec{Z})=0$ is exactly equivalent to the condition $\phi_\rho(y)=4|\vec{Z}|^2$ that characterizes the physical value of~$y$ in~\eqref{J0Jm-rhoZ-full}. Since $\phi_\rho$ is strictly decreasing, the function $y\mapsto \phi_\rho(y)-4|\vec{Z}|^2$ changes sign at most once. It follows that $\del_y Q_y(\rho,\vec{Z})$ changes sign at most once, from positive to negative, and therefore the corresponding critical point is the unique global maximizer of $Q_y(\rho,\vec{Z})$. This proves~\eqref{eq:convexM-3}.
\ese

Finally, since $\Mbf^{00}(\Jbb,0,0)$ is the supremum of the convex functions~$Q_y$, it is itself convex in the principal variables~$Z$. This completes the proof of \autoref{lem-convexM}.


\section{A density property}
\label{appendix=E}

\begin{lemma}[Density of product-type functions]
\label{lem:density-separated}
Let $N = [t_0,t_1] \times \Sbb^1$. Consider a function space $\mathcal{E}$ on~$N$ which is either the space of continuous functions $C^0(N)$ with the uniform norm, or the space $\BVac(N)$ of absolutely continuous functions. Then the set consisting of finite linear combinations of smooth functions $\theta(t)\, \varphi(x)$ generates a dense subspace of $\mathcal{E}$. 
\end{lemma}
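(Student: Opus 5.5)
The plan is to treat the two cases separately, using in both a reduction from general elements of $\mathcal E$ to smooth functions, followed by approximation of smooth functions by separated sums.

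\emph{Continuous case.} The set $\mathcal A$ of finite sums $\sum_k \theta_k(t)\varphi_k(x)$, with $\theta_k\in C^\infty([t_0,t_1])$ and $\varphi_k\in C^\infty(\Sbb^1)$, is a subalgebra of $C^0(N)$. It contains the constants, and it separates points: $t$ separates different times, while $\cos 2\pi x$ and $\sin 2\pi x$ together separate points of $\Sbb^1$. Since $N$ is compact, the Stone--Weierstrass theorem gives density in the uniform norm. A constructive alternative would expand in a Fourier series in $x$, use Fejér means to get uniform convergence with coefficients continuous in $t$, and then mollify those coefficients in $t$.

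\emph{$\BVac$ case.} Here $\BVac(N)=W^{1,1}(N)$, with norm $\|u\|_{L^1}+\|\del_t u\|_{L^1}+\|\del_x u\|_{L^1}$.

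1. Show first that $C^\infty(N)$ is dense in $W^{1,1}(N)$. In $x$ we mollify periodically. In $t$ the domain has boundary, so we first extend $u$ by reflection across $t=t_0$ and $t=t_1$; this preserves $W^{1,1}$ because the traces match. We then mollify and restrict back to $N$. This is the standard Meyers--Serrin/extension argument and presents no difficulty on a rectangle times a circle.

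2. Approximate a smooth $u$ in the $C^1(N)$ norm by separated sums, which is enough since $C^1$ controls $W^{1,1}$ on the bounded set $N$.
   - Extend $u$ smoothly to $\RR\times\Sbb^1$ with compact $t$-support, for instance by Seeley extension or reflection followed by a cutoff.
   - Use the Fourier expansion $u=\sum_{n,k} c_{nk} e^{2\pi i k x} e^{2\pi i n t/T}$ on a large period $T$ in $t$.
   - Because the extension is smooth, the coefficients decay rapidly, so the partial sums converge in $C^1$.
   - Taking real parts gives finite sums of products $\theta(t)\varphi(x)$.

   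The same construction also handles the continuous case directly.

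The main obstacle is the boundary in $t$: naive periodization in $t$ creates jumps at $t_0$ and $t_1$, which would destroy $C^1$ convergence. The smooth extension beyond the interval before periodizing is therefore the key step. Everything else is standard density theory.
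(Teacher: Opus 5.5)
Your proof is correct. The continuous case coincides with the paper's argument: Stone--Weierstrass, with $t$, $\cos 2\pi x$, $\sin 2\pi x$ separating points. Your first step in the $\BVac$ case, density of smooth functions in $W^{1,1}(N)$, is also the paper's first step; the paper simply asserts it, while you justify it by reflection and mollification.

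The second step differs. The paper never extends across the time boundary. It proceeds as follows:
\begin{itemize}
\item It approximates only the mixed derivative $\del_t\del_x g$, uniformly, by a separated sum $h=\sum_i\theta_i(t)\varphi_i(x)$.
\item It subtracts from each $\varphi_i$ its mean over $\Sbb^1$, so that the primitive $\Phi_i$ is periodic. This at most doubles the error, because $\int_{\Sbb^1}\del_t\del_x g\,dx=0$.
\item It sets $H(t,x)=g(t,0)+g(t_0,x)-g(t_0,0)+\sum_i\Theta_i(t)\Phi_i(x)$, with primitives vanishing at $(t_0,0)$.
\end{itemize}
The boundary terms depend on one variable only, so they are admissible products. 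Integrating $\del_t\del_x(g-H)=\del_t\del_x g-h$ from $x=0$ and from $t=t_0$ gives sup bounds on $g-H$ and its first derivatives, hence the $W^{1,1}$ estimate. Thus the time boundary, which you single out as the main obstacle, never arises, and only Stone--Weierstrass and the fundamental theorem of calculus are used.

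Your Fourier route confronts the boundary directly. In return it approximates smooth functions in every $C^k(N)$ norm, whereas the paper's construction only gives $C^1$-type control.

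You do not need a Seeley extension. The mollified reflection from your step~1 is already smooth on a neighborhood of $N$, so a cutoff in $t$ followed by periodization suffices. By contrast, a plain even reflection of a smooth function is only Lipschitz, and would not give the $C^1$ convergence of Fourier partial sums that you invoke.
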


\begin{proof} We distinguish the two spaces under consideration.
For $\mathcal E=C^0(N)$ endowed with the uniform norm, the result is an immediate consequence of the Stone--Weierstrass theorem, since the set of finite linear combinations of smooth functions $\theta(t)\, \varphi(x)$ is a subalgebra of~$\mathcal E$ that contains constants and separates points ---in fact, the algebra of polynomials in $t,\sin(2\pi x),\cos(2\pi x)$ would suffice.

Next, consider $\mathcal E=\BVac(N)$, and let $f\in \BVac(N)$ and $\varepsilon>0$ be given. By the density of smooth functions in $\BVac(N)$, there exists $g\in C^\infty(N)$ such that
\be
\|f-g\|_{\BVac(N)}<\frac{\varepsilon}{2}.
\ee
Since $\partial_t \partial_x g$ is continuous on~$N$, the Stone--Weierstrass theorem yields a linear combination $h=\sum_{i=1}^k \theta_i(t) \varphi_i(x)$ such that (the constant will be clarified later)
\bel{density-ddg-h}
\bigl\| \partial_t \partial_x g - h \bigr\|_{C^0(N)} < \frac{\varepsilon}{4(t_1-t_0)(2(t_1-t_0)+1)} .
\ee
Let $\overline\varphi_i=\int_{x\in\Sbb^1} \varphi_i(x)\,dx$ denote the average value of each function~$\varphi_i$.
The spatial average $\overline h(t)=\sum_i\theta_i(t)\overline\varphi_i(x)$ of~$h(t,x)$ in~$x$ is bounded as
\be
\bigl|\overline h(t)\bigr|
= \biggl| \int_{x\in\Sbb^1} \bigl(h(t,x) - \del_t \del_x g(t,x)\bigr) \, dx \biggr|
\leq \bigl\| \partial_t \partial_x g - h \bigr\|_{C^0(N)} ,
\ee
so that replacing $h$ by $(h-\overline{h})$ in~\eqref{density-ddg-h} ---or equivalently $\varphi_i$ by $(\varphi_i-\overline\varphi_i)$--- simply worsens the error bound by a factor of~$2$.
The functions $\varphi_i$ are henceforth assumed to have vanishing average over~$\Sbb^1$.

Consider now the primitives $\Theta_i,\Phi_i$ defined by $\Theta_i'(t)=\theta_i(t)$ and $\Phi_i'(x)=\varphi_i(x)$ and chosen to vanish at $(t_0,0)\in N$.
Then the function
\be
H(t, x) = g(t, 0) + g(t_0, x) - g(t_0, 0) + \sum_{i=1}^k \Theta_i(t) \Phi_i(x)
\ee
is in the desired subspace, and provides a good approximation of~$g$.
Indeed, since $\del_t \del_x H = h$ and boundary terms are well-chosen, one has
\be
\aligned
\bigl| \del_t (g - H)\bigr|(t,x)
& = \biggl|\int_0^x \del_t \del_x \Bigl( g(t,x') - H(t, x') \Bigr) dx' \biggr|
\leq \| \del_t \del_x g - h\|_{C^0(N)} ,
\\
\bigl| \del_x (g - H)\bigr|(t,x)
& = \biggl|\int_{t_0}^t \del_t \del_x \Bigl( g(t',x) - H(t', x) \Bigr) dt' \biggr|
\leq \| \del_t \del_x g - h\|_{C^0(N)} |N| ,
\\
\bigl| g - H\bigr|(t,x)
& = \biggl|\int_{t_0}^t \int_0^x \del_t \del_x \Bigl( g(t',x') - H(t', x') \Bigr) dx'dt' \biggr|
\leq \| \del_t \del_x g - h\|_{C^0(N)} |N| .
\endaligned
\ee
where $|N|=(t_1-t_0)$ denotes the Lebesgue measure of $N$. Upon integrating each line over~$N$, this yields
\be
\aligned
\|f - H\|_{\BVac(N)}
& \leq \|f - g\|_{\BVac(N)} + \| g - H \|_{\BVac(N)}
\\
& \leq \|f - g\|_{\BVac(N)} + \|\del_t \del_x g - h\|_{C^0(N)} |N| ( 1 + 2|N| )
< \varepsilon .
\endaligned
\ee
This proves the density statement in $\BVac(N)$ as well.
\end{proof}

\end{document}